\tikzstyle{every picture}=[level distance = 8mm, baseline=-0.5ex] 
\tikzstyle{prop}=[shape=circle,minimum size=6mm, draw=black!80, fill=green!30]
\tikzstyle{propScal}=[shape=rectangle,minimum size=6mm, draw=black!80, fill=green!30]
\tikzset{
    photon/.style={decorate, decoration={snake,amplitude=1pt,segment length=5pt}},
}
\patchcmd{\chaptermark}{\MakeUppercase}{\scshape\slshape}{}{}%
\patchcmd{\sectionmark}{\MakeUppercase}{\scshape\slshape}{}{}%
\numberwithin{equation}{section} 
\numberwithin{figure}{chapter}
\numberwithin{table}{chapter}
\titleformat{\chapter}[display]{\Huge\sffamily\bfseries}%
	{\chaptername~\thechapter}{1ex}{}
\titleformat{\section}[hang]{\Large\sffamily\bfseries}%
	{\rlap{\thesection}}{2em}{}
\titleformat{\subsection}[hang]{\large\sffamily\bfseries}%
	{\rlap{\thesubsection}}{3em}{}
\renewcommand{\d}{\text{d}}
\newcommand{\prop}{\begin{tikzpicture} 
\draw [thick] (-1.2,0)--(-0.8,0);
\end{tikzpicture}}
\newcommand{\vertx}{\begin{tikzpicture}
\draw [thick] (-1.2,0)--(-1,0)--(-1,0)--(-0.85,0.15);
\draw [thick] (-1,0)--(-0.85,-0.15);
\end{tikzpicture}}
\newcommand{\vertxyuk}{\begin{tikzpicture}
\draw [thick] (-1.2,0)--(-1,0)--(-1,0)--(-0.80,0.20);
\draw [dashed] (-1,0)--(-0.80,-0.20);
\end{tikzpicture}}
\newcommand{\res}{\text{res}}
\newcommand{\E}{\text{Erfi}}
\newcommand{\simall}[2]{\underset{#1\rightarrow#2}{\sim}}
\newcommand{\derD}[1]{\overrightarrow{#1}}
\newcommand{\derG}[1]{\overleftarrow{#1}}
\newcommand{\I}{\mathcal{I}}
\newcommand*\hyper{\begingroup
        \catcode`\,\active
        \def ,{\mskip\pFqskip\relax}%
        \dohyper
}
\def\dohyper#1#2#3#4{%
        {}_{2}F_{1}\biggl(\genfrac..{0pt}{}{#1,#2}{#3}\Big|#4\biggr)%
        \endgroup
}
\newcommand*\appell{\begingroup
        \catcode`\,\active
        \def ,{\mskip\pFqskip\relax}%
        \doappell
}
\def\doappell#1#2#3#4#5#6{%
        F_{4}\biggl(\genfrac..{0pt}{}{#1,#2}{#3,#4}\Big|#5,#6\biggr)%
        \endgroup
}
\newtheorem{propo}{Proposition}[section]
\newtheorem{defi}{Definition}[section]
\newtheorem{lemm}{Lemma}[section]
\newtheorem{thm}{Theorem}[section]
\newtheorem{coro}{Corollary}[section]
\providecommand\phantomsection{}
\title{Analytical and Geometric approches of non-perturbative Quantum Field Theories}
\author{Pierre J. Clavier${}^{1,2}$\\
\normalsize \it ${}^1$ Sorbonne Universités, UPMC Univ Paris 06, UMR 7589, LPTHE, 75005, Paris, France\\
\normalsize \it $^2$ CNRS, UMR 7589, LPTHE, 75005, Paris, France }
\date{Defended the 3${}^{\text{rd}}$ of July 2015}
\begin{document}
%
%
%

\maketitle


\selectlanguage{english} 


%
%

\vspace{2.5cm}

\begin{flushright}
 {\it A la mémoire de André Clavier.}
\end{flushright}

\vspace{3.5cm}

\noindent 
{\it
Si l'eau pouvait éteindre un brasier amoureux, \\
Ton amour qui me brûle est si fort douloureux, \\
Que j'eusse éteint son feu de la mer de mes larmes. \\

Pierre de Marbeuf. Et la mer et l'amour ont l'amer pour partage.}

\thispagestyle{empty}

%
%
%

\frontmatter
\tableofcontents

\chapter*{Remerciements \markboth{{\scshape Remerciements}}{{\scshape Remerciements}}} 

\addcontentsline{toc}{chapter}{Remerciements}

En premier lieu, bien sur, je voudrais remercier celui qui a fait des trois ans de ce doctorat une expérience incroyablement enrichissante, mon directeur de recherche, Marc Bellon. Merci Marc d'avoir 
été aussi présent, à l'écoute et toujours près à m'aider quand j'en avais besoin, mais également d'avoir accepté que je travaille seul sur certains projets. J'ai le sentiment d'avoir beaucoup appris 
sous votre \'egide et espère que nous continuerons à collaborer~: il me reste encore tellement \`a apprendre !

Il me faut \'egalement exprimer ma reconnaissance \`a l'\'egard des rapporteurs, les professeurs Dominique Manchon et Dirk Kreimer, pour le temps qu'ils ont pass\'e \`a lire ce manuscrit, et pour leurs 
commentaires qui ont largement contribu\'e \`a am\'eliorer la qualit\'e globale de ce texte.

Enfin, merci \`a tous les autres membres du jury d'avoir accept\'e d'en faire partie. Merci donc \`a Bernard Julia, Lo\"ic Foissy, Fidel Schaposnik et Christian Brouder pour \^etre pr\'esent le jour de 
la soutenance, j'esp\`ere ne pas leur donner de raison de regretter le d\'eplacement.

Merci également à ceux qui ont pass\'e du temps \`a relire cette th\`ese, qui ont permis de lui donner une finition bien plus lisse que la première version. Merci donc \`a Nguyen Viet Dang, Marc 
Bellon,Astrid Cadet, Annick Clavier et Sophie Dorfman pour leurs efforts. Il va sans dire que je suis l'unique responsable de toutes les erreurs, impr\'ecisions, b\^etises, fautes et autres coquecigrues 
qui ont surv\'ecu \`a toutes ces lectures.

La recherche n'est pas une activité solitaire. Je voudrais donc tr\`es chaleureusement remercier tout ceux avec qui j'ai eu la chance de discuter et d'\'echanger et qui ont tous permis à ma réflexion 
de se construire. En premier lieu bien sur, merci à Nguyen Viet Dang, Christian Brouder et Fr\'ed\'eric H\'elein de m'avoir convaincu de travailler sur le formalisme de Batalin—Vilkovisky, le cinqui\`eme 
chapitre de ce manuscrit n'aurait pas exist\'e sans eux. Et merci \'egalement d'avoir accepter de prendre le temps d'expliquer tant de choses à un pauvre physicien int\'eress\'e par des choses un peu 
formelles. Merci également, p\^ele-m\^ele, à Camille Laurent—Gengoux, Serguei Barannikov, Sylvain Lavau, Sofian Teber, Olivier Bouillot, Lucien Heurtier,  Guillaume Bossard, Robin Zegers, Vincent 
Rivasseau... 

Je souhaite \'egalement remercier les membres du LPTHE, et en premier lieu ses directeurs successifs Olivier Babelon et Benoit Douçot pour avoir cr\'e\'e un tel climat favorable \`a la recherche. Merci 
\'egalement aux secr\'etaires Isabelle Nicolai, Françoise Got et Annie Richard pour leur patience vis-à-vis du th\'esard ignorant des choses administratives que j'ai \'et\'e. Et, bien sur, merci à tous 
les chercheurs et \`a toutes les chercheuses du labo pour toute les discussions autour de la machine à caf\'e.

Je suis également tr\`es reconnaissant aux anciens doctorants du labo de m'avoir accueilli comme ils l'ont fait, et d'avoir pris sous leurs ailes le jeune th\'esard (si si, th\'esard!) effray\'e que 
j'\'etais. Merci donc à Emmanuele, Thomas (le grand) et Thibault. Un merci tout particulier \`a Gautier, sans qui tant d'heure auraient stupidement \'et\'e perdues à travailler. Merci \'egalement \`a 
Bruno pour \^etre le grand fr\`ere spirituel de tous les th\'esards du labo.

Une pens\'ee aussi pour les compagnons de gal\`ere, ceux qui ont commenc\'es leurs doctorats en même temps que moi. Courage Sacha, Trésa, Harold et Tianhan, c'est bient\^ot \`a votre tour de passer \`a 
la casserole ! Merci aussi aux p'tits jeunes pour avoir rendu plus agr\'eable l'ann\'ee où j'\'etais le vieux du groupe. Merci aux bizuts Luc et Fr\'ed\'eric, merci à Matthieu et Oscar pour être de tels sujets d'amusement à travers leurs 
th\`emes de recherches, merci à Thomas (le petit) pour avoir repris les s\'eminaires des doctorants du LPTHE, et un grand merci à Hugo pour avoir \'et\'e la hotline mathematica/LaTeX pendant quasiment 
un an.

Je me permets maintenant de remercier tous les copains du M2. Les collocs d'abord~: merci Pierre, Ced, Laetitia et Nico pour ces soir\'ees \`a parler de films, de maths, de musique, de physique, de 
politique, de maths encore... Merci aussi \`a Hanna, Corentin (et N\'emo), Thomas, Lulu <3 (pardon, Docteur Lulu <3), Yung-Feng pour toutes ces soir\'ees, d\'elires, et souvenir partag\'es ! Ce n'est 
pas fini !

Plus loin dans le temps, merci aux copains du magist\`ere, qui ont \'et\'e les premiers \`a me montrer que je n'\'etais pas fou (ou au moins pas seul) \`a vouloir faire de la recherche en physique.  
En premier lieu bien sur, merci à Etienne pour \^etre plus fou que moi, et \`a Julien (et Isa, f\'elicitations au fait!) pour Londres et le reste. Merci aussi Matthieu pour les TP de L3 et bien sur un 
grand merci à Maud, Benjamin, Laetitia, Yannick, Miguel, Samuel, Erwan, Thibault...

Toute ma reconnaissance va également \`a ma famille, qui a su vivre avec la honte d'avoir un physicien théoricien en son sein. Un merci tout particulier \`a mes parents, Annick et Jean-Philippe, pour 
m'avoir toujours encourag\'e dans mes choix, pour avoir toujours cru en moi, sans doute bien plus que moi. Je fais aussi un gros bisou à mes petites sœurs No\'emie et Suzanne, les remercie d'\^etre 
mes fans num\'ero un et un bis, et aussi d'avoir fait des \'etudes litt\'eraires. Merci \'egalement \`a mes grands-parents~: Sophie (d\'esol\'e de ne pas sauver le monde), Andr\'e et Yolande (merci pour 
avoir fait de l'Anfenière ce havre de paix toutes ces ann\'ees) et Jacques. Merci aux oncles  et tantes, et aux cousins-cousines pour faire de cette famille un vrai clich\'e de famille nombreuse. Vous 
\^etes super ! Enfin, puisque les anciennes dettes se doivent d'\^etre pay\'ees, merci à No\'emie et Suzanne pour avoir (parfois) accept\'e de mettre la table \`a ma place pour que je puisse réviser mes 
maths quand j'\'etais en pr\'epa.

Enfin, et surtout, merci \`a Astrid. Pour tout.


\chapter*{Résumé long\markboth{{\scshape Résumé long}}{{\scshape Résumé long}}} 

\addcontentsline{toc}{chapter}{Résumé long}

 \noindent\hrulefill \\
{\it
It is a tale \\
Told by an idiot, full of sound and fury, \\
Signifying nothing. \\

William Shakespeare. MacBeth (MacBeth, Act 5, Scene 5).}

 \noindent\hrulefill

\vspace{1.5cm}

\paragraph{Introduction} ~~\\

Ce texte débute par une réflexion sur la différence entre physiciens et mathématiciens. J'y expose
mon point de vue~: les mathématiciens cherchent avant tout à trouver de la beauté dans leurs constructions, et la façon de construire 
peut compter plus que le résultat lui-même. A l'inverse, le physicien se doit de comparer ses résultats avec des données qui lui sont 
extérieures. Ainsi, il me semble que le physicien est plus intéressé par son résultat que par la beauté formelle de sa démonstration.

C'est pourquoi je me considère comme un physicien mathématicien, au sens de quelqu'un qui fait de la physique mathématique.
Je cherche avant tout à obtenir un résultat, mais n'ayant pas la prétention de décrire un système existant, je peux me permettre de rechercher 
une rigueur mathématique un peu plus qu'il n'est usuel dans la communauté des physiciens. \\

Ceci étant dit, l'introduction contient, ainsi qu'il est d'usage, une brève approche historique de mon sujet, en l'occurence la théorie quantique des 
champs. Je ne remonte pas jusqu'à Démocrite, puisque je commence par rappeler que la naissance de cette théorie est habituellement datée de 1927, avec 
les travaux de Dirac.

Une formulation covariante des travaux de Dirac fut présentée par Pauli et Jordan, toujours en 1927. Cette méthode fut ensuite étendue à tous les champs 
en 1929 par Heisenberg et Pauli. Dirac, peu satisfait par l'approche de Heisenberg et Pauli, présenta une autre construction replaçant les probabilités et les 
observables au centre de sa construction en 1932. Rosenfeld montra la même année que ces deux approches étaient équivalentes.

Les années trente furent à bien des points de vue, un âge d'or de la théorie quantique des champs. Sur le plan expérimental 
le neutron et le positron furent mis en évidence. Il fut montré que la production de paires de particules/antiparticules pouvait être expliquée 
sans faire référence à la théorie de la mer de Dirac. Une théorie de la désintégration $\beta$ fut présentée par Pauli, et une des interactions 
nucléaires par Yukawa. Enfin et surtout, une preuve de ce qui est aujourd'hui appelé le théorème spin-statistique fut donnée par Pauli en 1940. 
Dès cette époque, il fut remarqué que des observables de la théorie avaient une f\^acheuse tendance à diverger quand la théorie des 
pertubations était un peu poussée. La régularisation des résultats était parfois obtenue, au prix de manipulations profondément insatisfaisantes.

Puis vint la guerre et les changements drastiques qu'elle amena. Les Etats-Unis d'Amérique prirent une place considérable sur la scène de la physique 
mondiale (entre autre...), uniquement concurrencés par l'URSS et dans une moindre mesure, par le Royaume-Uni et le Japon.

Après la guerre, Lamb présenta son fameux résultat sur le décalage du spectre hyperfin de l'atome d'hydrogène (the Lamb shift). Il devint alors clair 
que l'explication théorique de ce résultat nécessitait des calculs de précision en électrodynamique quantique. Un premier résultat, non-relativiste, 
fut obtenu par Bethe en 1947. Une version relativiste de ce calcul fut ensuite présentée par Feynman et indépendamment par Schwinger et Weisskopf en 
1948. Ces calculs impliquaient une méthode plus générale pour supprimer les divergences, qu'aujourd'hui nous appellerions cut-off dur.

De l'autre côté du Pacifique, Tomonoga et son équipe obtinrent des résultats approchants. En octobre 1947, Tomonoga présentait un programme 
auto-cohérent de soustraction des divergences, ce que l'on peut aujourd'hui voir comme la promotion de la renormalisation au rang de programme et non 
plus uniquement comme un ensemble disparate de techniques. Le programme de Tomonoga fut étendu en 1948 et 1949 par Schwinger, qui obtint ainsi une 
formulation de l'électrodynamique quantique au premier ordre de la théorie des perturbations sans divergence.

Toutefois, cette approche était connue pour être particulièrement difficile à manipuler~: il était généralement admis que personne d'autre que Schwinger 
n'aurait pu obtenir ses résultats, et que même lui ne pourrait pas aller très haut dans les ordres de la théorie perturbative. Ce fut donc l'approche 
plus intuitive de Feynman, avec en particulier ses diagrammes de Feynman qui devint prépondérante, et l'est encore de nos jours. Les approches de 
Feynamn et de Tomonoga--Schwinger furent montrées équivalentes en 1949 par Dyson.

Avec les travaux de Dyson, la question de la renormalisation à tous les ordres put commencer à être traitée. Les premiers résultats furent obtenus par 
Dyson lui-même en 1949. Toutefois, la question des divergences chevauchantes (overlapping divergences) ne put être traitée qu'en 1951 par Salam, et fut 
considérée comme entièrement résolue après les travaux de Bogoliubov et Parasiuk en 1957, qui furent rigoureusement prouvés en 1966 par Hepp. Notons que 
Zimmermann donna sa forme définitive au formalisme BPHZ en 1969 en simplifiant la preuve de Hepp. Ce formalisme fut réinterprété en 1998 par 
Kreimer comme la marque d'une structure d'algèbre de Hopf des diagrammes de Feynman. Puisque ce sujet est le thème du premier chapitre ce cette thèse, 
je n'irai pas plus avant dans cette approche historique qui demeure, j'en suis conscient, forcément réductrice.

Toutefois, une approche historique de la théorie quantique des champs se doit de citer les travaux de Yang et Mills qui, en 1954, firent la suggestion 
de remplacer le groupe de jauge $U(1)$ de l'électrodynamique quantique par des groupes plus généraux (typiquement non-abeliens). Ces théories, dites de 
jauge, furent montrées renormalisables en 1972 par 't Hooft et Veltman. Enfin, le mécanisme permettant de donner une masse aux bosons de jauge fut 
construit par un grand nombre de personnes~: Nambu, Goldstone, Higgs, Brout, Englert... \\

La suite de l'introduction est une courte discussion sur la renormalisation qui encore aujourd'hui possède un statut ambigu~: est-ce une caractéristique 
essentielle de la Nature et toute (éventuelle) théorie du Tout devra-t-elle être renormalisée, ou est-ce l'indice qu'un changement de paradigme est nécessaire~?
Quelle que soit la réponse à cette question, la renormalisation est aujourd'hui comprise de façon beaucoup plus naturelle qu'à ses débuts. En effet, elle 
peut être vue comme une conséquence de propriétés analytiques que l'on impose aux fonctions de Green. 

Plus physiquement, les divergences rencontrées en théorie quantique des champs peuvent être vues comme une conséquence du fait que, justement, les objets
fondamentaux de la théorie sont des champs. Car alors un diagramme ne peut pas représenter un processus physique~: une infinité (non dénombrable) 
d'interactions se produit, qui n'est pas prise en compte par le calcul de ce seul diagramme. Cette infinité d'interactions va donner un résultat fini, aussi 
est-il réaliste de considérer que ne pas les prendre en compte va donner un résultat divergent. \\

Ensuite, je présente brièvement les équations de Schwinger--Dyson qui sont l'un des deux piliers de cette thèse. Elle 
constituent en fait les équations d'Euler--Lagrange pour les fonctions de Green de la théorie. Le point intéressant à leur sujet est qu'elles sont des 
équations dont les inconnues sont les fonctions de Green entièrement renormalisées~: elles sont donc une des (quelques) façons d'accéder à des 
informations non pertubatives (i.e. allant au-delà de la théorie des perturbations) de la théorie quantique des champs.

Philosophiquement, les équations de Schwinger--Dyson peuvent être vues comme des équations d'auto-similarité. Elle proviennent de la greffe de diagrammes 
de Feynman dans d'autres diagrammes. Etudier si la solution (en terme de diagrammes, et non de fonctions) de système d'équation de Schwinger--Dyson génère une sous-algèbre de 
Hopf de l'algèbre de Hopf de la renormalisation a permis à Foissy, entre 2008 et 2011, de classer les systèmes d'équations de Schwinger--Dyson. \\

L'autre pilier de cette thèse est le formalisme développé par Batalin et Vilkovisky (formalisme BV). Il est en premier lieu une reformulation du 
formalisme BRST, mais sa puissance réside dans sa généralité. En effet, il traite à égalité les théories ayant des symétries ouvertes et celles
ayant des symétries fermées.


\paragraph{Chapitre 1: l'algèbre de Hopf de la renormalisation} ~~\\

Ce chapitre est une présentation de l'algèbre de Hopf des diagrammes de Feynman telle qu'elle me fut présentée par Dominique Manchon lors d'une conférence 
au CIRM en septembre 2012, avec plusieurs aspects proches des articles de Connes et Kreimer. Il commence par une présentation assez formelle de
structures algébriques intervenant dans la construction d'une algèbre de Hopf~:  algèbre, cogèbre, bigèbre et enfin algèbre de Hopf.

Une algèbre est un espace vectoriel contenant une autre opération interne (la multiplication) qui est associative et distributive sur l'opération interne 
de l'espace vectoriel (l'addition). On représente souvent les axiomes d'une algèbre sous la forme graphique suivante, qui représente on associativité.
\begin{figure}[h!]
 \begin{center}
  \begin{tikzpicture}[->,>=stealth',shorten >=1pt,auto,node distance=3cm,thick] 

    \node (1) {$H\otimes H\otimes H$};
    \node (2) [right of=1] {$H\otimes H$};
    \node (3) [below of=2] {$H$};
    \node (4) [below of=1] {$H\otimes H$};

    \path
      (1) edge node [left] {$m\otimes I$} (4)
      (1) edge node [above] {$I\otimes m$} (2)
      (2) edge node [right] {$m$} (3)
      (4) edge node [above] {$m$} (3);

  \end{tikzpicture}
  \label{asso}
 \end{center}
\end{figure}

Une cogèbre est une structure très similaire à une algèbre, pour laquelle les flèches de ce graphique sont simplement inversées. La 
multiplication est remplacée par une comultiplication. Pour comprendre ce qu'est une comultiplication, il suffit de voir la multiplication comme une 
opération qui prend deux éléments de l'algèbre et leur en associe un troisième. A l'inverse donc, la comultiplication va prendre \emph{un} élément de 
la cogèbre et va lui en associer \emph{deux}. Enfin, notons que de même que une algèbre peut posséder un élément neutre, et donc une application unité, 
une cogèbre possède un élément coneutre et une application counité.

Une bigèbre n'est alors rien de plus qu'une structure réunissant en son sein les structures d'algèbre et de cogèbre, plus des axiomes de 
compatibilité entre ces structures. Une algèbre de Hopf est alors un bigèbre avec de plus une application de l'algèbre dans elle-même, appelée l'antipode.
Pour la beauté de cette structure, je présente ci-dessous le diagramme de l'axiome que l'antipode doit satisfaire.
\begin{figure}[h]
 \begin{center}
  \begin{tikzpicture}[->,>=stealth',shorten >=1pt,auto,node distance=3cm,thick] 

    \node (1) {$H$};
    \node (2) [right of=1] {$\mathbb{K}$};
    \node (3) [right of=2] {$H$};
    \node (4) [above left of=2] {$H\otimes H\otimes H$};
    \node (5) [above right of=2] {$H\otimes H$};
    \node (6) [below left of=2] {$H\otimes H$};
    \node (7) [below right of=2] {$H\otimes H$};

    \path
      (4) edge node [above] {$S\otimes I$} (5)
      (1) edge node [left] {$\Delta$} (4)
          edge node [above] {$\varepsilon$} (2)
          edge node [left] {$\Delta$} (6)
      (5) edge node [right] {$m$} (3)
      (2) edge node [above] {$u$} (3)
      (6) edge node [above] {$I\otimes S$} (7)
      (7) edge node [right] {$m$} (3);

  \end{tikzpicture}
  \label{hopf}
 \end{center}
\end{figure}
Il sera dit un peu plus tard pourquoi l'antipode est si importante dans le cadre de la renormalisation, et plus généralement dans le cadre de la théorie 
des algèbres de Hopf.

La sous-partie suivante présente comment une bigèbre graduée connexe (où connexe signifie que la partie de poids zéro de la bigèbre 
est de dimension un) peut toujours être dotée d'un structure d'algèbre de Hopf. Cette sous-partie assez technique se construit sur une idée relativement 
simple~: la preuve consiste à construire une application qui va automatiquement satisfaire les axiomes de l'antipode. Cette construction est 
récursive~: on commence par définir son action sur les éléments de degré zéro, puis on construit son action sur les éléments de degré $n$ à partir de 
son action sur tous les éléments de degré $m<n$.

La dernière sous-partie s'intéresse aux caractères des algèbres de Hopf et donne une première réponse quant à 
l'importance de l'antipode. Les caractères sont des fonctions de l'algèbre de Hopf évaluées dans une certaine algèbre $A$ qui respectent la structure 
de produit de l'algèbre de Hopf.

Ils possèdent une structure de mono\"ide (i.e. un groupe associatif sans inverse) dont l'opération est un produit de convolution défini grâce au 
coproduit de l'algèbre de Hopf et dont l'élément neutre est l'élément neutre de l'algèbre (et de la cogèbre) dans l'algèbre de Hopf. Du fait de 
la coassociativité de ce coproduit, le produit de convolution est associatif. Si on prend alors pour algèbre $A$ l'algèbre de Hopf elle-même, on 
observe que l'antipode, de par sa définition, permet de construire un inverse pour le produit de convolution, permettant ainsi de donner une structure 
de groupe à l'ensemble des caractères d'une algèbre de Hopf évalués sur l'algèbre elle-même. \\

La seconde partie de ce premier chapitre décrit enfin l'algèbre de Connes--Kreimer de la renormalisation. On commence avec des définitions de la théorie 
des graphes, en prenant bien soin d'autoriser les graphes à avoir des pattes externes, puisque ce sont ces pattes externes qui vont permettre d'interpréter 
ces graphes comme représentant des processus physiques (dans une certaine approximation~: une des causes profondes de la renormalisation est justement 
qu'ils ne représentent pas de tels processus). Les autres définitions habituelles données dans cette sous-partie sont celles de sous-graphes, d'arbres et 
de forêts, de sous-graphes couvrants.

On se donne aussi la peine de définir les graphes décorés, ce qui va permettre de construire des théories quantiques des champs non triviales (i.e. avec 
plusieurs types de champs). Ceci amène naturellement à la définition des noeuds (ou sommets) autorisés d'une théorie.

La seconde sous-partie est celle où l'algèbre de Hopf de la renormalisation est introduite. Nous commençons par remarquer que, de toutes les 
graduations évidentes de l'ensemble des graphes décorés d'une certaine théorie, celle qui va nous permettre de construire cette structure est le nombre 
de boucle, que l'on définit soigneusement en différenciant le cas des graphes avec pattes externes des graphes sans (ces derniers étant les graphes du vide 
dans le jargon des physiciens).

Puis nous passons à la description de la structure proprement dite. Le produit est l'union disjointe de graphes. Le coproduit est construit 
en sommant sur tous les sous-graphes tels que la contraction du graphe initial avec ce sous-graphe donne un graphe autorisé par la théorie. En d'autres 
termes~: on somme sur tous les sous-graphes dont le résidu est un noeud autorisé de la théorie. On obtient alors une bigèbre, mais non connexe~: tout les 
graphes de degré zéro (i.e. les graphes arbres) ne sont pas proportionnels. Donc, suivant la discussion de la première partie, on ne peut pas construire 
d'antipode sur ces objets. Nous résolvons ce problème (qui en est effectivement un~: l'antipode permettra la renormalisation) en prenant le quotient de 
notre bigèbre des graphes par l'idéal généré par $(\Gamma-1)$, o\`u $\Gamma$ est un graphe de degr\'e z\'ero et $1$ l'unit\'e.

La sous-partie suivante voit l'introduction d'un nouveau sujet d'importance~: la décomposition de Birkhoff. Elle dit que si les caractères évaluent nos 
graphes sur une algèbre $A$ pouvant s'écrire comme une somme directe, alors l'opérateur de projection sur une de ces sous-algèbres est un opérateur de 
Rota--Baxter. Ceci peut sembler anecdotique, mais le théorème crucial est qu'alors tout caractère évalué sur $A$ peut s'écrire comme un produit de convolution 
de deux caractères, chacun d'eux étant évalué sur une sous-algèbre différente de $A$. Le point clef est bien sûr que les parties de cette décomposition 
sont encore des caractères.

Après cette longue présentation de sujets mathématiques, on passe à une courte discussion physique des implications de ces résultats pour la théorie 
quantique des champs, et en particulier de la renormalisation. Si on se place dans le cadre de la régularisation dimensionnelle, avec 
$\varepsilon$ le paramètre de la régularisation, on prend $A=\mathbb{C}[[\varepsilon,1/\varepsilon]$, où les coefficients sont en fait 
des fonctions des quantités physiques des particules vues comme entrant dans le graphe à évaluer~: les masses, les impulsions, les charges... On peut 
alors écrire $A=\mathbb{C}[[\varepsilon]]\oplus\frac{1}{\varepsilon}\mathbb{C}[1/\varepsilon]:=A_+\oplus A_-$. Il est clair que dans la 
décomposition de Birkhoff sur les caractères évalués sur $A$, la partie évaluée sur $A_+$ aura un sens dans la limite $\varepsilon\rightarrow0$ alors que 
la partie évaluée sur $A_-$ contiendra toutes les singularités du caractère initial. Nous interprètons la première comme la valeur renormalisée  du 
diagramme de Feynman et la seconde comme un contre-terme additif. \\

A la fin de cette seconde partie de notre premier chapitre, nous avons déjà appris beaucoup sur l'algèbre de Hopf de la renormalisation et, plus important 
encore, nous avons compris pourquoi cet objet est essentiel~: parce qu'il encode la combinatoire (connue pour être difficile, au vu de l'extrême 
technicité des preuves de Hepp et Zimmermann de la procédure BPHZ) de la renormalisation. Il reste à construire l'équation du groupe de renormalisation 
et à l'utiliser sur les fonctions de Green de la théorie. Ceci est l'objet de la troisième partie.

Nous présentons l'approche de Connes et Kreimer du groupe de renormalisation, vu comme une relation sur une famille de caractères dépendant d'un 
paramètre (que l'on interprète comme le logarithme de l'impulsion extérieure du diagramme). On démontre leur relation en supposant l'existence d'une 
quantité intermédiaire, prouvée dans l'article de Connes et Kreimer. En ce cas ce 
qu'on appelle l'équation du groupe de renormalisation est une conséquence presque triviale de cette relation~: on la dérive et l'évalue en zéro.

Nous construisons ensuite les fonctions de corrélation de notre théorie (on travaille avec la théorie scalaire en six dimensions d'espace-temps avec une 
intéraction cubique pour simplifier). Elles sont définies comme une série de graphes ayant tous le même résidu, divisés par leur facteur de symétrie. 
On donne alors sans preuve l'action du coproduit de l'algèbre de la renormalisation sur ces fonctions de corrélation.

Des formules suffisamment simples sont obtenues, qui permettent d'exprimer dans la dernière sous-partie de ce chapitre l'action du coproduit sur les fonctions 
de Green de la théorie, où les fonctions de Green sont définies comme les inverses (pour la multiplication point par point, au sens des séries) des 
fonctions de corrélation. On définit pour cela un couplage effectif comme étant le rapport de la fonction à trois points au carré  sur la fonction à deux 
points au cube. Ce couplage vaut $1$ à l'impulsion de référence. On observe alors que l'action du coproduit sur les fonctions de corrélation est graduée 
(dans un sens bien précis) par ce couplage effectif. On peut dans ce cas utiliser une propriété générale des algèbres de Hopf pour déduire que le coproduit 
a une action similaire sur les fonctions de Green.

Ceci permet d'appliquer l'équation du groupe de renormalisation sur les fonctions de Green de la théorie. Comme nous l'avons déjà mentionné, cette 
équation implique de prendre la dérivée par rapport au logarithme de l'impulsion extérieure d'un caractère puis de l'évaluer en zéro. Cette dérivée, 
quand le calcul est mené au bout, amène les fonctions béta et gamma (la dimension anormale) présentes dans l'équation du groupe de renormalisation telle, 
qu'elle est habituellement écrite.

\paragraph{Chapitre 2: équations de Schwinger--Dyson linéaires} ~~\\

Nous parlons d'équations linéaires pour dire que seule 
une fonction de Green intervient dans l'intégrande de l'équation écrite sous forme intégrale. La seule solution exacte connue d'une équation de Schwinger--Dyson 
est pour une équation de ce type~: le modèle de Yukawa sans masse. Cette solution fut trouvée en 1999 par Broadhurst et Kreimer.

Je commence par présenter leur solution en détail. En effet, il est crucial de comprendre précisément le mécanisme à l'oeuvre pour espérer le généraliser.
La première étape consiste à écrire l'équation de Schwinger--Dyson de ce modèle sous une forme différentielle. Ceci est réalisé en dérivant l'équation 
intégrale, et n'est possible que grâce à la linéarité de l'équation. De plus, plusieurs termes conspirent ensemble pour largement simplifier l'équation 
à résoudre.

Cette équation est ensuite réécrite après un changement de variable sous une forme intégrable (au sens de~: que l'on peut intégrer). Toutefois, le point 
clef consiste à résoudre l'équation, non pour la fonction inconnue, mais pour une autre fonction ayant un paramètre défini par l'inconnue d'origine pour variable. 
Une équation différentielle est trouvée pour cette nouvelle fonction, que l'on peut résoudre. En faisant les transformations inverses, on obtient une 
solution paramétrique de l'équation de Schwinger--Dyson.

La dernière sous-partie de ce second chapitre résume les points clefs de ce résultat de Broadhurst et Kreimer. Premièrement, que l'équation de 
Schwinger--Dyson a pu être écrite sous forme différentielle grâce à son caractère linéaire. Ainsi il semble bien que toute tentative de généralisation 
devra se cantonner à des cas linéaires. Par contre, l'intégration et la résolution paramétrique pourront être résolus dans des cas plus 
généraux. \\

La généralisation la plus naturelle consiste à de donner une masse aux champs que l'on renormalise. Ainsi on obtient deux fonctions inconnues~: une fonction 
de renormalisation de la fonction d'onde, comme auparavant, mais aussi une fonction renormalisant la masse de la particule.

Ce cas est traité dans la seconde partie de ce second chapitre. L'équation de Schwinger--Dyson est alors écrite sous la forme d'un système de deux 
équations intégrales couplées. En effet, on peut séparer les contributions provenant de la masse et celles provenant de la fonction d'onde du fait du caractère 
fermionique de la particule~: un des termes (la fonction d'onde) vient avec un $\slashed p$ et l'autre avec simplement un $p$. 

Appliquant ensuite la méthode de Broadhurst et Kreimer, on écrit ces deux équations sous forme différentielle. Malheureusement elles sont couplées et 
l'intégration semble hors de portée. C'est pourquoi on les étudie dans les limites ultraviolette et infrarouge.

Dans la limite ultraviolette, une des équations ne fait plus intervenir que la fonction d'onde. Il se trouve que l'équation obtenue est celle 
du modèle de Yukawa sans masse étudiée lors de la partie précédente. Cette observation dicte le changement de variable à effectuer et fournit une solution 
à la moitié du problème. Toutefois, la seconde moitié ne vient pas sans résistance. En effet, la seconde équation ne se découple pas, et ne peut pas être 
intégrée. Par conséquent, on est contraint de faire une nouvelle approximation~: en développant l'équation pour la fonction de masse, on observe qu'un 
terme est prédominant dans la limite ultraviolette (puisque c'est la limite d'une grande impulsion).

Si on ne garde que ce terme dominant, on obtient une équation relativement simple, qui peut résolue si on l'écrit comme une équation différentielle 
avec les paramètres qui ont permis de résoudre le cas sans masse. Ainsi, on obtient une solution paramétrique de l'équation de Schwinger--Dyson du modèle de 
Yukawa massif dans la limite ultraviolette.

La limite infrarouge est plus simple à résoudre, mais deux sous-cas doivent être distingués~: celui des infrarouges mous et celui des infrarouges 
profonds. Dans le premier cas, l'impulsion n'est pas suffisamment petite pour permettre de négliger les membres de gauche de l'équation de 
Schwinger--Dyson. Nous obtenons un sytème d'équations dont l'une (la seule qui ne porte que sur une fonction inconnue) peut être montrée comme 
n'étant pas intégrable par la méthode de Broadhurst et Kreimer. Toutefois, une solution est obtenue pour cette équation pour une certaine valeur de 
l'impulsion de référence. Quand on injecte cette solution dans la seconde équation, elle devient très simple et peut être résolue exactement. Nous 
obtenons ainsi une solution explicite de l'équation de Schwinger--Dyson du modèle de Yukawa massif, dans la limite infrarouge molle, pour une certaine 
valeur de l'impulsion externe.

Le cas des infrarouges profonds est de loin le plus simple. En effet, dans ce cas, les équations deviennent homogènes (i.e. sans second membre). De plus,
si nous cherchons une solution non triviale, on peut découpler ces équations. Les deux équations obtenues peuvent être résolues très simplement. 
Nous obtenons ainsi une solution explicite de notre système d'équations, qui résoud la limite infrarouge profonde du modèle de Yukawa massif. \\

La troisième partie de ce second chapitre est probablement la plus novatrice. En effet, on y étudie une version linéaire d'un modèle de Wess--Zumino 
à deux superchamps. L'un de ces superchamps est massif et renormalisé et l'autre sans masse et non renormalisé. Nous obtenons ainsi un système 
d'équations de Schwinger--Dyson linéaires.

La magie de ce modèle vient de ce que la supersymétrie frappe deux fois (comme le facteur). En premier lieu, elle nous permet d'invoquer un théorème de non-renormalisation 
qui affirme qu'il suffit de renormaliser une seule fonction, ici la fonction d'onde. Ainsi, on n'aura pas de sytème d'équations couplées comme dans le cas
de Yukawa massif. Dans un second temps, comme il avait déjà été noté dans un article de 2007 par Bellon, Lozano et Schaposnik, un ansatz efficace pour 
attaquer ce système d'équations de Schwinger--Dyson consiste à supposer que toutes les composantes du supermultiplet reçoivent la même contribution. Alors, 
toute les équations de Schwinger--Dyson deviennent équivalentes les unes aux autres, ce qui est une autre conséquence de la supersymétrie.

Après cette analyse, nous nous retrouvons avec une seule équation à résoudre, que l'on peut attaquer avec la méthode de Broadhurst et Kreimer. En effet, 
puisqu'elle est linéaire, on peut l'écrire sous forme différentielle. Ensuite un changement de variable permet de l'intégrer (notons que lors de cette 
étape, il est clair que le théorème de non-renormalisation nous sauve). Dans ce cas, on peut écrire une équation pour une certaine fonction dont la 
variable est un paramètre construit avec la fonction initiale. Et cette dernière équation peut être résolue, ce qui mène à une solution 
paramétrique de l'équation de Schwinger--Dyson de ce modèle de Wess--Zumino. 

A ma connaisssance, cette solution est la seconde découverte pour cette classe de problème (la première étant celle présentée dans la première partie 
de ce chapitre, pour le modèle de Yukawa), et la première avec un terme de masse. Les résultats de la seconde et troisième partie ont fait l'objet d'un 
article récemment publié par Letters in Mathematical Physics. \\

\paragraph{Chapitre 3: modèle de Wess--Zumino I} ~~\\

Le troisième chapitre de cette thèse est la première étude d'un des sujets les plus centraux de ce doctorat~: l'équation de Schwinger--Dyson du modèle de 
Wess--Zumino sans masse. Bien que non-linéaire (dans le sens défini plus haut) elle peut être vue comme la plus simple qui soit. Ce chapitre et le 
suivant vont être consacrés à son étude.

Il commence par la suite directe du chapitre 1. En effet, le premier chapitre finissait sur l'équation du groupe de renormalisation appliquée aux fonctions
de Green pour les graphes à deux et trois pattes externes, dans le cadre de la théorie scalaire en six dimensions avec une interaction cubique. Pour le modèle 
de Wess--Zumino, la fonction de Green pour les graphes à deux pattes ne va pas changer (au sens où elle obéira à la même équation). Par contre, il est 
connu que la fonction à trois points est en fait la fonction identité. En effet la supersymétrie va faire s'annuler tous les graphes à trois pattes  
ayant plus d'une boucle.

Ceci a bien sûr des conséquences sur l'équation de la fonction à deux points. En effet, on peut réécrire l'équation du groupe de renormalisation sans 
la fonction $\beta$. En comparant les deux versions de cette équation, on observe que l'on a $\beta=3\gamma$, où $\gamma$ est la dimension anormale de la 
théorie. Ce résultat n'est certes pas nouveau mais on appréciera cette démonstration très simple, bien plus que celles utilisant des techniques du 
superespace.

On a donc une équation du groupe de renormalisation pour la fonction à deux points, que l'on nomme $G$. Cette équation étant l'une des plus importantes de 
cette thèse, je l'écris donc explicitement~:
\begin{equation*}
 \partial_LG = \gamma(1+3a\partial_a)G.
\end{equation*}
Si on écrit $G$ sous la forme d'une série de Taylor en $L$ (donc les coefficients de cette série sont des fonctions de $a$, la constante de structure 
fine de la théorie), l'équation du groupe de renormalisation donne une récurrence sur ces fonctions. On voit ainsi qu'il suffit de connaître le premier 
(qui n'est autre que la dimension anormale $\gamma$) pour être, au moins en théorie, entièrement capable de reconstruire la fonction de Green $G$. C'est 
donc sur cette fonction d'une seule variable que nos efforts vont se concentrer.

Nous avons construit l'équation du groupe de renormalisation à partir de premiers principes~: la structure de Hopf de la renormalisation. Pour l'équation 
de Schwinger--Dyson, nous avons besoin d'informations supplémentaires, en particulier les intéractions autorisées par le lagrangien. Comme pour le 
modèle supersymétrique étudié lors du chapitre précédent, la supersymétrie nous permet de n'étudier qu'une seule composante de 
supermultiplet, en l'occurence le champ auxiliaire.

Nous pouvons écrire de manière graphique l'équation de Schwinger--Dyson (dans l'approximation à une boucle) de ce champ comme
\begin{equation*}
\left(
\tikz \node[prop]{} child[grow=east] child[grow=west];
\right)^{-1} = 1 - a \;\;
\begin{tikzpicture}[level distance = 5mm, node distance= 10mm,baseline=(x.base)]
 \node (upnode) [style=prop]{};
 \node (downnode) [below of=upnode,style=prop]{}; 
 \draw (upnode) to[out=180,in=180]   
 	node[name=x,coordinate,midway] {} (downnode);
\draw	(x)	child[grow=west] ;
\draw (upnode) to[out=0,in=0] 
 	node[name=y,coordinate,midway] {} (downnode) ;
\draw	(y) child[grow=east]  ;
\end{tikzpicture}.
\end{equation*}
Le propagateur de cette théorie peut être écrit comme le propagateur libre multiplié par la fonction de Green. Si on développe cette fonction de Green en série 
de Taylor selon $L=\ln p^2$ (on n'écrit pas les $\mu^2$ par souci de lisibilité) on doit alors calculer, après l'échange des séries et de l'intégrale 
à boucle, des intégrales dont l'intégrande contient des logarithmes à la puissance $k$. On simplifie ces intégrales en effectuant une transformée de Mellin, 
ce qui n'est qu'un nom très savant pour une simple observation~:
\begin{equation*}
 \left(\ln\frac{p^2}{\mu^2}\right)^k = \left.\left(\frac{\text{d}}{\text{dx}}\right)^k\left(\frac{p^2}{\mu^2}\right)^x\right|_{x=0}.
\end{equation*}
Alors, toutes les intégrales à calculer sont de la forme~:
\begin{equation*} 
 \text{I}(q^2/\mu^2,x,y) = \int\text{d}^4p\frac{1}{\left(p^2/\mu^2\right)^{1-x}[(q-p)^2/\mu^2]^{1-y}},
\end{equation*}
ce qui est l'objet de la sous-partie suivante. Pour l'instant, nous pouvons nous souvenir qu'il est suffisant de connaître la première dérivée de la fonction 
de Green. Par conséquent, nous prenons une dérivée par rapport à l'impulsion extérieure et évaluons le reste en zéro. L'équation de Schwinger--Dyson prend 
alors une forme remarquablement élégante
\begin{equation*}
 \gamma = a\left(1+\sum_{n=1}^{+\infty}\frac{\gamma_n}{n!}\frac{\text{d}^n}{\text{dx}^n}\right)\left(1+\sum_{m=1}^{+\infty}\frac{\gamma_m}{m!}\frac{\text{d}^m}{\text{dy}^m}\right)H(x,y)\bigg|_{x=y=0}
\end{equation*}
avec $H(x,y):=-\frac{1}{\pi^2}\left.\frac{\partial\text{I}(q^2/\mu^2,x,y)}{\partial L}\right|_{L=0}$ la fonction connue sous le nom de transformée de 
Mellin à une boucle.

La dernière sous-partie est consacrée au calcul de cette transformée de Mellin à une boucle. Les détails de ce calcul ne seront pas, 
bien entendu, donnés ici. Nous dirons juste que ce calcul nécessite de passer à la représentation de Schwinger des diagrammes de Feynman, et donc de 
définir le premier et le second polynôme de Symanzik, qui sont calculés avec les arbres et $2$-forêts recouvrants du graphe que l'on cherche à 
calculer.

On ajoute alors une fonction delta de Dirac dans l'intégrande, en suivant l'esprit du théorème de Chang et Wu. Les trois intégrales peuvent alors se 
calculer en utilisant des transformations de variables simples et en identifiant des représentations intégrales de la fonction $\Gamma$ d'Euler. On 
obtient le résultat
\begin{equation*}
 \text{I}(q^2/\mu^2,x,y) = (q^2)^{x+y}\pi^{2}\frac{\Gamma(1+x)\Gamma(1+y)\Gamma(-x-y)}{\Gamma(1-x)\Gamma(1-y)\Gamma(2+x+y)}.
\end{equation*}
On obtient la forme bien connue de la transformée de Mellin à une boucle qui intervient dans l'équation de Schwinger--Dyson~:
\begin{equation*}
 H(x,y) = \frac{\Gamma(1+x)\Gamma(1+y)\Gamma(1-x-y)}{\Gamma(1-x)\Gamma(1-y)\Gamma(2+x+y)}.
\end{equation*}
Enfin, notons que si on utilise la magnifique relation entre le logarithme de la fonction gamma d'Euler et la fonction zéta de Riemann 
\begin{equation*}
 \ln\Gamma(z+1) = -\gamma z+\sum_{k=2}^{+\infty}\frac{(-1)^k}{k}\zeta(k)z^k
\end{equation*}
on obtient pour $H(x,y)$ une forme qui permettra de prouver certains résultats puissants~:
\begin{equation*}
 H(x,y) = \frac{a}{1+x+y}\exp\Bigl(2\sum_{k=1}^{+\infty}\frac{\zeta(2k+1)}{2k+1}\left((x+y)^{2k+1}-x^{2k+1}-y^{2k+1}\right)\Bigr).
\end{equation*}
C'est sur ce résultat que se conclue la première partie de ce troisième chapitre. \\

La seconde partie décrit des résultats obtenus par Marc Bellon et plusieurs de ses collaborateurs dans une série d'articles publiés entre 2005 et 2012. Ces 
articles expliquent comment remplacer la transformée de Mellin à une boucle par une approximation de cette fonction qui rend les calculs considérablement 
plus simples tout en préservant le comportement asymptotique de la fonction anormale.

On observe que la fonction $H(x,y)$ a des pôles en $x,y=-k$ et en $x+y=k$, pour $k$ un entier naturel non nul. Soit $F_k$ la contribution à $\gamma$ venant 
du pôle en $x=-k$. On a la même contribution du pôle en $y=-k$. De même, soit $L_k$ la contribution venant du pôle en $x+y=k$. On obtient alors à partir 
de l'équation du groupe de renormalisation
\begin{align*}
 & \gamma(1 + 3a\partial_a) F_k = -k F_k + 1 \\
 & (k-2\gamma - 3\gamma a\partial_a)L_k = N_k(\partial_{L_1},\partial_{L_2})G(L_1)G(L_2)|_{L_1=L_2=0}
\end{align*}
avec $N_k$ un numérateur qui n'est rien d'autre que le résidu de $H$ en $x+y=k$.

Nous pouvons grâce à ces objets élucider le comportement asymptotique de la dimension anormale. Pour celà, nous remplaçons comme annoncé la fonction 
$H$ par son développement autour de ses trois premiers pôles. Alors on écrit les équations satisfaites par $F_1$ et $L_1$ et une équation de 
Schwinger--Dyson approchée avec ces quantités.

Puis on développe $\gamma$, $F_1$ et $L_1$ en série (en $a$). On calcule aisément les premiers termes de chacune de ces séries, et on fait la supposition 
que ces trois séries ont une croissance rapide pour ne conserver que les termes dominants dans notre système d'équations afin d'avoir trois récurrences 
approchées couplées. On peut les découpler avec une analyse soigneuse de l'ordre de chacun des termes. \\

La troisième partie de ce chapitre est logiquement le calcul des perturbations de ce comportement asymptotique. On pourrait, dans l'absolu, 
calculer les corrections en $1/n$ du résultat précédent avec la même méthode, mais plusieurs faits rendent une telle analyse impraticable. Tout d'abord, 
les ordres suivants vont recevoir des contributions de tous les pôles de la fonction $H(x,y)$ et plus seulement des trois plus proches de l'origine. De 
plus, le découplage des expansions de chacune des fonctions va devenir extrêmement technique. Ainsi, on va plutôt chercher à exploiter notre connaissance 
du comportement asymptotique de la solution.

Pour cela, on définit par récurrence deux suites $A_{n+1}=-(3n+5)A_n$ et $B_{n+1}=3nB_n$ qui encodent les comportements asymptotiques des fonctions 
$F_1$ et $L_1$. On va alors supposer que toutes les fonctions à calculer ont un comportement asymptotique d'un type encodé par les symboles 
$A:=\sum A_na^n$ et $B:=\sum B_na^n$. Ceci est réalisé en prenant l'ansatz
 \begin{align*}
  & F_1 = f +Ag+Bh \\
  & L_1 = l+Am+Bn \\
  & \gamma = a(c+Ad+Be),
 \end{align*}
et les mêmes types de développement pour les fonctions mesurant les contributions des autres pôles de $H$.

Le point clef est que les séries formelles $A$ et $B$ obéissent (asymptotiquement) à des équations différentielles simples. On peut donc écrire les 
équations venant de l'équation de Schwinger--Dyson et de l'équation du groupe de renormalisation comme des équations différentielles où les seules 
fonctions différentiées sont $f,g,h,l,m,c$...

Ensuite il convient de séparer les termes proportionnels à $A$ de ceux proportionnels à $B$. On peut négliger les termes croisés $A^2$, $AB$, 
etc, puisqu'ils correspondraient à des corrections à de très hauts ordres. Puis nous calculons ordre par ordre les solutions des $3\times3$ 
équations ainsi obtenues.

Avec cette méthode, on peut aisément ajouter les contributions des autres pôles de $H(x,y)$. On obtient des systèmes plus compliqués car de 
cinq équations au lieu de trois puisque tous les $F_{k\geq2}$ obéissent à une même équation, tout comme les $L_{k\geq2}$. Avec la méthode exposée plus 
haut, on obtient $3\times5$ équations des cinq équations originales. Ce système est ensuite résolu ordre par ordre.

La principale difficulté, qui limite l'ordre auquel les calculs sont effectués, est l'ajout de numérateurs de plus en plus compliqués. En 
effet, on doit calculer les résidus de $H(x,y)$ à chacun de ses pôles et les inclure (au moins partiellement) dans les équations.

Avec cette méthode, nous sommes parvenus à calculer le quatrième ordre en $a$ des fonctions $c,d$ et $e$, donc le cinquième de la dimension anormale. 
Plusieurs remarques doivent être faites avant de conclure ce chapitre. Tout d'abord, l'annulation des zétas pairs a constitué  une vérification essentielle 
de la justesse de nos calculs. Pas la seule, car cette vérification ne disait rien sur la validité des coefficients rationnels de la solution. Pour ceux-ci,
nous avons comparé nos résultats avec une étude numérique du même problème effectuée plusieurs années auparavant.

En second lieu, les zétas apparaissant pour la première fois dans le dernier ordre calculé de $\gamma$ sont $\zeta(3)^2$ et $\zeta(5)$. D'autres apparaissent 
dans des résultats intermédiaires mais s'annulent dans le résultat final, ce qui fait que les zétas observés sont de poids moins élevés que ce que nous 
attendions. Ceci indique l'existence d'un mécanisme difficile à démontrer dans ce cadre, et qui justifie l'approche du chapitre suivant, par le biais 
de la transformée de Borel.

\paragraph{Chapitre 4: modèle de Wess--Zumino II} ~~\\

Nous allons commencer par résumer les raisons qui nous poussent à réécrire le chapitre précédent, dans un nouveau formalisme~:
\begin{itemize}
 \item[$\bullet$] Les symboles $A$ et $B$ ont un sens peu clair.
 \item[$\bullet$] On a utilisé un développement de la transformée de Mellin en supposant qu'il était exact. Des calculs préliminaires le suggéraient, 
 mais ce n'est que partiellement prouvé.
 \item[$\bullet$] Les relations auxquelles les symboles $A$ et $B$ obéissent ne sont qu'asymptotiques.
 \item[$\bullet$] Nous avons séparé les termes proportionnels à $A$ de ceux proportionnels à $B$ sans vraie justification.
 \item[$\bullet$] Nous n'avons pas d'argument rigoureux permettant de justifier que nous ne nous intéressions pas aux termes croisés comme $AB$ ou $A^2$.
 \item[$\bullet$] Une analyse analytique du contenu transcendantal de ces développements semble irréaliste dans cet ancien formalisme.
\end{itemize}
Dans l'espoir de voir disparaître certaines de ces difficultés, nous allons donc effectuer une transformée de Borel de notre problème. En effet, les séries formelles 
$A$ et $B$ sont Borel sommables.

Après une courte introduction à la théorie de la resommation de Borel, introduction nous permettant en outre de préciser nos notations au lecteur, nous
établissons quelques résultats utiles pour la suite des opérations. \\

On écrit l'équation du groupe de renormalisation pour la fonction $G$ d'une manière permettant de passer au plan de Borel. Dans ce plan, 
notre étude commence en écrivant l'équation du groupe de renormalisation sous la forme d'une équation de point fixe. Puis nous traitons les produits de 
convolutions dans cette équation comme des perturbations, ce qui nous suggère d'écrire la fonction de Green $G$ comme une superposition d'impulsions 
à une certaine puissance~: $(p^2)^{\alpha}$, avec une intégrale sur le paramètre $\alpha$. De plus, comme nos équations comprennent des intégrales de 
convolution, nous souhaitons que la représentation choisie pour $G$ soit indépendante du chemin choisi. On écrit donc $G$ sous la forme d'une intégrale 
de contour~:
\begin{equation*}
 \hat{G}(\xi,L) = \oint_{\mathcal{C}_{\xi}}\frac{f(\xi,\zeta)}{\zeta}e^{3\zeta L}\d\zeta.
\end{equation*}
Et nous allons écrire l'équation du groupe de renormalisation (et, plus tard, celle de Schwinger--Dyson) pour cette fonction à deux variables $f$.

Pour l'équation de Schwinger--Dyson, le seul point remarquable quand on l'écrit dans le plan de Borel avec la paramétrisation de $G$ en intégrale de 
contour est que l'on voit resurgir la transformée de Mellin à une boucle. Ceci suggère bien que notre paramétrisation est pertinente, puisque l'analyse
présentée au chapitre trois montrait clairement l'importance de cette fonction sur le comportement de la dimension anormale $\gamma$.

Le système que nous souhaitons étudier consiste alors en un système de deux équations intégrodifférentielles que l'on écrit car il s'agit de belles équations~:
\begin{align*}
 & 3(\zeta-\xi)f(\xi,\zeta) = \hat{\gamma}(\xi) + \int_0^{\xi}\hat{\gamma}(\xi-\eta)f(\eta,\zeta)\d\eta + 3\int_0^{\xi}\hat{\gamma}'(\xi-\eta)\eta f(\eta,\zeta)\d\eta, \\
 & \hat{\gamma}(\xi) = 1 + 2\int_0^{\xi}\d\eta\oint_{\mathcal{C}_{\eta}}\d\zeta\frac{f(\eta,\zeta)}{\zeta(1+3\zeta)} + \int_0^{\xi}\d\eta\int_0^{\eta}\d\sigma\oint_{\mathcal{C}_{\eta-\sigma}}\d\zeta\frac{f(\eta-\sigma,\zeta)}{\zeta}\left(\oint_{\mathcal{C}_{\sigma}}\d\zeta'\frac{f(\sigma,\zeta')}{\zeta'}H(3\zeta,3\zeta')\right),
\end{align*}
où $\mathcal{C}_{\xi}$ est un chemin enlaçant au moins l'origine et le point $\xi$ du plan complexe.

Nous commençons par vérifier que les calculs effectués dans le plan physique étaient légitimes. Ceci consiste juste à calculer la transformée de Borel 
des symboles $A$ et $B$, et à vérifier que ce que nous avions fait empiriquement dans le plan physique est équivalent, dans le plan complexe, à 
travailler autour de l'origine et de deux singularités de la transformée de Borel de la fonction $\gamma$. Bien que les calculs soient un peu longs, 
la démonstration est, pour l'essentiel, très simple. La seule chose non triviale à vérifier, et qui nécessite un peu de soin, est que nous trouvons les bons facteurs 
combinatoires après les intégrations par parties multiples. \\

La troisième partie de ce quatrième chapitre s'occupe d'une version tronquée du système écrit plus haut, avec le terme quadratique en $f$ dans l'équation 
de Schwinger--Dyson tronqué. On vérifie aisément que cette troncation ne change pas le comportement asymptotique, plus précisément pour $|\xi|>>1, \Re(\xi)>0$ et 
$\Im(\xi)\neq0$. Ceci vient du fait que le comportement asymptotique de la dimension anormale dans le plan physique était déterminé par le premier 
pôle de la transformée de Mellin $H$, pour lequel nous montrerons bientôt que le terme dominant dans l'équation de Schwinger--Dyson est celui linéaire 
en $f$.

Nous pouvons alors résoudre une spécialisation de l'équation du groupe de renoramalisation en $\zeta=-1/3$. On peut injecter cette solution dans 
l'équation de Schwinger--Dyson tronquée. En effet, en déformant le contour d'intégration à l'infini, on voit grâce au lemme de Jordan que 
l'intégrale linéaire en $f$ est juste la fonction que l'on vient de trouver avec notre résolution de l'équation du groupe de renormalisation, au signe 
près.

Nous obtenons alors une équation cohérente avec $\hat{\gamma}(0) = 1$ et $\hat{\gamma}'(0)=-2$. De plus, nous pouvons utiliser ce résultat pour 
retrouver, après une transformée de Borel inverse, une équation satisfaite asymptotiquement par $\gamma$, démontrée dans un article antérieur.

De plus, quand on écrit l'équation de Schwinger--Dyson tronquée en prenant en compte notre solution de l'équation du groupe de renormalisation, on peut 
remarquer que nous avons presque des valeurs multizétas dans le membre de droite. En fait, on peut facilement exprimer cette équation avec 
ces objets. Au delà de cette observation plutôt simple, il convient de remarquer que l'on effectue des sommes sur toute les valeurs multizétas. De plus, 
il apparait qu'entre deux multizétas de même poids, leur importance relative va dépendre de leur profondeur. Ainsi nous comprenons 
plus clairement pourquoi même la combinatoire de la solution asymptotique de l'équation de Schwinger--Dyson est compliquée~: elle est liée à la filtration 
de l'algèbre des multizétas.

Malgré cette jolie réussite, nous n'avons pas résolu l'équation de Schwinger--Dyson tronquée. Trouver une solution ne semble pas complètement sans espoir 
mais n'a pas encore été tenté. A la place, nous avons commencé à regarder une solution numérique de cette équation. Il est rapidement apparu que ces 
équations sont sujets à des instabilitées numériques dûes aux produits de convolution. C'est pourquoi nous avons eu à utiliser des méthodes un peu plus 
sophistiquées que les méthodes les plus basiques, pour des résultats loin d'être optimaux. Notre conclusion est que le comportement asymptotique 
$|\gamma|$ pourrait être constant. \\

Le coeur de ce chapitre est toutefois sa quatrième et dernière partie, que nous allons à présent résumer. Nous commençons par étudier où les singularités 
de $\hat\gamma$, la transformée de Borel de la fonction anormale, peuvent être localisées. En supposant que cette fonction n'a que des singularités 
algébriques on étudie, grâce à l'équation du groupe de renormalisation, les singularités des deux fonctions à une seule variable $f_{\zeta}:\xi\longrightarrow f(\xi,\zeta)$ et 
$\tilde f_{\xi}:\zeta\longrightarrow f(\xi,\zeta)$.

On injecte alors ces résultats dans l'équation de Schwinger--Dyson et on observe que la liberté que l'on a de changer le contour d'intégration 
$\mathcal{C}_{\xi}$ n'est pas cohérente avec les singularités de ces fonctions. Une singularité de $\hat\gamma$ ne peut par 
conséquent n'être que là où l'on ne peut pas changer le contour d'intégration sans changer le résultat de l'intégration~: là où $H(3\zeta,3\zeta')$ a un 
pôle.

Pour les singularités sur l'axe négatif, cela suffit~: nous savons déjà qu'il y a une singularité en $\xi=-1/3$. Elle va induire une singularité en 
$\xi=-2/3$ et le processus va se répéter. Ainsi, $\hat\gamma$ va avoir des singularités en $\xi=-k/3$ pour tout $k$ entier naturel non nul. Le cas 
des singularités en $3\zeta+3\zeta'=k$ sera plus compliqué puisqu'on a des lignes singulières et plus seulement des singularités isolées. Nous 
commençons donc par étudier les singularités isolées.

Tout d'abord, on observe que l'on doit séparer la singularité en $\xi=-1/3$ des autres, puisque pour celle-ci est la seule pour laquelle le 
terme dominant est le terme linéaire en $f$ dans l'équation de Schwinger--Dyson. On peut toutefois trouver des relations entre les coefficients 
dominant de $\hat\gamma$ et de $f$ autour des singularités. Plus intéressant, nous pouvons également trouver l'ordre des singularités (par ordre on entend
la puissance $\alpha$ telle que $\hat\gamma(\xi) \simall{\xi}{\xi_0} c(\xi-\xi_0)^{\alpha}$). Ce calcul repose de façon cruciale sur une séparation de 
$H(3\zeta,3\zeta')$ en une partie singulière et une partie analytique.

Pour les autres singularités de $\hat\gamma$, une approche simple n'est pas possible à cause de la structure plus compliquée des pôles de 
$H(3\zeta,3\zeta')$ dans la moitié du plan complexe avec partie réelle positive. On utilise plutôt une approche comparable à celle présentée dans le 
troisième chapitre de cette thèse qui, si elle n'est pas moins rigoureuse, est moins élégante. On étudie donc les singularités possibles des fonctions 
$\hat L_k$. Il se trouve qu'elles ne peuvent être que là où $\hat\gamma$ l'est (ce qui n'est pas une surprise) mais aussi que en 
$\xi=k/3$, pour $k$ un entier naturel strictement positif.

Ainsi, nous savons que les singularités de $\hat\gamma$ sont en $\mathbb{Z}/3$. De plus, nous pouvons calculer les ordres des singularités positives de 
cette fonction. On obtient le résultat:
\begin{align*}
 \beta_k & = -\frac{2}{3}(k-1) \quad \text{for }\xi=-k/3, k\geq2 \\ 
 \alpha_k & = \frac{2}{3}(k-1) \quad \text{for }\xi=+k/3, k\geq1 \\
 \beta_1 & = -5/3 \quad \text{for }\xi=-1/3.
\end{align*}
Précisons que $\alpha_1=0$ indique que $\hat\gamma$ a une singularité logarithmique en $\xi=1/3$.

Le résultat principal de la quatrième sous-partie de cette quatrième partie de ce quatrième chapitre (!) est la preuve que le développement de $\hat\gamma$ 
autour de n'importe laquelle de ses singularités peut s'écrire comme une somme de produits rationnels de zétas de Riemann impairs. La preuve est un peu 
technique mais nous allons quand même tenter de la présenter ici.

En premier lieu, on écrit un développement pour $f(\xi,\zeta)$ autour de ses singularités dicté par l'équation du groupe de renormalisation~:
\begin{equation*}
 f(\xi,\zeta)\sim \sum_{\substack{r\geq0 \\ s\geq1}} \frac{\psi_{r,s}(\xi)} {\zeta^r(\zeta-\xi_0)^s}
 \sim \sum_{\substack{r\geq0 \\ s\geq1}}\sum_{n\geq0}\frac{\psi_{r,s}^{(n)}} {\zeta^r(\zeta-\xi_0)^s} (\xi-\xi_0)^{\alpha_k+r+s-1+n}.
\end{equation*}
Alors les termes contribuant à n'importe quel ordre du développement de $\gamma$ seront égaux (à des rationnels près) à des dérivées de la fonction 
$H(3\zeta,3\zeta')$ dont on a montré qu'elle peut s'écrire comme l'exponentielle d'un polynôme sans partie constante dont les coefficients sont des 
zétas de Riemann impairs. Ceci suffit pour prouver le résultat énoncé plus haut.

Notons que ce résultat, simple, était déjà hors de portée avec le formalisme détaillé au chapitre trois. Il semblait évident qu'il devait être vrai, mais 
l'éclatement de la transformée de Mellin, écrite comme une somme sur ses singularités, le rendait extrêmement technique à prouver (quoiqu'il soit possible 
que des arguments astucieux puissent être trouvés pour s'éviter les calculs les plus délicats). Toutefois, ce nouveau cadre nous permet de largement dépasser 
ce résultat.

En effet, nous avons aussi mis une borne inférieure sur le poids des zétas apparaissant à l'ordre $n$ dans le développement de $\hat\gamma$ autour de 
ses singularités en $\xi=\pm1/3$. Ce fut réalisé en développant toutes les fonctions intervenant dans l'équation de Schwinger--Dyson autour de 
zéro et autour de leurs singularités. 

Toutefois, ceci ne pourrait pas suffire. En effet, il se trouve que la bonne façon de faire les calculs (qui évite d'avoir à considérer beaucoup de cas 
différents) consiste à définir un poids modifié pour les zétas impairs~: on écrit $W(\zeta(2n+1))=2n$. Donc $W(\zeta(2n+1)\zeta(2p+1)=2n+2p$. On définit 
ensuite le poids d'une série par 
\begin{equation*}
	W( \sum_{p=0}^\infty a_p \xi^p ) = \sup_p \bigl( W(a_p) - p \bigr).
\end{equation*}
Enfin, on définit de la même manière le poids d'une fonction autour d'une singularité en écrivant son développement autour de cette singularité. Alors, 
par récurrence et avec de la sueur, du sang et des larmes nous avons pu prouver que 
\begin{equation*}
  W_{1/3}(\hat\gamma) = W_{-1/3}(\hat\gamma) = 0.
\end{equation*}
On vérifie aisément que ces résultats donnent une borne supérieure sur le poids des zétas pouvant apparaitre pour un ordre donné, et que ce poids est saturé 
par les résultats trouvé avec l'ancien formalisme. Il est donc hautement improbable que ces bornes ne soient pas optimales.

\paragraph{Chapitre 5: formalisme BV} ~~\\

Ce dernier chapitre, contrairement aux trois précédents, ne traite pas des équations de Schwinger--Dyson. A la place, il nous aventure dans l'étrange et 
effrayant pays des théories de jauge. Plus précisément, nous allons présenter une approche du formalisme BV (pour Batalin--Vilkovisky) qui permet de le voir 
comme une théorie de l'intégration.

Ces travaux ont été réalisés avec de nombreux collaborateurs, et en particulier Nguyen Viet Dang, Christian Brouder et Fr\'ed\'eric H\'elein. Ils commencent 
par une courte session de définition concernant les symétries ouvertes et fermées. Une symétrie ouverte est une symétrie dont l'ensemble générateur ne 
forme pas une algèbre de Lie ailleurs que sur le sous-espace de l'espace des configurations où les équations du mouvement sont satisfaites. La 
motivation habituellement donnée pour le formalisme BV est qu'il permet de traiter indistinctement les symétries ouvertes et fermées, ce qui n'est pas 
tout à fait le cas, par exemple, du formalisme BRST puisque le complexe de cohomologie n'existe plus sur tout l'espace des configurations. Un exemple 
de théorie avec des symétries ouvertes est la supergravité sans champ auxiliaire.

Ensuite, nous donnons quelques définitions de géométrie élémentaire. En fait, nous cherchons à ce moment à fixer des notations plutôt qu'à donner 
des définitions précises. Ainsi, on ne définit aucune notion de topologie, d'homologie ni de cohomologie. La définition des variétés (manifold) est 
aussi supposée connue par le lecteur.

On rappelle quelques notions de base de la théorie des fibrés, et en particulier la notion de section. Puis l'algèbre extérieure des formes et celle des 
polyvecteurs est brièvement rappelée. Leurs structures de fibrés sont alors clairement visibles, ce qui permet de rappeler qu'un champ est une section 
de ces fibrés. Soulignons qu'il s'agit en effet de la notion intuitive de champ~: quelque chose qui, à chaque point d'un espace associe une quantité, par 
exemple un vecteur.

Ensuite, la définition d'une variété symplectique est fournie, et celle de variété conormale d'une hypersurface d'une variété $M$, qui est naturellement une 
sous-variété lagrangienne du fibré cotangent de $M$ doté de sa structure symplectique canonique. Enfin, on définit les crochets de Schouten--Nijenhuis 
sur un fibré cotangent shifté, ce qui nécessite la définition des dérivées à gauche et à droite.

La sous-partie suivante consiste en une courte description du programme que nous allons suivre. On rappelle que le but d'une théorie quantique des champs est 
de construire une fonction évaluation qui va d'un espace opérationnel (dans lequel vivent les observables) dans un corps, typiquement $\mathbb{R}$. 
Lorsque l'espace des observables est de dimension finie (ce qui n'est pas le cas intéressant pour les physiciens), la solution typique est de 
multiplier notre opérateur (qui n'est alors qu'une fonction) par une forme volume et de l'intégrer sur tout l'espace.

Cette approche présente l'avantage d'avoir la machinerie de la cohomologie de de Rham à sa disposition. En particulier elle permet de se rendre compte 
que les observables sont plutôt les éléments du premier groupe de la cohomologie de de Rham. Toutefois elle souffre de deux gros défauts. 
En premier lieu, il n'existe pas forcément de mesure sur un espace opérationnel, qui est typiquement de dimension infinie et surtout indénombrable. C'est 
sur ce point que joue le formalisme BV. En second lieu, par définition, pour une théorie de jauge une certaine algèbre de Lie $\mathfrak{g}$ agit 
sur l'espace des opérateurs en laissant invariantes les observables (par définition de la notion d'observables et de théorie de jauge). Ceci implique 
l'existence de sous-espaces pouvant être non-compacts où l'intégrande de l'intégrale sera invariante, ce qui amènera des divergences. Celà 
nécessitera une notion de fixation de jauge, qui est assez naturelle dans le cadre du formalisme BV.

L'idée de base du formalisme BV est de construire un complexe dual au complexe de de Rham, donc un complexe d'homologie pour les polyvecteurs. Alors, 
tous les résultats usuels de l'intégration seront reformulés dans ce nouveau complexe.

On commence par détailler ce que cette construction devra être pour une théorie de jauge usuelle. L'espace des configurations est $V=M\times \pi\mathfrak{g}$ 
(les fantômes de Faddeev--Popov vivent dans $\pi\mathfrak{g}$ qui n'est que l'algèbre $\mathfrak{g}$ shiftée). On va alors travailler dans le 
cotangent shifté $\Pi T^*V$. Un système de coordonnées dans cet espace est $(x^i,c^\alpha, x_i^*,c_\alpha^*)$. \\

Ceci fermait la (longue) première partie, introductive, de ce dernier chapitre. Entrons à présent dans le coeur de notre sujet et construisons le 
laplacien BV, qui sera l'opérateur de bord de notre complexe d'homologie. \\

D'après la discussion précédente, nous voulons un opérateur qui abaisse le degré des polyvecteurs sur lesquels il agit de $1$. Le candidat naturel 
lié au complexe de de Rham est
\begin{equation*}
(\Delta_{\Omega}\alpha)\lrcorner\Omega=d(\alpha\lrcorner\Omega)
\end{equation*}
avec $\Omega$ une forme volume et $\lrcorner$ l'opération de contraction canonique entre formes et polyvecteurs. Il est trivial (mais essentiel) de 
vérifier que $\Delta_{\Omega}^2=0$, ce qui vient de $d^2=0$.

A présent nous pouvons définir (je me permets d'insister~: il s'agit bien d'une \emph{définition}) l'intégration d'un polyvecteur par ce que nous appellerons 
désormais l'intégrale BV comme
\begin{eqnarray*}
\int_{\Pi N^*\Sigma}^{BV} \alpha 
&=& \int_\Sigma \alpha\lrcorner \Omega.
\end{eqnarray*}
Cette définition peut être justifiée, dans une certaine mesure, en se plaçant dans le cas où $M$ est de dimension finie et où $\Sigma$ est donnée 
par $k$ équations. Nous pouvons alors utiliser la théorie de l'intégration usuelle pour réécrire le membre de droite de cette définition.

Nous pouvons facilement voir que c'est une bonne définition car, juste en jouant avec les définitions, nous avons l'équivalent d'un théorème de 
Stokes pour l'intégrale BV:
\begin{equation*}
 \int^{BV}_{\Pi N^*\partial\Sigma}\alpha=-\int^{BV}_{\Pi N^*\Sigma}\Delta\alpha.
\end{equation*}
Ce théorème a pour corollaire immédiat que l'intégrale d'un polyvecteur ne dépend que de la classe d'homotopie de la surface sur lequel on l'intègre. 
Dans le même esprit, il est aisé de prouver que l'intégrale d'un polyvecteur $\Delta$-exact est nulle.

Pourquoi voir le formalisme BV comme une théorie de l'intégration~? Tout simplement parce que l'on peut calculer des intégrales pour des fonctions plus 
compliquées que des fonctions linéaires~! Or dans une théorie avec des symétries ouvertes, on aura typiquement des interactions à quatre fantômes, ce qui 
n'a pas de sens avec l'approche de Faddeev--Popov ou avec le formalisme BRST. D'un autre coté, dans le formalisme BV, ces interactions conduiront juste 
à certains types de fonctions qu'il sera tout à fait légitime de traiter. \\

Avant de pouvoir faire celà, il convient de discuter de la fixation de jauge dans le formalisme BV, et en particulier dans le formalisme BV vu comme une 
théorie de l'intégration. Pour comprendre la procédure, nous allons commencer par le cas d'une théorie de jauge usuelle. Soit donc une théorie 
avec des champs et des fantômes. Après le doublage des variables propre à la théorie BV, on obtient également des versions étoilées des champs et des 
fantômes. Ecrivons le symbole de l'action BRST dans l'espace cotangent shifté $\Pi T^*V$:
\begin{equation*}
 S=S_0+S_E+S_R=S_0+\underset{S_E}{\underbrace{c^\alpha c^\beta C_{\alpha\beta}^\gamma c^*_\gamma}} + 
\underset{S_R}{\underbrace{\rho_\alpha^i c^\alpha x_i^*}}.
\end{equation*}
Le fait que le champ de vecteur BRST soit de carré nul se traduit au niveau de $S$ par 
\begin{equation*}
 \{S,S\} = 0,
\end{equation*}
où $\{,\}$ sont les crochets de Schouten--Nijenhuis~: on dit que $S$ est une solution de l'équation maîtresse classique (EMC).

A présent, nous pouvons nous intéresser à la fixation de jauge que nous savons devoir effectuer dans l'intégrale BV $\int_{\Pi T^*V} e^{iS}$ que pour 
le moment nous comprenons comme étant $\int^{BV}_{\underline{0}\subset T^*\left(X\times\mathfrak{g}\right)}e^{iS}$, c'est-à-dire que l'on intègre sur 
la section nulle de $\Pi T^*V$.

Mais, si $S_0$ possède une variété critique non compacte, le choix de la section nulle comme surface d'intégration est un très mauvais choix~: l'intégrale 
peut diverger. Mais, d'après la discussion faite un peu plus haut, nous savons que l'on peut changer la surface d'intégration (si l'on reste dans la même 
classe d'homologie) sans changer la valeur de l'intégrale si l'intégrande est $\Delta$-fermée. Ainsi, nous allons tout simplement nous placer sur une surface où l'intégrale sera définie et 
interpréter le résultat de cette intégration (si par hasard nous pouvons la faire) comme la valeur de l'observable.

En conclusion~: le choix de jauge dans le formalisme BV n'est que le choix d'une sous-variété (lagrangienne) sur laquelle effectuer l'intégration. Et 
demander à une observable d'être invariante de jauge n'est que lui demander d'être $\Delta$-exacte.

Avec cette compréhension, il semble important de vérifier que l'intégrale $\int_{\Pi T^*V} e^{iS}$ que nous cherchions initialement à calculer est bien 
invariante de jauge. Nous ne pouvons pas encore le montrer simplement~: il nous manque des formules explicites pour le laplacien BV.

Avant de nous tourner vers cette tâche, nous allons discuter un peu d'un détail technique qui n'a pas encore été mentionné~: les ambiguités de Gribov. 
Tout simplement, la surface d'intégration $\Sigma$ doit intersecter exactement une fois chaque ensemble $[A]$ de champs de jauge que l'on peut amener 
à $A$ par une transformation de jauge. En général, il n'existe pas de telle surface $\Sigma$.

Plus précisément, si $\Sigma$ croise pas $[A]$ alors $A$ ne contribuera pas à l'intégrale de chemin, et si elle le croise deux fois (ou plus), $A$ ne 
sera pas surcompté. Gribov a pu déterminer  qu'il est nécessaire de restreindre le domaine d'intégration à un sous-espace de l'espace des configurations. 
Il n'y a pas deux configurations infinitésimalement proches vivant toutes deux dans cette région (la région de Gribov).

Toutefois, cette restriction n'est pas suffisante. Pour éviter le surcomptage de champs non infinitésimalement proches (mais tels que l'on peut quand même 
passer de l'un à l'autre avec une transformation de jauge) il est nécessaire d'effectuer une autre restriction, et de limiter l'intégration au domaine 
modulaire fondamental. Il fut alors montré que dans ce domaine tous les $[A]$ intersectent $\Sigma$ exactement une fois.

Dans la suite, on va supposer que l'on intègre toujours sur le domaine modulaire fondamental, et non sur tout l'espace des configurations. Toutefois, 
nous n'allons pas pour autant changer nos notations. \\

La dernière partie de cette thèse présente le concept d'équations maîtresses. Nous avons déjà présenté l'équation maîtresse classique, nous allons à présent 
parler de leurs contre-parties quantiques.

En premier lieu, dans le cas où l'espace des configurations est de dimension finie, on se dote de sa mesure de Lebesgue $\Omega_0:=dx^1\wedge\dots\wedge dx^N$ 
comme forme volume (sur la partie bosonique de l'espace des configurations). Un calcul simple mais légèrement technique montre que 
 \begin{equation*} 
  \Delta_{\Omega_0} = \frac{\partial}{\partial x^i}\frac{\partial}{\partial x^*_i}.
 \end{equation*}
 On effectue le même calcul sur la partie fermionique de l'espace des configurations, avec le même résultat. On obtient alors le laplacien BV, qui correspond 
 à la mesure de Lebesgue sur l'espace $\Pi T^*V$:
 \begin{equation*} 
  \Delta = \frac{\partial}{\partial x^i}\frac{\partial}{\partial x^*_i} - \frac{\partial}{\partial c^{\alpha}}\frac{\partial}{\partial c^*_{\alpha}}.
 \end{equation*}
C'est cet objet que l'on appellera ``le'' laplacien BV. On peut montrer que les crochets de Schouten--Nijenhuis mesurent l'obstruction de ce laplacien 
à être une dérivée. De plus, il s'agit du seul opérateur différentiel de degré $-1$ invariant sous translation satisfaisant cette condition, ce qui peut servir de base à une 
approche axiomatique du formalisme BV.

A présent, la question que le lecteur intransigeant se pose sans doute est~: pourquoi la mesure de Lebesgue~? En dimension infinie, elle n'existe pas, il 
est donc important de pouvoir construire les laplaciens BV pour d'autres mesures. De plus la notion d'invariance de jauge ne devra pas dépendre de notre 
choix de la forme volume.

Une forme volume différente de la mesure de Lebesgue est forcément proportionnelle à celle-ci. Donc nous pouvons écrire $\Omega = e^{f}\Omega_0$ avec 
$f$ une fonction. En utilisant $d(e^{f}\alpha\lrcorner\Omega_0)=e^{f}\left(d(\alpha\lrcorner\Omega_0) + df\wedge(\alpha\lrcorner\Omega_0)\right)$ alors 
nous voyons que le laplacien BV $\Delta_{\Omega}$ est relié au laplacien BV construit à partir de $\Omega_0$ par la formule
 \begin{equation*}
  \Delta_{\Omega} = \Delta + \{f,-\}.
 \end{equation*}
Nous pourrons alors passer d'un choix de forme volume à un autre.

En premier lieu, nous pouvons maintenant montrer que $\int^{BV}_{\underline{0}\subset T^*\left(X\times\mathfrak{g}\right)}e^{iS}$ est invariante de jauge pour $S$ l'action de BRST écrite sur $\Pi T^*V$. Ceci est obtenu 
par un calcul direct. De plus nous avons
\begin{equation*}
 \Delta(e{iS/\hbar}) = 0\Leftrightarrow \{S,S\} - i\hbar\Delta S=0.
\end{equation*}
Cette dernière équation est la fameuse équation maîtresse quantique (EMQ). L'étude de ses solutions est l'étude de toutes les théories avec une symétrie 
de jauge donnée que l'on peut bâtir; ce qui constitue encore de nos jours un sujet de recherche actif.

Enfin, on montre que $\int^{BV}_{\underline{0}\subset T^*\left(X\times\mathfrak{g}\right)}\mathcal{F}e^{iS}$ est invariante de jauge si
 \begin{equation*}
  \{S,\mathcal{F}\} -i\hbar\Delta\mathcal{F} = 0.
 \end{equation*}
Comme auparavant, il est légitime de se poser la question~: aurait-on trouvé les mêmes conditions pour l'invariance de jauge si l'on avait choisi une 
autre mesure? La réponse est oui. En effet, changer la mesure fait aussi changer l'objet auquel nous allons demander d'être invariant de jauge~: typiquement,
$e^{iS/\hbar}$ va rentrer dans la mesure. Et ce changement va exactement annuler le changement induit par le changement de mesure.

Enfin, comme il a été dit plus haut, deux observables dont la différence est $\Delta$-fermée auront les mêmes valeurs mesurées. Nous en 
déduisons que les observables de la théorie sont plutôt les éléments du groupe d'homologie de l'opérateur
\begin{equation*}
 \{S,-\} - i\hbar\Delta 
\end{equation*}
ce qui est un fait couramment affirmé dans la littérature sur le formalisme BV.

La dernière sous-section étudie la limite classique $\hbar\rightarrow0$. En premier lieu, bien sûr, dans cette limite, l'équation maîtresse quantique 
devient l'équation maîtresse classique. Plus intéressant, on peut aussi vérifier, à partir de la formule explicite pour les crochets de 
Schouten--Nijenhuis, que les solutions de l'équation maîtresse quantique seront les champs de polyvecteurs qui seront nuls partout, sauf sur le lieu des 
points critiques de $S$. Nous retrouvons donc bien le principe de moindre action que nous connaissons et aimons depuis notre tendre enfance.

\paragraph{Conclusions} ~~\\

La conclusion commence par rappeler les résultats de cette thèse déjà résumés. Elle se poursuit par une discussion sur les suites possibles des thèmes 
de recherches explorées, dont voici les grandes lignes.

Beaucoup de choses peuvent être (et ont été) montrées à partir de l'algèbre de Hopf des diagrammes de Feynman. Toutefois, un résultat dont il ne me semble 
pas qu'il ait été montré dans cette approche est la linéarité en $\varepsilon$ de la fonction $\beta$ de la renormalisation dans certaines théories. Ce 
résultat est présent dans certains travaux de physiciens, mais la démonstration est loin d'être satisfaisante.

Pour les équations de Schwinger--Dyson, plusieurs pistes s'offrent à nous. En premier lieu, nous pouvons chercher à généraliser les résultats présentés à la 
fin du quatrième chapitre aux autres singularités de $\hat\gamma$. Ce  qui nécessite l'utilisation de la théorie de la résurgence et du calcul étranger dans 
toute leur puissance.

Il peut aussi sembler pertinent de travailler avec des modèles plus complexes. Nous sommes en ce moment en train d'étudier la transformée de Borel d'un 
système d'équations de Schwinger--Dyson avec une renormalisation de vertex. Il semble que nous devions utiliser une version modifiée des équations 
de Schwinger--Dyson qui fut présentée pour la première fois par Symanzik en 1961.

Enfin, pouvons espérer arriver à certains résultats sur les équations de Schwinger--Dyson de modèles plus proches de la physique, tels 
que la QED (pour la matière condensée) ou même que la QCD. De telles études ne pourront se faire qu'au prix de sacrifices quant à la rigueur de notre 
approche mais me semblent quand même être la raison d'être de nos travaux.

En ce qui concerne le formalisme BV plusieurs pistes s'offrent également à nous. En premier lieu, il semble important d'effectuer quelques calculs dans une jauge bien 
connue d'une théorie relativement simple, afin de vérifier que l'on retrouve les résultats connus. Nous sommes en train d'étudier la QED dans la jauge 
de Coulomb.

De plus, la théorie des ambiguités de Gribov n'est pas entièrement aboutie. Le formalisme BV peut permettre d'étudier ce problème pour des symétries plus générales 
que la symétrie de jauge usuelle. 

Notre première motivation, quand nous avons commencé à nous intéresser au formalisme BV, était de comprendre les récents travaux de Brunetti, Fredenhagen 
et Rejzner. Nous sommes au début de ce programme mais nous conservons l'espoir que nous pourrons comprendre comment le formalisme BV peut être exploité 
dans le cadre de la théorie algébrique des champs quantiques.

Enfin, il reste le projet plus lointain d'étudier les équations de Schwinger--Dyson dans le cadre du formalisme BV. Cette réunion des approches 
analytiques et géométriques de théories quantiques des champs non-perturbatives est sans conteste le but final de mes travaux.

\mainmatter

\setcounter{page}{1}


%
%
\chapter*{Introduction\markboth{{\scshape Introduction}}{{\scshape Introduction}}} 

\addcontentsline{toc}{chapter}{Introduction}


 \noindent\hrulefill \\
 {\it
Ils sont maigres, meurtris, las, harassés. Qu'importe ! \\
Là-haut chante pour eux un mystère profond. \\
A l'haleine du vent inconnu qui les porte \\
Ils ont ouvert sans peur leurs deux ailes. Ils vont. \\

Jean Richepin. Les Oiseaux de Passages.}

 \noindent\hrulefill



\vspace{1.5cm}

If this thesis was a mathematical one, I would have started by explaining why the subject is important, and to which great open problems of mathematics it 
would be linked to. If it was a thesis about theoretical physics, I would have explained which phenomena could be described by the theories presented in it.
Instead, it is a thesis of mathematical physics, so let me start by explaining what do I call mathematical physics.

A nice way to present the difference between mathematicians and physicists is, as often, through a simile. If mathematicians and physicists were 
hikers, the physicists would be only aiming for the top of the mountain, while mathematicans would enjoy to make digressions to see nice beautiful 
landscapes. Therefore, as a mathematical physicist, I want to reach the top of the mountain, but I like the views too much to go straight on. This, 
obviously, often ends up with me not reaching my goals and having only blurry pictures to show to friends when I am back. It is tough to be a 
mathematical physicist!

So why one shall choose this path? Well, I cannot speak for others but, in my case, it is just that questions asked in a (fair enough) well-defined 
way seem easier to understand and thus to tackle than others. And, on the other hand, the idea that I am investigating for understanding the 
underlying principles of Nature keeps me humble. I like the fact that we can be absolutely right on the logical and formal point of view and still wrong 
on the physical one. Maybe a mathematical physicist needs an inch of masochism... \\


Now, it is about time that we start the traditional historical introduction of our subject. It is usually admitted than the quantum field theory was 
born in 1927 with Dirac's famous article \cite{Di27}, despite Jordan's earlier attempts to quantize the electromagnetic field in 1925 and 1926. This 
formalism was given a covariant formulation the same year by Pauli and Jordan. 

A general method of quantizing a field was given in the two important papers \cite{HePa29a,HePa29b}. These papers check that the quantum description of the fields is invariant under the Lorentz 
transformations and contain some of the first check of the consistency of quantum and relativistic principles (two space-like separated events commute). They also might contain the first 
apparition of infinities in physical quantities, such as the self energy of the electron. Dirac was not very happy with the formulations of Pauli and Heisenberg and wrote another paper \cite{Di32} 
in which the key objects are probabilities and observables. This paper would later be recognized as seeding the philosophy of the S-matrix approach of QFT. Rosenfeld showed in \cite{Ro32} that the 
two approaches were equivalent. 

Then came a golden age for the theory of quantum fields, both from the theoretical and the experimental points of view. Let us summarize these successes, without references to the articles that 
are very easy to find otherwise. On the experimental side, evidence for the existence of the neutron (Chadwick, 1932) and the positron (Anderson, 1932) were given. On the theoretical side, the 
demonstration of the possibility of pair production without reference to the hole theory was given by Pauli and Weisskopf in 1934, Fermi gave in 1933 and 1934 a theory for the $\beta$ decay, 
Yukawa built a theory of nuclear forces  published in 1935, and Pauli presented a proof of what is now called the spin-statistic theorem in 1940.

Hence, in the late 30s, the situation was somehow quite ambiguous. On the one hand, many important results have been obtained by the mean of the quantum theory of fields. On the other hand, the 
computations had to be cleaned of the spurious singularities by methods which, already at that time, were judged to be ad hoc procedures. Then came WWII, and the gigantic changes it made into 
the equilibrium of the research poles of the world.

After the war, the measure of the Lamb shift turned to be a perfect playground for precise QED computations. A first (classical) attempt was made by Bethe in \cite{Be47} in 1947. The method to 
get rid of the infinities stemming from the classical field theory is very close to what would be now called a hard cut-off renormalization.

A relativistic version of Bethe's computation was done in 1948 by Feynman in \cite{Fe48} and independently by Schwinger and Weisskopf in \cite{ScWe48}\footnote{notice than the correct result was 
obtained earlier by French and Weisskopf, but published later due the gap from their result to Feynman's and Schwinger's.}.

Similar results were obtained on the other side of the Pacific ocean by a team led by Tomonaga. In a seminar given in October 1947, Tomonaga presented a way to remove the 
infinities of the results using a self-consistent subtraction scheme. This might be seen as the first appearance of the renormalization as a program.

The work of Tomonaga's team on the renormalization program was extended by Schwinger in \cite{Sc48} and \cite{Sc49}. With these two papers, it was 
clear that all the divergences of QED were removed at the first order in the perturbation theory. The open question was therefore: what would happen in 
the higher orders? This question could not be answered with the Schwinger--Tomonaga formalism, that was notoriously difficult to handle.

Indeed, an alternative approach of quantum electrodynamics was presented by Feynman at the Pocono conference in 1948, which would eventually led to his famous diagrammatic formalism. The points of view of Feynman, 
Schwinger and Tomonaga were shown to be equivalent by Dyson in \cite{Dy49}. Since computations are easier in Feynman's approach, it is 
the dominant one, up to this day.

With this new formalism and Dyson's work on the equivalence between the two approaches, the question of renormalizability  
at every order in the perturbation theory could start to be treated. In \cite{Dy49b}, Dyson gave a proof of the renormalizability of the S matrix of 
QED at all orders in perturbation theory, although the problem of overlapping divergences was not fully solved. Progresses toward a solution were done in 1951 by Abdus Salam 
in \cite{Sa51a} and \cite{Sa51b}. The same year, Dyson found its famous argument about the divergence of the perturbative series.

A few years later, Yang and Mills suggested to study quantum field theory for more complicated gauge groups than the $U(1)$ group of QED. Their paper 
\cite{YaMi54} marks the birth of non-abelian gauge theories. Later on, these theories were shown to be renormalizable by 't Hooft and Veltman in 
\cite{HoVe72}, and the way to give a mass to the gauge bosons was explained through an important number of publications by many people: Nambu, Goldstone,
Higgs, Englert, Brout...

A systematic treatment of overlapping divergences was given by Bogoliubov and Parasiuk in their article \cite{BoPa57} of 1957. A proof of their work was 
presented by Hepp \cite{He66} in 1966. This proof was simplified by Zimmermann \cite{Zi69} in 1969. This BPHZ procedure was understood much later in the work of 
Kreimer \cite{Kr98} as the fingerprint of an underlying Hopf algebra. This structure is the subject of the first chapter of this thesis, so we will 
stop here this obviously incomplete historical presentation of the field. Many much more complete historical presentations can be found in the literature,
but I used mainly \cite{Schweber}. \\

As we saw, renormalization has emerged from quantum field theory (and, more particularly, from QED) as the way to give a predictive power to the theory. An important question is: is 
the renormalization an intrinsic feature of physical theories or a mark of our poor understanding of the true face of Nature? To this day, this question remains open but, in many ways, is 
outside the realm of science, since it is likely that an answer depends on one's philosophical view. 

So, philosophical questions aside, what is renormalization? Is it just this conservative solution to the problem of divergences in a quantum field theory? Well, other points of view exist. For
example the removal of the divergences may be seen as a by-product of some analytic properties of the physical observables. In other words: we ask for the integrals to have some analyticity 
properties, which come from physical considerations. This implies to redefine the physical quantities and, when computed with those new quantities, the integrations give finite quantities. This 
point of view is wonderfully explained in \cite{Weinberg96a}.

More physically, the divergence one encounters when working out an explicit Feynman graph can be seen as the consequence of the fact that we are working in a field theory. Therefore, a diagram 
does not reflect a true process: an uncountably infinite number of interactions occurs, which are not taken into account in the computation of the Feynman diagram. All these  
interactions conspire together to give a finite result to the physical process. Hence, it is not absurd that forgetting them shall lead to divergent results. 

On a more formal side, the theory of renormalization has had a great impact on axiomatic and constructive\footnote{in the terminology of Wightman, quoted by Schweber, the constructive field 
theory is an offspring of the axiomatic one: the former aims to rigorously build non-trivial quantum field theory while the later studies the general theory of quantum fields.} field theories. In 
return, these approaches of quantum field theories have produced important results that change the way we see the difference between renormalizable and unrenormalizable theories. Since these subjects 
are highly technical, and since I am far from being an expert, I will not give any further details. \\

Now, one of the two pillars of this Ph.D. is the study of various Schwinger--Dyson (or Dyson--Schwinger) equations. Equations of this kind were first introduced by Dyson in 1949 in the already mentioned \cite{Dy49b}.
They were generalized by Schwinger \cite{Sc51} by the mean of a variational principle. Thus they can be described as the Euler-Lagrange equations for the Green functions. As a matter of fact, the 
Schwinger--Dyson equations are one of the few doors to the non-perturbative regime of a quantum field theory.

Here and all through this thesis, we say that an object is non-perturbative when it is not from any order of the perturbation series. Hence, such objects can either be series of Feynman graphs, or be 
defined without any reference to the perturbation series nor to the Feynman graphs. The Green functions of our theories will be defined at the end of the first chapter as a series of graphs of a given 
type and will therefore fall into this class of objects.

Schwinger--Dyson equations can also be seen as equations of self-similarity. They can be written as (a system of) equations describing how to graft a graph within another one. Then a solution to 
such a system is a series of graphs, which is by construction stable under the grafting operation. The question is whether this series is a Hopf subalgebra of the Hopf algebra of renormalization? 

These two objects will be properly defined in the first chapter of this thesis, but I believe that their meaning is pretty clear: a series is a Hopf subalgebra if the graphs generated by the action 
of the coproduct are still within the series. In the same chapter, it will be clear why this question is of 
importance. Indeed, the renormalization map of Feynman graphs is recursively defined as a twisted antipode. Hence we will see that a Green function can have a renormalizable value (and thus a physical 
meaning) only if it generates a Hopf subalgebra. This remark allowed a full classification of the systems of Schwinger--Dyson equations, performed by Foissy in the articles \cite{Fo08,Fo10,Fo11}. 

We will not use this combinatorial approach of the Schwinger--Dyson equations, but rather focus on some specific equations, and try to see which kind of information can be extracted about the 
(renormalized) Green functions from the Schwinger--Dyson equations. This is the raison d'\^etre of the ``analytical'' in the title of this thesis. \\

The second pillar of this work is the Batalin--Vilkovisky (BV) formalism. The BV formalism was born in the seminal work of Batalin and Vilkovisky \cite{BaVi77,BaVi81b}. It is based on a reformulation 
of the BRST formalism. This formalism was developed by Tyutin \cite{Ty75} and independently by Becchi, Rouet and Stora \cite{BeRoSt76}. It identifies the observables of the theory with the elements
of a certain cohomology group. A brief introduction to this subject can be found on the first appendix of this thesis. 

The interesting point (from a physical point of view) about the Batalin and Vilkovisky's reformulation of the BRST formalism is that their language is still working for theories that cannot be 
treated without great pain with the BRST formalism. From a more mathematical point of view, it leads to a beautiful theory of integration for polyvector fields: it is the origin of the ``geometric'' of the title. Finally, it lets us hope that we will 
one day be able to deal with gauge theories, and not only with the toy models presented in the following work. \\

Now, it is about time to describe what is going to be found by the reader of this thesis. The first chapter starts with the description of algebraic objects that lead to the definition of Hopf 
algebra. Then the Hopf algebra of renormalization is specifically studied. In particular, the Birkhoff decomposition is presented. Then, the renormalization group equation is introduced within 
this formalism and applied on the Green functions.

The second chapter deals with certain class of Schwinger--Dyson equations called linear, a term that will be defined in the header of that chapter. 
We start by presenting a result by Broadhurst and Kreimer on the Schwinger--Dyson equation of the massless Yukawa model. This result is an exact solution 
to this equation. The method is then applied to the massive Yukawa model and the massive Wess--Zumino-like model.

In our third chapter, we will see how to deal with a non-linear Schwinger--Dyson equation, for a massless Wess--Zumino model. First, the asymptotics of the 
solution is given. Then we present a method to exploit the knowledge of this asymptotics to compute its corrections.

To get rid of the formal series arising in the computations of the third chapter, we map in the fourth chapter the Schwinger--Dyson equation and the renormalization group 
equation of the massless Wess--Zumino model to the Borel plane. There, we are able to characterize the singularities of the solution and to extract some information about the 
number theoretic content of the solution.

Finally, the fifth (and last) chapter is somehow distinct to the others. It is a presentation of the BV formalism of gauge theories, seen 
as a theory of integration of polyvector fields. The gauge-fixing procedure is then detailed in this presentation and the quantum master equation has 
a very simple geometrical meaning.

Three appendices are added to these five chapters. The first is a (maybe not so) brief introduction to the geometrical construction of the BRST formalism. 
The second is a set of remark concerning the link between BRST and BV formalism. And the third is the detail of computation of Feynman graphs which 
 will be needed in the next steps of our Schwinger--Dyson program. \\

Before starting the body of the text, I would like to make a short statement about the level of details of this presentation. I had a dream, while writing this thesis and before, to make it 
self-consistent. I think that I did not succeed in this task: due to a lack of space and time \cite{DouglasAdams} I had to use some results without giving a detailed proof. 
Neither did I give a fully detailed presentations of many mathematical subjects that are used or at least mentioned within the text. Some (typically the theory of alien calculus) because I 
believe that the main points of this work can be understood without knowing anything about it, some because I think that they are now general knowledge: homology, cohomology and so on\dots
I apologize for the inconveniences the reader might encounter due to this incompleteness.

%
%
\chapter{The Hopf algebra of renormalization}

 \noindent\hrulefill \\
 {\it
Cet univers désormais sans maître ne lui paraît ni stérile, ni fertile. Chacun des grains de cette pierre, chaque éclat minéral de cette montagne 
pleine de nuit, à lui seul, forme un monde. La lutte elle-même vers les sommets suffit à remplir un cœur d'homme. Il faut imaginer Sisyphe heureux. 
\\

Albert Camus. Le mythe de Sisyphe.}

 \noindent\hrulefill
 
 \vspace{1.5cm}

The first two sections of this chapter are from notes taken at a lecture given by Dominique Manchon at CIRM in September 2012. The third is closer to \cite{CoKr99,CoKr00}. 
A close (but much more complete) set of notes written by the aforementioned author is \cite{Ma04}. Here I will only present material needed for my subject, without any pretension to completeness.

\section{Algebraic preliminaries}

\subsection{Algebra, coalgebra}

Let $\mathbb{K}$ be a field of characteristic zero and $H$ be a $\mathbb{K}$-vector space. Then $(H,m)$ is an associative algebra if $m$ is a linear map
\begin{align}
 m : \quad& H \otimes H \longrightarrow H \nonumber \\
          & (x;y) \longrightarrow m(x;y)
\end{align}
such that
\begin{equation}
 m(m (x;y);z) = m(x;m(y;z))
\end{equation}
which is equivalent to have figure \ref{asso} to commute.
\begin{figure}[h!]
 \begin{center}
  \begin{tikzpicture}[->,>=stealth',shorten >=1pt,auto,node distance=3cm,thick] 

    \node (1) {$H\otimes H\otimes H$};
    \node (2) [right of=1] {$H\otimes H$};
    \node (3) [below of=2] {$H$};
    \node (4) [below of=1] {$H\otimes H$};

    \path
      (1) edge node [left] {$m\otimes I$} (4)
      (1) edge node [above] {$I\otimes m$} (2)
      (2) edge node [right] {$m$} (3)
      (4) edge node [above] {$m$} (3);

  \end{tikzpicture}
  \caption{Illustration of the associativity of an algebra.}
  \label{asso}
 \end{center}
\end{figure}

An algebra is said to be a unitary algebra if it has a unity, that is an element $1\in H$ such that
\begin{equation}
 \forall x \in H, \quad m(1,x) = m(x;1) = x
\end{equation}
In the following, all our algebras will be assumed to be unitary and associative (if not otherwise specified) and we will just write algebra for unitary associative algebra. From the unity $1$ we 
can define the unit map $u:\mathbb{K}\longrightarrow H$ by
\begin{equation}
 \forall k\in\mathbb{K}, \quad u(k)=k1.
\end{equation}
Then the properties of $u$ can be written graphically as
\begin{figure}[h!]
 \begin{center}
  \begin{tikzpicture}[->,>=stealth',shorten >=1pt,auto,node distance=3cm,thick] 

    \node (1) {$\mathbb{K}\otimes H$};
    \node (2) [right of=1] {$H\otimes H$};
    \node (3) [right of=2] {$H\otimes\mathbb{K}$};
    \node (4) [below of=2] {$H$};

    \path
      (1) edge node [above] {$u\otimes I$} (2)
      (3) edge node [above] {$I\otimes u$} (2)
      (2) edge node [right] {$m$} (4)
      (4) edge [<->] node [right] {$\sim$} (1)
      (4) edge [<->] node [left] {$\sim$} (3);

  \end{tikzpicture}
  \caption{Illustration of the unitarity of an algebra.}
  \label{unit}
 \end{center}
\end{figure} \\
Moreover, a subset $H'$ of $H$ is said to be a subalgebra if $m(H'\otimes H')\subseteq H'$.

Now, $(H,\Delta,\varepsilon)$ is said to be a counitary coassociative coalgebra if $\Delta$ and $\varepsilon$ are two maps
\begin{subequations}
 \begin{align}
 \Delta : \quad & H \longrightarrow H \otimes H \\
 \varepsilon : \quad & H \longrightarrow \mathbb{K}
\end{align}
\end{subequations}
with the axioms of the unitary associative algebra reversed. More precisely we have: 
\begin{itemize}
 \item[$\bullet$] the coassociativity
\begin{equation}
 \forall x \in H, \quad (\Delta\otimes I)(\Delta(x)) = (I\otimes\Delta)(\Delta(x)),
\end{equation}
which, with Sweedler's notation $\Delta(x) = \sum x_1\otimes x_2$ is equivalent to
\begin{equation}
 \sum\Delta(x_1)\otimes x_2 = \sum x_1\otimes\Delta(x_2).
\end{equation} 
Graphically, this is represented by the commutativity of figure \ref{coasso}
\begin{figure}[h]
 \begin{center}
  \begin{tikzpicture}[->,>=stealth',shorten >=1pt,auto,node distance=3cm,thick] 

    \node (1) {$H\otimes H\otimes H$};
    \node (2) [right of=1] {$H\otimes H$};
    \node (3) [below of=2] {$H$};
    \node (4) [below of=1] {$H\otimes H$};

    \path
      (4) edge node [left] {$\Delta\otimes I$} (1)
      (2) edge node [above] {$I\otimes \Delta$} (1)
      (3) edge node [right] {$\Delta$} (2)
      (3) edge node [above] {$\Delta$} (4);

  \end{tikzpicture}
  \caption{Illustration of the coassociativity of a coalgebra.}
  \label{coasso}
 \end{center}
\end{figure}
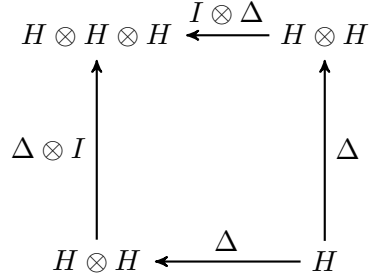
 \item[$\bullet$] the counitarity
\begin{equation}
 \forall x\in H, \quad (I\otimes\varepsilon)(\Delta(x)) \sim x \sim (\varepsilon\otimes I)(\Delta(x)).
\end{equation}
In the same way that the coassociativity property can be pictured by \ref{asso} with the directions of the arrows reversed, we can picture the counitarity of a coalgebra by \ref{unit} with the 
arrows reversed:
\begin{figure}[h]
 \begin{center}
  \begin{tikzpicture}[->,>=stealth',shorten >=1pt,auto,node distance=3cm,thick] 

    \node (1) {$\mathbb{K}\otimes H$};
    \node (2) [right of=1] {$H\otimes H$};
    \node (3) [right of=2] {$H\otimes\mathbb{K}$};
    \node (4) [below of=2] {$H$};

    \path
      (2) edge node [above] {$\varepsilon\otimes I$} (1)
      (2) edge node [above] {$I\otimes \varepsilon$} (3)
      (4) edge node [right] {$\Delta$} (2)
      (1) edge [<->] node [right] {$\sim$} (4)
      (3) edge [<->] node [left] {$\sim$} (4);

  \end{tikzpicture}
  \caption{Illustration of the counitarity of a coalgebra.}
  \label{counit}
 \end{center}
\end{figure}
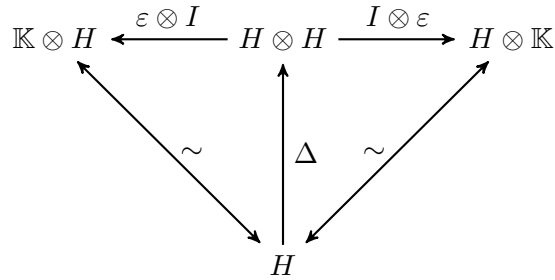
\end{itemize}
We will assume that all the coalgebras that we will encounter are coassociative and counitary (if not otherwise stated) and we will therefore call them simply coalgebra.

Finally, $H'\subseteq H$ is said to be a subcoalgebra of $H$ if $\Delta(H')\subseteq H'\otimes H'$.

\subsection{Bialgebra, Hopf algebra}

After the previous definitions, the gist of a bialgebra becomes clear: it is simply something that put together the axioms of algebra and coalgebra. Hence $(H;m;u;\Delta;\varepsilon)$ is 
a bialgebra if $(H,m,u)$ is an algebra, $(H,\Delta,\varepsilon)$ is a coalgebra and if the figure \ref{bialg} commutes.
\begin{figure}[h]
 \begin{center}
  \begin{tikzpicture}[->,>=stealth',shorten >=1pt,auto,node distance=3cm,thick] 

    \node (1) {$H\otimes H$};
    \node (2) [right of=1] {};
    \node (3) [below of=2] {};
    \node (4) [below of=1] {$H\otimes H\otimes H\otimes H$};
    \node (5) [right of=2] {$H$};
    \node (6) [below of=5] {$H\otimes H$};

    \path
      (1) edge node [left] {$\Delta\otimes\Delta$} (4)
      (1) edge node [above] {$m$} (5)
      (5) edge node [right] {$\Delta$} (6)
      (4) edge node [below] {$(m\otimes m)\circ\tau_{23}$} (6);

  \end{tikzpicture}
  \caption{Compatibility of the algebra and coalgebra structures.}
  \label{bialg}
 \end{center}
\end{figure}
In that figure we have $\tau_{23}:H\otimes H\otimes H\otimes H\longrightarrow H\otimes H\otimes H\otimes H$ defined by
\begin{equation*}
 \tau_{23}(a\otimes b\otimes c\otimes d) = a\otimes c\otimes b\otimes d.
\end{equation*}
Let us notice that this figure which, as stated in its caption, ensures the compatibility between the algebra and coalgebra structures is equivalent to the statement that $\Delta$ is an algebra 
morphism for $(H\otimes H,\tilde{m})$; or equivalently that $m$ is a coalgebra morphism for $(H\otimes H,\tilde{\Delta})$, with $\tilde{m}$ and $\tilde{\Delta}$ defined by:
\begin{align*}
 \tilde{\Delta} & = \tau_{23}\circ(\Delta\otimes\Delta) \\
 \tilde{m} & = (m \otimes m) \circ \tau_{23}.
\end{align*}
Moreover, a bialgebra is said to be graded if, and only if 
\begin{subequations}
 \begin{align}
  & H = \bigoplus_{m\geq0}H_m \\
  & \Delta H_m \subset \bigoplus_{k+l=m}H_k\otimes H_l \\
  & H_k.H_l\subset H_{k+l}
 \end{align}
\end{subequations}
$\forall k;l;m\in \mathbb{N}$. Then
\begin{align*}
  & u(1) = 1_H\in H_0 \\
  & \varepsilon(H_p) = {0}
\end{align*}
$\forall p\geq 1$. Furthermore, if $\text{dim}(H_0)=1$, $H$ is said to be connected. Finally, the bialgebra $H$ is said to be filtered if there exist a collection of $H^n$ such that
\begin{subequations}
 \begin{align}
 & H^0 \subset H^1 \subset \dots \subset H^n \subset \dots; \\
 & H = \bigcup_{n=0}^{+\infty} H^n; \\
 & H^i.H^j \subset H^{i+j}; \\
 & \Delta(H^n) = \sum_{l+m=n}H^l\otimes H^m.
 \end{align}
\end{subequations}
In the following, our Hopf algebra will possesses a natural grading (the loop number), therefore the filtration will be defined by
\begin{equation}
 H^n = \bigoplus_{i=0}^nH_i.
\end{equation}
Now, we can defined a Hopf algebra as a bialgebra together with an application $S:\quad H\longrightarrow H$ (the antipode) such that the figure \ref{hopf} commutes.
\begin{figure}[h]
 \begin{center}
  \begin{tikzpicture}[->,>=stealth',shorten >=1pt,auto,node distance=3cm,thick] 

    \node (1) {$H$};
    \node (2) [right of=1] {$\mathbb{K}$};
    \node (3) [right of=2] {$H$};
    \node (4) [above left of=2] {$H\otimes H$};
    \node (5) [above right of=2] {$H\otimes H$};
    \node (6) [below left of=2] {$H\otimes H$};
    \node (7) [below right of=2] {$H\otimes H$};

    \path
      (4) edge node [above] {$S\otimes I$} (5)
      (1) edge node [left] {$\Delta$} (4)
          edge node [above] {$\varepsilon$} (2)
          edge node [left] {$\Delta$} (6)
      (5) edge node [right] {$m$} (3)
      (2) edge node [above] {$u$} (3)
      (6) edge node [above] {$I\otimes S$} (7)
      (7) edge node [right] {$m$} (3);

  \end{tikzpicture}
  \caption{The axiom of a Hopf algebra.}
  \label{hopf}
 \end{center}
\end{figure}

One can define the notion of a graded Hopf algebra but this will not be needed in this work, see \cite{Ma04} for such subtleties. Finally, a Hopf subalgebra $H'$ of $H$ is a subalgebra and a 
subcoalgebra of $H$, such that $S(H')\subseteq H'$.

\subsection{From bialgebras to Hopf algebras}

In this subsection we will see that if one has a graded connected bialgebra, one can canonically endow it with a Hopf algebra structure. We will need some intermediate steps to reach this result.
\begin{propo} (\cite{Ma04}). \label{red_coprod}
 If $H$ is a graded and connected bialgebra, $\forall n\geq1,\forall x\in H_n$
 \begin{equation}
  \Delta(x) = x\otimes 1 + 1\otimes x + \tilde\Delta(x)
 \end{equation}
 with
 \begin{equation}
  \tilde\Delta(x)\in\bigoplus_{\substack{p+l=n \\ p,l\geq1}} H_p\otimes H_l.
 \end{equation}
 Moreover, $\tilde\Delta$ is a coassociative coproduct. 
\end{propo}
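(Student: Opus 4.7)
The plan is to extract the precise shape of $\Delta(x)$ from the combination of the grading, the connectedness $\dim H_0 = 1$ (so $H_0 = \mathbb{K}\cdot 1$), and the counit axiom, and then to get coassociativity of $\tilde{\Delta}$ by a direct bookkeeping argument on the coassociativity of $\Delta$.

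First, I would use the grading of the coproduct. For $x\in H_n$, the relation $\Delta(H_n)\subset\bigoplus_{k+l=n}H_k\otimes H_l$ lets us split $\Delta(x)$ into three pieces according to whether a tensor factor sits in degree $0$ or not:
\begin{equation*}
\Delta(x) = 1\otimes y \;+\; z\otimes 1 \;+\; R(x), \qquad R(x)\in\bigoplus_{\substack{p+l=n\\ p,l\geq 1}}H_p\otimes H_l,
\end{equation*}
for some $y,z\in H_n$, where connectedness has been used to identify $H_0\otimes H_n\simeq \mathbb{K}\otimes H_n$ and to write basis elements of $H_0$ as multiples of $1$. Next I would apply the counit axiom. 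Recall that the grading forces $\varepsilon(H_p)=0$ for every $p\geq 1$, hence $\varepsilon(z)=0$ and $\varepsilon$ kills the middle-bidegree remainder $R(x)$. Applying $(\varepsilon\otimes I)$ and $(I\otimes \varepsilon)$ and using the natural identifications $\mathbb{K}\otimes H\simeq H\simeq H\otimes\mathbb{K}$, I get $y=x=z$. Setting $\tilde{\Delta}(x):=R(x)$ yields the announced decomposition.

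For the coassociativity of $\tilde{\Delta}$, the idea is just to plug the decomposition into $(\Delta\otimes I)\Delta(x) = (I\otimes\Delta)\Delta(x)$. Writing $\tilde{\Delta}(x)=\sum x'\otimes x''$ (Sweedler-style, with both factors in strictly positive degree) and expanding $\Delta$ on each factor using the just-proved formula, both sides of the coassociativity of $\Delta$ produce the same string of ``trivial'' pieces, namely $1\otimes 1\otimes x$, $1\otimes x\otimes 1$, $x\otimes 1\otimes 1$, $1\otimes\tilde{\Delta}(x)$, $\tilde{\Delta}(x)\otimes 1$, and $\sum x'\otimes 1\otimes x''$. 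After these cancel, what remains on the left-hand side is $(\tilde{\Delta}\otimes I)\tilde{\Delta}(x)$ and on the right-hand side $(I\otimes\tilde{\Delta})\tilde{\Delta}(x)$, which is precisely the coassociativity we want.

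The only mildly delicate point in the plan is the bookkeeping in that last step: one has to be careful that the boundary terms produced by expanding $\Delta(x')$ and $\Delta(x'')$ on each side match up, which is why I would explicitly track the contribution of each component in $\bigoplus_{p+l=n,\,p,l\geq 1}H_p\otimes H_l$. Beyond that, everything is a direct consequence of the grading, connectedness and the counit axiom stated in the bialgebra definition, so no additional structure needs to be invoked.
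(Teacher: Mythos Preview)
Your proposal is correct and follows essentially the same route as the paper: split $\Delta(x)$ by bidegree using connectedness, pin down the degree-zero pieces via the counit axiom, then expand both sides of the coassociativity of $\Delta$ and cancel the boundary terms to isolate $(\tilde\Delta\otimes I)\tilde\Delta(x)=(I\otimes\tilde\Delta)\tilde\Delta(x)$. If anything, your first step is slightly cleaner than the paper's, which writes the degree-zero components directly as $a(x\otimes 1)+b(1\otimes x)$ rather than $z\otimes 1+1\otimes y$ before invoking the counit.
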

\begin{proof}
 We follow the proof of \cite{Ma04}. From connectedness of $H$ (i.e. the fact that $H_0$ is one dimensional) we can write, for $x\in H_n$, $n\geq1$,
 \begin{equation*}
  \Delta(x) = a(x\otimes1) + b(1\otimes x) + \tilde\Delta x.
 \end{equation*}
 Using the canonical identification between $\mathbb{K}\otimes H$ and $H$ coming from the vector space structure of $H$ we have
 \begin{align*}
  & x \sim (\varepsilon\otimes I)\Delta(x) \sim ax \\
  & x \sim (I\otimes\varepsilon)\Delta(x) \sim bx
 \end{align*}
 from the counit structure of $H$. Then $a=b=1$. The coassociativity property of $\tilde\Delta$ is shown from direct computation
 \begin{align*}
  (\Delta\otimes I)\Delta(x) & = (\Delta\otimes I)(x\otimes 1 + 1\otimes x + \tilde\Delta(x)) \\
                             & = x\otimes 1\otimes 1 + 1\otimes x\otimes1 + \tilde\Delta(x)\otimes 1 + 1\otimes1\otimes x + (\Delta\otimes I)\sum x'\otimes x'' \\
                             & = x\otimes 1\otimes 1 + 1\otimes x\otimes1 + 1\otimes1\otimes x + \sum(x'\otimes x''\otimes 1 + x'\otimes1\otimes x'' + 1\otimes x'\otimes x'') \\
                   \llcorner & + (\tilde\Delta\otimes I)\tilde\Delta(x)
 \end{align*}
 where we have used $\Delta(1)=1\otimes1$ (which can be shown in the same fashion than the first result of this proposition) and the Sweedler's notation for $\tilde\Delta$: 
 $\tilde\Delta(x)=\sum x'\otimes x''$. Similarly we have
 \begin{align*}
  (I \otimes\Delta)\Delta(x) & = (I \otimes\Delta)(x\otimes 1 + 1\otimes x + \tilde\Delta(x)) \\
                             & = x\otimes 1\otimes 1 + 1\otimes x\otimes1 + 1\otimes1\otimes x + 1\otimes\tilde\Delta(x) + (I \otimes\Delta)\sum x'\otimes x'' \\
                             & = x\otimes 1\otimes 1 + 1\otimes x\otimes1 + 1\otimes1\otimes x + \sum(1\otimes x'\otimes x'' + x'\otimes1\otimes x'' + x'\otimes x''\otimes 1) \\
                   \llcorner & + (I \otimes\tilde\Delta)\tilde\Delta(x).
 \end{align*}
 Hence the coassociativity of $\Delta$ implies the coassociativity of $\tilde\Delta$.
\end{proof}
Now, let $H$ be a connected graded bialgebra and $(A;m_A;u_A)$ a unital algebra. Then the convolution product $*_A$ (which we will write $*$ whenever there is no risk of confusion) is an associative product on $\mathcal{L}(H;A)$ (the space of linear maps from 
$H$ to $A$) defined by:
\begin{equation}
 (f*_Ag)(x) = m_A\circ(f\otimes g)\circ\Delta(x) \qquad \forall x\in H.
\end{equation}
\begin{lemm} \label{id_conv}
 The unity of the product $*_A$ is $u_A\circ\varepsilon$, with $u_A$ the unit of $A$ and $\varepsilon$ the counit of $H$.
\end{lemm}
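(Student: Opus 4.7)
The plan is to prove this by direct computation, expanding both sides of the would-be identity using the definition of the convolution product and Sweedler's notation, then invoking the counit axiom of $H$.

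Concretely, I would fix $f\in\mathcal{L}(H,A)$ and $x\in H$, and compute
\begin{equation*}
\bigl((u_A\circ\varepsilon)*_A f\bigr)(x) = m_A\circ\bigl((u_A\circ\varepsilon)\otimes f\bigr)\circ\Delta(x).
\end{equation*}
Writing $\Delta(x)=\sum x_1\otimes x_2$ in Sweedler notation, this becomes $\sum m_A\bigl(u_A(\varepsilon(x_1))\otimes f(x_2)\bigr)$. The key observation is that $u_A(\varepsilon(x_1)) = \varepsilon(x_1)\, 1_A$ is a scalar multiple of the unit of $A$, so that $m_A\bigl(u_A(\varepsilon(x_1))\otimes f(x_2)\bigr) = \varepsilon(x_1)\, f(x_2)$ by the unitality of $A$ (the left-hand instance of figure \ref{unit}). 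Pulling the scalar $\varepsilon(x_1)\in\mathbb{K}$ through the linear map $f$ then gives $f\bigl(\sum \varepsilon(x_1)\, x_2\bigr)$, which by the counit axiom of $H$ (the left-hand side of figure \ref{counit}) equals $f(x)$.

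The same argument works on the right: one starts from $\bigl(f*_A(u_A\circ\varepsilon)\bigr)(x) = \sum m_A\bigl(f(x_1)\otimes u_A(\varepsilon(x_2))\bigr)$, uses unitality of $A$ on the right this time to turn this into $\sum \varepsilon(x_2)\, f(x_1) = f\bigl(\sum \varepsilon(x_2)\, x_1\bigr)$, and concludes by the other half of the counit axiom. No genuine obstacle arises, the statement is essentially a tautology once one unfolds the definitions; the only thing to be mildly careful about is the compatibility between the two uses of ``unit'' (the scalar-to-algebra map $u_A$ on one hand and the identification $\mathbb{K}\otimes H\simeq H$ used in the counit axiom on the other), both of which are built from the canonical $\mathbb{K}$-vector space structures.
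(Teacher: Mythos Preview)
Your proof is correct and is in fact the standard textbook argument: it relies only on the counit axiom of $H$ together with the unitality of $A$, and so would establish the lemma for an arbitrary coalgebra $H$ and algebra $A$.

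The paper takes a slightly different route. Since in context $H$ is assumed to be a connected graded bialgebra, the paper exploits this directly: $\varepsilon$ vanishes on $H_p$ for $p\geq 1$ and is the identity on $H_0\simeq\mathbb{K}$, so when one applies $f\otimes(u_A\circ\varepsilon)$ to $\Delta(x)$ only the term $x\otimes 1$ survives, giving $f(x)\,u_A(\varepsilon(1))=f(x)$. Your argument is more general and more conceptual (it isolates the exact axiom doing the work), while the paper's argument is more concrete given the ambient hypotheses. Either way the result is immediate.
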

\begin{proof}
 We use $\varepsilon(x)=\lambda1_A$ if $x=\lambda_H\in H_0$ and $0$ otherwise since $H$ is assumed to be connected and graded. Then the only surviving term in $\Delta(x)$ is $x\otimes1$ and
 \begin{equation*}
  f*(u_A\circ\varepsilon)(x) = f(x)u_A(\varepsilon(1)) = f(x)u_A(1) = f(x)
 \end{equation*}
 where in the last step we have used $u_A(1)=1_A$ since $u_A$ is the unit of $A$.
\end{proof}
Let us notice that the main axiom of the Hopf algebra structure (figure \ref{hopf}) can be written 
\begin{equation}
 S *_H I = I *_H S = u\circ\varepsilon.
\end{equation}
Which can be said as ``the antipode $S$ is the inverse for the convolution product $*_H:\mathcal{L}(H;H)\otimes\mathcal{L}(H;H)\longrightarrow\mathcal{L}(H;H)$ of the identity for the product''. 
Therefore, in the remaining part of this subsection, we will show that if a bialgebra is graded and connected the notion of an inverse (for the above convolution product) of the identity (for 
the product) is a well-defined notion.
\begin{propo} \label{van_coprod}
 If $x\in H_n$, then 
 \begin{equation}
  \tilde\Delta^{(k)}(x) = 0 \quad \forall k\geq n+1.
 \end{equation}
 We have used the notation
 \begin{align*}
  & \tilde\Delta^{(1)} := \tilde\Delta \\
  & \tilde\Delta^{(2)} := (I\otimes\tilde\Delta)\circ\tilde\Delta = (\tilde\Delta\otimes I)\circ\tilde\Delta \\
  & \tilde\Delta^{(k)} := (I^{\otimes k-1}\otimes\tilde\Delta)\circ(I^{\otimes k-2}\otimes\tilde\Delta)\circ\dots\circ\tilde\Delta.
 \end{align*}
\end{propo}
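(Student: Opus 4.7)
The idea is that each application of $\tilde\Delta$ preserves the total grading but produces tensors whose factors all sit in strictly positive degree, so that after too many applications there is no room left. I would therefore prove, by induction on $k$, the stronger statement
\begin{equation*}
  \tilde\Delta^{(k)}(H_n) \subseteq \bigoplus_{\substack{i_1+\cdots+i_{k+1}=n \\ i_1,\ldots,i_{k+1}\geq 1}} H_{i_1}\otimes H_{i_2}\otimes\cdots\otimes H_{i_{k+1}}.
\end{equation*}
The base case $k=1$ is exactly the content of Proposition \ref{red_coprod}. A pigeonhole argument on the right-hand side then forces $\tilde\Delta^{(k)}(x)=0$ as soon as $k+1>n$, which in particular covers all $k\geq n+1$.

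For the inductive step, assuming the claim at level $k$, I would apply $I^{\otimes k}\otimes\tilde\Delta$ and decompose summand by summand. On a typical component $H_{i_1}\otimes\cdots\otimes H_{i_{k+1}}$ the operator acts as the identity on the first $k$ factors and as $\tilde\Delta$ on the last one. Since $i_{k+1}\geq 1$, Proposition \ref{red_coprod} applied to $H_{i_{k+1}}$ sends it inside $\bigoplus_{j+j'=i_{k+1},\,j,j'\geq 1} H_j\otimes H_{j'}$, yielding a $(k+2)$-fold tensor in which every factor has degree $\geq 1$ and the total degree is still $n$. This is precisely the inductive statement at level $k+1$.

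The only subtlety is the well-definedness of $\tilde\Delta^{(k)}$: the three equivalent definitions in the statement of the proposition rely on coassociativity of $\tilde\Delta$, which itself was established in Proposition \ref{red_coprod}. I would invoke this once at the outset to justify working with the ``right-insertion'' formula $\tilde\Delta^{(k+1)}=(I^{\otimes k}\otimes \tilde\Delta)\circ\tilde\Delta^{(k)}$ throughout the induction. There is no real analytical obstacle here; the whole content is a bookkeeping argument on the grading, and the crucial input is really Proposition \ref{red_coprod}, which guarantees that $\tilde\Delta$ never produces a tensor factor of degree $0$.
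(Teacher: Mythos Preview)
Your proposal is correct and follows essentially the same route as the paper: both arguments use Proposition \ref{red_coprod} to ensure that every tensor factor produced by $\tilde\Delta$ has degree $\geq 1$, and then a pigeonhole count on the grading forces vanishing once there are more than $n$ factors. Your version is simply more explicit, carrying out by induction what the paper states in one line via Sweedler notation, and you are also right to flag that the argument actually gives vanishing already for $k\geq n$, slightly sharper than the stated $k\geq n+1$.
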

\begin{proof}
 We simply write
 \begin{equation*}
  \tilde\Delta^{(k)}(x) = \sum_{(x)}x^{(1)}\otimes...\otimes x^{(k)}
 \end{equation*}
 but each of the $x^{(i)}$ is at least of degree $1$, thus $\tilde\Delta^{(k)}(x)$ is at least of degree $k>|x|$. But $\tilde\Delta^{(k)}(x)$ has to be of the same degree that $x$ (according to 
 proposition \ref{red_coprod}) hence the only solution is that $\tilde\Delta^{(k)}$ vanishes on $H_n$.
\end{proof}
The $\tilde\Delta^{(k)}$ are of importance because if $\alpha$ is an element of $\mathcal{L}(H;A)$ such that $\alpha(1)=0$ then 
\begin{equation} \label{pow_alpha}
 \alpha^{*k}(x) = \left(m^{\otimes k-1}\right)\left(\alpha^{\otimes k}\right)\circ\tilde\Delta^{(k)}(x).
\end{equation}
First, notice that this object is well-defined because the product is associative. Moreover, since $\Delta(1) = 1\otimes1$, it is easy to show that $\alpha^{*k}(1)=0$ by induction on $k$ and 
therefore only the $\tilde\Delta$ term survives in the $\Delta$ within the convolutions product. Then \eqref{pow_alpha} is a consequence of
\begin{lemm}
 \begin{equation}
  \tilde\Delta^{(k)} = \left(I^{\otimes k-1}\otimes\tilde\Delta\right)\circ\tilde\Delta^{(k-1)} = \left(I\otimes\tilde\Delta^{(k-1)}\right)\circ\tilde\Delta.
 \end{equation}
 The first equality being just the definition of $\tilde\Delta^{(k)}$.
\end{lemm}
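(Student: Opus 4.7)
The plan is to prove the identity by induction on $k$, treating the second equality as the real content (the first being the definition of $\tilde\Delta^{(k)}$). The only tools needed are the coassociativity of $\tilde\Delta$ established in Proposition \ref{red_coprod} and the bifunctoriality of the tensor product, $(f\otimes g)\circ(h\otimes l)=(f\circ h)\otimes(g\circ l)$, applied to linear maps defined on finite tensor powers of $H$.

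For the base case $k=2$, one has $\tilde\Delta^{(2)} = (I\otimes\tilde\Delta)\circ\tilde\Delta$ directly from the definition, which equals $(I\otimes\tilde\Delta^{(1)})\circ\tilde\Delta$ since $\tilde\Delta^{(1)}=\tilde\Delta$. Coassociativity of $\tilde\Delta$ then gives the equality with $(\tilde\Delta\otimes I)\circ\tilde\Delta$.

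For the inductive step, assume the result for $k-1$, so that $\tilde\Delta^{(k-1)} = (I\otimes\tilde\Delta^{(k-2)})\circ\tilde\Delta$. Unfolding the definition and applying the induction hypothesis in the second factor yields
\begin{equation*}
\tilde\Delta^{(k)} \;=\; \bigl(I^{\otimes k-1}\otimes\tilde\Delta\bigr)\circ\tilde\Delta^{(k-1)} \;=\; \bigl(I^{\otimes k-1}\otimes\tilde\Delta\bigr)\circ\bigl(I\otimes\tilde\Delta^{(k-2)}\bigr)\circ\tilde\Delta .
\end{equation*}
Now one uses bifunctoriality of the tensor product to merge the two leftmost operators: writing $I^{\otimes k-1}\otimes\tilde\Delta = I\otimes(I^{\otimes k-2}\otimes\tilde\Delta)$ and $I\otimes\tilde\Delta^{(k-2)} = I\otimes\tilde\Delta^{(k-2)}$, one obtains
\begin{equation*}
\bigl(I^{\otimes k-1}\otimes\tilde\Delta\bigr)\circ\bigl(I\otimes\tilde\Delta^{(k-2)}\bigr) \;=\; I\otimes\bigl[(I^{\otimes k-2}\otimes\tilde\Delta)\circ\tilde\Delta^{(k-2)}\bigr] \;=\; I\otimes\tilde\Delta^{(k-1)},
\end{equation*}
where the last step is again the definition of $\tilde\Delta^{(k-1)}$. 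Substituting back concludes the induction.

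The step that requires a bit of care — and which I would expect to be the only place where a careless reader could slip — is the recognition that applying $\tilde\Delta$ in the $(k-1)$-th slot commutes with applying $\tilde\Delta^{(k-2)}$ in the last $k-2$ slots of the first tensor factor of $\tilde\Delta(x)$, because these two operations act on disjoint tensor factors. Once this bookkeeping is performed via the bifunctoriality identity, the result falls out immediately, and no further use of coassociativity beyond Proposition \ref{red_coprod} is needed.
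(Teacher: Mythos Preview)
Your proof is correct and follows essentially the same induction as the paper: apply the induction hypothesis to rewrite $\tilde\Delta^{(k-1)}$ (resp.\ $\tilde\Delta^{(k)}$), then use bifunctoriality $(f\otimes g)\circ(h\otimes l)=(f\circ h)\otimes(g\circ l)$ to peel off a leading $I\otimes{-}$ and recognize the definition of the next iterated coproduct. Note that coassociativity is in fact not needed anywhere in this argument (nor does the paper invoke it), and your closing informal description of which ``slots'' the maps act on is slightly garbled, but the displayed equations are correct.
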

\begin{proof}
 We prove the second equality of the lemma by induction on $k$. For $k=2$ it is trivial since $\tilde\Delta^{(1)}=\tilde\Delta$. Then, if the lemma is true for a given $k$ we have
 \begin{align*}
  \tilde\Delta^{(k+1)} & = \left(I^{\otimes k}\otimes\tilde\Delta\right)\circ\tilde\Delta^{(k)} \qquad \text{by definition} \\
                       & = \left(I^{\otimes k}\otimes\tilde\Delta\right)\circ\left(I\otimes\tilde\Delta^{(k-1)}\right)\circ\tilde\Delta \qquad \text{by induction} \\
                       & = \left[I\otimes\left(\left(I^{\otimes k-1}\otimes\tilde\Delta\right)\circ\tilde\Delta^{(k-1)}\right)\right]\circ\tilde\Delta \\
                       & = \left(I\otimes\tilde\Delta^{(k)}\right)\circ\tilde\Delta.
 \end{align*}
\end{proof}
The formula \eqref{pow_alpha} is now trivial to show by induction:
\begin{align*}
 \alpha^{*k+1} & = m\left(\alpha\otimes\alpha^{*k}\right)\circ\tilde\Delta \\
               & = \left(m^{\otimes k}\right)\left(\alpha^{\otimes k-1}\right)\left(I\otimes\tilde\Delta^{(k)}\right)\circ\tilde\Delta \qquad \text{by induction} \\
               & = \left(m^{\otimes k}\right)\left(\alpha^{\otimes k-1}\right)\circ\tilde\Delta^{(k+1)}.
\end{align*}
Now, we are able to prove some essential corollaries of the proposition \ref{red_coprod} and of the formula \eqref{pow_alpha}.
\begin{coro} \label{van_pow_alpha}
 Let $\alpha\in\mathcal{L}(H;A)$ such that $\alpha(1)=0$. Then, $\forall n \le k-1$, $\alpha^{*k}$ vanishes on $H_n$.
\end{coro}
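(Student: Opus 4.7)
The plan is to combine the two results just established: the explicit formula \eqref{pow_alpha} expressing $\alpha^{*k}$ in terms of $\tilde\Delta^{(k)}$, and Proposition \ref{van_coprod} which ensures vanishing of $\tilde\Delta^{(k)}$ on elements of low enough degree. Together, they give the statement almost immediately, so the work is really just bookkeeping on degrees and handling the $H_0$ component.

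Concretely, I would first deal with the edge case $n=0$. Since $H$ is graded and connected, $H_0 = \mathbb{K}\cdot 1$, so it suffices to show $\alpha^{*k}(1)=0$ for all $k\ge 1$. This follows by a direct induction on $k$: the base case is the hypothesis $\alpha(1)=0$, and for the inductive step, using $\Delta(1)=1\otimes 1$ (established in the proof of Proposition \ref{red_coprod}),
\begin{equation*}
\alpha^{*(k+1)}(1) = m_A\circ(\alpha\otimes \alpha^{*k})\circ\Delta(1) = \alpha(1)\cdot \alpha^{*k}(1) = 0.
\end{equation*}

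For the main case, take $x\in H_n$ with $1\le n\le k-1$. Formula \eqref{pow_alpha} gives
\begin{equation*}
\alpha^{*k}(x) = \bigl(m_A^{\otimes(k-1)}\bigr)\bigl(\alpha^{\otimes k}\bigr)\circ \tilde\Delta^{(k)}(x).
\end{equation*}
The assumption $n\le k-1$ is exactly $k\ge n+1$, so Proposition \ref{van_coprod} yields $\tilde\Delta^{(k)}(x)=0$, and hence $\alpha^{*k}(x)=0$. Extending by linearity over the whole of $H_n$ closes the argument.

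There is no real obstacle; the only point requiring a bit of care is making sure that the use of \eqref{pow_alpha} is legitimate for every $x\in H_n$ with $n\ge 1$ — but this is fine precisely because $\alpha(1)=0$ (which is how the reduced coproduct $\tilde\Delta$ can replace $\Delta$ in the derivation of \eqref{pow_alpha}), and we have handled $H_0$ by hand above.
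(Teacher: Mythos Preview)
Your proof is correct and follows the same approach as the paper, which simply states that the result is a direct consequence of formula \eqref{pow_alpha} and Proposition \ref{van_coprod}. You have merely made the bookkeeping (including the $H_0$ case) more explicit than the paper does.
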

\begin{proof}
 This result is a direct consequence of formula \eqref{pow_alpha} and proposition \ref{van_coprod}.
\end{proof}
\begin{coro} \label{clef_colo}
 Any graded connected bialgebra can be endowed with a Hopf algebra structure.
\end{coro}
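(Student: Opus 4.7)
The plan is to construct the antipode as a convolution inverse of the identity map, exploiting the nilpotency captured by Corollary \ref{van_pow_alpha}. First, recall that the convolution product $*_H$ makes $\mathcal{L}(H,H)$ into an associative unital algebra, with unit $u \circ \varepsilon$ by Lemma \ref{id_conv}; associativity follows from the coassociativity of $\Delta$. Therefore, constructing the antipode amounts to producing a two-sided convolution inverse of $I \in \mathcal{L}(H,H)$.

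The key trick is to write $I = u\circ\varepsilon + \alpha$, where $\alpha := I - u\circ\varepsilon$. Then $\alpha(1) = 1 - u(\varepsilon(1)) = 1 - 1 = 0$, so $\alpha$ satisfies the hypothesis of Corollary \ref{van_pow_alpha}. Consequently, $\alpha^{*k}$ vanishes on $H_n$ for every $k \geq n+1$, which means the formal geometric series
\begin{equation*}
 S := \sum_{k=0}^{+\infty}(-1)^k\alpha^{*k}
\end{equation*}
is well defined: evaluated on any $x \in H_n$, only finitely many (at most $n+1$) terms contribute.

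It remains to verify that $S$ is indeed the two-sided convolution inverse of $I$. A direct telescoping computation using the associativity of $*_H$ gives
\begin{equation*}
 I *_H S = (u\circ\varepsilon + \alpha) *_H \sum_{k=0}^{+\infty}(-1)^k\alpha^{*k} = \sum_{k=0}^{+\infty}(-1)^k\alpha^{*k} + \sum_{k=0}^{+\infty}(-1)^k\alpha^{*(k+1)} = u\circ\varepsilon,
\end{equation*}
and symmetrically for $S *_H I$; all rearrangements are legitimate because on each $H_n$ everything reduces to a finite sum. Translating back, this is exactly the commutativity of the diagram in Figure \ref{hopf}, so $S$ is the required antipode.

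The main potential obstacle would have been well-definedness of the inverse on an infinite-dimensional bialgebra, but that is precisely what the grading plus connectedness buys us through Corollary \ref{van_pow_alpha}: the reduced coproduct $\tilde\Delta$ strictly lowers no degree but strictly splits degree among at least two factors of positive degree, forcing $\tilde\Delta^{(k)}$ to vanish on $H_n$ for $k>n$ and hence truncating the geometric series. One may equivalently present the construction recursively, setting $S(1)=1$ and $S(x) = -x - \sum S(x')x''$ for $x\in H_n$ with $\tilde\Delta(x)=\sum x'\otimes x''$; uniqueness of convolution inverses in the monoid $\mathcal{L}(H,H)$ then guarantees the two-sided property without extra work.
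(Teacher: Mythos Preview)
Your proof is correct and follows essentially the same approach as the paper: decompose $I = u\circ\varepsilon + \alpha$ with $\alpha(1)=0$, then define the antipode via the geometric series $S = \sum_{k\ge 0}(-1)^k\alpha^{*k}$, which is locally finite by Corollary~\ref{van_pow_alpha}. Your version is in fact more thorough, since you explicitly verify the telescoping that shows $I*_H S = u\circ\varepsilon$ and note the equivalent recursive description.
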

\begin{proof}
 Let $H$ be a graded connected bialgebra. As noticed earlier, the antipode can be seen as the inverse for the convolution product on $\mathcal{L}(H;H)$ of the identity $I$ of the product $m$.
 Hence we just have to check that such an object can be defined in $H$. We can write the identity
 \begin{equation}
  I = u\circ \epsilon + \alpha
 \end{equation}
 with $\alpha(1)=0$. Then
 \begin{equation*}
  I^{*-1}(x) = (u\circ \epsilon + \alpha)^{*-1}(x) = \sum_{k=0}^{+\infty}(-1)^k\alpha^{*k}(x).
 \end{equation*}
 According to corollary \ref{van_pow_alpha} the last series in locally finite, i.e. its number of non-vanishing term depends of $x$ but is always finite. Hence an antipode is well-defined.
\end{proof}
Moreover, the antipode is canonically built by defining $S(1)=1$ and by induction
\begin{equation}
 S(x) = - x - \sum_{(x)}S(x')x'' = - x - \sum_{(x)}x'S(x'').
\end{equation}
Therefore, in the second section of this chapter we will just need to define a graded connected bialgebra structure over Feynman graphs, and its antipode will be given by the above formula. The 
antipode will be a key object in the taming of the combinatorics of the renormalization process.

\subsection{Characters group}

The group of characters plays a central role in the formalism of Connes and Kreimer of the renormalization process and in particular for the construction of the renormalization group equation. 
Hence let us quickly recall the definition of this object. $G$, the characters group of a Hopf algebra $H$ is a subset of $\mathcal{L}(H,\mathbb{C})$. Elements of $G$ are unital algebra morphisms 
from $H$ to $\mathbb{C}$, that is, $\phi(1)=1$ and $\forall h_1,h_2\in\mathcal{H}$,
\begin{equation}
 \forall \phi:\mathcal{H}\rightarrow\mathbb{C}; \quad \phi\in G \Leftrightarrow \phi(h_1h_2) = \phi(h_1)\phi(h_2) = m_{\mathbb{C}}\left(\phi(h_1)\otimes \phi(h_2)\right).
\end{equation}
In the following, we will write $m$ for $m_{\mathbb{C}}$ and $m_H$ for the product of the Hopf algebra $H$. Now, the product of the convolution group is the usual convolution product of 
$\mathcal{L}(H,\mathbb{C})$:
\begin{equation*}
 \phi_1*\phi_2 = m(\phi_1\otimes\phi_2)\circ\Delta
\end{equation*}
whose identity element is $u\circ\varepsilon$ (the composition of the unit and counit of $H$) according to lemma \ref{id_conv}.
\begin{propo}
 $(G,*)$ is a group with $\phi^{*-1}=\phi\circ S$, $\forall\phi\in G$.
\end{propo}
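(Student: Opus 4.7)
The plan is to verify the four group axioms for $(G, *)$ in order: closure, associativity, identity, and inverse.

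First I would check that $G$ is stable under the convolution product. Given two characters $\phi_1, \phi_2 \in G$ and $h_1, h_2 \in H$, the compatibility of the algebra and coalgebra structures in a bialgebra (the commutative diagram of figure \ref{bialg}) tells us that $\Delta$ is an algebra morphism, so $\Delta(h_1 h_2) = \Delta(h_1)\Delta(h_2)$. Applying $m(\phi_1 \otimes \phi_2)$ and using the character property of $\phi_1, \phi_2$ together with commutativity of $\mathbb{C}$ yields $(\phi_1 * \phi_2)(h_1 h_2) = (\phi_1 * \phi_2)(h_1)(\phi_1 * \phi_2)(h_2)$. Unitality is immediate from $\Delta(1) = 1 \otimes 1$. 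Associativity of $*$ is a direct consequence of the coassociativity of $\Delta$ and the associativity of multiplication in $\mathbb{C}$. The identity element $u \circ \varepsilon$ is itself a character (since $\varepsilon$ is a unital algebra morphism by the bialgebra axioms), and by Lemma \ref{id_conv} it acts as the neutral element for $*$.

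The substantive part is the inverse claim. I would first show that $\phi \circ S$ belongs to $G$, which requires the standard fact that the antipode of a Hopf algebra is an algebra antimorphism, i.e.\ $S(h_1 h_2) = S(h_2) S(h_1)$. Commutativity of $\mathbb{C}$ then absorbs the order reversal, giving $(\phi \circ S)(h_1 h_2) = (\phi \circ S)(h_1)(\phi \circ S)(h_2)$, and $S(1) = 1$ handles unitality. Next I would verify the inverse identity by the chain
\begin{equation*}
\phi * (\phi \circ S) = m \circ (\phi \otimes \phi) \circ (I \otimes S) \circ \Delta = \phi \circ m_H \circ (I \otimes S) \circ \Delta = \phi \circ u \circ \varepsilon = u \circ \varepsilon,
\end{equation*}
where the second equality uses that $\phi$ is an algebra morphism, the third is precisely the defining axiom of the antipode (figure \ref{hopf}), and the last uses $\phi(1) = 1$. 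The computation for $(\phi \circ S) * \phi = u \circ \varepsilon$ is entirely symmetric, invoking the other branch of the antipode diagram.

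The only step that is not purely formal manipulation of the diagrams already displayed in the excerpt is the anti-multiplicativity of $S$. If a self-contained argument were demanded, I would prove it by showing that both $m_H \circ (S \otimes S) \circ \tau$ and $S \circ m_H$ are convolution inverses of $m_H$ inside $\mathcal{L}(H \otimes H, H)$ (equipped with the tensor-product Hopf structure), and then invoke uniqueness of the inverse in the convolution monoid, which holds thanks to the graded connected setting of Corollary \ref{clef_colo}. This is the only place where a nontrivial lemma is needed; everything else is a mechanical unpacking of the definitions.
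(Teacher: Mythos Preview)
Your proof follows essentially the same route as the paper's: associativity from coassociativity of $\Delta$, closure from the bialgebra compatibility plus commutativity of $\mathbb{C}$, and the inverse identity via the chain $\phi*(\phi\circ S) = \phi\circ m_H\circ(I\otimes S)\circ\Delta = \phi\circ u\circ\varepsilon = u\circ\varepsilon$ using the antipode axiom. You are in fact more thorough than the paper on one point: the paper never verifies that $\phi\circ S$ itself lies in $G$, whereas you correctly invoke the anti-multiplicativity of $S$ (and commutativity of $\mathbb{C}$) to close this gap.
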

\begin{proof}
 We will use here some elements of the book \cite{Majid}. We have already shown that $u\circ\varepsilon$ is the identity element of this convolution product. The associativity comes from the 
 coassociativity of $\Delta$:
 \begin{align*}
  (\phi_1*\phi_2)*\phi_3 & = m\left[m(\phi_1\otimes\phi_2)\circ\Delta\otimes\phi_3\right]\circ\Delta \\
                         & = m(m\otimes I)(\phi_1\otimes\phi_2\otimes\phi_3)(\Delta\otimes I)\circ\Delta \qquad \text{from the associativity of } m \\
                         & = m(I\otimes m)(\phi_1\otimes\phi_2\otimes\phi_3)(I\otimes\Delta)\circ\Delta \qquad \text{from the coassociativity of }\Delta \\
                         & = m\left(\phi_1\otimes(\phi_2*\phi_3)\right)\circ\Delta \\
                         & = \phi_1*(\phi_2*\phi_3).
 \end{align*}
 Let us notice that this is true even if the $\phi$s are not characters. Now, for the inverse we have from a direct computation
 \begin{align*}
  (\phi*(\phi\circ S))(x) & = m\left(\phi\otimes(\phi\circ S)\right)\circ\Delta(x) \\
                          & = m(\phi\otimes\phi)\circ(I\otimes S)\circ\Delta(x) \\
                          & = \phi\left[m(I\otimes S)\circ\Delta(x)\right] \qquad \text{since $\phi$ is a character} \\
                          & = \phi(u\circ\varepsilon(x)) \qquad \text{from the axiom of a Hopf algebra} \\
                          & = u\circ\varepsilon(x).
 \end{align*}
 The last line comes from the fact that $\phi(1_H)=1$, $\phi(0_H)=0$ and that $u\circ\varepsilon(x)=x$ if $x\in H_0$ and zero otherwise. Finally, for the closure we have
 \begin{align*}
  (\phi_1*\phi_2)(h_1h_2) & = m(\phi_1\otimes\phi_2)\circ\Delta(m_H(h_1\otimes h_2)) \\
                          & = m(\phi_1\otimes\phi_2)\circ\left[(m_H\otimes m_H)\circ\tau_{23}\left(\Delta(h_1)\otimes\Delta(h_2)\right)\right]
 \end{align*}
 from the axiom of a bialgebra. Now, writing $\Delta(h_1)\otimes\Delta(h_2)=\sum\sum h_1'\otimes h_1''\otimes h_2'\otimes h_2''$ and using the fact that $\phi_1$ and $\phi_2$ are characters we 
 arrive to
 \begin{equation*}
  (\phi_1*\phi_2)(h_1h_2) = \sum\sum \phi_1(h_1')\phi_1(h_2')\phi_2(h_1'')\phi_2(h_2'').
 \end{equation*}
 Changing the order of the products (since the usual product on $\mathbb{C}$ is commutative) we can separate the sums to have
 \begin{equation*}
  (\phi_1*\phi_2)(h_1h_2) = \sum\phi_1(h_1')\phi_2(h_1'')\sum\phi_1(h_2')\phi_2(h_2'') = (\phi_1*\phi_2)(h_1)(\phi_1*\phi_2)(h_2)
 \end{equation*}
 with the product on $\mathbb{C}$ in the rightmost term. Therefore $(\phi_1*\phi_2)$ is a character.
\end{proof}
This ends the preliminaries on Hopf algebras. Let us move to one specific Hopf algebra: the Hopf algebra of renormalization. 

\section{Connes--Kreimer Hopf algebra}

\subsection{Elements of graph theory} \label{graph_theory}

It is not common, in physics literature, to give the definitions of what graphs (and subgraphs, and so on\dots) are. Indeed, it is believed that the intuitive picture is enough: a graph is a bunch 
of vertices and edges (or lines, in physicists' language) joining them. However, we will have to be slightly more precise here in order to build the Hopf algebra of Feynman graphs. In the original 
approach \cite{Kr98} the Hopf algebra of renormalization is seen as a Hopf algebra of decorated rooted trees. This approach imposes to define Feynman graphs as decorated rooted trees (or to map 
Feynman graph to decorated rooted trees). It simplifies the definition of Feynman graphs and makes clear the combinatorial consequence of the Hopf algebra structure, known as the Forest formula 
\cite{BoPa57,He66,Zi69}.

Here, once again, we will rather follow the approach of \cite{Ma04} which presents the advantage of being closer to the physics language. We freely adapted the following definitions from 
\cite{Godsil} but they are quite standard. Thus a (non-planar, non-oriented) graph $\Gamma$ is a doublet of sets $\left(V(\Gamma);e(\Gamma)\right)$ with $V(\Gamma)$ the vertices and $e(\Gamma)$ 
the edges of $\Gamma$. The edges can be external or internal: $e(\Gamma)=\left(E(\Gamma);I(\Gamma)\right)$. Internal edges are unordered pairs of vertices (that are then said to be neighbors) and 
external edges are composed of one vertex. For $x,y\in V(\Gamma)$ we will write $(xy)$ for an element of $I(\Gamma)$ that joins $x$ and $y$ and $(x)$ for an element of $E(\Gamma)$ attached to $x$.

For $x,y\in V(\Gamma)$, a path from $x$ to $y$ is a sequence of vertices starting with $x$ and ending with $y$ such that consecutive vertices in that sequence are neighbors. A graph is said to be 
connected if there is a path between any two vertices. Moreover, given a connected graph, if it stays connected after the removal of any internal lines, it is said to be 1PI (one particle irreducible). 
If the connected components of a graph $\gamma$ are all 1PI, then $\gamma$ is said to be locally 1PI. There is a way to 
give a precise definition of 1PI and locally 1PI graphs in the language of graph theory but we would need to define the deleting operation of an edge, and this would be quite lengthy for a very natural and common 
notion.

Now, a subgraph $\gamma$ of $\Gamma$ is a graph such that
\begin{itemize}
 \item[$\bullet$] $V(\gamma)\subseteq V(\Gamma)$,
 \item[$\bullet$] $e=(xy)\in E(\gamma)\Rightarrow x,y\in V(\gamma)$,
 \item[$\bullet$] $i=(x)\in I(\gamma)\Rightarrow x\in V(\gamma)$.
\end{itemize}
Those three points are really just a complicated way of saying that a subgraph is exactly what we expect it to be. The shortened notation for ``$\gamma$ is a subgraph of $\Gamma$'' is 
$\gamma\subseteq\Gamma$. If in that case $V(\gamma)\neq\emptyset$ and $\gamma\neq\Gamma$ we say that $\gamma$ is a proper subgraph of $\Gamma$, and we write $\gamma\subset\Gamma$. Now, let 
$P\subset V(\Gamma)$. Then $\Gamma(P)$ is the subgraph of $\Gamma$ such that:
\begin{itemize}
 \item[$\bullet$] $V(\Gamma(P))=P$,
 \item[$\bullet$] $I(\Gamma(P))=\{e=(xy)\in I(\Gamma)|x,y\in P\}$,
 \item[$\bullet$] $E(\Gamma(P))=\{e=(xy)\in I(\Gamma),e'=(x')\in E(\Gamma)| x\in P, y\notin P,x'\in P\}$.
\end{itemize}
Hence, intuitively speaking, $\Gamma(P)$ is the ``maximal'' subgraph of $\Gamma$ having the set $P$ for vertices.

If $V(\Gamma)$ can be written as a disjoint union:
\begin{equation}
 V(\Gamma) = P_1\amalg...\amalg P_n
\end{equation}
such that $\Gamma(P_j)$ is connected $\forall j$ then we have a covering $\gamma$ of $\Gamma$:
\begin{equation}
 \gamma = \Gamma(P_1)\amalg...\amalg\Gamma(P_n).
\end{equation}
Obviously, there is in general, more than one covering of a given graph.

Now, the residue $\res(\Gamma)$ of a connected graph $\Gamma$ is the graph with one vertex and $E\left(\res(\Gamma)\right)=E(\Gamma)$. Let $\gamma\subseteq\Gamma$. The contracted graph $\Gamma/\gamma$ is 
the graph obtained by replacing in $\Gamma$ all the connected components of $\gamma$ by their residues. Then it is clear that $\res(\Gamma)=\Gamma/\Gamma$.

Now, we have to define Feynman graphs, which are graphs representing a given interaction process within a physical process. This is nicely detailed in \cite{Ma04} and we will only give the key 
ideas here. In a given physical theory $\tau$ there might be more than one type of particles, therefore we have to label the edges (by non-zero integers) of our Feynman graph to keep track of which 
particle goes on each edge. The number $N_{\tau}$ of possible values of the labels of the edges depend of the theory $\tau$. 

Now, given a vertex $v$, let $T(v)=(n_1;\dots;n_{N_{\tau}})$ its type, with $n_j$ the number of edges of type $j$ attached to $v$. If $e$ is a self-loop attached to $v$, it has to be counted 
twice. In other words, the sum of the $n_j$s of a given vertex has to be equal to the number of half-edges attached to it.
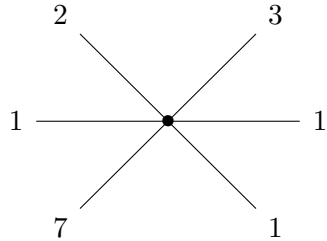
\begin{figure}[h]
 \begin{center}
  \begin{tikzpicture}[shorten >=1pt,node distance=2cm]

  \node (1) at (0,0) {$\bullet$};
  \node (2) [left of=1] {1};
  \node (3) [above left of=1] {2};
  \node (4) [above right of=1] {3};
  \node (5) [right of=1] {1};
  \node (6) [below right of=1] {1};
  \node (7) [below left of=1] {7};

  \draw (0,0)--(2);
  \draw (0,0)--(3);
  \draw (0,0)--(4);
  \draw (0,0)--(5);
  \draw (0,0)--(6);
  \draw (0,0)--(7);

  \end{tikzpicture}
 \caption{A vertex $v$ of valency $6$ with specified edges.}
 \end{center}
\end{figure}

In the above example, if $N_{\tau}=8$, the vertex $v$ is of type $T(v)=(3;1;1;0;0;0;1;0)$. Two half-edges labelled $1$ might belong to the same edge: it is not seen by the type of the vertex.

Finally, a physical theory also specifies the authorized vertices. That is, given a theory $\tau$, we have an number $N_{\tau}$ of edges, a number $v_{\tau}$ and a set
\begin{equation}
 \mathcal{T} = \{\mathcal{T}_1\dots\mathcal{T}_{v_{\tau}}\}
\end{equation}
with the $\mathcal{T}_j$s being sequences of integers. Then the authorized vertices of the theory $\tau$ are the vertices of type within $\mathcal{T}$. Those graphs will be called the Feynman 
graphs of the theory and the vector space spanned by all 1PI Feynman graphs of the theory $\tau$ will be written $V_{\tau}$. Notice that the notions of residue, subgraph, contraction and so on 
have clear generalizations for Feynman graphs.

\subsection{The Hopf algebra structure of Feynman graphs}

There are obvious gradings on Feynman graph. But in physics the perturbative calculations in quantum field theories are made by expanding in the loop number. Therefore we will take the loop number
for grading. Let $\Gamma$ be a connected graph. Then its loop number is
\begin{equation} \label{loop_def}
 l(\Gamma) = |I(\Gamma)| - |V(\Gamma)| + 1.
\end{equation}
Now in physics we also want to allow graphs like \prop~ and their disjoint unions. Such graphs do not fit in our 
presentation of graph theory and, from the formula \eqref{loop_def} their loop number would be $0-0+1=1$. There is some ways to get rid of this difficulty: 
for example we can add vertices of valency two, one for each possible line. However, this darkens the interpretation 
of the vertices to be the possible interactions of the theory, in particular for massless theories. Therefore 
we will rather allow $V(\Gamma)$ to be an empty set and define in this case the loop number to be:
\begin{equation}
 l(\Gamma) = k(\Gamma)-|E(\Gamma)|/2.
\end{equation}
$k(\Gamma)$ is the number of connected components of $\Gamma$ and $|.|$ stands for the cardinality. This last $l(\Gamma)$ is always an integer since if a graph has no vertices it has to 
have an even number of external edges. If $\Gamma=\Gamma_1\amalg\dots\amalg\Gamma_k$ with each $\Gamma_k$ connected its loop number is defined as
\begin{equation}
 l(\Gamma) = \sum_{i=1}^kl(\Gamma_i).
\end{equation}
Now, let $\mathcal{B}_{\tau}$ be the free commutative algebra spanned by $V_{\tau}$. Its product is the disjoint union of graphs which have the empty graph as identity element. In the following, 
the empty graph will therefore be written $1_H$ to be coherent with the notation of the first section (since there will be a Hopf algebra structure soon). Its unit is defined by 
$u(\lambda)=\lambda.1_H$.

The coalgebra structure is not much more complicated (at least when you already know it). The counit is defined by $\varepsilon(1_H)=1$ and $\varepsilon(\Gamma)=0$ if $\Gamma\neq1_H$. The 
crucial step is the coproduct, given for 1PI graphs by
\begin{subequations}
 \begin{align}
  & \Delta(\Gamma) = \sum_{\substack{\gamma\subseteq\Gamma \\ \gamma\text{ locally 1PI} \\ \Gamma/\gamma\in V_{\tau}}}\gamma\otimes\Gamma/\gamma \qquad \text{if }l(\Gamma)\neq0 \\
  & \Delta(\Gamma) = \Gamma\otimes\Gamma \qquad \text{if }l(\Gamma)=0.
 \end{align}
\end{subequations}
Let us notice that the coproduct clearly closes on 1PI graphs. Indeed, if $\Gamma$ then, for any $\gamma\subseteq\Gamma$, $\Gamma/\gamma$ is 1PI. Moreover, 
it is obvious that the coproduct, as the product, respects the grading since $l(\Gamma/\gamma)=l(\Gamma)-l(\gamma)$. Moreover we have the following wonderful proposition
\begin{propo} \cite{Kr97,Ma04} 
 $\Delta$ is coassociative.
\end{propo}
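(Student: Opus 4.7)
The plan is to prove coassociativity by the standard strategy of showing that both iterated coproducts are indexed by the same combinatorial set, namely nested pairs of admissible subgraphs. Writing out definitions, on the one side
\begin{equation*}
(\Delta \otimes I)\Delta(\Gamma) = \sum_{\gamma \subseteq \Gamma} \Delta(\gamma) \otimes \Gamma/\gamma = \sum_{\gamma \subseteq \Gamma} \sum_{\gamma' \subseteq \gamma} \gamma' \otimes \gamma/\gamma' \otimes \Gamma/\gamma,
\end{equation*}
while on the other
\begin{equation*}
(I \otimes \Delta)\Delta(\Gamma) = \sum_{\gamma \subseteq \Gamma} \gamma \otimes \Delta(\Gamma/\gamma) = \sum_{\gamma \subseteq \Gamma} \sum_{\delta \subseteq \Gamma/\gamma} \gamma \otimes \delta \otimes (\Gamma/\gamma)/\delta,
\end{equation*}
where each sum is restricted to subgraphs that are locally 1PI with residue in the allowed vertex set $\mathcal{T}$. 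First I would establish the key bijection between the indexing sets: to a nested pair $\gamma' \subseteq \gamma \subseteq \Gamma$ appearing on the left I associate the pair $(\gamma', \gamma/\gamma')$, where $\gamma/\gamma' \subseteq \Gamma/\gamma'$. Conversely, to a pair $(\gamma, \delta)$ with $\delta \subseteq \Gamma/\gamma$ I associate $\gamma' := \gamma$ and $\gamma'' := \pi^{-1}(\delta)$, where $\pi : \Gamma \to \Gamma/\gamma$ is the contraction map; the preimage is obtained by reinserting into each contracted vertex of $\delta$ the corresponding connected component of $\gamma$.

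Next I would verify that this correspondence is well-defined at the level of the three tensor factors: the outer factor $\Gamma/\gamma$ on the left equals $(\Gamma/\gamma')/(\gamma/\gamma')$ on the right by the elementary fact that contracting in two stages coincides with contracting the bigger subgraph, and similarly the middle factor matches. The bookkeeping is purely set-theoretic on vertex and edge sets, using the definitions of $\Gamma(P)$ and of contraction given in Section~\ref{graph_theory}.

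The real content is then checking that the admissibility conditions are preserved by the bijection. I would argue in three steps: (i) if $\gamma' \subseteq \gamma$ are both locally 1PI subgraphs of $\Gamma$, then $\gamma/\gamma'$ is a locally 1PI subgraph of $\Gamma/\gamma'$, since contracting a locally 1PI subgraph of a 1PI graph preserves 1PI-ness on each connected component; (ii) the residues condition transfers correctly because $\res(\Gamma/\gamma) = \res(\Gamma)$ and $\res(\gamma/\gamma') = \res(\gamma)$, so being an allowed vertex of the theory is insensitive to the intermediate contraction; (iii) conversely, given an admissible $\delta \subseteq \Gamma/\gamma$, the preimage $\pi^{-1}(\delta)$ is locally 1PI in $\Gamma$ and contains $\gamma$ as a locally 1PI subgraph with admissible residues.

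The main obstacle is precisely this last verification: one must be careful that reinserting connected components of $\gamma$ into the vertices of $\delta$ yields subgraphs whose every connected component is 1PI, and that no spurious choice is introduced by this reinsertion (the map $\pi^{-1}$ on subgraphs is well defined only because each vertex of $\Gamma/\gamma$ is either a vertex of $\Gamma \setminus V(\gamma)$ or the residue of one definite connected component of $\gamma$). Once this bijection between triples is established and both sides are reindexed accordingly, the equality $(\Delta \otimes I)\Delta(\Gamma) = (I \otimes \Delta)\Delta(\Gamma)$ is immediate. The degenerate case $l(\Gamma) = 0$ is trivial since $\Delta(\Gamma) = \Gamma \otimes \Gamma$ and both iterates yield $\Gamma \otimes \Gamma \otimes \Gamma$.
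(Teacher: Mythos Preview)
Your approach matches the paper's: expand both iterated coproducts and set up a bijection between nested pairs $\gamma' \subseteq \gamma \subseteq \Gamma$ on the left and pairs $(\gamma,\delta)$ with $\delta \subseteq \Gamma/\gamma$ on the right, invoking the transitivity $(\Gamma/\gamma')/(\gamma/\gamma') = \Gamma/\gamma$. The paper also presents only a sketch (it is labelled ``Elements of proof'' and explicitly defers the cumbersome bookkeeping), and you are in fact more explicit than the paper about the admissibility side-conditions (local 1PI-ness, residue constraints).

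There is one concrete gap in your inverse map. You set $\gamma'' := \pi^{-1}(\delta)$, reinserting components of $\gamma$ only at those residue vertices that actually lie in $\delta$. But $\delta$ need not contain every residue vertex: if $A,B \subset \Gamma$ are disjoint admissible 1PI subgraphs and on the right you take $\gamma = A$, $\delta = B \subseteq \Gamma/A$, then your $\pi^{-1}(B) = B$ does not contain $A$, so the pair $(A,B)$ is not nested and does not index a term on the left. The matching left-hand term is $(\gamma',\gamma) = (A,\, A \amalg B)$. The paper handles precisely this by a case distinction on whether $\tilde\delta$ contains the contracted vertex, taking the larger subgraph to be $\tilde\gamma \amalg \tilde\delta$ when it does not. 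On your side the fix is simply to replace $\pi^{-1}(\delta)$ by $\pi^{-1}(\delta) \cup \gamma$, or equivalently to phrase the correspondence as: admissible subgraphs of $\Gamma/\gamma$ are in bijection with admissible subgraphs of $\Gamma$ that contain $\gamma$.
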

\emph{Elements of proof.}
 The proof is trivial for graphs without loops. If $l(\Gamma)>0$, we directly compute
 \begin{align*}
  (\Delta\otimes I)\Delta(\Gamma) & = \sum_{\substack{\gamma\subseteq\Gamma \\ \Gamma/\gamma\in V_{\tau}}}\sum_{\substack{\delta\subseteq\gamma \\ \gamma/\delta\in V_{\tau}}}\delta\otimes\gamma/\delta\otimes\Gamma/\gamma \\
  (I\otimes \Delta)\Delta(\Gamma) & = \sum_{\substack{\tilde\gamma\subseteq\Gamma \\ \Gamma/\tilde\gamma\in V_{\tau}}}\sum_{\substack{\tilde\delta\subseteq(\Gamma/\tilde\gamma) \\ (\Gamma/\tilde\gamma)/\tilde\delta\in V_{\tau}}}\tilde\gamma\otimes\tilde\delta\otimes(\Gamma/\tilde\gamma)/\tilde\delta.
 \end{align*}
 Proving rigorously that those two sums are equal is quite cumbersome. Instead we will explain why it is true. First for a given graph $\Gamma$, take $\gamma\subseteq\Gamma$ and 
 $\delta\subseteq\gamma$. Then we have an element of the first sum, let us check that an element of the second sum is equal to it. Since $\delta\subseteq\Gamma$ we can take $\tilde\gamma=\delta$. 
 Then $\tilde\delta$ has to be a subgraph of $\Gamma/\delta$. Therefore we can take $\tilde\delta=\gamma/\delta$. Now, we just have to check that 
 $\Gamma/\gamma=(\Gamma/\tilde\gamma)/\tilde\delta=(\Gamma/\delta)/(\gamma/\delta)$. This is true by transitivity of the contraction of graphs.

 Now, let us take $\tilde\gamma\subseteq\Gamma$ and $\tilde\delta\subseteq(\Gamma/\tilde\gamma)$, i.e., an element of the second sum. The only subtlety here is that we have to distinguish 
 between two cases. If $\tilde\delta$ contains the vertex issued from the shrinking of $\tilde\gamma$ in $\Gamma$, we find an element of the first sum equal to 
 $\tilde\gamma\otimes\tilde\delta\otimes(\Gamma/\tilde\gamma)/\tilde\delta$ with the same procedure than we did in the first case. If $\tilde\delta$ does not contains this vertex then it can be 
 seen as an subgraph of $\Gamma$ instead of a subgraph of $\Gamma/\tilde\gamma$. Then we can take $\gamma=\tilde\gamma\amalg\tilde\delta$ and $\delta=\tilde\gamma$ and everything works well.

 As a concluding remark of this incomplete proof, I would advise the reader struggling with this discussion to draw the graphs in a set-theory style to convince himself of the veracity of the 
 above discussion
\begin{flushright}
 $\blacksquare$
\end{flushright}


Now we can extend to $\mathcal{B}_{\tau}$ this coproduct by multiplicativity, i.e., with the property \ref{bialg}. Hence we trivially have
\begin{coro}
 $(\mathcal{B}_{\tau};\amalg;u;\Delta;\varepsilon)$ is a graded bialgebra.
\end{coro}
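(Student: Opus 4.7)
The plan is to combine the preceding proposition with the multiplicativity of the coproduct to lift every axiom from 1PI generators to the whole of $\mathcal{B}_{\tau}$. Since the algebra structure $(\mathcal{B}_{\tau},\amalg,u)$ is by construction that of the free commutative algebra over $V_{\tau}$, the bulk of the work lies on the coalgebra side and on the compatibility, plus a check that everything respects the loop-number grading.

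First I would extend the coassociativity. The preceding proposition establishes $(\Delta\otimes I)\circ\Delta = (I\otimes\Delta)\circ\Delta$ on each 1PI graph $\Gamma$. Since $\Delta$ is defined on the whole of $\mathcal{B}_{\tau}$ by multiplicativity, both maps $(\Delta\otimes I)\circ\Delta$ and $(I\otimes\Delta)\circ\Delta$ are algebra morphisms $\mathcal{B}_{\tau}\longrightarrow \mathcal{B}_{\tau}^{\otimes 3}$; two such morphisms that coincide on a generating set (the 1PI graphs together with $1_H$, using $\Delta(1_H)=1_H\otimes 1_H$) coincide everywhere, hence coassociativity holds on $\mathcal{B}_{\tau}$.

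Next I would verify the counit axiom. For a 1PI graph $\Gamma$ with $l(\Gamma)>0$, the only term of $\Delta(\Gamma)=\sum\gamma\otimes\Gamma/\gamma$ on which $\varepsilon\otimes I$ does not vanish is the one with $\gamma=1_H$, which returns $\Gamma$; similarly $I\otimes\varepsilon$ picks out the term with $\Gamma/\gamma=1_H$. Multiplicativity of $\Delta$ together with the fact that $\varepsilon$ is an algebra morphism (which is immediate: $\varepsilon(\Gamma_1\amalg\Gamma_2)=\varepsilon(\Gamma_1)\varepsilon(\Gamma_2)$ since both sides vanish unless both factors are $1_H$) then propagates the counit property to arbitrary disjoint unions. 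The bialgebra compatibility encoded in figure \ref{bialg} is likewise built in: $\Delta$ is an algebra morphism by construction, so the compatibility square commutes tautologically on generators and therefore everywhere.

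Finally, the grading. The product respects it since $l(\Gamma_1\amalg\Gamma_2)=l(\Gamma_1)+l(\Gamma_2)$, giving $H_k\cdot H_l\subseteq H_{k+l}$. For the coproduct, the key identity $l(\Gamma)=l(\gamma)+l(\Gamma/\gamma)$ (already noted in the text after the coproduct definition) ensures that $\Delta(H_n)\subseteq\bigoplus_{k+l=n}H_k\otimes H_l$ on 1PI generators, and multiplicativity preserves this on disjoint unions. The main subtlety is the treatment of the degree zero graphs: the residue of a non-vacuum connected graph is a single-vertex graph of loop number zero rather than $1_H$, so the bialgebra obtained here is graded but not connected; this is precisely what motivates the subsequent quotient by the ideal generated by $(\Gamma-1_H)$ for degree-zero graphs in order to reach a genuine (connected) Hopf algebra, but is not an obstruction to the bialgebra statement itself.
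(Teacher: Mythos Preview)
Your argument is essentially the paper's own proof, only spelled out in full: the paper dispatches the corollary in a single line, ``We have just proven the coassociativity property, and all the other are true by construction.'' Your lifting of coassociativity from 1PI generators to all of $\mathcal{B}_{\tau}$ via the ``two algebra morphisms agreeing on generators'' argument, the explicit grading check through $l(\Gamma)=l(\gamma)+l(\Gamma/\gamma)$, and the closing remark on non-connectedness are precisely what the paper's ``by construction'' is meant to encode.

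One small caution on your counit verification: in $\mathcal{B}_{\tau}$ \emph{before} the quotient, $\Gamma/\gamma$ is never literally $1_H$ for a graph with external legs (at best it is the residue $\mathrm{res}(\Gamma)$, a degree-zero graph distinct from the empty graph), so the phrase ``$I\otimes\varepsilon$ picks out the term with $\Gamma/\gamma=1_H$'' is not quite accurate at this stage. You yourself flag this subtlety in your final paragraph; the paper glosses over it in exactly the same way, and it is resolved only after passing to $H_{\tau}$.
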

\begin{proof}
 We have just proven the coassociativity property, and all the other are true by construction.
\end{proof}
However this bialgebra is not connected since the subalgebra of degree $0$ is spanned by all loopless graphs (called tree graphs in the physics literature). Let $\mathcal{I}$ be the ideal 
generated by $1-\gamma$ for $l(\gamma)=0$. We define
\begin{equation}
 H_{\tau} = \mathcal{B}_{\tau}/\mathcal{I}.
\end{equation}
We define the coproduct on $H_{\tau}$ as on $\mathcal{B}_{\tau}$: for 1PI graphs first and then we extend it to $H_{\tau}$ by multiplicativity. Notice that now $\res(\Gamma)\simeq1$ and therefore 
the subgraph $\gamma=\Gamma$ will always appear in the coproduct formula. Hence if we separate the subgraphs $\gamma=\Gamma$ and $\gamma=\emptyset$ from the others we arrive to the celebrated Kreimer's formula:
\begin{subequations}
 \begin{align}
  & \Delta(\Gamma) = 1\otimes\Gamma + \Gamma\otimes1 + \sum_{\substack{\gamma\subset\Gamma \\ \Gamma/\gamma\in V_{\tau}}}\gamma\otimes\Gamma/\gamma \qquad \text{if }l(\Gamma)\neq0 \\
  & \Delta(1) = 1\otimes1 \qquad \text{if }l(\Gamma)=0
 \end{align}
\end{subequations}
since now $1$ is the only loopless graph. Then, from \ref{clef_colo} we know that we can build the antipode inductively:
\begin{equation}
 S(1) = 1, \qquad S(\Gamma) = -\Gamma -  \sum_{\substack{\gamma\subset\Gamma \\ \Gamma/\gamma\in V_{\tau}}}S(\gamma)\Gamma/\gamma.
\end{equation}
One may ask why the Hopf algebra structure is so interesting, and if we could not have stood with the bialgebra structure. It will become clear in a moment that the combinatorics of the 
renormalization procedure are actually encoded into the antipode. In other words: the antipode tells us how to renormalize graphs with overlapping divergences. Therefore, an important part of the 
power of this approach of renormalization lies in the Hopf algebra structure and losing the informations about tree graphs is a small price to pay to tame the dreadful combinatorics of 
renormalization.

\subsection{Birkhoff decomposition}

In physics, we want to compute numbers. Thus we want to have a way to evaluate the Feynman graphs. This is done with the famous Feynman rules, which are maps from $H$ to a target algebra $A$. Before studying 
them in particular, let us present another algebraic concept: the Birkhoff decomposition.

A choice of a renormalization scheme is the choice of a decomposition of the algebra $A$ into two parts:
\begin{equation}
 A = A_-\oplus A_+.
\end{equation}
$A_+$ will be the renormalized part of $A$, and $1_A\in A_+$. In the following, we will work with the dimensional regularization. Let $\varepsilon$ be the complex parameter that measure the difference between the 
space-time dimension and an integer number. It is the same symbol for the counit, but which one is being used shall be clear from the context. Then we will take for $A$ the Laurent series:
\begin{equation}
 A = \mathbb{C}\left.\left[1/\varepsilon,\varepsilon\right]\right].
\end{equation}
Let us notice than this is a commutative algebra. It stays so even is the coefficients are functions of some physical parameters, such as external impulsions, masses, and so on. Then the renormalized observables will be evaluated in the positive part of $A$:
\begin{equation}
 A_- = \frac{1}{\varepsilon}\mathbb{C}\left[1/\varepsilon\right]; \qquad A_+ = \mathbb{C}\left[\left[\varepsilon\right]\right].
\end{equation}
Let $\Pi:A\longrightarrow A_-$ be the projection on $A_-$ parallel to $A_+$.
\begin{lemm} \label{pourquoi} \cite{Kr98}
 $\Pi$ is a Rota--Baxter operator, that is $\forall x,y\in A$:
 \begin{equation}
  \Pi(xy) + \Pi(x)\Pi(y) = \Pi\left(\Pi(x)y+x\Pi(y)\right).
 \end{equation}
\end{lemm}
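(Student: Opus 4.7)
The plan is to unravel the Rota–Baxter identity by decomposing each element according to the direct sum $A = A_+ \oplus A_-$ and exploiting the fact that both summands are actually subalgebras of $A$.

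First, I would record the one structural observation that does all the work: both $A_+ = \mathbb{C}[[\varepsilon]]$ and $A_- = \frac{1}{\varepsilon}\mathbb{C}[1/\varepsilon]$ are closed under the product of $A$. For $A_+$ this is clear since a product of two formal series in $\varepsilon$ is again such a series; for $A_-$ it follows because the product of $\varepsilon^{-n}$ and $\varepsilon^{-m}$ is $\varepsilon^{-(n+m)}$ with $n+m \ge 2 \ge 1$, so the result still contains only strictly negative powers. Consequently $\Pi(x_+ y_+) = 0$ and $\Pi(x_- y_-) = x_- y_-$ whenever $x_\pm, y_\pm \in A_\pm$.

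Next I would write any $x, y \in A$ uniquely as $x = x_+ + x_-$, $y = y_+ + y_-$ with $x_\pm, y_\pm \in A_\pm$, so that $\Pi(x) = x_-$ and $\Pi(y) = y_-$. Expanding $xy$ into its four pieces and applying $\Pi$ term by term, the two observations above kill the $x_+ y_+$ contribution and leave $x_- y_-$ untouched, giving
\begin{equation*}
\Pi(xy) = \Pi(x_+ y_-) + \Pi(x_- y_+) + x_- y_-.
\end{equation*}
Since $\Pi(x)\Pi(y) = x_- y_-$, the left-hand side of the Rota–Baxter identity becomes $\Pi(x_+ y_-) + \Pi(x_- y_+) + 2 x_- y_-$.

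For the right-hand side I would compute similarly: $\Pi(x)y = x_- y_+ + x_- y_-$ and $x\,\Pi(y) = x_+ y_- + x_- y_-$. Applying $\Pi$ (again using that $x_- y_-$ already lies in $A_-$) yields $\Pi(\Pi(x)y) = \Pi(x_- y_+) + x_- y_-$ and $\Pi(x\,\Pi(y)) = \Pi(x_+ y_-) + x_- y_-$, whose sum matches the left-hand side exactly. There is no real obstacle here; the only thing one must be careful about is to notice the crucial asymmetric fact that $A_-$ consists of strictly negative powers, so the product of two elements of $A_-$ stays in $A_-$ and not merely in $A_- \oplus \mathbb{C}$, which is what guarantees that $A_-$ is a subalgebra and hence that $\Pi(x_- y_-) = x_- y_-$.
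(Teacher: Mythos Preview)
Your proof is correct and follows essentially the same route as the paper's: decompose $x$ and $y$ along $A = A_+ \oplus A_-$, use that both summands are subalgebras (so $\Pi(x_+y_+)=0$ and $\Pi(x_-y_-)=x_-y_-$), and compare the two sides term by term. The only cosmetic difference is that the paper bundles $\Pi(x)y + x\Pi(y)$ together before projecting (invoking commutativity of $A$), whereas you project each piece separately; your extra remark on why $A_-$ is genuinely closed under multiplication is a welcome clarification.
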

\begin{proof}
 We decompose $x$ and $y$ into their $A_+$ and $A_-$ parts: $x=x_++x_-$ and $y=y_++y_-$. Then
 \begin{align*}
  \Pi(xy) & = \Pi(x_+y_- + x_-y_+ + x_+y_+ + x_-y_-) \\
          & = \Pi(x_+y_- + x_-y_+) + x_-y_- \\
          & = \Pi(x_+y_- + x_-y_+) + \Pi(x)\Pi(y),
 \end{align*}
 where we have used $x_+y_+\in A_+$ and $x_-y_-\in A_-$, which is obvious from the definitions of $A_+$ and $A_-$. On the other hand, we have
 \begin{equation*}
  \Pi(x)y + \Pi(y)x = x_-(y_-+y_+) + y_-(x_++x_-) = x_-y_+ + y_-x_+ + 2x_-y_-,
 \end{equation*}
 thanks to the commutativity of $A$. Thus, using $\Pi(x_-y_-)=x_-y_-=\Pi(x)\Pi(y)$ we have
 \begin{equation*}
  \Pi\left(\Pi(x)y + \Pi(y)x\right) = \Pi(x_-y_+ + y_-x_+) + 2\Pi(x)\Pi(y) = \Pi(xy) + \Pi(x)\Pi(y).
 \end{equation*}
\end{proof}
Before stating the main theorem of this section, let us recall that $H$, the Hopf algebra of renormalization (we dropped the subscript $\tau$ for readability), is filtered:
\begin{equation*}
 H = \bigcup_{n=0}^{+\infty} H^n
\end{equation*}
with
\begin{equation*}
 H^n :=\bigoplus_{i=0}^nH_i.
\end{equation*}
With $H_i$ the set spanned by 1PI graphs with loop number $i$. Moreover we have identified all the loopless graphs to the empty graph, making $H$ to a connected Hopf algebra.
These two points will be needed for the proof of the following theorem. Let us also precise that $\sum_{(\Gamma)}$ will be a shorthand notation for
\begin{equation*}
 \sum_{\substack{\gamma\subset\Gamma \\ \gamma\neq\emptyset}}.
\end{equation*}  
\begin{thm} \cite{CoKr99,Ma04}
 Let $\phi\in\mathcal{L}(H,A)$ then
 \begin{enumerate}
  \item we can write
  \begin{equation}
   \phi = \phi^{-1}_-*\phi_+
  \end{equation}
  with the inverse being for the convolution product on $\phi\in\mathcal{L}(H,A)$ and
  \begin{align*}
   \phi_- & : H\longrightarrow A_-\oplus k.1_A \\
   \phi_+ & : H\longrightarrow A_+,
  \end{align*}
  $\phi_-(1_H) = \phi_+(1_H) = 1_A$ and, for $\Gamma\in\text{Ker}(\varepsilon)$ (with $\varepsilon$ the counit):
  \begin{subequations}
   \begin{align}
    \phi_-(\Gamma) & = -\Pi\left(\phi(\Gamma) + \sum_{(\Gamma)}\phi_-(\gamma)\phi(\Gamma/\gamma)\right) \\
    \phi_+(\Gamma) & = (I_A - \Pi)\left(\phi(\Gamma) + \sum_{(\Gamma)}\phi_-(\gamma)\phi(\Gamma/\gamma)\right).
   \end{align}
  \end{subequations}
  \item If $\phi$ is a character, then $\phi_-$ and $\phi_+$ are two characters as well.
 \end{enumerate}
\end{thm}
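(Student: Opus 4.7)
The plan is to prove both assertions by induction on the loop-number filtration of $H$, using the shorthand
\begin{equation*}
\bar\phi(\Gamma) := \phi(\Gamma) + \sum_{(\Gamma)}\phi_-(\gamma)\phi(\Gamma/\gamma),
\end{equation*}
so that the recursive formulas read $\phi_-(\Gamma) = -\Pi(\bar\phi(\Gamma))$ and $\phi_+(\Gamma) = (I_A - \Pi)(\bar\phi(\Gamma))$ on $\mathrm{Ker}(\varepsilon)$. First I would observe that the recursion is well-posed: since the reduced sum $\sum_{(\Gamma)}$ ranges over proper nonempty subgraphs $\gamma$ such that $\Gamma/\gamma$ has loop number at least $1$, the grading of the coproduct forces $l(\gamma) < l(\Gamma)$. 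Thus $\phi_\pm$ is determined on all of $H$, and the image conditions $\phi_-(H) \subset A_- \oplus \mathbb{K}.1_A$ and $\phi_+(H) \subset A_+$ follow immediately from $\mathrm{Im}(\Pi) = A_-$, $\mathrm{Im}(I_A - \Pi) = A_+$, and the prescribed values on $1_H$.

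For part 1, I would directly verify $\phi_- * \phi = \phi_+$. Unfolding the convolution with Kreimer's coproduct gives
\begin{equation*}
(\phi_- * \phi)(\Gamma) = \phi(\Gamma) + \phi_-(\Gamma) + \sum_{(\Gamma)} \phi_-(\gamma)\phi(\Gamma/\gamma) = \phi_-(\Gamma) + \bar\phi(\Gamma),
\end{equation*}
and by construction $\phi_+(\Gamma) - \phi_-(\Gamma) = ((I_A-\Pi) - (-\Pi))(\bar\phi(\Gamma)) = \bar\phi(\Gamma)$, so $(\phi_-*\phi)(\Gamma) = \phi_+(\Gamma)$. Hence $\phi = \phi_-^{-1} * \phi_+$ in the convolution algebra $\mathcal{L}(H,A)$.

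For part 2, the strategy is to show first that $\phi_-$ is a character; then $\phi_+ = \phi_- * \phi$ is automatically a character, since the characters form a subgroup of the convolution group. I would argue by induction on $l(\Gamma_1) + l(\Gamma_2)$ to show $\phi_-(\Gamma_1\Gamma_2) = \phi_-(\Gamma_1)\phi_-(\Gamma_2)$. The crucial algebraic step is the identity
\begin{equation*}
\bar\phi(\Gamma_1\Gamma_2) = xy - x\Pi(y) - \Pi(x) y, \qquad x := \bar\phi(\Gamma_1),\ y := \bar\phi(\Gamma_2).
\end{equation*}
To derive it, I would introduce the unrestricted sum $S_i := \sum_{\gamma_i \subseteq \Gamma_i}\phi_-(\gamma_i)\phi(\Gamma_i/\gamma_i)$ (now allowing $\gamma_i \in \{\emptyset,\Gamma_i\}$), which splits as $S_i = \bar\phi(\Gamma_i) + \phi_-(\Gamma_i)$. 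Using the multiplicativity of $\phi$ and the inductive multiplicativity of $\phi_-$ on subgraphs of strictly smaller total loop number, one identifies $S_1 S_2 = \bar\phi(\Gamma_1\Gamma_2) + \phi_-(\Gamma_1)\phi_-(\Gamma_2)$, which rearranges to the displayed identity.

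Applying $-\Pi$ to both sides yields
\begin{equation*}
\phi_-(\Gamma_1\Gamma_2) = -\Pi(xy) + \Pi(x\Pi(y)) + \Pi(\Pi(x)y),
\end{equation*}
and the Rota--Baxter identity from Lemma \ref{pourquoi} collapses the right-hand side to $\Pi(x)\Pi(y) = \phi_-(\Gamma_1)\phi_-(\Gamma_2)$, completing the induction. The hard part will be the combinatorial bookkeeping behind the factorization $S_1 S_2 = \bar\phi(\Gamma_1\Gamma_2) + \phi_-(\Gamma_1)\phi_-(\Gamma_2)$: one must carefully separate the boundary terms $(\gamma_1,\gamma_2) \in \{(\emptyset,\emptyset),(\Gamma_1,\Gamma_2)\}$ from the proper nonempty subgraphs of $\Gamma_1\Gamma_2$ and use bialgebra multiplicativity $\Delta(\Gamma_1\Gamma_2) = \Delta(\Gamma_1)\Delta(\Gamma_2)$ to match both sides. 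Once this combinatorial identity is in place, the Rota--Baxter step is immediate and the theorem follows.
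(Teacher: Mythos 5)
Your proposal is correct and follows essentially the same route as the paper: a direct unfolding of Kreimer's coproduct for the factorization $\phi=\phi_-^{-1}*\phi_+$, then an induction on the grading combined with the Rota--Baxter identity of Lemma \ref{pourquoi} for multiplicativity of $\phi_-$, with the same combinatorial decomposition of the subgraphs of $\Gamma_1\amalg\Gamma_2$ into pairs $(\gamma_1,\gamma_2)$ that the paper enumerates explicitly. The only difference is cosmetic packaging (your $S_i=\phi_+(\Gamma_i)$ bookkeeping versus the paper's direct expansion of $\Pi(a)\Pi(b)$), and your remaining ``hard part'' is precisely the set-theoretic identity the paper writes out.
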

\begin{proof}
 The first point is easy to prove by direct computation. If $\Gamma=1_H=\emptyset$ the assertion is trivially true. For $\Gamma\in\text{Ker}(\varepsilon)$:
 \begin{equation*}
  \phi_+(\Gamma) = \phi(\Gamma) + \sum_{(\Gamma)}\phi_-(\gamma)\phi(\Gamma/\gamma) + \phi_-(\Gamma).
 \end{equation*}
 Thus, still for $\Gamma\in\text{Ker}(\varepsilon)$
 \begin{align*}
  (\phi_-*\phi)(\Gamma) & = m_A(\phi_-\otimes\phi)\Delta(\Gamma) \\
                        & = \phi_-(1)\phi(\Gamma) + \phi_-(\Gamma)\phi(1) + \sum_{(\Gamma)}\phi(\gamma)\phi(\Gamma/\gamma) \qquad \text{by Kreimer's formula} \\
                        & = \phi_+(\Gamma).
 \end{align*}
 We have used $\phi_-(1)=\phi(1)=1_A$ to get to the last line. The second point is more intricate. We prove it by induction on the degree of the elements of $H$. The initialization is 
 trivial. Let us assume that $\forall x,y\in H:|x|+|y|\leq n$ we have
 \begin{equation*}
  \phi_-(xy) = \phi_-(x)\phi_-(y).
 \end{equation*}
 Then, let us take two elements $x,y\in H$ such that $|x|+|y|=n+1$. We have
 \begin{equation*}
  \phi_-(x)\phi_-(y) = \Pi(a)\Pi(b)
 \end{equation*}
 with $a$ and $b$ defined by
 \begin{align*}
  a & := \phi(x) + \sum_{(x)}\phi_-(x')\phi(x'') \\
  b & := \phi(y) + \sum_{(y)}\phi_-(y')\phi(y'').
 \end{align*}
 We use Sweedler's notation to keep the intermediate results within reasonable size. Then, using the fact that $\Pi$ is a Rota--Baxter operator we have
 \begin{align*}
  \phi_-(x)\phi_-(y) & = -\Pi(ab) + \Pi(\Pi(a)b) + \Pi(\Pi(b)a) \\
                     & = -\Pi(ab+\phi_-(x)b + \phi_-(y)a) \qquad\text{since }\Pi(a)=-\phi_-(x); \Pi(b)=-\phi_-(y) \\
                     & = -\Pi\left(\left[\phi(x) + \sum_{(x)}\phi_-(x')\phi(x'')\right]\left[\phi(y) + \sum_{(y)}\phi_-(y')\phi(y'')\right]\right. \\ 
 \llcorner + \phi_-(x)&\left.\left[\phi(y) + \sum_{(y)}\phi_-(y')\phi(y'')\right] + \phi_-(y)\left[\phi(x) + \sum_{(x)}\phi_-(x')\phi(x'')\right]\right) \\
                     & = -\Pi\left(\phi(xy) +\phi_-(x)\phi(y) + \phi_-(y)\phi(x) + \sum_{(x)(y)}\phi_-(x'y')\phi(x''y'') \right. \\
 \llcorner + \sum_{(x)}&\left.\left[\phi_-(x')\phi(x''y) + \phi_-(x'y)\phi(x'')\right] + \sum_{(y)}\left[\phi_-(y')\phi(xy'') + \phi_-(xy')\phi(y'')\right]\right).
 \end{align*}
 We have rightfully used the induction hypothesis to get the last line since $|y'|,|y''|<|y|$ and $|x'|,|x''|<|x|$. On the other hand we have
 \begin{equation*}
  \phi_-(xy) = -\Pi\left(\phi(xy) + \sum_{(xy)}\phi_-((xy)')\phi((xy)'')\right).
 \end{equation*}
 Now we just have to clarifies the set over which this last summation is: the set of proper subgraphs of $x\amalg y$. And it is clear that
 \begin{equation*}
  \{\gamma\subset(x\amalg y)\} = \{x'\}\cup\{y'\}\cup\{(x'\amalg y\}\cup\{x\amalg y'\}\cup\{x'\amalg y'\}\cup\{x\}\cup\{y\}
 \end{equation*}
 with $x'$ and $y'$ proper subgraphs of $x$ and $y$ respectively. Moreover, since $(xy)'':=(x\amalg y)/(xy)'$ we have
 \begin{align*}
  \sum_{(xy)}\phi_-((xy)')\phi((xy)'') & = \sum_{(x)}\phi_-(x')\phi(x''y) + \sum_{(y)}\phi_-(y')\phi(xy'') + \sum_{(x)}\phi_-(x'y)\phi(x'') + \sum_{(y)}\phi_-(xy')\phi(y'') \\
                             \llcorner & + \sum_{(x)(y)}\phi_-(x'y')\phi(x''y'') + \phi_-(x)\phi(y) + \phi_-(y)\phi(x).
 \end{align*}
 Hence, $\phi_-$ is a character, since $A$ is commutative. Finally, $\phi_+=\phi_-*\phi$ is a character since the set of characters is a group for the convolution product.
\end{proof}
Before applying this theorem to the Feynman rules, let us precise that we did not give the most general form of the Birkhoff decomposition. We stuck to the form needed for our purpose. 
Moreover, the Birkhoff decomposition preserves the cocycle property, as shown in \cite{Ma04}: we did not either write all the possible results concerning the Birkhoff decomposition.

Finally, we see as advertised in the introduction of this thesis that the renormalized value of a Feynman graph is obtained by subtracting the divergences coming from its subgraphs. Hence, 
for the Green functions (which are series of graphs), the fact that they generate a Hopf subalgebra means that they are renormalizable: if there is one graph $\Gamma$ such that one of the 
graphs generated (say, $\gamma$) by the action of the coproduct on $\Gamma$ is not within this series, then the renormalized value of $\Gamma$ would need to evaluate $\gamma$. But the 
evaluation of $\gamma$ could not be deduced from the initial condition as, inductively, the evaluation of the other graphs 

\subsection{Regularized and renormalized Feynman rules}


As we already said, Feynman rules are morphisms from $H$ to a certain target algebra $A$. They are built from the lagrangian of the theory and the details of this construction can be found in any 
book (but \cite{Peskin} is a good point to start) of QFT and will not be discussed here.

For the example, let us assume that we work in a scalar theory, with cubic interaction, in $6$ space-time dimensions. Then the evaluation of the one-loop diagram $\Gamma$ with two external lines is 
naively the integral
\begin{equation}
 \int\d^6p\frac{1}{p^2(q-p)^2}.
\end{equation}
If we perform the angular integration and study the convergence of the above integral for $|p|\longrightarrow+\infty$ we are left with an integrand proportional to $p$ which is not integrable 
at infinity. This is where the renormalization process comes into the game. It relies on two pillars: first, we regularize the divergent integral, to make it dependent on a unphysical parameter. 
The original (divergent) integral is then found back when we take a certain limit in this parameter. Popular regularization schemes are (in physics) cut-off regularization and dimensional 
regularization. The first consists in integrating $|p|$ not over $\mathbb{R}_+$ but simply over $[0,\Lambda]$. The second is to perform the integration over $D$ dimensions, with $D$ a complex 
number. We will use this last method.

There exist a lot of other regularization schemes, used in some areas of physics (Pauli--Villar regularization, lattice regularization,\dots) and mathematics (analytic regularization, 
zeta regularization,\dots). We will not discuss them here and simply assume that the final result is independent of the choice of the regularization scheme.

The second point of the renormalization procedure is the renormalization itself. The idea it to redefine the physical quantities inside the lagrangian (masses, coupling constants and fields) such 
that the results of the computations are finite. It has already been said in the introduction that we have the right to put under the carpet the divergences of the Feynman integrals because it is a 
consequence of assumptions of analyticity of the Green functions of the theory.

%

Now, by additivity of the physical quantities, the Feynman rules have to be characters of $H$ evaluated in an algebra $A$. We will assume that the regularization has been performed, therefore we 
will take as before $A=\mathbb{C}[1/\varepsilon, \varepsilon]]$. This is a small abuse of language: the coefficients are actually germs of functions of masses, impulsions, and so on\dots When we 
have a loop, we have a power of $1/\varepsilon$ that comes into the game. But the question is: how do we deal with multiloop graphs? And, in particular, with overlapping divergences?

The Birkhoff decomposition gave an answer to this question. If $\phi$ is the unrenormalized Feynman rule then $\phi_+$ in its Birkhoff decomposition is the renormalized Feynman rule. Let us assume 
that we know the evaluation of the renormalized Feynman rule over all the graphs with $l$ loops. Then the evaluation of $\phi_+$ over a graph $\Gamma$ with $l+1$ loops is given by 
\begin{equation}
 \phi_+(\Gamma) = (I_A - \Pi)\left(\phi(\Gamma) + \sum_{(\Gamma)}\phi_-(\gamma)\phi(\Gamma/\gamma)\right).
\end{equation}
We see that the counterterm is $\phi_-(\Gamma)$ which can itself be seen as a twisted antipode. Therefore, the renormalization procedure is really hidden into the Hopf algebraic structure, and the 
way to take care of (maybe overlapping) subdivergences is given by the antipode.

From now on, we will focus our attention on renormalized characters, and we will write them $\left.\phi^R:=\phi_+\right|_{\varepsilon=0}$. Let us just precise that this Hopf algebraic approach does not only 
provide a conceptual framework to tame the combinatorics of renormalization: it also allows to rigorously prove some physically useful results, such as the locality property of the counterterms.

\section{Renormalization Group}

\subsection{Renormalization group equation}

This subsection will be based on two ideas. The first one is to make the Feynman rules depend on a parameter that we will call $L$. This is done by taking the algebra of (germs of) meromorphic 
functions of $L$ as the target algebra $A$\footnote{actually, the target algebra might be more complicated: it can be the space of (germs of) meromorphic functions of all the kinematical invariants and 
the fine structure constants. We will not need this level of sophistication here.}. $L$ will represents the reference impulsion at which the measurements are performed. More precisely, if the exterior impulsion of a two-point 
graph is fixed to $q$  we will have $L=\log(q^2/\mu^2)$. Then we want to know how the Feynman rules change when $L$ changes. Let us write $\phi_L$ the elements of the associated characters group, which 
will itself be called $G$. Then we renormalize with the procedure described above these characters. We will use the short-hand notation $\phi^R_L:=(\phi_L)^R$ for the renormalized characters. The 
second idea is needed to perform this task: we will not directly 
work with the $\phi_L^R$ but rather with functions $\phi_{\varepsilon}$ ($\varepsilon$ the parameter of dimensional regularization). We will closely follow the presentation of \cite{CoKr00}.

First, for $t\in\mathbb{R}$ let us define $\theta_t\in\text{Aut}(H)$ by
\begin{equation}
 \forall\Gamma\in H; \quad \theta_t(\Gamma) := e^{t|\Gamma|}\Gamma.
\end{equation}
Then $\theta_t$ has a natural action on the characters group of $H$ (in the following, we will call this group $G$):
\begin{equation}
 \forall\Gamma\in H;\forall\phi\in G; \quad \theta_t(\phi)(\Gamma) := \phi\left(\theta_t(\Gamma)\right).
\end{equation}
The technical part of the proof of the renormalization group equation is within the following lemma:
\begin{lemm}
 $\forall\phi\in G$, $L_1,L_2\in\mathbb{R}$ we have the following decomposition:
 \begin{equation}
  \theta_{L_1+L_2}(\phi) = \theta_{L_1}(\phi)*\theta_{L_1}\left(\phi^{-1}*\theta_{L_2}(\phi)\right).
 \end{equation}
\end{lemm}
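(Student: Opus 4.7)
The plan is to reduce the identity to the statement that the one-parameter family $\{\theta_t\}_{t\in\mathbb{R}}$ acts on $G$ by group automorphisms and satisfies $\theta_{t+s}=\theta_t\circ\theta_s$; once those two facts are in hand the identity is a direct group-theoretic computation.

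First I would check that $\theta_t:H\to H$ is a Hopf algebra automorphism. It is obviously a linear automorphism of the underlying vector space, and since the product (disjoint union) and coproduct both respect the grading by loop number (the latter because $l(\Gamma/\gamma)=l(\Gamma)-l(\gamma)$, as noted right after Kreimer's formula), the relations $\theta_t\circ m=m\circ(\theta_t\otimes\theta_t)$ and $(\theta_t\otimes\theta_t)\circ\Delta=\Delta\circ\theta_t$ follow by comparing the grading of both sides on a homogeneous element. From this I would deduce that the induced map $\theta_t:G\to G$ defined by $\theta_t(\phi):=\phi\circ\theta_t$ is a group homomorphism for the convolution product: for any $\Gamma\in H$ of degree $n$, using Sweedler's notation $\Delta\Gamma=\sum\gamma\otimes\Gamma/\gamma$,
\begin{equation*}
(\theta_t\phi\ast\theta_t\psi)(\Gamma)=\sum e^{t|\gamma|}\phi(\gamma)\,e^{t|\Gamma/\gamma|}\psi(\Gamma/\gamma)=e^{tn}(\phi\ast\psi)(\Gamma)=\theta_t(\phi\ast\psi)(\Gamma).
\end{equation*}
In particular $\theta_t(\phi^{-1})=\theta_t(\phi)^{-1}$, and the identity $e=u\circ\varepsilon$ of $G$ is fixed by every $\theta_t$.

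Next, the equality $\theta_{t+s}=\theta_t\circ\theta_s$ as automorphisms of $H$ is immediate from $e^{(t+s)n}=e^{tn}e^{sn}$, and it passes to $G$ at once.

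Finally, combining these ingredients, I would simply compute
\begin{equation*}
\theta_{L_1}(\phi)\ast\theta_{L_1}\bigl(\phi^{-1}\ast\theta_{L_2}(\phi)\bigr)=\theta_{L_1}\bigl(\phi\ast\phi^{-1}\ast\theta_{L_2}(\phi)\bigr)=\theta_{L_1}\bigl(\theta_{L_2}(\phi)\bigr)=\theta_{L_1+L_2}(\phi),
\end{equation*}
where the first equality uses that $\theta_{L_1}$ is a group homomorphism, the second uses that $\phi\ast\phi^{-1}=e$ is the neutral element, and the third uses the composition property. There is no real obstacle here: the only point that requires care is to verify that the loop-number grading truly makes $\theta_t$ a Hopf automorphism (in particular that it commutes with $\Delta$), which is where the compatibility between grading and contraction of subgraphs is used. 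Everything else is formal group theory.
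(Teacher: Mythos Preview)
Your proof is correct and rests on the same ingredient as the paper's: the compatibility of the grading with the coproduct, i.e.\ $|\Gamma/\gamma|=|\Gamma|-|\gamma|$, which is what makes $\theta_t$ commute with $\ast$. The paper carries out the computation directly on the right-hand side, expanding the convolution and pulling out the factor $e^{L_1|\Gamma|}$ in one go, whereas you first abstract this step into the cleaner statement that $\theta_t$ acts on $G$ by group automorphisms with $\theta_{t+s}=\theta_t\circ\theta_s$, and then finish by pure group theory; the underlying calculation is identical, just packaged differently.
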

\begin{proof}
 This proof is an interesting exercice to get used to Hopf algebra. It might exist in many textbook but since I have never seen it, I will write it. We brutally 
 compute, $\forall\Gamma\in H$:
 \begin{align*}
    & \left(\theta_{L_1}(\phi)*\theta_{L_1}\left(\phi^{-1}*\theta_{L_2}(\phi)\right)\right)(\Gamma) \\
  = & \sum_{\gamma\subseteq\Gamma}\phi(\theta_{L_1}(\gamma))\left(\phi^{-1}*\theta_{L_2}(\phi)\right)\left(e^{L_1|\Gamma/\gamma|}\Gamma/\gamma\right) \\
  = & e^{L_1|\Gamma|}\sum_{\gamma\subseteq\Gamma}\phi(\gamma)\left(\phi^{-1}*\theta_{L_2}(\phi)\right)(\Gamma/\gamma) \qquad \text{since }|\Gamma/\gamma| = |\Gamma|-|\gamma| \\
  = & e^{L_1|\Gamma|}\left[\phi *\left(\phi^{-1}*\theta_{L_2}(\phi)\right)\right](\Gamma) \\
  = & e^{L_1|\Gamma|}\theta_{L_2}(\phi)(\Gamma) \qquad \text{by associativity of the convolution product} \\
  = & e^{(L_1+L_2)|\Gamma|}\phi(\Gamma) \\
  = & \theta_{L_1+L_2}(\phi)(\Gamma)
 \end{align*}
\end{proof}
Now, let us assume that there is a map $\mathbb{C}\setminus\{0\}\longrightarrow G$ which associates $\phi_{\varepsilon}$ to $\varepsilon$ such that
\begin{equation}
 \phi_L^R = \lim_{\varepsilon\rightarrow0}\phi_{\varepsilon}^{-1}*\theta_{L\varepsilon}(\phi_{\varepsilon}).
\end{equation}
We will assume the existence of the $\phi_{\varepsilon}$, which was proven in \cite{CoKr00}. In this article, an explicit formula was given for $\phi_{\varepsilon}$ in term of $\phi^R_L$ and its derivatives with 
respect to $L$.
\begin{thm} \cite{CoKr00}
 With the above definitions we have
 \begin{equation} \label{rel_Feynm_rules}
  \phi_{L+L'}^R = \phi_L^R*\phi_{L'}^R.
 \end{equation}
\end{thm}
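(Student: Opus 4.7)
The plan is to feed the decomposition from the preceding lemma into the definition of $\phi^R_L$ evaluated at $L+L'$, and then show that each block of the resulting expression converges to the right factor as $\varepsilon \to 0$.

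First I would start from the definition
\begin{equation*}
\phi^R_{L+L'} = \lim_{\varepsilon\to 0} \phi_\varepsilon^{-1} * \theta_{(L+L')\varepsilon}(\phi_\varepsilon)
\end{equation*}
and apply the lemma with $L_1 = L\varepsilon$, $L_2 = L'\varepsilon$. This rewrites $\theta_{(L+L')\varepsilon}(\phi_\varepsilon)$ as $\theta_{L\varepsilon}(\phi_\varepsilon)*\theta_{L\varepsilon}\bigl(\phi_\varepsilon^{-1}*\theta_{L'\varepsilon}(\phi_\varepsilon)\bigr)$. Using the associativity of the convolution product (which was established along the way when showing that $G$ is a group), one regroups the expression under the limit as
\begin{equation*}
\bigl[\phi_\varepsilon^{-1}*\theta_{L\varepsilon}(\phi_\varepsilon)\bigr]*\theta_{L\varepsilon}\bigl[\phi_\varepsilon^{-1}*\theta_{L'\varepsilon}(\phi_\varepsilon)\bigr].
\end{equation*}

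Next I would take the limit $\varepsilon \to 0$. The first bracket is precisely $\phi^R_L$ by definition. For the second bracket, evaluate on a homogeneous element $\Gamma \in H_n$: since $\theta_{L\varepsilon}$ multiplies $\Gamma$ by the scalar $e^{L\varepsilon n}$, one has
\begin{equation*}
\theta_{L\varepsilon}\bigl(\phi_\varepsilon^{-1}*\theta_{L'\varepsilon}(\phi_\varepsilon)\bigr)(\Gamma) = e^{L\varepsilon n}\bigl(\phi_\varepsilon^{-1}*\theta_{L'\varepsilon}(\phi_\varepsilon)\bigr)(\Gamma).
\end{equation*}
As $\varepsilon\to 0$, the prefactor $e^{L\varepsilon n}$ tends to $1$, while the remaining factor tends to $\phi^R_{L'}(\Gamma)$ by definition. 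Thus the second bracket converges to $\phi^R_{L'}$.

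The interchange of the limit with the convolution product is justified because the Hopf algebra $H$ is graded and connected: the reduced coproduct on any $\Gamma \in H_n$ involves only finitely many terms all of total degree $n$, so $(\phi * \psi)(\Gamma)$ is a finite sum of products of values of $\phi$ and $\psi$ on elements of degree at most $n$. Convolution is therefore pointwise continuous, and combining the two limits one obtains $\phi^R_{L+L'} = \phi^R_L * \phi^R_{L'}$, as required. The only delicate point, and the step I expect to be the main technical subtlety, is this pointwise-continuity argument ensuring that the limit in the definition of $\phi^R_L$ really does commute with the convolution product; however, this is immediate once one restricts to a single graded piece, so no genuine obstacle appears.
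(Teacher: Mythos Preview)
Your proof is correct and follows essentially the same route as the paper: apply the decomposition lemma with $L_1=L\varepsilon$, $L_2=L'\varepsilon$, convolve on the left by $\phi_\varepsilon^{-1}$, and use associativity to isolate the two factors, then let $\varepsilon\to 0$ using that $\theta_{L\varepsilon}$ acts by the scalar $e^{L\varepsilon n}\to 1$ on each homogeneous piece. The only difference is presentational: the paper packages the second step as an $\mathcal{O}(\varepsilon)$ estimate, whereas you spell out explicitly why the convolution product is continuous pointwise on a graded connected Hopf algebra; your version is in fact slightly more complete on that point.
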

\begin{proof}
 Since we have assumed $\phi_{\varepsilon}\in G$ we simply use the previous lemma:
 \begin{equation*}
  \phi_{\varepsilon}^{-1}*\theta_{(L+L')\varepsilon}(\phi_{\varepsilon}) = \left(\phi_{\varepsilon}^{-1}*\theta_{L\varepsilon}(\phi_{\varepsilon})\right)*\theta_{L\varepsilon}\left(\phi_{\varepsilon}^{-1}*\theta_{L'\varepsilon}(\phi_{\varepsilon})\right)
 \end{equation*}
 by associativity of the convolution product. Using 
 \begin{equation*}
  \theta_{L\varepsilon}\left(\phi^{-1}*\theta_{L'\varepsilon}(\phi)\right)(\Gamma/\gamma) = e^{L\varepsilon|\Gamma/\gamma|}\left(\phi_{\varepsilon}^{-1}*\theta_{L'\varepsilon}(\phi_{\varepsilon})\right)(\Gamma/\gamma) = \left(\phi_{\varepsilon}^{-1}*\theta_{L'\varepsilon}(\phi_{\varepsilon})\right)(\Gamma/\gamma) + \mathcal{O}(\varepsilon)
 \end{equation*}
 Therefore we have
 \begin{equation}
  \phi_{\varepsilon}^{-1}*\theta_{(L+L')\varepsilon}(\phi_{\varepsilon}) = \left(\phi_{\varepsilon}^{-1}*\theta_{L\varepsilon}(\phi_{\varepsilon})\right)*\left(\phi_{\varepsilon}^{-1}*\theta_{L'\varepsilon}(\phi_{\varepsilon})\right)(\Gamma/\gamma) + \mathcal{O}(\varepsilon)
 \end{equation}
 and we obtain the theorem by taking the limit $\varepsilon\longrightarrow0$.
\end{proof}
We are almost at what we will call the renormalization group equation. Before deriving it as a simple corollary of the theorem, let us make a innocent remark: the theorem has to be corrected if we 
work at $\varepsilon\neq0$.
\begin{coro}
 The renormalized Feynman rules obey the differential equation
 \begin{equation}
  \partial_L\phi_L^R = \left.\partial_L\phi_L^R\right|_{L=0}*\phi_L^R
 \end{equation}
 which we will refer as the renormalization group equation.
\end{coro}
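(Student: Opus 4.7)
The plan is to derive the renormalization group equation as a straightforward consequence of the group-theoretic relation \eqref{rel_Feynm_rules}. First, I would rewrite the theorem as saying that the map $L \mapsto \phi_L^R$ is a morphism from the additive group $(\mathbb{R},+)$ into the convolution group $G$. Since $L + L' = L' + L$, the identity $\phi_{L+L'}^R = \phi_L^R * \phi_{L'}^R$ immediately yields $\phi_L^R * \phi_{L'}^R = \phi_{L'}^R * \phi_L^R$, so the one-parameter family $\{\phi_L^R\}_{L\in\mathbb{R}}$ is a commutative subgroup of $G$, even though $G$ itself need not be commutative.

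Next, I would differentiate the identity $\phi_{L+L'}^R = \phi_{L'}^R * \phi_L^R$ with respect to $L'$ and evaluate at $L' = 0$. The left-hand side is simply $\partial_L \phi_L^R$. For the right-hand side, bilinearity of the convolution product together with the fact that $\phi_L^R$ does not depend on $L'$ gives $(\partial_{L'}\phi_{L'}^R) * \phi_L^R$. Evaluating at $L' = 0$ then yields
\begin{equation*}
\partial_L \phi_L^R = \left.\partial_L \phi_L^R\right|_{L=0} * \phi_L^R,
\end{equation*}
which is precisely the renormalization group equation claimed in the corollary. Alternatively, one could differentiate the symmetric identity $\phi_{L+L'}^R = \phi_L^R * \phi_{L'}^R$ with respect to $L'$; thanks to the commutativity just observed, both orderings produce the same equation.

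The only point requiring a minimum of care is verifying that the derivation $\partial_{L'}$ commutes with the convolution product, which I do not expect to be a genuine obstacle. Unpacking
\begin{equation*}
\left(\phi_L^R * \phi_{L'}^R\right)(\Gamma) = \sum_{\gamma\subseteq \Gamma}\phi_L^R(\gamma)\,\phi_{L'}^R(\Gamma/\gamma),
\end{equation*}
and remarking that for each fixed $\Gamma\in H$ the sum is finite by the grading of $H$, the derivative $\partial_{L'}$ can be taken termwise and gives $\phi_L^R * (\partial_{L'}\phi_{L'}^R)$. Thus all the substantive work has already been done in proving the preceding theorem: the renormalization group equation is essentially the infinitesimal, Lie-algebraic shadow of the one-parameter subgroup structure \eqref{rel_Feynm_rules} on $G$, with $\beta := \partial_L \phi_L^R|_{L=0}$ playing the role of the generating infinitesimal character.
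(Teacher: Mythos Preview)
Your proof is correct and essentially the same as the paper's: both derive the equation by differentiating the one-parameter group law \eqref{rel_Feynm_rules} and specializing the auxiliary variable, and both justify differentiating through the convolution via the finiteness of the coproduct sum (the paper isolates this as a separate Leibniz-rule lemma immediately afterward). The only cosmetic difference is that the paper rewrites the law as $\phi_L^R = \phi_{L-L'}^R * \phi_{L'}^R$, differentiates in $L$, and sets $L'=L$, which avoids invoking commutativity; your route via $\phi_{L+L'}^R = \phi_{L'}^R * \phi_L^R$ is equally valid, and indeed you could bypass the commutativity remark entirely by differentiating the original identity in $L$ and evaluating at $L=0$.
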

\begin{proof}
 We can rewrite \eqref{rel_Feynm_rules} as
 \begin{equation*}
  \phi_{L}^R = \phi_{L-L'}^R*\phi_{L'}^R.
 \end{equation*}
Then taking a derivative with respect to $L$ and evaluating the result at $L'=L$ give the renormalization group equation.
\end{proof}
We have used the Leibniz rule for the convolution product. This was allowed by the following lemma:
\begin{lemm}
 Let $\phi_L$ and $\psi_L$ two families of elements of $G$ depending continuously of the parameter $L$. Then $\partial_L$ obeys the Leibniz rule for 
 the convolution product $*$:
 \begin{equation}
  \partial_L(\phi_L*\psi_L) = (\partial_L\phi_L)*\psi_L + \phi_L*(\partial_L\psi_L).
 \end{equation}
\end{lemm}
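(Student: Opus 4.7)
The plan is to prove this lemma by simply unfolding the definition of the convolution product and reducing the statement to the Leibniz rule for the pointwise multiplication on the target algebra $A$.

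First, I would evaluate both sides on an arbitrary element $h \in H$. Using Sweedler's notation $\Delta(h) = \sum_{(h)} h_1 \otimes h_2$, which is a finite sum for each $h$ (the coproduct lands in $\bigoplus_{k+l = |h|} H_k \otimes H_l$, so it is a finite decomposition), one has
\begin{equation*}
(\phi_L * \psi_L)(h) = m_A \bigl( \phi_L \otimes \psi_L \bigr) \circ \Delta(h) = \sum_{(h)} \phi_L(h_1)\, \psi_L(h_2),
\end{equation*}
where the product on the right-hand side is the multiplication of the target algebra $A$. The key observation is that the coproduct $\Delta(h)$ is independent of $L$, so the $L$-dependence sits entirely in the factors $\phi_L(h_1)$ and $\psi_L(h_2)$, which are elements of $A$.

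Next, I would invoke the fact that on the algebra $A$ of (germs of) meromorphic functions of $L$, the operator $\partial_L$ obeys the ordinary Leibniz rule for the pointwise product, and is linear. Since the sum indexed by $(h)$ is finite, we may interchange $\partial_L$ with the summation, yielding
\begin{equation*}
\partial_L \bigl( (\phi_L * \psi_L)(h) \bigr) = \sum_{(h)} \bigl[ (\partial_L \phi_L)(h_1)\, \psi_L(h_2) + \phi_L(h_1)\, (\partial_L \psi_L)(h_2) \bigr].
\end{equation*}
Recollecting the two groups of terms via the definition of the convolution product gives exactly $\bigl( (\partial_L \phi_L) * \psi_L + \phi_L * (\partial_L \psi_L) \bigr)(h)$, and since $h$ was arbitrary, the lemma follows.

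There is no real obstacle here; the only points requiring a modicum of care are the finiteness of the Sweedler sum (ensured by the grading of $H$) and the fact that $\partial_L \phi_L$ and $\partial_L \psi_L$ still take values in $A$, which is guaranteed by the continuity hypothesis on the families. No special property of the character group is used — the statement is true for any continuously $L$-dependent family of linear maps $H \to A$.
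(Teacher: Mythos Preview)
Your proof is correct and follows essentially the same approach as the paper: unfold the convolution product via Sweedler's notation, then apply the ordinary Leibniz rule for the product in $A$ termwise. Your version is in fact more careful than the paper's one-line computation, explicitly noting the finiteness of the Sweedler sum and the $L$-independence of $\Delta$.
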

\begin{proof}
 By direct computation we have
 \begin{equation*}
  \partial[(\phi_L*\psi_L)(x)] = \partial_L\left(\sum_{(x)}\phi_L(x')\psi_L(x'')\right)  = \sum_{(x)}\left((\partial_L\phi_L)(x')\psi_L(x'') + \phi_L(x')\partial_L\psi_L(x'')\right)
 \end{equation*}
 which is the desired result.
\end{proof}

\subsection{Correlation functions}

Our final goal for this first chapter is to observe the consequences of the renormalization group equation on the correlation functions of the theory. First, we will define those objects and state 
(often without proof) some results. We want to take the time to carefully define the correlation functions since in the physics literature there is quite a big mix-up between correlation 
functions, Green functions and even sometimes propagators.

But first, two useful precisions. From now on, we will work in a theory where there is no need to label the edges and the vertices. Typically this might be the massless Wess--Zumino model or the 
massless scalar model in six dimensions with a cubic interaction. Hence the residue of a graph will be given by a natural number. Moreover, we will work in $\mathcal{B}_{\tau}$ rather than in 
$H_{\tau}$. Indeed, in the details of the computations, we will need to distinguish between the various types of loopless graphs, typically between \vertx~ and \prop.

Now, let us recall that the loop number of a connected graph is
\begin{equation*}
 l(\Gamma) = |I(\Gamma)|-|V(\Gamma)|+1.
\end{equation*}
if $V(\Gamma)\neq\emptyset$ and $l(\Gamma)=k(\Gamma)-|E(\Gamma)|/2$ otherwise. Alternatively, we can use only the first of these formulae and still have the loop number of \prop~ to be zero by defining
%
\begin{equation*}
 I\left(\prop\right) = -1.
\end{equation*}
We will implicitly use this value in some computations below. With this we can define the Green functions as a sum over 1PI graphs
\begin{equation}
 \Gamma_r = \sum_{\substack{\res(\Gamma)=r\\l(\Gamma)\geq0}}\frac{\Gamma}{s(\Gamma)}
\end{equation}
with $s(\Gamma)$ the symmetry factor of $\Gamma$. The inverse of the Green function will be written $G_r:=(\Gamma_r)^{-1}$, where the inverse is the one of formal series. The $G_r$s are what we 
call the correlation functions. We will often expand $\Gamma_r$ (or $G_r$) in term of the loop number:
\begin{equation}
 \Gamma_r = \sum_{l\geq0}\Gamma_{r,l} = \sum_{l\geq0}\sum_{\substack{\res(\Gamma)=r\\l(\Gamma)=l}}\frac{\Gamma}{s(\Gamma)}.
\end{equation}
We will be working in theories with only a cubic interaction therefore, adapting \cite{CoKr00}, we define the effective coupling constant of the theory by
\begin{equation}
 a_{\text{eff}} = \frac{(\Gamma_3)^2}{(\Gamma_2)^3}.
\end{equation}
It is normalized to $1$ at the impulsion of reference $p^2=\mu^2\Leftrightarrow L=0$. This object has the right dimension: $a_{\text{eff}}\alpha g^2$ with $g$ the coefficient of the fully 
renormalized vertex.

Now, since we want to apply the renormalization group equation to the correlation and Green functions we will need to study how the coproduct acts on them. This is a complicated combinatorial 
question that has already been studied. Therefore, we will only state the results. One result of \cite{vSu06} is
\begin{equation}
 \sum_{\substack{\res{\Gamma}=r\\l(\Gamma)=l}}\frac{\Delta(\Gamma)}{s(\Gamma)} = \sum_{k=0}^l\sum_{\substack{\res(\Gamma)=r\\l(\Gamma)=l-k}}\sum_{\substack{\gamma\\l(\gamma)=k}}\frac{\Gamma|\gamma}{s(\gamma)s(\Gamma)}\gamma\otimes\Gamma
\end{equation}
with the sum over $\gamma$ being over 1PI graphs and $\Gamma|\gamma$ being the number of ways to insert $\gamma$ in $\Gamma$. Then, summing over $l$ we see that the sums in the right hand side reduce to a 
sum over $\Gamma$ and a sum over $\gamma$. Moreover, it was shown in \cite{vSu08} that the summed right hand side could be expressed in terms of the $2$ and $3$ points Green function:
\begin{equation}
 \Delta(\Gamma_r) = \sum_{\substack{\res(\Gamma)=r\\l(\Gamma)\geq0}}(\Gamma_3)^{V(\Gamma)}(\Gamma_2)^{-I(\Gamma)}\otimes\frac{\Gamma}{s(\Gamma)}.
\end{equation}
Now, in any graph, each leg is either an external half edge or an internal edge. Moreover, each vertex has valency $3$ and each internal edge is attached to $2$ vertices, hence the number of internal edges 
is linked to the number of vertices and to the residue of a graph through the following formula:
\begin{equation*}
 I(\Gamma) = \frac{1}{2}\left(3V(\Gamma)-r\right).
\end{equation*}
Using this together with the formula \eqref{loop_def} we get
\begin{equation} \label{coprod_Green}
 \Delta(\Gamma_r) = \sum_{\substack{\res(\Gamma)=r\\l(\Gamma)\geq0}}(\Gamma_3)^{r-2}(\Gamma_2)^{-r+3}a_{\text{eff}}^{l(\Gamma)}\otimes\frac{\Gamma}{s(\Gamma)}
\end{equation}
which is the equation needed to write the renormalization group equation on the correlation functions. Notice than this equation also holds for \prop.

\subsection{RGE for the correlation functions}

To derive the renormalization group equation for the correlation functions, we will use a simple property of the graded bialgebras.
\begin{propo} \cite{BeSc08}
 Let $A$ and $B$ be two elements of the completion of a graded bialgebra: 
 \begin{equation*}
  A = \sum_{l\geq0}A_l \qquad B=\sum_{l\geq0}B_l.
 \end{equation*}
 Then the following equivalence holds
 \begin{align}
  \Delta(AB) = \sum_{l\geq0}ABa^l\otimes (AB)_l \Leftrightarrow 
  \begin{cases}
   & \Delta(A) = \sum_{l\geq0}Aa^l\otimes A_l \\
   & \Delta(B) = \sum_{l\geq0}Ba^l\otimes B_l.
  \end{cases}
 \end{align}
for $a$ an element of the graded bialgebra.
\end{propo}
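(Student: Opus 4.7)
The plan is to prove the equivalence by treating each direction in turn, with the reverse implication being a short multiplicativity argument and the forward implication requiring a more delicate inductive separation.

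For $(\Leftarrow)$, assuming both individual formulas, I use that $\Delta$ is an algebra morphism:
\begin{equation*}
\Delta(AB) \;=\; \Delta(A)\Delta(B) \;=\; \sum_{k,m \geq 0} (Aa^k)(Ba^m) \otimes A_k B_m.
\end{equation*}
Reordering the factors in the left slot (legitimate because $a$ plays the role of the scalar coupling, central with respect to $A$ and $B$), this becomes $\sum_{k,m} AB\,a^{k+m} \otimes A_k B_m$, and setting $l = k+m$ yields $\sum_l AB\,a^l \otimes \sum_{k+m=l} A_k B_m$. The compatibility of the grading with the product forces $(AB)_l = \sum_{k+m=l} A_k B_m$, which closes this direction.

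For $(\Rightarrow)$, the strategy is induction on the total loop number. The key observation is that, since the bialgebra is graded, the homogeneous components $A_k$ and $B_m$ of $A$ and $B$ are well-defined, and $\Delta(A)$ admits a unique decomposition into components of prescribed bi-degree in the two tensor factors. I would begin at degree zero, where the statement is immediate from connectedness, and then, at a given total degree $n$, compare the degree-$n$ piece of the right tensor factor on both sides of $\Delta(A)\Delta(B) = \sum_l AB\,a^l \otimes (AB)_l$. Using the inductive hypothesis to fix $\Delta(A)$ and $\Delta(B)$ up to degree $n-1$, this comparison should pin down their degree-$n$ components and, in particular, force the left tensor factor at that degree to be exactly $A\,a^n$ (respectively $B\,a^n$).

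The main obstacle is precisely this separation step in the forward direction: a priori, one could imagine redistributing information between $\Delta(A)$ and $\Delta(B)$ without changing their product, so the argument must crucially exploit the graded structure of the bialgebra and the very restrictive multiplicative form $AB\,a^l \otimes (AB)_l$ assumed on the right-hand side. In practice, identifying at each order the unique choice of $A$-data and $B$-data compatible with the fixed graded decompositions $A = \sum_k A_k$ and $B = \sum_m B_m$ should make this separation unambiguous and complete the induction.
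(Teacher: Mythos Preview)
Your reverse implication $(\Leftarrow)$ is essentially the paper's argument: both use the bialgebra compatibility $\Delta(AB)=\Delta(A)\Delta(B)$, expand the double sum, and regroup by $l=k+m$ using $(AB)_l=\sum_{k+m=l}A_kB_m$. Your explicit remark that reordering $Aa^kBa^m$ to $ABa^{k+m}$ requires $a$ to commute with $A,B$ is a point the paper passes over silently (it is automatic in the intended commutative Hopf algebra of graphs).

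For the forward implication $(\Rightarrow)$, the paper's proof is in fact \emph{briefer} than your proposal: it simply notes that $\Delta(AB)$ equals both $(m\otimes m)\circ\tau_{23}(\Delta A\otimes\Delta B)$ and $(m\otimes m)\circ\tau_{23}\bigl((\sum Aa^l\otimes A_l)\otimes(\sum Ba^m\otimes B_m)\bigr)$, and concludes ``by identification''. You are right to flag this separation step as the genuine obstacle: the map $X\otimes Y\mapsto (m\otimes m)\circ\tau_{23}(X\otimes Y)$ is not injective in general, so ``identification'' is not a proof. Your inductive strategy on the grading is more honest about where the work lies, though you do not carry it out either, and as stated the equivalence in full generality is delicate.

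It is worth noting that in the paper's only application (deducing the coproduct formula for $G_2=\Gamma_2^{-1}$ from that of $\Gamma_2$), one already knows the formula for one factor and the product is $1$; in that situation $\Delta(A)=\Delta(B)^{-1}$ in $H\otimes H$, and checking that the inverse of $\sum_l Ba^l\otimes B_l$ has the form $\sum_l B^{-1}a^l\otimes (B^{-1})_l$ is a direct recursive computation. So the direction you worry about is only needed in this special, tractable case.
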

\begin{proof}
 First let us notice that $(AB)_l = \sum_{n=0}^lA_nB_{l-n}$. Now, let us prove the reverse assertion $\Leftarrow$. From the axiom of the bialgebra structure we have
 \begin{align*}
  \Delta(AB) & = (m\otimes m)\circ \tau_{23}\left(\Delta(A)\otimes\Delta(B)\right) \\
             & = (m\otimes m)\circ \tau_{23}\left(\sum_{l\geq0}\sum_{m\geq0}Aa^l\otimes A_l\otimes Ba^m\otimes B_m\right) \\
             & = \sum_{l\geq0}\sum_{m\geq0}ABa^{l+m}\otimes A_lB_m 
 \end{align*}
 We can reorganize the two sums to write
 \begin{equation*}
  \Delta(AB) = \sum_{n\geq0}\sum_{p=0}^n ABa^n\otimes A_pB_{n-p}.
 \end{equation*}
 Hence the reverse assertion is true. We prove the direct assertion $\Rightarrow$ by going the other way around. Then we have
 \begin{equation*}
  \Delta(AB) = (m\otimes m)\circ \tau_{23}\left(\left(\sum_{l\geq0}Aa^l\otimes A_l\right)\otimes \left(\sum_{m\geq0}Ba^m\otimes B_m\right)\right) = (m\otimes m)\circ \tau_{23}\left(\Delta(A)\otimes\Delta(B)\right)
 \end{equation*}
 and we find the desired result by identification.
\end{proof}
Now we can start to exploit the formula \eqref{coprod_Green} for $r=2$. Since the coproduct of the Hopf algebra preserves the loop number we can expand it with respect to the loop number to get:
\begin{equation}
 \Delta(\Gamma_2) = \sum_{l\geq0}\Gamma_2a_{\text{eff}}^l\otimes\Gamma_{2,l}.
\end{equation}
Since we have shown that the set of elements having the above property is stable by multiplication, we shall have
\begin{equation} \label{RGE_G2_prim}
 \Delta(G_2) = \sum_{l\geq0}G_2a_{\text{eff}}^l\otimes G_{2,l}.
\end{equation}
Notice that $1=G_2\Gamma_2$ has indeed this property. Using this coproduct on the renormalization group equation we get
\begin{align}
 \partial_L\phi_L^R(G_2) & = \sum_{l\geq0}\left.\partial_L\phi_L^R(G_2a_{\text{eff}}^l)\right|_{L=0}\phi_L^R(G_{2,l}) \nonumber \\
                         & = \sum_{l\geq0}\left.\partial_L\left(\phi_L^R(G_2)\phi_L^R(a_{\text{eff}}^l)\right)\right|_{L=0}\phi_L^R(G_{2,l}) \qquad \text{since $\phi_L^R$ is a character} \nonumber \\
                         & = \sum_{l\geq0}(\gamma+l\beta)\phi_L^R(G_{2,l})
\end{align}
with $\gamma:=\partial_L^R\phi_L^R(G_2)|_{L=0}$ and $\beta:=\partial_L\phi_L^R(a_{\text{eff})}|_{L=0}$ the usual gamma and beta functions of the theory. The last line was obtained using the Leibniz rule. 
Finally, let us make one more simplification: writing $a:=\phi_L(a_{\text{eff}})$ we get
\begin{equation*}
 \phi_L^R(G_{r,l}) = \tilde G_{r,l}a^l.
\end{equation*}
Indeed, $a$ comes from the evaluation of two vertices since $a\alpha g^2$ and therefore of a loop. In other words: every loop in a Feynman graph will bring a further power of $a$. This is the 
basis of the standard perturbative approach of quantum field theories. Thus
\begin{equation} \label{l_deriv_a}
 a\partial_a\phi_L^R(G_r) = \phi_L^R\left(\sum_{l\geq0}lG_{r,l}\right).
\end{equation}
and then, since $\sum_{l\geq0}\gamma\phi^R_L(G_{2,l}) = \gamma G_2$ we have
\begin{equation} \label{RGE_G2_gen}
 \partial_L\phi_L^R(G_2) = (\gamma+\beta a\partial_a)\phi_L^R(G_2).
\end{equation}
The procedure for the three-points correlation function is essentially the same. We use the relation \eqref{coprod_Green} with $r=3$. Then the same arguments than the ones for the two-points 
function lead to
\begin{equation*}
 \Delta(\Gamma_3) = \sum_{l\geq0}\Gamma_3a_{\text{eff}}^l\otimes\Gamma_{3,l}.
\end{equation*}
Therefore, as argued before, we shall have
\begin{equation*}
 \Delta(G_3) = \sum_{l\geq0}G_3a_{\text{eff}}^l\otimes G_{3,l}
\end{equation*}
and the renormalization group equation gives 
\begin{equation*}
 \partial_LG_3 = (\gamma_m + \beta a\partial_a)G_3
\end{equation*}
with $\gamma_m:=\partial_L\phi_L^R(G_3)|_{L=0}$, where we have used the relation \eqref{l_deriv_a} once again.

With this final result we have now enough material to start studying one of the most central object of this Ph.D.: the Schwinger--Dyson equation. Before doing so, let me just remind the reader 
that this presentation of the Hopf algebra of renormalization is by no means complete. We hope to have convinced the reader that it is a powerful tool to understand the process of renormalization 
and, more generally speaking, to understand quantum field theories. Moreover, there is more than what has been presented here: very powerful results that the lack of time and space forbade us to 
present here. 

%
%
\chapter{Linear Schwinger--Dyson equations}

 \noindent\hrulefill \\
 {\it
Agile et noble, avec sa jambe de statue. \\
Moi, je buvais, crispé comme un extravagant, \\
Dans son oeil, ciel livide où germe l'ouragan, \\
La douceur qui fascine et le plaisir qui tue. \\

Charles Baudelaire. A une passante.} 

 \noindent\hrulefill
 
 \vspace{1.5cm}

We start our exploration of Schwinger--Dyson equations with a class of equations named ``linear''. With this term, we mean that the equation for the two-point function is an integrodifferential 
equation with an integrand linear in the renormalized propagator. Through this chapter we will use the sign convention of \cite{Itzykson}.

\section{The massless Yukawa model} \label{Yuk_nm}

Let us start with a presentation of some results of \cite{BrKr99}. In this article, the first exact solution to a Schwinger--Dyson equation has been found, and understanding precisely how this 
was done is a crucial step before tackling more complicated equations. 

\subsection{The Schwinger--Dyson equation}

We will study a model containing only a massless fermion $\psi$ represented by a plain line and a massless scalar $\phi$ represented by a dashed line. They are interacting through the usual Yukawa 
lagrangian $\mathcal{L}_{Yuk}=g\bar\psi\phi\psi$: the only authorized vertex of the theory is therefore \vertxyuk. Moreover we will assume that only the fermion gets renormalized. Hence the Schwinger--Dyson equation of the model is
\begin{equation} \label{SDElinYuk}
 \left(
\tikz \node[prop]{} child[grow=east] child[grow=west];
\right)^{-1} = 1 - a \;\;
\begin{tikzpicture}
\draw (-1.2,0)--(-0.8,0);
\draw (0.8,0)--(1.2,0);
\draw[dashed] (-0.8,0) .. controls (-0.8,1) and (0.8,1) .. (0.8,0);
\node at (0,0) [circle,minimum size=6mm,draw,fill=green!30] {} child [grow=east] child[grow=west];
\end{tikzpicture}.
\end{equation}
This is a relevant equation to consider since it generates a sub-Hopf algebra of the Hopf algebra of renormalization. This sub-Hopf algebra is isomorphic to the Hopf algebra of rooted trees. The inverse of the propagator is 
$G(p^2)\slashed{p}$ and therefore we can write \eqref{SDElinYuk} as
\begin{equation} \label{SDE_Yuk_int}
 G(p^2)\slashed{p} = \slashed{p} -\frac{2a}{\pi^2}\int\d^4l\frac{1}{G(l^2)\slashed{l}(p+l)^2} + S\slashed{p} 
\end{equation}
with $S$ a counterterm that will be determined by the initial condition $G(\mu^2)=1$. Using $\slashed{l}^2=l^2$ we can rewrite the integrand in the above equation as
$\frac{\slashed{l}}{G(l^2)l^2(p+l)^2}$. Then, multiplying the two sides of \eqref{SDE_Yuk_int} by $\slashed{p}$, using $\slashed{p}.\slashed{l}=p.l$ and dividing by $p^2$ we arrive to
\begin{equation}
 G(p^2) = 1 - \frac{2a}{\pi^2}\int\frac{\d^4l}{p^2}\frac{1}{G(l^2)l^2}\frac{p.l}{(p+l)^2} + S.
\end{equation}
Now, as usual we want to write the four-dimensional integral as a one-dimensional one. We have to take care of the angular dependence of $\frac{p.l}{(p+l)^2}$. First, let us write
$p.l = [(p+l)^2-p^2-l^2]/2$: we will end up with three integrals. Then we can use the four-dimensional angular average
\begin{equation} \label{angint}
 q^2l^2\left\langle\frac{1}{(q+l)^2}\right\rangle_{d=4} = \text{min}(q^2,l^2).
\end{equation}
So, two of the three integrals that we had will be split into two pieces. Moreover, when performing the angular integration we will have to take care of the volume of the $d-1$-dimensional 
sphere, which is
\begin{equation}
 \text{vol}(S^{d-1}) = \frac{2\pi^{d/2}}{\Gamma(d/2)}.
\end{equation}
Then we have
\begin{align*}
 G(p^2) & = 1 - a\left[\frac{1}{p^2}\int_0^{+\infty}\frac{l\d l}{G(l^2)} - \frac{1}{p^2}\int_0^{p}\frac{l\d l}{G(l^2)} - \int_{p}^{+\infty}\frac{\d l}{lG(l^2)} - \frac{1}{p^4}\int_0^p\frac{l^3\d l}{G(l^2)} - \frac{1}{p^2}\int_p^{+\infty}\d l\frac{l}{G(l^2)}\right] + S \\
        & = 1 + a\left[\int_{p}^{+\infty}\d l\frac{1}{lG(l^2)} + \frac{1}{p^4}\int_0^p\d l\frac{l^3}{G(l^2)}\right] + S.
\end{align*}
We are very aware that we did not take care of the convergence of the integrals popping up in this computation. And, as a matter of fact, they are typically non-convergent. These divergences will 
be cured by the counterterm $S$. Later, we will take a derivative with respect to $p^2$, which would have canceled $S$ and thus all the other divergences if we had taken it earlier. The result 
does not depend on if one takes the derivative at the beginning or the end of the computation, as can be checked by direct computation. We have chosen this approach in order to be closer to the 
original work of \cite{BrKr99}.

Now is the right time to fix $S$. By imposing $G(\mu^2)=1$ we have
\begin{equation*}
 S = -a\left[\int_{\mu}^{+\infty}\d l\frac{1}{lG(l^2)} + \frac{1}{\mu^4}\int_0^{\mu}\d l\frac{l^3}{G(l^2)}\right]
\end{equation*}
and thus
\begin{equation*}
 G(p^2) = 1 + a\left[\frac{1}{p^4}\int_0^p\d l\frac{l^3}{G(l^2)} - \frac{1}{\mu^4}\int_0^{\mu}\d l\frac{l^3}{G(l^2)} - \int_{\mu}^{p}\d l\frac{1}{lG(l^2)}\right].
\end{equation*}
Let us now perform the change of integration variable $y=l^2$. Then we have
\begin{align}
 G(p^2) & = 1 + \frac{a}{2}\left[\frac{1}{p^4}\int_0^{p^2}\d y\frac{y}{G(y)} - \frac{1}{\mu^4}\int_0^{\mu^2}\d y\frac{y}{G(y)} - \int_{\mu^2}^{p^2}\d y\frac{1}{yG(y)}\right] \nonumber \\
        & = 1 - \frac{a}{2}\int_{\mu^2}^{p^2}\d y\frac{1}{yG(y)} + F(p^2) - F(\mu^2) \label{SDE_Yuk_F}
\end{align}
with
\begin{equation*}
 F(x) = \frac{a}{2}\int_{0}^{x}\d y\frac{1}{yG(y)}\left(\frac{y}{x}\right)^2.
\end{equation*}
At this point, a useful remark is
\begin{equation*}
 \frac{\d}{\d x}F(x) = \frac{a}{2}\frac{1}{xG(x)} -\frac{a}{x^3}\int_0^x\d y\frac{y}{G(y)}
\end{equation*}
therefore, taking a derivative with respect to $x:=p^2$ it comes
\begin{equation*}
 \frac{\d}{\d x}G(x) =  -\frac{a}{x^3}\int_0^x\d y\frac{y}{G(y)} \Leftrightarrow x^3\frac{\d}{\d x}G(x) =  -a\int_0^x\d y\frac{y}{G(y)}.
\end{equation*}
Now, defining the differential operator $D=x\frac{\d}{\d x}$ and taking one more derivative with respect to $x$ we arrive to
\begin{align*}
 & \frac{\d}{\d x}(x^2DG(x)) = -\frac{ax}{G(x)} \\
\Leftrightarrow & 2xDG(x) +x^2\frac{\d}{\d x}DG(x) = -\frac{ax}{G(x)}.
\end{align*}
After simplification we finally arrive to
\begin{equation} \label{SDE_Yuk_diff}
 G(x)D(D+2)G = -a.
\end{equation}
Hence, we have mapped an integral equation to a integrodifferential one to finally arrive to a simple differential one. The fact that this is possible is the first important step into 
finding an exact solution of \eqref{SDElinYuk}.

\subsection{Parametric solution}

Going from an integrodifferential equation a differential one is indeed a great simplification, but in this case it is merely a consequence 
of the linearity (in the sense defined in the header of this chapter) of \eqref{SDElinYuk}. The real magic is coming now: as a fact, we can solve \eqref{SDE_Yuk_diff}. Let us start by defining 
the dimensionless variable $z$ and the function $\tilde G$:
\begin{subequations}
 \begin{align}
  & z := \left(\frac{q^2}{\mu^2}\right)^2 \\
  & \tilde{G}(z) := \sqrt{\frac{2}{a}}zG(\mu^2\sqrt{z}).
\end{align}
\end{subequations}
Then the initial condition becomes $\tilde G(1)=\sqrt{2/a}$. Moreover, if we recall $x=q^2/\mu^2$, we have
\begin{equation*}
 D = q^2\frac{\d}{\d q^2} = \mu^2\sqrt{z}2\frac{q^2}{\mu^4}\frac{\d}{\d z} = 2z\frac{\d}{\d z}.
\end{equation*}
Thus \eqref{SDE_Yuk_diff} becomes
\begin{equation*}
 \frac{a}{2}\frac{\tilde G}{z}2z\frac{\d}{\d z}\left(2z\frac{\d}{\d z} + 2\right)\frac{\tilde G}{z} = -a \Leftrightarrow 2\tilde G\frac{\d}{\d z}\left(z\frac{\d}{\d z} + 1\right)\frac{\tilde G}{z} = -1.
\end{equation*}
but
\begin{equation*}
 \left(z\frac{\d}{\d z} + 1\right)\frac{\tilde G}{z} = z\left(\frac{\tilde G'(z)}{z}-\frac{\tilde G(z)}{z^2}\right)+\frac{\tilde G(z)}{z} = \tilde G'(z)
\end{equation*}
with the prime denoting a differentiation with respect to $z$. Then \eqref{SDE_Yuk_diff} is simply
\begin{equation} \label{avant_int}
 2\tilde G(z)\tilde G''(z) = -1.
\end{equation}
Now we are able to integrate this equation to
\begin{equation*}
 2\tilde G'(z)\tilde G''(z) = -\frac{\tilde G'(z)}{\tilde G(z)} \Leftrightarrow \left[\tilde G'(z)\right]^2 = \kappa -\log\tilde G(z)
\end{equation*}
with $\kappa$ a constant. Now, we can write $\tilde G(z)$ in term of the parameter $p:=\tilde G'(z)$:
\begin{equation} \label{sol_step1}
 p^2 = \kappa -\log\tilde G \Leftrightarrow \tilde G = \sqrt{\frac{2}{a}}\exp(p_0^2-p^2)
\end{equation}
where we have used the initial condition $\tilde G(1)=\sqrt{2/a}$ and the parameter $p$ at $z=1$: $p_0:=\tilde G(1)$. Hence we have the first half of a parametric solution: the unknown function is 
now written in term of the parameter $p$. The next step is to relate this parameter to $z$. This was done in \cite{BrKr99} by studying $\tilde\alpha(p) = z/\tilde G$. In order to find a 
differential equation satisfied by $\tilde\alpha$ we first have to find how $z$ and $\tilde G$ vary with $p$. Using directly \eqref{sol_step1} we have
\begin{equation*}
 \frac{\d\tilde G}{\d p} = -2p\tilde G.
\end{equation*}
But, using the chain rule, we can also write
\begin{equation*}
 \frac{\d\tilde G}{\d p} = \frac{\d z}{\d p}\frac{\d\tilde G}{\d z} = \frac{\d z}{\d p}p
\end{equation*}
where we have used the definition of $p$ for the last equality. Therefore, equalizing those two results for the derivative of $\tilde G$ with respect to $p$ one gets
\begin{equation*}
 \frac{\d z}{\d p} = -2\tilde G.
\end{equation*}
Now, we can simply compute
\begin{equation*}
 \frac{\d\tilde\alpha(p)}{\d p} = \frac{\d z}{\d p} -\frac{z}{\tilde G^2}\frac{\d\tilde G}{\d p} = -2 + 2p\tilde\alpha(p).
\end{equation*}
Inverting this relation, we obtain a simple differential equation satisfied by $\tilde\alpha$
\begin{equation} \label{eq_diff_alpha}
 \tilde\alpha = \frac{1}{p} + \frac{1}{2p}\frac{\d\tilde\alpha}{\d p}.
\end{equation}
This equation can be exactly solved to
\begin{equation}
 \tilde\alpha(p) =e^{p^2}\sqrt{\pi}\left(\lambda-\text{erf}(p)\right)
\end{equation}
with erf the error function defined by
\begin{equation*}
 \text{erf}(p) = \frac{2}{\sqrt{\pi}}\int_0^p\d s\exp(-s^2)
\end{equation*}
and $\lambda$ a constant. It can be determined with the initial condition but a simpler way to go is to notice that $\tilde\alpha$ has to be regular at infinity. Indeed, at 
$p\longrightarrow\infty$, the intermediate result \eqref{sol_step1} gives $\tilde G=0$, and therefore $z=0$ since we can assume that $G$ is vanishing nowhere, from the hypothesis of analycity of 
the propagator. Then, writing 
\begin{equation*}
 \tilde\alpha(p) = \sqrt{\frac{a}{2}}\frac{1}{G}
\end{equation*}
we have
\begin{equation*}
 \tilde\alpha \simall{p}{\infty} \sqrt{\frac{a(q=0)}{2}}\frac{1}{G(0)},
\end{equation*}
which is finite. Since $\text{erf}(p)\simall{p}{\infty}1$ we have to take $\lambda=1$ to solve this. Hence we obtain
\begin{equation} \label{sol_alpha}
 \tilde\alpha(p) =e^{p^2}\sqrt{\pi}\text{erfc}(p) = 2\int_p^{+\infty}\d s\exp(p^2-s^2)
\end{equation}
with erfc the complementary error function defined by
\begin{equation*}
 \text{erfc}(p) = 1-\text{erf}(p) = \frac{2}{\sqrt{\pi}}\int_p^{+\infty}\d s\exp(-s^2).
\end{equation*}
In \cite{BrKr99} this result was obtain from an expansion in powers of $1/p$ of $\tilde\alpha$. We have preferred here a slightly different approach in order to make clear that the constant 
stemming from the resolution of the differential equation \eqref{eq_diff_alpha} is not put under the carpet. This was the second step of the parametric solution, and we now just have to put the 
pieces together. First, if we recall
\begin{equation*}
 \tilde\alpha(p) = \frac{z}{\tilde G} = \sqrt{\pi}\exp(p^2)\text{erfc}(p)
\end{equation*}
and $z=(q^2/\mu^2)^2$ it comes
\begin{equation*}
 \left(\frac{q^2}{\mu^2}\right)^2 = \sqrt{\pi}\exp(p^2)\text{erfc}(p)\tilde{G} = \sqrt{\frac{2\pi}{a}}e^{p_0^2}\text{erfc}(p)
\end{equation*}
where we have used \eqref{sol_step1} for $\tilde G$. Now, we can simplify this expression if we remember that $p_0$ is defined as the value of $p$ at $z=1$ and thus
\begin{equation*}
 \tilde\alpha(p_0) = \frac{1}{\tilde G(1)} = \sqrt{\frac{a}{2}} = \sqrt{\pi}e^{p_0^2}\text{erfc}(p_0) \Leftrightarrow e^{p_0^2} = \sqrt{\frac{a}{2\pi}}\frac{1}{\text{erfc}(p_0)}.
\end{equation*}
Then we are left with the beautiful
\begin{equation}
 q^2 = \mu^2\sqrt{\frac{\text{erfc}(p)}{\text{erfc}(p_0)}}.
\end{equation}
To write $G$ in term of $p$ we use the definition of $\tilde G$ together with the previous result $z=\text{erfc}(p)/\text{erfc}(p_0)$ and we get
\begin{equation*}
 \frac{\text{erfc}(p)}{\text{erfc}(p_0)}G(q^2) = \exp(p_0^2-p^2).
\end{equation*}
Using once again the result $e^{p_0^2} = \sqrt{\frac{a}{2\pi}}\frac{1}{\text{erfc}(p_0)}$ found to express $z$ in term $p$ we arrive to the desired result. Altogether, let us write the full 
solution 
\begin{subequations}
 \begin{align}
  G(q^2) = \sqrt{\frac{a}{2\pi}}\frac{e^{-p^2}}{\text{erfc}(p)} \\
  q^2 = \mu^2\sqrt{\frac{\text{erfc}(p)}{\text{erfc}(p_0)}}.
 \end{align}
\end{subequations}
Let us conclude this subsection with the remark that this approach has more to offer than a analytic solution that might not be very convenient for practical purposes. For example, using the 
definition of $p_0$ and $\tilde G$ we have
\begin{equation*}
 p_0 = \tilde G'(1) = \sqrt{\frac{2}{a}}\left.\left(G + z\frac{\d G}{\d z}\right)\right|_{z=1}.
\end{equation*}
Using $z\frac{\d}{\d z} = \frac{1}{2}x\frac{\d}{\d x} = \frac{1}{2}q^2\frac{\d}{\d q^2}$ it comes
\begin{equation*}
 p_0 = \sqrt{\frac{2}{a}}\left.\left(G + \frac{1}{2}q^2\frac{\d G}{\d q^2}\right)\right|_{q^2=\mu^2}.
\end{equation*}
We recognize in this last expression the anomalous dimension $\tilde\gamma:=\left.q^2\frac{\d G}{\d q^2}\right|_{q^2=\mu^2}$ and hence
\begin{equation} \label{p0_Yuk_nm}
 p_0 = \frac{2+\tilde\gamma}{\sqrt{2a}}.
\end{equation}
If we use this result in the asymptotic series found for $\tilde\alpha$ from the equation\eqref{sol_alpha}
\begin{equation*}
 \tilde\alpha(p) \sim \frac{1}{p} + \frac{1}{p}\sum_{n=1}^{+\infty}\frac{(2n-1)!!}{(-2p^2)^n}
\end{equation*}
we obtain an asymptotic series for $\tilde\gamma$ around $a=0$, as stressed in \cite{BrKr99}, which allow efficient computation to high orders of perturbation theory.

\subsection{Lessons from the massless Yukawa model}

Having a exact solution for a given Schwinger--Dyson equation the natural next step is to ask for generalizations. This was done in \cite{Cl14}. Before presenting the results of this paper, let us 
briefly summarize the key points that have made possible the exact resolution of \eqref{SDElinYuk}.

The result of \cite{BrKr99} presented above rests upon three crucial steps. First, the integrodifferential Schwinger--Dyson equation is written as a differential one. As already stressed, this is 
made possible by the linearity of the Schwinger--Dyson equation. Therefore in the following sections of this chapter we will only study linear equations.

The second step is the integration, that is made when the Schwinger--Dyson equation is written in term of dimensionless parameters. In the previous subsection, this step was performed from the 
equation \eqref{avant_int}. In the \cite{Cl14} it was noticed that this step could be performed for more complicated theories than the massless Yukawa model.

The third step is to write the integrated equation with a parameter, and solve the equation when it is written in term of this parameter. This was done by finding an explicit expression for 
$\tilde\alpha$ in the previous subsection. Moreover, the chosen parameter has also to be expressible in term of the initial variables. Again, this step will still be doable for some theories 
distinct from the massless Yukawa model. Furthermore, in some cases, this last step will not be needed: an explicit solution (and not a parametric one) will be find directly after the integration 
(i.e. after step 2).

We will now turn our attention to the massive Yukawa model, which is the most natural generalization of the massless Yukawa model. It will be solved in the ultraviolet and infrared regime. Then 
we will turn our attention to a massive Wess--Zumino model with two renormalized superfields. A solution of this model will be found without restriction on its range of validity.

\section{The massive Yukawa model}

\subsection{The Schwinger--Dyson equation}

In this model, the inverse of the propagator is now
\begin{equation}
 P_{mY}^{-1} = G(q^2)\slashed{q}+M(q^2)m.
\end{equation}
The interaction lagrangian is still $g\bar\psi\phi\psi$ therefore the pictorial Schwinger--Dyson equation presents no difference in the massless and in the massive cases:
\begin{equation} \label{SDlinnSUSY}
 \left(
\tikz \node[prop]{} child[grow=east] child[grow=west];
\right)^{-1} = 1 - a \;\;
\begin{tikzpicture}
\draw (-1.2,0)--(-0.8,0);
\draw (0.8,0)--(1.2,0);
\draw[dashed] (-0.8,0) .. controls (-0.8,1) and (0.8,1) .. (0.8,0);
\node at (0,0) [circle,minimum size=6mm,draw,fill=green!30] {} child [grow=east] child[grow=west];
\end{tikzpicture}
\end{equation}
with $1$ denoting the free propagator as before. In order to find back the free propagator at a the impulsion of reference $\mu$, we have the initial conditions $G(\mu^2)=M(\mu^2)=1$. Now, the 
equation \eqref{SDlinnSUSY} can be written in term of Feynman integrals:
\begin{equation*}
 G(p^2)\slashed{p}+M(p^2)m = \slashed{p} + m -\frac{2a}{\pi^2}\int\d^4l\frac{1}{G(l^2)\slashed{l}+M(l^2)m}\frac{1}{(p+l)^2} + S_1\slashed{p}+S_2m.
\end{equation*}
with $S_1$ and $S_2$ two counterterms that will be fixed to fulfill the initial conditions as before. Multiplying the numerator and the denominator of the integrand by $G(l^2)\slashed{l}-M(l^2)m$ 
and using $\slashed l^2 =l^2$ we arrive to:
\begin{equation}
 G(p^2)\slashed{p}+M(p^2)m = \slashed{p} + m -\frac{2a}{\pi^2}\int\d^4l\frac{G(l^2)\slashed{l}-mM(l^2)}{G^2(l^2)l^2-M^2(l^2)m^2}\frac{1}{(p+l)^2} + S_1\slashed{p}+S_2m.
\end{equation}
The basic idea is that the $\slashed{p}$ and the $m$ parts do not talk to each other, so their coefficients should independently vanish. Multiplying the $\slashed p$ equation by $\slashed p$ and 
dividing by $p^2$ we are left with a system of two coupled integral 
equations:
\begin{subequations}
 \begin{align}
  & G(p^2) = 1 - \frac{2a}{\pi^2}\int\text{d}^4l\frac{G(l^2)}{G^2(l^2)l^2-m^2M^2(l^2)}\frac{1}{p^2}\frac{l.p}{(p+l)^2}+S_1 \label{SDEnSUSY1a} \\
  & M(p^2) = 1 + \frac{2a}{\pi^2}\int\text{d}^4l\frac{M(l^2)}{G^2(l^2)l^2-m^2M^2(l^2)}\frac{1}{(p+l)^2} + S_2. \label{SDEnSUSY1b}
\end{align}
\end{subequations}
From now on those two equations will be the ones referred as the Schwinger--Dyson equations. As in the previous section, the two integrals can be computed by separating their radial and angular 
parts and using the four-dimensional angular average \eqref{angint}. For the equation \eqref{SDEnSUSY1a} we have to use again $p.l = [(p+l)^2-p^2-l^2]/2$. After simplifications we are left with:
\begin{subequations}
 \begin{align}
  & G(p^2) = 1 + \frac{a}{p^2}\int_0^p\text{d}l\frac{l^3G(l^2)}{G^2(l^2)l^2 - m^2M^2(l^2)}\frac{l^2}{p^2} + a\int_{p}^{+\infty}\text{d}l\frac{lG(l^2)}{G^2(l^2)l^2-mM^2(l^2)} + S_1 \\
  & M(p^2) = 1 + \frac{2a}{p^2}\int_0^p\text{d}l\frac{l^3M(l^2)}{G^2(l^2)l^2-m^2M^2(l^2)} + 2a\int_p^{+\infty}\text{d}l\frac{lM(l^2)}{G^2(l^2)l^2-m^2M^2(l^2)} + S_2.
\end{align}
\end{subequations}
The reader may notice that we give less details here than in the previous section. Indeed, the computation is exactly similar, only more tedious: in this case, three terms have combined themselves 
together to cancel and we are left with such a simple equation for $G$. Now, using the initial conditions $G(\mu^2)=M(\mu^2)=1$ allows to fix the counterterms:
\begin{align*}
 & S_1 = -\frac{a}{\mu^2}\left[\int_0^{\mu}\d l\frac{l^3G(l^2)}{G^2(l^2)l^2-m^2M^2(l^2)}\frac{l^2}{\mu^2} + \mu^2\int_{\mu}^{+\infty}\d l\frac{lG(l^2)}{G^2(l^2)l^2-m^2M^2(l^2)}\right] \\
 & S_2 = -\frac{2a}{\mu^2}\left[\int_0^{\mu}\d l\frac{l^3M(l^2)}{G^2(l^2)l^2-m^2M^2(l^2)} + \mu^2\int_{\mu}^{+\infty}\d l\frac{lM(l^2)}{G^2(l^2)l^2-m^2M^2(l^2)}\right].
\end{align*}
Plugging those two counterterms into the Schwinger-Dyson equations we obtain 
\begin{align*}
    G(p^2) & = 1 - a\int_{\mu}^{p}\d l\frac{lG(l^2)}{G^2(l^2)l^2-m^2M^2(l^2)} \\
 \llcorner & + \frac{a}{p^2}\int_0^p\d l\frac{l^3G(l^2)}{G^2(l^2)l^2-m^2M^2(l^2)}\frac{l^2}{p^2} - \frac{a}{\mu^2}\int_0^{\mu}\d l\frac{l^3G(l^2)}{G^2(l^2)l^2-m^2M^2(l^2)}\frac{l^2}{\mu^2} \nonumber \\
    M(p^2) & = 1 - 2a\int_{\mu}^p\d l\frac{lM(l^2)}{G^2(l^2)l^2-m^2M^2(l^2)} \\ 
 \llcorner & + 2a\int_0^p\frac{lM(l^2)}{G^(l^2)l^2-m^2M^2(l^2)}\frac{l^2}{p^2} - 2a\int_0^{\mu}\d l\frac{lM(l^2)}{G^2(l^2)l^2 - m^2M^2(l^2)}\frac{l^2}{\mu^2}. \nonumber
\end{align*}
Performing the change of variable $y=l^2$ we can write the Schwinger--Dyson equations with $x:=p^2$
\begin{subequations}
 \begin{align}
  & G(x) = 1 - \frac{a}{2}\int_{\mu^2}^x\d y\frac{G(y)}{G^2(y)y-m^2M^2(y)}+F_m(x)-F_m(\mu^2) \label{eqG}\\
  & M(x) = 1 - a\int_{\mu^2}^x\d y\frac{M(y)}{G^2(y)y-m^2M^2(y)}+E_m(x)-E_m(\mu^2),
\end{align}
\end{subequations}
with:
\begin{subequations}
 \begin{align}
  & F_m(x) = \frac{a}{2}\int_0^x\d y\frac{G(y)}{G^2(y)y-m^2M^2(y)}\left(\frac{y}{x}\right)^2 \\
  & E_m(x) = a\int_0^x\d y\frac{M(y)}{G^2(y)y-m^2M^2(y)}\frac{y}{x}.
\end{align}
\end{subequations}
Taking a derivative with respect to $x$ in \eqref{eqG} brings
\begin{equation*}
 x^3\frac{\d G}{\d x} = - a\int_0^x\d y\frac{y^2G(y)}{G^2(y)y-m^2M^2(y)}.
\end{equation*}
Writing $D=x\frac{\d}{\d x}$ and taking another derivative with respect to $x$ brings
\begin{equation*}
 2xDG(x) + x^2\frac{\d}{\d x}DG(x)  = -\frac{ax^2G(x)}{G^2(x)x-m^2M^2(x)}.
\end{equation*}
For $M(x)$, a first derivative with respect to $x$ leads to
\begin{equation*}
 x^2\frac{\d M}{\d x} = -\frac{a}{x^2}\int_0^x\d y\frac{yM(y)}{G^2(y)y-m^2M^2(y)}.
\end{equation*}
Then, a second derivative allows to write the Schwinger--Dyson equation \eqref{SDlinnSUSY} as a system of two coupled ordinary differential equations:
\begin{subequations}
 \begin{align}
  & \left[G^2(x)x - m^2M^2(x)\right]D(D+2)G(x) = -axG(x) \label{SDEnSUSY2a} \\
  & \left[G^2(x)x - m^2M^2(x)\right]D(D+1)M(x) = -axM(x) \label{SDEnSUSY2b}
\end{align}
\end{subequations}
One can easily check that the massless limit $M(x)=0$ solves \eqref{SDEnSUSY2b} and brings \eqref{SDEnSUSY2a} back \eqref{SDE_Yuk_diff}. Let us notice however that the generalization of the 
massless case studied in \cite{BrKr99} is not trivial. In particular, the right-hand-side appears to be now dependent of the external momenta.

Now, when decoupling the equations (\ref{SDEnSUSY2a}-\ref{SDEnSUSY2b}), one ends up with complicated non-linear equations having a non-trivial denominator. It makes a rigorous analysis quite 
challenging, since we would have to take care that the denominator does not vanish. Instead, we will now rather tackle the equations (\ref{SDEnSUSY2a}-\ref{SDEnSUSY2b}) in the physically 
relevant cases of the ultraviolet and infrared limits.

\subsection{Solution in the ultraviolet limit}

In this limit the exterior impulsion is much higher than the mass of the fermion: $x>>m^2$. So, assuming that the functions $G(x)$ and $M(x)$ are regular the equations 
(\ref{SDEnSUSY2a}-\ref{SDEnSUSY2b}) become
\begin{subequations}
 \begin{align}
  & G(x)D(D+2)G(x) = -a \label{2a} \\
  & G^2(x)D(D+1)M(x) = -aM(x). \label{2b}
 \end{align}
\end{subequations}
The equation \eqref{2a} has been solved in \cite{BrKr99} and the previous section. It imposes to make the change of variables
 \begin{align*}
  & z = \left(\frac{x}{\mu^2}\right)^2 \\
  & \tilde{G} = \sqrt{\frac{2}{a}}zG(\mu^2\sqrt{z}).
\end{align*}
To tackle (\ref{2a}) let us define the analogous of $\tilde{G}$ for the mass function
\begin{equation}
 \tilde M(z) := M(\mu^2\sqrt{z}).
\end{equation}
Then \eqref{2b} reduces to
\begin{equation}
 \frac{\tilde G^2(z)}{z}\frac{\d}{\d z}\left(2z\frac{\d}{\d z}+1\right)\tilde M(z) = -\tilde M(z).
\end{equation}
Now, since we are working in the UV limit (i.e. in the limit $z>>1$), if we assume $\tilde M$ to be finite we have
\begin{equation*}
 \frac{\d}{\d z}\left(2z\frac{\d}{\d z}+1\right)\tilde M(z) = 2\tilde M'(z) + 2z\tilde M''(z)+\tilde M(z) \sim 2z\tilde M''(z).
\end{equation*}
Hence we have a relevant approximation of the \eqref{2b} in the UV limit:
\begin{equation} \label{approx_UV}
 2\tilde G^2(z)\tilde M''(z) = -\tilde M(z).
\end{equation}
Since we have an expression for $\tilde G$ in term of the parameter $p$, let write this equation as an differential equation for $h(p):=\tilde M(z(p))$. First we have to use
\begin{equation*}
 z = \frac{\text{erfc}(p)}{\text{erfc}(p_0)} \Leftrightarrow p = \text{erfc}^{-1}(\text{erfc}(p_0)z) \Rightarrow \frac{\d p}{\d z} = -\frac{\sqrt{\pi}}{2}e^{p^2}\text{erfc}(p_0).
\end{equation*}
Then, from the intermediate result
\begin{equation*}
 \tilde G = \sqrt{\frac{2}{a}}\exp(p_0^2-p^2)
\end{equation*}
we can rewrite \eqref{approx_UV} as
\begin{align*}
 & \frac{\pi}{a}\exp(2p_0-p^2)\text{erfc}(p_0)\frac{\d}{\d p}\left(e^{p^2}\text{erfc}(p_0)h'(p)\right) = -h(p) \\
\Leftrightarrow & \frac{1}{a}\left(\sqrt{\pi}e^{p_0^2}\text{erfc}(p_0)\right)^2\left[2ph'(p)+h''(p)\right] = -h(p).
\end{align*}
At this point, we recognize that the constant term in brackets is just $\tilde\alpha(p_0)=\sqrt{a/2}$, found in \eqref{sol_alpha}. Hence we end up with a very simple equation for $h$:
\begin{equation}
 h''(p)+2ph'(p)+2h(p) = 0.
\end{equation}
Its general solution is
\begin{equation}
 h(p) = e^{-p^2}\left[\lambda_1+\lambda_2\text{erfi}(p)\right].
\end{equation}
With erfi the imaginary error function defined by
\begin{equation*}
 \text{erfi}(x) = -i\text{erf}(ix).
\end{equation*}
The constants $\lambda_1$ and $\lambda_2$ can be determined from the initial condition $h(p_0)=1$ and another initial condition that we will now determine. We compute
\begin{equation*}
 h'(p) = \frac{\d z}{\d p}\frac{\d q^2}{\d z}\frac{\d M}{\d q^2} = \frac{\mu^2}{\sqrt{\pi}e^{p^2}\text{erfc}(p_0)}\frac{\d M}{\d q^2}.
\end{equation*}
Then using once again \eqref{sol_alpha}: $\tilde\alpha(p_0)=\sqrt{\pi}e^{p_0^2}\text{erfc}(p_0)=\sqrt{a/2}$ we end up with 
\begin{equation*}
 h'(p_0) = -\sqrt{\frac{2}{a}}\delta
\end{equation*}
with $\delta$ (the massive anomalous dimension) defined as
\begin{equation}
 \delta := q^2\left.\frac{\d M}{\d q^2}\right|_{q^2=\mu^2}.
\end{equation}
Then we can easily compute $\lambda_1$ and $\lambda_2$. Hence we obtain the solution for the massive Yukawa model in the ultraviolet limit:
\begin{subequations}
 \begin{align}
  & G(q^2) = \sqrt{\frac{a}{2\pi}}\frac{1}{\exp(p^2)\text{erfc}(p)} \\
  & M(q^2) = e^{-p^2}\left[e^{p_0^2} + \sqrt{\pi}\left(p_0-\frac{\delta}{\sqrt{2a}}\right)\left(\text{erfi}(p)-\text{erfi}(p_0)\right)\right] \\
  & q^2 = \mu^2\sqrt{\frac{\text{erfc(p)}}{\text{erfc}(p_0)}}.
 \end{align}
\end{subequations}
Now we can move on to the other limit: the infrared one.

\subsection{Solution in the infrared limit}

In this case we have $x<<m^2$. However, two possibilities have to be separated. Indeed either $a$, the coupling constant of the theory is big enough so the RHS of 
(\ref{SDEnSUSY2a}-\ref{SDEnSUSY2b}) is of the order of the LHS, either it is not. Let us start with the first case, that is:
\begin{equation} \label{hypa}
 a \sim \frac{m^2}{x}
\end{equation}
This case is called ``soft infrared'' since the external momenta is not small enough to have the coupling constant negligible. Then the equations (\ref{SDEnSUSY2a}-\ref{SDEnSUSY2b}) become:
\begin{subequations}
 \begin{align} 
  & m^2M(x)D(D+1)M(x) = ax  \label{16a} \\
  & m^2M^2(x)D(D+2)G(x) = axG(x) \label{16b}.
\end{align}
\end{subequations}
We will start by solving \eqref{16a} since there is only one unknown function in it. Using the reduced coupling constant $\tilde{a} := a\frac{\mu^2}{m^2}$ let us change the variables:
\begin{subequations}
 \begin{align}
  & z = \left(\frac{x}{\mu^2}\right)^2 \\
  & \tilde{M}(z) = \sqrt{\frac{z}{\tilde{a}}}M(\mu^2\sqrt{z}).
\end{align}
\end{subequations}
Then \eqref{16a} becomes:
\begin{equation} \label{eqMtilde}
 4\tilde{M}(z)\frac{\d}{\d z}\left[\sqrt{z}\frac{\d}{\d z}\tilde{M}(z)\right] = 1.
\end{equation}
The previous analysis can be done one more time to this equation, but a critical step will there be missing because of the $\sqrt{z}$ into the outer derivative: we cannot integrate the 
equation. Actually, there is no definition of $z$ and $\tilde{M}$ that would make the new version of \eqref{eqMtilde} integrable. Fortunately we are saved by noticing that it exists a simple 
solution to \eqref{eqMtilde}. Indeed, with $f(z)=\frac{2}{\sqrt{3}}z^{3/4}$ we have
\begin{equation*}
 \frac{\d}{\d z}\sqrt{z}\frac{\d}{\d z}f(z) =\frac{\sqrt{3}}{2}\frac{\d}{\d z}z^{1/4} = \frac{1}{4f(z)}.
\end{equation*}
Hence we have a particular solution
\begin{equation} \label{solM}
 \tilde{M}(z) = \frac{2}{\sqrt{3}}z^{3/4}.
\end{equation}
This solution satisfies the initial condition $\tilde{M}(1)=1/\sqrt{\tilde{a}}$ if, and only if, the reference scale is
\begin{equation}
 \frac{2}{\sqrt{3}}=\sqrt\frac{m^2}{a\mu^2} \Leftrightarrow \mu = \frac{m}{2}\sqrt{\frac{3}{a}}.
\end{equation}
This is coherent with the hypothesis \eqref{hypa} since the the impulsion $q$ should be of the same order than the impulsion of reference. Notice that, at this impulsion of reference, one gets 
$\tilde{a} = 3/4$. Now, plugging the solution \eqref{solM} into the equation \eqref{16b} one gets, for the function $\tilde{G}(z) = zG(\mu^2\sqrt{z})$, the very simple equation:
\begin{equation}
 \frac{16}{3}\tilde G''(z) =\frac{\tilde G(z)}{z^2}
\end{equation}
which has the simple solution
\begin{equation*}
 \tilde G(z) = Az^{1/2+\sqrt{7}/4} + Bz^{1/2-\sqrt{7}/4}.
\end{equation*}
We need to initial conditions to determine the coefficients $A$ and $B$. The first is obviously $\tilde G(1)=1$. and the second comes from the remark
\begin{equation*}
 \tilde G'(1) = \left.\frac{\d}{\d z}(zG)\right|_{z=1} = \left.G\right|_{q^2=\mu^2} + \left.z\frac{\d q^2}{\d z}\frac{\d G}{\d q^2}\right|_{q^2=\mu^2} = \left.1 + \frac{\mu^2}{2}\frac{\d G}{\d q^2}\right|_{q^2=\mu^2} = 1+\frac{\gamma}{2}.
\end{equation*}
Then we end up with 
\begin{align*}
 A & = \frac{1}{2} + \frac{1}{\sqrt{7}}(1+\gamma) \\
 B & = \frac{1}{2} - \frac{1}{\sqrt{7}}(1+\gamma).
\end{align*}
To summarize, we get a solution to the Schwinger-Dyson equation to the massive Yukawa model with a coupling constant of the order of $\frac{m^2}{q^2}$.
\begin{subequations}
 \begin{align}
  & M(q^2) = 2\sqrt{\frac{a}{3}}\frac{q}{m}  \\
  & G(q^2) = \left[\frac{1}{2} + \frac{1}{\sqrt{7}}(1+\gamma)\right]\left(\frac{q^2}{\mu^2}\right)^{\sqrt{7}/4-1/2} + \left[\frac{1}{2} - \frac{1}{\sqrt{7}}(1+\gamma)\right]\left(\frac{q^2}{\mu^2}\right)^{-\sqrt{7}/4-1/2}.
 \end{align}
\end{subequations}

Now, let us look at the case for $a$ is not big enough to cancel the fact that we are in the infrared regime. By opposition to the previous case, this one is called ``deep infrared''. This case 
is more interesting from a physical point of view since the small energies (at which $a<<\frac{m^2}{q^2}$) are easier to reach. Moreover, it is at those low energies that the coupling constant 
of QCD becomes too big to allow perturbative computations\footnote{i.e. of order of unit, which is still much lower than the ratio $m^2/q^2$.} making the Schwinger--Dyson equations of interest 
for physicists as a door to non-perturbative regimes. In this case, the RHS of equations (\ref{SDEnSUSY2a}-\ref{SDEnSUSY2b}) is negligible, so this system becomes:
\begin{align*}
  & m^2M^2(x)D(D+1)M(x) = 0 \\
  &  m^2M^2(x)D(D+2)G(x) = 0.
\end{align*}
Since we are not looking for vanishing solutions, this system actually decouples:
\begin{subequations}
 \begin{align}
 & D(D+1)M(x) = 0 \Leftrightarrow x^2M''(x) + 2xM'(x) = 0 \\
 & D(D+2)G(x) = 0 \Leftrightarrow x^2G''(x) + 3xG'(x) = 0
\end{align}
\end{subequations}
with now the prime being for a derivative with respect to $x=q^2$. Those equations are very easy to solve with the following initial conditions:
\begin{subequations}
 \begin{align}
  & M(\mu^2) = G(\mu^2) = 1 \\
  & q^2\frac{\d M(q^2)}{\d q^2}|_{q^2=\mu^2} = \delta \\
  & q^2\frac{\d G(q^2)}{\d q^2}|_{q^2=\mu^2} = \gamma.
\end{align}
\end{subequations}
Then one ends up with:
\begin{subequations}
 \begin{align}
  & M(q^2) = 1+\delta-\frac{\delta\mu^2}{q^2} \\
  & G(q^2) = 1+\frac{\gamma}{2}-\frac{\gamma}{2}\frac{\mu^4}{q^4}
 \end{align}
\end{subequations}
This solution being obviously for every scale of reference $\mu$.

We have unraveled some interesting results however we did not find any exact solution to the Schwinger--Dyson equation \eqref{SDlinnSUSY}. This was due to the mixing between the mass and 
wavefunction renormalization that forbids to perform some steps of the computation (namely the integration). We will see that in a supersymmetric model this restriction does not exist.

\section{Massive linear Wess--Zumino model}

We will work with a massive version of the Wess--Zumino-like model already studied in \cite{BeLoSc07}. This model has two superfields, one massive, $\Psi_i$, and one massless, $\Phi_{ij}$ 
($i,j=1,2,...,N)$. Each superfield represent a complex scalar ($A_i$ or $B_{ij}$), a Weyl fermion ($\chi_i$ or $\xi_{ij}$) and a complex auxiliary field ($F_i$ or $G_{ij}$):
\begin{align*}
    \Psi_i(y) & = A_i(y) + \sqrt{2}\theta\chi_i(y) + \theta\theta F_i(y) \\
 \Phi_{ij}(y) & = B_{ij}(y) + \sqrt{2}\theta\xi_{ij}(y) + \theta\theta G_{ij}(y).
\end{align*}
With $y$ the chiral coordinates defined by $y^{\mu} = x^{\mu}+i\theta\sigma^{\mu}\bar{\theta}$. This model has a cubic superfield interaction Lagrangian
\begin{align}
 L_{\text{int}} & = \frac{g}{\sqrt{N}}\sum_{i,j=1}^N\int \d^2\theta\Psi_i\Phi_{ij}\Psi_j \nonumber \\
   & = \frac{g}{\sqrt{N}}\sum_{i,j=1}^N(A_iG_{ij}A_j + 2A_iB_{ij}F_j - \chi_iB_{ij}\chi_j - \chi_i\xi_{ij}A_j - A_i\xi_{ij}\chi_j + \text{h.c.}). \label{L_SUSY}
\end{align}
A more detailed presentation of this model can be found in \cite{BeLoSc07}. We have interests in this model due to the non-renormalization theorems, which imply that we need only wavefunction 
renormalization. The simplest proof of non-renormalization theorems is usually credited to Seiberg \cite{Se93}, but a nice introduction of the subject can be found in \cite{De89}. 
Together with the supersymmetry of the theory, the non-renormalization theorem allows us to reduce the system of Schwinger--Dyson equations to only one differential equation.

\subsection{The Schwinger--Dyson equation}

From the lagrangian \eqref{L_SUSY} we have a system of Schwinger--Dyson equations. Graphically, it can be written as
\begin{subequations}
 \begin{eqnarray}
  \left(
\tikz \node[prop]{} child[grow=east]{edge from parent [photon]} child[grow=west]{edge from parent [photon]};
\right)^{-1} & = & 1 - a \;\;
\begin{tikzpicture}
\draw[photon] (-1.2,0)--(-0.8,0);
\draw[photon] (0.8,0)--(1.2,0);
\draw[dashed] (-0.8,0) .. controls (-0.8,1) and (0.8,1) .. (0.8,0);
\node at (0,0) [circle,minimum size=6mm,draw,fill=green!30] {} child [grow=east,dashed] child[grow=west,dashed];
\end{tikzpicture} \\
  \left(
\tikz \node[prop]{} child[grow=east] child[grow=west];
\right)^{-1} & = & 1 - a \;\;
\begin{tikzpicture}
\draw (-1.2,0)--(-0.8,0);
\draw (0.8,0)--(1.2,0);
\draw[dashed] (-0.8,0) .. controls (-0.8,1) and (0.8,1) .. (0.8,0);
\node at (0,0) [circle,minimum size=6mm,draw,fill=green!30] {} child [grow=east] child[grow=west];
\end{tikzpicture}
-a\;\;
\begin{tikzpicture}
\draw (-1.2,0)--(-0.8,0);
\draw (0.8,0)--(1.2,0);
\draw (-0.8,0) .. controls (-0.8,1) and (0.8,1) .. (0.8,0);
\node at (0,0) [circle,minimum size=6mm,draw,fill=green!30] {} child [grow=east,dashed] child[grow=west,dashed];
\end{tikzpicture} \\
  \left(
\tikz \node[prop]{} child[grow=east,dashed] child[grow=west,dashed];
\right)^{-1} & = & 1 - a \;\;
\begin{tikzpicture}
\draw[dashed] (-1.2,0)--(-0.8,0);
\draw[dashed] (0.8,0)--(1.2,0);
\draw[photon] (-0.8,0) .. controls (-0.8,1) and (0.8,1) .. (0.8,0);
\node at (0,0) [circle,minimum size=6mm,draw,fill=green!30] {} child [grow=east,dashed] child[grow=west,dashed];
\end{tikzpicture}
-a\;\;
\begin{tikzpicture}
\draw[dashed] (-1.2,0)--(-0.8,0);
\draw[dashed] (0.8,0)--(1.2,0);
\draw (-0.8,0) .. controls (-0.8,1) and (0.8,1) .. (0.8,0);
\node at (0,0) [circle,minimum size=6mm,draw,fill=green!30] {} child [grow=east] child[grow=west];
\end{tikzpicture} \\
   \llcorner & - & a \;\;
\begin{tikzpicture}
\draw[dashed] (-1.2,0)--(-0.8,0);
\draw[dashed] (0.8,0)--(1.2,0);
\draw[dashed] (-0.8,0) .. controls (-0.8,1) and (0.8,1) .. (0.8,0);
\node at (0,0) [circle,minimum size=6mm,draw,fill=green!30] {} child [grow=east]{edge from parent [photon]} child[grow=west]{edge from parent [photon]};
\end{tikzpicture} \nonumber
 \end{eqnarray}
\end{subequations}
with the plain lines being for the fermionic fields, the dashed lines for the scalar fields and the windy lines for the auxiliary fields. In the large $N$ limit the one-loop contributions to 
the dressed propagators are the only one to not be suppressed. This is why we consider a model with a vector superfield and a matrix one: a solution of the above system is more than a 
solution of a truncated Schwinger--Dyson equation; it is the full dressed propagators of the theory in the large $N$ limit. Now, the non-renormalization theorem allows us to write the dressed 
propagators as:
\begin{subequations}
 \begin{eqnarray}
 \Pi_{\chi}^{-1}(q) & = & \frac{q^2G_{\chi}(q^2)+m^2}{q_m\sigma^m} \\
    \Pi_{A}^{-1}(q) & = & q^2G_{A}(q^2)+m^2 \\
    \Pi_{F}^{-1}(q) & = & \frac{q^2G_{F}(q^2)+m^2}{q^2}
\end{eqnarray}
\end{subequations}
with the $\{\sigma^m\}$ the Pauli matrices and where we have dropped the subscript $i$ for simplicity. Now, we can write the above system of Schwinger--Dyson equations. Due to the length of the computations, we will not go into the full
details of the derivation. After some simplifications, we end up to
\begin{subequations}
 \begin{align}
 q^2G_{\chi}(q^2) & = q^2 - \frac{g^2}{4\pi^4}\int\d^4p\frac{q.p}{[p^2G_{\chi}(p^2)+m^2](q-p)^2} - \frac{g^2}{4\pi^4}\int\d^4p\frac{q^2-q.p}{[p^2G_{A}(p^2)+m^2](q-p)^2}  \label{SDE_SUSY1} \\
         G_F(q^2) & = 1 - \frac{g^2}{4\pi^4}\int\d^4p\frac{1}{[p^2G_{A}(p^2)+m^2](q-p)^2} \label{SDE_SUSY2} \\
      q^2G_A(q^2) & = q^2 - \frac{g^2}{4\pi^4}\int\d^4p\frac{1}{p^2G_{\chi}(p^2)+m^2} - \frac{g^2}{4\pi^4}\int\d^4p\frac{p^2}{[p^2G_{F}(p^2)+m^2](q-p)^2} \label{SDE_SUSY3} \\
        \llcorner & - \frac{g^2}{4\pi^4}\int\d^4p\frac{\text{Tr}(p_m\sigma^m(q_n-p_n)\sigma^n)}{[p^2G_{\chi}(p^2)+m^2](q-p)^2}. \nonumber
 \end{align}
\end{subequations}
We did not write explicitly the counter-terms in this system in order to keep it of reasonable size. Now, as already noticed in \cite{BeLoSc07}, a coherent ansatz to solve this system is
\begin{equation}
 G_{\chi}(q^2) = G_F(q^2) = G_A(q^2) = G(q^2).
\end{equation}
Indeed, with this ansatz, the first integral of \eqref{SDE_SUSY1} cancels the $q.p$ term of the second integral and therefore \eqref{SDE_SUSY1}$\Leftrightarrow$\eqref{SDE_SUSY2}. Moreover, using 
Tr$(\sigma^m\sigma^n) = 2\eta^{mn}$ and $-2q.p=(q-p)^2-p^2-q^2$ we end up with \eqref{SDE_SUSY3}$\Leftrightarrow$\eqref{SDE_SUSY2}. This fact is obviously a consequence of supersymmetry. 
Finally, within this ansatz, we only have one equation to solve as advertised
\begin{equation}
 G(p^2) = 1 - \frac{2a}{\pi^2}\int\d^4l\frac{1}{\left[G(l^2)l^2+m^2\right](l+p)^2} + S
\end{equation}
with $S$ a counter-term and, again, $a=\frac{g^2}{2\pi^2}$ the fine-structure constant of the theory. The Feynman integral above can be computed by performing the angular integral \eqref{angint} 
(with $l$ switched to $-l$) as in the Yukawa model. We end up with
\begin{equation*}
 G(p^2) = 1 - \frac{2a}{p^2}\left(\int_0^{p}\frac{l^3\d l}{G(l^2)l^2+m^2} + p^2\int_{p}^{+\infty}\frac{l\d l}{G(l^2)l^2+m^2}\right) + S.
\end{equation*}
The counterterm $S$ is fixed by the initial condition $G(\mu^2)=1$. This provides:
\begin{equation*}
 S = \frac{2a}{\mu^2}\left(\int_0^{\mu}\frac{l^3\d l}{G(l^2)l^2+m^2} + \mu^2\int_{\mu}^{+\infty}\frac{l\d l}{G(l^2)l^2+m^2}\right).
\end{equation*}
Thus the Schwinger-Dyson equation simply becomes:
\begin{equation}
 G(p^2) = 1-2a\left(\frac{1}{p^2}\int_0^p\frac{l^3\d l}{G(l^2)l^2+m^2} - \frac{1}{\mu^2}\int_0^{\mu}\frac{l^3\d l}{G(l^2)l^2+m^2} - \int_{\mu}^p\frac{l\d l}{G(l^2)l^2+m^2}\right).
\end{equation}
Let us rewrite the previous equation with $x=p^2$ and $y=l^2$
\begin{equation} \label{SDE1}
 G(x) = 1 + a\int_{\mu^2}^x\frac{\d y}{G(y)y + m^2} + F(\mu^2) - F(x)
\end{equation}
with
\begin{equation}
 F(x) = a\int_0^x\frac{\d y}{G(y)y+m^2}\frac{y}{x}.
\end{equation}
Taking two derivatives with respect to $x$ of (\ref{SDE1}) it comes:
\begin{equation} \label{SDE2}
 \left[G(x) + \frac{m^2}{x}\right]\text{D}(\text{D}+1)G(x) = a
\end{equation}
with once again $D=x\frac{\d}{\d x}$. Now, we will see how, following the footsteps of \cite{BrKr99}, we can solve this equation.

\subsection{Parametric solution}

As in the previous subsection, we will not write all the details of the computation since they are very similar to 
those detailed in section \ref{Yuk_nm}. First, we switch to dimensionless variables, and define the new parameters and functions:
\begin{subequations}
 \begin{align}
  & z = \frac{x}{\mu^2} \\
  & \tilde{m}^2 = \frac{m^2}{\mu^2}\frac{1}{\sqrt{2a}} \\
  & \tilde{G}(z) = \frac{1}{\sqrt{2a}}zG(\mu^2z).
\end{align}
\end{subequations}
Then our Schwinger--Dyson equation \eqref{SDE2} becomes
\begin{equation}
 2\left[\tilde{G}(z)+\tilde{m}^2\right]\tilde{G}''(z) = 1. \label{eqGtildeSUSY}
\end{equation}
Let us notice than $\tilde{m}$ is still constant with respect to $z$ since we take $\mu^2$ being fixed. Then the above equation can easily be integrated to:
\begin{equation*}
 \left(\tilde{G}'(z)\right)^2 = \ln\left(\tilde{G}(z)+\tilde{m}^2\right)+\text{cste}.
\end{equation*}
Now, let us use the parameter $p=\tilde{G}'(z)$. We can write $\tilde{G}(z)$ as a function of $p$:
\begin{equation} \label{formeG}
 \tilde{G}(z) = \left(\tilde{m}^2+\frac{1}{\sqrt{2a}}\right)\exp(p^2-p_0^2)-\tilde{m}^2
\end{equation}
with $p_0:=\tilde{G}'(1)$ and where we have used the initial condition $\tilde{G}(1)=\frac{1}{\sqrt{2a}} \Leftrightarrow G(\mu^2)=1$. We can then determine the parametric solution by 
studying:
\begin{equation} \label{alpha}
 \alpha = \frac{z}{2(\tilde{G}+\tilde{m}^2)}.
\end{equation}
Indeed, using the definition of $p$ and the equation \eqref{formeG} we obtain a differential equation for $\alpha$
\begin{equation}
 \alpha = \frac{1}{2p}-\frac{1}{2p}\frac{\d\alpha}{\d p}.
\end{equation}
We could explicitly solve this equation but the determination of the constant of integration would be less easy than in the case of the massless Yukawa model. Indeed, in the case of the massive 
linear Wess--Zumino model we have $\alpha\longrightarrow0$ as $p$ goes to infinity. Therefore we will follow more closely the approach of \cite{BrKr99} an develop the solution at infinity. It 
gives the asymptotic expansion:
\begin{equation} \label{expansionAlpha}
 \alpha(p) \simeq \frac{1}{2p}+\frac{1}{2p}\sum_{n=1}^{+\infty}\frac{(2n-1)!!}{(2p^2)^n}.
\end{equation}
Now, using the definition \eqref{alpha} of $\alpha$ and the one of the anomalous dimension
\begin{equation*}
 \gamma = q^2\frac{\d G(q^2)}{\d q^2}|_{q^2=\mu^2},
\end{equation*}
we get the relations:
\begin{subequations}
 \begin{align} 
  & \alpha(p_0) = \sqrt{\frac{a}{2}}\frac{1}{\frac{m^2}{\mu^2}+1} \label{refInita} \\
  & p_0 = \frac{1}{\sqrt{2a}}(1+\gamma) \label{refInitb}.
\end{align}
\end{subequations}
Such relations are the equivalent of the relations \eqref{p0_Yuk_nm} and $\tilde\alpha(p_0)$ found in \cite{BrKr99} and in the section \ref{Yuk_nm}. The second of the above relations comes from
\begin{equation*}
 p_0:=\frac{1}{\sqrt{2a}}\frac{\d}{\d z}(zG)|_{z=1}.
\end{equation*}
Then, we easily get the parametric solution by noticing that the expansion \eqref{expansionAlpha} is the one of:
\begin{eqnarray*}
 \alpha(p) & = & \frac{1}{2}\sqrt{\pi}e^{-p^2}\Re[\text{erfi}(p)] \\
           & = & \frac{1}{2}\left(\sqrt{\pi}e^{-p^2}\text{erfi}(p) + \text{i}\sqrt{\pi}e^{-p^2}\right) \\
           & = & \frac{\text{i}}{2}\sqrt{\pi}e^{-p^2}\text{erfc}(\text{i}p)
\end{eqnarray*}
With erfc the complementary error function and erfi the imaginary error function. In the following, we will simply write $\E(p):=\Re[\text{erfi}(p)]$ to emphasize that it is a real function. 
Hence we have:
\begin{equation} \label{formAlpha}
 \alpha(p) = \frac{1}{2}\sqrt{\pi}e^{-p^2}\E(p).
\end{equation}
Now, we have everything to write the parametric solution to the equation \eqref{SDE2}. First, we need to write $z$ as a function of $p$ (since we will write $G$ as a function of $p$). Using its 
definition and the form \eqref{formeG} of $\tilde{G}$ we get
\begin{equation*}
 1 = 2\exp(p^2-p_0^2)\left(\tilde{m}+\frac{1}{\sqrt{2a}}\right)\frac{\d p}{\d z}.
\end{equation*}
As in the non-supersymmetric case, using the equation \eqref{eqGtildeSUSY} and the definition \eqref{alpha} of $\alpha$ we obtained $\frac{\d p}{\d z}=\frac{\alpha}{z}$. Hence, using the above 
formula \eqref{formAlpha}, we end up with
\begin{equation*}
 z = \sqrt{\pi}\E(p)\left(\tilde{m}+\frac{1}{\sqrt{2a}}\right)\exp(-p_0^2).
\end{equation*}
This could be written on a much simpler form by using the relation \eqref{refInita} together with \eqref{formAlpha}. Then one obtains for $q^2$ the form written below, in the  complete solution. 
Finally, writing $G(q^2)=\sqrt{2a}\left(\frac{1}{2\alpha}-\frac{\tilde{m}^2}{z}\right)$ we get the parametric solution to the Schwinger-Dyson equation of our massive supersymmetric model:
\begin{subequations}
 \begin{align}
  & G(q^2) = \sqrt{\frac{2a}{\pi}}\frac{e^{p^2}}{\E(p)}-\frac{m^2}{\mu^2}\frac{\E(p_0)}{\E(p)} \\
  & q^2 = \mu^2\frac{\E(p)}{\E(p_0)}.
\end{align}
\end{subequations}
To our knowledge, this is the second (the first being in \cite{BrKr99}) known exact solution of a Schwinger--Dyson equation, and the first for a theory with a mass term. \\

This section was an interesting first ride in the land of Schwinger--Dyson equations. Let us now turn our attention to the one that lies at the heart of the work done during this Ph.D. and that 
will keep us busy for the two next chapters: the Schwinger--Dyson equation of the non-linear massless Wess-Zumino model.

%
%
\chapter{The massless Wess--Zumino model I: the physical plane} \label{chap3}

 \noindent\hrulefill \\
 {\it
Qu'as-tu fait, ô toi que voilà \\
Pleurant sans cesse, \\
Dis, qu'as-tu fait, toi que voilà, \\
De ta jeunesse ? \\

Paul Verlaine. Le ciel est par-dessus le toit.}

 \noindent\hrulefill
 
 \vspace{1.5cm}

This chapter, or at least its first section, is somehow the direct continuation of the first one. As in the previous chapter, we will deal with the evaluated correlation functions. Therefore, we will drop the $\phi_L^R$ for the 
sake of readability, and $\phi_L^R(G_2)$ will simply be written $G_2(L)$. Moreover, let us recall that $L=\ln(p^2/\mu^2)$.

\section{The equations}

\subsection{The renormalization group equation}

We will start with the equation \eqref{RGE_G2_gen}, that we will recall here:
\begin{equation*} 
 \partial_LG_2 = (\gamma+\beta a\partial_a)G_2
\end{equation*}
with $\gamma:=\partial_LG_2|_{L=0}$ and $\beta=\partial_La|_{L=0}$. $\gamma$ and $\beta$ are obviously functions of $a$, the fine structure constant of the theory. Now, we are working with the 
massless Wess--Zumino model. In this model, we have a non-renormalization theorem of vertex. This is essentially due to the fact that to any graph with $l_f$ fermionic loops and $l_b$ bosonic 
loops, we will have a graph with bosons and fermions exchanged. Due to supersymmetry, those two graphs will have the same evaluation with opposite signs if the number of external legs is odd and 
therefore their contributions will cancel each other in this case. Hence we have $\Gamma_3=1$ and $a_{\text{eff}}=(\Gamma_2)^{-3}=(G_2)^3$. Then \eqref{RGE_G2_prim} becomes
\begin{equation}
 \partial_LG_2 = \sum_{l\geq0}\gamma(1+3l)G_{2,l}.
\end{equation}
Using once again the remark \eqref{l_deriv_a} that the multiplication by the loop number could be seen as a derivative with respect to $a$ we get
\begin{equation} \label{RGE_G2_WZ}
 \partial_LG_2 = \gamma(1+3a\partial_a)G_2.
\end{equation}
This will be the standard renormalization group equation often used through this work. Moreover, from a simple comparison with \eqref{RGE_G2_gen} we get a straightforward proof of the classical 
result
\begin{equation}
 \beta = 3\gamma
\end{equation}
that is tediously shown by superspace techniques in \cite{Piguet}, among others.

Now, let us expand $G:=G_2$ in powers of $L=\ln(p^2/\mu^2)$:
\begin{equation*}
 G(L) = \sum_{k=0}^{+\infty}\gamma_k\frac{L^k}{k!},
\end{equation*}
with, a priori, all the $\gamma_k$ being functions of $a$. Since $G(0)=1$ (this will be clear from the Schwinger--Dyson equation) and from the definition of $\gamma$ we have $\gamma_0=1$ and 
$\gamma_1=\gamma$. Hence 
\begin{equation} \label{2ptB}
 G(L) = 1+\sum_{k=1}^{+\infty}\gamma_k\frac{L^k}{k!}.
\end{equation}
From this we straightforwardly obtain
\begin{align*}
 \partial_L G(L) & = \sum_{k\geq0}\gamma_{k+1}\frac{L^k}{k!} \\
 \gamma G(L) & = \gamma\sum_{k\geq1}\gamma_{k}\frac{L^k}{k!} \\
 \gamma a\partial_a G(L) & = \gamma\sum_{k\geq1}a\partial_a(\gamma_{k})\frac{L^k}{k!} .
\end{align*}
Hence the renormalization group equation \eqref{RGE_G2_WZ} can written for the $\gamma_k$s by equalizing the coefficients of the same powers of $L$ in both sides:
\begin{equation} \label{recursion_gamma}
 \gamma_{k+1} = \gamma_1(1+ 3a\partial_a)\gamma_k.
\end{equation}
In this model, all the coefficients of the expansion of $G(L)$ are therefore simple functions of the first coefficient. It will be enough to know $\gamma$ to fully know the two-points correlation 
function.

\subsection{The Schwinger--Dyson equation} \label{methods}

We did not specify the model we are working with. It is a massless Wess--Zumino model with only one superfield (instead of two for the model studied in the previous chapter). This superfield 
represents a complex scalar $\phi=A+iB$, a complex auxiliary field $\mathcal{F}=F-iG$ and a Majorana fermion $\psi$. The only interaction term that is invariant under the supersymmetry
transformations was shown in \cite{WeZu74a} to be
\begin{equation}
 \mathcal{L}_{Int} = g\left(F(A^2+B^2)+2GAB-i\bar\psi(A-\gamma_5B)\psi\right).
\end{equation}
This is the original model studied in \cite{WeZu74b}. For a detailed derivation of this lagrangian, the reader can be referred to \cite{Weinberg96c}, pages 55-59. A historical introduction to 
supersymmetry can be found in the same reference. This Wess--Zumino model is then presented pages 6-7.

The same procedure works in this case: all the components of the supermultiplet receive the same corrections. Therefore, as before, it is enough to solve a single equation (the one for the 
auxiliary field). That time, it is a non-linear equation (in the sense discussed in the header of the second chapter).
\begin{equation}\label{SDnlin}
\left(
\tikz \node[prop]{} child[grow=east] child[grow=west];
\right)^{-1} = 1 - a \;\;
\begin{tikzpicture}[level distance = 5mm, node distance= 10mm,baseline=(x.base)]
 \node (upnode) [style=prop]{};
 \node (downnode) [below of=upnode,style=prop]{}; 
 \draw (upnode) to[out=180,in=180]   
 	node[name=x,coordinate,midway] {} (downnode);
\draw	(x)	child[grow=west] ;
\draw (upnode) to[out=0,in=0] 
 	node[name=y,coordinate,midway] {} (downnode) ;
\draw	(y) child[grow=east]  ;
\end{tikzpicture}.
\end{equation}
We have indeed $G(0)=1$ as advertised in the previous subsection. Now, at a given loop order, the full propagator is the free propagator times the two-point function, which has a finite expansion 
in the logarithm of the momentum.
\begin{equation}
 P(p^2/\mu^2) = \frac{1}{p^2/\mu^2}\left(1+\sum_{k=1}^{+\infty}\gamma_k\frac{L^k}{k!}\right)
\end{equation}
Then the loop integral is an integral over a sum of powers of the logarithm of the external impulsion. The trick that allows to compute this integral is to take its Mellin transform. This is 
nothing but noticing:
\begin{equation}
 \left(\ln\frac{p^2}{\mu^2}\right)^k = \left.\left(\frac{\text{d}}{\text{dx}}\right)^k\left(\frac{p^2}{\mu^2}\right)^x\right|_{x=0}.
\end{equation}
Let us write this down carefully. We have to compute
\begin{align*}
 \text{I}(q^2) & = \frac{g^2}{8\pi^4}\int\frac{\d^4p}{p^2(q-p)^2}\left(1+\sum_{n=1}^{+\infty}\frac{\gamma_n}{n!}[\ln(p^2)]^n\right)\left(1+\sum_{m=1}^{+\infty}\frac{\gamma_m}{m!}[\ln((p-q)^2)]^m\right) \\
	       & = \frac{g^2}{8\pi^4}\int\frac{\d^4p}{p^2(q-p)^2}\left(1+\sum_{n=1}^{+\infty}\frac{\gamma_n}{n!}\left.\frac{\d^n}{\d x^n}(p^2)^x\right|_{x=0}\right)\left(1+\sum_{m=1}^{+\infty}\frac{\gamma_m}{m!}\left.\frac{\d^m}{\d x^m}((p-q)^2)^y\right|_{y=0}\right) \\
	       & = \frac{g^2}{8\pi^4}\left(1+\sum_{n=1}^{+\infty}\frac{\gamma_n}{n!}\frac{\d^n}{\d x^n}\right)\left(1+\sum_{m=1}^{+\infty}\frac{\gamma_m}{m!}\frac{\d^m}{\d y^m}\right)\int\d^4p\left.\frac{1}{\left(p^2\right)^{1-x}[(q-p)^2]^{1-y}}\right|_{x=0,y=0}
\end{align*}
where we have dropped all the $\mu^2$ in the integrands for the sake of readability. Hence we see that what we will need to compute is really
\begin{equation} \label{I_a_calculer}
 \text{I}(q^2/\mu^2,x,y) = \int\text{d}^4p\frac{1}{\left(p^2/\mu^2\right)^{1-x}[(q-p)^2/\mu^2]^{1-y}}.
\end{equation}
This will be an interesting task, and will be done in the next subsection. Hence the Schwinger--Dyson equation \eqref{SDnlin} can be written
\begin{equation}
 G(q^2) = 1 - \frac{g^2}{8\pi^4}\left(1+\sum_{n=1}^{+\infty}\frac{\gamma_n}{n!}\frac{\d^n}{\d x^n}\right)\left(1+\sum_{m=1}^{+\infty}\frac{\gamma_m}{m!}\frac{\d^m}{\d y^m}\right)\left.\text{I}(q^2/\mu^2,x,y)\right|_{x=0,y=0}.
\end{equation}
Now, the renormalization group equation \eqref{recursion_gamma} tells us that it is enough to know the value of the first derivative of $G(L)$ at $L=0$ to fully know $G$. Hence we will take a 
derivative with respect to $L$ and evaluate the result at $0$\footnote{let us recall that $L=0$ corresponds to $p^2=\mu^2$ and, if we do not write explicitly the $\mu$, to $p^2=1$.}. Following the footsteps of 
\cite{BeCl13} and \cite{BeCl14} we define the function
\begin{equation}
 H(x,y) = -\frac{1}{\pi^2}\left.\frac{\partial\text{I}(q^2/\mu^2,x,y)}{\partial L}\right|_{L=0}
\end{equation}
(the $1/\pi^2$ is there to kill a $\pi^2$ that will arise from the computation of I) and the differential operator
\begin{equation} \label{def_mathcal_I}
 \mathcal{I}\left(f(x,y)\right) = \left(1+\sum_{n=1}^{+\infty}\frac{\gamma_n}{n!}\frac{\text{d}^n}{\text{dx}^n}\right)\left(1+\sum_{m=1}^{+\infty}\frac{\gamma_m}{m!}\frac{\text{d}^m}{\text{dy}^m}\right)f(x,y)\bigg|_{x=y=0}.
\end{equation}
Then the Schwinger--Dyson equation \eqref{SDnlin} is simply
\begin{equation} \label{SDE}
 \gamma = a\mathcal{I}\left(H(x,y)\right)
\end{equation}
with $a=\frac{g^2}{8\pi^2}$ the fine structure constant of the theory.

\subsection{The one-loop Mellin transform} \label{one_loop}

This subsection will be devoted to the computation of the one-loop Mellin transform
\begin{equation}
 I_G(q^2,\beta_1,\beta_2) := \int\d^Dp\frac{1}{(p^2)^{\beta_1}((p-q)^2)^{\beta_2}}
\end{equation}
in $D=4$ space-time dimension. We can write this integral in the Schwinger representation
\begin{equation} \label{one_loop_Mellin}
 I_G(q^2,\beta_1,\beta_2) = \frac{\pi^{D/2}}{\Gamma(\beta_1)\Gamma(\beta_2)}\int\d\alpha_1\d\alpha_2(\alpha_1)^{\beta_1-1}(\alpha_2)^{\beta_2-1}U_G^{-D/2}\exp\left(-\frac{V_G}{U_G}\right)
\end{equation}
with $U_G$ and $V_G$ the first and second Symanzik polynomials of the graph $G$. Moreover, the integrations over the alpha parameters run from $0$ to $+\infty$. A detailed derivation of the 
Schwinger representation of Feynman integrals in the most general case is not so common in the literature, but a very careful one can be found in \cite{Go13}, appendix B.

Let us now define the Symanzik polynomials. We will follow the presentation of \cite{BoWe10}, but with the notation defined in the subsection \ref{graph_theory}. First, let us recall that a tree 
is a graph without loops, and that a forest is a graph that can be written has a disjoint union of trees (hence a forest is made of trees!). Moreover, if a forest $T$ can be written as the 
disjoint union of $n$ non-empty trees it is said to be a $n$-forest. 

Now, for $\gamma\subseteq\Gamma$\footnote{here $\gamma$ stands for a subgraph, it has nothing to do with the anomalous dimension.} we say that $\gamma$ is a spanning subgraph of $\gamma$ if, and 
only if, all the vertices of $\Gamma$ are vertices of $\gamma$. Hence we say that $T\subseteq\Gamma$ is a spanning tree of $\Gamma$ if, and only if, $T$ is tree and a spanning subgraph of 
$\Gamma$. Similarly, we define the spanning $n$-forests of $\Gamma$.

Now we have enough matter to define the Symanzik polynomials. Let $\mathcal{T}_{\Gamma}$ the set of all spanning trees of $\Gamma$, and $(e_i)_{i=1\dots|I_{\Gamma}|}$ a labeling of the 
internal edges of $\Gamma$. We attach the Schwinger parameter $\alpha_i$ to the edge $e_i$. Then the first Symanzik polynomial is
\begin{equation} \label{first_Symanzik}
 U_{\Gamma}(\{\alpha_i\}) := \sum_{T\in\mathcal{T}_{\Gamma}}\prod_{e_i\notin T}\alpha_i.
\end{equation}
Finally, let $P_T$ be the set of external momenta attached to the tree $T$ and $m_i$ be the mass of the particle pictured by the edge $e_i$. We will also write $\mathcal{F}^n_{\Gamma}$ the set of 
spanning $n$-forests of the graph $\Gamma$. Then the second Symanzik polynomial is
\begin{equation} \label{second_Symanzik}
 V_{\Gamma}(\{\alpha_i\},\{p_i\},\{m_i\}) = \sum_{(T_1,T_2)\in\mathcal{F}^2_{\Gamma}}\left(\prod_{e_i\notin (T_1,T_2)}\alpha_i\right)\left(\sum_{p_j\in P_{T_1}}\sum_{p_k\in P_{T_2}}\frac{p_j.p_k}{\mu^2}\right) + U_{\Gamma}(\{\alpha_i\})\sum_{i=1}^{|I(\Gamma)|}\alpha_i\frac{m^2_i}{\mu^2}
\end{equation}
In our case we are interested by the graph $G$
%
%
%
\begin{equation*}
 G = 
 \begin{tikzpicture}[anchor=base,baseline]
  \draw (0,0) circle (1);
  \draw (0,1) node[below]{\text{$\alpha_1$}};
  \draw (0,-1) node[above]{\text{$\alpha_2$}};
  \draw[*-] (0.91,0) -- (2,0);
  \draw[-*] (-2,0) -- (-0.91,0);
  \draw (-1.5,0) node[above]{\text{$p^2$}};
  \draw (1.5,0) node[above]{\text{$p^2$}};
 \end{tikzpicture}.
\end{equation*}
It is clear that it has only two spanning trees, each of them having only one edge: one of the two internal edges of $G$. Therefore
\begin{equation} \label{formula_U_G}
 U_G(\alpha_1\alpha_2) = \alpha_1+\alpha_2.
\end{equation}
Moreover, $G$ has only one spanning $2$-forest, made of the two trees with only one vertex and no edge. These two trees have both external impulsion $p$. Moreover, since we take $m_i=0$ we have
\begin{equation} \label{formula_V_G}
 V_G(\alpha_1,\alpha_2,q) = \alpha_1\alpha_2q^2
\end{equation}
if we drop the $\mu^2$, as usual. In order to compute \eqref{one_loop_Mellin} we insert $1$ in the integrand under the form
\begin{equation}
 1 = \int_0^{+\infty}\d x\delta(x-\sum_{i\in\mathcal{J}}\alpha_i)
\end{equation}
which is true for any non-empty subset $\mathcal{J}$  of the set labels of the internal edges of $G$ since the Schwinger parameters are positive. This is nothing but proving a very weak form of 
the famous theorem of Cheng and Wu (\cite{Cheng}, page 259). After an exchange in the order of the integrations we have
\begin{align}
 I_G(q^2,\beta_1,\beta_2) & = \\
\frac{\pi^{D/2}}{\Gamma(\beta_1)\Gamma(\beta_2)} & \int_0^{+\infty}\d x\int\d\alpha_1\d\alpha_2(\alpha_1)^{\beta_1-1}(\alpha_2)^{\beta_2-1}U_G^{-D/2}\exp\left(-\frac{V_G}{U_G}\right)\delta(x-\sum_{i\in\mathcal{J}}\alpha_i). \nonumber
\end{align}
Now, rescaling both of the Schwinger parameters by $x$, we have $U_G\longrightarrow xU_G$ and $V_G\longrightarrow x^2V_G$. Then 
\begin{align*}
 I_G(q^2,\beta_1,\beta_2) & = \\
\frac{\pi^{D/2}}{\Gamma(\beta_1)\Gamma(\beta_2)}\int_0^{+\infty} & \d xx^{\beta_1+\beta_2-2-D/2}\int\d\alpha_1\d\alpha_2(\alpha_1)^{\beta_1-1}(\alpha_2)^{\beta_2-1}U_G^{-D/2}\exp\left(-x\frac{V_G}{U_G}\right)\delta(x-x\sum_{i\in\mathcal{J}}\alpha_i). 
\end{align*}
Now we can see the delta function as acting on the Schwinger parameters. We use the well-known formula
\begin{equation*}
 \delta(f(X)) = \sum_{X_0|f(X_0)=0}\frac{\delta(X-X_0)}{|f'(X_0)|}
\end{equation*}
for a function $f$ having only simple zeros. In our case we take $f(X)=xX$ and $X=1-\sum_{i\in\mathcal{J}}\alpha_i$. Hence, after one more integration shift we arrive to
\begin{align*}
 I_G(q^2,\beta_1,\beta_2) & = \\
\frac{\pi^{D/2}}{\Gamma(\beta_1)\Gamma(\beta_2)}\int\d\alpha_1\d\alpha_2 & (\alpha_1)^{\beta_1-1}(\alpha_2)^{\beta_2-1}U_G^{-D/2}\delta(1-\sum_{i\in\mathcal{J}}\alpha_i)\int_0^{+\infty}\d xx^{\beta_1+\beta_2-D/2-1}\exp\left(-x\frac{V_G}{U_G}\right) 
\end{align*}
Now we can compute the integral over $x$ with one further change of variable: $X=xV_G/U_G$. Then the integral over $X$ is just the definition of Euler's $\Gamma$ function. Using the formula 
\eqref{formula_U_G} and \eqref{formula_V_G} for $U_G$ and $V_G$ and extracting the overall $q^2$ dependence of the integral we end up with 
\begin{equation} \label{I_G_interm}
 I_G(q^2,\beta_1,\beta_2) = (p^2)^{D/2-\beta_1-\beta_2}\frac{\Gamma(\beta_1+\beta_2-D/2)}{\Gamma(\beta_1)\Gamma(\beta_2)}\underbrace{\int\d\alpha_1\d\alpha_2\delta(1-\sum_{i\in\mathcal{J}}\alpha_i)\frac{(\alpha_1)^{D/2-\beta_2-1}(\alpha_2)^{D/2-\beta_1-1}}{(\alpha_1+\alpha_2)^{D-\beta_-\beta_2}}}_{:=I(\beta_1,\beta_2)}
\end{equation}
Choosing $\mathcal{J}=\{2\}$ we arrive to
\begin{align*}
 I(\beta_1,\beta_2) & = \int\d\alpha\frac{\alpha^{D/2-\beta_2-1}}{(\alpha+1)^{D-\beta_1-\beta_2}} \\
                    & = \int\d\alpha\frac{\alpha^{D/2-\beta_2-1}}{\Gamma(D-\beta_1-\beta_2)}\int_0^{+\infty}t^{D-\beta_1-\beta_2-1}e^{-t(\alpha+1)}\d t \\
		    & = \frac{1}{\Gamma(D-\beta_1-\beta_2)}\left(\int t^{D-\beta_1-\beta_2-1}e^{-t}\d t\right)\int\alpha^{D/2-\beta_2-1}e^{-\alpha_t}\d\alpha.
\end{align*}
We compute the last integral over $\alpha$ with one final change of variable $A=\alpha.t$ Then the integral over $A$ just gives a $\Gamma$ function and, after simplifications, we arrive to
\begin{align*}
 I(\beta_1,\beta_2) & = \frac{\Gamma(D/2-\beta_2)}{\Gamma(D-\beta_1-\beta_2)}\int t^{D/2-\beta_1-1}e^{-t}\d t \\
		    & = \frac{\Gamma(D/2-\beta_2)\Gamma(D/2-\beta_1)}{\Gamma(D-\beta_1-\beta_2)}
\end{align*}
which is symmetric under the exchange $\beta_1\leftrightarrow\beta_2$ as expected, although our choice of $\mathcal{J}$ has hidden this fact at some stage of the computation. Then, plugging 
$I(\beta_1,\beta_2)$ into \eqref{I_G_interm} we arrive to the famous result
\begin{equation}
 I_G(q^2,\beta_1,\beta_2) = (p^2)^{D/2-\beta_1-\beta_2}\pi^{D/2}\frac{\Gamma(D/2-\beta_2)\Gamma(D/2-\beta_1)\Gamma(D/2-\beta_0)}{\Gamma(\beta_1)\Gamma(\beta_2)\Gamma(\beta_0)}
\end{equation}
with $\beta_0:=D-\beta_1-\beta_2$. We can use this formula to explicitly write down the Schwinger--Dyson equation \eqref{SDnlin}. Evaluating the above integral at $\beta_1=1-x$, $\beta_2=1-y$ and 
$D=4$ we can compute the $I$ of \eqref{I_G_interm} to
\begin{equation*}
 \text{I}(q^2/\mu^2,x,y) = (q^2)^{x+y}\pi^{2}\frac{\Gamma(1+x)\Gamma(1+y)\Gamma(-x-y)}{\Gamma(1-x)\Gamma(1-y)\Gamma(2+x+y)}.
\end{equation*}
Then, using $(q^2)^{x+y}=e^{(x+y)L}$ and $-(x+y)\Gamma(-x-y)=\Gamma(1-x-y)$ we have the $H$ written in the Schwinger--Dyson equation \eqref{SDE}
\begin{equation} \label{def_H}
 H(x,y) = -(x+y)\frac{\Gamma(1+x)\Gamma(1+y)\Gamma(-x-y)}{\Gamma(1-x)\Gamma(1-y)\Gamma(2+x+y)} = \frac{\Gamma(1+x)\Gamma(1+y)\Gamma(1-x-y)}{\Gamma(1-x)\Gamma(1-y)\Gamma(2+x+y)}.
\end{equation}
Now, before studying the Schwinger--Dyson equation \eqref{SDE}, let us remark that we can write the function $H(x,y)$ as an exponential of a polynomial having odd Riemann's zeta as coefficients. 
Indeed, using the famous formula (which can be found, for example, in \cite{abramowitz})
\begin{equation*}
 \ln\Gamma(z+1) = -\gamma z+\sum_{k=2}^{+\infty}\frac{(-1)^k}{k}\zeta(k)z^k
\end{equation*}
where $\gamma$ is the Euler--Mascheroni constant, it is easy (but quite lengthy, thus we will not do it here) to show that
\begin{equation} \label{Hzeta}
 H(x,y) = \frac{1}{1+x+y}\exp\Bigl(2\sum_{k=1}^{+\infty}\frac{\zeta(2k+1)}{2k+1}\left((x+y)^{2k+1}-x^{2k+1}-y^{2k+1}\right)\Bigr).
\end{equation}
The Mellin transform was already written in this form in \cite{BeSc08}, so one expects only odd zeta values in the result. This will be a key coherency check in the next steps of our computation.

\section{Asymptotic solution}

\subsection{Contributions from individual poles} \label{contribution}

Solving brutally the equation \eqref{SDE} together with the renormalization group equation \eqref{recursion_gamma} seems hopeless due to the complicated structure of the one-loop Mellin transform 
$H(x,y)$. However, if we look carefully at the formula \eqref{def_H} we see that it has poles at $x;y=-k$, $k\in\mathbb{N}^*$ (we call these poles the simple ones) and at the lines $x+y=+k$, 
$k\in\mathbb{N}$ (the general poles). The simple poles are linked with the IR divergences of the loop integral while the general poles come from its UV divergences. Both kinds of pole
arise when, in the Mellin transform, a subgraph becomes scale invariant, as noticed in \cite{BeSc12}.

Let us justify these statements. First, the integral \eqref{I_a_calculer} can be written as integral over a projective space. Then, when a subgraph becomes scale invariant, the integrand of this 
projective integral is constant over a non-compact subspace, making the total integral divergent. About the link between the poles of the Mellin transform and IR/UV divergences, we just have to take $p<<q$ in 
\eqref{I_a_calculer}. Then
\begin{equation*}
 I \sim \int\frac{\d^4p}{(p^2)^{1-x}} \sim \int_0^{\epsilon}p^{1+2x}\d p
\end{equation*}
which is indeed divergent for $x=-k\in-\mathbb{N}^*$. For the UV divergences, let us take $p>>q$ in\eqref{I_a_calculer}. Then
\begin{equation*}
 I \sim \int\frac{\d^4p}{(p^2)^{2-x-y}} \sim \int_{\Lambda}^{+\infty}p^{2x+2y-1}\d p
\end{equation*}
which is indeed divergent if $x+y\in\mathbb{N}$. From this analysis, we also see that the first singularity ($x+y=0$) will be different from the others since it is the only logarithmic 
divergence. It will indeed call for a different treatment in the Borel approach of this problem, that will be detailed in the fourth chapter.

Now,  the simple poles can be expanded as:
\begin{equation}
 \frac{1}{k+x} = \frac{1}{k}\sum_{i=0}^{+\infty}\left(-\frac{x}{k}\right)^i,
\end{equation}
so that by \eqref{SDE}, the contribution of such a pole of the Mellin transform to the anomalous dimension transform is:
\begin{align} 
 F_k & := \left.\left(1+\sum_{n=1}^{+\infty}\frac{\gamma_n}{n!}\frac{\text{d}^n}{\text{d}x^n}\right)\frac{1}{k}\sum_{i=0}^{+\infty}\left(-\frac{x}{k}\right)^i\right|_{x=0} \nonumber \\
     & = \frac{1}{k}\left.\left(1+\sum_{n=1}^{+\infty}\sum_{i=1}^{+\infty}\frac{\gamma_n}{n!}\frac{\text{d}^n}{\text{d}x^n}\left(-\frac{x}{k}\right)^i\right|_{x=0}\right) \nonumber \\
     & = \frac{1}{k}\sum_{n=0}^{+\infty}\left(-\frac{1}{k}\right)^n\gamma_n \label{form_F}
\end{align}
with the  convention $\gamma_0=1$. Now we can use the recurrence relation \eqref{recursion_gamma} between \(\gamma_n\) and \(\gamma_{n+1}\):
\begin{align*}
 \gamma(1 + 3a\partial_a) F_k & = \gamma/k + \frac{1}{k}\sum_{n=1}^{+\infty}\left(-\frac{1}{k}\right)^n\gamma_{n+1} \\
			      & = \gamma/k - \sum_{n=2}^{+\infty}\left(-\frac{1}{k}\right)^n\gamma_{n} \\
			      & = \gamma/k -(F_k-1+\gamma/k).
\end{align*}
Hence we obtain a nice equation for $F_k$
\begin{equation} \label{equa_F}
 \gamma(1 + 3a\partial_a) F_k = -k F_k + 1.
\end{equation}

For the other poles, the situation is much more subtle. The idea is given in \cite{Be10a} but basically the numerators, i.e., the residues of $H(x,y)$ at these poles, must be taken in 
account from the start. Let us call $Q_k(x,y)$ the residue of $H(x,y)$  at $x+y=k$. Then the numerator at this pole is:
\begin{equation}
 N_k(\partial_{L_1},\partial_{L_2}) = Q_k(\partial_{L_1},\partial_{L_2}).
\end{equation}
The contribution $L_k$ of this pole to the anomalous dimension is then, according to \eqref{SDE}
\begin{equation} \label{def_Lk}
 L_k:=\sum_{n,m=0}^{+\infty}\frac{\gamma_n\gamma_m}{n!m!}\left.\frac{\text{d}^n}{\text{d}x^n}\frac{\text{d}^m}{\text{d}y^m}\frac{Q_k(x,y)}{k-x-y}\right|_{x=0,y=0}.
\end{equation}
This looks quite complicated. However, we are saved by noticing that $\gamma_n/n!=\partial_L^nG(L)/n!$ with $\partial_{L}$ the operator that differentiates with respect to $L$ and evaluates the 
result to zero. We define the same operator for $x$, written $\partial_x$. Then, we can use that, for two functions analytical in a neighborhood of the origin,
\begin{equation}
 f(\partial_L)g(L) = \sum_{n=0}^{+\infty}\frac{\partial_x^nf(x)}{n!}\partial_L^ng(L).
\end{equation}
This formula is a direct consequence of the Taylor expansion of $f$, and the function of a differential operator has to be read as a (Taylor) series of the differential operator. Now we can use 
twice this formula to get
\begin{align*}
 L_k & = \sum_{n=0}^{+\infty}\partial_{L_1}^nG(L_1)\frac{\partial_x^n}{n!}\frac{Q_k(x,\partial_{L_2})}{k-x-\partial_{L_2}}G(L_2) \\
     & = \frac{Q_k(\partial_{L_1},\partial_{L_2})}{k-\partial_{L_1}-\partial_{L_2}}G(L_1)G(L_2).
\end{align*}
Therefore, evaluating
\begin{equation*}
 \left(\frac{\partial}{\partial L_1}+\frac{\partial}{\partial L_2}\right)f(L_1)g(L_2) = g(L_2)\frac{\partial f(L_1)}{\partial L_1} + f(L_1)\frac{\partial g(L_2)}{\partial L_2}
\end{equation*}
at $L_1=L_2=L$ we deduce
\begin{equation*}
 \left.\left(\frac{\partial}{\partial L_1}+\frac{\partial}{\partial L_2}\right)f(L_1)g(L_2)\right|_{L_1=L_2=L} = \frac{\partial}{\partial L}(f(L)G(L)).
\end{equation*}
We can use this since we are indeed evaluating the R.H.S. of \eqref{def_Lk} at $L_1=L_2(=0)$. Hence we have
\begin{equation*}
 \frac{1}{k-\partial_{L_1}-\partial_{L_2}}G(L_1)G(L_2) = \frac{1}{k}\sum_{n=0}^{+\infty}\frac{1}{k^n}\partial_L^nG(L)^2.
\end{equation*}
Now, we can finally use the renormalization group equation \eqref{RGE_G2_WZ}:
\begin{align*}
 \partial_LG(L)^2 & = 2G(L)\partial_LG(L) \\
		  & = 2G(L)\gamma(1+3a\partial_a)G(L)|_{L=0} \\
		  & = \gamma(2+3a\partial_a)G(L)^2|_{L=0}.
\end{align*}
Thus, by induction, 
\begin{equation*}
 \frac{1}{k-\partial_{L_1}-\partial_{L_2}}G(L_1)G(L_2) = \frac{1}{k}\sum_{n=0}^{+\infty}\frac{1}{k^n}[\gamma(2+3a\partial_a)]^nG(L)^2|_{L=0} = \frac{1}{k-\gamma(2+3a\partial_a)}G(L_1)G(L_2)|_{L_1=L_2=0}.
\end{equation*}
Moving the operator $k-\gamma(2+3a\partial_a)$ on the L.H.S. of \eqref{def_Lk} we obtain a nice renormalization group-like equation for $L_k$:
\begin{equation} \label{equa_H}
 (k-2\gamma - 3\gamma a\partial_a)L_k = N_k(\partial_{L_1},\partial_{L_2})G(L_1)G(L_2)|_{L_1=L_2=0}.
\end{equation}
This was the last subtle part of this section. For the following computations, we will not write everything down explicitly, in order to keep the size of this chapter within a reasonable size.

\subsection{Derivation of the asymptotic solution}

The idea of \cite{BeSc08} is to approximate the function $H(x,y)$ by its first singularities. The equation \eqref{SDE} was then studied numerically. The same idea allows the author of \cite{Be10a} 
to analytically derive the asymptotic behavior of $\gamma$. We will here present the method of this last article, since it was generalized in \cite{BeCl13}, which is the goal of this chapter.

Now, let us approximate the function $H(x,y)$ by its first poles at $x=-1$, $y=-1$ and $x+y=+1$. This gives the following approximating function to use in \eqref{SDE}:
\begin{equation} \label{appr}
 h(x,y) = (1+xy)\left(\frac{1}{1+x} + \frac{1}{1+y}-1\right) + \frac{1}{2}\frac{xy}{1-x-y} + \frac{1}{2}xy.
\end{equation}
This means that we only use the contributions $F\equiv F_1$ of the poles $1/(1+x)$ and $1/(1+y)$ and $L\equiv L_1$ of the pole $xy/(1-x-y)$ to compute \(\gamma\). Then the renormalisation group 
equation \eqref{equa_F} gives
\begin{equation*}
  F = 1 - \gamma(3a\partial_a+1)F 
\end{equation*}
while from \eqref{equa_H} and $\partial_{L_1}\partial_{L_2}G(L_1)G(L_2)=\gamma^2$ we deduce
\begin{equation*}
 L = \gamma^2 + \gamma(3a\partial_a+2)L.
\end{equation*}
For the Schwinger--Dyson equation, we have to be slightly more careful. Using the differential operator $\mathcal{I}$ defined in \eqref{def_mathcal_I} we have
\begin{align*}
 & \mathcal{I}\left(\frac{1}{1+x} + \frac{1}{1+y}-1\right) = 2F-1 \\
 & \mathcal{I}\left(\frac{xy}{1-x-y}\right) = L \\
 & \mathcal{I}(xy) = \gamma^2.
\end{align*}
The most intricate case is obviously $\mathcal{I}\left(\frac{xy}{1+x}\right)$. Let us already write the most general formula 
\begin{equation} \label{mathcal_I_useful}
 \mathcal{I}\left(\frac{(xy)^n}{k+x}\right) = (-k)^n\gamma_n\left[F_k-\frac{1}{k}\sum_{i=0}^{n-1}\left(-\frac{1}{k}\right)^i\gamma_i\right]
\end{equation}
which can be directly read from
\begin{equation*}
 \frac{(xy)^n}{k+x} = (-k)^ny^n\left[\frac{1}{x+k}-\frac{1}{k}\sum_{i=0}^{n-1}\left(-\frac{x}{k}\right)^i\right].
\end{equation*}
Hence, all in all, from the approximation \eqref{appr}, the renormalization group equation and the Schwinger--Dyson equation reduce to a system of three coupled non-linear differential 
equations:
\begin{subequations}
 \begin{align}
  & F = 1 - \gamma(3a\partial_a+1)F , \label{SDE_simples_a} \\
  & L = \gamma^2 + \gamma(3a\partial_a+2)L  ,\\
  & \gamma = 2a F -a -2a\gamma( F-1) + \frac{1}{2}a(L-\gamma^2). \label{SDE_simples_c}.
 \end{align}
\end{subequations}
We look for a perturbative solution of these equations, and expand $F$, $L$ and $\gamma$ in powers of $a$: $F=\sum f_na^n$, $L=\sum l_na^n$ and $\gamma=\sum c_na^n$. Then we easily get the first 
coefficients:
\begin{align*}
 c_0=&0 \qquad c_1=1 \qquad c_2=-2 \\
 &f_0=1 \qquad f_1=-1 \\
 &~l_0=0 \qquad l_1=1.
\end{align*}
We will make the assumption that the $\{f_n\}$, the $\{l_n\}$ and the $\{c_n\}$ have a fast growth and keep only the dominant contributions. Then the equation for $\gamma$ is simply
\begin{equation*}
 c_{n+1} \simeq 2f_n + \frac{1}{2}l_n.
\end{equation*}
Hence we see that at least one of the two sequences $\{f_n\}$ and $\{l_n\}$ has a growth faster than the one of $\{c_n\}$. At order zero we have 
\begin{align*}
 f_{n+1} & \simeq -3nf_n \\
 l_{n+1} & \simeq 3nl_n. 
\end{align*}
Thus we see that the two series have a very different behavior and therefore do not talk to each other. So, at order one we have
$f_{n+1} \simeq -3nf_n + 6(n-1)f_{n-1} - f_n -c_{n+1} \simeq -(3n+5)f_n$ and $l_{n+1} \simeq 3nl_n - 6(n-1)l_{n-1} + 2l_n \simeq 3nl_n$. From this it is clear that the dominant contribution to 
$c_{n+1}$ is $2f_n$. Moreover, since $l_0=l_1=0$, we know even from the zero order result that we will observe that $l_n$ will be at most of the order of $f_{n-2}$, if there is no too special 
transient regime (this being quite a reasonnable assumption). Hence we find $c_{n+1}\simeq 2f_n \simeq-2(3(n-1)+5)f_{n-1}\simeq-(3n+2)c_n$. To summarize, we have the asymptotic behavior of the solution of the Schwinger--Dyson equation 
\eqref{SDE}
\begin{subequations}
\begin{align}
  & f_{n+1} \simeq -(3n+5) f_n, \\
  & l_{n+1} \simeq 3n l_n ,\\
  & c_{n+1} \simeq -(3n+2) c_n. 
\end{align}
\end{subequations}
And we will now turn our attention to the procedure detailed in \cite{BeCl13}, and see how to go further than this asymptotic behavior.

\section{Corrections to the asymptotics}

\subsection{Change of variables} \label{FirstOrder}

There is a priori no reason that forbids us to compute the $1/n$ corrections to the asymptotic solution above. However, two technical details make this analysis intractable in practice. First, 
we will have to include contributions from the others poles of $H(x,y)$. Moreover, the subdominant contributions are quite convoluted and is will be much less simple to find the next terms in the 
recursions. In practice, the $1/n$ order is doable, but the followings are very painful. If we want to do precise computations, this problem cries for another method.

Such a method was devised in \cite{BeCl13}: to simplify our calculations, we separate the alternating contributions to $(c_n)$ from the ones with a constant sign. We will define two symbols for 
this, one to encode the asymptotic behavior coming from $l_n$ and the other one for the asymptotic behavior coming from $f_n$. They will be defined through two series:
\begin{subequations}
 \begin{align}
  & A_{n+1} = -(3n+5)A_n \label{eq_serie_symbolesa} \\
  & B_{n+1} = 3nB_n \label{eq_serie_symbolesb}
 \end{align}
\end{subequations}
but we will only use the following two symbols corresponding to the formal series:
\begin{subequations}
 \begin{align}
  & A = \sum A_n a^n \\
  & B = \sum B_n a^n.
\end{align}
\end{subequations}
The relations \eqref{eq_serie_symbolesa}-\eqref{eq_serie_symbolesb} can be expressed as differential equations on the symbols $A$ and $B$:
\begin{subequations}
 \begin{align}
  & 3a^2\partial_a A = -A-5aA \label{relation_symbolesa} \\
  & 3a^2\partial_a B = B \label{relation_symbolesb}
 \end{align}
\end{subequations}
In fact, \eqref{eq_serie_symbolesa} and \eqref{eq_serie_symbolesb} do not entirely determine $A$ and $B$. We must have an initial condition at an order $n_0$. Then 
\eqref{relation_symbolesa}-\eqref{relation_symbolesb} are true up to a term of degree $n_0$. Therefore, $n_0$ shall be chosen higher than the degree to which we will compute the gamma function. 

Before using the relations \eqref{relation_symbolesa}-\eqref{relation_symbolesb}, let us derive them explicitly.
\begin{align*}
 3a^2\partial_a A & = 3a^2\sum_{n\geq n_0}nA_na^{n-1} \\
		  & = a\sum_{n\geq n_0}(-A_{n+1}-5A_n)a^n \qquad \text{thanks to \eqref{eq_serie_symbolesa}} \\
		  & = -(A-A_{n_0}a^{n_0}) - 5aA.
\end{align*}
Similarly we have for $B$
\begin{equation*}
 3a^2\partial_a B = \sum_{n\geq n_0}B_{n+1}a^{n+1} = B - B_{n_0}a^{n_0}.
\end{equation*}
In the Borel plane approach that will be presented in the next chapter, discarding the $a^{n_0}$ terms will be understood as neglecting the analytic parts of the Borel transform of $\gamma$ near 
one of its singularities, and we will see that it is actually the correct thing to do.

Although these relations are not totally exact they will drastically reduce the complexity of the computation of the corrections to the asymptotic behavior, allowing us to go up to the fifth order,
with computations of the same degree of complexity that those of the second order with the previous method. To do that, we will define nine unknown functions, which are the  
coefficients of \(A\) and \(B\) for the $F$, $L$ and $\gamma$ functions.
\begin{subequations}
 \begin{align}
  & F = f +Ag+Bh \label{redefinitiona} \\
  & L = l+Am+Bn \\
  & \gamma = a(c+Ad+Be) \label{redefinitionc}
 \end{align}
\end{subequations}
The $a$ in the ansatz for $\gamma$ comes from the $a$ in front of the function $H(x,y)$ in \eqref{SDE}. In what follows, $c=c(a)$ will be called the low order part of $\gamma$.

Now, one can rewrite the Schwinger--Dyson equations \eqref{SDE_simples_a}-\eqref{SDE_simples_c} in the language of this new set of functions, with the derivatives of the $A$ and $B$ symbols
being removed, thanks to the relation \eqref{relation_symbolesa} and \eqref{relation_symbolesb}. The new equations are found by saying that the coefficient of the symbol~$A$ and the one of the 
symbol~$B$ shall independently vanish, since the two divergences of the Mellin transform are of different nature and thereof do not talk to each other. Similarly, the term without any symbol 
should also independently vanish. Hence, this change of variables allows to efficiently separate the alternating part and the part of constant sign of the Mellin transform. It will drastically 
simplify the equations to solve since we will have three equations for each of the previous ones, with the overall system being of the same complexity than the one with the old formalism. 

Although separating the $A$ and $B$ terms sounds quite natural, we are not yet able to prove that it is the right thing to do. Once again, we will understand that in the Borel plane approach of 
the Wess--Zumino model. Then, separating the $A$ and $B$ terms will be understood as working in the neighborhood of different singularities of the Borel transform of $\gamma$.

For the time being, with this procedure, we end up with nine equations:
\begin{subequations}
\begin{align}
 & f+c(3a^2\partial_a+a)f=1 \label{eq_debut} \\
 & g+d(3a^2\partial_a+a)f+c(-1-4a+3a^2\partial_a)g=0 \label{eq_g}\\
 & h+e(3a^2\partial_a+a)f+c(1+a+3a^2\partial_a)h=0 \\
 l& =a^2c^2+ac(3a\partial_a+2)l \\
 m& =2a^2dc+c(3a^2\partial_a-1-3a)m+ad(3a\partial_a+2)l \\
 n& =2a^2ec+c(3a^2\partial_a+1+2a)n+ae(3a\partial_a+2)l \label{eq_n}\\
 c& =2f-1-2ac(f-1)-\frac{1}{2}(l+a^2c^2) \\
 d& =2g+\frac{1}{2}am+2ad(1-f)-ac(2g+ad) \\
 e& =2h+\frac{1}{2}an+2ae(1-f)-ae(2f+ac) \label{eq_fin}
\end{align}
\end{subequations}
There are obviously more terms into the expansion of the equations \eqref{SDE_simples_a}-\eqref{SDE_simples_c}, proportional to  $A^2$, $AB$,~$B^2$, but they will not be considered: if \(A\) and 
\(B\) begin by a large number of vanishing coefficients, they correspond to corrections of very high order.

The perturbative solution of these equations goes as follows. We write:
\begin{equation}
 c(a)=\sum_{n=0}^{+\infty}c_na^n,
\end{equation}
and similarly for all the other functions. At each order, the equations should be solved in the right order: one shall first solve the equations for $f$ and $l$, then for $c$, then for $g$, $m$ 
and $h$, $n$, and finally for $d$ and $e$. Following this procedure, one ends up with the solution up to the order $a^1$.
\begin{align*}
  & f(a) = 1-a\\
  & g(a) = g_0+g_1a\\
  & h(a) = -\frac{1}{4}an_0\\
  & l(a) = 0\\
  & m(a) = 0\\
  & n(a) = n_0+n_1a\\
  & c(a) = 1-2a\\
  & d(a) = 2g_0+(-2g_0+2g_1)a\\
  & e(a) = \frac{1}{2}n_0+\frac{1}{2}(n_1-n_0)a  
\end{align*}
Here the assumption of fast growth of the series which was done in \cite{Be10a} is not necessary since the symbols $A$ and $B$ take care of the necessary properties.

The coefficients $g_1$ and $n_1$ are not specified at this stage. It is a general feature of this parametrization that one needs to go at the order $a^{p+1}$ to 
fix the parameters $g_p$ and $n_p$. Indeed, since $c_0=1$, the $a^n$ order in the equation \eqref{eq_g} is:
\begin{equation*}
 g_n + (...)-c_0g_n + (...) = 0.
\end{equation*}
therefore does not depend on $g_n$, with a similar phenomenon appearing in \eqref{eq_n}. However, the next order of equations \eqref{eq_g} and \eqref{eq_n} is not hard and does not involve 
higher coefficients of \(d\) or \(e\), so that we obtain the solution (up to the order $a^1$) to the equations \eqref{SDE_simples_a}-\eqref{SDE_simples_c} with only two unconstrained parameters. 
We however need the values of the next order for \(c\).
\begin{subequations}
 \begin{align}
  & f(a) = 1-a\\
  & g(a) = g_0\left(1+\frac{16}{3}a\right)\\
  & h(a) = -\frac{1}{4}an_0\\
  & l(a) = 0\\
  & m(a) = 0\\
  & n(a) = n_0\left(1-\frac{11}{3}a\right)\\
  & c(a) = 1-2a\\
  & d(a) = 2g_0\left(1+\frac{13}{3}a\right)\\
  & e(a) = \frac{1}{2}n_0\left(1-\frac{14}{3}a\right)
 \end{align}
\end{subequations}
The fact that there remain two unconstrained parameters, $g_0$ and $n_0$, is not really surprising since they  were already present in the former formalism, where the asymptotic behavior was 
inferred from the ratio of successive coefficients of the Taylor series. Since only ratios could be computed, the overall factors in the asymptotic behavior of the series for \(F\) and \(L\) 
are unconstrained. In this new formalism, equations stemming from the part linear in \(A\) are linear in the coefficients \(d\), \(g\) and \(m\) of \(A\)  in the unknown functions: if there is 
any non trivial solution, all its multiples are also solutions. Hence, although we used here the analysis of the previous subsection (which is from \cite{Be10a}), the right induction to use for 
the coefficients $A_n$ could have been found from the requirement of the existence of a non-trivial solution. This also justifies \`a posteriori the analysis of the previous subsection: it indeed 
gives the right induction relations, the ones that give non-trivial solutions to \eqref{SDE_simples_a}-\eqref{SDE_simples_c}. Analogous statements hold for the terms proportional to \(B\).

Up to now, we have only obtained the first two coefficients of each function, since the other poles of the $ H(x,y)$ function will contribute to the next terms. This is the subject of the next 
subsections.

\subsection{Taking care of every poles: a guideline}

To go further we must include the contributions of all the poles of the Mellin transform $H(x,y)$ to the $\gamma$ function. The equations for $F$ and $L$ do not change. Then we have to compute the 
residues of $H(x,y)$ at its various poles. They are not difficult to derive, especially since Euler's $\Gamma$ function has only simple poles at the negative integers, with residue $(-1)^k/k!$ at 
the pole in $z=-k$. However, their derivation is quite lengthy and is left as an exercise for the reader's Ph.D. student. The residues are given for $k\ge2$ by
\begin{align*}
 & \text{Res}(H,x=-k) = \frac{-y}{k-1}\prod_{i=1}^k\left(1-\frac{y}{i}\right)\prod_{i=1}^{k-2}\left(1-\frac{y}{i}\right) ,\\
 & \text{Res}(H,x=k-y) = \frac{(k-y)y}{k(k+1)}\prod_{i=1}^{k-1}\left(1-\frac{(k-y)y}{i(k-i)}\right),
\end{align*}
with the convention $\prod_{i=1}^{k-2}=1$ for $k=2$. In order to simplify the computations, we use the fact that the first polynomial is defined at $x=-k$ and the 
second at $x=k-y$ to make the numerators symmetric in $x$ and $y$. One gets:
\begin{align}
 & \text{Res}(H,x=-k) = \frac{xy}{k(k-1)}\prod_{i=1}^k\left(1+\frac{xy}{ki}\right)\prod_{i=1}^{k-2}\left(1+\frac{xy}{ki}\right) = P_k(xy) \label{residusPk}\\
 & \text{Res}(H,x=k-y) = \frac{xy}{k(k+1)}\prod_{i=1}^{k-1}\left(1-\frac{xy}{i(k-i)}\right) = Q_k(xy) \label{residusQk}.
\end{align}
The coefficients of those polynomials will be of interest. Hence we define them in the following way:
\begin{align*}
 & P_k(X) = \sum_{n=1}^{2k-1}p_{k,n}X^n \\
 & Q_k(X) = \sum_{n=1}^{k}q_{k,n}X^n.
\end{align*}
Notice that the residues at the poles in $x=-k$ and $y=-k$ are exactly the same since $H(x,y)$ is symmetric under the exchange of $x$ and~$y$. We therefore write:
\begin{eqnarray} \label{Mellin_tsfo_complete}
 H(x,y) &=& (1+xy)\left(\frac{1}{1+x} + \frac{1}{1+y}-1\right) + \frac{1}{2}\frac{xy}{1-x-y} \\
  &&+\sum_{k=2}^{+\infty}\left(\frac{1}{k+x} + \frac{1}{k+y} -\frac{1}{k}\right)P_k(xy) + \sum_{k=2}^{+\infty}\frac{Q_k(xy)}{k-x-y} + \tilde{H}(x,y). \nonumber
\end{eqnarray}
The \(-1/k\) term coming with the poles at \(x=-k\) and \(y=-k\) does not contribute to the singularities but appears necessary to obtain the exact Taylor expansion of \(H(x,y)\) around the 
origin. Moreover, $\tilde{H}(x,y)$ shall be a holomorphic function and is the difference between $H(x,y)$ such as written in (\ref{def_H}) and the above expansion of sums over the poles. We have 
checked that $\tilde{H}(x,y)$ shall be of degree at least 10. We have also verified that some infinite families of derivatives of $\tilde{H}(x,y)$ vanish at the origin. Hence we will make the 
conjecture $\tilde{H}(x,y)=0$ in the following, which is the raison d'\^etre of the $-1/k$. We are fairly confident that this conjecture is true, but it would be pleasant to have a proof of it, 
which would give a stronger ground to these computations.

To obtain the anomalous dimension of the theory at a given order $p$, one must include additional terms of the Schwinger--Dyson equations to deal with all the contributions at this order. Indeed, 
we have seen in subsection \ref{contribution} that the equations for $L_k$ and $\gamma$ depend on the residues of $H(x,y)$. Since those residues are polynomial, and because of the definition of 
the transformation $\mathcal{I}$, we can truncate those equations to take care only of the terms which will contribute to a given order. In other words: at a given order $p$, we will have to 
replace the function $H(x,y)$ by a more precise approximation than $h(x,y)$. 

Let us write $P_{k,p}(X)$ for the polynomial $P_k(X)$ truncated to a degree less or equal to \(p\), and  $Q_{k,p}(X)$ for the polynomial $Q_k(X)$ similarly truncated. Then the equation 
\eqref{equa_H} for \(L_k\) becomes, for the order $p$
\begin{equation}
  (k-2\gamma -3\gamma a \partial_a ) L_k=Q_{k,p-1}(\partial_{L_1}\partial_{L_2})G(L_1)G(L_2)|_{L_1=L_2=0},
\end{equation}
and the right hand side of the equation \eqref{SDE_simples_c} gets an additional part in its right hand side,
\begin{equation}
 a\sum_{k=2}^{+\infty}L_k +a\sum_{k=2}^{+\infty} {\I}\left(P_{k,p-1}(xy)\left[\frac{2}{k+x}-\frac{1}{k}\right]\right)
\end{equation}
The $2$ in the equation for $\gamma$ is there because for each $k$, $H(x,y)$ has a pole $x=-k$ and at $y=-k$ which give equal contributions.  
${\I}$ is the linear transform defined in subsection \ref{methods}, which reduced in this pole expansion to:
\begin{subequations}
 \begin{align}
 & {\I}\left(\frac{(xy)^n}{k+x}\right) = (-k)^n\gamma_n\left[F_k-\frac{1}{k}\sum_{i=0}^{n-1}\left(-\frac{1}{k}\right)^i\gamma_i\right], \\
 & {\I}\left((xy)^n\right) = \gamma_n^2
 \end{align}
\end{subequations}
Where we have used \eqref{mathcal_I_useful} once again.

With this definition of ${\I}$ we see that only the term $(xy)^{p-1}$ is needed for the solution at the order $a^{2p+1}$ of \(\gamma\) since the leading term of $a {\I}\left((xy)^p\right)$ which 
is $a\gamma_n^2$ is of order $2p+1$. However, when looking at the coefficients of the symbols \(A\) and~\(B\) in \(\gamma_n\), they still are proportional to \(a\). The term \( a(xy)^p \) will 
therefore contribute terms of order \(a^{p+2}\).

Last, but not least, the equation for $F_k$ is similar to the one for $F$, and is given in \eqref{equa_F}. This equation will never change, whatever the order one needs, simply because any new 
term which might affect $F_k$ will come through changes in~$\gamma$. The modifications to the $L_k$ functions instead come also from changes to the equation of $L_k$. 

\subsection{Solutions up to the fifth order}

To start our study of the effect of the infinitude of poles, we will take only the $(xy)$ contribution to each pole. Hence the Schwinger--Dyson equation comes from the following approximate Mellin 
transform:
\begin{eqnarray}
 \frac{1}{a}h(x,y) &=& (1+xy)\left(\frac{1}{1+x} + \frac{1}{1+y}-1\right) + \frac{1}{2}\frac{xy}{1-x-y} \\
 	&& +\sum_{k=2}^{+\infty}\left(\frac{1}{k+x} + \frac{1}{k+y} -\frac{1}{k}\right) \frac{xy}{k(k-1)} + \sum_{k=2}^{+\infty}\frac{1}{k-x-y}\frac{xy}{k(k+1)}. \nonumber
\end{eqnarray}
In the equation for $L_k$ we use only the linear term for the numerator in \eqref{equa_H}:
\begin{equation}
 [k-\gamma(2+3a\partial_a)]L_k = \frac{1}{k(k+1)}\partial_{L_1}\partial_{L_2}G(L_1)G(L_2)|_{L_1=L_2=0} =  \frac{1}{k(k+1)}\gamma^2
\end{equation}
For the Schwinger--Dyson equation, one has simply to apply \eqref{SDE}. Some series arise, which are easily computable.  So we end up with five coupled non-linear partial 
differential equations to solve,
\begin{subequations}
\begin{align}
 & F = 1 - \gamma(3a\partial_a+1)F  \label{equations1} \\
 & L = \gamma^2 + \gamma(3a\partial_a+2)L \\
 & kF_k = 1 - \gamma(1+3a\partial_a)F_k \\
 & kL_k = \frac{1}{k(k+1)}\gamma^2 + \gamma(2+3a\partial_a)L_k \\
 & \gamma = 2aF - a -2a\gamma(F-1) + \frac{1}{2}aL +2a\gamma -a\gamma^2[3-\zeta(2)] + a\sum_{k=2}^{+\infty}\left(L_k-2\gamma\frac{F_k}{k-1}\right) \label{equation5}
\end{align}
\end{subequations}
with $\zeta$ being Riemann's zeta function. One may be worried by the $\zeta(2)$ in the last equation, due to the remark made at the end of section \ref{one_loop}. However, the sum will give 
compensating terms and this $\zeta(2)$ will not appear any more in the result. This provides a check that the calculations are correct.

Now, as in Section \ref{FirstOrder}, we can define the functions $f_k$, $g_k$ and $h_k$, and $l_k$, $m_k$ and $n_k$ for the functions $F_k$ and $L_k$. Then the system of equations
\eqref{equations1}-\eqref{equation5} shall be rewritten for those functions. One ends up with fifteen coupled partial non-linear differential equations for fifteen 
functions, that we will not write down explicitly. 

Solving those equations should be done in the same order than in the section \ref{methods}, with the equations for $f_k$ and $l_k$ solved with the equations for
$f$ and $l$, and similarly those for $g_k$, $m_k$, $h_k$ and~$n_k$ with \(h\) and~\(m\). As in the previous case, the order two terms of $n(a)$ and~$g(a)$ are not fixed by the $a^2$ equations. However we only need the order three terms of the equations for \(g\) and \(n\) to fix this,  while the most tedious equations to 
solve at a given order are those for $d$ and~$e$ which involve sums over \(k\). Fixing those two last coefficients thus does not add much complexity. Moreover, we do not need to add more terms in the $\gamma$ equation, since we are looking for the equation on the coefficient $c$ of $\gamma$, and the higher order 
(such as the $(xy)^2$ term) will act on the $d$ and $e$ terms only, thanks to the relations \eqref{relation_symbolesa} and \eqref{relation_symbolesb}. The already computed orders $a^0$ and $a^1$ 
are unchanged by the addition of the new terms as expected and, all computations being done, we end up with the solution to the equations \eqref{equations1}-\eqref{equation5} up to the order 
$a^2$.
\begin{subequations}
\begin{align}
  & f(a) = 1-a+6a^2\\
  & g(a) = g_0\left(1+\frac{16}{3}a+\frac{2}{9}\left[-65+12\zeta(3)\right]a^2\right)\\
  & h(a) = n_0\left(-\frac{1}{4}a+\frac{29}{12}a^2\right)\\
  & l(a) = a^2\\
  & m(a) = 2a^2g_0\\
  & n(a) = n_0\left(1-\frac{11}{3}a+\frac{8}{9}\left[28-3\zeta(3)\right] a^2\right)\\
  & f_k(a) = \frac{1}{k}-\frac{1}{k^2}a+\frac{2(2+k)}{k^3}a^2 \\
  & g_k(a) = -\frac{2g_0}{k(k-1)}\left(a+\frac{12-28k+13k^2}{3k(k-1)}a^2\right) \\
  & h_k(a) = \frac{n_0}{k(k+1)}\left(-\frac{a}{2}+\frac{6+16k+7k^2}{3k(k+1)}a^2\right) \\
  & l_k(a) = \frac{a^2}{k^2(k+1)} \\
  & m_k(a) = \frac{4g_0}{k(k+1)^2}a^2 \\
  & n_k(a) = \frac{n_0}{(k-1)k(k+1)}a^2 \\
  & c(a) = 1-2a+14a^2\\
  & d(a) = 2g_0\left(1+\frac{13}{3}a+\frac{2}{9}\left[-71+12\zeta(3)\right]a^2\right)\\
  & e(a) = n_0\left(\frac{1}{2}-\frac{7}{3}a+\frac{1}{6}\left[\frac{329}{3}-8\zeta(3)\right]a^2\right)
 \end{align}
\end{subequations}
with the $n_2$ and $g_2$ being fixed by a computation at the $a^3$ order.\footnote{We use the same notations to denote the functions \(g_k(a)\), appearing as factors of \(A\) in \(F_k\) and the 
coefficients \(g_i\) of the function \(g(a)\), in order to keep a strong parallelism between the expansion of \(F\) and \(F_k\), and similarly for \(n_k(a)\) and~\(n_i\). We hope that the context 
make the two different usages clear.}
\begin{align*}
 & g_2 = \frac{2}{9}\left[-65+12\zeta(3)\right]g_0 \\
 & n_2 = \frac{8}{9}\left[28-3\zeta(3)\right]n_0
\end{align*}
So this order is a nice check of our procedure since the two first order are unchanged and $\zeta(2)$ disappears everywhere as expected. The next orders are not more difficult to reach, just more 
tedious: we will be quite sketchy.

For the $a^3$ terms in $c$, $d$ and $e$ (so the fourth order of $\gamma$) we first have to determine the coefficients of \(P_{k,2}(X)\), which appears in the equation for $\gamma$. It is simply:
\begin{equation}
 P_{k,2}(X) = \frac X {k(k-1)} +  \frac{X^2}{k^2(k-1)}\left(H_k+H_{k-2}\right).
\end{equation}
This is true for all values of $k$ with the convention that $H_k$, the $k^{th}$ harmonic number, is defined by $H_0=0$, $H_{k}=H_{k-1}+1/k$. Then the equation for $\gamma$ becomes:
\begin{align}
 \gamma & = 2aF - a -2a\gamma(F-1) + \frac{1}{2}aL  + a\sum_{k=2}^{+\infty}\left(L_k-2\gamma\frac{F_k}{k-1} + 2\gamma_2F_k\frac{H_k+H_{k-2}}{k-1}\right) +2a\gamma \nonumber \\
\llcorner &  -a\gamma^2[3-\zeta(2)] -6a \gamma_2+  2a\gamma_2\gamma\bigl[6-\zeta(2)-3\zeta(3)\bigr] - a\gamma_2^2 
	\bigl [10-2\zeta(2)-\textstyle{\frac{3}{2}}\zeta(4)-4\zeta(3)\bigr].
\end{align}
The only other equation to be changed is the one for $L_k$ which gets a new term
\begin{equation*}
 Q_{k,2}(X)=\frac1{k(k+1)}X -\frac{2H_{k-1}}{k^2(k+1)}X^2
\end{equation*}
and so the equation for $L_k$ is now:
\begin{equation}
 kL_k = \gamma(3a\partial_a+2)L_k+\frac{1}{k(k+1)} \gamma^2 -\frac{2H_{k-1}}{k^2(k+1)}\gamma_2^2.
\end{equation}
These equations (together with the three equations for the other functions) can now be solved at the third order. For the sake of readability, we will not write the fifteen functions at this 
order, but only the functions which are a part of $\gamma$. One ends up with a solution without any even zetas,
\begin{subequations}
\begin{align} \label{ordre3}
  & c(a) = 1-2a+14a^2+16\left[\zeta(3)-10\right]a^3 \\
  & d(a) = g_0\left(2+\frac{26}{3}a+\left[-\frac{284}{9}+\frac{16}{3}\zeta(3)\right]a^2+\frac{4}{9}\left[\frac{7873}{9}-134\zeta(3)\right]a^3\right) \\
  & e(a) = n_0\left(\frac{1}{2}-\frac{7}{3}a+\frac{1}{6}\left[\frac{329}{3}-8\zeta(3)\right]a^2+\frac{1}{9}\left[-\frac{33889}{18}+188\zeta(3)\right]a^3\right)
\end{align}
\end{subequations}
We have already included the  coefficients $g_3$ and $n_3$.
\begin{align*}
 & g_3 = g_0\frac{8}{9}\left[\frac{1687}{9}-8\zeta(3)\right] \\
 & n_3 = \frac{n_0}{81}\left[-22207+3168\zeta(3)\right]
\end{align*}
Again, the disappearance of every even zeta values from the final result is a very useful fact to detect mistakes in the computations.

For the $a^4$ order, we need to compute the coefficient of degree two in a product of linear terms. We use:
\begin{equation*}
 \prod_{i=1}^n\left(1+\alpha_iX\right) = 1+X\sum_{i=1}^n\alpha_i+\frac{X^2}{2}\left(\Bigl[\sum_{i=1}^n\alpha_i\Bigr]^2-\sum_{i=1}^n\alpha_i^2\right)+\mathcal{O}(X^3).
\end{equation*}
One finds easily the coefficient of the cubic term of the $P_k(X)$ polynomials \eqref{residusPk}
\begin{equation} \label{ordre3}
	p_{k,3} =  \frac{1}{k^3(k-1)}\biggl(H_kH_{k-2}+\sum_{1\leq i<j\leq k}\frac{1}{ij}+\sum_{1\leq i<j\leq k-2}\frac{1}{ij}\biggr)
\end{equation}
The last sum is not defined for $k=2$ and $k=3$ and we will  write those two cases separately, with the values
\begin{align*}
  p_{2,3} &= \frac{1}{16} \\ 
  p_{3,3} &= \frac{17}{324}
\end{align*}
This phenomenon of a general term undefined for the first coefficients will appear for the coefficient of any term $(xy)^n$, and we would have to deal with it at any further order.
\footnote{However, one might notice that we find the right values of the cubic terms of those two first polynomials if we simply set to zero the undefined term in (\ref{ordre3}). Since we don't 
have a proof of this effect being true at any order, it seemed to be simpler to separate the first terms off the others.}

Now, using
\begin{equation}
  \frac{(xy)^3}{k+x} = -k^3y^3\left(k\frac{1}{k+x}-1+\frac{x}{k}-\frac{x^2}{k^2}\right)  \longrightarrow -k^2\gamma_3\left(kF_k-1+\frac{\gamma}{k}-\frac{\gamma_2}{k^2}\right)
\end{equation}
we end up with the following equation for $\gamma$:
\begin{equation}
 \gamma=\text{[orders 0, 1, 2 and 3]}+a\gamma_3\sum_{k=2}^{+\infty}\left[-2k^3F_k + 2k^2 - 2k\gamma + 2\gamma_2 - \frac{\gamma_3}{k}\right]p_{k,3}+a\sum_{k=2}^{+\infty}L_k.
\end{equation}
We can use \eqref{ordre3} in there and the values of the \(p_{k,3}\) in this equation. Many series will arise, which could all be computed in terms of zetas, multizetas and rational numbers. These 
computations have some interesting features, justifying working them out. However, such a computation is complex and it is a better strategy to not separately sum each series, but rather to 
combine the generic terms of the series. We need the following expansion:
\begin{equation*}
 -2k^3F_k + 2k^2 - 2k\gamma + 2\gamma_2 - \frac{\gamma_3}{k} = S_k + T_kA + U_kB.
\end{equation*}
The series $S_k$, $T_k$ and $U_k$ are not so complicated anymore, since we only need their dominant terms:
\begin{subequations}
\begin{align}
  & S_k = \frac{28}{k}a^3+\mathcal{O}(a^4) \\ 
  & T_k = ag_0\left[4\frac{k^2}{k-1}-4k-4-\frac{2}{k}\right]+\mathcal{O}(a^2) = ag_0\left[\frac{4}{k-1}-\frac{2}{k}\right]+\mathcal{O}(a^2) \\
  & U_k = an_0\left[\frac{k^2}{k+1}-k+1-\frac{1}{2k}\right]+\mathcal{O}(a^2) = an_0\left[\frac{1}{k+1}-\frac{1}{2k}\right]+\mathcal{O}(a^2)
\end{align}
\end{subequations}
The higher order terms are not needed here since there is a $\gamma_3$ in front of those sums which starts at the $a^3$ order for its low order part and at the $a^1$  one for the other parts:
\begin{equation*}
 \gamma_3 = 28a^3+\mathcal{O}(a^4) + 2g_0A\left((a+\mathcal{O}(a^2)\right)+\frac{1}{2}n_0B\left(a+\mathcal{O}(a^2)\right).
\end{equation*}
These results only depend on the lowest order values of \(\gamma\) together with the renormalisation group equation \eqref{recursion_gamma} and the relations between the symbols $A$, $B$, 
and their derivatives \eqref{relation_symbolesa}-\eqref{relation_symbolesb}.

The contribution of the $(xy)^3$ term for $c_4$ vanishes since there is no $a^4$ term without $A$ and $B$. We are simply left with the following series:
\begin{align}
 R_1 & = a\sum_{k=2}^{+\infty}\left[2g_0S_k+28a^3T_k\right]p_{k,3} \nonumber \\
     & = 112g_0a^4\left[\frac{115}{1296}+\sum_{k=4}^{+\infty}\frac{p_{k,3}}{k-1}\right] \\
 R_2 & = a\sum_{k=2}^{+\infty}\left[\frac{1}{2}n_0S_k+28a^3U_k\right]p_{k,3}\nonumber \\
     & = 28n_0a^4\left[\frac{71}{2592}+\sum_{k=4}^{+\infty}\frac{p_{k,3}}{k+1}\right].
\end{align}
Hence we got $\gamma_3\sum_{k=2}^{+\infty}\left[-2k^3F_k + 2k^2 - 2k\gamma + 2\gamma_2 - \frac{\gamma_3}{k}\right]p_{k,3}=R_1A+R_2B+\mathcal{O}(a^5)$. Those sums are still not very simple, but 
much simpler than the ones we had before. The Schwinger--Dyson equation is now written in a very compact form:
\begin{equation} \label{SDE_fin}
 \gamma=\text{[orders 0, 1, 2 and 3]}+a(R_1A+R_2B)+a\sum_{k=2}^{+\infty}L_k
\end{equation}
One still has to add the $(xy)^3$ term into the equation of $L_k$, which depends on the coefficient cubic \(q_{k,3}\) of \(Q_k(X)\):
\begin{equation}
 q_{k,3}=\frac{1}{k^3(k+1)}\left(2H_{k-1}^2-H_{k-1,2}-2\frac{H_{k-1}}{k}\right).
\end{equation}
Here, $H_{k,n}$ denote the generalized harmonic numbers defined by $H_{0,n} = 0$ and $H_{k,n} = H_{k-1,n} + 1/k^n$.
Hence, at this order, the equation for $L_k$ becomes:
\begin{equation}
 kL_k = \gamma(3a\partial_a+2)L_k+\frac{1}{k(k+1)}\gamma^2-\frac{2H_{k-1}}{k^2(k+1)}\gamma_2^2
 +q_{k,3}\left(\gamma_3\right)^2.
\end{equation}
The equations for $F$, $L$ and $F_k$ are unchanged and we end up, after these simplifications, with a system of five coupled equations.
\begin{subequations}
\begin{align} \label{equations12}
 & F = 1 - \gamma(3a\partial_a+1)F \\
 & L = \gamma^2 + \gamma(3a\partial_a+2)L \\
 & kF_k = 1 - (1+3a\partial_a)F_k \\
 & kL_k = \gamma(3a\partial_a+2)L_k+\gamma^2\frac{1}{k(k+1)}-\frac{2H_{k-1}}{k^2(k+1)}\gamma_2^2
 	+q_{k,3}\left(\gamma_3\right)^2\\
 & \gamma=\text{[orders 0, 1, 2 and 3]}+a(R_1A+R_2B)+a\sum_{k=2}^{+\infty}L_k \label{equation52}
\end{align}
\end{subequations}
We can solve them in order to get the anomalous dimension of the massless Wess--Zumino model up to the fourth order. For the sake of readability, we will write only this fourth order:
\begin{subequations}
 \begin{align}
  & c_4 = 2444-328\zeta(3) \label{resultat41} \\
  & d_4 = 2g_4 + \frac{1}{81}g_0\left[-73720 +65952\zeta(3)\right] \\ 
  & e_4 = \frac{1}{2}n_4 + n_0\left[\frac{384227}{324}-\frac{974}{9}\zeta(3)\right] \label{resultat42}
 \end{align} 
\end{subequations}
One striking observation about our result \eqref{resultat41} - \eqref{resultat42} is that they contain only rational numbers and $\zeta(3)$, when the summation over \(k\) gives multizetas of 
weight 5 for \(d_4\) and~\(e_4\). However the highest weight terms cancel each others, so that the weight is not higher than the one for \(c_4\), where the \((xy)^3\) does not contribute and 
every sum is of weight smaller than 4.

Now, let us look at the final solution. The coefficients $g_4$ and $n_4$ could be fixed by going to the fifth order, thanks to the equations of $F$ and $L$ respectively.
\begin{subequations}
 \begin{align}
  & g_4 = g_0\left[-\frac{652516}{243} + \frac{5212}{27}\zeta(3) + 168\zeta(5) + \frac{32}{9}\zeta(3)^2\right] \\
  & n_4 = n_0\left[\frac{3741119}{1944} - \frac{8522}{27}\zeta(3) - 84\zeta(5) + \frac{16}{9}\zeta(3)^2\right]
 \end{align}
\end{subequations}
So we end up with the final values for the fourth order of the anomalous dimension of the massless Wess--Zumino model.
\begin{subequations}
 \begin{align} \label{sol_a4}
  & c_4 = 2444-328\zeta(3) \\
  & d_4 = -g_0\left[\frac{1526192}{243} +\frac{32408}{27}\zeta(3)+ 336\zeta(5) + \frac{64}{9}\zeta(3)^2\right] \\ 
  & e_4 = n_0\left[\frac{6046481}{1944}-\frac{11444}{27}\zeta(3)-84\zeta(5)+\frac{16}{9}\zeta(3)^2\right] \label{sol_a4_fin}
 \end{align} 
\end{subequations}
So $d_4$ and $e_4$ finally involve some $\zeta(5)$ and $\zeta(3)^2$. This highest weight terms stem from the order~5 in the equation for \(g(a)\) (resp.\ \(n(a)\)), which contains \(\zeta(5)\) 
through \(g_0 c_5\) (resp.\ \(n_0 c_5\)) and \(\zeta(3)^2\) through \(g_2 c_3\) (resp.\ \(n_2 c_3\)).  The similarity of their origin explains that this highest weight terms differ simply by a 
factor \(\pm4\). 

Now, let us go back to the weight of the zetas in the low order part. We want to prove that the weight in \(\zeta\) of~\(c_p\), which is the coefficient of \(a^{p+1}\) in~\(\gamma\), is \(p\) or 
less. If we suppose  this property true, the renormalisation group equation \eqref{recursion_gamma} allows to show that the coefficient of \(a^{n+p}\) in $\gamma_n$ is of maximal weight \(p\). 
From the expression \eqref{Hzeta} of $H(x,y)$, it is clear that the derivative of total order \(k\) of \(H(x,y)\) has maximal weight \(k\). The same upper bound on the weight can be deduced from 
its expression as a sum over the poles, but it is however highly non trivial in this case that only products of zeta values at odd integers appear. It then follows that for every term 
\(h_{n,m}\gamma_n\gamma_m\) in Eq. \eqref{SDE}, the terms of degree \(p+1\) in \(a\) is of maximal weight less or equal to \(p\). Our hypothesis on the weight of the zetas appearing in 
\(\gamma\) can therefore be proved by induction.

Now, for the parts of $\gamma$ proportional to $A$ and $B$ the reasoning made for the low order part does not hold. The highest weight terms in the sums over the poles cancel in the terms we have 
studied, so that we do not have the weights \(2p\) for the coefficient of \(a^p\). One can try to guess what happens in the following orders, but it highly technical (if not purely impossible) to 
prove anything with this technique. We will see that the Borel plane approach does not suffer from the same problem. Hence, before turning our attention the this topic, let us check that this 
analysis is coherent with the numerical results of \cite{BeSc08}.

\begin{figure}[h!]
\caption{$2g_0$ from different fits.} \label{fig_g0}
 \includegraphics[scale=0.5]{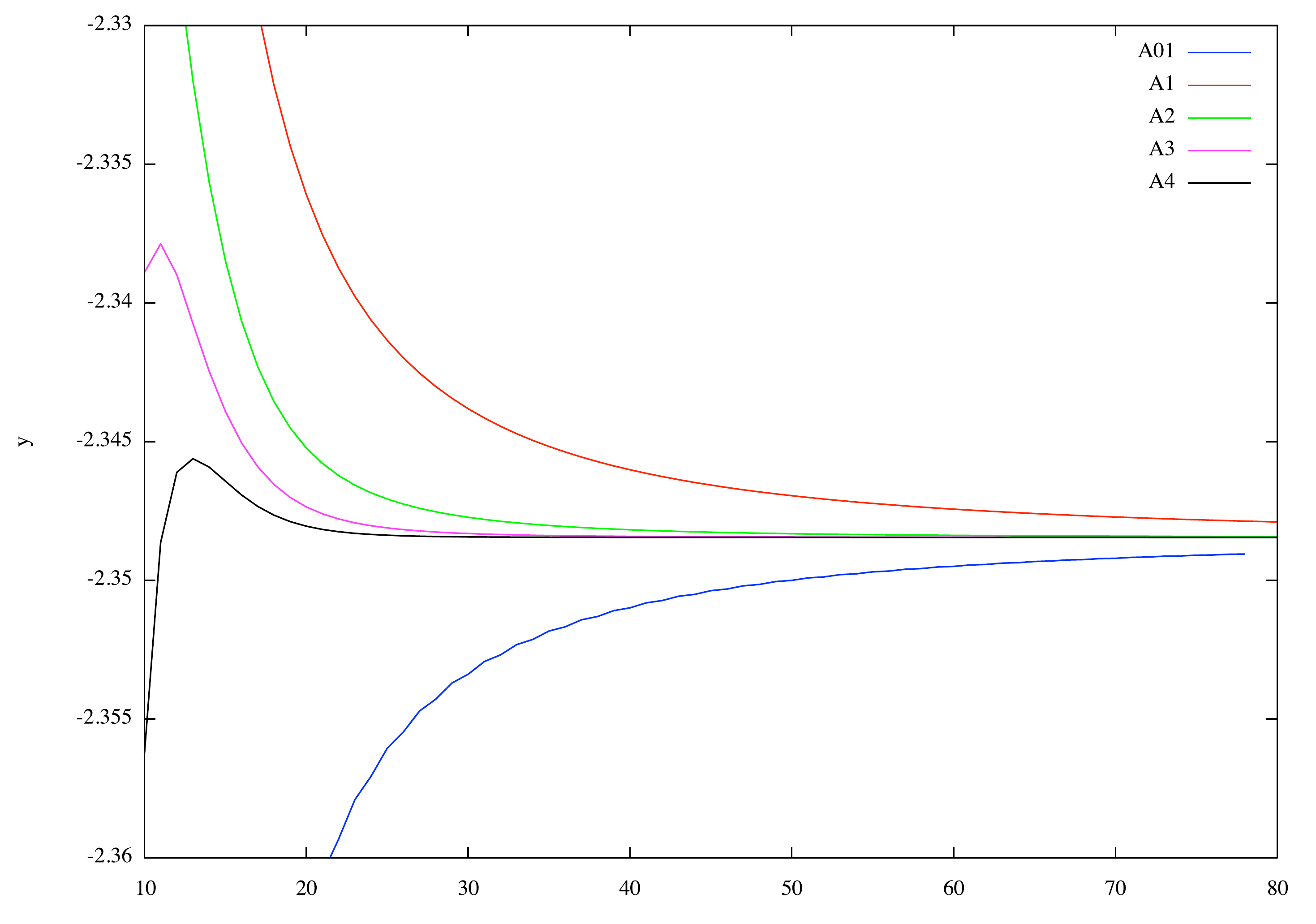}
\end{figure}

\subsection{Comparison to numerical results}

This numerical study will be a useful check that the previous results are correct. Indeed, we are expecting the convergence of the fit to become faster when we take into 
account higher orders of our solution. So looking for speed of convergence and how it changes when we include higher orders will be a check of our computations. We had emphasized in the previous 
subsections that the cancellations of even zetas are an analytical check, and this new one will allow now to verify the rational coefficients. Both were needed when we performed these computations.

On the other hand, we want to find $g_0$ and $n_0$ since they remain as free parameters in our analytical study. We will fit $g_0$ and $n_0$ to make our computed asymptotic behavior match the data 
of \cite{BeSc08} at two consecutive orders. Hence this numerical aspect of our problem will unravel the data unreachable by purely analytical means.


The obtained values of $g_0$ are presented in figure~\ref{fig_g0} as a function of the order at which the fit is done for different approximations of the asymptotic behavior. Likewise, the 
values for \(n_0\) are presented in figure~\ref{fig_n0}.
\begin{figure}[h!]
\caption{$\frac{1}{2}n_0$ from different fits.} \label{fig_n0}
 \includegraphics[scale=0.5]{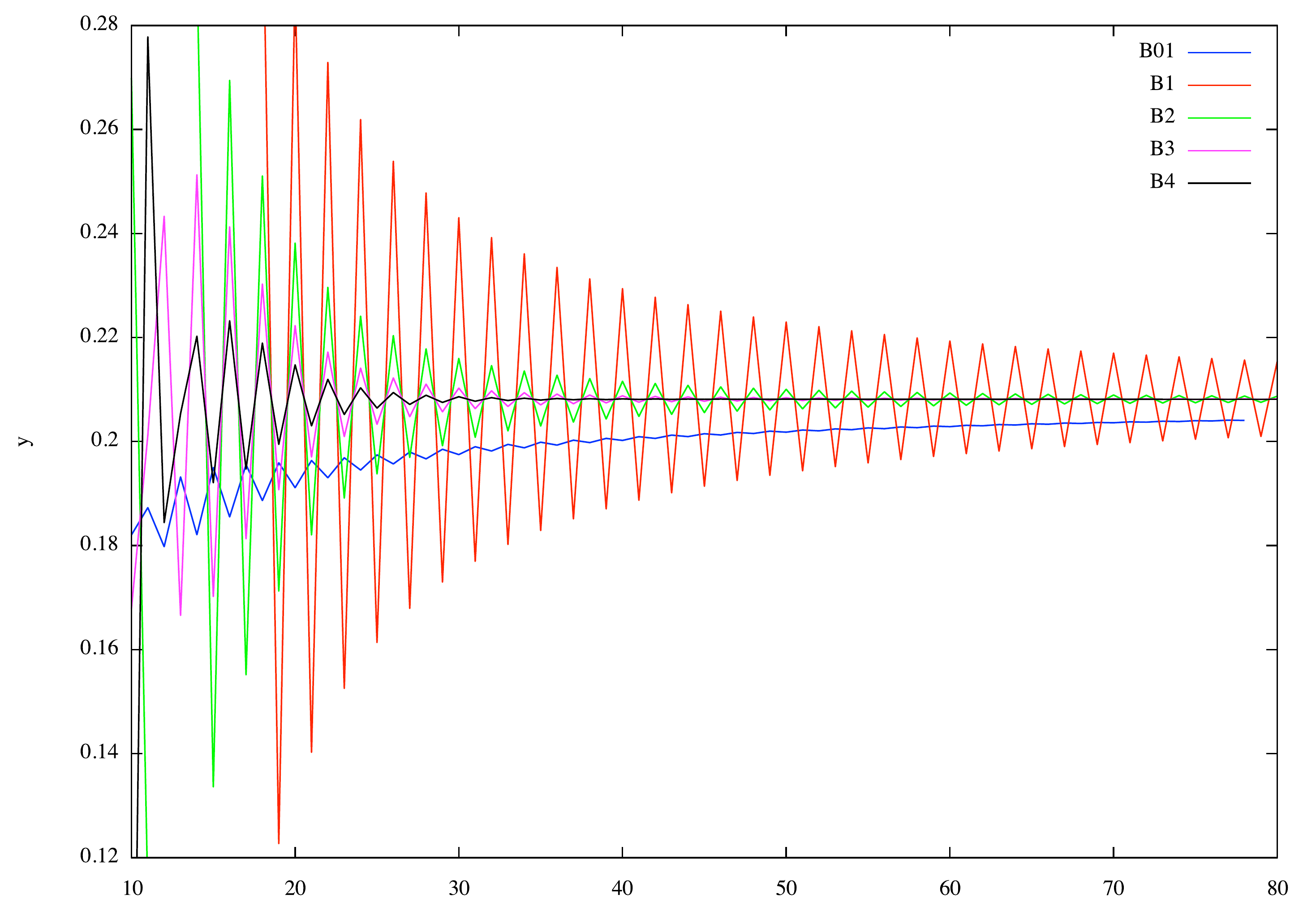}
\end{figure}

The curves $Ai$ or $Bi$ are obtained when one approximates the asymptotic behavior by including terms of up to order \(i\) in $d(a)$ and~$e(a)$. Since the case without any correction has very 
poor convergence, we plot $A01$ and $B01$, which correspond to fits on three values on a combination of \(A\), \(B\) and~\(aA\), without imposing the relation we deduced between the two terms 
proportional to \(A\). One clearly see the convergence improvement when using more terms of the asymptotic series. This can be seen as a check of our computations by numerical experiment.

We numerically get $2g_0\simeq-2.3484556$ and $\frac{1}{2}n_0\simeq0.208143(4)$. The relative precision is better on $g_0$ than on \(n_0\), which was expected since the $A_n$ sequence grows 
faster than the $B_n$ one. To improve the precision on \(g_0\) and \(n_0\), one can either go to higher order in \(a\) or compute additional terms in the asymptotic expansion. Had we not have the 
numerical results of \cite{BeSc08}, we probably could obtain the same precision on \(g_0\) and \(n_0\) with fewer low order terms of \(\gamma\) and some additional terms of the asymptotic 
behavior, for a smaller total computational cost. This is not so important here where computations remain manageable, but could be of serious interest when adding higher loop corrections to the 
Schwinger--Dyson equation.

%
%
\chapter{The massless Wess--Zumino model II: the Borel plane}

 \noindent\hrulefill \\
 {\it

-Que peux-tu donc, sinon t'ensanglanter encore les ongles et te faire prendre~?

-Rien d'autre que cela, je le sais. Mais cela, du moins, je le peux. Et il faut faire ce que l'on peut. \\

Jean Anouilh, Antigone (Cr\'eon et Antigone).}

 \noindent\hrulefill
 
 \vspace{1.5cm}

\section{Elements of Borel summation theory}

\subsection{The need of Borel summation theory}

In the previous chapter, we have presented the procedure of \cite{BeCl13} that allows precise computations, far beyond the asymptotics of the solution. Moreover, this approach was quite tailored to be 
implemented on a computer. Hence, using a formal computing software, we have been able to compute the $a^5$ corrections of the asymptotics of the anomalous dimension of the Wess--Zumino model. Moreover, the results were 
found to be in excellent agreement with the numerical results of \cite{BeSc08}.

Notwithstanding its successes, the analysis of \cite{BeCl13} suffers of some issues that have been underlined in the previous chapter. Let us summarize them here:
\begin{itemize}
 \item[$\bullet$] The symbols $A$ and $B$ have an unclear meaning.
 \item[$\bullet$] The expansion of the Mellin transform could not be proven to be exact.
 \item[$\bullet$] We had to drop the $a^{n_0}$ terms in \eqref{relation_symbolesa} and \eqref{relation_symbolesb}.
 \item[$\bullet$] We separated the $A$ and $B$ contributions in the Schwinger--Dyson equations, without a proper justification (saying ``they have a very different behavior'' does not qualify as 
a proper justification).
 \item[$\bullet$] Why were we allowed to drop the crossed terms $AB$, $A^2$ and so on? This was justified by saying that it would correspond to corrections at very high orders, but it is not a fully 
satisfactory answer.
 \item[$\bullet$] An analytic analysis of the number-theoretical contents of the expansion seems intractable within this approach.
\end{itemize}
These features are quite unsatisfactory and call for a more rigorous analysis. Indeed, a better understanding of the method of \cite{BeCl13} is needed before its use becomes possible in more 
physically relevant models. From the definitions \eqref{eq_serie_symbolesa}-\eqref{eq_serie_symbolesb} we see that the Borel transform of the formal series $A$ and $B$ is a well-defined function. 

This strongly suggests to study the Borel transform of our problem (i.e. to take the Borel transform of the renormalization group equation and of the Schwinger--Dyson equation). Within this new 
approach, the divergent series will be interpreted as markers of the simplest singularities of the Borel transform of the anomalous dimension. This will cure some of the issues listed above, and 
the others will come as by-products.

Before mapping our problem to the Borel plane, let us give a short presentation of the theory of Borel summation.

\subsection{The Borel transform}

There are many introductions to the Borel transform, and we do not intend to make a new one. We will only state without proof some useful facts and follow the presentation of \cite{Bo11}. A more 
rigorous introduction is \cite{Sa14}.

The Borel transform might be seen as a ring morphism between two rings of formal series:
\begin{eqnarray}
             \mathcal{B}: a\mathbb{C}[[a]] & \longrightarrow & \mathbb{C}[[\xi]] \\
\tilde{f}(a) = a\sum_{n=0}^{+\infty}c_na^n & \longrightarrow & \hat{f}(\xi) = \sum_{n=0}^{+\infty}\frac{c_n}{n!}\xi^n \nonumber
\end{eqnarray}
The idea is that even if $\tilde{f}$ is a purely formal series (that is, has a null radius of convergence), $\hat{f}$ might be convergent. There is an inverse Borel transform, the Laplace 
transform, defined through an integral along the line of complex numbers of argument $\theta$:
\begin{equation}
 \mathcal{L}^{\theta}(\phi)(z) := \int_0^{e^{i\theta\infty}}\phi(\zeta)e^{-\zeta/z}\d\zeta.
\end{equation}
It is easy to see that the Laplace transform of a Borel transform matches the original series since 
\begin{equation}
 \int_0^{+\infty}\frac{\zeta^n}{n!}e^{-\zeta/z}\d\zeta = z^{n+1}.
\end{equation}
The Laplace transform of the Borel transform (when it exists) is called the Borel sum of $\tilde f$. Hence, whenever the original series $\tilde{f}$ is convergent, the Borel sum matches the usual 
sum. When it is not the case, we can still have $\hat{f}$ analytic in some domain, such that the Laplace transform exists. Finally, if $\hat{f}$ has the right analyticity properties, we can vary 
$\theta$ in the definition of the Laplace transform so that all the Borel sums are analytically prolonged into each other. This procedure is the Borel resummation method.

However, this resummation has to be done in sectors of the complex plane, bounded by the lines of singularities of the Borel transform. One speaks of sectorial resummation.  When one crosses such 
a line of singularities of the Borel transform between two different sectors, the result of the summation changes. This is known as the Stokes phenomenon. Methods have been devised to compute 
these changes, and their study is very active, especially in the field of dynamical systems.

The essential properties of the Borel transform that we will use are: first, it is a linear transformation. Secondly, the Borel transform of a pointwise product of functions is the convolution 
product of the Borel transforms:
\begin{eqnarray} \label{prodB}
 \mathcal{B}(\tilde{f}\tilde{g})(\xi) & = & \hat{f}\star\hat{g}(\xi) \\
                                      & = & \int_0^{\xi}\hat{f}(\xi-\eta)\hat{g}(\eta)\d\eta. \nonumber
\end{eqnarray}
The last line being well-defined if and only if $\hat{f}$ and $\hat{g}$ have analytic continuations along a suitable path between \(0\) and~ \(\xi\). A consequence of this relation is that the 
Borel transform of $a.f$ is the primitive of $\hat{f}$.
\begin{equation} \label{primB}
 \mathcal{B}(a.f)(\xi) = \int_0^{\xi}\hat{f}(\eta)\d\eta
\end{equation}
Another very useful relation is
\begin{equation} \label{derivB}
 \mathcal{B}\left(a\partial_a\tilde{f}(a)\right)(\xi) = \partial_{\xi}\left(\xi\hat{f}(\xi)\right).
\end{equation}
It can be proved through the manipulation of formal series:
\begin{equation*}
 a\partial_a\tilde f(a) = a\sum_{n=0}^{+\infty}c_n(n+1)a^n.
\end{equation*}
Then
\begin{equation*}
 \mathcal{B}(a\partial_a\tilde f)(\xi) = \sum_{n=0}^{+\infty}(n+1)\frac{c_n}{n!}\xi^n = \partial_{\xi}\sum_{n=0}^{+\infty}\frac{c_n}{n!}\xi^{n+1} = \partial_{\xi}\left(\xi\hat{f}(\xi)\right).
\end{equation*}
Finally, we will refer in the following to the plane of $a$ as the physical plane, and the plane of $\xi$ as the Borel plane. Let us now move from the former to the later.

\section{Mapping to the Borel plane}

\subsection{Renormalization group equation}

We start from the renormalization group equation \eqref{RGE_G2_WZ}. Let us recall it here:
\begin{equation*}
 \partial_L G(a,L) = \gamma\left(1 + 3a\partial_a\right)G(a,L).
\end{equation*}
However, we have not defined the Borel transform of a constant function. This could be done through the formal unity of the convolution product: the Dirac $\delta$ ``function''. Since we want to 
deal only with analytic quantities, we have rather chosen to omit the $1$ in the Borel transform. Hence we will consider the function \(\tilde G= G -1 \). Then \eqref{RGE_G2_WZ} leads to the 
following renormalization group equation for $\tilde{G}$:
\begin{equation}
 \partial_L \tilde{G}(a,L) = \gamma\left(1 + 3a\partial_a\right)\tilde{G}(a,L)+\gamma.
\end{equation}
This equation is easily mapped into the Borel plane by using the rules \eqref{prodB} and \eqref{derivB} since $\tilde{G}$ has no constant part and has therefore its Borel transform well-defined. 
As shown in the previous chapter this is also true for $\gamma$.
\begin{equation*}
 \partial_L\hat{G}(\xi,L) = \hat{\gamma}(\xi) + \int_0^{\xi}\hat{\gamma}(\xi-\eta)\hat{G}(\eta,L)\d\eta + 3\int_0^{\xi}\hat{\gamma}(\xi-\eta)\partial_{\eta}\left(\eta\hat{G}(\eta,L)\right)\d\eta.
\end{equation*}
Treating the convolution product as a perturbation in this equation, one obtains terms which are proportional to \(L^n\). However, the resultant power series in \(L\) is not really informative and 
is not suitable for a study of the singularities of the Borel transform (which, as we will see later, are really the crucial objects of interest). Also, due  to the presence of the derivative with 
respect to \(\xi\) of \(\hat G\), one cannot expect to find \(\hat G\) as a fixed point.

Integrating by parts the last integral and using $\hat{\gamma}(0)=1$ (which is true since $\gamma(a) = a +\mathcal{O}(a^2)$) leads to an equation which will prove itself much more convenient.
\begin{equation}
 \partial_L\hat{G}(\xi,L) - 3\xi\;\hat{G}(\xi,L) = \hat{\gamma}(\xi) + \int_0^{\xi}\hat{\gamma}(\xi-\eta)\hat{G}(\eta,L)\d\eta + 3\int_0^{\xi}\hat{\gamma}'(\xi-\eta)\eta\hat{G}(\eta,L)\d\eta
\end{equation}
Here, if we neglect the convolution parts, we have the order zero equation
\begin{equation*}
 \partial_L\hat{G}(\xi,L) - 3\xi\;\hat{G}(\xi,L) = \hat{\gamma}(\xi)
\end{equation*}
which, with the condition \(\hat G(\xi,0)=0\) coming from $G(1)=1$, has the zero order solution:
\begin{equation}
\hat G_0(\xi,L) = \frac 1 {3\xi} \hat\gamma(\xi)( e^{3\xi L} - 1).
\end{equation}
Introducing this order zero solution in the convolution products suggests that \(\hat G\) for fixed Borel parameter \(\xi\) can be represented as a superposition of exponentials of \(L\) with 
parameters between \(0\) and \(3\xi\). Since \(L\) is the logarithm of \(p^2\), it means that we simply have a general power of the impulsion squared.  However, we would like to have a 
representation which does not depend on the path joining \(0\) and \(\xi\) and which easily deals with the singularities we expect to have at the ends of the path, since the order 0 solution has 
Dirac masses at these points.

We therefore parametrize $\hat{G}$ as a contour integral,
\begin{equation} \label{param_G}
 \hat{G}(\xi,L) = \oint_{\mathcal{C}_{\xi}}\frac{f(\xi,\zeta)}{\zeta}e^{3\zeta L}\d\zeta
\end{equation}
with $\mathcal{C}_{\xi}$ any contour enclosing $0$ and $\xi$. On a contour minimally including the endpoints,  the jump of \(f\) along a cut from \(0\) to \(\xi\) gives a smooth integral, while 
the singularities at the end points will contribute to singular terms. The condition that \(\hat G(\xi,0)\) is zero is also easily obtained in this formalism, since the exponential becomes 1 for 
\(L=0\) and the contour can be expanded to infinity.  It is therefore sufficient that \(f\) have limit 0 at infinity. The renormalization group equation for $\hat{G}$ becomes an equation on $f$, 
since one can use the same contour for the computation of \(\hat G\) for all the necessary values of \(\eta\) and then, switching the order of the contour integral and the other operations, one 
can write everything as a contour integral on a common path. The \(L\) independent term can also be given the same form, noting 
\begin{equation*}
 1 = \oint_{\mathcal{C}_{\xi}}e^{3\zeta L}\frac{\d\zeta}{\zeta}.
\end{equation*}
(A factor \(1/(2\pi i)\) has been included in the definition of the contour integral \(\oint\) to simplify notations.) Using 
\begin{equation*}
 \partial_L\hat G(\xi,L) = \oint_{\mathcal{C}_{\xi}}(3\zeta)\frac{f(\xi,\zeta)}{\zeta}e^{3\zeta L}\d\zeta
\end{equation*}
we ends up with the following equation for $f$:
\begin{equation} \label{renorm_f}
 3(\zeta-\xi)f(\xi,\zeta) = \hat{\gamma}(\xi) + \int_0^{\xi}\hat{\gamma}(\xi-\eta)f(\eta,\zeta)\d\eta + 3\int_0^{\xi}\hat{\gamma}'(\xi-\eta)\eta f(\eta,\zeta)\d\eta.
\end{equation}
We will see later that this equation is the right one to study the singularities of $\hat{\gamma}$.

\subsection{Schwinger--Dyson equation}

We start with the Schwinger--Dyson equation in the physical plane \eqref{SDnlin}. We know that we only need its derivative with respect to \(L\) at the renormalization point, which defines 
\(\gamma\),
\begin{equation}
 \gamma(a) = \left.-\frac{a}{\pi^2}\frac{\partial}{\partial L}\int\d^4qP\left(q^2\right)P\left((p-q)^2\right)\right|_{L=0}
\end{equation}
with $P$ the fully renormalized propagator:
\begin{equation}
 P(p^2) = \frac{1}{p^2}\left(1+\tilde{G}(L(p^2))\right).
\end{equation}
In the following, as before, we will denote simply by $\partial_L$ the operator taking the partial derivative with respect to $L$ and evaluating at zero. The integral naturally splits in three 
parts, according to the number of \(\tilde G\) factors,
\begin{equation}
 \gamma(a) = -\frac{a}{\pi^2}\partial_L\left[I_1(L)+2I_2(L)+I_3(L)\right],
\end{equation}
with:
\begin{eqnarray*}
 I_1(L) & = & \int\d^4q\frac{1}{q^2(p-q)^2} + S_1(\mu^2) \\
 I_2(L) & = & \int\d^4q\frac{\tilde{G}\left(q^2,a\right)}{q^2(p-q)^2} + S_2(\mu^2) \\
 I_3(L) & = & \int\d^4q\frac{\tilde{G}\left(q^2,a\right)\tilde{G}\left((p-q)^2,a\right)}{q^2(p-q)^2}+ S_3(\mu^2).
\end{eqnarray*}
The $S_i$'s are the formally infinite counterterms of kinematical renormalization, which ensure that \(\tilde G\) is zero at the reference impulsion $\mu$, exactly like in the second chapter. 
They disappear when deriving with respect to \(L\).

Now,  $I_1$ gives the term proportional to \(a\) in \(\gamma\), and \(a\) was normalized so that $\gamma(a) = a\left(1+\mathcal{O}(a)\right)$. Otherwise $\tilde{G}$ is 0 for \(a=0\), hence 
$a\partial_LI_2$ (resp. $a\partial_LI_3$) starts by $a^2$ (resp. $a^3$), so that we have:
\begin{equation}
 \partial_LI_1(L) = -\pi^2.
\end{equation}
The Schwinger--Dyson equation is therefore written as follows:
\begin{equation}
 \gamma(a) = \frac{a}{\pi^2}\left(\pi^2-2\partial_L\int\d^4q\frac{\tilde{G}\left(q^2,a\right)}{q^2(p-q)^2} - \partial_L\int\d^4q\frac{\tilde{G}\left(q^2,a\right)\tilde{G}\left((p-q)^2,a\right)}{q^2(p-q)^2}\right)
\end{equation}
This equation can be mapped to the Borel plane, using the relation \eqref{primB} to express the multiplication by \(a\). We end up with
\begin{equation} \label{SDE_B}
 \hat{\gamma}(\xi) = 1 - \frac{2}{\pi^2}\int_0^{\xi}\d\eta\;\partial_L\int\d^4q\frac{\hat{G}(q^2,\eta)}{q^2(p-q)^2}
  - \frac{1}{\pi^2}\int_0^{\xi}\d\eta\;\partial_L\int\d^4q
  \frac{\hat{G}\left(q^2,\eta\right)\star\hat{G}\left((p-q)^2,\eta\right)}{q^2(p-q)^2}.
\end{equation}
The convolution product in the last integral shall be read as
\begin{equation*}
 \hat{G}\left(q^2,\eta\right)\star\hat{G}\left((p-q)^2,\eta\right) = \int_0^{\eta}\hat{G}\left(q^2,\eta-\sigma\right)\hat{G}\left((p-q)^2,\sigma\right)\d\sigma.
\end{equation*}
Now,  using the parametrization \eqref{param_G} of $\hat{G}$ within the Schwinger--Dyson equation \eqref{SDE_B}, we get
\begin{equation*}
 \partial_L\int\d^4q\frac{\hat{G}(q^2,\eta)}{q^2(p-q)^2} = \oint_{\mathcal{C}_{\eta}}\d\zeta\frac{f(\eta,\zeta)}{\zeta}\partial_L\int\d^4q\frac{e^{3\zeta L(q^2)}}{q^2(p-q)^2}
\end{equation*}
for the first non-trivial term in \eqref{SDE_B}. If we recall that $e^{3\zeta L(q^2)}=(q^2)^{3\zeta}$ we can write
\begin{equation*}
 \partial_L\int\d^4q\frac{e^{3\zeta L(q^2)}}{q^2(p-q)^2} = \partial_L\int\d^4q\frac{1}{(q^2)^{3\zeta}(p-q)^2} = -\pi^2H(3\zeta,0) = -\frac{\pi^2}{1+3\zeta}
\end{equation*}
with $H$ the Mellin transform \eqref{def_H}. The loop integral can therefore be computed with the Mellin transform, pointing to the interesting properties of the parametrization of 
\(\hat G\) \eqref{param_G}.

For the second integral the situation is essentially the same, but slightly more complicated. Using an obvious notation we have
\begin{equation*}
 \partial_L\int\d^4q\frac{\hat{G}\star\hat{G}}{q^2(p-q)^2} 
 = \int_0^{\eta}\d\sigma\oint_{\mathcal{C}_{\eta-\sigma}}
 \oint_{\mathcal{C}_{\sigma}}\d\zeta\d\zeta'\frac{f(\xi-\sigma,\zeta)f(\sigma,\zeta')}{\zeta\,\zeta'}\partial_L\int\d^4q\frac{e^{3\zeta L\left(q^2\right)}e^{3\zeta'L\left((p-q)^2\right)}}{q^2(p-q)^2}.
\end{equation*}
And, similarly to the physical plane, the derivative of the last integral can be evaluated to $-\pi^2H(3\zeta,3\zeta')$. Hence we end up with the Schwinger--Dyson equation in the Borel plane written in term of $f$:
\begin{equation} \label{SDE_gamma_f}
 \hat{\gamma}(\xi) = 1 + 2\int_0^{\xi}\d\eta\oint_{\mathcal{C}_{\eta}}\d\zeta\frac{f(\eta,\zeta)}{\zeta(1+3\zeta)} + \int_0^{\xi}\d\eta\int_0^{\eta}\d\sigma\oint_{\mathcal{C}_{\eta-\sigma}}\d\zeta\frac{f(\eta-\sigma,\zeta)}{\zeta}\left(\oint_{\mathcal{C}_{\sigma}}\d\zeta'\frac{f(\sigma,\zeta')}{\zeta'}H(3\zeta,3\zeta')\right).
\end{equation}
In these expressions, care must be taken that the Mellin transform \(H\) is not holomorphic, but meromorphic: when trying to use these formulas for the analytic continuation of \(\hat\gamma\), the different contours should not go through the poles of \(H\).

\subsection{Formal series vs. singularities} \label{backto}

We would like to link the Borel plane computation and the ones made in \cite{BeCl13}, which have been presented in the previous chapter. We will not detail all the computations here, especially 
they are quite similar to one another. So, roughly speaking, only the first will be completely made. The main difficulty is to carry a combinatorial factor.

Modulo this caveat, we will show that the perturbative study made using the formal series $A$ and $B$ is equivalent to a well-defined computation in the Borel plane. Let $f$ and $g$ be two functions of 
the structure constant $a$ involving a formal series $A$:
\begin{eqnarray*}
 f(a) & = & a^n + a^mA \\
 g(a) & = & a^p + a^qA.
\end{eqnarray*}
To simplify the notations, we only take one power of \(a\) for each possible terms, but the computations of the previous chapter involve sums of such terms with varying exponents \(n\), \(m\), \(p\) 
and~\(q\). \(A\) is encoding the asymptotic behavior of the functions, or equivalently a singularity of the Borel transform:
\begin{equation}
 A = \sum A_n a^n \qquad \frac{A_{n+1}}{A_n}=\alpha n -\beta
\end{equation}
with $\alpha\neq0$. This is a formal series but is Borel summable. Without loss of generality, we can assume $\alpha=1$ since we can make an expansion in $\tilde{a}=a/\alpha$. This is nothing 
but mapping the singularity of the Borel transform to $\xi=1$. When doing our perturbative analysis, we assumed that the product of the functions $f$ and $g$ was
\begin{equation} \label{prod_utile}
 f(a)g(a) = a^{n+p} + \left[a^{m+p}+a^{q+n}\right]A.
\end{equation}
We will check that this is coherent with the map into the Borel plane, that is, compute $\widehat{fg}$ and $\hat{f}\star\hat{g}$ and check that they coincide in the right limits. First, the Borel 
transform of the formal series $A$ is
\begin{equation}
 \hat{A} = \sum\frac{A_{n+1}}{n!}\xi^n = \sum\hat{A}_n\xi^n.
\end{equation}
Thus we get the recurrence relation for the $\hat{A}_n$ coefficients:
\begin{equation} \label{rec_aB}
 \frac{\hat{A}_n}{\hat{A}_{n-1}} = 1-\frac{\beta}{n}.
\end{equation}
The most natural $\hat{A}_n$ coefficients satisfying the above recurrence relations are
\begin{equation} \label{aB1}
 \hat{A}_n = c\prod_{i=1}^n\left(1-\frac{\beta}{i}\right).
\end{equation}
However, this can only be the right form for the $\hat{A}_n$'s if $\beta\notin\mathbb{N}$. Indeed, if $\beta\in\mathbb{N}$, we would have $\hat{A}_n=0$, for large enough \(n\). Since 
\eqref{rec_aB} has to be asymptotically true (it encodes the asymptotic behavior of $f$ and $g$), for $\beta\in\mathbb{N}$, we must take a product beginning at \(\beta + 1\) in the formula for the 
$\hat{A}_n$'s. For generic \(\beta\), we have an explicit formula for $\hat{A}$ (up to an overall multiplicative factor):
\begin{equation}\label{hatA}
 \hat{A} = (1-\xi)^{\beta-1}.
\end{equation}
Indeed, this has the right Taylor expansion around $0$:
\begin{align*}
 \left.\frac{\d^n}{\d\xi^n}\hat A\right|_{\xi=0} & = \left.(-1)^n\left[\prod_{i=0}^n(\beta-1-i)\right](1-\xi)^{\beta-n-i}\right|_{\xi=0} \\
						 & = \prod_{i=0}^{n-1}(i+1-\beta) \\
						 & = n!\prod_{i=1}^n\left(1-\frac{\beta}{i}\right).
\end{align*}
An other way to prove this result is to use the differential equation satisfied formally by \(A\), Eq. \eqref{relation_symbolesa}, convert it to a differential equation for \(\hat A\) and see that 
the equation \eqref{hatA} gives its solution up to an overall factor. Then, by induction, it is easy to prove
\begin{equation}
 \widehat{a^nA} \simall{\xi}{1} \frac{(-1)^n}{(\beta)_n}(1-\xi)^{\beta+n-1}
\end{equation}
since multiplication by \(a\) corresponds to taking the primitive of the Borel transform and since the case $n=0$ correspond to $\hat A$ itself. Here $(x)_n$  is the Pochhammer symbol defined by
\begin{equation*}
 (x)_n = \frac{\Gamma(x+n)}{\Gamma(x)} = x(x+1)...(x+n-1).
\end{equation*}
Now, it is easy to compute the following equivalence relations
\begin{subequations}
 \begin{eqnarray} 
  \hat{f}(\xi) & \simall{\xi}{0} & \frac{\xi^{n-1}}{(n-1)!} \label{f0N}, \\
  \hat{f}(\xi) & \simall{\xi}{1} & \frac{(-1)^m}{(\beta)_m}(1-\xi)^{\beta+m-1}, \\
  \hat{g}(\xi) & \simall{\xi}{0} & \frac{\xi^{p-1}}{(p-1)!}, \\
  \hat{g}(\xi) & \simall{\xi}{1} & \frac{(-1)^q}{(\beta)_q}(1-\xi)^{\beta+q-1}. \label{g0N}
 \end{eqnarray}
\end{subequations}
The equivalences around \(1\) are taken modulo functions holomorphic in the neighborhood of \(1\), since any such term would either be subdominant in the asymptotic behavior of the coefficients 
\(f_n\) or captured by a different symbol.  One way of getting rid of these holomorphic terms is to take the difference between the analytic continuation of the Borel transform by either side of 
\(1\). Any holomorphic function is killed, while the non-integer powers are multiplied by \(\sin(\pi\beta)/\pi\) (for convenience, the difference is divided by \(2\pi i\)). First, it is trivial 
to check
\begin{equation}
 \hat{f}\star\hat{g}(\xi) \simall{\xi}{0} \frac{\xi^{n+p-1}}{(n+p-1)!} = \mathcal{B}(a^{n+p}).
\end{equation}
Hence the $a^{n+p}$ term of (\ref{prod_utile}) is justified: it is just the correspondence between ordinary product and the convolution product of the Borel transform.

Now, using that $\hat{f}$ and $\hat{g}$ have only one singularity in $\xi=1$, we have
\begin{equation*}
 \hat{f}\star\hat{g}(\xi) \simall{\xi}{1} \underbrace{\int_0^\frac12 \hat{f}(t)\hat{g}(\xi-t)\d t}_{I_1(\xi)} + \underbrace{\int_\frac12^\xi\hat{f}(t)\hat{g}(\xi-t)\d t}_{I_2(\xi)}.
\end{equation*}
Let us start by $I_1$.
\begin{equation*}
 I_1(\xi) = \frac{(-1)^{q}}{(n-1)!(\beta)_q}\int_0^\frac12 t^{n-1}(1-\xi+t)^{\beta+q-1}\d t
\end{equation*}
Performing $n-1$ integration by parts in order to get rid of the $t^{n-1}$ in the integrand and taking care of the combinatorial factors we end up with
\begin{equation} \label{I2betaNotInN}
 I_1(\xi) \simall{\xi}{1}\frac{(-1)^{q+n}}{(\beta)_{q+n}}(1-\xi)^{\beta+q+n-1} = \mathcal{B}(a^{q+n}A)
\end{equation}
The contribution from the other end point is holomorphic for \(\xi\) in the neighborhood of \(1\) and is therefore negligible.
For $I_2$ we have
\begin{equation*}
 I_2(\xi) = \frac{(-1)^m}{(p-1)!(\beta)_m}\int_{\frac12}^\xi (1-t)^{\beta+m-1}(\xi-t)^{p-1}\d t.
\end{equation*}
Using the transformation $x=\xi-t$ we transform this integral into an integral similar to $I_1$, and the same integrations by part give us
\begin{equation} \label{I1betaNotInN}
 I_2(\xi) \simall{\xi}{1}\frac{(-1)^{p+m}}{(\beta)_{p+m}}(1-\xi)^{\beta+p+m-1} = \mathcal{B}(a^{p+m}A).
\end{equation}
Hence, (\ref{I2betaNotInN}) and (\ref{I1betaNotInN}) justify the $\left[a^{m+p}+a^{q+n}\right]A$ term in (\ref{prod_utile}) for $\beta\notin\mathbb{N}$ through the correspondence between the asymptotic behavior of the perturbative series and the singularities of the Borel transform.

For $\beta\in\mathbb{N}^*$ we have to take another form for the $\hat{A}_n$'s. We will take
\begin{equation}
 \hat{A}_n = \frac{1}{n(n-1)...(n-\beta+1)}
\end{equation}
and start the sum within $\hat{A}$ at $n=\beta$. Then:
\begin{eqnarray}
 \hat{A}(\xi) & = & \sum_{n\geq\beta}\underbrace{\int^{\xi}...\int_0}_{\beta\text{ times}}t^{n-\beta}\d t \nonumber \\
              & = & \int^{\xi}...\int_0\frac{\d t}{1-t} \nonumber \\
              & \simall{\xi}{1} & \frac{(-1)^\beta}{(\beta-1)!} (1-\xi)^{\beta-1}\ln(1-\xi).
\end{eqnarray}
Then, by induction, it is easy to prove
\begin{equation}
 \widehat{a^nA} \simall{\xi}{1}\frac{(-1)^{n+\beta}}{(\beta+n-1)!}(1-\xi)^{\beta+n-1}\ln(1-\xi).
\end{equation}
For $\beta=0$ no integration has to be performed when computing $\hat{A}$ and hence $\hat{A}(\xi) \simall{\xi}{1}(1-\xi)^{-1}$. Nevertheless, the above formula includes the case $\beta=0$. The 
equivalence relations (\ref{f0N})--(\ref{g0N}) become now
\begin{subequations}
 \begin{eqnarray}
  \hat{f}(\xi) & \simall{\xi}{0} & \frac{\xi^{n-1}}{(n-1)!}, \\
  \hat{f}(\xi) & \simall{\xi}{1} & \frac{(-1)^{m+\beta}}{(\beta+m-1)!}(1-\xi)^{\beta+m-1}\ln(1-\xi), \\
  \hat{g}(\xi) & \simall{\xi}{0} & \frac{\xi^{p-1}}{(p-1)!}, \\
  \hat{g}(\xi) & \simall{\xi}{1} & \frac{(-1)^{q+\beta}}{(\beta+q-1)!}(1-\xi)^{\beta+q-1}\ln(1-\xi). \\
 \end{eqnarray}
\end{subequations}
Following the same strategy than for the case $\beta\notin\mathbb{N}$ we find that the combinatorial factors nicely combine such that
\begin{eqnarray}
 \hat{f}\star\hat{g}(\xi) & \simall{\xi}{1} & - \frac{(\xi -1)^{\beta+q+n-1}}{(\beta+q+n-1)!}\ln(1-\xi) - \frac{(\xi -1)^{\beta+m+p-1}}{(\beta+m+p-1)!}\ln(1-\xi) \nonumber \\
                          & = & \mathcal{B}(a^{q+n}A) + \mathcal{B}(a^{m+p}A).
\end{eqnarray}
Thus our perturbative computations are strictly equivalent to computations around the singularities of the Borel transform. Here we see that the Borel transform approach to the Schwinger--Dyson equation 
allows for a more natural interpretation of our results. 

Moreover, let us notice that neither $A$ nor $B$ can appear alone in
$\gamma$. The lowest order terms are $aA$ and $aB$. Hence they correspond
in the Borel plane to singularities at $\xi=\pm1/3$ and 
\begin{eqnarray*}
 \widehat{aA} & \simall{\xi}{-1/3} & \left(\xi+\frac{1}{3}\right)^{-5/3} \\
 \widehat{aB} & \simall{\xi}{1/3} & \ln\left(\xi-\frac{1}{3}\right)
\end{eqnarray*}
as stated in \cite{BeCl13}.

In fact, these computations are but the first steps in a general approach to the singularities of the Borel transform initiated some time ago by Jean \'Ecalle, the Alien calculus \cite{Ecalle81}, 
an introduction of which can be found in \cite{Sa14}. In our case, it just means that we extract the singular part of a function around \(\xi\) by taking the difference of the two analytic 
continuations around \(1\) and shifting to have an expansion around 0.  The coefficients of \(A\) which describe the asymptotic properties of the formal power series \(f\) and~\(g\) are therefore 
a description of the singularity of the Borel transforms, which can be extracted by an operator \(\Delta_1\):
\begin{subequations}
 \begin{align}
    \hat{f}(\xi)  \simall{\xi}{1} \frac{(-1)^{m+\beta}}{(\beta+m-1)!}(1-\xi)^{\beta+m-1}\ln(1-\xi) & \Longrightarrow \Delta_1 \hat{f} = - \frac{ \xi^{\beta + m -1}}{(\beta + m - 1)!}, \\
    \hat{f}(\xi)  \simall{\xi}{1} \frac{(-1)^m}{(\beta)_m}(1-\xi)^{\beta+m-1} & \Longrightarrow \Delta_1 \hat{f} = \frac{-\sin(\pi\beta)}{\pi} \frac{\xi^{\beta + m -1} }{ (\beta)_m} 
 \end{align}
\end{subequations}
%
The first line corresponds to the case where \(\beta\) is a positive integer, the second one to non integer \(\beta\).  The computations we just
made tell us that \(\Delta_1\) is a derivation with respect to the convolution product of the functions in the Borel plane.  

The whole story is subtler, because our computation was limited to singularities of the Borel transform on the limit of the disk of convergence. In many cases, one expects that there will be 
singularities for any integer multiple of a given singularity.  Then the singularity of the convolution product receives contributions from the pinching of the integration contour between 
singularities of \(\tilde f(\eta)\) and \(\tilde g(\xi - \eta)\). However \'Ecalle has shown that, by summing the singularities of the \(2^k\) differing analytic continuations of a function along 
paths going above or under the \(k\) singularities between the origin and a potential singularity with suitable weights, one obtains a derivation with respect to the convolution product, that he 
named an {\em alien}  derivation. Such derivations can then be used to compute the relation between the sums defined by integrating the Borel transform in different sectors.

Applying an alien derivation \(\Delta_\xi\) to a system of equations for the Borel transforms, one obtains a system of equations which is linear in the alien derivatives of the indeterminate 
functions: for generic values of the parameter \(\xi\), the only solution of this system will be zero, and we can conclude that the solutions in the Borel plane have no new singularity at this 
point (it is still possible to have a singularity if \(\xi\) is the sum of the positions of other singularities). At other points, there will be a one dimensional space of solutions, which will 
determine the singularities at this point up to a single scale.

For finite order computations, it is much easier to use formal series in the physical plane, which are easily multiplied by computer algebra
systems, exactly how we have done in the previous chapter. At this stage, alien calculus is just giving us a nice interpretation beyond formal series.

\section{Truncated Schwinger--Dyson equation}


\subsection{Justification of the truncation}

The goal of this section is to study the asymptotic behavior of $\hat\gamma$. We will start by justifying that in the Schwinger--Dyson equation \eqref{SDE_gamma_f}, the quadratic term in $f$ can 
be dropped without modifying this asymptotic behavior.

Indeed, it will be shown in the next section that the singularities of $\hat{\gamma}$ all lie on the real line\footnote{Therefore, it would have been meaningful to put the next section before this 
one. I have chosen to not do so since the next section contains the most important results of this \'etude.}. Thus we will take $\xi\notin\mathbb{R}$. Now, the truncation comes from a rather 
trivial fact: the asymptotics of $\gamma$ (in the physical plane) was given in \cite{Be10a} by the first pole of the one loop Mellin transform. Here, this corresponds to the pole in $\zeta=-1/3$ 
in \eqref{SDE_gamma_f}, for which the integral linear in $f$ is the dominant contribution.

To justify more formally this truncation, let us write
\begin{equation*}
 H(\zeta,\zeta') = \frac{1}{1+\zeta+\zeta'}\frac{\Gamma(1-\zeta-\zeta')\Gamma(1+\zeta)\Gamma(1+\zeta')}{\Gamma(1+\zeta+\zeta')\Gamma(1-\zeta)\Gamma(1-\zeta')}.
\end{equation*}
Then, the Stirling approximation $\Gamma(1+x)\sim \sqrt{2\pi}x^{x+1/2}e^{-x}$, valid for any complex \(x\) except in the immediate vicinity of
the negative real axis, leads to
\begin{align*}
 H(\zeta,\zeta') & \sim \frac{1}{1+\zeta+\zeta'}\frac{(-\zeta-\zeta')^{-\zeta-\zeta'+1/2}\zeta^{\zeta+1/2}\zeta'^{\zeta'+1/2}}{(\zeta+\zeta')^{\zeta+\zeta'+1/2}(-\zeta)^{-\zeta+1/2}(-\zeta')^{-\zeta'+1/2}} \\
                 & = \frac{-i}{1+\zeta+\zeta'}\frac{\zeta^{2\zeta}\zeta'^{2\zeta'}}{(\zeta+\zeta')^{2\zeta+2\zeta'}}
\end{align*}
if the imaginary parts of \(\zeta\) and \(\zeta'\) are both positive. Now, let us write $\zeta'=\alpha\zeta$. Since $\xi\notin\mathbb{R}$, we can always deform the contour of integration 
$\mathcal{C}_{\xi}$ so that it does not cross the real axis in the vicinity of $\xi$. Moreover, since we are interested by the asymptotic behavior of $\hat{\gamma}$, we can suppose than $\zeta$ 
and $\zeta'$ are not near the origin. Hence we can assume that $\zeta$ and $\zeta'$ are in the same quadrant of the complex plane. Thus we we arrive to
\begin{equation*}
 H(\zeta,\zeta') \sim \frac{-i}{1+\zeta(1+\alpha)}\left(\frac{\alpha^{\alpha}}{(1+\alpha)^{1+\alpha}}\right)^{2\zeta} .
\end{equation*}
Using the definition of complex power $z^{z'}=\left(|z|^2\right)^{z'/2}e^{iz'\text{arg}(z)}$ we end up with
\begin{equation*}
 \left|\frac{\alpha^{\alpha}}{(1+\alpha)^{1+\alpha}}\right| < 1 \Leftrightarrow \alpha_1\ln\left|\frac{\alpha}{1+\alpha}\right| - \ln|1+\alpha| - \frac{\alpha_2}{2}\left[\text{atan}\left(\frac{\alpha_2}{\alpha_1}\right) - \text{atan}\left(\frac{\alpha_2}{1+\alpha_1}\right)\right] < 0
\end{equation*}
with \(\alpha_1 = \Re(\alpha)\) and $\alpha_2=\Im(\alpha)$.
Since $\alpha_1>0$, $\alpha_1\ln\left|\frac{\alpha}{1+\alpha}\right|<0$ and since the function \(\text{atan}\) is monotonic, increasing over 
$\mathbb{R}$,
$ \frac{\alpha_2}{2}\left[\text{atan}\left(\frac{\alpha_2}{\alpha_1}\right) - \text{atan}\left(\frac{\alpha_2}{1+\alpha_1}\right)\right]>0$. Therefore, $H(\zeta,\zeta')$ is exponentially small at infinity for $\Re(\zeta)>0$. 

The conclusion of this subsection is that, in a sector with positive real and imaginary values of \(\xi\),  the term quadratic in $f$ in the Schwinger--Dyson equation \eqref{SDE_gamma_f} will 
involve an exponentially small \(H(\zeta,\zeta')\), except when one of the argument is in the vicinity of 0.  It is therefore plausible that the contribution of this quadratic part remains 
subdominant and can be ignored without any dramatic change of the asymptotic behavior of the solution for $\Re(\xi)>0$ and $\xi$ far enough of the real line.

Another argument will be given in the next subsection, that will come from the analysis of \cite{Be10}.

\subsection{Truncated equation}

First, let us notice than we can solve a specialization of the renormalization group equation \eqref{renorm_f}. Defining $g(\xi):=f(\xi,-1/3)$ and specializing \eqref{renorm_f} to $\zeta=-1/3$ 
leads to
\begin{equation} \label{eq_g}
 -(1+3\xi)g(\xi)=\hat{\gamma}(\xi)+\int_0^{\xi}\hat{\gamma}(\xi-\eta)g(\eta)\d\eta +3\int_0^{\xi}\hat{\gamma}'(\xi-\eta)\eta g(\eta)\d\eta.
\end{equation}
This equation can be solved by adding a parameter $\lambda$ and writing $g$ as a series in this parameter.
\begin{equation*}
 g(\xi) = \sum_{n\geq0} \lambda^n g_n(\xi)|_{\lambda=1}
\end{equation*}
Then \eqref{eq_g} gives the recurrence relations amongst the $g_n$'s.
\begin{eqnarray*}
               g_0(\xi) & = & -\frac{\hat{\gamma}(\xi)}{3\xi+1} \\
  -(1+3\xi)g_{n+1}(\xi) & = & \underbrace{\int_0^{\xi}\hat{\gamma}(\xi-\eta)g_n(\eta)\d\eta}_{=I_1^{n+1}(\xi)} + \underbrace{3\int_0^{\xi}\hat{\gamma}'(\xi-\eta)\eta g_n(\eta)\d\eta}_{=I_2^{n+1}(\xi)}
\end{eqnarray*}
Now, for a given $g_n$ the $I_1^n$ can either come from $I_1^{n-1}$ or from $I_2^{n-1}$. Hence we can write the recurrence relations for the $I$'s as well:
\begin{align*}
 & I_1^{n+1}(\xi) = -\frac{1}{3}\int_0^{\xi}\frac{\hat{\gamma}(\xi-\eta)}{\eta+1/3}\left[I_1^{n}(\eta)+3I_2^{n}(\eta)\right]\d\eta \\
 & I_2^{n+1}(\xi) = -\int_0^{\xi}\frac{\hat{\gamma}'(\xi-\eta)}{\eta+1/3}\eta\left[I_1^{n}(\eta)+3I_2^{n}(\eta)\right]\d\eta.
\end{align*}
Now we can solve the induction for the $I$'s and thus solve \eqref{eq_g}. The solution will be a Chen integral. First, let us define two functions:
\begin{align*}
 \begin{cases}
  f_0(\xi,\eta) = - \frac{\eta}{\eta+1/3} \hat{\gamma}'(\xi-\eta)\\
  f_1(\xi,\eta) = - \frac{1}{3\eta+1}\hat{\gamma}(\xi-\eta).
 \end{cases}
\end{align*}
Moreover, let $I_n=(i_1,...,i_n)$ be a string of integers, with $i_k\in\{0,1\}$. Now we define the iterated integrals
\begin{equation}
 F^{I_{n}}_{0,\xi} = \int_{0\leq x_n\leq...\leq x_1\leq x_0 := \xi}\left[\prod_{k=1}^{n}f_{i_k}(x_{k-1},x_{k})\right]\hat{\gamma}(x_n)\d x_n...\d x_1.
\end{equation}
Then, with the same argument than the one used to find the recurrence relations for the $I$'s, we see that a $g_n$ is given by the sums of the $F^{I_n}$ build from all the possible strings $I_n$ 
of length $n$. Hence, the solution of \eqref{eq_g} is:
\begin{equation} \label{sol_g}
 g(\xi) = \frac{-1}{3\xi+1}\left(\hat{\gamma}(\xi)+\sum_{n\geq1}\sum_{\{I_{n}\}}F^{I_{n}}_{0,\xi}\right).
\end{equation}
Now, according to our analysis of the previous subsection, we can neglect the term $\hat{G}\star\hat{G}$ in the Schwinger--Dyson equation when looking for the asymptotic behavior of 
$\hat{\gamma}$. Hence, from \eqref{SDE_gamma_f}, we get the following equation for $\hat{\gamma}$:
\begin{equation} \label{eq_gamma}
 \partial_{\xi}\hat{\gamma}(\xi) = 2\oint_{\mathcal{C}_{\xi}}f(\xi,\zeta)\frac{1}{\zeta(1+3\zeta)}\d\zeta.
\end{equation}
Deforming the integration contour $\mathcal{C}_{\xi}$ to a circle of infinite radius, the loop integral vanishes, thanks to Jordan's lemma, and differs from the integral above only by the opposite 
of the residue at $\zeta=-1/3$ (since $0$ is enlaced by $\mathcal{C}_{\xi}$). We have to take care of the sign here, so let us write everything down carefully:
\begin{equation*}
 \text{Res}\left(\frac{f(\xi,\zeta)}{\zeta(1+3\zeta)},\zeta=-1/3\right) = \text{Res}\left(\frac{f(\xi,\zeta)}{3\zeta(\zeta+1)},\zeta=-1/3\right) = \lim_{\zeta\rightarrow-1/3}\frac{f(\xi,\zeta)}{3\zeta} = -g(\xi).
\end{equation*}
Hence, all in all, we get
\begin{equation}
 \partial_{\xi}\hat{\gamma}(\xi) = + 2g(\xi).
\end{equation}
And, with the solution \eqref{sol_g}, we have an equation for $\hat{\gamma}$.
\begin{equation} \label{SDE_tronquee}
 \partial_{\xi}\hat{\gamma}(\xi) = \frac{-2}{3\xi+1}\left(\hat{\gamma}(\xi)+\sum_{n\geq1}\sum_{\{I_{n}\}}F^{I_{n}}_{0,\xi}\right).
\end{equation}
This equation is coherent with $\hat{\gamma}(0)=1$ and $\hat{\gamma}'(0)=-2$.

Before we go further, let us emphasize that the relation $\hat{\gamma}'=2g$ can be used to justify our truncation scheme, as advertised in the previous subsection. Indeed, if we plug it into the 
renormalization group equation \ref{renorm_f} specialized to $\zeta=-1/3$, we end up with an integrodifferential equation for $\hat{\gamma}$. Taking the inverse Borel transform of this equation we 
end up with a differential equation on $\gamma$:
\begin{equation}
 -(1+3\xi)\hat\gamma'(\xi) = 2\hat\gamma(\xi) +( \hat\gamma\star\hat\gamma')(\xi) + 3\left(\hat\gamma'\star(Id.\hat\gamma')\right)(\xi).
\end{equation}
We can re-do the integration by part on the last term: $\left(\hat\gamma'\star(Id.\hat\gamma')\right)(\xi) = -\xi\hat\gamma'(\xi)+[\hat\gamma\star\partial_{\eta}(\eta\hat\gamma'(\eta))](\xi)$ to 
get 
\begin{equation}
 -\hat\gamma = 2\hat\gamma+\hat\gamma\star\hat\gamma'+3\hat\gamma\star\partial_{\eta}(\eta\hat\gamma'(\eta)).
\end{equation}
Noticing $\frac{\d}{\d\xi}(\hat\gamma\star\hat\gamma) = \hat\gamma(0)+\hat\gamma'\star\gamma=\hat\gamma +\hat\gamma\star\hat\gamma'$ we can integrate the above equation to
\begin{equation}
 -\hat\gamma = -1 + \int\hat\gamma + \hat\gamma\star\hat\gamma + 3\int\hat\gamma\star\partial_{\eta}(\eta\hat\gamma'(\eta))
\end{equation}
with all the integrations from $0$ to $\xi$. Using the known formula \eqref{primB} and \eqref{derivB}, we can take the inverse Borel transform: 
\begin{equation}
 \gamma = a - a\gamma -\gamma^2 - 3a\gamma\partial_af
\end{equation}
with $f$ a function such that $\mathcal{B}(f)=\hat\gamma'$. It is easy to see that $f=\gamma/a-1$. Indeed
\begin{equation*}
 \gamma(a)=a\left(1+\sum_{n=1}^{+\infty}c_na^n\right) \Rightarrow \hat\gamma(\xi)=1+\sum_{n\geq1}\frac{c_n}{n!}\xi^n \Rightarrow \hat\gamma'(\xi)=\sum_{n\geq1}\frac{c_n}{(n-1)!}\xi^{n-1}.
\end{equation*}
Moreover
\begin{equation*}
 \mathcal{B}\left(\frac{\gamma}{a}-1\right) = \mathcal{B}\left(a\sum_{n=0}^{+\infty}c_{n+1}a^n\right) = \sum_{n=0}^{+\infty}\frac{c_{n+1}}{n!}\xi^n = \hat\gamma'(\xi).  
\end{equation*}
Hence, after simplifications, we arrive to
\begin{equation}
 \gamma = a - a\gamma + 2\gamma^2 - 3a\gamma\gamma'
\end{equation}
which is exactly the equation for $\gamma$ found in \cite{Be10} (equation (17)), up to terms that do not contribute to the asymptotics of $\gamma$. Thus, this is a nice check that a solution 
$s(\xi)$ to \eqref{SDE_tronquee} has the right asymptotic behavior.

Now, the equation \eqref{SDE_tronquee} appears as a fixed point equation. By defining a suitable metric on the space of functions, the integral operator could become contracting, proving the 
existence of a solution. Defining such a contracting metric is a non-trivial task that is left for further studies. Here, we will only highlight a link between the equation \eqref{SDE_tronquee} 
and the Multi-Zeta Values (MZVs) before numerically study the asymptotic behavior of the solution of \eqref{SDE_tronquee}.

\subsection{Link to MZVs}

We will explain in this subsection how the equation \eqref{SDE_tronquee} can be mapped to an equation on the algebra of Multi-Zeta Values. Having such a structure arising in our problem is 
interesting per se. Moreover, since the algebra of MZVs is fairly complicated, this explains why the study of \eqref{SDnlin} done in \cite{BeCl13} (presented in the previous chapter) seems to show 
some non-trivial combinatorics.

Since there are two possible choices for each functions in $F^{I_{n}}_{0,\eta}$, we can map them to the MZVs. Let $(a_1,...,a_r)$ be a string of positive integers and 
$(\varepsilon_1,...,\varepsilon_n)$ be its binary representation. That is:
\begin{equation}
 (\varepsilon_1,...,\varepsilon_n) = (\underbrace{0,...0}_{a_1-1},1,...,1,\underbrace{0,...,0}_{a_r-1},1).
\end{equation}
Then Kontsevitch's formula is:
\begin{equation}
 \zeta(a_1,...,a_r) = \int_{0\leq t_1<...<t_n\leq1}\omega_{\varepsilon_1}(t_1)...\omega_{\varepsilon_n}(t_n)
\end{equation}
with $\omega_0(t)=dt/t$ and $\omega_1(t)=dt/(1-t)$. To get such iterated integrals in our problem, we have to invert the order of integration. To simplify the notation, define $\Delta^n_x\subset\mathbb{R}^n$ by
\begin{equation}
 \Delta^n_x = \{(x_1,...,x_n)\in\mathbb{R}^n|0\leq x_1<...<x_n\leq x\}
\end{equation}
Then we can reverse the order of integration in $F^{I_n}_{0,\xi}$ and take strict inegalities since it will only modify a negligible subset of points.
\begin{equation}
 F^{I_n}_{0,\xi} = \int_{\Delta^n_{\xi}}\left[\prod_{k=0}^{n-1}f_{i_{k+1}}(x_{n+1-k},x_{n-k})\right]\hat{\gamma}(x_1)\d x_1...\d x_n
\end{equation}
with $x_{n+1}:=\xi$. Then it is easy to define a morphism of functionals $\alpha$ which will recast our sums over $\{I_n\}$ as sums over MZVs.
\begin{align}
                \alpha_{\xi}: \mathcal{F}(\mathcal{C}^{\infty}(\mathbb{R}^2)) & \longrightarrow \mathcal{F}(\mathcal{C}^{\infty}(\mathbb{R})) \nonumber \\
 \int_{\Delta^n_{xi}}T(x_1,...,x_{n})\hat{\gamma}(x_{n-1}-x_n)\d x_n...\d x_1 & \longrightarrow \int_{\Delta_{1}^{n+1}}\beta\left[T(x_1,...,x_{n})\right]\omega_1(x_{n+1})\d x_{n+1}...\d x_1
\end{align}
with $\beta:\mathcal{C}^{\infty}(\mathbb{R}^2)\longrightarrow\mathcal{C}^{\infty}(\mathbb{R})$ defined by $\beta(ab) = \beta(b)\beta(a)$ and
\begin{equation}
 \beta\left[f_i(x_{k+1},x_k)\right] = \omega_i(x_k).
\end{equation}
Since we have ordered our sum by the length of $I_{n}$, we end with a sum over the weights of MZVs.
\begin{equation}
 \partial_{\xi}\hat{\gamma}(\xi) = \frac{-2}{3\xi+1}\left(\hat{\gamma}(\xi)+\sum_{n\geq1}\sum_{w(\zeta)=n+1}\alpha^{-1}_{\xi}[\zeta(a_1,...,a_r)]\right)
\end{equation}
With $(a_1,...,a_r)$ the string of integers having $(\tau(I_{n}),1)$ as its binary representation. $\tau$ is the operator over the semigroup $X$ of words written in the alphabet $\{0,1\}$ which reverses 
the order: $\tau(0)=0$, $\tau(1)=1$ and $\tau(ab)=\tau(b)\tau(a)\quad\forall a,b\in X$.

An interesting fact is that while the summation is over the weight of the MZVs, the contribution to a particular term depends on the number of $f_0$ and $f_1$ it had initially. Hence the contributions 
to these sums will depends on the depth of the MZVs. However, studying precisely how this works is equivalent to trying to (asymptotically) solve (\ref{eq_gamma}). On the other hand, this construction 
unravels some relations amongst the $F^{I_n}_{0,\xi}$.

Also, since we allow for $\zeta(1,...)$, the duality theorem breaks down. However, we can restore it by defining such elements to be self-dual. Another way to deal with this issue would be to modify the definition of $\beta$ to include a $\omega_0$.

\subsection{Numerical analysis}

Now, to study the solution $\hat{\gamma}$ numerically, we have to fix a $\xi$ and compute $\hat{\gamma}(\eta)$ for $\eta$ on the line between the origin of the complex plane and $\xi$ with 
$\hat{\gamma}(0)=1$ and $\hat{\gamma}'(0)=-2$ as initial data. We have to take $\xi$ big enough, i.e. big with respect to the periodicity of the singularities of $\hat{\gamma}$, that is $1/3$. 
$\xi$ should also not be too close to the real line for our analysis to not be spoiled by the singularities of $\hat{\gamma}$ that are known to lie on the real line. This is why we have done our 
computations with $\xi=40+35i$, which is not too big so that the algorithm runs in a reasonable time.

The difficulties of the numerical analysis come from the fact that we have to compute convolution integrals that are very sensitive to numerical instabilities. Therefore, standard tools do not 
work for them. We have used Simpson's rule to get the following results from \eqref{SDE_gamma_f} without the $\hat{G}\star\hat{G}$ term.
\begin{figure}
    \includegraphics{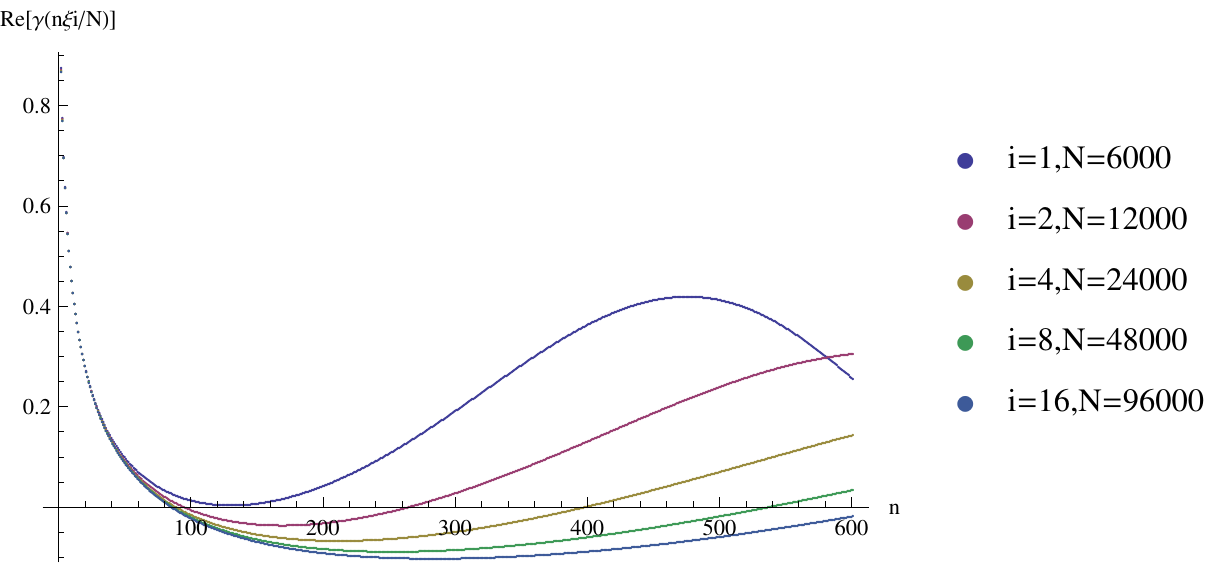}
\caption{Real part of $\hat{\gamma}$ for various precisions.} \label{ReGamma}
\end{figure}
It is clear from the above picture that a very small interval is needed in order to avoid numerical instabilities that we can see for the least precise case $i=1, N=6000$. Moreover, the minimum 
of the other curves seems to be a computational artifact since its position varies as the number of points taken increases. Although numerical methods are probably not the best way to tackle 
convolution integrals, we already see that the asymptotic behavior of the real part of $\hat{\gamma}$ seems to be a constant, maybe zero.

For the imaginary part, the same features are found, but the amplitudes are smaller (since the imaginary part of $\hat{\gamma}(0)$ is $0$), making the results harder to read.
\begin{figure}
    \includegraphics{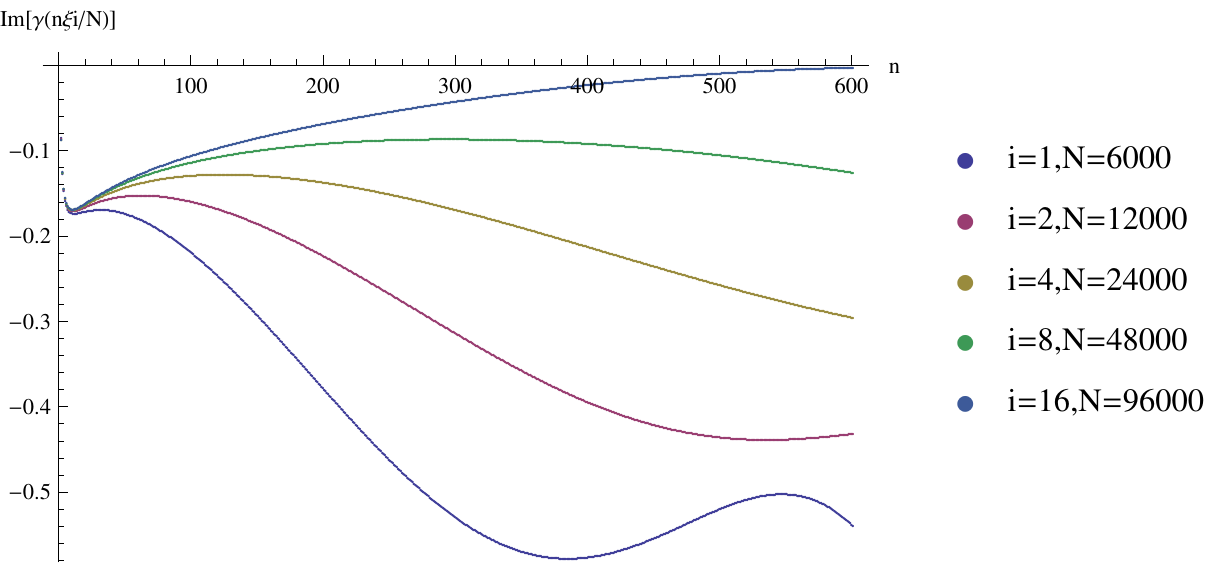}
\caption{Imaginary part of $\hat{\gamma}$ for various precisions.} \label{ImGamma}
\end{figure}
Hence, this numerical study suggests that $|\hat{\gamma}|$ is asymptotically bounded by a constant (for a non-real infinity). More precise results would require more sophisticated tools. Since we 
are mainly interested by analytical results, such study was not performed. 

\section{Singularities of the solution}

The time has finally come to turn our attention to the main point of this chapter: the singularities of the Borel transform of the Schwinger--Dyson equation \eqref{SDnlin}. We will start by 
showing that the singularities of $\hat\gamma$ all lies on the real axis, as stated in the previous section. Then, their precise behavior and the transcendental contents of their expansion around them 
will be studied.

\subsection{Localization of the singularities} \label{localization}

Here we will prove that any singularity of $\hat{\gamma}$ is linked to a singularity of $H$. 

First, we need to make an assumption on the singularities of $\hat{\gamma}$. We will assume that they are of the type studied in section \ref{backto}. We call such singularities algebraic and they 
are characterized by the exponent \(\beta\) which we call their order. This assumption is quite natural since the singularities studied in \cite{BeCl13} are indeed algebraic in this sense. For 
now, we will prove that any algebraic singularity of $\hat{\gamma}$ has to correspond to a singularity of $H$. Hence, if $\xi_0$ is a (algebraic) singularity of $\hat{\gamma}$ we will write:
\begin{equation}
 \hat{\gamma}(\xi) \underset{\xi\rightarrow\xi_0}{\sim} c(\xi-\xi_0)^{\beta}
\end{equation}
with $c$ a constant. Strictly speaking, this is not an equivalence in the usual meaning of the symbol: if $\beta$ is a non negative integer there is a
logarithmic factor. Furthermore, for positive real part of \(\beta\), the difference between the two terms can be any function holomorphic in the neighborhood
of \(\xi_0\). Moreover, the derivative of \(\hat\gamma\) will be equivalent in the same sense to \(c\beta(\xi-\xi_0)^{\beta-1}\), except in the case
\(\beta=0\) where we forget the factor \(\beta\). The virtue of our definition of an algebraic singularity is that one has not to take care if there are logarithms or not at the singularity.

We can deduce many things from the equation \eqref{renorm_f}. Indeed, let us assume that $\hat\gamma$ has a singularity of order $\beta$ at $\xi=\xi_0$. Then, $\forall\zeta\neq\xi_0$, the RHS of 
\eqref{renorm_f} vanishes if $\xi\longrightarrow f(\xi,\zeta)$ is regular at $\xi=\zeta$ while the LHS has no reason to do so. Hence, $\xi\longrightarrow f(\xi,\zeta)$ is singular in $\xi=\zeta$.
Now, let us define $f_{\zeta}:=\xi\longrightarrow f(\xi,\zeta)$ and take a derivative with respect to $\xi$ in \eqref{renorm_f}. It comes
\begin{equation}
 -3f_{\zeta}(\xi) + 3(\zeta-\xi)f_{\zeta}'(\xi) = \hat\gamma'(\xi) + f_{\zeta}(\xi) - 6\xi f_{\zeta}(\xi) + \int_0^{\xi}\hat\gamma'(\xi-\eta)f(\eta)\d\eta + 3\int_0^{\xi}\hat\gamma''(\xi-\eta)\eta f_{\zeta}(\eta)\d\eta.
\end{equation}
Since $\hat\gamma'$ and $\hat\gamma''$ are regular in zero, the two last terms have a singularity of order $\alpha+1$ and are therefore subdominant. Moreover $\hat\gamma$ is regular at 
$\xi=\zeta\neq\xi_0$ by hypothesis. Thus
\begin{equation}
 -3f_{\zeta}(\xi) + 3(\zeta-\xi)f_{\zeta}'(\xi) \simall{\xi}{\zeta} f_{\zeta}(\xi) - 6\xi f_{\zeta}(\xi).
\end{equation}
Hence, writing the singularity of $f_{\zeta}$
\begin{equation}
 f_{\zeta}(\xi) \simall{\xi}{\zeta} A(\xi-\zeta)^{\alpha} 
\end{equation}
we obtain
\begin{equation}
 3-3\alpha = 1 - 6\xi \Leftrightarrow \alpha = -\frac{4}{3} + 2\zeta
\end{equation}
since we work at $\xi\sim\zeta$.

The second case is simpler: $\forall\zeta\neq\xi_0$ $\xi\longrightarrow f(\xi,\zeta)$ has a singularity at $\xi=\xi_0$ of order $\beta$ (i.e.: like $\hat{\gamma}$) since only $\hat\gamma$ is 
singular in the RHS of \eqref{renorm_f}. The third possibility is a combination of the two first ones, with $\zeta=\xi_0$. The two exponents  \(\beta-1\) and \(-4/3+2\xi_0\) appear possible, but 
such a situation requires a case by case study.

As a function of its second argument and for any value of \(\xi\) which is not singular for \(\hat\gamma\), the function $f(\xi,\zeta)$ has a singularity of order $-4/3+2\xi$ at $\zeta=\xi$, since 
at the singularity this function coincides with $f_{\zeta}$. Let us emphasize that the function $\zeta\longrightarrow f(\xi,\zeta)$ has only singularities at \(0\) and \(\xi\) and is in 
particular regular at $\zeta=\xi_0$.

Now, let us assume that $\xi_0$ is an algebraic singularity of $\hat{\gamma}$ and that $H(3\xi_0,0)$ is not singular. Then the integral over $\eta$ of the first integral of 
\eqref{SDE_gamma_f} does not have to cross any singularity of $H$. We can deform its integration contour, then Jordan's lemma gives
\begin{equation*}
 \oint_{\mathcal{C}_{\eta}}\d\zeta\frac{f(\eta,\zeta)}{\zeta(1+3\zeta)} = -\text{Res}\left(\frac{f(\eta,\zeta)}{\zeta(1+3\zeta)},\zeta=-1/3\right).
\end{equation*}
$\xi_0\neq-1/3$ (since $(-1,0)$ is a singularity of $H$). Furthermore $\eta$ is running from $0$ to $\xi$, and $\xi\rightarrow\xi_0$. Then $\zeta\longrightarrow f(\eta,\zeta)$ is regular at 
$\zeta=-1/3$. Hence
\begin{equation} \label{contm1}
 \oint_{\mathcal{C}_{\eta}}\d\zeta\frac{f(\eta,\zeta)}{\zeta(1+3\zeta)} = f(\eta,-1/3).
\end{equation}
According to \eqref{SDE_gamma_f}, $\eta\longrightarrow f(\eta,-1/3)$ has a singularity in $\xi_0$, but that singularity is of the same order than the singularity of $\hat{\gamma}(\xi)$. Since we 
have assumed this singularity to be algebraic, $\int\d\eta f(\eta,-1/3)$ is less singular than $\hat{\gamma}(\xi)$. Hence the first integral \eqref{SDE_gamma_f} is not
sufficient to allow a singularity of $\hat{\gamma}(\xi)$ at \(\xi_0\). However, let us notice that this construction tells us that this integral will
give a dominant contribution to the singularity at $\xi=-1/3$ of $\hat{\gamma}(\xi)$.

For the second integral, using the fact that the alien derivative is a derivative with respect to the convolution product we get a relation between the singular part of $\hat{\gamma}$ and of $f$:
\begin{equation}
 \Delta_{\xi_0}\hat{\gamma}(\xi) \sim \int_0^{\xi}\d\eta\oint_{\mathcal{C}_0}\frac{\d\zeta}{\zeta}\oint_{\mathcal{C}_{\xi_0}}\frac{\d\zeta'}{\zeta'}H(3\zeta,3\zeta')f(-,\zeta)\star\Delta_{\xi_0}f(-,\zeta').
\end{equation}
Since in this equation, we are only interested in the behavior of \(\Delta_{\xi_0} \hat\gamma(\xi)\) in the vicinity of the origin, the integration
contour for \(\zeta\) can be a fixed one around 0 and the one for \(\zeta'\) a fixed contour enlacing 0 and \(\xi_0\).  
In the last loop integral, if \(H(0,3\xi)\) is not singular for any value of \(\xi\) on the straight line from \(0\) to \(\xi_0\), the
contour can be freely deformed to one contour \(\mathcal{C}_{\xi_0}\) which does not touch $\xi_0$.  Therefore, in the convolution integral,
\(\Delta_{\xi_0}f(\xi,\zeta')\) is of order \(\beta\) for all \(\zeta'\) on the contour. Then at least two 
integrals are taken from the convolution product and the explicit integration and since the loop integrals do not modify the singularity we end up
with a singularity of order $\beta-2$. The hypothesis that \(\hat\gamma\) has a singularity of order \(\beta\) is therefore incoherent, since we have
shown that it is equal to the sum of two terms which are less singular.

In the case where \(H(0,3\xi_0)\) is singular,  this argument does not hold:  we cannot deform the contour to include $\xi_0$ without modifying the value of 
the integral. Hence, when $\xi\rightarrow\xi_0$, the contour is pinched between $\xi$ and $\xi_0$ and there is a contribution from
$f(\xi,\zeta=\xi_0)$. 

Hence we have proven half of the claim made in the previous section: the singularities of $\hat\gamma$ can only be at a pole of $H(3\zeta,3\zeta')$. For the simple poles\footnote{we use the 
terminology of the subsection \ref{contribution}}, we have that $-\mathbb{N}^*/3$ is the singular locus of $\hat\gamma$ because, according to the analysis of \cite{BeCl13}, $\hat\gamma$ has a singularity in $-1/3$, 
which creates a singularity in $-2/3$, and so on. For the general poles of $H(3\zeta,3\zeta')$, the situation is subtler and will be fully performed in subsection \ref{PosSing}.

\subsection{Study of the negative singularities} \label{negative}

We will now study the behavior of $\hat{\gamma}$ near the singularities on the negative real axis. We will use the equations \eqref{renorm_f} and \eqref{SDE_gamma_f}. First, let us show that the 
function $\xi\longrightarrow f(\xi,\zeta)$ can be expressed near $0$ as
\begin{equation} \label{f_zero}
 f(\xi,\zeta)\simall{\xi}{0}\sum_{p=1}^{+\infty}\frac{\hat{\gamma}_p(\xi)}{(3\zeta)^p}.
\end{equation}
First, we have
\begin{equation*}
 \oint_{\mathcal{C}_{\xi}}\frac{e^{3\zeta L}}{(3\zeta)^p}\frac{\d\zeta}{\zeta} = \sum_{n=0}^{+\infty}\frac{(3L)^n}{n!}\oint_{\mathcal{C}_{\xi}}\frac{\zeta^{n-p-1}\d\zeta}{3^p} = \sum_{n=0}^p\frac{(3L)^n}{n!}\text{Res}\left[\zeta^{n-p-1}/3^p,\zeta=0\right] = \frac{L^p}{p!}
\end{equation*}
we get, when using \eqref{f_zero} in \eqref{param_G}
\begin{equation}
 \hat{G}(\xi,L) = \sum_{p=1}^{+\infty}\hat{\gamma}_p(\xi)\frac{L^p}{p!}.
\end{equation}
And this is exactly the Borel transform of $\tilde{G}=G-1$ when $G$ is defined by \eqref{2ptB}, i.e. around zero. Let us remark that the above expression is well-defined, as a formal series in 
$\xi$, only for $\xi$ near the origin of the complex plane. When $\xi$ goes to a vicinity of a singularity of $\gamma$ all the $\hat{\gamma}_p$ have the same kind of singularity and therefore the 
above expression is no longer clearly convergent.

From equation \eqref{contm1}, the term linear in $\hat{G}$ in \eqref{SDE_gamma_f} will give a contribution proportional to \(f(\xi,-1/3)\) and will make the case \(\xi_0=-1/3\) special.  
For now on, we will focus on the cases $\xi_0\neq-1/3$. To study the contribution of the term $\hat{G}\star\hat{G}$ of equation \eqref{SDE_gamma_f} to a negative singularity of $\hat{\gamma}$, 
let us split $H(3\zeta,3\zeta')$ between a regular and a singular part:
\begin{equation} \label{H_split}
  H(3\zeta,3\zeta') = \tilde{H}_k(3\zeta,3\zeta') + \sum_{l=1}^k\biggl(\frac{P_l(3\zeta)}{3\zeta'+l} + \frac{P_l(3\zeta')}{3\zeta+l} \biggr)                                                                                                 
\end{equation}
with the same $P_k$ than the ones mentioned in the previous chapter: they are just the residues of $H$. Since the term $\tilde{H}_k$ is regular up to \(3\zeta = -k-1\), the integration contours can be deformed and it will not give dominant contributions to the singularity of $\hat{\gamma}$: it is 
the same analysis than the one done to localize the singularities of $\hat{\gamma}$ in the previous subsection. The singular term being simple rational functions, we can once again compute the 
integral on the pole part by using Jordan's lemma.
\begin{equation} \label{pole}
 \oint_{\mathcal{C}_{\sigma}}\d\zeta'\frac{f(\sigma,\zeta')}{\zeta'}\frac1{3\zeta'+k} = \frac{f(\sigma,-k/3)}{k}
\end{equation}
This equality is established for \(\sigma\) in the vicinity of 0, but can be extended by analytic continuation.  Similarly, the integration for a monomial \( (3\zeta)^m\) can be easily established 
for \(\sigma\) in the vicinity of 0 using \eqref{f_zero} and extended to the whole Borel plane:
\begin{equation} \label{monom}
 \oint_{\mathcal{C}_{\sigma}}\d\zeta\frac{f(\sigma,\zeta)}{\zeta} (3\zeta)^m = \hat\gamma_m(\sigma).
\end{equation}
Now, we only want the most singular part of the quadratic in \(f\) term. This cannot come from the regular part of \(H\) and the contributions of the
poles can be written using \eqref{pole}) and \eqref{monom} as a sum of terms \(\hat\gamma_m\star f(-,-l/3)\).  For the singularity in \(-k/3\), the
most singular of these terms is \(\hat\gamma\star f(-,-k/3)\) and we obtain:
\begin{equation}
 \hat{\gamma}(\xi)\simall{\xi}{\xi_0}-2\int\d\eta\int\d\sigma\; \hat{\gamma}(\eta-\sigma)f(\sigma,-k/3)\frac1{k(k-1)}
\end{equation}
since the linear part of \(P_k\) is \(-x/(k-1)\).
Hence, if $\hat{\gamma}$ has a singularity of order $\beta_k$ at $\xi=-k/3$, 
then $f(\xi,-k/3)$ has to have a singularity of order $\beta_k-2$. 
We then get a relation between $c_k$, the leading coefficient of $\hat{\gamma}$ and $f_k$, the leading coefficient of $f(\xi,-k/3)$:
\begin{equation} \label{rel_c_f2_int}
 c_k = \frac{2}{k(k-1)}\frac{f_k}{\beta_k(\beta_k-1)}.
\end{equation}
To find $\beta_k$, we are a priori in the complicated case where \(\hat\gamma\) has a singularity for the value of \(\zeta\).  However, since
the order of \(\hat\gamma\) is small enough,  the renormalization group equation (\ref{renorm_f}) at its most singular order \(\beta_k-1\) takes the simple form:
\begin{equation*}
 3 f_k = \frac{- f_k}{\beta_k-1} + \frac{6\xi_0 f_k}{\beta_k-1}
\end{equation*}
 where we have used $\hat{\gamma}(0)=1$ and $\hat{\gamma}'(0)=-2$. Using $\xi_0=-k/3$ we get
\begin{equation}
 \beta_k = -\frac{2}{3}(k-1).
\end{equation}
Hence the relation \eqref{rel_c_f2_int} becomes
\begin{equation}
 c_k = \frac{9}{k(k-1)^2(2k+1)}f_k.
\end{equation}
Now, let us go back to the case $\xi_0=-1/3$. From the previous analysis, the most singular term in the Schwinger--Dyson equation \eqref{SDE_gamma_f}
is the one linear in \(f\), so that \(f(\xi,-1/3)\) must be of order \(\beta_1-1\) and we have the following relation between the leading coefficients
around \(\xi_0\) of \(\hat\gamma\) and \(f(\xi,-1/3)\):
\begin{equation} \label{rel_coef_premier}
 \beta_1 c_1 = 2 f_1.
\end{equation}
In the renormalization group equation \eqref{renorm_f}, the leading singularity is now of order \(\beta_1\) and its coefficient includes a contribution from $\hat{\gamma}$:
\begin{equation*}
 -3f_1 = c_1 + \frac{f_1}{\beta_1}-6\left(-1/3\right)\frac{f_1}{\beta_1}.
\end{equation*}
Then using \eqref{rel_coef_premier} we get
\begin{equation}
 \beta_1 = -5/3, \quad \quad c_1 = -\frac{6}{5}f_1,
\end{equation}
in conformity with the result found in \cite{BeCl13} (and implied by the previous chapter).

\subsection{Study of the positive singularities} \label{PosSing}

For the positive singularities, the previous analysis has to be modified. Indeed, the denominators in the poles are of the form \(k - 3\zeta
-3\zeta'\) and the residues as a function of \(\zeta'\) would involve $f(\xi,k/3-\zeta)$. When performing the 
second contour integral, the relation \eqref{f_zero} then tells us that we would have to take derivatives of $f$ with respect to its second argument, and the renormalization group equation 
\eqref{renorm_f} implies that those derivatives are as singular as the first term, so all of them would need to be taken into account. This would make the analysis intractable in practice.

In the previous chapter, we have determined a renormalization group like equation satisfied by the contribution \(L_k\) stemming from a pole term
in the Mellin transform \(H\). We will simply translate this equation in the Borel plane. The positive pole of order \(k\) was written 
\begin{equation*}
 \frac{Q_k(xy)}{k-x-y},
\end{equation*}
with the residue written in term of a polynomial \(Q_k\) of degree \(k\):
\begin{equation*}
 Q_k(X) = \sum_{i=1}^{k}q_{k,i}X^i.
\end{equation*}
Then the equations for the $L_k$ functions are:
\begin{equation}
 (k-2\gamma-3\gamma a\partial_a)L_k = \sum_{i=1}^kq_{k,i}\gamma_i^2.
\end{equation}
Using the rules of the Borel transform, we map this equation into the Borel plane.
\begin{equation*}
 k\hat{L}_k-2\hat{\gamma}\star\hat{L}_k-3\hat{\gamma}\star\partial_{\xi}\left(\xi\hat{L}_k\right) = \sum_{i=1}^kq_{k,i}\hat{\gamma}_i\star\hat{\gamma}_i
\end{equation*}
As in the renormalization group equation, we integrate by parts the second convolution integral, using once again $\hat{\gamma}(0)=1$ to get 
\begin{equation} \label{eqLk}
 (k-3\xi)\hat{L}_k(\xi) = 2\hat{\gamma}\star\hat{L}_k(\xi) + 3\hat{\gamma}'(\xi)\star\left(\text{Id}.\hat{L}_k\right)(\xi) 
 	+ \sum_{i=1}^k q_{k,i}\hat{\gamma}_i\star\hat{\gamma}_i
\end{equation}
where the `$.$' in the second convolution integral has to be read as the pointwise product over functions.

If \(\hat L_k\) has a singularity in a point \(\xi_0\), the Schwinger--Dyson equation implies that \(\hat\gamma\) has also a singularity, but with the
order of a primitive of \(\hat L_k\).  Now, near a singularity at $\xi=\xi_0$, let us parametrize the singularity of $\hat{\gamma}$ and 
$\hat{L}_k$ by:
\begin{eqnarray*}
 \hat{\gamma}(\xi) & \simall{\xi}{\xi_0} & \frac{c_k}{\alpha_k}\left(\xi-\xi_0\right)^{\alpha_k} \\
    \hat{L}_k(\xi) & \simall{\xi}{\xi_0} & c_k\left(\xi-\xi_0\right)^{\alpha_k-1}.
\end{eqnarray*}
Now, the question is whether equation \eqref{eqLk} allows a singular \(\hat L_k\).  The right hand side terms are less singular than \(\hat L_k\), so
that the only possibility is when the factor \(k-3\xi\) vanishes: we then have that \(\xi_0 = k/3\). From the recursion relation for the 
$\hat{\gamma}_i$s, it is easy to see that no $\hat{\gamma}_i\star\hat{\gamma}_i$ will contribute. Indeed, the most singular term is for $i=1$ and $\hat{\gamma}\star\hat{\gamma}$ is singular as the second primitive of $\hat{L}_k$.
The most singular terms in \eqref{eqLk} can therefore be written and give:
\begin{equation*}
 c_k(k-3\xi)\left(\xi-\frac{k}{3}\right)^{\alpha_k-1} = 2\frac{\hat{\gamma}(0)}{\alpha_k}c_k\left(\xi-\frac{k}{3}\right)^{\alpha_k} +
 3c_k\frac{\hat{\gamma}'(0)\xi}{\alpha_k}\left(\xi-\frac{k}{3}\right)^{\alpha_k}
\end{equation*}
Now, simplifying this relation, using $\hat{\gamma}(0)=1$ and $\hat{\gamma}'(0)=-2$ and evaluating the remaining at $\xi=k/3$ we end up with a very simple 
formula for $\alpha_k$. 
\begin{equation}
 \alpha_k = \frac{2}{3}(k-1).
\end{equation}
Notice that for the positive singularities, no singularity has to be treated separately. Moreover, for $k=1$, we find $\alpha_k=0$, that is, a logarithmic singularity, as we found in our 
previous work and in section \ref{backto}.

We have shown that for the general poles of $H(3\zeta,3\zeta')$, the induced singularity of $\hat\gamma$ is localized in $\xi\in\mathbb{N}/3$. Hence, we arrive to a beautiful result that we can 
state as a theorem
\begin{thm} \cite{BeCl14}
 The only singularities of $\hat\gamma$ are for $\xi\in\mathbb{Z}^*/3$. They have the orders 
 \begin{subequations}
  \begin{align}
   \beta_k & = -\frac{2}{3}(k-1) \quad \text{for }\xi=-k/3, k\geq2 \\ 
   \alpha_k & = \frac{2}{3}(k-1) \quad \text{for }\xi=+k/3, k\geq1 \\
   \beta_1 & = -5/3 \quad \text{for }\xi=-1/3.
  \end{align}
 \end{subequations}
\end{thm}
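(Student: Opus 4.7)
My plan is to assemble the theorem from the three partial analyses carried out in subsections \ref{localization}, \ref{negative} and \ref{PosSing}, organizing them around a bootstrap on the position of the singularities and then a local computation of the exponents. Throughout I shall keep the standing assumption that all singularities of $\hat\gamma$ are algebraic in the sense of subsection \ref{backto}, so that a singular point is characterized by a single exponent and that derivatives act as expected on the dominant term.

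First I would prove the localization statement $\mathrm{sing}(\hat\gamma)\subseteq\mathbb{Z}^*/3$. The argument in subsection \ref{localization} already shows that an algebraic singularity $\xi_0$ of $\hat\gamma$ is only compatible with the Schwinger--Dyson equation \eqref{SDE_gamma_f} when the contour $\mathcal{C}_\xi$ becomes pinched on a pole of $H(3\zeta,3\zeta')$ as $\xi\to\xi_0$; the pole loci are exactly $\zeta\in -\mathbb{N}^*/3$ and $\zeta+\zeta'\in \mathbb{N}^*/3$. From $\hat\gamma(0)=1$, $\hat\gamma'(0)=-2$ and an induction on $k$, the presence of a singularity at $-1/3$ (which is already known from the analysis of \cite{BeCl13}) forces, via the convolution integrals generated by the pole $3\zeta=-k$, a singularity at each $\xi_0=-k/3$ for $k\geq 1$. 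For the positive side, I would rely on the argument of subsection \ref{PosSing}: the pole $3\zeta+3\zeta'=k$ of $H$ produces a contribution to $\hat\gamma$ of the form of a primitive of $\hat L_k$, and equation \eqref{eqLk} forces any singularity of $\hat L_k$ to sit at the zero of $k-3\xi$, i.e.\ at $\xi_0=k/3$.

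Next I would compute the exponents. For a candidate algebraic singularity $\hat\gamma(\xi)\sim c(\xi-\xi_0)^\beta$, the renormalization group equation \eqref{renorm_f} reduces at the dominant order to a linear algebraic relation between $\beta$, $\xi_0$ and the boundary data $\hat\gamma(0)=1$, $\hat\gamma'(0)=-2$. For $\xi_0=-k/3$ with $k\geq 2$, the analysis of subsection \ref{negative} shows that $f_{\zeta}(\xi)$ inherits from $\hat\gamma$ a singularity at $\xi=\xi_0$ of the same order, and matching coefficients in \eqref{renorm_f} yields
\begin{equation*}
\beta_k=-\tfrac{2}{3}(k-1),
\end{equation*}
together with the relation $c_k=\tfrac{9}{k(k-1)^2(2k+1)}f_k$ between the leading coefficients of $\hat\gamma$ and of $f(\cdot,-k/3)$. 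For $\xi_0=k/3$, the same dominant-order matching applied to \eqref{eqLk} gives $\alpha_k=\tfrac{2}{3}(k-1)$, in particular a logarithmic singularity at $\xi=1/3$.

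The case $\xi_0=-1/3$ must be treated on its own, because it is the unique singularity for which the term linear in $f$ in \eqref{SDE_gamma_f}, with pole at $\zeta=-1/3$, is pinched and produces a contribution as singular as $\hat\gamma$ itself. Following subsection \ref{negative} I would combine the residue identity \eqref{contm1} (which yields $\beta_1 c_1=2f_1$) with the renormalization group equation evaluated at its most singular order, which adds a $\hat\gamma$ contribution on the right-hand side and gives $\beta_1=-5/3$ together with $c_1=-\tfrac{6}{5}f_1$. The main obstacle in this programme is not the determination of the exponents but the control of the recursion: I would have to argue that, once a singularity has been created at $-k/3$ or $k/3$, it does actually propagate (non-vanishing of the leading coefficient) and that the induced singular behaviour at the next integer multiple remains of algebraic type, so that the local analysis of \eqref{renorm_f} and \eqref{eqLk} can be iterated consistently. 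A clean way to handle this would be to check on the lowest orders that $f_1$ and the dominant coefficients of $\hat L_k$ are non-zero, and then to propagate the non-vanishing through the alien calculus identities sketched in subsection \ref{backto}.
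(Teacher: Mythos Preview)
Your proposal is correct and follows essentially the same route as the paper: the theorem is precisely the synthesis of subsections \ref{localization}, \ref{negative} and \ref{PosSing}, and you invoke the same equations (\eqref{renorm_f}, \eqref{SDE_gamma_f}, \eqref{contm1}, \eqref{eqLk}) at the same places, recover the same intermediate relations ($c_k=\tfrac{9}{k(k-1)^2(2k+1)}f_k$, $\beta_1 c_1=2f_1$, $c_1=-\tfrac65 f_1$), and isolate $\xi_0=-1/3$ for the same reason. Your final paragraph on the non-vanishing of the leading coefficients and the propagation of the algebraic type is a fair caveat; the paper does not make this point more rigorous either and works under the same standing assumption.
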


\subsection{Transcendental Content of the Borel transform}

Now, a very natural question to ask is what the number-theoretical contents of $\hat{\gamma}$ near its singularities is. However, the equations for the singular parts are linear, so that these 
singular parts are only determined up to a global constant which will be determined by matching with numerical determination of the singularity. Therefore, whenever we speak of the number 
theoretical content or the weight of a coefficient in the expansion of a singularity, we really speak of the ratio of this coefficient with respect to this global constant. In the study of 
\cite{BeCl13} (and of the previous chapter) the first orders were computed in the physical plane around the two first singularities of $\hat{\gamma}$ (i.e. around $\xi_0=\pm1/3$). It was found that the expansion of 
$\hat{\gamma}$ around those poles were rational linear combinations of products of odd zeta values.

Even this simple fact was very technical to prove in the physical plane because it involved the computation of complicated series and identities among MultiZeta Values to show the annulation of 
the other terms (typically: MZVs and Euler sums with higher weights). We will see that it is much simpler to show this result in the Borel plane. We are quite proud of this result, so let us write it down 
properly
\begin{thm}
 The coefficients of the expansion of $\hat\gamma$ around any of his singularities can always be written as rational linear combinations of products of odd Riemann zeta values.
\end{thm}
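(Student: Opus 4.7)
The plan is to combine two ingredients: the exponential representation \eqref{Hzeta} of the one-loop Mellin transform, which says that $H(x,y)=\frac{1}{1+x+y}\exp(P(x,y))$ where $P$ is a polynomial in $x,y$ with no constant term whose coefficients are all $\mathbb{Q}$-multiples of odd Riemann zeta values, and a careful control of the structure of $f(\xi,\zeta)$ as a function of $\zeta$ in a neighbourhood of a singularity $\xi_0$ of $\hat\gamma$. The first ingredient immediately implies that every mixed partial derivative $\partial_\zeta^a\partial_{\zeta'}^b H(3\zeta,3\zeta')$ evaluated at any rational point lies in the $\mathbb{Q}$-algebra $\mathcal{Z}^{\mathrm{odd}}$ generated by $\{\zeta(2k+1)\}_{k\geq 1}$, since differentiating the exponential produces only polynomial combinations of its odd-zeta coefficients, while the rational prefactor $1/(1+3\zeta+3\zeta')$ contributes rationals at rational arguments.

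The second ingredient I would extract from the renormalization group equation \eqref{renorm_f}. As argued in subsection \ref{localization}, the singularities of $\zeta\mapsto f(\xi,\zeta)$ for $\xi$ near $\xi_0$ lie only at $\zeta=0$ and $\zeta=\xi_0$. Feeding this observation back into \eqref{renorm_f} and solving order by order in the singular expansion, one obtains an Ansatz of the form
\begin{equation*}
 f(\xi,\zeta)\sim \sum_{\substack{r\geq 0\\ s\geq 1}} \frac{\psi_{r,s}(\xi)}{\zeta^{r}(\zeta-\xi_0)^{s}},\qquad \psi_{r,s}(\xi)=\sum_{n\geq 0}\psi_{r,s}^{(n)}(\xi-\xi_0)^{\alpha_k+r+s-1+n},
\end{equation*}
where the $\psi_{r,s}^{(n)}$ are determined recursively from lower indices by \eqref{renorm_f}. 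Since the only constants appearing in that recursion are rational (they come from the coefficients of the equation and from the data $\hat\gamma(0)=1$, $\hat\gamma'(0)=-2$), the $\psi_{r,s}^{(n)}$ all lie in $\mathbb{Q}$.

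Next I would plug this Ansatz into the Borel-plane Schwinger--Dyson equation \eqref{SDE_gamma_f}. The contour integrals over $\zeta$ and $\zeta'$ can then be computed by residues at the admissible poles $\zeta,\zeta'\in\{0,\xi_0\}$. Each residue picks a Taylor coefficient of $H(3\zeta,3\zeta')$ at a point in $\{0,\xi_0\}^{2}$, multiplied by a finite product of the rationals $\psi_{r,s}^{(n)}$ and by elementary rational factors from the residue computation itself. By the first ingredient, these Taylor coefficients are elements of $\mathcal{Z}^{\mathrm{odd}}$. After performing the two remaining convolution integrals (which introduce only further rational Beta-type factors from integrating monomials of the form $(\xi-\xi_0)^{\alpha_k+\cdots}$ against $(\xi-\xi_0)^{\alpha_k+\cdots}$), every coefficient of the singular expansion of $\hat\gamma$ around $\xi_0$ comes out as a finite $\mathbb{Q}$-linear combination of products of odd zeta values, which is the claim.

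The main obstacle is not arithmetic but combinatorial: I need to check that the Ansatz for $f$ is self-consistent under \eqref{renorm_f} and, crucially, that only finitely many triples $(r,s,n)$ contribute to a given coefficient of the expansion of $\hat\gamma$ around $\xi_0$. Here the exponents $\alpha_k$, $\beta_k$ computed in subsections \ref{PosSing} and \ref{negative} do the work: the shift $\alpha_k+r+s-1+n$ forces, at each fixed order in $(\xi-\xi_0)$, a finite window of admissible $(r,s,n)$, so that truncating the Ansatz to that window yields an exact identity for that order and no infinite sum ever has to be summed. Once this finiteness statement is established, the proof reduces to the observation on $H$ recalled at the end of section \ref{one_loop}, and the theorem follows.
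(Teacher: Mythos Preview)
Your overall strategy coincides with the paper's: expand $f(\xi,\zeta)$ as a sum of poles in $\zeta$ at $0$ and $\xi_0$ with coefficients $\psi_{r,s}^{(n)}$, feed this into the Schwinger--Dyson equation \eqref{SDE_gamma_f}, and reduce everything to derivatives of $H$ at special points. However, two steps need repair.

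First, a minor point: the $\psi_{r,s}^{(n)}$ are \emph{not} in $\mathbb{Q}$. The recursion \eqref{renorm_f} involves the full functions $\hat\gamma$ and $\hat\gamma'$, not merely their values at the origin. Their higher Taylor coefficients already contain $\zeta(3),\zeta(5),\dots$ (Lemma before \eqref{recurrence_poids}), and these feed into the $\psi_{r,s}^{(n)}$. The argument must therefore be set up as an induction: assuming the coefficients of $\hat\gamma$ up to a given order lie in the $\mathbb{Q}$-algebra $\mathcal{Z}^{\mathrm{odd}}$, the $\psi_{r,s}^{(n)}$ do too, and then so does the next order of $\hat\gamma$. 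This is how the paper phrases it and the conclusion survives, but your rationality claim is false as written. (Also, $P$ in \eqref{Hzeta} is a power series, not a polynomial.)

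Second, and this is the genuine gap: the points in $\{0,\xi_0\}^2$ at which you propose to evaluate Taylor coefficients of $H(3\zeta,3\zeta')$ are \emph{poles} of $H$, because $3\xi_0\in\mathbb{Z}^*$ and $H$ has poles precisely at $3\zeta\in -\mathbb{N}^*$ and $3\zeta+3\zeta'\in\mathbb{N}$. The exponential representation \eqref{Hzeta} is only a local expansion at the origin; it does not let you read off finite values at these integer points. The paper resolves this in two moves. It first splits $H=\tilde{H}_k+(\text{pole terms})$ via \eqref{H_split}; the pole terms have explicitly rational residues $P_l$, $Q_l$ and are dealt with separately, while $\tilde{H}_k$ is holomorphic at the needed points so its derivatives there make sense. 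It then uses the functional equation $\Gamma(z+1)=z\,\Gamma(z)$ together with the integrality of $3\xi_0$ to show that $\tilde{H}(x,k+y)$ equals a rational function with rational coefficients times $H(x,y)$, up to an additive rational function. This transports the odd-zeta content of the derivatives of $H$ at the origin to the derivatives of $\tilde{H}$ at $(0,3\xi_0)$. Without this splitting-plus-functional-equation step your residue computation is not well defined, and the assertion that ``Taylor coefficients of $H$ at $\{0,\xi_0\}^2$ lie in $\mathcal{Z}^{\mathrm{odd}}$'' is unjustified.
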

\begin{proof}
Throughout this proof we will use the splitting \eqref{H_split} and replace $H$ by the relevant $\tilde{H}_k$ or its equivalent for the positive singularities, since the polar parts do not 
change the transcendental content of the equation. Moreover, getting rid of the polar parts allows to evaluate $\tilde{H}$ (which will denote the properly subtracted \(H\) in each case) at the 
singular point of $H$. Now, from the renormalization group equation \eqref{renorm_f} with $\xi$ near a singularity, we see that one can expand $f(\xi,\zeta)$ near a singularity:
\begin{equation} \label{f_all}
 f(\xi,\zeta)\sim \sum_{\substack{r\geq0 \\ s\geq1}} \frac{\psi_{r,s}(\xi)} {\zeta^r(\zeta-\xi_0)^s}
 \sim \sum_{\substack{r\geq0 \\ s\geq1}}\sum_{n\geq0}\frac{\psi_{r,s}^{(n)}} {\zeta^r(\zeta-\xi_0)^s} (\xi-\xi_0)^{\alpha_k+r+s-1+n}
\end{equation}
with $\psi_{r,s}^{(n)}\in\mathbb{C}$. This comes from writing the L.H.S. of \eqref{renorm_f} as $3(\xi_0-\xi+(\zeta-\xi_0))$. The $1/\zeta^r$ terms
come from the expansion \eqref{f_zero} of $f(\eta,\zeta)$ with $\eta$ near $0$, which get multiplied by the singular part of \(\hat\gamma\) or \(\hat\gamma'\).
Using this in the Schwinger--Dyson equation \eqref{SDE_gamma_f} for $\xi\rightarrow\xi_0$, we see that the  loop integral in the factor where \(f\)
is singular in \(\xi_0\) will give derivatives of $\tilde{H}$, evaluated at 
$(3\zeta,0)$ and $(3\zeta,3\xi_0)$. The other \(f\) has only to be taken in the vicinity of 0 so that the expansion \eqref{f_zero} can be used, and
the second contour integral will ensure that we only have to evaluate \(\tilde H_k\) together with its derivatives at the points \((0,0)\) and
\((0,3\xi_0)\). Using the representation \eqref{Hzeta} for $H(x,y)$ we see that
one can rewrite once again the Mellin transform as:
\begin{equation} \label{H_zeta}
 H(3\zeta,3\zeta') = \frac{1}{1+3\zeta+3\zeta'}\exp\left(2\sum_{k=1}^{+\infty}\frac{\zeta(2k+1)}{2k+1}\left((3\zeta+3\zeta')^{2k+1}-(3\zeta)^{2k+1}-(3\zeta')^{2k+1}\right)\right)
\end{equation}
and $\tilde{H}$ only differs from $H$ by rational terms around \((0,0)\), so that its derivatives have the same transcendental contents as the above expression.
When taking values around \((0,3\xi_0)\), we can use the functional relation on \(\Gamma\) and the fact that \(3\xi_0\) is an integer to show that \(\tilde H(x,k+y)\) is a rational multiple,
with rational coefficients, of \(H(x,y)\), again up to the addition of a rational fraction with rational coefficients. Therefore, in every cases, the
only transcendental numbers which can appear are the odd zeta values, with a total weight which is bounded by the total number of derivatives.  
Using this information in a recurrent determination of the higher order correction to the singular behavior of \(\hat\gamma\) and all the coefficients
\(\psi_{r,s}^{(n)}\), we see that only these transcendental numbers can appear.  Hence we have proved that the expansions of $\hat{\gamma}$ around its singularities have no even zeta values, nor 
MultiZeta Values that cannot be expressed as $\mathbb{Q}$-linear combinations of products of odd zetas.
\end{proof}

I hope that the reader agrees with me when I am saying that this is already a nice result. However, a quite striking remark made in the physical plane is that the weights of those odd zetas were 
lower than expected at a given order. Now, let us show that our study in the Borel plane allows us to put a bound on those weights that is saturated by the weights found in \cite{BeCl13}.

\subsection{Weight of the odd Zetas}

Now, let us try to be more specific and get a bound on the weights of the different coefficients. To study the expansion of $\hat{\gamma}$, let us expand it around a singularity:
\begin{equation}
 \hat{\gamma}(\xi) \simall{\xi}{k/3}\sum_{p=0}^{+\infty}c_k^{(p)}(\xi-k/3)^{\alpha_k+p}.
\end{equation}
We will also need the expansion of $\hat{\gamma}$ around $0$
\begin{equation}
 \hat{\gamma}(\xi) \simall{\xi}{0} \sum_{p=0}^{+\infty} c_p\xi^p.
\end{equation}
Let us define the usual weight function defined by $w\left(\zeta(n)\right)=n,w(a.b)=w(a)+w(b), w(0)=-\infty$ and $w(a+b)=\max\{w(a),w(b)\}$. Then we have a simple lemma
\begin{lemm} \cite{BeSc08}
 We have $w(c_1) = w(c_2)= 0$ and $w(c_4)=3$. For all the other $p$ we have $w(c_p)=p$.
\end{lemm}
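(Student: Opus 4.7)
The plan is to combine the upper bound on weights already obtained in Chapter~3 with an elementary arithmetic remark about which weights are realisable, then to chase the low orders explicitly and handle the tail $p\geq 5$ by induction on the Schwinger--Dyson equation.

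First, recall that the representation \eqref{Hzeta} of $H(x,y)$ as $\frac{1}{1+x+y}$ times the exponential of a polynomial whose lowest degree term is of weight $3$ in odd zetas implies that any derivative of total order $k$ of $H$ evaluated at the origin has weight at most $k$. Feeding this into the Schwinger--Dyson equation \eqref{SDE} and using the renormalization group recursion \eqref{recursion_gamma} to express $\gamma_n$ in terms of $\gamma$ and its derivatives, one obtains by induction on $p$ the upper bound $w(c_p)\leq p$ that was already stated at the end of the previous chapter. By the previous theorem, $c_p$ is a rational linear combination of products of odd Riemann zetas. Now the set of weights realisable by such products is exactly $\{0\}\cup\{n\in\mathbb{N}:\,n\geq 3,\,n\neq 4\}$, since $4$ is the only integer greater than or equal to $3$ that cannot be written as a sum of integers greater than or equal to $3$. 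Combining this with the upper bound forces $w(c_1),w(c_2)\leq 0$ and $w(c_4)\leq 3$.

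For the low orders $p=0,1,2,3,4$ I would then just read off the explicit values recalled in Chapter~3: with the normalization $c_p=\gamma^{\text{phys}}_{p+1}/p!$ these give respectively $1,\,-2,\,7,\,(16\zeta(3)-160)/6,\,(2444-328\zeta(3))/24$, of weights $0,0,0,3,3$. This matches the statement in these cases and already saturates the bound.

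For $p\geq 5$ the task is to show the bound is saturated. I would isolate, inside $\mathcal{I}(H)$ at order $a^{p+1}$, the contribution where one of the two operators in $\mathcal{I}$ acts trivially and the other differentiates $p$ times, producing the term $\frac{1}{(p-1)!}(\partial_x^{p-1}+\partial_y^{p-1})H|_0$. A direct inspection of \eqref{Hzeta} shows that for $p$ odd this contains $\zeta(p)$ with a nonzero rational coefficient, while for $p$ even with $p\geq 6$ it contains a product such as $\zeta(3)\zeta(p-3)$ with nonzero rational coefficient; in either situation one gets a weight-$p$ contribution. All other terms contributing at order $a^{p+1}$ involve lower $\gamma^{\text{phys}}_n$ with $n\leq p$, whose weight is controlled by the induction hypothesis, paired with derivatives of $H$ of total order $\leq p-1$, whose top-weight part is again dictated by \eqref{Hzeta}.

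The hard part will be to rule out an accidental cancellation between the distinguished term above and these induced contributions. The plan is to isolate the equation satisfied by the top-weight part alone: by the induction hypothesis each lower-order $\gamma^{\text{phys}}_n$ contributes to weight $p$ only linearly in its own top-weight coefficient, so the weight-$p$ part of $c_p$ satisfies an affine equation with rational coefficients. Showing that its homogeneous part never cancels the explicit inhomogeneous term I extracted is essentially a combinatorial book-keeping problem; I would expect it to reduce to the same non-resonance information on $\mathbb{Z}^*/3$ that controls the positions and orders of the singularities of $\hat\gamma$ in the Borel plane. Verifying that these rational prefactors indeed never conspire to vanish is the real obstacle, and is the reason why this lemma is considerably less trivial than it looks.
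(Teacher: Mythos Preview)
Your upper-bound argument together with the arithmetic observation that no product of odd zetas has weight $1$, $2$ or $4$ is exactly what the paper does: its proof is a two-line sketch invoking the structure of $H$ via \eqref{Hzeta}, the Schwinger--Dyson equation and the renormalization group recursion, then singling out $p=1,2,4$ as the orders where the bound cannot be attained.

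Your attempt to go further and prove saturation for $p\geq5$ contains a concrete error. The term you isolate --- where one side of $\mathcal{I}$ acts trivially and the other applies $\partial_x^n$ --- carries no zeta weight at all: since the argument of the exponential in \eqref{Hzeta} vanishes identically on the axis $y=0$, one has $H(x,0)=1/(1+x)$ and hence every pure derivative $\partial_x^n H\big|_{x=y=0}=(-1)^n n!$ is rational. The weight-$p$ contributions must come from the \emph{mixed} derivatives $\partial_x^n\partial_y^m H\big|_0$ with $n,m\geq1$, paired with the leading (rational) coefficients of $\gamma_n$ and $\gamma_m$. For odd $p$ this does produce a nonzero multiple of $\zeta(p)$, since the coefficient of $x^n y^m$ in $(x+y)^p$ is positive for all $n+m=p$ with $n,m\geq1$ and the relevant rational prefactors are all positive; for even $p\geq6$ one must reach combinations like $\zeta(3)\zeta(p-3)$ and the non-cancellation check you flag as the real obstacle remains open in your sketch.

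It is worth noting that the paper's own proof does not fill this gap either: it declares the result ``trivial'' and cites \cite{BeSc08} without addressing possible cancellations. For the use made of the lemma immediately afterwards --- controlling the modified weight $W$ of $\hat\gamma$ around its singularities via \eqref{WeiSeries} and \eqref{WeiPro} --- only the inequality $w(c_p)\leq p$ is actually needed, and that part of your argument is correct and complete.
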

\begin{proof}
 The proof is trivial from the Borel plane version of the Schwinger--Dyson equation \eqref{SDE} and of the renormalization group equation \eqref{recursion_gamma}: we just have to notice that the 
 $n^{\text{th}}$ derivative of $H$ will bring a $\zeta(n)$, thanks to \eqref{H_zeta}. The cases $p=1,2,4$ have to be corrected due to the lack of odd zetas of weight $0, 1$ or $4$.
\end{proof}
In the general cases, computations must be done with alien derivatives since the expansion around the other singularities of any particular analytic continuation will involve terms stemming from 
iterated alien derivatives. This does not really change the computations, but the present formulation becomes inexact. Therefore, we will work here only for the two first singularities of 
$\hat{\gamma}$ and work out explicitly the case $k=+1$.

Since $\hat L_1(\xi)$ carries the most singular contribution to $\hat{\gamma}(\xi)$ for $\xi\sim \xi_0=1/3$, it is natural to assume that it will also carry the zetas of highest weight. So let us 
expand it around this singularity:
\begin{equation}
 \hat L_1(\xi) \simall{\xi}{\xi_0} \sum_{n=0}^{\infty}L_1^{(n)}(\xi-\xi_0)^{\alpha_1+n-1}.
\end{equation}
The study of \ref{PosSing} implies that the singular term of order $\alpha_1+n-1$ in $\hat L_1$ contributes to the singular term of order $\alpha_1+n$ in $\hat{\gamma}$. Our aim is to show that other contributions of this order to \(\hat\gamma\) are of lower weight so that the weights in \(\hat L_1\) determine those in \(\hat\gamma\).
To study the weight of $L_1^{(n)}$ we use the renormalization 
group equation (\ref{eqLk}).  In the neighborhood of \( \xi_0 \), the pointwise multiplication by \( \xi \) does not lower the order of the singularity, so that the most singular part of the RHS of eq.~(\ref{eqLk}) comes from the term convoluted with $\hat{\gamma}'$, with \(\hat L_1\) the singular factor.  Indeed, \(L_1\) is of order 2 at the origin, so that \(\hat L_1\) vanishes at \(\xi=0\). Multiplication by \(k - 3\xi\) in the LHS lower the order by one so that we end up with
\begin{equation} \label{recurrence_poids}
 w(L_1^{(n)}) \leq \text{max}_{p\in[2,n+1]}\{w(c_p)+w(L_1^{(n-p+1)})\}
\end{equation}
from the term proportional to \( (\xi-\xi_0)^{\alpha_1+n} \).
Since the \(c_p\) appears in the relation between the coefficients of order differing by \(p-1\), the weight of \(L_1^{(n)} \) cannot be simply \(n\). However, a weight like \(3n/2\) allows terms which are not possible.  For example, at level \(2n\), \(\zeta(3)^n \) is the only term of weight \(3n\).

In fact, there is a way to describe exactly the terms which can appear in \(L_1^{(n)}\). 
We define a modified weight system \(W\) such that $W(\zeta(2n+1)) = 2n$. With this modified weight, \(c_p\) is of weight \(p-1\) and  equation (\ref{recurrence_poids}) shows that \(L_1^{(n)}\) is of maximal weight \(n\). In fact, since the weight of the odd zetas is even, all weights are even and additional terms can only appear for even orders.

Now, let us check that the contributions from all other terms have a smaller weight. In order to simplify notations and computations, we will extend the weight \(W\) to formal series by defining:
\begin{equation}\label{WeiSeries}
	W( \sum_{p=0}^\infty a_p \xi^p ) = \sup_p \bigl( W(a_p) - p \bigr).
\end{equation}
It is now easy to show that the weight of a convolution product is bounded by the weights of its factors:
\begin{equation} \label{WeiPro}
	W ( \hat f  \star \hat g ) \leq W(\hat f) + W(\hat g) -1.
\end{equation}
Let us remark that a negative weight implies that  the first terms in the series are zero.
We will also need a similar definition around a singularity \(\xi_0\), defining the weight function \(W_{\xi_0} \) from the weights of the expansion of a function around \(\xi_0\). Here the definition will depend on the reference exponents \(\alpha_k\).  For example, we will have that 
\begin{equation}
W_{k/3}(\hat \gamma) = \sup_p (c_k^{(p)} - p ).
\end{equation}
Using the properties of the singular part of a convolution product, we can generalize formula~(\ref{WeiPro}) to
\begin{equation}\label{WeiProS}
  W_{\xi_0} ( \hat f  \star \hat g ) \leq \max(W_{\xi_0}(\hat f) + W(\hat g)-1, W(\hat f) + W_{\xi_0}(\hat g) -1) 
\end{equation}
The hypothesis we want to prove take the simple form 
\begin{equation}
 	W_{1/3} ( \hat \gamma ) = 0.
\end{equation}

Let us suppose that this is the case. Using the weights of the convolution products, Eq.~(\ref{eqLk}) shows that \(W(\hat L_1) = -1\), and then that, with our hypothesis, \(W_{1/3}(\hat L_1) = +1\) since \(W(\hat\gamma') =0\) and \(W_{1/3}(\hat\gamma') = + 1\). The difficult part is to show that the additional terms in the Schwinger--Dyson equation~(\ref{SDE_gamma_f}) depending on the subtracted Mellin transform \(\tilde H_1\) are really subdominant. 

We will need the weight of $\hat{\gamma}_n$, which can be easily deduced from its recursive definition and the relation~(\ref{WeiPro})
\begin{equation}
 W(\hat{\gamma}_n) = 1-n.
\end{equation}
Now, using this expansion, the representations (\ref{f_zero}) and (\ref{f_all}) of $f(\xi,\zeta)$ and the splitting (\ref{H_split}) of $H$ in the Schwinger--Dyson equation 
(\ref{SDE_gamma_f})\footnote{more precisely, in the term of (\ref{SDE_gamma_f}) quadratic in $\hat{G}$ since the linear one will not 
bring any new zeta.}. We end up with
\begin{equation}
 \partial_{\xi}\hat{\gamma}(\xi) \simall{\xi}{\xi_0} \sum_{p\geq1}\sum_{r\geq0}\sum_{s\geq1} \bigl(h_r^p+\tilde{h}_s^p \bigr) \hat\gamma_p \star \psi_{r,s}
\end{equation}
with the equivalence sign meaning here up to rational terms. The quantities $h_r^p$ and $\tilde{h}_s^p$ are defined by:
\begin{subequations}
 \begin{eqnarray}
          h_r^p & := & \left.\frac{\d^p}{\d\zeta^p}\left(\sum_{i=0}^r q_i^r\frac{\d^i}{\d\zeta'^i}\tilde{H}(3\zeta,3\zeta')|_{\zeta'=0}\right)\right|_{\zeta=0} \\
  \tilde{h}_s^p & := & \left.\frac{\d^p}{\d\zeta^p}\left(\sum_{i=0}^{s-1} \tilde{q}_i^s\frac{\d^i}{\d\zeta'^i}\tilde{H}(3\zeta,3\zeta')|_{\zeta'=\xi_0}\right)\right|_{\zeta=0}
 \end{eqnarray}
\end{subequations}
with $q_i^r,\tilde{q}_i^s\in\mathbb{Q}$. \(h_r^p\) (resp.\ \(\tilde h_s^p\)) have therefore a weight bounded by \(p+r\) (resp.\ \(p+s-1\)). 

The only thing that is left to find is $W_{1/3}(\psi_{r,s})$ from the renormalization group equation~(\ref{renorm_f}).  One readily obtains that this weight is bounded by \(1-r-s\).  Using that \(s\) is bounded below by 1 and the law for the convolution products, we find that every terms in the sum have weight less than or equal to 0. The weight in \(\xi_0\) of \(f(\xi,-1/3)\) is also bounded by 0 so we verify that all these terms have subdominant weights with respect to \(\hat L_1\).

The case with \(\xi_0 =-1/3\) is quite similar: the only real difference is that, due to the presence of \(f(\xi,-1/3)\) in the right-hand side of the Schwinger--Dyson equation, each successive coefficient in the expansion of \(\hat\gamma\) comes from a system of equations derived from this Schwinger--Dyson equation and the renormalization group equation for \(f\). 

Hence, in practice, we have proven (in a quite sketchy way, admittedly but the detailed computations are seriously ugly and not much more informative than our discussion above)
\begin{equation}
 W_{1/3}(\hat\gamma) = W_{-1/3}(\hat\gamma) = 0.
\end{equation}
When translated into a nice theorem, we have
\begin{thm} \cite{BeCl14}
 With the modified weights system $W(\zeta(2n+1))=2n$ and all the other usual properties of a weight we have
 \begin{equation}
  W(c_{\pm1}^{(p)})\leq p.
 \end{equation}
\end{thm}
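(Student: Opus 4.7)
The plan is to deduce the bound $W(c_{\pm 1}^{(p)}) \leq p$ directly from the stronger statement $W_{\pm 1/3}(\hat\gamma) = 0$ already sketched in the preceding subsection. By the definition of the weight of a series extended to expansions around a singularity (in the spirit of \eqref{WeiSeries}), the equality $W_{\xi_0}(\hat\gamma) = 0$ translates precisely into $\sup_p\bigl(W(c_{\xi_0}^{(p)}) - p\bigr) = 0$, which is the desired inequality. The whole problem therefore reduces to making the sketch around \eqref{WeiPro} and \eqref{WeiProS} rigorous via a proper induction on the order $p$.

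First I would establish the two basic estimates for convolution products, namely $W(\hat f \star \hat g) \leq W(\hat f) + W(\hat g) - 1$ and its singular counterpart \eqref{WeiProS}, by direct manipulation of the Taylor expansions at $0$ and at $\xi_0$ respectively. Next I would use the Borel form of the renormalization group equation \eqref{recursion_gamma} to get $W(\hat\gamma_n) = 1-n$ together with $W(\hat\gamma) = W(\hat\gamma') = 0$; the only input needed here is that each mixed derivative $\partial_\zeta^k\partial_{\zeta'}^\ell H$ has modified weight at most $k + \ell$, which is immediate from the exponential representation \eqref{H_zeta} (odd zetas are weighted by even integers, so total weight behaves additively with the number of derivatives).

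Then I would run a simultaneous induction on $p$ controlling $W_{1/3}(\hat\gamma)$ and $W_{1/3}(\hat L_1)$. Equation \eqref{eqLk} shows that the leading singular contribution to $\hat L_1$ comes from the convolution with $\hat\gamma'$, which through the singular product rule \eqref{WeiProS} yields $W_{1/3}(\hat L_1) \leq W_{1/3}(\hat\gamma) + 1$; feeding this back into the Schwinger--Dyson equation \eqref{SDE_gamma_f} through the expansions \eqref{f_zero} and \eqref{f_all} of $f(\xi,\zeta)$ produces the reverse inequality $W_{1/3}(\hat\gamma) \leq W_{1/3}(\hat L_1) - 1$, which closes the induction at the fixed value $0$. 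The delicate technical point here is controlling the additional terms associated with the subtracted Mellin transform $\tilde H_1(3\zeta,3\zeta')$: using the splitting \eqref{H_split} and the exponential form \eqref{H_zeta}, the derivatives of $\tilde H_1$ at $(0,0)$ and at $(0,3\xi_0)$ have modified weight bounded by the number of derivatives taken, and combining this with the bound $W_{1/3}(\psi_{r,s}) \leq 1-r-s$ coming from the renormalization group equation \eqref{renorm_f} shows that every such term has strictly negative modified weight as a series and thus cannot saturate the bound.

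The main obstacle will be the case $\xi_0 = -1/3$, where the integral linear in $f$ in \eqref{SDE_gamma_f} contributes at leading order through $f(\eta,-1/3)$, forcing one to solve a small coupled system at each order of the expansion, built out of both the Schwinger--Dyson equation and the specialization $\zeta = -1/3$ of \eqref{renorm_f}, i.e.\ equation \eqref{eq_g}. I would handle this by writing the $2\times 2$ linear system induced at each order $p$ for the pair $(c_{-1}^{(p)}, f_{-1}^{(p)})$, checking by the same weight bookkeeping that its inhomogeneous term has modified weight at most $p$, and then invoking invertibility of the $2\times 2$ system for $p$ large enough (the exceptional low values of $p$ can be treated by hand, which is compatible with the irregularities already observed for $c_{1,2,4}$). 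This yields $W_{-1/3}(\hat\gamma) = 0$ and completes the proof. One should finally note that the bound so obtained is saturated by the data found in Chapter~\ref{chap3}, so there is no hope of improving it by sharpening the intermediate estimates.
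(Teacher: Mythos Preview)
Your proposal is correct and follows essentially the same route as the paper: extend the modified weight to series and singular expansions, prove the convolution bounds \eqref{WeiPro}--\eqref{WeiProS}, deduce $W(\hat\gamma_n)=1-n$, then run the induction through $\hat L_1$ for $\xi_0=+1/3$ (controlling the $\tilde H_1$ contributions via the bound $W_{1/3}(\psi_{r,s})\leq 1-r-s$) and through the coupled system $(\hat\gamma,f(\cdot,-1/3))$ for $\xi_0=-1/3$. One small inaccuracy: the irregular weights you cite for $c_1,c_2,c_4$ concern the Taylor expansion of $\hat\gamma$ at the origin with the \emph{ordinary} weight $w$, not the singular expansion at $\pm 1/3$ with the modified weight $W$, so they are not the right justification for treating low $p$ by hand in the $-1/3$ system.
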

These weight limits are exactly the ones observed in the previous chapter: in the numbers $d_4$ and $e_4$ (which correspond to $c_{-1}^{(4)}$ and $c_{+1}^{(4)}$ respectively) we found the numbers 
$\zeta(3)^2$ and $\zeta(5)$, which are the only products of zetas of modified weight $4$.

Using the formalism of alien derivations, these results should generalize to the other singularities. This study is left for the next Ph.D. student. Indeed, it is now the time to end our 
journey in the land of the Schwinger--Dyson equations, and to move to the scary country of gauge theories.

%
%
\chapter{BV formalism}

 \noindent\hrulefill \\
 {\it
Ses purs ongles très-haut dédiant leur onyx, \\
L'Angoisse, ce minuit, soutient, lampadophore, \\
Maint rêve vespéral brûlé par le Phénix \\
Que ne recueille pas de cinéraire amphore\\

Stéphane Mallarmé. Le sonnet en X.}

 \noindent\hrulefill
 
 \vspace{1.5cm}

This chapter aims to present the second important topic of this Ph.D.: the Batalin--Vilkovisky (BV) formalism. This formalism allows to treat (more) rigorously theories with gauge symmetries, 
even with open gauge symmetries. The canonical example of a physical theory with open symmetries is supergravity without auxiliary fields, and this feature seems to have been noticed for the 
first time in \cite{FrNiFe76}.

The following presentation of the BV formalism is an incomplete account of works done with Nguyen Viet Dang, Christian Brouder, Fr\'ed\'eric H\'elein, Camille Laurent-Genoux, Serguei Barannikov... I 
am very thankful to all of them for all the hours of discussions we have had, and I wish to thank them once again for having motivated me to study this wonderful subject.

\section{Foreword}

The traditional approach when we write about a subject is to define all the needed objects and then to move to the presentation. Here, due to the length of the definitions, we will assume that our reader has 
some knowledge of topology, differential geometry, homology and cohomology. Some definitions will still be given, but rather to fix the notations: I will not start from scratch. \cite{Nash} is a 
remarkable introduction, aimed to physicists, to these difficult subjects and the definitions below will often follow its lines. But first, we will see a few definitions concerning symmetries.

\subsection{Open and closed symmetries}

One of the key ingredients in the BV theory (as in the BRST formalism) is to represent symmetries as cohomologies of some
complexes. The goal of this subsection is to give a definition of symmetries of an action functional as
the cohomology of a certain complex and also define the notions of open and closed symmetries.
We will here follow the presentation of \cite{Henneaux}, section 3.1. Let $S_0$ be the action functional of our theory, and $y^i$ the fields. Let us perform a transformation of the fields
\begin{equation}
 y^i\longrightarrow y^i +\delta_{\varepsilon}y^i = y^i + R^i_{\alpha}\varepsilon^{\alpha}
\end{equation}
where repeated indices indicates also a space-time integration. Now, let us assume that the action is invariant under the above transformations
\begin{equation}
 \delta_{\varepsilon}S_0 = \frac{\delta S_0}{\delta y^i}\delta_{\varepsilon}y^i = 0.
\end{equation}
Then this implies the existence of the Noether identities
\begin{equation}
 \frac{\delta S_0}{\delta y^i}R^i_{\alpha} = 0.
\end{equation}
Now, as noted in \cite{Henneaux}, the set of invertible gauge transformations leaving something (the action) invariant fulfills by definition the axioms of a Lie group, that we will denote by 
$\bar{\mathcal{G}}$. Its Lie algebra is then the set of infinitesimal gauge transformations. Let $\mathcal{N}$ be the set of trivial gauge transformations, i.e. of 
gauge symmetries derived from the equations of motion (i.e. from the equation $dS_0=0$). These transformations form an ideal and carry no physical information (one can check that the constants of motion derived 
from them are identically vanishing functions). This is why we are interested by the group $\mathcal{G}=\bar{\mathcal{G}}/\mathcal{N}$ which is obtained by quotienting gauge symmetries by 
trivial gauge transformations.

In mathematical terms: the non trivial gauge symmetries are represented as the cohomology in degree $1$ of some Koszul
complex. Indeed, we have a function $S_0$ on some vector space $V$ with coordinates $y^i$ and we consider
the kernel of the contraction map
\begin{eqnarray*}
\iota_{dS_0}: X\in \Gamma(V,TV) \mapsto dS_0(X)\in C^\infty(V). 
\end{eqnarray*}
The Lie algebra $\ker(\iota_{dS_0})$ turns out to be exactly the
infinitesimal gauge symmetries.
We assume the Lie algebra $\ker(\iota_{dS_0})$ is finitely 
generated by the family of vector fields $ (R^i_\alpha\frac{\partial}{\partial y^i})_\alpha $ as a $C^\infty(V)$
module (this happens to be true whenever $S_0$ is Morse--Bott i.e. a smooth function on a manifold whose critical set is a closed submanifold and whose Hessian is non-degenerate in the normal 
direction). In the algebraic case where $S_0$ is a polynomial function on $V$, 
$\overline{\mathcal{G}}=\ker(\iota_{dS_0})$ will be finitely generated as
a $C^\infty(V)$-module by the Noether Theorem.

To define rigorously the trivial gauge transformations,
we can extend the above contraction map as follows
\begin{eqnarray*}
\Gamma(V,\Lambda^2TV) \overset{\iota_{dS_0}}{\mapsto} \Gamma(V,TV) 
\end{eqnarray*}
whose image is the set of all vector fields of the form
$R^{ij}\frac{\partial S_0}{\partial x^i}\frac{\partial}{\partial x^j}$ where
$R^{ij}\in C^\infty(V)$. Hence we see that $\mathcal{N}=\text{im}\left(\iota_{dS_0}\right)$ belongs to the Lie algebra of infinitesimal gauge symmetries 
$\overline{\mathcal{G}}=\ker\left(\iota_{dS_0}\right)$ since $\iota^2=0$, but elements of $\mathcal{N}$ really come from the equations of motion. 
This means that the non trivial gauge 
symmetries are represented by the cohomology in degree $1$ of the complex
\begin{eqnarray*}
\Gamma(V,\Lambda^2TV) \overset{i_{dS_0}}{\mapsto} \Gamma(V,TV) \mapsto dS_0(X)\in C^\infty(V). 
\end{eqnarray*}
Now, $\mathcal{G}$ is really huge: it has as many elements as functionals of the fields. In more mathematical terms:
$\mathcal{G}$ is infinite dimensional and is a module over $C^\infty(V)$.
This explains the redundancies in $\mathcal{G}$. For instance if
$(X_\alpha)_\alpha$ is a family of vector fields that kills $S_0$
then so does the vector field $f^\alpha X_\alpha$ where
$(f^\alpha)_\alpha$ is a family of smooth functions on $V$.
In physical terms, the two gauge transformations
\begin{subequations}
 \begin{align}
  & \delta_{\varepsilon}y^i = R^i_{\alpha}\varepsilon^{\alpha} \\
  & \delta_{\eta}y^i = (R^i_{\alpha}M^{\alpha}_{\beta})\eta^{\beta}
 \end{align}
\end{subequations}
(with $M^{\alpha}_{\beta}$ a matrix that might depend on the fields) will bring two equivalent Noether identities. Therefore, it is coherent to restrict ourselves to a set of transformations that 
carry all the information about the Noether identities. We say that $G=\{\delta_{\varepsilon}y^i=R^i_{\alpha}\varepsilon^{\alpha}\}$ is a generating set if all the gauge transformations 
can be expressed as
\begin{equation}
 \delta_{\eta}y^i = \left(\mu^{\alpha}_AR^i_{\alpha} + M^{ij}_A\frac{\delta S_0}{\delta y^j}\right)\eta^A
\end{equation}
for some $\mu^{\alpha}_A$ and $M^{ij}_A=-M^{ji}_A$ that may depend on the fields. This precisely reflects the the fact that the first cohomology group 
$H^1(\Lambda TM,\iota_{dS_0})$ is finitely generated as a
$C^\infty(V)$ module

Then the commutator of two elements of the generating set, since it is a gauge transformation, can be expressed in the above form as well:
\begin{equation}
 R_{\alpha}^j\frac{\delta R_{\beta}^i}{\delta y^j} - R_{\beta}^j\frac{\delta R_{\alpha}^i}{\delta y^j} = C_{\alpha\beta}^{\gamma}R_{\gamma}^i + M_{\alpha\beta}^{ij}\frac{\delta S_0}{\delta y^j}
\end{equation}
with $C_{\alpha\beta}^{\gamma}$ and $M_{\alpha\beta}^{ij}$ that may depend on the fields. Then, even if the set of all the gauge transformations has a Lie group structure, the generating set 
might not have such a structure. This shall not arrive as a surprise since the generating set is not a basis for the Lie group structure. Finally, we will say that the symmetry of our theory is closed if 
$M_{\alpha\beta}^{ij}=0$ and open otherwise. Mathematically, the Lie algebra of symmetries $\overline{\mathcal{G}}$ is closed if it can be written as a direct sum of Lie
algebras $\overline{\mathcal{G}}=\mathcal{N}\oplus \mathcal{H}$ where $\mathcal{H}$ is isomorphic to $\mathcal{G}=\bar{\mathcal{G}}/\mathcal{N}$. Moreover, the generating set is much smaller than 
the set of all gauge transformations. This is a crucial point since we will have to add as many ghosts as there is elements in the generating set. Actually, we will have a ghost for each point of 
the space-time and each generator of the local gauge algebra. This allows us to speak of the ghosts as fields, one for each generator of the local gauge algebra.


We took the time to precise what we mean by open and closed symmetries because a remarkable feature of the BV formalism is that it allows to treat on an equal footing open and closed symmetries. We 
will later qualitatively explain why this is not true for the Faddeev--Popov formalism. 

\subsection{Some geometric concepts}

We will not give here the rigorous definition of a fiber bundle. This would be quite meaningless since we did not (and neither will we) give the rigorous definition of a manifold (it is a smooth space that 
locally looks like $\mathbb{R}^n$, for a given $n$). These are very 
standard, but I used the introduction of \cite{Nash}.

So, a fiber bundle is a space that locally looks like a Cartesian product of spaces. Let $E$ be the fiber bundle (the total space), $B$ be its base and $F$ its fiber. The key point is that there exists 
a projection
\begin{equation}
 \pi:E\longrightarrow B
\end{equation}
that sends any point of $E$ into a point of the base. So we say that the fiber $F$ is above the base $B$. Some examples of fiber bundles are the tangent and cotangent bundles. We will discuss them later, 
but first let us define one more object related to the concept of fiber bundle. A section of a fiber bundle is a continuous map 
\begin{equation}
 s:B\longrightarrow E
\end{equation}
such that $\pi\circ s= Id_E$. In other words: given a point $b\in B$, the section gives a point $x\in E$ such that $x$ would be projected on $b$. In a sense, the section makes us choose a point 
on the total space $E$. The set of sections of $E$ with base space $B$ will be written $\Gamma(B,E)$.

Now, let $M$ be a manifold, let $p(t)\subset M$ be a curve in $M$ parametrized by the parameter $t$ and $f:M\longrightarrow\mathbb{R}$ a smooth function. Then the rate of change of $f$ along the 
direction of $p(t)$ is
\begin{equation}
 \frac{\d}{\d t}f(p(t)) = Xf.
\end{equation}
$X$, in a local coordinates system reads
\begin{equation}
 X = \frac{\d x^i(p(t))}{\d t}\frac{\partial}{\partial x^i}.
\end{equation}
$X$ is a differential operator and an element of the tangent space of $M$ at $p$. This space is written $T_pM$ and $\{\frac{\partial}{\partial x^i}\}$ is one of its bases. Its dual space is the 
cotangent space of $M$ at $p$, denoted $T_p^*M$. A basis for this space is $\{dx^i\}$, the dual basis of $\{\frac{\partial}{\partial x^i}\}$. Elements of the tangent space are called vectors and 
elements of the cotangent space are called covectors. The tangent and cotangent spaces have a natural structure of vector spaces. Now we define the wedge product for the covectors
\begin{equation}
 \wedge:(dx^i,dx^j)\longrightarrow dx^i\wedge dx^j
\end{equation}
with the properties:
\begin{enumerate}
 \item $dx^i\wedge dx^j=0 \Leftrightarrow i=j$,
 \item $dx^i\wedge dx^j=-dx^j\wedge dx^i$,
 \item $(adx^{i_1}+bdx^{i_2})\wedge dx^{i_3} = adx^{i_1}\wedge dx^{i_3} + bdx^{i_2}\wedge dx^{i_3}$.
\end{enumerate}
If we iterate this procedure $r$ times ($r\leq\text{dim}(M)$) we obtain the $r$-forms of $M$ at $p$. $\Lambda^r(T_p^*M)$ will be the set of $r$-forms of $M$ at $p$. The famous exterior algebra of 
$M$ at $p$, which we will write $\Lambda(T_p^*M)$, is defined as the direct sum of the $\Lambda^r(T_p^*M)$ for $r\in\mathbb{N}$ (but for $r\geq\dim(M)+1$ we have $\Lambda^r(T_p^*M)=\{0\}$).

We can apply exactly the same procedure to vectors. We will also denote by $\wedge$ the wedge product for them. The obtained objects will be called polyvectors and the set of polyvectors of 
$M$ at $p$ will be written $\Lambda(T_pM)$. Up to now, we have built objects ``above'' a point $p\in M$. If we authorize $p$ to vary in all $M$ then we define the tangent space of $M$ as
\begin{equation}
 TM :=\{(p,v)|p\in M;v\in T_pM\},
\end{equation}
which has a natural bundle structure. Similarly we define the cotangent space of $M$, the polyvectors on $M$ and the forms on $M$. Let us notice that the tangent and cotangent bundles on $M$ can 
be restricted on any submanifold of $M$. They are subbundles of the tangent and cotangent bundles of the initial manifold.

From what has been said above, it is clear that $TM$ has the structure of a fiber bundle with $M$ as its base space. We define the polyvector fields as sections of this bundle. Intuitively, this is 
exactly what one sees when we speak about a field: a vector field is something that, at each point of $M$, associates a vector. The same is true with polyvectors. Similarly, form fields are 
sections of the fiber bundle $T^*M$ but, in an abuse of language, we will still call them forms.

Given a (super)manifold $M$ (super means that some coordinates can be anticommuting), we will write $\Pi TM$ the shifted tangent bundle. It is the bundle with reversed parity on the fibers. By 
this we mean that the coordinates on the fiber of $TM$ which were originally bosonic (commuting) become fermionic (anticommuting), and vice-versa. Then it is clear that the 
smooth functions on the shifted tangent bundle are elements of $\Lambda T^*M$, i.e. forms. This comes from the fact that a smooth function of anticommuting 
variables has to be a polynomial of degree at most the number of independent variables. Similarly, smooth functions on the shifted cotangent bundle are polyvector fields. Let us notice that this 
isomorphism (called the decalage isomorphism) survives when we pass to sections, i.e. to the fields.

Finally, let us give some definitions of symplectic geometry. They are from the lecture notes \cite{Cannas}. First, a symplectic manifold is a doublet $(M,\omega)$ such that $M$ is a 
manifold and $\omega$ a symplectic two-form. This means the following: let $\omega|_p$ be the restriction of $\omega$ to a point $p\in M$. Then , by definition
\begin{equation}
 \omega|_p:T_pM\times T_pM\longrightarrow\mathbb{R}
\end{equation}
and $\omega|_p$ is said to be symplectic if $\omega$ is closed (i.e. $d\omega=0$) and non-degenerate:
\begin{equation}
 \forall v\in T_pM,\omega|_p(u,v) = 0 \Leftrightarrow u=0.
\end{equation}
Then $\omega$ is said to be symplectic if $\omega|_p$ is symplectic for all $p$. It can be shown that a symplectic manifold is always of even dimension.

It shall be clear to the reader that $T^*M$ (and therefore $\Pi T^*M$) has a structure of symplectic space. Indeed, in the local coordinate system 
$\{x_1;\dots; x_n;\xi_1;\dots\xi_n\}$ of $T^*M$  one can check that the two-form
\begin{equation}
 \omega = dx^i\wedge d\xi_i
\end{equation}
does the job.

Let $(M,\omega)$ be a symplectic manifold. Then the submanifold $Y\subset M$ is a lagrangian submanifold if $\omega$ is identically vanishing on $T_pY$ and the dimension of $T_pY$ is one half 
of the dimension of $T_pM$.
\begin{equation}
 \forall p\in Y,\omega|_p\equiv0 \quad \text{and} \qquad 2\text{dim}(T_pY) = \text{dim}(T_pM)
\end{equation}
It can be shown that if $\omega$ is identically vanishing on a submanifold $X$ then the dimension of $T_pX$ is at most one half of the dimension of $T_pM$: lagrangian submanifolds are maximally 
dimensional submanifolds such that $\omega$ vanishes when restricted on them. They will be crucial later on. 

Now, let $X$ be a submanifold of $M$. Then the conormal space of $X$ at $p\in X$ is the subspace of the cotangent space of $M$ that vanishes on $T_pX$:
\begin{equation}
 N^*_pX = \{\xi\in T^*_pX|\forall v\in T_pX,\xi(v)=0\}.
\end{equation}
Then $\cup_{p\in M} N_p^*X$ is the conormal bundle of $X$, which will be written $N^*X$. Then we can show that $N^*X$ is a lagrangian submanifold of $T^*M$. We also have shifted conormal spaces 
$\Pi N^*X$, that are lagrangian submanifolds of $\Pi T^*M$.

Now, we define the Schouten--Nijenhuis brackets on $\Pi T^*M$ by
\begin{subequations}
 \begin{align}
  \{x^i,x^*_j\} & = \delta^i_j = -\{x^*_j,x^i\}, \\
  \{x^i,x^j\} & = \{x^*_i,x^*_j\} = 0
 \end{align}
\end{subequations}
extended by linearity. Here $(x^i)_{i=1\dots n}$ is a basis of $M$ and $(x^*_i)_{i=1\dots n}$ the corresponding basis on the fiber. These brackets are extended to brackets on polyvector fields 
(i.e. on $\mathcal{C}^{\infty}(\Pi T^*M)$ by
\begin{equation}\label{Sch-Nij}
 \{F,G\} = \frac{\derG\partial F}{\partial x^i}\frac{\derD\partial G}{\partial x^*_i} - \frac{\derG\partial F}{\partial x^*_i}\frac{\derD\partial G}{\partial x^i}
\end{equation}
with $\derD\partial$ the usual derivative and
\begin{equation} \label{der_gauche}
 \frac{\derG\partial F}{\partial y} = (-1)^{|y|(|F|+1)}\frac{\partial F}{\partial y}
\end{equation}
for a given grading that will be precised later.


%

\subsection{A motivational introduction} \label{motiv}

Let $X$ be a space with some structure (typically the configuration space) and $\mathcal{O}(X)$ the set of real valued functions on $X$ having some smoothness
properties. The elements of $\mathcal{O}(X)$ are the observables of our theory and we want to build an evaluation map
\begin{equation*}
 <>:\mathcal{O}(X)\longrightarrow\mathbb{R}.
\end{equation*}
The usual way of doing such a thing is to choose a well-behaved volume form $\Omega$ on $X$, and define
\begin{equation}
 <f>:=\int_X\mathcal{F}_{\Omega}(f)
\end{equation}
with $\mathcal{F}_{\Omega}(f)=f\Omega$. Indeed, in the path integral approach we integrate over a space of configurations and not only over the subspace that satisfies the equations of motion, which is 
the critical locus of the action functional. 
Now, from this, we can build the de Rham complex of forms. This complex is useful for two reasons. First, it gives a simple criteria to spot vanishing 
observables: if $f\Omega=d\omega$, then $<f>=0$. Therefore, the physical observables are rather elements of the de Rham cohomology group. Secondly, in the presence of gauge symmetries, the 
functions shall not be integrated over all the space $X$ but rather over a subspace of codimension $p$, which is a submanifold that we imagine as being transverse to the gauge orbits. Then we 
need our observables to be $N-p$ form, with $N=\text{dim}(X)$.

The tricky part is that in the cases of physical interest, the configuration space is infinite dimensional. Moreover, the number of dimensions is typically uncountable therefore the notion of 
top forms is not clear. The idea is then to define a dual complex to the de Rham complex: the homology complex of polyvector fields. The boundary operator of this complex is defined through 
the de Rham differential for the finite dimensional case. Then a formula for this operator is found, still in the finite dimensional case, and this formula is used as a definition for the 
evaluation map $<>$ in the infinite dimensional case.

Moreover, since we have seen that the polyvector fields can be viewed as functions on the shifted cotangent bundle, it will be relevant to think of the BV formalism as a theory of 
integration. Let us start with some notational points before building this theory of integration.

\subsection{From configuration space to odd symplectic manifolds}

Let us now be more precise. We shall discuss first the case when $X$ is finite dimensional and being acted on by a non compact group $G$ leaving the critical locus $\{dS_0=0\}$ invariant. We 
shall denote by $\mathfrak{g}$ the Lie algebra of $G$. We also assume that the action of $G$ on $X$ is such that the quotient space $X/G$ is a well defined manifold and $\pi:X\mapsto X/G$ is a 
fibration. In the physics language this means that everything goes well.

We denote by $(e_\alpha)_\alpha$ a basis of the Lie algebra $\mathfrak{g}$, $Vect(X)$ is the Lie algebra of vector fields on $X$ and $\rho:\mathfrak{g}\longmapsto Vect(X)$ is a homomorphism of Lie 
algebra which represents the action of $\mathfrak{g}$ on $X$.
\begin{eqnarray}
[e_\alpha,e_\beta]=C_{\alpha\beta}^\gamma e_\gamma\\
\rho(e_\alpha)=\rho_\alpha=\rho^i_\alpha\frac{\partial}{\partial x^{i}}.
\end{eqnarray}

$S_0$ is assumed to be $\mathfrak{g}$ invariant which means that
\begin{eqnarray}
\forall\alpha, \rho(e_\alpha)S_0=0\implies \rho_\alpha S_0=0.
\end{eqnarray}

Denote by $(e^\alpha)_\alpha$ the dual basis of $\mathfrak{g}^*$, $e^\alpha(e_\beta)=\delta^\alpha_\beta$. Then $(e^\alpha)_\alpha$ are linear coordinates on $\mathfrak{g}$. We denote by 
$\pi\mathfrak{g}$ the vector space $\mathfrak{g}$ shifted by $1$. Here shifted means that we reverse the parity: fermionic coordinates are now bosonic and vice-versa. Therefore the coordinates on 
$\pi\mathfrak{g}$, denoted by $(c^\alpha)_\alpha$, anticommute. We call them the ghosts. Actually, as already noticed superfonctions in anticommuting variables are polynomials in those anticommuting variables and we 
have an isomorphism of vector space $\mathcal{O}(\pi\mathfrak{g})\simeq\wedge \mathfrak{g}^*$ given in terms of a local basis by:
\begin{eqnarray*}
\left(c^{\alpha_1}\dots c^{\alpha_k}\right)\mapsto e^{\alpha_1}\wedge\dots \wedge e^{\alpha_k}, 
\end{eqnarray*}
(extended by linearity).
 
Consider now the supermanifold $V=X\times \pi\mathfrak{g}$. It has coordinates $(x^i,c^\alpha)_{i,\alpha}$. We will be interested by $\Pi T^*V$, the cotangent space of $V$ with shifted fiber. Then 
$\Pi T^*V$ has the coordinates $(x^i,c^\alpha, x_i^*,c_\alpha^*)$, where $(x^i, x_i^*)$ are coordinates on $\Pi T^*X$, so $x^i$ (the coordinate of $X$) is even and $x_i^*$ (the coordinate on the 
shifted fiber) is odd and where $(c^\alpha,c_\alpha^*)$ are coordinates in $\Pi T^*\mathfrak{g}^*= \pi\mathfrak{g}\times\mathfrak{g}^*$: $c_\alpha^*$ is even and $c^\alpha$ is odd.

So  $\Pi T^*V$ is an odd symplectic manifold isomorphic to $\Pi T^*\left(X\times \mathfrak{g}^* \right)$ with symplectic form $\omega=dx^i\wedge dx_i^*+dc_\alpha^*\wedge dc^\alpha$.

\section{BV formalism as a theory of integration}

Now, it is time to dive into our approach to the BV formalism. We will follow the program discussed in subsection \ref{motiv} for the space $X\times \pi\mathfrak{g}$.

\subsection{BV laplacian}

We are going to define a sort of divergence operator $\Delta$ on polyvector fields. We first give the definition of the interior product $\alpha\lrcorner \omega$ of a polyvector
field $\alpha\in \Gamma(M,\Lambda^p(TM))$ with a form $\omega\in \Gamma(M,\Lambda^q(T^*M))$ ($q\geq p$).

We have
\begin{equation}
 \lrcorner: \Gamma(M,\Lambda^p(TM))\times\Gamma(M,\Lambda^q(T^*M))\to\Gamma(M,\Lambda^{q-p}(T^*M).
\end{equation}
It is defined for $p=1$ by $(X\lrcorner\omega)(X_1,\dots,X_{q-1}):=(\iota_X\omega)(X_1,\dots,X_{q-1})=\omega(X,X_1,\dots,X_{q-1})$ and is then extended by 
$\left(\alpha\wedge\beta\right)\lrcorner\omega=\alpha\lrcorner(\beta\lrcorner\omega)$. The idea is now to define a map
\begin{equation}
 \Delta: \Gamma(M,\Lambda^p(TM))\to \Gamma(M,\Lambda^{p-1}(TM))
\end{equation}
as discussed in subsection \ref{motiv}.

Let $\alpha$ be a polyvector and $\Omega$ a well-behaved volume form. Then $\Delta$ is the unique operator from $\Gamma(M,\Lambda^p(TM))$ to $\Gamma(M,\Lambda^{p-1}(TM)$ such that the following 
equation holds true~:
\begin{equation}
(\Delta\alpha)\lrcorner\Omega=d(\alpha\lrcorner\Omega).
\end{equation}
If we denote by $\mathcal{F}_{\Omega}$ the isomorphism
\begin{equation}
\mathcal{F}_{\Omega}:\alpha \in \Gamma(M,\Lambda TM) \longmapsto \alpha\lrcorner \Omega \in \Gamma(M,\Lambda T^*M) 
\end{equation}
then 
\begin{equation} \label{BV_lapl_def}
\Delta = \mathcal{F}_{\Omega}^{-1}\circ d\circ \mathcal{F}_{\Omega}
\end{equation}
with $d$ the de Rham differential. Let us check that the operator $\Delta$ defined above does the job. If $\alpha\in \Gamma(M,\Lambda^p(TM))$, then 
$\alpha\lrcorner\Omega\in \Gamma(M,\Lambda^{\mathrm{top}-p}(T^*M))$, $d(\alpha\lrcorner\Omega)\in \Gamma(M,\Lambda^{\mathrm{top}+1-p}(T^*M))$. Therefore, 
$\Delta\alpha\in\Gamma(M,\Lambda^{p-1}(TM))$ as expected. We will show many properties of $\Delta$, but for now we will be happy with a cute proposition.
\begin{propo}
The operator $\Delta$ defined above satisfies $\Delta^2=0$.
\end{propo}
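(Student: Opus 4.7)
The plan is to observe that the proposition is essentially a direct translation of the nilpotency of the de Rham differential through the isomorphism $\mathcal{F}_\Omega$. Since $\Omega$ is a (nowhere vanishing) volume form, the contraction map $\mathcal{F}_\Omega : \Gamma(M,\Lambda^p TM) \to \Gamma(M,\Lambda^{\mathrm{top}-p} T^*M)$ is a fiberwise isomorphism, and the defining relation \eqref{BV_lapl_def} expresses $\Delta$ as the conjugate of $d$ by this isomorphism:
\begin{equation*}
\Delta = \mathcal{F}_\Omega^{-1} \circ d \circ \mathcal{F}_\Omega .
\end{equation*}

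First I would apply $\Delta$ twice to an arbitrary polyvector field $\alpha \in \Gamma(M,\Lambda^p TM)$ and contract the result with $\Omega$. Using the defining equation twice gives
\begin{equation*}
(\Delta^2 \alpha) \lrcorner \Omega = d\bigl((\Delta \alpha) \lrcorner \Omega\bigr) = d\bigl(d(\alpha \lrcorner \Omega)\bigr) = d^2 (\alpha \lrcorner \Omega) = 0,
\end{equation*}
where the last equality uses $d^2 = 0$. Equivalently, one writes
\begin{equation*}
\Delta^2 = \mathcal{F}_\Omega^{-1} \circ d \circ \mathcal{F}_\Omega \circ \mathcal{F}_\Omega^{-1} \circ d \circ \mathcal{F}_\Omega = \mathcal{F}_\Omega^{-1} \circ d^2 \circ \mathcal{F}_\Omega = 0 .
\end{equation*}

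Finally, since $\mathcal{F}_\Omega$ is an isomorphism, $(\Delta^2 \alpha) \lrcorner \Omega = 0$ forces $\Delta^2 \alpha = 0$, and this holds for every polyvector $\alpha$, so $\Delta^2 = 0$ as an operator. There is really no obstacle here: the whole content lies in the fact that $\mathcal{F}_\Omega$ is a bijection (which requires $\Omega$ to be a non-degenerate volume form) and in the identity $d^2 = 0$. The only mild care to take is to check that $\mathcal{F}_\Omega^{-1}$ is well defined on the image of $d \circ \mathcal{F}_\Omega$, but this is automatic since the contraction with a nowhere-vanishing top form is a bijection between $\Lambda^{p-1} TM$ and $\Lambda^{\mathrm{top}-p+1} T^*M$.
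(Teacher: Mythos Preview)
Your proof is correct and essentially identical to the paper's: both conjugate $d$ by $\mathcal{F}_\Omega$ and reduce $\Delta^2=0$ to $d^2=0$ in one line. Your version just adds a bit more commentary on why $\mathcal{F}_\Omega$ is invertible, which is fine.
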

\begin{proof}
$\Delta^2= (\mathcal{F}_{\Omega}^{-1}\circ d\circ \mathcal{F}_{\Omega})(\mathcal{F}_{\Omega}^{-1}\circ d\circ \mathcal{F}_{\Omega})=(\mathcal{F}_{\Omega}^{-1}\circ d^2\circ \mathcal{F}_{\Omega})=0$.
\end{proof}
This is absolutely crucial: we have argued that $\Delta$ acts on the polyvectors like the de Rham derivative acts on forms, but in the opposite direction. This strongly suggests that we will 
construct a homology complex, which needs $\Delta^2=0$.

\subsection{BV integral and a Stokes theorem}

Before giving some properties of the BV laplacian, we will extend the metaphor between this operator and its homological complex (that has still to be described) and the de Rham complex. We will 
now construct the notion of an integral for which $\Delta$ plays a role similar to the one of the de Rham differential for the usual integral. Hence, we want an integration for a polyvector. A 
natural way to get a polyvector under in an integrand is to contract the polyvector with a volume form. This is still not what we want since the contraction will not be well-defined in the 
infinite dimensional case. It is why we have to define a BV integral.

We choose a volume form $\Omega\in \Gamma(M,\Lambda^{\mathrm{top}}T^*M)$ on $M$. Let $\alpha$ be an element of $C^\infty(\Pi T^*M)$ and $\Pi N^*\Sigma$ be the conormal of some submanifold then
the BV integral $\int_{\Pi N^*\Sigma}^{BV} \alpha$ is defined as follows:
\begin{eqnarray} \label{BV_int_def}
\int_{\Pi N^*\Sigma}^{BV} \alpha 
&=& \int_\Sigma \alpha\lrcorner \Omega.
\end{eqnarray}
This definition will allow us to get many beautiful results that will elucidate what are the observables of our theory, However, at this stage, it may look quite arbitrary, so let us 
give a motivation for the BV integral when $\Sigma$ is given by a set of equations. This will also allows us to derive the measure carried by $\Pi N^*\Sigma$ in this case.

So, let us assume that $\Sigma$ is defined by $k$ equations $\Sigma=\{x\in M| f^1(x)=\dots=f^k(x)=0\}$. Let us calculate
\begin{equation*}
I = \int_{\{f^1=\dots=f^k=0\}} \alpha\lrcorner \Omega = \int_M\delta(\{f^i\})(\alpha\lrcorner df)\Omega  = \int_{\Sigma}(\alpha\lrcorner df)\frac{\Omega}{df} 
\end{equation*}
with $df := df^1\wedge\dots\wedge df^k$. Then we have
\begin{align*}
 \alpha\lrcorner df & = \alpha^{i_1\cdots i_k}\varepsilon_{j_1\cdots j_k}\frac{\partial f^{j_1}}{\partial x^{i_1}}\cdots\frac{\partial f^{j_k}}{\partial x^{i_k}} \\
		    & = \int_{\Pi\mathbb{R}^k}\alpha^{i_1\cdots i_k}\theta_{j_1}\cdots\theta_{j_k}\frac{\partial f^{j_1}}{\partial x^{i_1}}\cdots\frac{\partial f^{j_k}}{\partial x^{i_k}} [\mathcal{D}\theta]
\end{align*}
with $[\mathcal{D}\theta]$ the Berezin measure. Then
\begin{align*}
 I & = \int_{\Sigma\times\Pi\mathbb{R}^k}\alpha^{i_1\cdots i_k}\theta_{j_1}\frac{\partial f^{j_1}}{\partial x^{i_1}}\cdots\theta_{j_k}\frac{\partial f^{j_k}}{\partial x^{i_k}} [\mathcal{D}\theta]\frac{\Omega}{df} \\
   & = \int_{\mathcal{L}\subset\Pi T^*M}\left.\alpha^{i_1\cdots i_k}x^*_{i_1}\cdots x^*_{i_k}\right|_{x^*_j=\theta_i\frac{\partial f^i}{\partial x^j}}[\mathcal{D}\theta]\frac{\Omega}{df}.
\end{align*}
The second line being obtained by applying the decalage isomorphism. Notice that in this second line, $[\mathcal{D}\theta]$ is the measure on the shifted fiber. It is quite easy to see that 
\begin{equation}
 \mathcal{L} = \Pi N^*\Sigma.
\end{equation}
Indeed, using the decalage isomorphism we have
\begin{align*}
 & \alpha^{i_1\cdots i_k}x^*_{i_1}\cdots x^*_{i_k} \in \mathcal{C}^{\infty}(\Pi N^*\Sigma) \\
 \Leftrightarrow & \alpha = \alpha^{i_1\cdots i_k}(x)\partial_{i_1}\wedge\cdots\partial_{i_1} \in \Lambda N\Sigma \\
 \Leftrightarrow & x\in\Sigma \qquad \text{and} \qquad \partial_i = \frac{\partial f^j}{\partial x^i}\frac{\partial}{\partial x^j}.
\end{align*}
Hence, using once again the decalage isomorphism, we are left with $x^*_i = \frac{\partial f^j}{\partial x^i}\theta_j$. Then we have arrived to the formula used for the definition of the BV integral. 
The above discussion has no meaning in the infinite dimensional case, nor when $\Sigma$ is not given by a set of equations, but our definition of the BV integral still does.

We already see that this definition is the right one since we have the equivalent to the Stokes theorem.
\begin{thm}
Let $\Sigma$ be a smooth submanifold with smooth boundary $\partial \Sigma$.
Then for any polyvector $\alpha\in C^\infty(\Pi T^*M)$:
$$\int^{BV}_{\Pi N^*\partial\Sigma}\alpha=-\int^{BV}_{\Pi N^*\Sigma}\Delta\alpha.$$
\end{thm}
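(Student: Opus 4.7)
The plan is to unwind the two defining equations of the BV laplacian $\Delta$ and of the BV integral, and reduce the statement to the classical Stokes theorem for the de Rham differential. Since the definition $(\Delta\alpha)\lrcorner\Omega = d(\alpha\lrcorner\Omega)$ was precisely designed to make $\Delta$ into an ``anti-de~Rham'' operator on polyvectors under the isomorphism $\mathcal{F}_\Omega$, and the BV integral $\int^{BV}_{\Pi N^*\Sigma}\alpha$ was defined as $\int_\Sigma \alpha\lrcorner\Omega$, there is essentially nothing to do beyond chasing the definitions.

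Concretely, I would start from the right-hand side and rewrite
\begin{equation*}
\int^{BV}_{\Pi N^*\Sigma}\Delta\alpha \;=\; \int_\Sigma (\Delta\alpha)\lrcorner\Omega \;=\; \int_\Sigma d(\alpha\lrcorner\Omega),
\end{equation*}
using first the definition of the BV integral and then the defining property of $\Delta$. The integrand is now an honest differential form on $\Sigma$ of the appropriate top degree for its boundary, so the classical Stokes theorem applies and yields
\begin{equation*}
\int_\Sigma d(\alpha\lrcorner\Omega) \;=\; \int_{\partial\Sigma}\alpha\lrcorner\Omega \;=\; \int^{BV}_{\Pi N^*\partial\Sigma}\alpha,
\end{equation*}
where the last equality is once more the definition of the BV integral, now applied to $\partial\Sigma$ in place of $\Sigma$.

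The only real subtlety, and the place where the minus sign of the statement must come from, is a careful treatment of orientation and grading conventions. The shifted conormal bundle $\Pi N^*\partial\Sigma$ inherits an orientation from $\Pi N^*\Sigma$ which, because one fiber direction changes parity when passing to the boundary and because the contraction $\lrcorner$ involves a Berezin integration over the odd fibers of $\Pi T^*M$, differs by a sign from the one induced by the boundary orientation of $\Sigma$ alone. This is the same kind of sign that appears in the super-Stokes formula for Berezin integrals, and is also visible in the left/right-derivative convention \eqref{der_gauche} that was set up for the Schouten--Nijenhuis bracket.

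I therefore expect the main obstacle to lie not in the computation, which is a one-line rewriting, but in fixing once and for all a coherent system of orientation and parity conventions on $\Pi T^*M$, $\Pi N^*\Sigma$, and $\Pi N^*\partial\Sigma$ under which the sign on the right-hand side comes out as~$-1$ rather than $+1$. Once this convention has been made explicit, the proof reduces to the three-step chain of equalities above followed by the classical Stokes theorem, and the two immediate corollaries (homotopy invariance of the BV integral on $\Delta$-closed polyvectors, and vanishing of the integral of a $\Delta$-exact polyvector) follow at once.
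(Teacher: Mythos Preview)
Your approach is the same as the paper's: unwind the definitions of the BV integral and of $\Delta$, and apply the classical Stokes theorem. The paper runs the chain from left to right rather than right to left, but the content is identical.

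Where you diverge is in the handling of the sign. You correctly observe that, with the standard Stokes formula $\int_\Sigma d\omega=\int_{\partial\Sigma}\omega$, the chain of equalities yields the statement \emph{without} the minus sign, and you then speculate that the sign must come from Berezin integration over the odd fibers of $\Pi N^*\Sigma$ and parity shifts at the boundary. This explanation is off-target for the paper's setup: the BV integral is \emph{defined} as the ordinary integral $\int_\Sigma\alpha\lrcorner\Omega$, with no Berezin component at all, so there is no odd-fiber orientation to flip. The paper simply writes $\int_{\partial\Sigma}\alpha\lrcorner\Omega=-\int_\Sigma d(\alpha\lrcorner\Omega)$ ``by the usual Stokes theorem'', i.e.\ it absorbs the sign into a boundary-orientation convention (or, less charitably, leaves it unexplained). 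Either way, the sign lives entirely on the classical side, not in any super-geometry, and your paragraph on parity conventions is not needed.
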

\begin{proof}
\begin{eqnarray*}
\int^{BV}_{\Pi N^*\partial\Sigma}\alpha & = & \int_{\partial\Sigma} \alpha\lrcorner\Omega \qquad \text{by definition of the BV integral} \\
                                        & = & - \int_{\Sigma} d\left(\alpha\lrcorner\Omega\right) \qquad \text{by the usual Stokes theorem} \\
					& = & - \int_{\Sigma} \left(\Delta\alpha\right)\lrcorner\Omega \qquad \text{by definition of the BV laplacian} \\
					& = & - \int^{BV}_{\Pi N^*\Sigma}\Delta\alpha.
\end{eqnarray*}
\end{proof}
Now, a key point is that this theorem implies that the BV integrals depend only on the homology class of the integration domain if the integrand is $\Delta$-closed. As for forms, a polyvector 
$\alpha$ is said to by $\Delta$-closed if $\Delta\alpha=0$ and $\Delta$-exact if $\alpha=\Delta\beta$. 
\begin{coro} \label{gauge_fix}
 Let $\Sigma_1$, $\Sigma_2$ be two smooth submanifolds belonging to the same homology class. Then for any $\Delta$-closed polyvector $\alpha\in C^\infty(\Pi T^*M)$:
\begin{equation*}
 \int^{BV}_{\Pi N^*\Sigma_1}\alpha = \int^{BV}_{\Pi N^*\Sigma_2}\alpha.
\end{equation*}
\end{coro}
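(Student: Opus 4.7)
The plan is to derive the corollary as an almost immediate consequence of the BV Stokes theorem just established. Since $\Sigma_1$ and $\Sigma_2$ belong to the same homology class, by definition there exists a smooth submanifold $\Sigma$ (one dimension higher) such that $\partial \Sigma = \Sigma_1 - \Sigma_2$, where the minus sign encodes the opposite orientation. The strategy is then simply to apply the BV Stokes theorem to this bounding submanifold $\Sigma$ and exploit the $\Delta$-closedness of $\alpha$.

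More concretely, I would first note that the BV integral inherits linearity from the ordinary integral on $\Sigma$ through the defining formula \eqref{BV_int_def}, and that the operation $\Sigma \mapsto \Pi N^*\Sigma$ is compatible with taking (oriented) disjoint unions. Hence one has
\begin{equation*}
\int^{BV}_{\Pi N^*\partial\Sigma}\alpha \;=\; \int^{BV}_{\Pi N^*\Sigma_1}\alpha \,-\, \int^{BV}_{\Pi N^*\Sigma_2}\alpha.
\end{equation*}
On the other hand, applying the BV Stokes theorem to the bounding submanifold $\Sigma$ gives
\begin{equation*}
\int^{BV}_{\Pi N^*\partial\Sigma}\alpha \;=\; -\int^{BV}_{\Pi N^*\Sigma}\Delta\alpha,
\end{equation*}
and by hypothesis $\Delta\alpha=0$, so the right-hand side vanishes. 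Combining the two displays yields the claimed equality.

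There is really no serious obstacle in this proof: the only mild subtlety is making sure that the correspondence $\Sigma \mapsto \Pi N^* \Sigma$ is well-behaved with respect to the homological operations (boundaries, disjoint unions, orientation reversal), which is clear from the construction of the conormal bundle. Everything else is a mechanical application of the previous theorem and the hypothesis $\Delta\alpha = 0$. The genuine content of the statement is of course conceptual rather than technical: it tells us that the BV integral of a $\Delta$-closed polyvector field descends to a function on the homology classes of lagrangian submanifolds of the form $\Pi N^* \Sigma$, which is precisely the geometric incarnation of gauge-fixing independence announced in the previous discussion.
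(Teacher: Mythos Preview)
Your proof is correct and follows essentially the same route as the paper: introduce a bounding submanifold whose boundary is $\Sigma_1-\Sigma_2$, apply the BV Stokes theorem, and use $\Delta\alpha=0$. If anything, your write-up is slightly cleaner in naming the bounding submanifold explicitly and splitting the boundary integral, whereas the paper's proof passes through the ordinary integral $\int_{\Sigma_i}\alpha\lrcorner\Omega$ first before invoking Stokes; the content is identical.
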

\begin{proof}
\begin{eqnarray*}
 \int^{BV}_{\Pi N^*\Sigma_1}\alpha - \int^{BV}_{\Pi N^*\Sigma_2}\alpha & = & \int_{\Sigma_1}\alpha\lrcorner\Omega - \int_{\Sigma_2}\alpha\lrcorner\Omega \\
								       & = & \int_{\partial\Sigma_1}\alpha\lrcorner\Omega \qquad \text{since $\Sigma_1$ and $\Sigma_2$ are in the same homology class} \\
								       & = &  \int^{BV}_{\Pi N^*\partial\Sigma_1}\alpha \quad \text{from the definition of the BV integral} \\
								       & = &  -\int^{BV}_{\Pi N^*\tilde\Sigma_1}\Delta \alpha \\
								       & = & 0
\end{eqnarray*}
\end{proof}
Finally, there is also a very simple, but important, lemma coming from the previous definitions.
\begin{lemm} \label{lemm_def_obs}
 If $\alpha$ is $\Delta$-exact, then 
\begin{equation*}
 \int_{\Pi N^*\Sigma}^{BV} \alpha = 0
\end{equation*}
\end{lemm}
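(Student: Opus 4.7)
The plan is to derive this as an immediate corollary of the Stokes-type theorem just established, under the implicit assumption that $\Sigma$ has no boundary (the statement is otherwise false, as the Stokes theorem itself shows). First I would write $\alpha = \Delta\beta$ for some polyvector $\beta \in C^\infty(\Pi T^*M)$, which is the definition of $\Delta$-exactness. Then applying directly the BV Stokes theorem of the previous paragraph gives
\begin{equation*}
 \int^{BV}_{\Pi N^*\Sigma}\alpha = \int^{BV}_{\Pi N^*\Sigma}\Delta\beta = -\int^{BV}_{\Pi N^*\partial\Sigma}\beta,
\end{equation*}
and since $\partial\Sigma = \emptyset$, the right-hand side vanishes by the usual convention that an integral over the empty set is zero.

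Alternatively, and this is essentially the same computation unwound, I would go back to the definition \eqref{BV_int_def} of the BV integral and the definition \eqref{BV_lapl_def} of the BV Laplacian. Then
\begin{equation*}
 \int^{BV}_{\Pi N^*\Sigma}\Delta\beta = \int_\Sigma (\Delta\beta)\lrcorner\Omega = \int_\Sigma d(\beta\lrcorner\Omega) = \int_{\partial\Sigma}\beta\lrcorner\Omega = 0,
\end{equation*}
where the third equality is the standard Stokes theorem for the de Rham differential acting on the form $\beta\lrcorner\Omega\in \Gamma(\Sigma,\Lambda T^*\Sigma)$, and the last equality again uses that $\Sigma$ is assumed closed. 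This second presentation has the advantage of making transparent the philosophy emphasized just above: the BV Laplacian is the transport, through the isomorphism $\mathcal{F}_\Omega$, of the de Rham differential, so every statement of ``integration by parts'' type that one knows for forms translates mechanically into a statement for polyvectors via the BV integral.

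There is no serious obstacle: this is a trivial corollary, and in fact it is already contained as the special case $\partial\Sigma = \emptyset$ in the preceding Stokes theorem. The only subtlety worth flagging is that the statement as written silently requires $\Sigma$ to be boundaryless, and that one should check that the $\beta$ chosen for the primitive may be taken smooth enough for the manipulations above to make sense; in the finite-dimensional setting both points are immediate. The conceptual content of the lemma, together with Corollary \ref{gauge_fix}, is that the BV integral descends to a pairing between homology classes of lagrangian conormals and $\Delta$-cohomology classes of polyvectors, which is exactly what will be needed to identify observables with elements of an appropriate cohomology group in the following sections.
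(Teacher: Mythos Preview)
Your proposal is correct and matches the paper's proof essentially line for line: the paper writes $\alpha = \Delta\beta$, unfolds the definitions to get $\int_\Sigma d(\beta\lrcorner\Omega)$, and concludes that this vanishes. Your second presentation is exactly this, with the extra explicit step $\int_{\partial\Sigma}\beta\lrcorner\Omega$ that the paper leaves implicit; your remark that the statement tacitly requires $\partial\Sigma = \emptyset$ is a useful clarification that the paper does not spell out.
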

\begin{proof}
 From the definitions of the BV integral and the BV differential we have, if $\alpha=\Delta\beta$:
 \begin{equation*}
  \int_{\Pi N^*\Sigma}^{BV}\alpha = \int_{\Sigma}(\Delta\beta)\lrcorner\Omega = \int_{\Sigma}d(\beta\lrcorner\Omega) = 0.
 \end{equation*}
\end{proof}
We see that all the classical results of the theory of integration over manifolds translate into results for the BV integration. Therefore, in the future, we will be able to compute with the 
BV integral without any reference to the L.H.S. in its definition \eqref{BV_int_def}. Its was our goal: the L.H.S. of \eqref{BV_int_def} does not admit a natural generalization to the uncountably 
infinite dimensional case.

\subsection{On the advantage of the BV formalism}

Now we know enough about the BV formalism to answer a simple question: why shall we bother with it? In other words, how is the BV formalism more powerful than the Faddeev--Popov 
approach, or than the cohomological construction of the BRST formalism?

If we are working in a theory with open symmetries, we might obtain quartic terms in the lagrangian after the gauge-fixing procedure. This mean a four-valent vertex with only ghosts. This 
feature was noticed in \cite{Ka78}. In the Faddeev--Popov formalism, the ghosts are interpreted as a consequence of the restriction of the domain of integration. We write under the functional integral 
delta functions that prevent us from integrating over the orbits of the gauge group. This procedure brings a determinant which is written as an integral over fermionic variables, which are 
interpreted as scalar fermionic particles: the ghosts. Therefore there is no liberty in the Faddeev--Popov procedure for the terms in which the ghosts appear and a quartic term is thus absolutely 
meaningless.

Now, if we treat a theory with open symmetries with the BRST formalism, we will find that the BRST operator no longer squares to zero. More precisely, it will become proportional to 
$\frac{\delta S}{\delta y^i}$ (see \cite{Weinberg96b} and references therein) . Thus the BRST cohomology will only exists on-shell. This is quite awkward since we do no longer have a complex 
hence the concept of cohomology itself becomes ill-defined.

On the other hand, in the BV formalism, the quartic terms in the ghosts fields will come from terms $\Phi^*\Phi^*\Phi\Phi$ in the action since the gauge fixing will be to impose 
$\phi^*=\frac{\delta\Psi}{\delta\Phi}$ (with $\Psi$ the so-called gauge-fixing fermionic function). But since the BV formalism is a theory of integration, it is absolutely right to integrate 
quartic functions! Therefore, the BV formalism allows us to treat open symmetries. Actually, we do not really need to specify which kind of symmetries we are dealing with.

We have talked in this discussion of gauge-fixing in BV formalism, a subject that has not been covered yet. So let us move to this problem.

%

\section{Gauge-fixing in BV formalism}

In order to clarify the discussion on the gauge-fixing procedure in the BV formalism, let us start by studying how to translate the usual BRST formalism in supergeometric language.

\subsection{An informative example} \label{example}

We are working on the supermanifold $V=X\times\mathfrak{g}$. Let $(x^i)=(\phi^i,c^{\alpha})$ be a basis of this manifold, with $\phi^i$ the even coordinates and $c^{\alpha}$ the odd ones. We 
have to define the Schouten--Nijenhuis brackets on this extended space. It is quite simple: the formula \eqref{Sch-Nij} will give us the right brackets on our supermanifold. Let us recall the 
formula:
\begin{subequations}
 \begin{align}
  & \{F,G\} = \frac{\derG\partial F}{\partial x^i}\frac{\derD\partial G}{\partial x^*_i} - \frac{\derG\partial F}{\partial x^*_i}\frac{\derD\partial G}{\partial x^i} \label{Sch-Nij2} \\
  & \frac{\derG\partial F}{\partial y} = (-1)^{|y|(|F|+1)}\frac{\partial F}{\partial y}.
 \end{align}
\end{subequations}
The grading has now to be precised: we will take
\begin{subequations}
 \begin{align}
  & |\phi^i| = 0 \\
  & |c^{\alpha}| = +1 \\
  & |\phi^*_i| = -1 \\
  & |c^*_{\alpha}| = -2
 \end{align}
\end{subequations}
and $|f.g|=|f|+|g|$. Hence $\phi^i$ and $c^*_{\alpha}$ are even, $\phi_i^*$ and $c^{\alpha}$ are odd. These brackets (that will also be referred as the Schouten--Nijenhuis brackets) have many 
properties (graded derivation, graded Jacobi\dots) that are straightforward but tedious to show (the devil is in the powers of $(-1)$). We will essentially not need them here but a quite 
crucial property\footnote{crucial because it is nearly the only one every authors agreed on.} is the following:
\begin{lemm}
 $\{,\}$ is graded antisymmetric. That is, for $f$, $g$ two polyvector fields
 \begin{equation} \label{Sch-Nij_AS}
  \{f,g\} = -(-1)^{(|f|+1)(|g|+1)}\{g,f\}.
 \end{equation}
\end{lemm}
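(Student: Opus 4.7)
The plan is to proceed by direct computation, unfolding $\{g,f\}$ from the definition \eqref{Sch-Nij2} and rearranging it into the form $\{f,g\}$ times a sign. First I would write
\begin{equation*}
\{g,f\} = \frac{\derG\partial g}{\partial x^i}\frac{\derD\partial f}{\partial x^*_i} - \frac{\derG\partial g}{\partial x^*_i}\frac{\derD\partial f}{\partial x^i},
\end{equation*}
and then, in each of the two terms, perform two operations: (i) use the rule \eqref{der_gauche} to convert the left derivatives acting on $g$ into ordinary (right) derivatives, and conversely use the inverse of \eqref{der_gauche} to rewrite the right derivatives acting on $f$ as left derivatives; (ii) swap the two differentiated factors so that the factor involving $f$ appears on the left and the factor involving $g$ on the right, matching the structure of the defining expression for $\{f,g\}$. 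Each such swap produces a Koszul sign controlled by the degrees of the differentiated objects, namely $|f|-|x^i|$ and $|g|-|x^*_i|$ for the first term, with the analogous expression for the second.

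The next step is to collect all the signs generated: there are two from the left/right conversions and one from the commutation of the two factors, in each of the two terms. Using that the cotangent fibres are shifted, so that $|x^i|+|x^*_i|=-1$ for every index, the sum of the exponents can be reduced modulo $2$. After this reduction, the two terms of $\{g,f\}$ turn into the corresponding two terms of $\{f,g\}$ each carrying exactly the global factor $-(-1)^{(|f|+1)(|g|+1)}$, which yields the claimed graded antisymmetry.

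The main obstacle is the bookkeeping of signs: every step produces a power of $(-1)$ whose exponent mixes the four quantities $|f|$, $|g|$, $|x^i|$ and $|x^*_i|$, and it is very easy to make a parity error. The cleanest way to organise the computation is to notice that, by bilinearity of $\{,\}$ and Leibniz-type rules that follow immediately from \eqref{Sch-Nij2}, it suffices to verify the identity on homogeneous monomials of the coordinate functions $(\phi^i,c^\alpha,\phi^*_i,c^*_\alpha)$. On such monomials all derivatives are elementary and the relation reduces to the statement of graded antisymmetry for the canonical pairing $\{x^i,x^*_j\}=\delta^i_j$; this last case is verified by a direct two-line calculation using the identity $(-1)^{-n}=(-1)^n$ together with the shift relation $|x^i|+|x^*_i|=-1$. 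Once the identity holds on coordinates, the graded Leibniz rule propagates it to all polyvector fields, completing the proof.
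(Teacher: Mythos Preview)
Your first approach---unfolding the definition, converting left/right derivatives, swapping factors, and collecting signs via $|x^i|+|x^*_i|\equiv 1\pmod 2$---is exactly what the paper does; the only difference is that the paper carries it out concretely by splitting the summation index into the four coordinate types $(\phi^i,c^\alpha,\phi^*_i,c^*_\alpha)$ and tracking the four resulting terms separately, rather than using the uniform parity relation you invoke. Your alternative ``reduction to generators'' route can also be made to work, but note that propagating antisymmetry from coordinates to arbitrary polyvectors requires a Leibniz rule in \emph{both} slots of the bracket, and you must derive that directly from the explicit formula \eqref{Sch-Nij2} (not via antisymmetry itself) to avoid circularity; the paper does not take this route.
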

\begin{proof}
 We start by extending the brackets:
 \begin{align*}
  \{f,g\} & = \frac{\partial f}{\partial\phi^i}\frac{\partial g}{\partial\phi^*_i} + (-1)^{|f|+1}\frac{\partial f}{\partial c^{\alpha}}\frac{\partial g}{\partial c^*_{\alpha}} - (-1)^{|f|+1}\frac{\partial f}{\partial\phi^*_i}\frac{\partial g}{\partial\phi^i} - \frac{\partial f}{\partial c^*_{\alpha}}\frac{\partial g}{\partial c^{\alpha}} \\
          & = (-1)^{|f|(|g|+1)}\frac{\partial g}{\partial\phi^*_i}\frac{\partial f}{\partial\phi^i} - (-1)^{|f|}(-1)^{|g|(|f|+1)}\frac{\partial g}{\partial c^*_{\alpha}}\frac{\partial f}{\partial c^{\alpha}} + (-1)^{|f|}(-1)^{|g|(|f|+1)}\frac{\partial g}{\partial\phi^i}\frac{\partial f}{\partial\phi^*_i} \\
\llcorner & - (-1)^{|f|(|g|+1)}\frac{\partial g}{\partial c^{\alpha}}\frac{\partial f}{\partial c^*_{\alpha}}.
 \end{align*}
Then using $(-1)^{|f|(|g|+1)} = (-1)^{(|g|+1)(|f|+1)}(-1)^{|g|+1}$ and $(-1)^{|g|(|f|+1)} = (-1)^{(|g|+1)(|f|+1)}(-1)^{|f|+1}$ we get
 \begin{align*}
  \{f,g\} & = (-1)^{(|g|+1)(|f|+1)}\left[(-1)^{|g|+1}\frac{\partial g}{\partial\phi^*_i}\frac{\partial f}{\partial\phi^i} + \frac{\partial g}{\partial c^*_{\alpha}}\frac{\partial f}{\partial c^{\alpha}} - \frac{\partial g}{\partial\phi^i}\frac{\partial f}{\partial\phi^*_i} \right. \\
\llcorner & \left.- (-1)^{|g|+1}\frac{\partial g}{\partial c^{\alpha}}\frac{\partial f}{\partial c^*_{\alpha}}\right] \\
	  & = -(-1)^{(|g|+1)(|f|+1)}\{g,f\}.
 \end{align*}
%
\end{proof}
Moreover, a useful lemma is
\begin{lemm} \label{useful_lemm}
 Let $f$ be a polyvector field. Then $\{f,-\}$ is a graded derivative of degree $|f|+1$, i.e. for any polyvector fields $g,h$ we have
 \begin{equation}
  \{f,gh\} = \{f,g\}h+(-1)^{|g|(|f|+1)}g\{f,h\}.
 \end{equation}
\end{lemm}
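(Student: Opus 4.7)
The plan is to argue by direct computation from the formula \eqref{Sch-Nij2}, systematically applying the graded Leibniz rule for each of the partial derivatives that appears and carefully tracking the Koszul signs coming from the shifted grading.

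First I would substitute $gh$ into the second slot of the bracket:
\begin{equation*}
\{f,gh\} = \frac{\derG\partial f}{\partial x^i}\frac{\derD\partial (gh)}{\partial x^*_i} - \frac{\derG\partial f}{\partial x^*_i}\frac{\derD\partial (gh)}{\partial x^i},
\end{equation*}
where $x^i$ runs over both $\phi^i$ and $c^\alpha$. Since each right partial derivative $\frac{\derD\partial}{\partial y}$ is a graded derivation of degree $|y|$ on the algebra of polyvector fields, it satisfies
\begin{equation*}
\frac{\derD\partial (gh)}{\partial y} = \frac{\derD\partial g}{\partial y}\,h + (-1)^{|y||g|}\,g\,\frac{\derD\partial h}{\partial y}.
\end{equation*}
Applied to each of the two terms above, this produces four summands, which I would split into those ending in $h$ and those with $g$ on the left.

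The contributions in which the derivative falls on $g$ reassemble immediately into $\{f,g\}\,h$, since no reordering is required. The remaining contributions are of the schematic form
\begin{equation*}
(-1)^{|y||g|}\,\frac{\derG\partial f}{\partial x^i}\,g\,\frac{\derD\partial h}{\partial x^*_i}
\end{equation*}
and the analogous term with $x^i$ and $x^*_i$ interchanged. To extract $g\{f,h\}$ from these I would commute $g$ to the left past $\frac{\derG\partial f}{\partial x^i}$, noting that by \eqref{der_gauche} the left derivative $\frac{\derG\partial f}{\partial x^i}$ has total degree $|f|-|x^i|-1$; combined with the Leibniz sign $(-1)^{|y||g|}$ (with $y=x^*_i$ or $x^i$ as appropriate), the phases should collapse, summand by summand, into the single overall sign $(-1)^{|g|(|f|+1)}$ demanded by the statement, yielding the Leibniz identity $\{f,gh\}=\{f,g\}h+(-1)^{|g|(|f|+1)}g\{f,h\}$.

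The main obstacle is not conceptual but bookkeeping: managing the signs produced by the shifted grading and by the left/right conversion \eqref{der_gauche} without error. As a safeguard, before attempting the general computation I would verify the identity on low-complexity test cases — for instance with $f\in\{\phi^*_i,c^*_\alpha\}$ a single generator and $g,h$ chosen monomials of pure parity — and would also cross-check via the antisymmetry lemma \eqref{Sch-Nij_AS} by rewriting $\{f,gh\}=-(-1)^{(|f|+1)(|g|+|h|+1)}\{gh,f\}$ and expanding $\{gh,f\}$ with the left-derivative Leibniz rule on $gh$. The two independent sign derivations should agree, and that agreement is what ultimately pins down the factor $(-1)^{|g|(|f|+1)}$ with no ambiguity.
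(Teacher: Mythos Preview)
Your approach is exactly the paper's: apply the graded Leibniz rule to the right derivatives in \eqref{Sch-Nij2}, then commute $g$ past $\frac{\derG\partial f}{\partial x^i}$ and use $|x^i|+|x^*_i|\equiv 1\pmod 2$ to collapse the sign to $(-1)^{|g|(|f|+1)}$.

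One slip to flag: the left derivative $\frac{\derG\partial f}{\partial x^i}$ has degree $|f|-|x^i|$, not $|f|-|x^i|-1$; the formula \eqref{der_gauche} introduces only a sign, not a degree shift. With your stated degree the combined phase would come out as $(-1)^{|g|(|x^*_i|+|f|+|x^i|+1)}=(-1)^{|g||f|}$ rather than $(-1)^{|g|(|f|+1)}$, so your proposed cross-checks would indeed catch the discrepancy --- but it is worth fixing at the source.
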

\begin{proof}
 We can use the Leibniz rule in the right derivatives coming from the definition \eqref{Sch-Nij2}. We get
 \begin{align*}
  \frac{\partial}{\partial x^*_i}(gh) & = \left(\frac{\partial g}{\partial\phi^*_i}\right)h + (-1)^{|g|}g\left(\frac{\partial h}{\partial\phi^*_i}\right) + \left(\frac{\partial g}{\partial c^*_{\alpha}}\right)h + g\left(\frac{\partial g}{\partial c^*_{\alpha}}\right) \\
    \frac{\partial}{\partial x^i}(gh) & = \left(\frac{\partial g}{\partial\phi^i}\right)h + g\left(\frac{\partial h}{\partial\phi^i}\right) + \left(\frac{\partial g}{\partial c^{\alpha}}\right)h + (-1)^{|g|}g\left(\frac{\partial g}{\partial c^{\alpha}}\right).
 \end{align*}
 Then we have
 \begin{align*}
  \{f,gh\} & = \{f,g\}h + \frac{\derG\partial f}{\partial x^i}(-1)^{|g||x^*_i|}g\frac{\derD\partial h}{\partial x^*_i} - \frac{\derG\partial f}{\partial x^*_i}(-1)^{|x^i||g|}\frac{\derD\partial h}{\partial x^i} \\
           & = \{f,g\}h + (-1)^{|g|(|x^*_i|+|x^i|+|f|)}g\{f,h\}.
 \end{align*}
 Using $|x_i^*|+|x^i|\equiv1[2]$ we end the demonstration of the lemma.
\end{proof}

Now, we have to define the de Rham derivative of a superfunction, i.e. of of function over $X\times\mathfrak{g}$. This definition as to take care of the fermionic character of the ghost coordinates and 
therefore has to be
\begin{equation}
 df :=\frac{\derG\partial f}{\partial x^i}dx^i.
\end{equation}
Then we have the following useful lemma
\begin{lemm} \label{deriv_superfonction}
 For $\alpha$ a polyvector field, $f$ a superfunction and $\Omega$ a well-behaved volume form we have
 \begin{equation} 
  df\wedge(\alpha\lrcorner\Omega) :=\{f,\alpha\}\lrcorner\Omega.
 \end{equation}
\end{lemm}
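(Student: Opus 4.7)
The plan is to prove the identity by reducing to the case where $\alpha$ is a single vector field and then extending by a Leibniz-type induction that matches the graded derivation property of $\{f,-\}$ from Lemma \ref{useful_lemm} with the iterated contraction $(\alpha\wedge\beta)\lrcorner\Omega = \alpha\lrcorner(\beta\lrcorner\Omega)$.

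First, I would observe a crucial simplification in the Schouten--Nijenhuis bracket: since a superfunction $f$ has no dependence on the fiber coordinates $x^*_i$, one of the two terms in \eqref{Sch-Nij2} drops out, leaving
\begin{equation*}
\{f,\alpha\} = \frac{\derG\partial f}{\partial x^i}\frac{\derD\partial \alpha}{\partial x^*_i}.
\end{equation*}
In particular $\{f,\alpha\}$ raises the parity/polyvector degree of $\alpha$ by $-1$, which matches the fact that $df\wedge(\alpha\lrcorner\Omega)$ is a form of degree one higher than $\alpha\lrcorner\Omega$. For the base case of degree zero ($\alpha$ a superfunction), both sides vanish: the left since $\alpha\Omega$ is a top form so $df\wedge(\alpha\Omega)=0$, the right because $\alpha$ carries no $x^*$ derivatives. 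For the case of a single vector field $X = X^i\partial_{x^i}$, under the decalage isomorphism $X$ is identified with $X^i x^*_i$, so $\{f,X\} = (\derG\partial f/\partial x^i) X^i = Xf$. On the other hand, the standard Cartan identity $\iota_X(\omega\wedge\eta) = (\iota_X\omega)\wedge\eta + (-1)^{|\omega|}\omega\wedge\iota_X\eta$ applied to $df\wedge\Omega = 0$ (top form) gives $df\wedge(X\lrcorner\Omega) = (Xf)\Omega$, which equals $\{f,X\}\lrcorner\Omega$ as desired.

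The inductive step uses the same Cartan identity on a general form $\omega$:
\begin{equation*}
df\wedge(X\lrcorner\omega) = (Xf)\omega - X\lrcorner(df\wedge\omega).
\end{equation*}
Taking $\omega = \beta\lrcorner\Omega$ for a polyvector $\beta$ of strictly lower degree than $\alpha$, and applying the induction hypothesis $df\wedge(\beta\lrcorner\Omega) = \{f,\beta\}\lrcorner\Omega$, yields
\begin{equation*}
df\wedge\bigl((X\wedge\beta)\lrcorner\Omega\bigr) = \bigl(\{f,X\}\wedge\beta\bigr)\lrcorner\Omega \;\pm\; \bigl(X\wedge\{f,\beta\}\bigr)\lrcorner\Omega,
\end{equation*}
and the right-hand side is exactly $\{f,X\wedge\beta\}\lrcorner\Omega$ thanks to Lemma \ref{useful_lemm}, which expresses $\{f,-\}$ as a graded derivation of degree $|f|+1 = 1$ for the wedge product of polyvectors. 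One then extends to arbitrary $\alpha$ by linearity and iteration on the number of vector factors.

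The hard part will be the careful bookkeeping of the signs in the super setting, where one has both the even coordinates $\phi^i$ (with odd conjugates $\phi^*_i$) and the odd ghost coordinates $c^\alpha$ (with even conjugates $c^*_\alpha$). The Cartan identity above must be used in its graded form, and the signs introduced by $\iota_X$ (or more generally by $\lrcorner$) when passing a vector past a form of given parity have to exactly cancel the signs appearing in the graded Leibniz rule \eqref{useful_lemm} for $\{f,-\}$, which carry the factor $(-1)^{|g|(|f|+1)}$. Checking that these signs match for both kinds of factors (commuting $\phi^i$-vectors and anticommuting $c^\alpha$-vectors) in every step of the induction is the only real technical work; everything else reduces to unpacking the definitions \eqref{BV_lapl_def}, \eqref{Sch-Nij2}, and \eqref{der_gauche}.
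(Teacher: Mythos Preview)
Your approach is correct and is genuinely different from the paper's. The paper proceeds by brute-force coordinate computation: it reduces to the Lebesgue measure $\Omega_0$, writes $\alpha=\alpha^{i_1\cdots i_k}\partial_{i_1}\wedge\cdots\wedge\partial_{i_k}$, expands both $\{f,\alpha\}\lrcorner\Omega_0$ and $df\wedge(\alpha\lrcorner\Omega_0)$ explicitly as sums over indices with powers of $(-1)$ coming from reordering the wedge factors, and then compares the two expressions term by term. Your argument is more structural: you verify the identity on vector fields via the Cartan formula $\iota_X(df\wedge\omega)=(Xf)\omega - df\wedge(\iota_X\omega)$, and then propagate by induction on the polyvector degree, using that both sides obey the same graded Leibniz rule (Lemma~\ref{useful_lemm} on one side, the iterated-contraction rule $(\alpha\wedge\beta)\lrcorner\Omega=\alpha\lrcorner(\beta\lrcorner\Omega)$ on the other). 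This makes the \emph{reason} for the identity transparent---both sides are derivations in $\alpha$ that agree on generators---whereas the paper's proof shows it by matching two explicit expressions. One small caveat: you wrote ``degree $|f|+1=1$'', but a superfunction on $X\times\pi\mathfrak{g}$ can carry nonzero ghost degree through its dependence on the $c^\alpha$, so in the graded sign $(-1)^{|X|(|f|+1)}$ you should keep $|f|$ general; the induction still closes, but this is part of the sign bookkeeping you flagged as the delicate point.
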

\begin{proof}
 First, we can take for $\Omega$ the Lebesgue measure since it has to be this measure up to a conformal factor which will be in the two side of the formula. Then since $f$ is a superfunction, it does 
 not depends on the coordinates on the fiber. Hence we have
 \begin{equation*}
  \{f,\alpha\} = \frac{\derG\partial f}{\partial x^i}\frac{\derD\partial \alpha}{\partial x^*_i}.
 \end{equation*}
 Using the decalage isomorphism and the rules of derivation of grassmannian variables we write this as
 \begin{equation*}
  \{f,\alpha\} = \sum_{n=1}^k\frac{\derG\partial f}{\partial x^{i_n}}\alpha^{i_1\cdots i_k}\partial_{i_1}\wedge\cdots\wedge\widehat{\partial_{i_n}}\wedge\cdots\wedge\partial_{i_k}.
 \end{equation*}
 Hence writing
 \begin{equation*}
  \Omega_0 = (-1)^{-(k-1)-\sum_{\substack{j=1 \\ j\neq n}}^k i_j}dx^{i_k}\wedge\cdots\wedge\widehat{dx^{i_n}}\wedge\cdots\wedge dx^{i_1}\wedge dx^1\wedge\cdots\wedge\widehat{dx^{i_1}}\wedge\cdots\wedge dx^{i_n}\wedge\cdots\wedge\widehat{dx^{i_k}}\wedge\cdots\wedge dx^N
 \end{equation*}
 we arrive to
 \begin{equation*}
  \{f,\alpha\} \lrcorner \Omega_0 = \sum_{n=1}^k\frac{\derG\partial f}{\partial x^{i_n}}\alpha^{i_1\cdots i_k}(-1)^{-(k-1)-\sum_{\substack{j=1 \\ j\neq n}}^k i_j}dx^1\wedge\cdots\wedge\widehat{dx^{i_1}}\wedge\cdots\wedge dx^{i_n}\wedge\cdots\wedge\widehat{dx^{i_k}}\wedge\cdots\wedge dx^N.
 \end{equation*}
 On the other hand we similarly compute
 \begin{equation*}
  df\wedge(\alpha\lrcorner\Omega) = \frac{\derG\partial f}{\partial x^{i}}dx^i\wedge(-1)^{-k-\sum_{j=1}^k i_j}\alpha^{i_1\cdots i_k}dx^1\wedge\cdots\wedge\widehat{dx^{i_1}}\wedge\cdots\wedge\widehat{dx^{i_k}}\wedge\cdots\wedge dx^N.
 \end{equation*}
 In this last expression, we see that the only non-vanishing terms are these with $i=i_j$ for some $j$. Replacing in all of these terms the $dx^i$ at its place in the wedge products on the left we find 
 the expression for $\{f,\alpha\} \lrcorner \Omega_0$ written above (with the right power of $-1$).

\end{proof}


Now, in the same way that in the BRST formalism we replace $S_0$ by $S_0+s\Psi$ ($s$ the BRST operator), we will perturbate $S_0$ in $\Pi T^*V$. Set a new action functional
\begin{eqnarray}\label{BVaction}
S=S_0+S_E+S_R=S_0+\underset{S_E}{\underbrace{c^\alpha c^\beta C_{\alpha\beta}^\gamma c^*_\gamma}} + 
\underset{S_R}{\underbrace{\rho_\alpha^i c^\alpha x_i^*}}.
\end{eqnarray}
$S_E+S_R$ is just the symbol of the BRST differential in $\Pi T^*V$. Let us clarify this assertion, and recall some aspects of the BRST formalism.

To the representation $\rho$ of the Lie algebra $\mathfrak{g}$ acting on a module $C^\infty(X)$, one can always associate the Chevalley--Eilenberg complex
\begin{eqnarray*}
\dots \longmapsto\Lambda^k \mathfrak{g}^*\otimes C^\infty(X) \overset{d_{CE}}{\longmapsto} \Lambda^{k+1} \mathfrak{g}^*\otimes C^\infty(X)\longmapsto \dots \\
d_{CE}=C_{\alpha\beta}^\gamma e^\alpha\wedge e^\beta i_{e_\gamma}\otimes 1 + e^\alpha\wedge \otimes \rho_\alpha^i\frac{\partial}{\partial x^i}.  
\end{eqnarray*}
The following trivial Lemma explains the relevance of the cocycles of the Chevalley Eilenberg complex.
\begin{lemm}\label{LemmatrivialCE}
Let $f\in C^\infty(X)$ be invariant under the action of $\mathfrak{g}$. Then $f\in \Lambda^0 \mathfrak{g}^*\otimes C^\infty(X)$ is a cocycle.
\end{lemm}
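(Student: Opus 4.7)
The plan is to compute $d_{CE} f$ directly from the explicit formula and observe that both summands vanish for elementary reasons, so the proof is a one-line unwinding of definitions.

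First I would examine the two pieces of $d_{CE}$ separately. The first piece, $C_{\alpha\beta}^\gamma e^\alpha \wedge e^\beta \, i_{e_\gamma} \otimes 1$, contains the interior product $i_{e_\gamma}$ acting on the $\Lambda^\bullet \mathfrak{g}^*$ factor. Since $f$ lies in $\Lambda^0 \mathfrak{g}^* \otimes C^\infty(X)$, its component in the exterior algebra has degree zero, so $i_{e_\gamma}$ annihilates it. Hence the entire first summand applied to $f$ gives zero, independently of any assumption on $f$.

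Next I would apply the second piece, $e^\alpha \wedge \otimes \, \rho_\alpha^i \frac{\partial}{\partial x^i}$, to $f$. This yields
\begin{equation*}
e^\alpha \otimes \rho_\alpha^i \frac{\partial f}{\partial x^i} = e^\alpha \otimes \rho(e_\alpha) f.
\end{equation*}
By hypothesis $f$ is $\mathfrak{g}$-invariant, which, with the notation introduced at the beginning of the chapter, means precisely that $\rho(e_\alpha) f = \rho_\alpha^i \frac{\partial f}{\partial x^i} = 0$ for every $\alpha$. Therefore this second summand also vanishes, and combining both terms gives $d_{CE} f = 0$, i.e.\ $f$ is a Chevalley--Eilenberg cocycle.

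There is no real obstacle here: the lemma is, as announced, a direct reformulation of the invariance condition in the language of the Chevalley--Eilenberg complex. The only point worth emphasizing is that the vanishing of the first summand is purely a matter of exterior algebra degree and does not use invariance of $f$, while the vanishing of the second summand is exactly where the hypothesis enters.
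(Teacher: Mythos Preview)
Your proof is correct and follows essentially the same approach as the paper's. The paper condenses the argument to a single line, writing $d_{CE}f = c^\alpha \rho_\alpha f = 0$ (implicitly using that the $i_{e_\gamma}$ term is zero on degree~$0$), whereas you spell out both summands explicitly; the content is identical.
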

\begin{proof}
$f$ is $\mathfrak{g}$ invariant implies that $\rho_\alpha f=0,\forall\alpha$ hence $d_{CE}f=c^\alpha \rho_\alpha f=0$ and we are done.
\end{proof}
Hence the gauge invariant objects will be cocycles of the Chevalley--Eilenberg complex. Now we can formulate this complex in the context of supergeometry.

Recall that $\Lambda\mathfrak{g}^*=\mathcal{O}(\Pi\mathfrak{g})$ which implies that elements of the Chevalley--Eilenberg complex can be interpreted as superfunctions in 
$\mathcal{O}(\Pi\mathfrak{g}\times X)$ on the supermanifold $(\Pi\mathfrak{g}\times X)$. So there is an isomorphism 
$i:\Lambda^{\bullet} \mathfrak{g}^*\otimes C^\infty(X)\longmapsto \mathcal{O}(\Pi\mathfrak{g}\times X)$ given on generators by:
\begin{eqnarray*}
e^\alpha\otimes 1 \longmapsto c^\alpha\\
1\otimes f \longmapsto f 
\end{eqnarray*}
The operator $Q_{CE}=i\circ d_{CE}\circ i^{-1}$ acts as a vector field 
on $\mathcal{O}(\Pi\mathfrak{g}\times X)$:
\begin{equation}
Q_{CE}=C_{\alpha\beta}^\gamma c^\alpha c^\beta \frac{\partial}{\partial c^\gamma} + c^\alpha \rho_\alpha^i(x)\frac{\partial}{\partial x^i}
\end{equation}
and $Q_{CE}$ squares to zero:
\begin{equation}
Q_{CE}^2=0.
\end{equation}
We say that $Q_{CE}$ is a homological vector field. This result is quite lengthy to derive, but not really difficult, and we will not write a proof here. Let us just precise that one has to use 
the antisymmetry property and the Jacobi identity on the structure constants $C_{\alpha\beta}^\gamma$.

In classical symplectic geometry we have the following correspondance: let $V$ be a smooth manifold and $\pi:T^*V\mapsto V$ the cotangent bundle of $V$. To every vector field $Q$ on $V$, we 
associate a function $\sigma(Q)\in T^*V$ called its symbol in such a way that the Hamiltonian vector field $\{\sigma(Q),.\}$ is a vector field on $T^*V$ whose horizontal part is just $Q$, in other words
$\forall f\in C^\infty(V),\{\sigma(Q),\pi^*f\}=Qf$ (with $\pi^*f$ the pull-back of $f$ to $T^*V$) when they are calculated along the zero section of $T^*V$.

Here we apply the same technique for the vector field $Q_{CE}$ on the supermanifold $V$ and we find that:
\begin{eqnarray}
\sigma(Q_{CE})=c^\alpha c^\beta C_{\alpha\beta}^\gamma c^*_\gamma + 
\rho_\alpha^i c^\alpha x_i^*.
\end{eqnarray}
Furthermore, we find that $Q$ is homological implies that
\begin{equation*}
0= Q^2f = \{\sigma(Q),\pi^*(Qf)\} = \{\sigma(Q),\{\sigma(Q),\pi^*f\}.
\end{equation*}
Then using the graded Jacobi relation we find that the last relation implies
\begin{equation}
 \{\sigma(Q),\sigma(Q)\} = 0.
\end{equation}
Let $f\in\mathcal{O}(\Pi T^*V)$, then $f$ satisfies the classical master equation (CME) if $\{S,S\}=0$. From the above, it follows that $\sigma(Q_{CE})=S_E+S_R$ satisfies the CME.

The point is that instead of calculating a quantity like $d_{CE}f$, $f\in \Lambda^{\bullet} \mathfrak{g}^*\otimes C^\infty(X)$ we will choose the corresponding superfunction 
$i_*f\in \mathcal{O}(\Pi\mathfrak{g}\times X)$ and note that
\begin{equation}\label{symplecticCE}
d_{CE}f=0 \Leftrightarrow \{\sigma(Q_{CE}),\pi^*(i_*f)\}=0
\end{equation}
therefore we have a purely symplectic interpretation for $f$ to be a cocycle in the Chevalley--Eilenberg complex. 
 
\begin{propo} \label{propo531}
The quantity $S=S_0+S_E+S_R$ defined in \eqref{BVaction} solves the CME.
\end{propo}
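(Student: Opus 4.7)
The plan is to compute $\{S,S\}$ by expanding bilinearly in the three pieces $S_0$, $S_E$, $S_R$, and then to show that the six resulting brackets encode exactly the three structural properties of a gauge theory: invariance of $S_0$, the Lie homomorphism property of $\rho$, and the Jacobi identity for the structure constants $C_{\alpha\beta}^\gamma$. First I would observe that all three pieces have total degree $0$ (for $S_R$: $|c^\alpha|=1$, $|x_i^*|=-1$; for $S_E$: $|c^\alpha c^\beta|=2$, $|c^*_\gamma|=-2$), so by the graded antisymmetry \eqref{Sch-Nij_AS} the cross brackets are symmetric, $\{A,B\}=\{B,A\}$, and
\begin{equation*}
\{S,S\} = \{S_0,S_0\} + \{S_E,S_E\} + \{S_R,S_R\} + 2\{S_0,S_E\} + 2\{S_0,S_R\} + 2\{S_E,S_R\}.
\end{equation*}

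Next I would eliminate the easy brackets by inspection of the coordinate dependence. Since $S_0=S_0(x)$ does not depend on any starred coordinate, $\{S_0,S_0\}=0$ directly from \eqref{Sch-Nij2}. Since $S_0$ depends only on $x^i$ and $S_E$ depends only on $c^\alpha, c^*_\gamma$, the right and left derivatives in \eqref{Sch-Nij2} hit disjoint coordinates and $\{S_0,S_E\}=0$. The bracket $\{S_0,S_R\}$ collapses to the single surviving term $\bigl(\partial_i S_0\bigr)\cdot\bigl(\partial_{x_i^*} S_R\bigr) = \rho_\alpha^i c^\alpha\,\partial_i S_0$, which vanishes because $S_0$ is $\mathfrak{g}$-invariant ($\rho_\alpha S_0=0$, $\forall\alpha$).

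The three remaining brackets are the interesting ones. For $\{S_E,S_E\}$, only the $\partial_{c^\alpha}\cdot\partial_{c^*_\gamma}$ piece of \eqref{Sch-Nij2} contributes and a careful sign bookkeeping (keeping the left/right conventions of \eqref{der_gauche}) yields something proportional to $C_{\alpha\beta}^\delta C_{\gamma\delta}^\epsilon\, c^\alpha c^\beta c^\gamma c^*_\epsilon$; since $c^\alpha c^\beta c^\gamma$ is totally antisymmetric in $(\alpha,\beta,\gamma)$, the coefficient can be antisymmetrized, and the Jacobi identity for $C$ kills it. The bracket $\{S_R,S_R\}$ produces, through the $\partial_{x^j}\cdot\partial_{x_j^*}$ term, a contribution proportional to $\rho_\alpha^j(\partial_j\rho_\beta^i)\, c^\alpha c^\beta\, x_i^*$; antisymmetrizing in $(\alpha,\beta)$ using $c^\alpha c^\beta=-c^\beta c^\alpha$, this combines the two orderings into the Lie bracket $[\rho_\alpha,\rho_\beta]^i$. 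Finally, $\{S_E,S_R\}$ contributes via the $\partial_{c^\alpha}\cdot\partial_{c^*_\gamma}$ channel a term proportional to $C_{\alpha\beta}^\gamma \rho_\gamma^i\, c^\alpha c^\beta\, x_i^*$. Using the homomorphism property $[\rho_\alpha,\rho_\beta] = C_{\alpha\beta}^\gamma \rho_\gamma$, the combination $\{S_R,S_R\}+2\{S_E,S_R\}$ vanishes identically.

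The hard part will be the sign bookkeeping: the Schouten--Nijenhuis bracket \eqref{Sch-Nij2} mixes right derivatives (definition \eqref{der_gauche}) with left derivatives, and each starred variable carries an odd or shifted parity that propagates a factor of $\pm 1$ through every Leibniz step. In particular, showing that the relative sign between $\{S_R,S_R\}$ and $2\{S_E,S_R\}$ is exactly the one needed for the homomorphism identity to cancel them requires being very methodical with the conventions of the grading; likewise, extracting the Jacobi combination from $\{S_E,S_E\}$ requires matching the explicit $(-1)^{|y|(|F|+1)}$ factors with the antisymmetrization coming from $c^\alpha c^\beta c^\gamma$. Once the signs are correctly tracked, each of the three structural identities for the gauge algebra reproduces the vanishing of the corresponding group of terms, and summing the six brackets gives $\{S,S\}=0$.
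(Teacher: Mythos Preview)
Your approach is correct but differs from the paper's in organization. The paper groups $S_E+S_R=\sigma(Q_{CE})$, the symbol of the Chevalley--Eilenberg vector field, and then expands $\{S,S\}$ into only three pieces: $\{S_0,S_0\}=0$ trivially, $\{\sigma(Q_{CE}),\sigma(Q_{CE})\}=0$ by the previously established fact that $Q_{CE}^2=0$ (via the general correspondence between homological vector fields and solutions of the CME), and $\{\sigma(Q_{CE}),S_0\}=Q_{CE}S_0=d_{CE}S_0=0$ by $\mathfrak{g}$-invariance of $S_0$. So the Jacobi identity and the homomorphism property of $\rho$ are invoked once, upstream, in the (unwritten) proof that $Q_{CE}$ is homological, and the proof of the proposition itself is three lines. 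Your six-term expansion unfolds that package: $\{S_E,S_E\}=0$ is the Jacobi identity, and $\{S_R,S_R\}+2\{S_E,S_R\}=0$ is the homomorphism property $[\rho_\alpha,\rho_\beta]=C_{\alpha\beta}^\gamma\rho_\gamma$. The content is identical; your route is more explicit and self-contained (and makes visible exactly which algebraic identity kills which term), at the cost of the sign bookkeeping you rightly flag, while the paper's route buys brevity by leaning on the symbol map $Q\mapsto\sigma(Q)$ and the conceptual statement that a homological vector field always yields a CME solution.
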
 
\begin{proof}
First note that $S=S_0+\sigma(Q_{CE})$ therefore $\{S,S\}=\{S_0,S_0\}+2\{\sigma(Q_{CE}),S_0\}+\{\sigma(Q_{CE}),\sigma(Q_{CE})\}$. We already proved that the last term 
$\{\sigma(Q_{CE}),\sigma(Q_{CE})\}=0$. Furthermore, $S_0$ is a function on $X$ and is identified with the same element $\pi^*S_0\in \mathcal{O}(T^*X)$ therefore it is obvious that $\{S_0,S_0\}=0$.
Finally identity \eqref{symplecticCE} implies that $2\{\sigma(Q_{CE}),S_0\}=2Q_{CE}S_0=d_{CE}S_0=0$ since $S_0$ is $\mathfrak{g}$ invariant, thus we can use Lemma \ref{LemmatrivialCE}.
\end{proof}
This example will clarify what is the gauge-fixing procedure in the BV formalism, and will also make clear that we have not yet all the needed tools...

\subsection{Lagrangian submanifold} \label{lagrangian}

For the moment, the integral $\int_{\Pi T^*V} e^{iS} $ should be understood as a BV integral $\int^{BV}_{\underline{0}\subset T^*\left(X\times\mathfrak{g}\right)}e^{iS}$
over the zero section of the symplectic manifold $\Pi T^*V=\Pi T^*\left(X\times\mathfrak{g}^*\right)$ i.e. $\underline{0}\simeq \left(X\times\mathfrak{g}\right)$. The zero section $\underline{0}$ 
is also a Lagrangian submanifold of $\Pi T^*V$.

First, if $S_0$ has a non compact critical manifold (the critical manifold of a function $f$ is the set of points such that $df=0$), the choice of the zero section as a Lagrangian domain of integration is a very bad one since the integral might diverge. However, thanks to 
lemma \ref{gauge_fix}, we know that we can change the domain of integration (if we stay in the same homology class) provided that the integrand is $\Delta$-closed. Thus, we will choose as domain 
of integration a Lagrangian submanifold that is the conormal of some submanifold $\Sigma$ which is transverse to the orbits of the gauge group. 

Hence we see that we may interpret the gauge fixing in the BV formalism as the choice of a lagrangian submanifold $\Pi N^*\Sigma$ to integrate over. The gauge fixing surface $\Sigma$ has to be 
transverse to the orbits of $\mathfrak{g}$. Then asking for an observable to be gauge invariant will be translated into asking for the $\Delta$-closeness of the integrand of the path 
integral. The reverse is obviously true: if the integral of a polyvector field does not depends on the lagrangian submanifold we are integrating over (providing that we stay in the same homology 
class), then we can take two infinitesimally close submanifolds. Then, from the proof of the lemma \ref{gauge_fix} we see that the integral of the BV laplacian over the infinitesimal volume that 
distinguish these two submanifold vanishes. Hence the BV laplacian vanishes in this infinitesimal volume. We can do the same analysis everywhere: the BV laplacian vanishes globally. In other 
worlds: the (BV) integral of a polyvector field $\alpha$ is gauge invariant if, and only if, $\Delta\alpha=0$.

The missing point is: over which set of lagrangian submanifolds can we integrate? We will give a partial answer, found in \cite{Fi04}, although the idea is already present in the 
seminal work of Batalin and Vilkovisky \cite{BaVi81b}. Take $\Psi_1$, $\Psi_2$ two smooth functions of $V$. Then the submanifold $\mathcal{L}_{\Psi_1}$ of $\Pi T^*V$ defined by
\begin{equation}
 x_i^* = \frac{\partial\psi_1}{\partial x^i}
\end{equation}
is a lagrangian submanifold. Moreover, since we can build the homotopy
\begin{equation}
 \Psi_t = t\Psi_1 +(1-t)\Psi_2,
\end{equation}
we find that $\mathcal{L}_{\Psi_1}$ and $\mathcal{L}_{\Psi_2}$ are in the same homology class. Finally, since the zero section is defined by $x_i^*=0$ we see that, for $\Psi$ is a smooth function 
$\mathcal{L}_{\Psi}$ is an admissible lagrangian submanifold to be integrate over: it has the same homology class as the zero section.

Let us see how all of this work in the example build in \ref{example}. If in the manifold $X$ (the configuration space of fields) the gauge surface is given by equations $Y=\{f_\alpha=0\}$ then 
in the space of fields plus antighosts $\left(X\times\mathfrak{g}^*\right)$ we choose the surface $\Sigma=\{ f_\alpha=0, c^*_\alpha=0 \}$ which corresponds to the surface
$Y\times \{0\}\subset \left(X\times\mathfrak{g}^*\right)$. Hence the conormal $\Pi N^*\Sigma \subset \Pi T^*\left(X\times\mathfrak{g}^*\right)$ is parametrized by the Lagrangian immersion:
\begin{equation}
(x^i,c_\alpha;\eta^i)\subset \left(Y\times \{0\}\right)\times \mathbb{R}^{(0|k)} \longmapsto (x^i,c_\alpha;\eta^i\frac{\partial f_\alpha}{\partial x^i},0 )\in \Pi N^*\left(Y\times \{0\}\right)\subset \Pi T^*\left(X\times\mathfrak{g}^*\right). 
\end{equation}
Once we choose the Lagrangian chain of integration then we define the gauge fixed BV integral as:
\begin{subequations}
 \begin{eqnarray}
  \int^{BV}_{\Pi N^*\Sigma} e^{i\frac{S}{\hbar}}\\
  S=S_0+S_E+S_R.
 \end{eqnarray}
\end{subequations}
The above integral should be invariant if we make compactly supported deformations of $Y$ in the configuration space of fields $X$. The only thing to check is that the integrand  
$e^{i\frac{S}{\hbar}}$, where $S$ defined by \eqref{BVaction}, should be $\Delta$-closed.

\begin{lemm}
Let $\Omega$ be a fixed volume form and $\Delta_{\Omega}$ the corresponding BV Laplacian. If the Lie algebra $\mathfrak{g}$ action preserves $\Omega$ then $\Delta_{\Omega} c^\alpha\rho_{\alpha}^ix_i^*=0$. 
\end{lemm}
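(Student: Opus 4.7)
The plan is to identify the polyvector field $c^{\alpha}\rho_{\alpha}^{i}x_{i}^{*}$ with a vector field on $V$, and then to show, using the intrinsic definition \eqref{BV_lapl_def} of the BV Laplacian, that $\Delta_{\Omega}$ acts on it as the divergence with respect to $\Omega$, which vanishes by the invariance hypothesis.

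More precisely, since $c^{\alpha}\rho_{\alpha}^{i}x_{i}^{*}$ is of degree one in the antifield coordinates of $\Pi T^{*}V$, the decalage isomorphism identifies it with the (ghost-valued) vector field on $V=X\times\pi\mathfrak{g}$
\begin{equation*}
X_V \;=\; c^{\alpha}\rho_{\alpha}^{i}\frac{\partial}{\partial x^{i}}.
\end{equation*}
By the definition \eqref{BV_lapl_def}, we have $(\Delta_{\Omega}X_V)\lrcorner\Omega = d(X_V\lrcorner\Omega) = d\bigl(\iota_{X_V}\Omega\bigr)$. Because $\Omega$ is a top form, $d\Omega=0$, so Cartan's magic formula gives $d(\iota_{X_V}\Omega)=\mathcal{L}_{X_V}\Omega$. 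Thus $\Delta_{\Omega}X_V$ is precisely the divergence $\mathrm{div}_{\Omega}(X_V)$ of $X_V$ with respect to the chosen volume form.

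It then remains to verify that this divergence vanishes. Writing $X_V=c^{\alpha}\rho_{\alpha}$ and applying the standard identity $\mathcal{L}_{fY}\Omega = f\,\mathcal{L}_{Y}\Omega + df\wedge\iota_{Y}\Omega$ (suitably graded, but the signs play no role in the final result), one obtains
\begin{equation*}
\mathcal{L}_{c^{\alpha}\rho_{\alpha}}\Omega \;=\; c^{\alpha}\,\mathcal{L}_{\rho_{\alpha}}\Omega \;\pm\; dc^{\alpha}\wedge \iota_{\rho_{\alpha}}\Omega.
\end{equation*}
The first term is zero by the hypothesis that the Lie algebra action preserves $\Omega$, i.e. $\mathcal{L}_{\rho_{\alpha}}\Omega=0$ for every $\alpha$. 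The second term is zero for a purely combinatorial reason: $\rho_{\alpha}$ has components only along $X$, so $\iota_{\rho_{\alpha}}\Omega$ is a top form on $V$ minus one leg in the $X$-direction, and wedging with $dc^{\alpha}$ (which points along $\pi\mathfrak{g}$) then gives back a top form whose coefficient vanishes since $\rho_{\alpha}$ does not act on the ghost sector. Equivalently, since $c^{\alpha}$ depends only on the $\pi\mathfrak{g}$ coordinates while $\rho_{\alpha}$ differentiates only along $X$, we have $\rho_{\alpha}(c^{\beta})=0$, which is what ensures this second term vanishes.

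The main obstacle, and really the only technical point, is keeping track of the signs from the odd character of the ghosts $c^{\alpha}$ and from the Berezinian factor in $\Omega$ when splitting $\mathcal{L}_{fY}$ into its two pieces; but since both pieces separately vanish for structural reasons (invariance of $\Omega$ under $\rho_{\alpha}$, and orthogonality of the $X$- and $\pi\mathfrak{g}$-directions), no subtle cancellation is needed and the conclusion $\Delta_{\Omega}(c^{\alpha}\rho_{\alpha}^{i}x_{i}^{*})=0$ follows.
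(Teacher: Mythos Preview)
Your proof is correct and follows the same idea as the paper's: identify $c^\alpha\rho_\alpha^i x_i^*$ with the vector field $c^\alpha\rho_\alpha$ on $V$, recognise $\Delta_\Omega$ on a vector field as the divergence via Cartan's formula, and conclude from $\mathcal{L}_{\rho_\alpha}\Omega=0$. The paper's argument is terser and does not explicitly isolate the cross term $dc^\alpha\wedge\iota_{\rho_\alpha}\Omega$ (it tacitly uses that $c^\alpha$ is constant along $X$), so your version is a more careful rendering of the same approach.
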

\begin{proof}
 On the one hand, the fact that $\mathfrak{g}$ action preserves $\Omega$ is written $\rho_{\alpha}\Omega=0$. On the other hand, by definition of the BV laplacian we have
\begin{equation*}
 \Delta_{\Omega} c^\alpha\rho_{\alpha}^ix_i^* = c^\alpha\rho_{\alpha}^ix_i^*\lrcorner\Omega = c^\alpha\rho_{\alpha}^i\frac{\partial}{\partial x^i}\Omega = c^{\alpha}\rho_{\alpha}\Omega = 0.
\end{equation*}
\end{proof}

\begin{thm} \label{to_be_proved}
Let $S=S_0+S_E+S_R$ as defined in \ref{BVaction} and $\Delta$ the BV Laplacian corresponding to the Lebesgue measure $\Omega_0$, if
\begin{itemize}
\item[$\bullet$] the Lie algebra acts on $X$ in such a way that it preserves the measure $\Omega_0$
\item[$\bullet$] the Lie algebra $\mathfrak{g}$ is unimodular,
\end{itemize}
then 
\begin{itemize}
\item[$\bullet$] $e^{i\frac{S}{\hbar}}$ is $\Delta$ closed
\item[$\bullet$] $\Delta S=0$ and $S$ satisfies the quantum master equation.
\end{itemize}
\end{thm}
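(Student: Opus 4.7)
The plan is to reduce the theorem, via the equivalence $\Delta(e^{iS/\hbar})=0 \Leftrightarrow \{S,S\}-i\hbar\Delta S=0$ stated just above, to two independent verifications: the classical master equation $\{S,S\}=0$ and the vanishing of $\Delta S$. The first is already settled by Proposition \ref{propo531}, so the entire new content lies in proving $\Delta S=0$. Once both are established, the equivalence immediately gives $\Delta e^{iS/\hbar}=0$ and the quantum master equation $\{S,S\}-i\hbar\Delta S=0$.

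To prove $\Delta S=0$, I decompose $S=S_0+S_R+S_E$ and check each term separately using the explicit formula
\[
 \Delta = \frac{\partial}{\partial x^i}\frac{\partial}{\partial x^*_i} - \frac{\partial}{\partial c^{\alpha}}\frac{\partial}{\partial c^*_{\alpha}}
\]
established in the previous subsection. For $S_0$, which depends only on the bosonic base coordinates $x^i$ and carries no antifield factors, both double derivatives are identically zero, so $\Delta S_0=0$ trivially. For $S_R=\rho_\alpha^i c^\alpha x_i^*$, only the first piece of $\Delta$ contributes, giving $c^\alpha\,\partial_i\rho_\alpha^i$; this is exactly the quantity computed in the preceding lemma, and it vanishes under the first hypothesis that the Lie algebra action preserves the Lebesgue measure $\Omega_0$ (equivalently $\partial_i\rho_\alpha^i=0$).

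The main calculation — and the only real obstacle — is $\Delta S_E$ for $S_E=c^\alpha c^\beta C_{\alpha\beta}^\gamma c^*_\gamma$. Here only the Grassmann part of $\Delta$ contributes, and the signs of the graded derivatives must be handled with care. The plan is first to apply $\partial/\partial c^*_\alpha$, which peels off the antighost and leaves $c^\beta c^\gamma C_{\beta\gamma}^\alpha$, and then to apply $\partial/\partial c^\alpha$ with the odd Leibniz rule, producing two terms that combine, using the antisymmetry $C^\alpha_{\beta\gamma}=-C^\alpha_{\gamma\beta}$, into a multiple of $C^\alpha_{\alpha\beta}c^\beta$. Under the unimodularity hypothesis $C^\alpha_{\alpha\beta}=0$, so $\Delta S_E=0$. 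This is where the second hypothesis is unavoidable: without unimodularity there is a genuine quantum anomaly obstructing the master equation.

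Putting the three pieces together gives $\Delta S=0$. Combined with $\{S,S\}=0$ from Proposition \ref{propo531}, the right-hand side of $\{S,S\}-i\hbar\Delta S$ vanishes identically, and the stated equivalence yields $\Delta(e^{iS/\hbar})=0$. Thus the integrand of the gauge-fixed BV integral is $\Delta$-closed, which by Corollary \ref{gauge_fix} guarantees the gauge invariance advertised in Subsection \ref{lagrangian}, and $S$ is a solution of the quantum master equation. The only subtlety worth emphasizing in the write-up is the careful bookkeeping of signs in the graded derivatives of $S_E$, since it is there that the unimodularity assumption makes its crucial appearance.
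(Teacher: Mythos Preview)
Your approach is correct and matches the paper's: both reduce the claim to $\{S,S\}=0$ (Proposition~\ref{propo531}) plus the term-by-term verification $\Delta S_0=\Delta S_R=\Delta S_E=0$ using respectively the antifield-independence of $S_0$, the divergence-free hypothesis on the $\rho_\alpha$, and unimodularity. The one organizational point to flag is that the equivalence $\Delta(e^{iS/\hbar})=0 \Leftrightarrow \{S,S\}-i\hbar\Delta S=0$ you treat as already available is in fact \emph{derived within} the paper's proof of this theorem: the paper first establishes by induction the identities $\{S,S^n\}=nS^{n-1}\{S,S\}$ and $\Delta S^n = nS^{n-1}\Delta S + \tfrac{n(n-1)}{2}S^{n-2}\{S,S\}$, sums the exponential series to obtain $\Delta(e^{iS/\hbar})=e^{iS/\hbar}\bigl(\tfrac{i}{\hbar}\Delta S - \tfrac{1}{2\hbar^2}\{S,S\}\bigr)$, and only then states the QME as a corollary. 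So in your write-up you should either supply that short computation yourself or reorder the exposition so the equivalence precedes the theorem.
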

To our great shame, we are not yet able to prove this theorem. We are missing an explicit formula for the BV laplacian, which will be given in the next section. We have stated the theorem to 
motivate this next section: explicit calculations need such a formula. Before moving to this key section, let us make a (maybe not so) small digression on a technicality that we did not mention 
before.


\subsection{Gribov ambiguities}

There is a technical detail, noticed by Gribov in \cite{Gr78}: the choice of a gauge fixing surface $\Sigma$ can lead to over-counting (or, a priori, under-counting) some 
configurations. Indeed, let $[A]$ be the set of all the potentials $A'$ obtained from $A$ by a gauge transformation. The Faddeev--Popov procedure is then to integrate over a surface defined 
by, say, a fixed value of the divergence of $A$: $\partial^{\mu}A_{\mu}=f$. But the set $[A]$ defines a surface in the configuration space and this surface shall intersect $\Sigma$ exactly once. 
In general, there is no such $\Sigma$.

More precisely, two cases have to be taken in consideration. First, the case where for some $A$, $[A]$ does not intersect the gauge surface $\Sigma$. 
We will see later that this case does not occur. But for some $A$, it is known that $[A]$ can intersect 
$\Sigma$ strictly more than once. This is typical of non-abelian gauge theory in Euclidean space and occurs even for abelian cases in minkowskian space.

In the case of an abelian theory in minkowskian space, one can perform a Wick rotation and use index theory and spectral flow to interpret the multiple 
intersection between $\Sigma$ and $[A]$, as done in \cite{Es00} and in the references therein.

Let us now present Gribov's approach to this problem. We do not aim to give an exhaustive description of Gribov ambiguities, and we refer the reader to 
\cite{EsPeZa04} for a pedagogical presentation. The basic idea is that if $A$ and $A'$ are liked by an infinitesimal gauge transformation and have the same divergence, it implies
\begin{equation} \label{condition_tsfo_gauge}
 \mathcal{F}(A)\psi = 0
\end{equation}
with $\mathcal{F}(A)$ an operator whose determinant is the Faddeev--Popov determinant appearing in the path integral. This equation can therefore be seen as an equation for the eigenfunction of 
$\mathcal{F}(A)$ with eigenvalue zero. Therefore, $A$ can have a close gauge equivalent only if the Faddeev--Popov determinant is zero.

The next step is to study the eigenfunction equation
\begin{equation}
 \mathcal{F}(A)\psi = \varepsilon\psi.
\end{equation}
Close to $A=0$ it can be solved for $\varepsilon>0$ only. Indeed, we have $\mathcal{F}(0)=\partial_{\mu}\partial^{\mu}$ and therefore the solution to the above equation with $A=0$ is
\begin{equation*}
 \psi = \lambda_1e^{-\varepsilon_{\mu}x^{\mu}} + \lambda_2e^{\varepsilon_{\mu}x^{\mu}}
\end{equation*}
with $\varepsilon_{\mu}\varepsilon^{\mu}=\varepsilon$. Now, if we want $\psi$ vanishing at infinity, we have to take one of the two constant egal to zero (say, $\lambda_2$). Then, for $x^{\mu}$ light-like,
the condition $\varepsilon_{\mu}x^{\mu}>0$ implies $\varepsilon >0$.

As we get away from the point $A=0$, solutions for values of $\varepsilon$ closer to zero exist. At some point, a solution for $\varepsilon=0$ will exist, and then a region with solution for 
$\varepsilon>0$, again a solution for $\varepsilon=0$ and so on... In \cite{BaVi81} a nice geometrical picture of this feature was drawn that we will briefly present below.

Let $P$ be the principal bundle of the theory and $\mathscr{C}$ the set of connections on this bundle. Let $\mathcal{G}$ be the group of gauge transformations. The physical configuration space is then the 
quotient space 
\begin{equation*}
 \mathscr{M} = \mathscr{C}/\mathscr{G}.
\end{equation*} 
Let $p:\mathscr{C}\longrightarrow\mathscr{M}$ be the projection operator. Now, a metric can be defined on $\mathscr{C}$ through a scalar product $(,)$ on the tangent space $T(\mathscr{C})$. The 
key point is that $T(\mathscr{C})$ is locally a direct sum:
\begin{equation}
 T_{\omega}(\mathscr{C}) = H_{\omega}\oplus V_{\omega}.
\end{equation}
$V_{\omega}$ (the vertical space) encodes the gauge degrees of freedom while $H_{\omega}$ (the horizontal space) encodes the physical degree of freedom. From this, on can define a metric on 
$\mathscr{M}$ by 
\begin{equation*}
 g(X,Y) = (\tau_X,\tau_Y)
\end{equation*}
with $\tau_X$ and $\tau_Y$ the projection of the pull-back of $X$ and $Y$ to $T(\mathscr{C})$ on the horizontal space. Hence the quotient space is a manifold (if we 
assume some further technical restrictions, see the appendix of \cite{BaVi81} and the references therein for the details) and therefore the usual tool of Riemannian calculus can be used.

Now, let $\omega_0\in\mathscr{C}$ be a point of the orbit $p^{-1}(a_0)$ of $a_0\in\mathscr{M}$. Define $\mathscr{S}_0$ the affine subspace of $\mathscr{C}$ as the space generated by $H_{\omega_0}$. Then 
it was shown in \cite{BaVi81} that the orbit $p^{-1}(a)$ intersects $\mathscr{S}_0$ if $a$ is not to far away from $a_0$, and $p^{-1}(a)$ becomes tangent to $\mathscr{S}_0$ when the Faddeev--Popov 
determinant vanishes, i.e. when the equation \eqref{condition_tsfo_gauge} has a solution.

It was also shown in \cite{BaVi81} that the Gribov horizon (that is, the set of points of $\mathscr{C}$ where the equation \eqref{condition_tsfo_gauge} has a solution) is the set of points where at least 
two geodesics coming from $\omega_0$ cross each other. In other terms: the Gribov horizon is the set of focal points of $\omega_0$. From now on, we will take $a_0$ as the origin, i.e. the point where $A=0$.

Let us call $C_n$, the $n$-th Gribov region, the set of points reached from the origin $A=0$ by a line along which $A$ is strictly increasing and which has crossed $n$ 
vanishing Faddeev--Popov determinants.

The key point is that one has to restrict the integration domain to $C_0$. Indeed, the previous analysis shows that two fields linked by an infinitesimal gauge transformation have to live near 
the border of their Gribov region. It was also shown in \cite{Gr78} that two infinitesimally close $A$ and $A'$ can not live in the same Gribov region:
\begin{thm}
 (Gribov, 1978) Let $A$ be a field close to the boundary of the Gribov region it lives in. Let $A'\in[A]$, with $A'$ close but different to $A$ (that 
 is, $A'$ is obtained from $A$ by a infinitesimal but non-zero gauge transformation). Then $A'$ does not live in the same Gribov region than $A$.
\end{thm}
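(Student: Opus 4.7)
The plan is to argue by a spectral flow of the Faddeev--Popov operator $\mathcal{F}(A)$ along a path joining $A$ to $A'$ inside the gauge slice $\mathscr{S}_0$. Recall that the Gribov regions are distinguished by the signature of $\mathcal{F}(A)$: as we cross the Gribov horizon, exactly one eigenvalue of $\mathcal{F}$ changes sign, so two configurations sitting in different regions $C_n$, $C_{n+1}$ can be characterized, up to a homotopy, by the parity of the number of sign changes of eigenvalues of $\mathcal{F}$ along any connecting path in $\mathscr{S}_0$. The goal is thus to produce a path from $A$ to $A'$ along which at least one eigenvalue of $\mathcal{F}$ changes sign.

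First I would make the infinitesimal gauge transformation explicit: writing $A' = A^{U}$ with $U = e^{\epsilon \psi}$ for some small parameter $\epsilon$ and a Lie-algebra-valued function $\psi$, we have $A' - A = \epsilon D_A \psi + \mathcal{O}(\epsilon^2)$. Imposing the gauge slice condition on both $A$ and $A'$ forces $\mathcal{F}(A)\psi = \mathcal{O}(\epsilon)$. Since, by hypothesis, $A$ lies close to the boundary of its Gribov region, $\mathcal{F}(A)$ possesses an eigenvalue $\lambda_0(A)$ of very small modulus; the bound $\|\mathcal{F}(A)\psi\| \geq |\lambda_0(A)|\,\|\psi\|$ applied on the orthogonal complement of the corresponding eigenspace shows that $\psi$ must be, to leading order, proportional to the near-zero mode $\psi_0$ of $\mathcal{F}(A)$.

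Next I would construct an explicit path $\gamma(s) = A^{U(s)}$ with $U(s) = e^{s\psi}$, $s \in [0,\epsilon]$, and study the smooth family of self-adjoint operators $\mathcal{F}(\gamma(s))$. The strategy is to follow the eigenvalue $\lambda(s)$ of $\mathcal{F}(\gamma(s))$ that equals $\lambda_0(A)$ at $s=0$ and compute $\lambda'(0)$ by first-order perturbation theory: $\lambda'(0) = \langle \psi_0, (\delta_\psi \mathcal{F})(A)\,\psi_0\rangle$, where $\delta_\psi \mathcal{F}$ is the variation of the Faddeev--Popov operator along the direction $D_A\psi$ tangent to the orbit. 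This inner product can be reorganized, using the definition of $\mathcal{F}$ and the fact that $\psi_0$ is a near-zero mode, into a quantity of definite sign (essentially the norm of a covariant derivative of $\psi_0$, up to a fixed sign dictated by the side of the horizon on which $A$ sits). The main obstacle of the proof lies precisely here: one must verify that the sign of $\lambda'(0)$ is opposite to that of $\lambda_0(A)$, so that $\lambda$ crosses zero on $(0,\epsilon)$ rather than moving away from zero and remaining on the same side. This is the geometric content of the fact that gauge orbits become tangent to the slice exactly at the horizon and, on the horizon, curve towards the adjacent Gribov region.

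Once that sign computation is secured, the conclusion is essentially automatic: between $s = 0$ and $s = \epsilon$ the eigenvalue $\lambda(s)$ vanishes at some $s^* \in (0,\epsilon)$, meaning that $\gamma(s^*)$ lies on the Gribov horizon, so that the path $\gamma$ joining $A$ to $A'$ crosses the horizon exactly once (for $\epsilon$ chosen small enough that no other eigenvalue of $\mathcal{F}$ is near zero). By the characterization of the $C_n$ recalled above, $A$ and $A'$ therefore belong to two consecutive Gribov regions, which proves the theorem. A complementary point, used implicitly, is that this argument is independent of the specific path chosen: any other path in $\mathscr{S}_0$ joining $A$ to $A'$ will share the same parity of horizon crossings by a standard homotopy argument applied to the one-parameter family of self-adjoint Fredholm operators, which justifies the well-posedness of the classification by the $C_n$'s.
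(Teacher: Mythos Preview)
The paper does not supply a proof of this theorem: it is stated with attribution to Gribov's 1978 paper and then immediately used as input for the subsequent discussion of the fundamental modular domain. There is therefore nothing in the thesis to compare your argument against.

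That said, your sketch contains an internal inconsistency worth flagging. You announce a spectral-flow argument along a path \emph{inside the gauge slice} $\mathscr{S}_0$, but the path you actually write down, $\gamma(s)=A^{U(s)}$ with $U(s)=e^{s\psi}$, moves along the gauge orbit through $A$, which is transverse to the slice (tangent only on the horizon). For $0<s<\epsilon$ the intermediate configurations need not satisfy the gauge condition, so speaking of them as lying in some Gribov region, or of $\gamma$ ``crossing the horizon'', is not directly meaningful. One either has to project back onto the slice at each $s$ and control what that projection does to the small eigenvalue, or run the spectral flow for the family $\mathcal{F}(\gamma(s))$ in the full configuration space and argue separately that its index computes the difference of Gribov labels; you do neither. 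You also correctly identify the sign of $\lambda'(0)$ as the crux and leave it unproven, which is precisely where the substance of Gribov's original argument sits.
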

Hence, this theorem implies that it is needed to integrate at most over $C_0$. However, it says nothing about finite gauge transformation. It turns out that a further restriction is 
needed to avoid over-counting some fields: it was shown in \cite{vBa92} that one has to restrict the integration domain to the so-called fundamental modular domain as shown in \cite{vBa92}. First, let us define the 
function 
\begin{equation}
 F_A(g) = ||A[g]||^2 = \int_M\text{Tr}\left[\left(A_{\mu} + ig^{-1}\partial_{\mu}g\right)^2\right].
\end{equation}
It is a Morse function (i.e. a smooth non-degenerate function on a 
differential manifold) on $\mathcal{M}/G$, with $\mathcal{M}$ the set of maps from $M$ to $G$. Then it can be shown that the 
hessian of $F_A$ at its critical points is the Faddeev--Popov determinant. Let us noticed that, in this framework, the Gribov region $C_0$ is the set of gauge field having a strictly positive 
Faddeev--Popov determinant. Now, define the set 
\begin{equation}
 \Lambda = \{A|F_A(g)\geq F_A(1)\forall g\in\mathcal{G}\}.
\end{equation}
Clearly, $\Lambda\subset C_0$. It was shown in \cite{SeFr82} and \cite{DeZw91} that $\Lambda$ intersect all the $[A]$ at least once, hence one has not to worry to under-counting some configurations. 
In \cite{vBa92}, it was shown that the interior of $\Lambda$ does not contain any Gribov copies. Moreover, $\Lambda$, when its boundary points are properly identify, is a fundamental modular domain.

In the following, we will assume that we are always working in this fundamental domain and we will still denote $V$ the integration domain. Finally, let us conclude with a physical remark: the restriction of 
the integration domain does not change the perturbative approach of quantum field theory, but will change the non-perturbative sector. In particular, it is a strong candidate as the mechanism responsible 
of the observed quark confinement in QCD.


%

\section{Master equations}

\subsection{BV laplacians in coordinate}

We start our analysis by studying the case without ghosts: we will hence build our BV laplacian on $\Pi T^*X$. Then the same analysis will allows us to find out the 
BV laplacian on $\Pi T^*V$.

At first sight, there is an ambiguity in the definition of the BV integral since it depends of the choice of the volume form $\Omega$. Let us take for now the Lebesgue measure 
$dx^1\wedge\dots\wedge dx^N$ as a volume form, written $\Omega_0$. Then, there is an important proposition.
\begin{propo}
 The BV laplacian associated to the Lebesgue measure $\Omega_0$ is
 \begin{equation} \label{BV_lapl_comm}
  \Delta_{\Omega_0} = \frac{\partial}{\partial x^i}\frac{\partial}{\partial x^*_i}.
 \end{equation}
\end{propo}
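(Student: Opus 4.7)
The plan is to establish the formula \eqref{BV_lapl_comm} by direct verification of the defining relation \eqref{BV_lapl_def}, namely $(\Delta_{\Omega_0}\alpha)\lrcorner\Omega_0 = d(\alpha\lrcorner\Omega_0)$, and by showing that the proposed operator satisfies this relation on every polyvector. By linearity of $\lrcorner$ and $d$ (and hence of $\Delta_{\Omega_0}$), it suffices to treat a monomial polyvector field of arbitrary degree $p$,
\begin{equation*}
\alpha = \alpha^{i_1\cdots i_p}(x)\,\partial_{i_1}\wedge\cdots\wedge\partial_{i_p},
\end{equation*}
with totally antisymmetric coefficients. Under the decalage isomorphism, $\alpha$ is represented on $\Pi T^*X$ by the function $\alpha^{i_1\cdots i_p}(x)\,x^*_{i_1}\cdots x^*_{i_p}$.

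First, I would compute $\alpha\lrcorner\Omega_0$ explicitly. Using the elementary contraction $\partial_j\lrcorner(dx^{k_1}\wedge\cdots\wedge dx^{k_N})$ and iterating via $(\partial_{i_1}\wedge\cdots\wedge\partial_{i_p})\lrcorner\Omega_0 = \partial_{i_1}\lrcorner(\cdots(\partial_{i_p}\lrcorner\Omega_0)\cdots)$, one obtains an $(N-p)$-form whose component along the ordered basis $dx^{j_1}\wedge\cdots\wedge dx^{j_{N-p}}$ is a signed version of $\alpha^{i_1\cdots i_p}$ (with $\{i_k\}$ the complement of $\{j_k\}$ in $\{1,\ldots,N\}$). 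This is the same combinatorial step that was performed in the proof of Lemma~\ref{deriv_superfonction}, and can be packaged cleanly using the Levi-Civita symbol.

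Second, I apply the de Rham differential. Since $\Omega_0$ has constant coefficients, only the coefficient $\alpha^{i_1\cdots i_p}(x)$ gets differentiated, producing a term of the form $\partial_j\alpha^{i_1\cdots i_p}\,dx^j\wedge(\text{remaining form})$. A short computation, moving the new $dx^j$ into its proper slot and keeping track of signs, identifies this $(N-p+1)$-form as $\beta\lrcorner\Omega_0$ where
\begin{equation*}
\beta = \bigl(\partial_j\alpha^{j\,i_2\cdots i_p}\bigr)\,\partial_{i_2}\wedge\cdots\wedge\partial_{i_p}.
\end{equation*}
By the defining relation of $\Delta_{\Omega_0}$, this polyvector $\beta$ is exactly $\Delta_{\Omega_0}\alpha$.

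Third, I verify that the operator on the right-hand side of \eqref{BV_lapl_comm}, evaluated on the supergeometric representative $\alpha^{i_1\cdots i_p}(x)\,x^*_{i_1}\cdots x^*_{i_p}$, produces the same $\beta$. The odd derivative $\partial/\partial x^*_i$ acts on the product $x^*_{i_1}\cdots x^*_{i_p}$ by removing one factor and picking up the Koszul signs dictated by convention \eqref{der_gauche}; summing over $i$ (after contraction with the even derivative $\partial/\partial x^i$ acting on the coefficient) and using the full antisymmetry of $\alpha^{i_1\cdots i_p}$ collapses the $p$ equal contributions into one and yields the polyvector $\partial_j\alpha^{j\,i_2\cdots i_p}\,x^*_{i_2}\cdots x^*_{i_p}$, matching $\beta$ exactly.

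The main technical obstacle is bookkeeping: the contraction $\lrcorner$, the de Rham differential, and the odd derivative $\partial/\partial x^*_i$ each contribute signs whose conventions must be consistently aligned. The cleanest way to navigate this is to first check the formula on the generator case $p=1$, where $\alpha = v^i\partial_i$ gives $\alpha\lrcorner\Omega_0$ as the standard interior product and $\Delta_{\Omega_0}\alpha = \partial_iv^i$ is the familiar divergence; then one proceeds by induction on $p$, using that both sides of \eqref{BV_lapl_comm} are graded derivations (in the appropriate sense) of the wedge product on polyvectors. In fact, once one knows $\Delta_{\Omega_0}$ agrees with $\partial/\partial x^i\,\partial/\partial x^*_i$ on degree-one polyvectors and that both operators satisfy the same failure-of-derivation relation controlled by the Schouten--Nijenhuis bracket \eqref{Sch-Nij} (a fact that will in any case be exploited in the next subsections), the extension to all degrees follows by a straightforward induction.
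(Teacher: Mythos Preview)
Your proposal is correct and follows essentially the same approach as the paper: take a monomial polyvector $\alpha$, compute $\alpha\lrcorner\Omega_0$ explicitly, apply $d$, track the signs when moving the new $dx^{i_k}$ back into place, and identify the result with the operator $\partial_{x^i}\partial_{x^*_i}$ acting on the supergeometric representative. The paper does exactly this brute-force sign bookkeeping (writing the result as $\sum_{k}(-1)^{k-1}(\partial_{i_k}f^{i_1\cdots i_n})\,\partial_{i_1}\wedge\cdots\wedge\widehat{\partial_{i_k}}\wedge\cdots\wedge\partial_{i_n}$ before recognizing the operator), so your divergence-style repackaging $\beta=(\partial_j\alpha^{j\,i_2\cdots i_p})\partial_{i_2}\wedge\cdots\wedge\partial_{i_p}$ and the inductive alternative are mild cosmetic variations on the same computation.
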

\begin{proof}
 First, for a polyvector field 
 $\alpha = f^{i_1\dots i_n}x^*_{i_1}\dots x^*_{i_n} \simeq f^{i_1\dots i_n}\partial_{i_1}\wedge\dots\wedge\partial_{i_n}$, we want to compute first $\alpha\lrcorner\Omega_0$. From the definition of the 
 contraction product, we see that we shall rewrite $\Omega_0$ as
 \begin{equation*}
  \Omega_0 = (-1)^{-n-\sum_{j=1}^ni_j}dx^{i_n}\wedge\dots\wedge dx^{i_1}\wedge dx^1\wedge\dots\wedge\widehat{dx^{i_1}}\wedge\dots\wedge\widehat{dx^{i_n}}\wedge\dots\wedge dx^N.
 \end{equation*}
 Therefore, we have
 \begin{equation} \label{contraction_lebesgue}
  \alpha\lrcorner\Omega_0 = (-1)^{-n-\sum_{j=1}^ni_j}f^{i_1\dots i_n}dx^1\wedge\dots\wedge\widehat{dx^{i_1}}\wedge\dots\widehat{dx^{i_n}}\wedge\dots\wedge dx^N
 \end{equation}
 with the terms under the hats absent of the summation. Thus
 \begin{equation*}
  d(\alpha\lrcorner\Omega_0) = (-1)^{-n-\sum_{j=1}^ni_j}(\partial_if^{i_1\dots i_n})dx^i\wedge dx^1\wedge\dots\wedge\widehat{dx^{i_1}}\wedge\dots\wedge\widehat{dx^{i_n}}\wedge\dots\wedge dx^N.
 \end{equation*}
 The only non-vanishing terms in the above formula are those such that $i=i_j$ for $j\in[[1,n]]$. Therefore we have
 \begin{equation*}
  d(\alpha\lrcorner\Omega_0) =  (-1)^{-n-\sum_{j=1}^ni_j}\sum_{k=1}^n(\partial_{i_k}f^{i_1\dots i_n})dx^{i_k}\wedge dx^1\wedge\dots\wedge\widehat{dx^{i_1}}\wedge\dots\wedge\widehat{dx^{i_n}}\wedge\dots\wedge dx^N.
 \end{equation*}
 Now, we want to replace the $dx^{i_k}$ to its rightful place in the wedge products. How many powers of $-1$ will this bring? We will have $i_k-1$ powers of $-1$ coming from 
 $dx^1,dx^2,\dots dx^{i_k-1}$. But $dx^{i_1},dx^{i_2},\dots dx^{i_{k-1}}$ were not there, so we have overcounted $k-1$ powers of $-1$. Hence we arrive to
 \begin{equation*}
  d(\alpha\lrcorner\Omega_0) = \sum_{k=1}^n(\partial_{i_k}f^{i_1\dots i_n})(-1)^{X_{n,k}}dx^1\wedge\dots\wedge\widehat{dx^{i_1}}\wedge\dots\wedge dx^{i_k}\wedge\dots\wedge\widehat{dx^{i_n}}\wedge\dots\wedge dx^N
 \end{equation*}
 with 
 \begin{equation}
  X_{n,k} :=k-n-\sum_{\substack{j=1\\j\neq k}}^ni_j.
 \end{equation}
 Then, from a comparison with formula \eqref{contraction_lebesgue}, the definition \eqref{BV_lapl_def} gives
 \begin{equation*}
  \Delta_{\Omega_0}\alpha = \sum_{k=1}^n(\partial_{i_k}f^{i_1\dots i_n})(-1)^{k-1}\partial_{i_1}\wedge\dots\wedge\widehat{\partial_{i_k}}\wedge\dots\wedge\partial_{i_n}.
 \end{equation*}
 If we recognize that $(-1)^{k-1}\partial_{i_1}\wedge\dots\wedge\widehat{\partial_{i_k}}\wedge\dots\wedge\partial_{i_n}=\frac{\partial}{\partial x^*_{i_k}}x^*_{i_1}\dots x^*_{i_n}$ (with the right 
 sign!) we have proven the proposition.
\end{proof}
This laplacian can now be extended to polyvector fields. We can perform the same analysis on $\Pi T^*\mathfrak{g}$ (in which case the coordinates on the fiber will be commuting, and the 
coordinates on the base space anticommuting), with the same result. The only remaining question is: how to merge the 
two? It turns out that the right way to do it is
\begin{equation}
 \Delta = \frac{\partial}{\partial\phi^i}\frac{\partial}{\partial\phi^*_i} - \frac{\partial}{\partial c^{\alpha}}\frac{\partial}{\partial c^*_{\alpha}}.
\end{equation}
In the following, this $\Delta$ will be called ``the BV laplacian'' and the letter $\Delta$ will always refer to it. $\Delta$ corresponds to a Lebesgue measure on the superspace $\Pi T^*M$. 
For a quantum field theory, such a measure does not exist, but since we are working at the level of the BV laplacian, we will still be able to define observables! 

Now, we can start our study of this object with one simple lemma
\begin{lemm}
Let $\mathfrak{g}$ be a unimodular Lie algebra (i.e.: the measure on the exterior algebra $\Lambda\mathfrak{g}$ is invariant under the adjoint action of $G$) then $\Delta c^\alpha c^\beta C_{\alpha\beta}^\gamma c^*_{\gamma}=0$. 
\end{lemm}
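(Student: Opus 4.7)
The plan is to observe that the expression $c^\alpha c^\beta C_{\alpha\beta}^\gamma c^*_\gamma$ depends only on the ghost coordinates $c^\alpha$ and antighost coordinates $c^*_\gamma$, so the bosonic half $\frac{\partial}{\partial\phi^i}\frac{\partial}{\partial\phi^*_i}$ of the BV laplacian annihilates it. It is therefore enough to compute the ghost contribution
\begin{equation*}
\Delta(c^\alpha c^\beta C_{\alpha\beta}^\gamma c^*_\gamma) = -\frac{\partial}{\partial c^\delta}\frac{\partial}{\partial c^*_\delta}(c^\alpha c^\beta C_{\alpha\beta}^\gamma c^*_\gamma).
\end{equation*}

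Next, I would carry out the two derivatives in order. Since $c^*_\delta$ is even and the expression is linear in $c^*$, applying $\partial/\partial c^*_\delta$ just strips off the antighost and produces $c^\alpha c^\beta C_{\alpha\beta}^\delta$. Then applying the odd derivative $\partial/\partial c^\delta$ to $c^\alpha c^\beta$ and keeping track of signs gives
\begin{equation*}
\frac{\partial}{\partial c^\delta}(c^\alpha c^\beta)\, C_{\alpha\beta}^\delta = (\delta^\alpha_\delta c^\beta - c^\alpha\delta^\beta_\delta)C_{\alpha\beta}^\delta = c^\beta C_{\delta\beta}^\delta - c^\alpha C_{\alpha\delta}^\delta = 2\, c^\alpha C_{\delta\alpha}^\delta,
\end{equation*}
where the last equality uses the antisymmetry $C_{\alpha\beta}^\gamma=-C_{\beta\alpha}^\gamma$ inherited from $[e_\alpha,e_\beta]=C_{\alpha\beta}^\gamma e_\gamma$.

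The final step is to invoke the unimodularity hypothesis. The trace $C_{\delta\alpha}^\delta$ is precisely the trace of the adjoint action $\mathrm{ad}(e_\alpha)$ on $\mathfrak{g}$, and a Lie algebra is unimodular exactly when all of these traces vanish (equivalently, when the Lebesgue measure on $\mathfrak{g}$ is invariant under the adjoint action of $G$). Therefore $C_{\delta\alpha}^\delta = 0$ for every $\alpha$ and the displayed expression vanishes, yielding $\Delta(c^\alpha c^\beta C_{\alpha\beta}^\gamma c^*_\gamma)=0$.

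This is essentially a pure bookkeeping exercise; there is no serious obstacle. The only mildly delicate point is keeping track of the Koszul signs when letting the odd derivative $\partial/\partial c^\delta$ act on the product $c^\alpha c^\beta$, but even there one can alternatively use the graded Leibniz rule (lemma~\ref{useful_lemm}) or simply expand out both orderings to arrive at the factor of $2$ before invoking unimodularity.
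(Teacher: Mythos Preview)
Your argument is correct and follows essentially the same route as the paper: compute the ghost part of the BV laplacian directly, obtain a term proportional to $c^\alpha C_{\delta\alpha}^{\delta}$, and then invoke unimodularity to kill the trace of the adjoint action. In fact your bookkeeping is more explicit than the paper's own proof, which phrases unimodularity as the vector fields $c^\beta C_{\alpha\beta}^\gamma\partial/\partial c^\gamma$ being divergence-free but arrives at the same contracted structure constant.
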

\begin{proof}
The unimodularity implies that the adjoint action of $\mathfrak{g}$ on $\mathfrak{g}^*$ preserves the measure $\wedge dc^\alpha$. This means that for all $\alpha$, the vector field
$c^\beta C_{\alpha\beta}^\gamma\frac{\partial}{\partial c^\gamma}$ is divergent free which implies that $c^\beta C_{\alpha\beta}^{\alpha}=0$. To conclude this proof, we just have to notice that 
$\Delta c^\alpha c^\beta C_{\alpha\beta}^\gamma c^*_{\gamma}=c^\beta C_{\alpha\beta}^{\alpha}$. On the other hand, $\Delta c^\alpha c^\beta C_{\alpha\beta}^\gamma c^*_{\gamma} = -2c^\beta C_{\alpha\beta}^{\alpha}$.
\end{proof}
Moreover,  crucial property of this lagrangian is the following:
\begin{lemm}
 The Schouten--Nijenhuis brackets \eqref{Sch-Nij2} measure the obstruction of $\Delta$ to be a derivative. More precisely, for $f$, $g$, two polyvector fields
 \begin{equation} \label{BV_prod}
  \Delta(fg) = (\Delta f)g + (-1)^{|f|}f(\Delta g) + (-1)^{|f|}\{f,g\}.
 \end{equation}
\end{lemm}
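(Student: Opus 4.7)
The plan is to prove the identity by direct computation, applying the graded Leibniz rule twice to each of the two second-order operators making up $\Delta$, and then identifying the inevitable ``cross terms'' with the Schouten--Nijenhuis bracket. I will carry out the bosonic piece $\partial_{\phi^i}\partial_{\phi^*_i}$ in detail; the ghost piece is parallel, and its minus sign in the definition of $\Delta$ combined with the parity reversal between base and fiber will conspire to give the uniform sign conventions in \eqref{Sch-Nij2}.

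First I would fix notation: write $\partial_i = \partial/\partial\phi^i$ and $\partial^{*i} = \partial/\partial\phi^*_i$, recalling that $\phi^i$ is even and $\phi^*_i$ is odd, so $\partial^{*i}$ is an odd derivation of degree $+1$ on polyvectors, while $\partial_i$ is even. The ordinary right graded Leibniz rule then gives
\begin{equation*}
\partial^{*i}(fg) = (\partial^{*i}f)g + (-1)^{|f|} f\,(\partial^{*i}g),\qquad \partial_i(fg)=(\partial_i f)g + f\,(\partial_i g).
\end{equation*}
Applying $\partial_i$ to the first identity produces four terms. Two of them are the ``pure'' contributions $(\partial_i\partial^{*i}f)g$ and $(-1)^{|f|}(-1)^{|f|}f(\partial_i\partial^{*i}g) = f(\partial_i\partial^{*i}g)$; the remaining two are the cross terms
\begin{equation*}
(\partial^{*i}f)(\partial_i g) + (-1)^{|f|}(\partial_i f)(\partial^{*i}g).
\end{equation*}

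Next I would rewrite these cross terms in terms of the left derivatives appearing in the definition \eqref{Sch-Nij2} of the Schouten--Nijenhuis bracket. Using the conversion rule \eqref{der_gauche}, $\overleftarrow{\partial} F/\partial y = (-1)^{|y|(|F|+1)}\partial F/\partial y$, one has $\overleftarrow{\partial}f/\partial\phi^*_i = (-1)^{|f|+1}\partial^{*i}f$ (since $|\phi^*_i|=1$) while $\overleftarrow{\partial}f/\partial\phi^i = \partial_i f$. A short sign check then shows that the two cross terms bosonic in base/odd in fiber equal $(-1)^{|f|}\{f,g\}_{\phi}$, where $\{\,,\,\}_\phi$ denotes the $\phi$--part of \eqref{Sch-Nij2}.

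The hard (or at least most error-prone) part will be the ghost sector: there the base coordinate $c^\alpha$ is odd and the fiber coordinate $c^*_\alpha$ is even, so the parities of $\partial_{c^\alpha}$ and $\partial_{c^*_\alpha}$ are swapped relative to the bosonic case, and there is an overall minus sign in $\Delta$. I would repeat the two--step graded Leibniz expansion, keeping track of the new Koszul signs, and check that after converting to left derivatives one obtains precisely the $c$--part of the Schouten--Nijenhuis bracket, again multiplied by $(-1)^{|f|}$, so that the minus sign in $\Delta$ is absorbed into the definition of $\{\,,\,\}$. Summing the $\phi$ and $c$ contributions then yields the claimed formula
\begin{equation*}
\Delta(fg) = (\Delta f)\,g + (-1)^{|f|} f\,(\Delta g) + (-1)^{|f|}\{f,g\},
\end{equation*}
which expresses that $\Delta$ fails to be a graded derivation exactly by the bracket, and in particular confirms (by induction on the degree, using $\Delta$ on generators) that $\{\,,\,\}$ is the unique bilinear operation measuring this defect.
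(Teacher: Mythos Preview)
Your approach is correct and essentially identical to the paper's: both proofs expand $\Delta(fg)$ by applying the graded Leibniz rule twice to each second-order piece of $\Delta$, collect the two ``pure'' terms into $(\Delta f)g + (-1)^{|f|}f(\Delta g)$, and then convert the four cross terms from right to left derivatives via \eqref{der_gauche} to recognize $(-1)^{|f|}\{f,g\}$. The paper simply carries out the bosonic and ghost sectors in one line rather than separating them as you do, but there is no substantive difference.
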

\begin{proof}
 By direct computation we have
 \begin{align*}
  \Delta(fg) & = \frac{\partial}{\partial\phi^i}\left(\frac{\partial f}{\partial\phi^*_i}g + (-1)^{|f|}f\frac{\partial g}{\partial\phi^*_i}\right) - \frac{\partial}{\partial c^{\alpha}}\left(\frac{\partial f}{\partial c^*_{\alpha}}g + f\frac{\partial g}{\partial c^*_{\alpha}}\right) \\
	     & = (\Delta f)g+ (-1)^{|f|}f(\Delta g) + (-1)^{|f|}\frac{\partial f}{\partial\phi^i}\frac{\partial g}{\partial\phi^*_i} - \frac{\partial f}{\partial c^{\alpha}}\frac{\partial g}{\partial c^*_{\alpha}} + \frac{\partial f}{\partial\phi^*_i}\frac{\partial g}{\partial\phi^i} - (-1)^{|f|}\frac{\partial f}{\partial c^*_{\alpha}}\frac{\partial g}{\partial c^{\alpha}}.
 \end{align*}
 Then, using $\frac{\derG{\partial} f}{\partial\phi^i} = \frac{\partial f}{\partial\phi^i}$, $\frac{\derG{\partial} f}{\partial c^{\alpha}}=(-1)^{|f|+1}\frac{\partial f}{\partial c^{\alpha}}$, 
 $\frac{\derG{\partial} f}{\partial\phi^*_i}=(-1)^{|f|+1}\frac{\partial f}{\partial\phi^*_i}$ and 
 $\frac{\derG{\partial} f}{\partial c^*_{\alpha}}=(-1)^{|f|+1}\frac{\partial f}{\partial c^*_{\alpha}}$ from the definition \eqref{der_gauche} and the grading defined in \ref{example} we arrive to 
 the result.
\end{proof}
This property can actually be used (as, for example, in \cite{Gw12}) to find the representation \eqref{BV_lapl_comm} of the BV laplacian on $\Pi T^*X$ (up to an overall constant) if we define it 
axiomatically from the equation \eqref{BV_prod}. It is more elegant in the sense that everything derives from very simple first principles (an abstract algebraic structure), but the 
meaning of this laplacian is (at least in my point of view) darkened with this other approach. So we rather go the other way around and formulate this result as a small lemma.
\begin{lemm} (Gwilliam \cite{Gw12})
 $\Delta$ is the unique translation invariant second order differential operator that decreases the degree of the polyvectors it acts on of exactly $1$ and satisfies \eqref{BV_prod} for 
 $\Pi T^*X$. 
\end{lemm}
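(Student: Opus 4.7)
The plan is to use the three hypotheses one at a time, starting from the most restrictive on the \emph{form} of $\Delta$ (translation invariance plus differential order) and only then using the bracket identity \eqref{BV_prod} to pin down the coefficients. First, translation invariance forces the operator to be a polynomial in $\partial/\partial x^i$ and $\partial/\partial x^*_i$ with constant scalar coefficients. Interpreting \emph{second order} as homogeneous differential order two, any such operator can be written as
\begin{equation*}
  \Delta' \;=\; A^{ij}\,\frac{\partial}{\partial x^i}\frac{\partial}{\partial x^j}
      \;+\; B^{ij}\,\frac{\partial}{\partial x^i}\frac{\partial}{\partial x^*_j}
      \;+\; C^{ij}\,\frac{\partial}{\partial x^*_i}\frac{\partial}{\partial x^*_j},
\end{equation*}
with $A^{ij}$ symmetric and $C^{ij}$ antisymmetric because $x^i$ is even and $x^*_i$ is odd.

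Next I would use the condition that $\Delta'$ decreases the polyvector degree (the $x^*$-degree) by exactly one. The monomials $\partial_{x^i}\partial_{x^j}$ preserve that degree and $\partial_{x^*_i}\partial_{x^*_j}$ lower it by two, so both terms must vanish: $A^{ij}=C^{ij}=0$. Only the mixed piece survives, and we have
\begin{equation*}
  \Delta' \;=\; B^{ij}\,\frac{\partial}{\partial x^i}\frac{\partial}{\partial x^*_j}.
\end{equation*}

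Then I would inject \eqref{BV_prod} with the minimal test pair $f=x^i$ and $g=x^*_j$. On the one hand $\Delta' f = \Delta' g = 0$ (neither term of $\Delta'$ survives on a single generator) and a direct computation gives $\Delta'(x^i x^*_j) = B^{ij}$. On the other hand $|f|=0$ so the right-hand side of \eqref{BV_prod} reads $0+0+\{x^i,x^*_j\}=\delta^i_j$ by the defining brackets of the preceding subsection. Equating the two, $B^{ij}=\delta^{ij}$, and we recover $\Delta = \partial/\partial x^i \cdot \partial/\partial x^*_i$ exactly as claimed.

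The delicate point, and the one I would be most careful about when writing the proof in full, is the meaning of ``second order''. Reading it as \emph{at most} second order permits an additional translation-invariant first-order term $a^i \partial/\partial x^*_i$; such a term lowers polyvector degree by one and, being an odd graded derivation of the pointwise product, satisfies \eqref{BV_prod} on its own (it adds nothing to either side), so it is genuinely invisible to the bracket identity. This is not a bug but a feature: it corresponds precisely to the freedom $\Delta_\Omega \mapsto \Delta_\Omega + \{f,-\}$ coming from a change of volume form $\Omega \mapsto e^f \Omega$ with $f$ linear, which the next section of the paper exploits. The uniqueness in the lemma therefore really relies on excluding first-order terms, either by reading ``second order'' as homogeneously of order two, or equivalently by imposing the normalisation $\Delta(x^i)=0$; with this understood, the three steps above give the canonical Lebesgue BV laplacian uniquely.
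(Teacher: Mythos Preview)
Your strategy matches the paper's: write a general candidate, then use the degree condition and the identity \eqref{BV_prod} on generators to fix the coefficients. One term is missing from your ansatz, however. By taking ``constant coefficients'' to mean scalar constants you exclude
\[
  c_{ij}^{\ \ k}\, x^*_k\,\frac{\partial}{\partial x^*_i}\frac{\partial}{\partial x^*_j},
\]
which is genuinely second order, lowers the polyvector degree by exactly one, and is invariant under translations of the base $X$. The paper keeps this term in its ansatz and kills it with the test $f=x^*_i$, $g=x^*_j$: the left-hand side of \eqref{BV_prod} produces $-2c_{ij}^{\ \ k}x^*_k$ while the right-hand side vanishes since $\{x^*_i,x^*_j\}=0$. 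Adding this single extra test to your argument closes the gap.

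Your closing paragraph on the first-order piece is actually sharper than the paper. The paper writes $\Delta(x^ix^*_j)=a^i_{\ j}=x^ib_j+\delta^i_j$ and deduces $b_j=0$; but a direct computation gives $\Delta(x^ix^*_j)=a^i_{\ j}+x^ib_j$, so that relation only yields $a^i_{\ j}=\delta^i_j$ and leaves $b_j$ undetermined. Your observation that a constant $b_i\,\partial/\partial x^*_i$ is an odd derivation, hence satisfies \eqref{BV_prod} identically, is the correct diagnosis: the bracket identity alone cannot see a first-order piece, and uniqueness really does rely on reading ``second order'' as homogeneous (or on an extra normalisation). One small slip: the normalisation that kills $b$ is $\Delta(x^*_i)=0$, not $\Delta(x^i)=0$, since every term in the ansatz already annihilates $x^i$.
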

\begin{proof}
 The most general second order differential operator that decreases the degree of the polyvectors it acts on of exactly $1$ is
 \begin{equation*}
  \Delta = a^i_{~j}(x)\frac{\partial}{\partial x^i}\frac{\partial}{\partial x^*_j} + b_i(x)\frac{\partial}{\partial x^*_i} + c_{ij}^{~~k}(x)x^*_k\frac{\partial}{\partial x^*_i}\frac{\partial}{\partial x^*_j}.
 \end{equation*}
 If $\Delta$ is translation invariant then the $a$, $b$ and $c$ coefficients are constant. Then \eqref{BV_prod} implies
 \begin{equation*}
  \Delta(x^ix^*_j) = a^i_{~j} = x^ib_j+\delta^i_{~j}.
 \end{equation*}
 Hence $a^i_{~j}=\delta^i_{~j}$ and $b_j=0$. Now, using $c_{ij}^{~~k}=-c_{ji}^{~~k}$ and once again \eqref{BV_prod} we get
 \begin{equation*}
  \Delta(x^*_ix^*_j) = -2x^*_kc_{ij}^{~~k} = 0
 \end{equation*}
 and we end up with the announced $\Delta$.
\end{proof}
We have now enough knowledge to prove the theorem \ref{to_be_proved}. However, this will be our first encounter with the celebrated quantum master equation, and we will therefore postpone this 
task to the next subsection. Here, we did not yet solve our problem: we have a BV laplacian for the Lebesgue measure, but what if we do not like the Lebesgue measure? We want to be able to take 
another measure $\Omega = e^{f}\Omega_0$ with $|f|=0$. Then, from the definition \eqref{BV_lapl_def} we will have to compute
\begin{equation*}
 d(e^{f}\alpha\lrcorner\Omega_0)=e^{f}\left(d(\alpha\lrcorner\Omega_0) + df\wedge(\alpha\lrcorner\Omega_0)\right).
\end{equation*}
The first term is simply $e^f(\Delta\alpha)\lrcorner\Omega_0=(\Delta\alpha)\lrcorner\Omega$ while the second is just, thanks to the lemma \ref{deriv_superfonction}, 
$e^{f}\{f,\alpha\}\lrcorner\Omega_0=\{f,\alpha\}\lrcorner\Omega$. Hence we have proven a very nice proposition
\begin{propo} \label{cle_propo}
 The BV laplacian corresponding to the rescaled Lebesgue measure $\Omega = e^{f}\Omega_0$ is linked to the regular BV laplacian through the formula 
 \begin{equation}
  \Delta_{\Omega} = \Delta + \{f,-\}.
 \end{equation}
\end{propo}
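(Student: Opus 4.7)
The plan is to start directly from the defining formula \eqref{BV_lapl_def} of the BV Laplacian associated to a volume form. Since $\Omega = e^f \Omega_0$ is nowhere vanishing, the map $\mathcal{F}_\Omega : \alpha \mapsto \alpha \lrcorner \Omega$ is an isomorphism, so it will suffice to check that $(\Delta \alpha + \{f,\alpha\}) \lrcorner \Omega = d(\alpha \lrcorner \Omega)$ for every polyvector field $\alpha$, and then invoke this non-degeneracy to conclude the pointwise identity of operators $\Delta_\Omega = \Delta + \{f,-\}$.

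The computation itself is a two-line Leibniz expansion. First I would write $\alpha \lrcorner \Omega = e^f (\alpha \lrcorner \Omega_0)$, which is immediate from $\Omega = e^f \Omega_0$ and the $\mathcal{O}(X)$-linearity of the interior product in its form argument. Then applying the de Rham differential and using the Leibniz rule gives
\begin{equation*}
d(\alpha \lrcorner \Omega) \;=\; e^f \, d(\alpha \lrcorner \Omega_0) \;+\; e^f \, df \wedge (\alpha \lrcorner \Omega_0),
\end{equation*}
where I have used $d(e^f) = e^f\, df$ (valid because $|f|=0$, so $f$ is even and there is no sign subtlety).

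The two terms are then converted separately. For the first, the definition of $\Delta = \Delta_{\Omega_0}$ gives $d(\alpha \lrcorner \Omega_0) = (\Delta \alpha) \lrcorner \Omega_0$, and multiplying by $e^f$ simply reconstitutes $(\Delta \alpha) \lrcorner \Omega$. For the second term, Lemma \ref{deriv_superfonction} applied to the superfunction $f$ and the volume form $\Omega_0$ states exactly that $df \wedge (\alpha \lrcorner \Omega_0) = \{f,\alpha\} \lrcorner \Omega_0$, so after multiplying by $e^f$ this term becomes $\{f,\alpha\} \lrcorner \Omega$. Adding the two contributions yields
\begin{equation*}
d(\alpha \lrcorner \Omega) \;=\; \bigl( \Delta \alpha + \{f,\alpha\} \bigr) \lrcorner \Omega,
\end{equation*}
and comparing with the defining identity $(\Delta_\Omega \alpha)\lrcorner \Omega = d(\alpha \lrcorner \Omega)$ gives the claim.

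There is no real obstacle here: the proof is essentially a direct unwinding of the definitions, and the only non-trivial input is Lemma \ref{deriv_superfonction}, which has already been proven. The only point that requires care is the sign bookkeeping in $d(e^f \alpha \lrcorner \Omega_0)$, but since $f$ is of degree zero this is automatic. I would therefore expect the proof to fit in a few lines.
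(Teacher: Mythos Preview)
Your proposal is correct and follows essentially the same approach as the paper: expand $d(e^f\,\alpha\lrcorner\Omega_0)$ by the Leibniz rule, identify the first term as $(\Delta\alpha)\lrcorner\Omega$ via the definition of $\Delta$, and identify the second as $\{f,\alpha\}\lrcorner\Omega$ via Lemma~\ref{deriv_superfonction}. The paper's argument is slightly terser (it does not spell out the non-degeneracy of $\mathcal{F}_\Omega$), but the content is the same.
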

That was the last needed details. Now we can fully study the quantum master equation.

\subsection{Quantum master equations}

As we have already mentioned, the corollary \ref{gauge_fix} and the discussion of the subsection \ref{lagrangian} imply that the gauge-invariant quantities shall be $\Delta$-closed (and even 
$\Delta_{\Omega}$-closed). Thus we see the importance of the theorem \ref{to_be_proved}: it says that the zero-point function is a gauge-invariant quantity. We will now prove it.
\begin{proof} \emph{of Theorem }\ref{to_be_proved}.
We will need to compute $\Delta \left(e^{i\frac{S}{\hbar}}\right)=\sum\frac{1}{n!}\left(\frac{i}{\hbar}\right)^n\Delta S^n$ since $\Delta$ is a differential operator.. We start by showing
\begin{equation} \label{result_interm1}
 \{S,S^n\} = nS^{n-1}\{S,S\}
\end{equation}
by induction on $n$. \eqref{result_interm1} is trivial for $n=1$. If it is true for $n$, using the lemma \ref{useful_lemm} on $S^{n+1}=S.S^n$ we have
\begin{align*}
 \{S,S^{n+1}\} & = \{S,S\}S^n + (-1)^{|S|(|S|+1)}S\{S,S^n\} \\
	       & = S^n\{S,S\} + SnS^{n-1}\{S,S\} \qquad \text{since } n|S|=0=|S| \\
	       & = (n+1)S^{n+1}\{S,S\}.
\end{align*}
Then, still by induction on $n$ we demonstrate
\begin{equation} \label{result_interm2}
 \Delta S^n = nS^{n-1}\Delta S + \frac{n(n-1)}{2}S^{n-2}\{S,S\}.
\end{equation}
Once again, \eqref{result_interm2} is trivial for $n=1$. If it is true for $n$ then the lemma \ref{BV_prod} gives
\begin{align*}
 \Delta S^{n+1} & = (\Delta S)S^n + (-1)^{|S|}S(\Delta S^n) + (-1)^{|S|}\{S,S^n\} \\
		& = S^n(\Delta S) + S\left(nS^{n-1}\Delta S + \frac{n(n-1)}{2}S^{n-2}\{S,S\}\right) + nS^{n-1}\{S,S\} \qquad \text{from \eqref{result_interm1}} \\
                & = (n+1)S^n\Delta S + \frac{(n+1)n}{2}S^{n-1}\{S,S\}.
\end{align*}
With this last equality we have
\begin{align*}
\Delta \left(e^{i\frac{S}{\hbar}}\right) & = \sum_{n=0}^{+\infty}\frac{1}{n!}\left(\frac{i}{\hbar}\right)^n\left(nS^{n-1}\Delta S + \frac{n(n-1)}{2}S^{n-2}\{S,S\}\right) \\
					 & = \Delta S\sum_{n=1}^{+\infty}\frac{1}{(n-1)!}\left(\frac{i}{\hbar}\right)^nS^{n-1} + \frac{1}{2}\{S,S\}\sum_{n=2}^{+\infty}\frac{1}{(n-2)!}\left(\frac{i}{\hbar}\right)^nS^{n-2} \\
					 & = \left(e^{i\frac{S}{\hbar}}\right)\left(\frac{i\Delta S}{\hbar} - \frac{1}{\hbar^2}\{S,S\} \right) \\
					 & = \left(e^{i\frac{S}{\hbar}}\right)\frac{i\Delta S}{\hbar}\text{ since $S$ solves the CME according to the proposition \ref{propo531}}\\
					 & = 0
\end{align*}
since $\Delta S_0=0$ because $S_0$ depends only on $x$, $\Delta S_R=c^\alpha \frac{\partial \rho^i_\alpha(x)}{\partial x^i}=0$ since every vector field $\rho_\alpha$ is divergence free and
$\Delta S_E=c^\beta C_{\alpha\beta}^\alpha=0$ by unimodilarity. 
\end{proof}
In the statement of the theorem \ref{to_be_proved} we have said ``$S$ satisfies the quantum master equation''. We have now to say what is this quantum master equation. Well, it is a equation 
found as an intermediate step in the proof of the theorem \ref{to_be_proved}:
\begin{equation} \label{QME}
 \{S,S\} - i\hbar\Delta S=0.
\end{equation}
This equation (the Quantum Master Equation, QME) is maybe the most important of the BV formalism. It tells us which theories (i.e.: which action functionals) with a given gauge invariance can be 
built. Solutions of the QME have been studied in \cite{IgItSo07} and the search for other solutions is still an active field.

Now, if $S$ satisfies the QME, what are the gauge-invariant observables? Once again, from the corollary \ref{gauge_fix} and the discussion of the subsection \ref{lagrangian}, they have to be
(loosely speaking for now) $\Delta$-closed polyvector fields. 
\begin{propo}
 Given a theory and its gauge invariant action functional $S$, the integral $\int_{\Pi N^*\Sigma}^{BV}\mathcal{F}e^{iS/\hbar}$ is gauge invariant if, and only if, $\mathcal{F}$ is solution of
 \begin{equation} \label{QME_obs}
  \{S,\mathcal{F}\} -i\hbar\Delta\mathcal{F} = 0.
 \end{equation}
\end{propo}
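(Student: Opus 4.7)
The plan is to exploit the same machinery used for the zero-point function, namely the gauge-invariance characterization $\Delta(\mathcal{F}e^{iS/\hbar})=0$ together with the graded Leibniz rule \eqref{BV_prod} for $\Delta$ and the derivation property of the Schouten--Nijenhuis brackets (Lemma \ref{useful_lemm}). As stressed at the end of subsection \ref{lagrangian}, the independence of the BV integral with respect to deformations of the Lagrangian submanifold $\Pi N^*\Sigma$ inside its homology class is equivalent to the $\Delta$-closedness of the full integrand. So the whole content of the proposition is the algebraic identity $\Delta(\mathcal{F}e^{iS/\hbar})=0 \Leftrightarrow \{S,\mathcal{F}\}-i\hbar\Delta\mathcal{F}=0$, and the proof reduces to computing the left-hand side.

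First I would apply \eqref{BV_prod} with $f=\mathcal{F}$ and $g=e^{iS/\hbar}$, obtaining
\begin{equation*}
\Delta(\mathcal{F}e^{iS/\hbar}) = (\Delta\mathcal{F})e^{iS/\hbar}+(-1)^{|\mathcal{F}|}\mathcal{F}\,\Delta(e^{iS/\hbar})+(-1)^{|\mathcal{F}|}\{\mathcal{F},e^{iS/\hbar}\}.
\end{equation*}
The middle term drops out since the calculation performed in the proof of Theorem~\ref{to_be_proved} shows that the QME for $S$ is exactly what is needed to make $\Delta(e^{iS/\hbar})$ vanish. For the last term, I would establish by induction on $n$ (using Lemma~\ref{useful_lemm} and the fact that $|S|=0$) the identity $\{\mathcal{F},S^n\}=nS^{n-1}\{\mathcal{F},S\}$, exactly mirroring the relation \eqref{result_interm1} used for $\{S,S^n\}$. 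Summing the exponential series then yields
\begin{equation*}
\{\mathcal{F},e^{iS/\hbar}\} = \frac{i}{\hbar}\{\mathcal{F},S\}\,e^{iS/\hbar}.
\end{equation*}

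The remaining work is purely a sign chase. The graded antisymmetry \eqref{Sch-Nij_AS} with $|S|=0$ gives $\{\mathcal{F},S\}=(-1)^{|\mathcal{F}|}\{S,\mathcal{F}\}$, so that the prefactor $(-1)^{|\mathcal{F}|}$ is cancelled and one is left with
\begin{equation*}
\Delta(\mathcal{F}e^{iS/\hbar}) = e^{iS/\hbar}\left(\Delta\mathcal{F}+\frac{i}{\hbar}\{S,\mathcal{F}\}\right) = -\frac{1}{i\hbar}\,e^{iS/\hbar}\bigl(\{S,\mathcal{F}\}-i\hbar\Delta\mathcal{F}\bigr).
\end{equation*}
Since $e^{iS/\hbar}$ is invertible as a formal expression, $\Delta(\mathcal{F}e^{iS/\hbar})=0$ is equivalent to $\{S,\mathcal{F}\}-i\hbar\Delta\mathcal{F}=0$, proving both directions at once.

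The main obstacle I foresee is bookkeeping with the grading: the signs coming from \eqref{BV_prod}, from the graded antisymmetry of the Schouten--Nijenhuis brackets, and from the derivation formula in Lemma~\ref{useful_lemm} must all be kept consistent, and the inductive step $\{\mathcal{F},S^n\}=nS^{n-1}\{\mathcal{F},S\}$ must be handled with care for $\mathcal{F}$ of arbitrary parity (the fact that $|S|=0$ makes all the apparently dangerous $(-1)^{|S|(|\mathcal{F}|+1)}$ factors trivial, which is why the clean identity holds). Once this sign arithmetic is under control, the argument is really the same three-line computation that established $\Delta(e^{iS/\hbar})=0$ in Theorem~\ref{to_be_proved}.
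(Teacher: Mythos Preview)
Your proposal is correct and follows essentially the same route as the paper: both arguments hinge on the intermediate identity $\{\mathcal{F},S^n\}=nS^{n-1}\{\mathcal{F},S\}$ (the paper's \eqref{result_interm3}), the vanishing of $\Delta e^{iS/\hbar}$ from the QME, and the same sign chase via \eqref{Sch-Nij_AS}. The only cosmetic difference is that the paper expands the exponential first and applies \eqref{BV_prod} term by term to $\mathcal{F}\cdot S^n$, whereas you apply \eqref{BV_prod} directly to $\mathcal{F}\cdot e^{iS/\hbar}$ and then expand the bracket; the content is identical.
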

\begin{proof}
 We will need a further intermediate result:
 \begin{equation} \label{result_interm3}
  \{\mathcal{F},S^n\} = nS^{n-1}\{\mathcal{F},S\}.
 \end{equation}
 The proof is by induction on $n$ and exactly similar to the one of \eqref{result_interm1}. The case $n=1$ is trivial, and if \eqref{result_interm3} is true for a given $n$ then
 \begin{equation*}
  \{\mathcal{F},S^{n+1}\} = \{\mathcal{F},S\}S^{n} + S\{\mathcal{F},S^n\} = S^n\{\mathcal{F},S\} + nS^n\{\mathcal{F},S\} = (n+1)S^n\{\mathcal{F},S\}.
 \end{equation*}
 Then we know that $\int_{\Pi N^*\Sigma}^{BV}\mathcal{F}e^{iS/\hbar}$ is gauge-invariant if, and only if $\Delta(\mathcal{F}e^{iS/\hbar})=0$. Furthermore
 \begin{align*}
  \Delta(\mathcal{F}e^{iS/\hbar}) & = \Delta\mathcal{F} + \sum_{n=1}^{+\infty}\frac{1}{n!}\left(\frac{i}{\hbar}\right)^n\left[(\Delta\mathcal{F})S^n+(-1)^{|\mathcal{F}|}\mathcal{F}\Delta S+(-1)^{|\mathcal{F}|}\{\mathcal{F},S^n\}\right] \\
				  & = (\Delta\mathcal{F})e^{iS/\hbar} + (-1)^{|\mathcal{F}|}\mathcal{F}\Delta e^{iS/\hbar} + (-1)^{|\mathcal{F}|}\{\mathcal{F},S\}\sum_{n=1}^{+\infty}\frac{S^{n-1}}{(n-1)!}\left(\frac{i}{\hbar}\right)^n \qquad \text{from \eqref{result_interm3}.} \\
 \end{align*}
 Now $\Delta e^{iS/\hbar}=0$ since $S$ solves the QME. Hence we arrive to
 \begin{equation*}
  \Delta(\mathcal{F}e^{iS/\hbar}) = 0 \Leftrightarrow \Delta\mathcal{F} + (-1)^{|\mathcal{F}|}\frac{i}{\hbar}\{\mathcal{F},S\} = 0.
 \end{equation*}
 We find the equation \eqref{QME_obs} by multiplying this last line by $i\hbar$ and using the antisymmetry of the Schouten--Nijenhuis bracket that gives 
 $\{\mathcal{F},S\} = (-1)^{|\mathcal{F}|}\{S,\mathcal{F}\}$.
\end{proof}
We are nearly done now, but not completely. We can notice that we have performed all our computations with the Lebesgue measure. If we had chosen another measure, would we have found the same 
results? The answer is yes, but it has to be shown. Since the measure is a top form, another measure can only be the Lebesgue measure up to a conformal factor. If we notice that
\begin{equation}
 \int_{\Sigma}\mathcal{F}e^{iS/\hbar}\lrcorner\Omega_0 = \int_{\Sigma}\mathcal{F}\lrcorner\Omega
\end{equation}
with
\begin{equation}
 \Omega = e^{iS/\hbar}\Omega_0
\end{equation}
we deduce that the observables can also be defined as the $\Delta_{\Omega}$-closed polyvector field. The two definitions coincide, thanks to the proposition \ref{cle_propo}.
\begin{equation}
 \Delta_{\Omega}\mathcal{F} = 0 \Leftrightarrow (\Delta + \{iS/\hbar,-\})\mathcal{F} = 0 \Leftrightarrow \{S,\mathcal{F}\} -i\hbar\Delta\mathcal{F} = 0.
\end{equation}
This is quite simple, the only remarkable thing is that we end up with the right sign! On the first try!

Finally, using lemma \ref{lemm_def_obs} we see that two integrands whose difference is $\Delta$-closed will give the same BV integral, i.e. have the same measured values. This means that the 
observables of our theory are the elements of the homology group of the operator $\mathcal{O}$ defined by
\begin{equation*}
 \mathcal{O}(F) = \Delta(Fe^{iS/\hbar})
\end{equation*}
rather than polyvector fields. From the above explicit computations, we see that we have actually to compute the homology group of the operator
\begin{equation}
 \{S,-\} - i\hbar\Delta 
\end{equation}
as stated in the BV literature.

\subsection{Classical limit}

With the procedure described above, we concluded that, once we have solved the QME \eqref{QME}, the observables are the elements of the homology group of 
\begin{equation*}
 \{S,-\} - i\hbar\Delta. 
\end{equation*}
This is not a quantization procedure in the sense that we did not start from a classical theory that we would have quantized. We have rather built a quantum theory \emph{ab initio}. So, a good think to 
do now, as a final check, is to verify that we indeed find the classical theory if we take the limit $\hbar\longrightarrow0$.

The first remark is that if we take this limit in the QME we end up with the classical master equation. This shall not come as a surprise: if it is was not the case, we would not have called the CME 
``classical''. More importantly, if we take $\alpha$ a polyvector field
\begin{equation}
 \alpha = f^{i_1\dots i_p}\partial_1\wedge\dots\wedge\partial_p
\end{equation}
then $\alpha$ (or more precisely its homology class $[\alpha]$) is a classical observable if, and only if
\begin{align}
                 & \{S,\alpha\} = 0 \nonumber \\
 \Leftrightarrow & \sum_{j=1}^p\left[(-1)^{|x^{i_j}|}\frac{\partial S}{\partial x^{i_j}}f^{i_1\dots i_p}\partial_1\wedge\dots\wedge\widehat{\partial_{i_j}}\wedge\dots\wedge\partial_p - (-1)^{|x^*_{i_j}|}\frac{\partial S}{\partial x^*_{i_j}}\frac{\partial f^{i_1\dots i_p}}{\partial x^{i_j}}\partial_1\wedge\dots\wedge\partial_p\right] = 0
\end{align}
Hence, since all the terms of the above sum are linearly independent, the CME implies that $\alpha$ is non-vanishing only where $\frac{\partial S(x)}{\partial x^i}=0$ in the directions 
$x^i$, and where $\frac{\partial S}{\partial x^*_{i_j}}=0$ in the directions $x^*_i$, (but there is no freedom over those points since $S$ has typically a determined contents in the antifields): 
in the critical locus of $S$. That is the usual result of classical field theory: the solutions of the equations of motion (i.e. the contributions to the path integral) are the critical locus of 
the action.

To conclude, let us notice that the above makes clear how the BV formalism allows to treat the cases (that are the physically relevant situations) where the action is degenerated. This 
procedure still works in the quantum BV: it is one of the reasons why the BV formalism is so interesting.

\chapter*{Conclusion \markboth{{\scshape Conclusion}}{{\scshape Conclusion}}}

\addcontentsline{toc}{chapter}{Conclusion}

 \noindent\hrulefill \\
 {\it
Que dites-vous ?... C'est inutile ?... Je le sais ! \\
Mais on ne se bat pas dans l'espoir du succès ! \\
Non ! non, c'est bien plus beau lorsque c'est inutile ! \\

Edmond Rostant, Cyrano de Bergerac (Cyrano, Acte 5, Scène 6).}

 \noindent\hrulefill
 
 \vspace{1.5cm}

We have now been through the main text of this thesis. Starting from general definitions concerning algebra, bialgebra and Hopf algebra, we have been 
able to construct the Hopf algebra of renormalization. Then its Birkhoff decomposition has been presented and was used to define the renormalized 
Feynman rules of the theory. The end of the first chapter was devoted to the renormalization group equation studied in the framework of the Hopf 
algebra of renormalization.

The second chapter dealt with linear Schwinger--Dyson equations. After having presented a method to find exact solution to such equations, we 
applied it to various systems, with more-or-less successes: for the massive Yukawa model, solutions were found only in some limits, while for a 
supersymmetric model we found a parametric solution without taking any limits.

The two following chapters were devoted to the Schwinger--Dyson equation of the massless Wess--Zumino model, in the physical and the Borel plane respectively.
In the physical plane, our aim was to present the results of \cite{BeCl13}: the computations of the asymptotics of the solution. These computations 
were in excellent agreement with earlier numerical results.

In the Borel plane we were able to localize the singularities of the solution, and to characterize them. The most striking results concerned the number 
theoretical aspects of the singularities. We have shown that the transcendental numbers arising in the developments around these singularities could 
always be written as rational products of odd zetas. Moreover a bound for their weights was set for the two first singularities.

Finally, the fifth chapter gave a presentation of an interpretation of the BV formalism of gauge (and more general) theories as a theory of 
integration for polyvector fields. We have build an homology complex dual (in some sense) to the de Rham complex. Then the usual theorems of integration 
(Stokes theorem...) could be rephrased for polyvector fields. The gauge fixing have been shown to simply be the choice of the lagrangian submanifold 
to integrate over. Moreover the observables of the theory were shown to be elements of the homology group of an operator, as usually stated in the 
literature on the BV formalism. \\

Now, after this conclusion the reader can found three appendices. Each of these has a different reason not to be in the main text. 
The first one present a geometric approach of the BRST formalism. Although this is a very interesting topic, it is by no mean new, so has no place above. 

This introduction starts by a presentation of the objects we want to construct: the gauge invariant elements of a space of functions over the configuration 
space of the theory. We explain first why this can naturally be constructed from quotienting the initial space of functions. Then we briefly present the 
Koszul--Tate resolution and the Chevalley--Eilenberg complex. Together, they will give a bicomplex (the BRST bicomplex) which will have the right 
cohomology group.

Then we show that the BRST differential is actually an inner Poisson differential of degree one, but before this we give the essential definitions of Poisson superalgebra 
theory. This is particularly important since this structure is the one being deformed in the process of geometric quantization. Finally, in the case of QED, we give a classical 
argument showing that the physical states have to be elements of the cohomology group of the BRST operator, and we study some physical properties of the theory.

The second appendix presents some work done by Serguei Barannikov over the past few years. I did not put it into the main text because I only give the general pattern of this other way of seeing 
the BV formalism. It starts by giving a definition of the notion of degeneracy that unifies the usual ones for 
functions and Poisson bivectors. This allows to define degenerate polyvectors when they are solution of the classical master equation. Then the so called Darboux--Morse theorem is presented. It is a generalization to the supersymmetric case of the Morse 
lemma and of the Darboux theorem. This theorem is essential to an alternative approach of the BV formalism which makes clear the relation between BRST and BV formalisms.

The third and last appendix is about Feynman integrals. We will need these integrals in our future work, which concerns the scalar cubic model. However, since this project is still at an early 
stage, it seemed abusive to write these computations in the main text. First, we present an alternative way to compute the Feynman integrals of 
graphs of valence two. The idea is to link the integral of the two-points graph to the complete graph, which is a vacuum graph. This last one is 
easier to compute due to its lack of the exponential of a relation of the two Symanzik polynomials (in the massless case).

We check the method on the integral already computed (the one loop Mellin transform) and find the same result. Then we use it on an integral with a numerator. 
This integral is needed in the next system of equation of Schwinger--Dyson equations that we want to study: a truncated version of the massless scalar 
cubic model in six dimensions.

Then, we go thoroughly through a derivation of a three-points scalar integral. A result for this integral has already be given in \cite{BoDa86} but this 
article is quite laconic. Here we reduce the $D$-dimensional integral to a simple one. Making use relations for the Gauss' hypergeometric function and 
its Mellin--Barnes representation, we manage to express this integral as six series, which can be resumed to the fourth Appell's hypergeometric 
function. We clearly specify which relations are being used at each step. \\

This being said, what will come next? Well, many paths are in front of us. First, as already said, we have started to explore a truncated version of the 
Schwinger--Dyson equation of the massless cubic scalar model in six dimensions. Our goal is to find the singularities of the Borel transform of the solution 
and to see if they can be linked to a mass generation mechanism. This model is a good one for such a mechanism since it has asymptotic freedom and is 
therefore a good toy model for the strong interaction, where such a mass generating mechanism has to exist, and is still quite mysterious.

Moreover, the computations needed to study this model are rather similar to some work done by physicists interested in condensed matter system, and in 
particular for the QED in three dimensions. Hence we have started to discuss with some researchers in this area and hope to be able to further collaborate.

Concerning the Schwinger--Dyson equation of the massless Wess--Zumino model, some results have still to be proven. Our bound for the weigh of the odd 
zetas in the developments around the singularities of the solution of the Schwinger--Dyson equation hold only for the two first singularities. Indeed 
for the higher singularities an ambiguity occurs due to the many possible paths to reach them. Ecalle has shown that a suitable average over 
all the possible paths defines a derivation for the convolution product: the alien derivative. Taking an alien derivative of our system of equation 
and Ecalle's theory of résurgence will allow us to study the transcendental content of the higher singularities.

The presentation of the Hopf algebra of renormalization in the first chapter was not complete. Many powerful results in mathematics and physics have 
been reached through the Hopf algebraic approach. However, it seems that a folkloric theorem of physics can be  rigorously proved with Hopf algebraic 
techniques and has not been. It concerns the linearity in $\varepsilon$ of the $\beta$ function if we work in dimensional regularization. I am 
planning to turn my attention to this task in the near future.

Finally, there is plenty of room for future work in our approach of the BV formalism. For now, we are studying the QED in the Coulomb gauge with this 
formalism. Our goal is to make clear how the gauge-fixing procedure in the BV formalism can be linked to the usual notion of gauge-fixing. Moreover we 
want to understand how to get Feynman graphs (and more importantly, Feynman rules) out of this approach.

Also, in our presentation of the BV formalism, we assumed that they were no ghosts of ghosts and so on. Such fields can be added without drastically 
changing the main ideas but many cumbersome details have to be carried out. In particular, the grading on the superspace contains many more 
independent elements. Neither did we talk of renormalization in our approach. Recent progresses have been made toward a rigorous renormalization program 
\cite{Ng13}, with a point of view not to far from ours. Investigating the generalization of this work to gauge theories is an upcoming task.

In the main text, a brief presentation of the Gribov ambiguities was done. It would be interesting to thoroughly go through this subject with our BV formalism. First, 
that would allow to define and study the Gribov ambiguities in theories more general than gauge theories. One could also hope to solve some of the remaining open 
questions of this topic.

Our first motivation to study the BV formalism was to get a handle on the work of Brunetti, Fredenhagen and Rejzner \cite{BrFrRe13}. Although we are still 
quite far away from this goal (we still have to understand how the categorical framework of their formalism fits within the BV approach of gauge theories 
and how the BV formalism is expressed in the general landscape of algebraic quantum field theory) we are getting closer, and it is one of our drivers.

As a concluding remark, I would like to say that, although I have in the title of this thesis quite strongly opposed the analytical and geometric 
aspects of this work, I do not believe that they are that far apart. In the long run, I would like to see the two of them united, and use the 
Borel approach for the Schwinger--Dyson equations of gauge theories written with the BV formalism.




\renewcommand{\chaptername}{Appendix}
\appendix
\clearpage
\phantomsection
\addcontentsline{toc}{chapter}{Appendices}
%
%
%
%
%
%
%
%

\chapter{Geometric BRST formalism}

This geometric approach of the BRST formalism can be found in many textbooks, but the conventions are somehow moving. We quote in the text the articles and books from which we have worked. Moreover 
this study was done in collaboration with many others, in particular Christian Brouder.

\section{BRST differential}

We start by building the BRST differential and showing that its zeroth cohomology group is the thing that we want to compute. We roughly follow the presentation of \cite{KoSt86} and some others 
that will be specify when appearing.

\subsection{Set-up}

Let $M$ a symplectic manifold and $F(M)$ the smooth maps of $M$. $F(M)$ has the usual Poisson structure with brackets $\{,\}$. Let $C$ a coisotropic submanifold of $M$, that 
is that it exists $k$ functions over $M$ such that
\begin{equation}
 C = \{x\in M|f_i(x)=0\forall i\in[1,k]\}.
\end{equation}
Moreover, we will assume that the $f_i$'s are linearly independent and that they obey
\begin{equation}
 \{f_i,f_j\} = c_{ij}^{~~k}f_k
\end{equation}
with $c_{ij}^{~~k}$ elements of $F(M)$. In the physics language, $M$ is the configuration space of some phase space $N$ (hence $M=T^*N$) and the $f_i$'s are first class constraints in Dirac's 
nomenclature. Now let $G$ be a (connected) Lie group acting as symmetries of $F(M)$ and $\mathfrak{g}$ its Lie algebra. That is, there is a representation $\rho$ of $\mathfrak{g}$ over $F(M)$. 
Let $B$ be the quotient space
\begin{equation}
 B = C/G.
\end{equation}
Then we want to quantify functions of $B$ as operators acting on some Hilbert space. To do that, we want to express $F(B)$ in (co)homological terms. First, let us notice that any function of $C$ 
can be thought as a class of functions of $M$ coinciding on $C$. Defining the equivalence relation on $\mathcal{I}$ on $F(M)$ by
\begin{equation} \label{equiv_F(M)}
 f_1\sim f_2 \Leftrightarrow f_1(x) = f_2(x)\quad \forall x\in C
\end{equation}
we end up with
\begin{equation}
 F(C) = F(M)/\mathcal{I}.
\end{equation}
Similarly, one can express $F(B)$ in terms of $F(C)$: any function on $B$ can be thought as a function on $C$ constant along the orbits of $G$, and any such function on $C$ defines a function on 
$B$. Therefore
\begin{equation}
 F(B) = \{f\in F(C)|\xi f=0:=\rho_{\xi}(f)~\forall\xi\in\mathfrak{g}\}.
\end{equation}

\subsection{Koszul--Tate resolution}

Let $\delta$ be a linear map
\begin{equation}
 \delta:\mathfrak{g}\longrightarrow F(M)
\end{equation}
such that $\{\delta(\xi),f\} = \xi f$. Notice 
that up to now, no links were done between the Poisson structure and the Lie symmetries. It can be done either by assuming that $C=\Phi^{-1}(0)$ for some smooth map 
$\Phi:M\longrightarrow\mathfrak{g}$ known as the momentum map. This is the so-called Marsden-Weinberg reduction \cite{MaWe74}. In general cases, such map does not exist and the link has 
to be done by writing down an explicit formula for $\delta$.

In order to do so, let us notice than two elements of $F(M)$ are equivalent in the sens of \eqref{equiv_F(M)} if and only if their difference is vanishing on $C$, that is, if and only if their 
difference is spun by the $f_i$'s. Therefore, we can re-write
\begin{equation}
 F(C) = F(M)/(f_1F(M)+...+f_kF(M)).
\end{equation}
Now, from $\delta$ we define an odd derivation (that we will also write $\delta$) over $\mathfrak{g}\otimes F(M)$ by
\begin{subequations}
 \begin{eqnarray}
  \delta(\xi\otimes1) & = & 1\otimes \delta(\xi) \\
   \delta(1\otimes f) & = & 0.
 \end{eqnarray}
\end{subequations}
This extended $\delta$ being an odd derivation, it defines an odd derivation (still written $\delta$) over $\Lambda\mathfrak{g}\otimes F(M)$. It is an easy exercise to check that if $d$ is an 
odd derivation over $X$, then $d^2$ is an even derivation over $X$. Then $\delta^2$ is an even derivation and therefore is defined by its action over the generators only. Since 
$\delta^2(\xi\otimes1) = \delta^2(1\otimes f) = 0$ we end up with
\begin{equation}
 \delta^2 = 0.
\end{equation}
Hence we can define the homology of $\delta$. Now, it will be clear later that we will need this homology to be a resolution (i.e.: the only non-vanishing homology group is the 
zeroth). Hence we use Tate's procedure \cite{Ta56}. We will not detail it here but the idea is to add elements to the boundaries groups so that every cycle is a boundary. In the 
following, we will assume that the procedure has been performed and that we are working with the Koszul--Tate resolution. All in all, we get
\begin{equation}
 H_0^{\delta}(\Lambda\mathfrak{g}\otimes F(M)) \simeq F(C).
\end{equation}
An important remark is that the operator expression (which will be written later) for $\delta$ clearly shows that it commutes with $\mathfrak{g}$:
\begin{equation}
 \delta(\xi k) = \xi\delta(k) \quad \forall \xi\in\mathfrak{g},k\in K.
\end{equation}
Therefore $\delta$ is said to be a $\mathfrak{g}$-morphism.

\subsection{Chevalley--Eilenberg complex}

Let the representation $\rho$ of $\mathfrak{g}$ be over a vector space $K$. In the following, we will write $\xi k$ for $\rho_{\xi}(k)$. Then, we can define a derivation 
$\tilde d:K\longrightarrow \text{Hom}(\mathfrak{g},K)\simeq\mathfrak{g}^*\otimes K$ by
\begin{equation}
 \tilde dk(\xi) = \xi k \quad \forall\xi\in\mathfrak{g}.
\end{equation}
Now, we define an other differential $d:\mathfrak{g}^*\longrightarrow\Lambda^2\mathfrak{g}^*\otimes K$ by
\begin{equation}
 d\alpha(\xi_1,\xi_2) = \xi_1(\alpha(\xi_1))-\xi_2(\alpha(\xi_2))-\alpha([\xi_1,\xi_2]) = \rho_{\xi_1}(\alpha(\xi_1))-\rho_{\xi_2}(\alpha(\xi_2))-\alpha([\xi_1,\xi_2]).
\end{equation}
Let us notice that since $\rho$ is a representation, we have $d\circ\tilde d=0$. Now, we can extend $d$ to $\Lambda\mathfrak{g}^*$ by defining it as a superderivation for the wedge product:
\begin{equation*}
 d(\alpha\wedge\beta) = (d\alpha)\wedge\beta + (-1)^{|\alpha|}\alpha\wedge d\beta
\end{equation*}
where, if $d\alpha=\tilde\alpha\otimes k$ and $d\beta=\tilde\beta\otimes k'$, the first term in the RHS has to be read as $\tilde\alpha\wedge\beta\otimes k$ and the second 
$\alpha\wedge\tilde\beta\otimes k'$. Finally, $d$ can be extended one more time to $\Lambda\mathfrak{g}^*\otimes K$ by
\begin{equation}
 d(\alpha\otimes k) = (d\alpha)\otimes k + (-1)^{|\alpha|}\alpha\tilde d k.
\end{equation}
Hence we can define the cohomology of $d$ by writing $d=\oplus d_p$, with $d_p:\Lambda^{p-1}\mathfrak{g}^*\otimes K\longrightarrow\Lambda^{p}\mathfrak{g}^*$ and 
$\Lambda^{-1}\mathfrak{g}^*\otimes K = 0$. Then
\begin{equation}
 H^p_d(\Lambda\mathfrak{g}^*\otimes K) = \text{Ker}(d_{p+1})/\text{Im}(d_p).
\end{equation}
Therefore, since Im$(d_0)=0$ and Ker$(d_1)=\{k\in K|\xi k=0~\forall\xi\in\mathfrak{g}\}$, we find that $H^0_d(\Lambda\mathfrak{g}^*\otimes K)$ is the set of $\mathfrak{g}$-invariants of $K$.

\subsection{BRST bi-complex}

From the results of the two last subsections, we understand that we want to take $K=F(C)\simeq H_0^{\delta}(\Lambda\mathfrak{g}\otimes F(M))$. Hence, we want to compute 
\begin{equation*}
 H^0_d\left(\Lambda\mathfrak{g}^*\otimes H_0^{\delta}(\Lambda\mathfrak{g}\otimes F(M))\right).
\end{equation*}
First, let us notice that $d$ and $\delta$ commute with each other, since $\delta$ is a $\mathfrak{g}$-morphism, that can therefore be extended to a map Id$\otimes\delta$ (that will still be 
written $\delta$) of $\Lambda\mathfrak{g}^*\otimes K$. This map can be shown to commute with $d$: let $\omega$ be an element of $\Lambda\mathfrak{g}^*$ and $k$ an element of 
$\Lambda\mathfrak{g}\otimes F(M):=K$. Then
\begin{eqnarray*}
 d(\delta(\omega\otimes k)) & = & (d\omega)\otimes(\delta k)+(-1)^{|\omega|}\omega\otimes d(\delta k) \\
 \delta(d(\omega\otimes k)) & = & (d\omega)\otimes(\delta k)+(-1)^{|\omega|}\omega\otimes \delta(dk)).
\end{eqnarray*}
Hence, it is enough to show that $d$ and $\delta$ commute over $K$. For $\xi\in\mathfrak{g}$ we get
\begin{equation*}
 \delta(dk(\xi)) = \delta(\xi k) = \xi\delta(k)=d(\delta k)(\xi).
\end{equation*}
This is a standard textbook exercise (see Problem 1.13 of \cite{Fi06}) and can be greatly generalized. Now, writing $\delta=\oplus\delta_p$ with 
an obvious notation, we defined the BRST differential
\begin{equation}
 s = d+\oplus(-1)^p\delta_p.
\end{equation}
Hence we have $s^2=0$ and the miracle turns out to be the following theorem:
\begin{thm}
 Under the assumptions of the above constructions, we have
\begin{equation} \label{BRST_th}
 H_s^n\left(\Lambda\mathfrak{g}^*\otimes\Lambda\mathfrak{g}\otimes F(M)\right) \simeq H^n_d\left(\Lambda\mathfrak{g}^*\otimes H_0^{\delta}(\Lambda\mathfrak{g}\otimes F(M))\right).
\end{equation}
\end{thm}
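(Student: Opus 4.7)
The plan is to prove the isomorphism by recognizing $(\Lambda\mathfrak{g}^*\otimes\Lambda\mathfrak{g}\otimes F(M),d,\delta)$ as a double complex and exploiting the fact that, by construction, $\delta$ provides a resolution. Concretely, I would set $K^{p,q}:=\Lambda^p\mathfrak{g}^*\otimes(\Lambda\mathfrak{g}\otimes F(M))_q$, where $q$ denotes the $\Lambda\mathfrak{g}$-degree. Then $d$ raises $p$ by one and $\delta$ lowers $q$ by one, and the check already carried out in the text shows $d\delta=\delta d$. Introducing the sign convention $s=d+(-1)^p\delta$ on $K^{p,q}$ turns this into an anticommuting bicomplex, so that $s^2=0$ on the total complex $K^n=\bigoplus_{p+q=n}K^{p,-q}$. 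The object on the left-hand side of \eqref{BRST_th} is then the cohomology of $(K^\bullet,s)$.

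The main tool will be the spectral sequence associated to the filtration by $p$-degree (column filtration). On the $E_0$ page one has $E_0^{p,q}=K^{p,q}$ with $d_0=\pm\delta$, so
\begin{equation*}
E_1^{p,q}=H^q_{\delta}\bigl(\Lambda^p\mathfrak{g}^*\otimes\Lambda\mathfrak{g}\otimes F(M)\bigr).
\end{equation*}
Because $\delta$ is a $\mathfrak{g}$-morphism acting trivially on the $\Lambda\mathfrak{g}^*$ factor, and because the Koszul--Tate construction has been engineered precisely so that the $\delta$-homology is concentrated in degree zero with value $F(C)$, one gets $E_1^{p,0}=\Lambda^p\mathfrak{g}^*\otimes F(C)$ and $E_1^{p,q}=0$ for $q\neq 0$. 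The induced differential $d_1$ on $E_1^{p,0}$ is, by a direct unpacking of the definitions, exactly the Chevalley--Eilenberg differential $d$ acting on $\Lambda\mathfrak{g}^*\otimes F(C)$. Hence
\begin{equation*}
E_2^{p,0}=H^p_d\bigl(\Lambda\mathfrak{g}^*\otimes F(C)\bigr),\qquad E_2^{p,q}=0\ \text{for}\ q\neq 0.
\end{equation*}

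Since the $E_2$ page is concentrated in a single row, all higher differentials $d_r$ (for $r\geq 2$) land in zero groups, the spectral sequence degenerates at $E_2$, and $E_\infty^{p,0}=E_2^{p,0}$. The associated graded of $H^n_s$ therefore reduces to a single non-trivial piece, giving the desired isomorphism $H^n_s(K)\simeq H^n_d(\Lambda\mathfrak{g}^*\otimes F(C))$. Convergence of the spectral sequence is immediate since the filtration on each $K^n$ is finite in the $p$-direction (we may work degree by degree, or equivalently assume locally finite gradings, which is harmless).

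The main obstacle, in my opinion, is not the spectral sequence machinery but the preliminary bookkeeping: one must check carefully that the sign convention $s=d+(-1)^p\delta$ together with the commutation $d\delta=\delta d$ indeed yields $s^2=0$, that the induced $d_1$ on $E_1$ is genuinely the Chevalley--Eilenberg differential applied to the quotient $F(C)$ (and not something twisted by the Koszul--Tate generators that were added by Tate's procedure), and that the Koszul--Tate resolution is still a resolution of $F(C)$ as a $\mathfrak{g}$-module. If one prefers to avoid spectral sequences, the same argument can be rephrased as a zigzag: any $s$-cocycle can, by $\delta$-acyclicity in positive degree, be modified by $s$-coboundaries until it is concentrated in $\Lambda\mathfrak{g}\otimes F(M)$-degree zero, and what remains is a $d$-cocycle modulo $d$-coboundaries in $\Lambda\mathfrak{g}^*\otimes F(C)$; this reproduces the above isomorphism by hand.
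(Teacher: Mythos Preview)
Your argument is correct. The paper proves the theorem by the explicit zigzag you sketch in your final paragraph: it defines the projection $\pi_n:\mathcal{C}^n\to C^{n,0}$ onto the bottom row, checks that it sends $s$-cocycles to $d$-cocycles and $s$-coboundaries to $d$-coboundaries, and then builds the inverse by hand --- given a $d$-cocycle $\omega_0$, one uses $\delta$-acyclicity in positive degree to produce $\omega_1,\omega_2,\dots$ with $d\omega_i+\delta\omega_{i+1}=0$, terminating after $\dim\mathfrak{g}$ steps. Your spectral sequence is precisely the machine that automates this zigzag: the collapse of $E_1$ to a single row is the abstract expression of the fact that the inductive lifting terminates and yields an isomorphism, and the identification of $d_1$ with the Chevalley--Eilenberg differential is what the paper uses implicitly when it says the projection lands in $d$-cocycles of $\Lambda\mathfrak{g}^*\otimes F(C)$. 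So the two proofs are the same argument in different clothing. Your version is shorter and makes the role of the resolution hypothesis transparent (it is exactly what forces $E_1^{p,q}=0$ for $q\neq 0$); the paper's version is more elementary and shows explicitly where each invocation of acyclicity occurs. The caveat you flag about checking that $d_1$ really is Chevalley--Eilenberg on the quotient, and that the Tate generators do not spoil this, is the only place where genuine care is needed, and the paper glosses over it just as you do.
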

\begin{proof}
Notice that in the LHS, the module on which $d$ is acting is $\Lambda\mathfrak{g}\otimes F(M)$ while in the RHS it is $ H_0^{\delta}(\Lambda\mathfrak{g}\otimes F(M))$. The module with which we 
will be working at each point of the proof will be clear from the context.

This proof is from Figueroa O'Farrill \cite{Fi06} (theorem 1.1). The idea is to exhibit two maps. One will map the $s$-cocycles to $d$-cocycles and the $s$-coboundaries to $d$-coboundaries, and 
the other will do the inverse. It is for the second map that we will need that $\delta$ defined a resolution. Let us emphasize this point, which is absolutely crucial and often not clearly stated: 
if the Koszul complex is not a resolution, this procedure would not work.

First, let us define $C^{p,q}:=\Lambda^p\mathfrak{g}^*\otimes\Lambda^q\mathfrak{g}\otimes F(M)$ and $\mathcal{C}^n:=\oplus_{p-q=n}C^{p,q}$. $n$ is called the ghost number. Take 
$\omega\in\mathcal{C}^n$, then we can write
\begin{equation}
 \omega = \omega_0+\omega_1+\dots
\end{equation}
with $\omega_i\in C^{n+i,i}$. Then if $\omega$ is a $s$-cocycle, the cocycle condition decomposes to $\delta\omega_0=0$ and $d\omega_i+\delta\omega_{i+1}=0~\forall i\geq0$ due to the bigrading. 
The equation that interest us is
\begin{equation} \label{cocycle_cond}
 d\omega_0=-\delta\omega_1
\end{equation}
for some $\omega_1\in C^{n+1,1}$. Similarly, if $\omega=s\Phi$ is a $s$-coboundary, then we can write $\Phi=\Phi_0+\Phi_1+\dots$ with $\Phi_i\in C^{n-1+i,i}$ and the coboundary condition implies
\begin{equation} \label{coboundary_cond}
 \omega_0 = d\Phi_0+\delta\Phi_1.
\end{equation}
Now, defining $K^n=\Lambda^n\mathfrak{g}\otimes F(M)$, we have 
\begin{equation*}
 H_n^{\delta}(\Lambda\mathfrak{g}\otimes F(M)) = K^n/\delta K^{n+1}.
\end{equation*}
Hence, in the RHS of \eqref{BRST_th} we have
\begin{equation*}
 \Lambda^n\mathfrak{g}^*\otimes K = \Lambda^n\mathfrak{g}^*\otimes (K^0/\delta K^{1}) \simeq \frac{\Lambda^n\mathfrak{g}^*\otimes K^0}{\Lambda^n\mathfrak{g}^*\otimes \delta K^1}=\frac{C^{n,0}}{\delta C^{n,1}}.
\end{equation*}
Therefore, still in the RHS of \eqref{BRST_th} we have the $d$-cocycles being
\begin{equation*}
 Z^n_d(\Lambda^n\mathfrak{g}^*\otimes K) \simeq \{\omega\in\mathcal{C}^n|d\omega\in\delta C^{n+1,1}\}.
\end{equation*}
Hence, equation \eqref{cocycle_cond} states that $\omega_0$ is a $d$-cocycle. Similarly, the commutation of $d$ and $\delta$ shows that the $d$-coboundaries are
\begin{equation*}
 B^n_d(\Lambda^n\mathfrak{g}^*\otimes K) \simeq dC^{n-1,0}+\delta C{n,1}
\end{equation*}
and so \eqref{coboundary_cond} states that $\omega_0$ is a $d$-coboundary. Thus the projection $\pi_n:\mathcal{C}^n\longrightarrow C^{n,0}$ of an element to its $C^{n,0}$ component can be lifted 
to a cohomology map $\pi_n^*:H^n_s\longrightarrow H^n_d$.

To go the other way around, let $\omega_0$ by a $d$-cocycle. According to the study above, $\omega_0\in C^{n,0}$ (so $\delta\omega_0=0$) and it exists $\omega_1\in C^{n+1,1}$ such that
\begin{equation*}
 d\omega_0+\delta\omega_1=0.
\end{equation*}
But $\delta d\omega_1=d\delta\omega_1=-d^2\omega_0=0$, hence $d\omega_1$ is a $\delta$-cocycle. Since $\delta$ gives a resolution, $d\omega_1$ has to be a $\delta$-coboundary. Hence, it 
exists $\omega_2\in C^{n+2,2}$ such that
\begin{equation*}
 d\omega_1+\delta\omega_2=0.
\end{equation*}
We can iterate this procedure. We end up with a sequence $(\omega_0,\omega_1,\omega_2\dots)$ such that $d\omega_i+\delta\omega_{i+1}=0~\forall i$. This is just the decomposition of the 
$s$-cocycle condition, therefore $\omega=\omega_0+\omega_1+\omega_2+\dots$ is a $s$-cocycle.

Finally, if $\omega_0$ is a $d$-coboundary, it is simple to see that it is also $s$-coboundary. Indeed, as seen before, we can write
\begin{eqnarray*}
 \omega_0 & = & d\phi_0+\delta\phi_1 \\
          & = & s\phi_0+s\phi_1-d\phi_1 \quad \text{since }\delta\phi_0=0.
\end{eqnarray*}
Once again, we can compute $\delta d\phi_1=0\Rightarrow d\phi_1=-\delta\phi_2$ for some $\phi_2\in C^{n+1,2}$ since $\delta$ gives a resolution. We can iterate this procedure up to 
$N=$dim$(\mathfrak{g})$. At this point we have
\begin{equation*}
 \omega_0 = s\phi_0+s\phi_1+\dots+s\phi_{N+1}-d\phi_{N+1}.
\end{equation*}
And since $d\phi_{N+1}=0$ since $\phi_{N+1}\in C^{n+N,N+1}$, we have $\omega_0=s\phi$. Hence, we have a map from $d$-cocycles to $s$-cocycles that maps $d$-coboundaries to $s$-coboundaries (it 
is indeed the same map with $\omega_i=0~\forall i\geq1$). Therefore, we have a map $\tilde \pi_n^*:H^n_d\longrightarrow H^n_s$ and this proves the theorem.
\end{proof}

\section{Poisson superalgebra}

In this section, we want to show that the BRST differential defines a so-called Poisson superalgebra on $\Lambda\mathfrak{g}^*\otimes\Lambda\mathfrak{g}\otimes F(M)$. This structure is crucial 
since it is its deformation that will provide the quantization. Let us start with the crucial definitions.

\subsection{Poisson superalgebra}

\begin{defi}
 $(P,.)$ is an associative supercommutative superalgebra if $P$ is a $\mathbb{Z}_2$ graded vector space:
\begin{equation}
 P = P_0\oplus P_1
\end{equation}
with $.$ an internal bilinear operation preserving the grading such that:
\begin{subequations}
 \begin{align}
  & a.(b.c) = (a.b).c \\
  & a.b = (-1)^{|a||b|}b.a \\
 \end{align}
\end{subequations}
with $|a|=i$ for $a\in P_i$.
\end{defi}

In this section, we will not write $.$ for the multiplication. Moreover, all our superalgebras will be associative and supercommutative. Therefore, we will specify it only when needed.

$P_0$ is called the even part of $P$, and $P_1$ the odd one. Any associative and commutative algebra can be seen as an associative supercommutative superalgebra without odd part. The typical 
non-trivial example is the following: let $M$ be some smooth variety and $\Lambda M$ its exterior algebra. Then $(\Lambda M,\wedge)$ has a structure superalgebra with:
\begin{eqnarray*}
     \Lambda M & = & (\Lambda M)^0\oplus (\Lambda M)^1 \\
 (\Lambda M)^0 & = & \oplus_n\Lambda^{2n}M \\
 (\Lambda M)^1 & = & \oplus_n\Lambda^{2n+1}M.
\end{eqnarray*}
This result is a consequence of $\omega\wedge\rho = (-1)^{pq}\rho\wedge\omega$ for $\omega$ a $p$-form and $\rho$ a $q$-form.

\begin{defi}
 $(P,[,])$ is a Lie superalgebra if $[,]$ is a supercommutator
\begin{equation}
 [a,b] = - (-1)^{|a||b|}[b,a]
\end{equation}
that obeys the super-Jacobi relation:
\begin{equation}
 [a,[b,c]] = [[a,b],c]+ (-1)^{|a||b|}[b,[a,c]].
\end{equation}
\end{defi}

\begin{defi}
 $(P,.,\{,\})$ is a Poisson superalgebra if
\begin{itemize}
 \item $(P,.)$ is an associative supercommutative superalgebra.
 \item $(P,\{,\})$ is a Lie superalgebra.
 \item $\{,\}$ is a linear graded derivation over the multiplication:
\begin{equation}
 \{a,bc\} = \{a,b\}c + (-1)^{|a||b|}b\{a,c\}
\end{equation}
that preserves the grading.
\end{itemize}
\end{defi}
Now, the tensor product of two Poisson superalgebra is given by:
\begin{subequations}
 \begin{align}
  & |a\otimes u|\cong(|a|+|u|)[2] \label{tensor_superalgebres_a}\\
  & (a\otimes u)(b\otimes v) = (-1)^{|u||b|}ab\otimes uv \\
  & \{a\otimes u;b\otimes v\} = (-1)^{|u||b|}(\{a,b\}\otimes uv + ab\otimes\{u,v\}). \label{tensor_superalgebres_b}
 \end{align}
\end{subequations}
%
and the Poisson superalgebra structure survives when we take tensor product
\begin{lemm} \label{prod_superalgebres}
Let $P$ and $Q$ be two Poisson superalgebras. Then with the above definitions, $P\otimes Q$ is a Poisson superalgebra.
\end{lemm}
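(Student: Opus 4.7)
The plan is to verify in turn the three axioms of a Poisson superalgebra for $P \otimes Q$ equipped with the grading, product, and bracket given by \eqref{tensor_superalgebres_a}--\eqref{tensor_superalgebres_b}. The overall strategy is a direct sign-chase: on homogeneous generators of the form $a \otimes u$, $b \otimes v$, $c \otimes w$, reduce every axiom for $P \otimes Q$ to the corresponding axiom in $P$ and in $Q$, carefully tracking the Koszul sign rule $(-1)^{|x||y|}$ that appears whenever an element of degree $|x|$ is moved past one of degree $|y|$.

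First I would check that $(P\otimes Q, \cdot)$ is an associative supercommutative superalgebra. Associativity is straightforward: applying the product twice to $(a\otimes u)(b\otimes v)(c\otimes w)$ in either grouping produces the sign $(-1)^{|u||b|+|u||c|+|v||c|}abc\otimes uvw$, the two associativities in $P$ and in $Q$ handling the rest. For supercommutativity, one computes
\begin{equation*}
(a\otimes u)(b\otimes v) = (-1)^{|u||b|} ab\otimes uv
= (-1)^{|u||b|+|a||b|+|u||v|} ba \otimes vu,
\end{equation*}
and one checks that the total sign equals $(-1)^{(|a|+|u|)(|b|+|v|)}$ times the sign one gets from $(b\otimes v)(a\otimes u)$; this is just the identity $|u||b|+|a||b|+|u||v| \equiv (|a|+|u|)(|b|+|v|) + |a||v| \pmod 2$, which is immediate.

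Next I would verify that $(P\otimes Q, \{,\})$ is a Lie superalgebra. Graded antisymmetry follows by expanding $\{a\otimes u, b\otimes v\}$ using \eqref{tensor_superalgebres_b}, applying the graded antisymmetry of $\{,\}_P$ and $\{,\}_Q$, and collecting signs; again one checks the resulting exponent equals $(|a|+|u|)(|b|+|v|)$ modulo $2$. For the super-Jacobi identity, I would compute $\{a\otimes u, \{b\otimes v, c\otimes w\}\}$ by expanding the inner bracket (producing a sum of two terms), then applying the outer bracket to each term using the derivation-like expansion in \eqref{tensor_superalgebres_b}. This generates four terms for the left-hand side of super-Jacobi and similarly four terms from each of the two terms on the right-hand side. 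The hard part will be this sign bookkeeping: one must group the eight terms on the right into pairs matching the four terms on the left, using the super-Jacobi identities in $P$ and in $Q$ separately. Experience with this calculation says that the signs conspire exactly, but writing it out cleanly is the main technical nuisance.

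Finally, the graded Leibniz rule
\begin{equation*}
\{a\otimes u, (b\otimes v)(c\otimes w)\} = \{a\otimes u, b\otimes v\}(c\otimes w) + (-1)^{(|a|+|u|)(|b|+|v|)} (b\otimes v)\{a\otimes u, c\otimes w\}
\end{equation*}
is verified by expanding the left-hand side using \eqref{tensor_superalgebres_a}--\eqref{tensor_superalgebres_b} and the derivation property of $\{,\}_P$ and $\{,\}_Q$ separately; the four resulting terms regroup into the two terms on the right. I expect the obstacle here to be minimal compared to super-Jacobi: it is essentially the same kind of sign-chase but shorter. Throughout, extension to all of $P\otimes Q$ is by bilinearity, which is automatic because every axiom is multilinear in its arguments.
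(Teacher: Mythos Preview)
Your proposal is correct and follows essentially the same approach as the paper: direct verification of each axiom on homogeneous tensors by sign-chasing, reducing to the axioms in $P$ and $Q$ separately. The paper in fact writes out only the supercommutativity check explicitly (matching your computation) and then declares that super-Jacobi and super-Leibniz ``can be proven in a similar way, but it is quite cumbersome and we will not do it here''; your outline is thus more detailed than what the paper actually supplies.
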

\begin{proof}
Associativity follow from associativity of $P$ and $Q$. Let $p_1, p_2\in P$ and $q_1, q_2\in Q$.
\begin{eqnarray*}
 (p_1\otimes q_1)(p_2\otimes q_2) & = & (-1)^{|q_1||p_2|}(p_1p_2)\otimes(q_1q_2) \\
                                  & = & (-1)^{|q_1||p_2|+|p_1||p_2|+|q_1||q_2|}(p_2p_1)\otimes(q_2q_1) \\
                                  & = & (-1)^{|q_1||p_2|+|p_1||p_2|+|q_1||q_2|+|p_1||q_2|}(p_2\otimes q_2)(p_1\otimes q_1) \\
                                  & = & (-1)^{|p_1\otimes q_1||p_2\otimes q_2|}(p_2\otimes q_2)(p_1\otimes q_1).
\end{eqnarray*}
Super-Jacobi and super-Leibniz can be proven in a similar way, but it is quite cumbersome and we will not do it here.
\end{proof}

%

\begin{defi}
 $P$ is a graded Poisson superalgebra if it is $\mathbb{Z}$-graded such that the two internal operations respect this graduation and such that the $\mathbb{Z}$--graduation is a 
reduction of the $\mathbb{Z}$--graduation:
\begin{subequations}
 \begin{align}
  & P = \oplus_{n\in\mathbb{Z}}P^n \\
  & P^nP^m \subseteq P^{n+m} \qquad \{P^n,P^m\} \subseteq P^{m+n} \\
  & P_0 = \oplus_{n\in\mathbb{Z}}P^{2n} \qquad P_1 = \oplus_{n\in\mathbb{Z}}P^{2n+1}.
 \end{align}
\end{subequations}
\end{defi}

\begin{defi}
A Poisson derivative of degree $k$ is a morphism $D:P^n\longrightarrow P^{n+k}$ such that
\begin{subequations}
 \begin{align}
  & D(ab) = (Da)b + (-1)^{k|a|}a(Db) \\
  & D(\{a,b\}) = \{Da,b\} + (-1)^{k|a|}\{a,Db\}.
 \end{align}
\end{subequations}
This derivative is said to be inner if it exists $Q\in P^k$ such that $Da=\{Q,a\}\forall a\in P$.
\end{defi}

\begin{lemm} \label{inner_Poisson}
 Any inner derivative is a Poisson derivative.
\end{lemm}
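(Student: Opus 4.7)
The plan is to verify directly from the definitions that $D = \{Q,-\}$ with $Q \in P^{k}$ satisfies the three conditions defining a Poisson derivative of degree $k$: the degree condition, the super-Leibniz rule with respect to the multiplication, and the super-Leibniz rule with respect to the bracket. All three will be immediate consequences of the axioms of the Poisson superalgebra structure listed in the preceding definitions, so no extra ingredient is needed.

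First I would observe that the degree condition is automatic: since the bracket respects the $\mathbb{Z}$-grading (recall $\{P^k,P^n\}\subseteq P^{k+n}$ in the definition of a graded Poisson superalgebra), the map $a \mapsto \{Q,a\}$ sends $P^n$ into $P^{n+k}$, which is exactly what is required. Note that this also fixes the parity shift $|Da| \equiv |a|+k \,[2]$, matching the sign convention $(-1)^{k|a|}$ appearing in the two Leibniz rules.

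Next I would establish the super-Leibniz rule for the product. This is nothing but the graded derivation property of the bracket built into the third axiom of a Poisson superalgebra, applied in its first slot: writing $D(ab) = \{Q,ab\}$ and using $\{Q,ab\} = \{Q,a\}b + (-1)^{|Q||a|}a\{Q,b\}$, and then identifying $|Q|=k$, yields
\begin{equation*}
D(ab) = (Da)b + (-1)^{k|a|} a (Db),
\end{equation*}
as required.

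Finally, the super-Leibniz rule for the bracket follows from the super-Jacobi identity. Starting from $D(\{a,b\}) = \{Q,\{a,b\}\}$ and applying Jacobi in the form $\{Q,\{a,b\}\} = \{\{Q,a\},b\} + (-1)^{|Q||a|}\{a,\{Q,b\}\}$ gives
\begin{equation*}
D(\{a,b\}) = \{Da,b\} + (-1)^{k|a|}\{a,Db\}.
\end{equation*}
There is no real obstacle here; the only point requiring mild care is bookkeeping of the signs, namely checking that the sign $(-1)^{|Q||a|}$ produced by the axioms coincides with the $(-1)^{k|a|}$ in the definition of a Poisson derivative, which is true since $|Q|=k$. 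This completes the verification, and the lemma follows.
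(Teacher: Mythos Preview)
Your proof is correct and follows the same approach as the paper: the key step, the super-Leibniz rule for the bracket, is obtained exactly as in the paper via the super-Jacobi identity applied to $\{Q,\{a,b\}\}$. In fact you are more thorough, since the paper's proof only writes out this bracket identity and leaves the degree condition and the product Leibniz rule implicit.
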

\begin{proof}
Let $D$ be an inner derivative of degree $k$. Then:
\begin{eqnarray*}
 D(\{a,b\}) & = & \{Q,\{a,b\}\} \\
            & = & \{\{Q,a\},b\} + (-1)^{k|a|}\{a,\{Q,b\}\} \\
            & = & \{Da,b\} + (-1)^{k|a|}\{a,Db\}
\end{eqnarray*}
\end{proof}

\subsection{BRST Poisson superalgebra}

We will follow here the presentation of \cite{FiKi91} of this structure. It starts by noticing
\begin{equation*}
 \Lambda\mathfrak{g}^*\otimes\Lambda\mathfrak{g} \simeq \Lambda(\mathfrak{g}\otimes\mathfrak{g}^*).
\end{equation*}
Then one can easily define a Poisson structure on $\Lambda(\mathfrak{g}\otimes\mathfrak{g}^*)$: take $x,y\in\mathfrak{g}$ and 
$\alpha,\beta\in\mathfrak{g}^*$ and define:
\begin{equation}
 \{\alpha,x\}=\{x,\alpha\}=\alpha(x) \quad \{x,y\}=\{\alpha,\beta\}=0
\end{equation}
then expand it to $\Lambda(\mathfrak{g}\otimes\mathfrak{g}^*)$ as an odd derivation. Since $F(M)$ has a Poisson structure, it can be seen as a Poisson superalgebra without odd component. Then 
$\Lambda\mathfrak{g}^*\otimes\Lambda\mathfrak{g}\otimes F(M)$ is given its Poisson superagebra structure as the tensorial product of the two others. 

The above description is not very handy. It is easier to deal with the action of $s$ over the fields to deal with practical computations. Therefore, let $(c^i)$ be a basis of 
$\Lambda^1\mathfrak{g}^*$ (the ghosts) and $(\bar{c}_i)$ a basis of $\Lambda^1\mathfrak{g}$ (the antighost). In the following, we will write $c_ic_j$ for $c_i\wedge c_j$. Since $s$ is a 
derivation, it is enough to know how it acts on the generators of the complex. We have already discussed how $\delta$ is extended to acts on this complex, but $d$ was defined to act on 
$\Lambda\mathfrak{g}^*\otimes K$. We have to be more specific here and detail how $d$ acts on the various components of $K$ since $K=\Lambda\mathfrak{g}\otimes F(M)$.

To do so, we have to use the operator expression of the Chevalley--Eilenberg derivative $d$. By explicit computation, we can see that $d$ acts on the elements of 
$\Lambda\mathfrak{g}\otimes F(M)$ as a precise differential operator. Thus we can identify the two and write
\begin{equation}
 d = c^i\rho_i-\frac{1}{2}f_{ij}^{~k}c^ic^j\frac{\partial}{\partial c^k}
\end{equation}
with $\rho_i:=\rho(\bar c_i)$. As already advertised, the Koszul--Tate differential $\delta$ has the following operator expression
\begin{equation}
 \delta = f_i\frac{\partial}{\partial\bar c_i}
\end{equation}
this formula being shown in a similar fashion than the formula for $d$.
Both have a natural extension to $\Lambda\mathfrak{g}^*\otimes\Lambda\mathfrak{g}\otimes F(M)$. Then since $s$ is a derivation, it is enough to define it to look at its action over the basis 
elements:
\begin{align*}
 & sc^k = -\frac{1}{2}f_{ij}^{~k}c^ic^j = \{-\frac{1}{2}f_{ij}^{~n}c^ic^j\bar{c}_n,c^k\}, \\
 & s\bar{c}_k = f_k +f_{ik}^jc^i\bar{c}_j = \{\frac{1}{2}f_{ij}^{~n}c^ic^j\bar{c}_n + f_ic^i,\bar{c}_k\}, \\
 & sf = 0.
\end{align*}
Then, it is clear that $s$ is an inner derivative of degree $1$ (i.e. $s=\{Q,\}$ with $Q\in\mathcal{C}^1$) with
\begin{equation}
 Q = -\frac{1}{2}f_{ij}^{~k}c^ic^j\bar{c}_k + f_ic^i.
\end{equation}
Therefore, $s$ is an inner derivative and thus a Poisson derivative, for the Poisson structure of $\Lambda\mathfrak{g}^*\otimes\Lambda\mathfrak{g}\otimes F(M)$ which is form the Poisson structure 
of each of the components of the tensorial product.

\section{QED}

\subsection{Physical states of a theory}

The usual way to get rid of the orbits of the gauge algebra on the configuration space is to add a gauge-dependent term to the action. This process is called ``fixing the gauge''. However, working 
in a fixed gauge can be very cumbersome. In particular, the gauge symmetry might help to prove renormalizability of the theory. The BRST approach will help: fixing the gauge by a BRST-exact 
term, we will still have the invariance under the BRST symmetry. This might be seen as fixing the gauge on-shell and letting it free off-shell. This 
symmetry will make many proofs simpler.

Let $\Psi\in C^1$ our gauge-fixing fermion. The total action is then
\begin{equation}
 S = \int\d^4x\mathscr{L} + s\Psi.
\end{equation}
The observables and physical fields are then leaving in the zeroth cohomology group of $Q$. This can be seen by using very physical argument, as explained in \cite{Weinberg96b}, the origin of 
the argument being unclear. We will here follow Weinberg's presentation. But before, we need to specify how the BRST operator acts on the fields of any given theory. Let us take the vacuum 
vector $|0>$ and define
\begin{equation}
 Q|0> = 0.
\end{equation}
Then any physical states can be built with some creation operators $a^*(p)$: $|\psi> = \left(\prod a^*_i(p_i)\right)|0>$. We will see below how $Q$ naturally acts on those creation operators. 
Then we simply define the action of $Q$ over $|\psi>$ as:
\begin{equation}
 Q|\psi> = \{Q,\prod_i a^*_i(p_i)\}|0>.
\end{equation}
The argument is that any matrix element $<a|b>$ between two physical states $|a>$ and $|b>$ has to be invariant under gauge transformation, that is, invariant under a change of $\Psi$:
\begin{equation*}
 \Psi\longrightarrow\Psi+\delta\Psi
\end{equation*}
this $\delta$ having nothing to do with the Kozsul-Tate derivative. Hence we get
\begin{equation*}
 0 = \delta<a|b> = i<a|\delta S|b> = i<a|s\delta\Psi|b> = i<a|\{Q,\delta\Psi\}|b>.
\end{equation*}
By definition the ket in the RHS of the last equality is $Q\left(\delta\Psi|b>\right)=Q|b>$ since $|b>$ was assumed to be a physical states and therefore gauge-invariant. Hence we obtained, 
for any physical state
\begin{equation}
 Q|b> = 0.
\end{equation}
With the same argument, we see that if two states differ only by a $Q$-exact term, they will give the same contributions to the S-matrix. Hence the physical states lie in the cohomology of the 
BRST operator.

\subsection{Action of BRST differential on fields}

The gauge transformation of the fermionic field $\psi$ and the bosonic one $A^k_{\mu}$ is
\begin{subequations}
 \begin{align}
  & \delta\psi = i\varepsilon^{i}(x)t_{i}\psi \\
  & \delta A_{\mu}^k = \partial_{\mu}\varepsilon^k - f_{ij}^k\varepsilon^iA^j_{\mu}.
 \end{align}
\end{subequations}
Here $\varepsilon^i\in\mathbb{R}$ and is allowed to depend of the space-time position and $t_i$ is a matrix representation of the Lie algebra generator $X_i$. The action of the BRST 
differential over the matter fields is then defined as the action of the gauge transformation with $\varepsilon^i$ being replaced by $c^i$ and the matrix $t_i$ being chosen to be the 
representation $\rho(X_i) = \lambda_i$. The first point is needed to have a differential operator of degree $+1$.
\begin{subequations}
 \begin{align}
  & s\psi = ic^{i}\lambda_{i}(\psi) \\
  & sA_{\mu}^k = \partial_{\mu}c^k - f_{ij}^kc^iA^j_{\mu}.
 \end{align}
\end{subequations}
\begin{propo}
 $s$ is still a differential: $s^2\psi=0$, $s^2A_{\mu}^k=0$.
\end{propo}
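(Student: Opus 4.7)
The plan is to verify the two identities by direct computation, treating $s$ as an odd graded derivation on the extended complex including the matter and gauge fields. The crucial inputs are the expressions $sc^k = -\tfrac12 f_{ij}^{\ k}c^i c^j$, $s\psi = ic^i\lambda_i(\psi)$, $sA_\mu^k = \partial_\mu c^k - f_{ij}^{\ k}c^i A_\mu^j$, together with the anticommutativity of the ghosts, the antisymmetry $f_{ij}^{\ k}=-f_{ji}^{\ k}$, and the representation property $[\lambda_i,\lambda_j]=if_{ij}^{\ k}\lambda_k$ coming from the fact that $\lambda_i=\rho(X_i)$ is a Lie algebra morphism. The graded Jacobi identity for the structure constants $f_{ij}^{\ m}f_{mk}^{\ l}+f_{jk}^{\ m}f_{mi}^{\ l}+f_{ki}^{\ m}f_{mj}^{\ l}=0$ will do the final cancellation in the gauge sector.

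For the matter field, I would compute $s^2\psi = s(ic^i\lambda_i(\psi)) = i(sc^i)\lambda_i(\psi) - ic^i\,\lambda_i(s\psi)$, where the minus sign reflects that $c^i$ is odd and $s$ is an odd derivation. Substituting $sc^i$ and $s\psi$ yields two terms, both quadratic in the ghosts. Using $c^ic^j=-c^jc^i$ the second term can be antisymmetrized into $\tfrac12 c^ic^j[\lambda_i,\lambda_j](\psi)$, and the representation property then converts $[\lambda_i,\lambda_j]$ into $if_{ij}^{\ k}\lambda_k$. The resulting coefficient exactly cancels the first term.

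For the gauge field, the plan is to expand $s^2A_\mu^k = \partial_\mu(sc^k) - f_{ij}^{\ k}\bigl[(sc^i)A_\mu^j - c^i\,sA_\mu^j\bigr]$ and sort the outcome by the type of tensor structure that appears. The $\partial_\mu c$ pieces (one from $\partial_\mu(sc^k)$ and one from $c^i\partial_\mu c^j$ through $sA_\mu^j$) combine, after using the antisymmetry of $f$ and of the ghost product, into a single expression whose coefficient vanishes by symmetry considerations alone. The remaining contributions are all of the form (structure constant)$\times$(structure constant)$\times c^ic^jA_\mu^l$; collecting them and relabeling dummy indices reveals the cyclic sum $f_{ij}^{\ m}f_{ml}^{\ k}+\text{(cyclic)}$, which vanishes by the Jacobi identity.

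The main obstacle is bookkeeping: one must correctly propagate the odd-derivation signs (since $c^i$, $\bar c_i$, $s$ all carry nontrivial parity in the grading fixed earlier in the chapter) and relabel dummy indices in a way that exposes the Jacobi combination without hiding factors of $\tfrac12$ or stray minus signs. A secondary subtlety is that the derivative $\partial_\mu$ commutes with $s$ and with the ghosts up to sign in the graded sense, which must be checked to make the $\partial_\mu c^m\cdot c^n$ terms collapse. Once those conventions are fixed, both cancellations are algebraic and short; conceptually, $s^2=0$ on matter is the statement that $\rho$ is a representation, and on the connection it is the Jacobi identity for $\mathfrak g$.
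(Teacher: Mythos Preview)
Your proposal is correct and follows essentially the same route as the paper: direct computation, using the representation property $[\lambda_i,\lambda_j]\propto f_{ij}^{\ k}\lambda_k$ for the matter field and the Jacobi identity for the gauge field. One remark: for $A_\mu^k$ the paper argues that the term $f_{ij}^{\ k}c^i f_{lm}^{\ j}c^l A_\mu^m$ vanishes by itself (claiming $c^lA_\mu^m$ is symmetric in $l,m$), and then applies Jacobi only to the remaining $\tfrac12 f_{ij}^{\ k}f_{lm}^{\ i}c^lc^mA_\mu^j$ piece; your plan of collecting \emph{all} $ccA$ terms together and cancelling them through the cyclic Jacobi sum is the cleaner and safer way to organize that step.
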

\begin{proof}
By direct computation:
\begin{eqnarray*}
 \frac{1}{i}s^2\psi & = & (sc^i)\lambda_i(\psi) - c^is\left(\lambda_i(\psi)\right)\\
                    & = & -\frac{1}{2}f_{ij}^kc^ic^j\lambda_k(\psi) - c^kc^i\lambda_i\left(\lambda_k(\psi)\right).
\end{eqnarray*}
Using the fact that $\lambda_k$ is a representation of the algebra and therefore
\begin{equation*}
 \rho(f_{ik}^mX_m)(\psi) = f_{ik}^m\lambda^m(\psi) = \lambda_i\left(\lambda_k(\psi)\right) - \lambda_k\left(\lambda_i(\psi)\right)
\end{equation*}
we got, using the antisymmetry of the ghosts:
\begin{equation*}
 c^kc^i\lambda_i\left(\lambda_k(\psi)\right) = -\frac{1}{2}c^ic^jf_{ij}^m\lambda_m(\psi).
\end{equation*}
Hence $s^2\psi=0$. For the $A_{\mu}^k$ the computation is quite similar but we have to use Jacobi identity. First, we get:
\begin{equation*}
 s^2A_{\mu}^k = -\frac{1}{2}f_{ij}^k\partial_{\mu}(c^ic^j)+\frac{1}{2}f_{ij}^kf_{lm}^ic^lc^mA_{\mu}^j + f_{ij}^kc^i\left[\partial_{\mu}c^j-f_{lm}^jc^lA^m_{\mu}\right].
\end{equation*}
$\partial_{\mu}$ being a regular derivative we get
\begin{equation*}
 -\frac{1}{2}f_{ij}^k\partial_{\mu}(c^ic^j) = -f_{ij}^kc^i\partial_{\mu}c^j.
\end{equation*}
Moreover $f_{lm}^j$ is antisymmetric under the exchange of $l$ and $m$, while $c^lA^m_{\mu}$ is symmetric, since $A_{\mu}^m$ is a bosonic field. Hence $f_{lm}^jc^lA^m_{\mu} = 0$. We are left 
with 
\begin{equation*}
 s^2A_{\mu}^k = \frac{1}{2}f_{ij}^kf_{lm}^ic^lc^mA_{\mu}^j.
\end{equation*}
Let us notice than $c^lc^mA_{\mu}^j$ is symmetric under the exchange of $l$ and $j$ and under the exchange of $m$ and $j$. Then the Jacobi identity provides
\begin{equation*}
 f_{ij}^kf_{lm} = -f_{il}^kf_{mj}-f_{im}^kf_{jl}.
\end{equation*}
The two last terms are antisymmetric under the exchange of $m$ and $j$ and $j$ and $l$ respectively. Thus they both cancel when contracted with $c^lc^mA_{\mu}^j$ and $s^2A_{\mu}^k=0$.
\end{proof}
Finally, we define the action of the BRST operator $Q$ on $A_{\mu}^k$ and $\psi$ as:
\begin{subequations}
 \begin{align}
  & \{Q,A_{\mu}^k\} = sA_{\mu}^k \\
  & \{Q,\psi\} = s\psi.
 \end{align}
\end{subequations}
This is how we will define the action of $Q$ on the creation operators.

\subsection{Field content of QED}

In QED, the gauge group is $U(1)$, and its Lie algebra is abelian and has only one generator. Hence all the structure constants shall be set to $0$ in this part and the generator index can 
be dropped: we will write $c$ the ghost and $\bar{c}$ the antighost. A dramatic simplification occuring from the vanishing structure constants is $sc=0$.

We follow once again the presentation of \cite{Weinberg96b}. The idea is that we can expand the fields in normal modes (we work in $d=4$).
\begin{subequations}
 \begin{align}
  & A_{\mu}(x) = \left(\frac{1}{\sqrt{2\pi}}\right)^3\int\frac{\d^3p}{\sqrt{2p_0}}\left[a_{\mu}(p)e^{ip.x} + a_{\mu}^*(p)e^{-ip.x}\right] \\
  & c(x) = \left(\frac{1}{\sqrt{2\pi}}\right)^3\int\frac{\d^3p}{\sqrt{2p_0}}\left[c(p)e^{ip.x} + c^*(p)e^{-ip.x}\right] \\
  & \bar{c}(x) = \left(\frac{1}{\sqrt{2\pi}}\right)^3\int\frac{\d^3p}{\sqrt{2p_0}}\left[\bar{c}(p)e^{ip.x} + \bar{c}^*(p)e^{-ip.x}\right].
 \end{align}
\end{subequations}
$a^*_{\mu}$, $c^*$ and $\bar{c}^*$ are the creation operators and $a_{\mu}$, $c$ and $\bar{c}$ are the annihilation ones. We use the same letter for the operators and the fields for the 
ghost and the antighost, but the distinction will be clear from the context. One can similarly expand the $\lambda$ and the fermion $\psi$. But we want to check here that the BRST formalism 
allows us to refind the usual rules of QED and will thus focus our efforts on the gauge field. First we have to find how $Q$ acts on the various creation and annihilation operators. A direct 
computation gives
\begin{align}
 & \{Q,A_{\mu}\} = \partial_{\mu}c \nonumber \\
\Leftrightarrow & \int\frac{\d^3p}{\sqrt{2p_0}}\left[\{Q,a_{\mu}(p)\}e^{ip.x} + \{Q,a_{\mu}^*(p)\}e^{-ip.x}\right] = \int\frac{\d^3p}{\sqrt{2p_0}}\left[ip_{\mu}c(p)e^{ip.x} - ip_{\mu}c^*(p)e^{-ip.x}\right] \nonumber \\
\Leftrightarrow & \begin{cases}
                 \{Q,a_{\mu}(p)\} = ip_{\mu}c(p) \\
                 \{Q,a_{\mu}^*(p)\} = -ip_{\mu}c^*(p).
                \end{cases}
\end{align}
Similarly we find $\{Q,c(p)\}=0$, $\{Q,c^*p\}=0$ (this being coherent with the previous result and $s^2=0$) and
\begin{align}
 \begin{cases}
  & \{Q,\bar{c}^*(p)\} = l^*(p) \\
  & \{Q,\bar{c}(p)\} = l(p)
 \end{cases}
\end{align}
with $l^*(p)$ and $l(p)$ respectively the creation and annihilation operators of the $\lambda$ field.
\begin{thm}
 The Fock space of physical states of QED is ghost and antighost free.
\end{thm}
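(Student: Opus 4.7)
The plan is to use the cohomological characterisation of physical states established just before the theorem: physical states represent classes in $\ker Q / \operatorname{im} Q$, so it suffices to show that any $Q$-closed state with a nontrivial ghost or antighost excitation is $Q$-exact, while any state built from antighost creation operators alone fails to be $Q$-closed. The key input is the explicit action of $Q$ on the mode operators derived in the previous subsection, which exhibits the standard BRST doublet pattern
\[
\{Q,a_\mu^{\ast}(p)\} = -ip_\mu c^{\ast}(p), \qquad \{Q,c^{\ast}(p)\}=0, \qquad \{Q,\bar c^{\ast}(p)\} = l^{\ast}(p), \qquad \{Q,l^{\ast}(p)\}=0.
\]

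First I would dispose of the one-particle sector by direct inspection. For the ghost, $c^{\ast}(p)|0\rangle$ is automatically $Q$-closed, but for any direction $\mu$ with $p_\mu\neq 0$ it equals $(i/p_\mu)\,Q\,a_\mu^{\ast}(p)|0\rangle$, so it represents the zero class. For the antighost, $Q\bar c^{\ast}(p)|0\rangle = l^{\ast}(p)|0\rangle\neq 0$, so $\bar c^{\ast}(p)|0\rangle$ is not $Q$-closed and hence not physical. The same relations show that an isolated $l^{\ast}(p)|0\rangle$ is $Q$-closed but equals $Q\bar c^{\ast}(p)|0\rangle$, so it too is $Q$-exact; this trivial remark is needed to propagate the argument past the free action of $Q$ on antighosts.

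Second, I would extend this to the full Fock space by constructing a contracting homotopy along Kugo--Ojima lines. Define the ghost-sector counting operator $N = N_c + N_{\bar c} + N_L + N_{\rm tm}$, where $N_L$ and $N_{\rm tm}$ count longitudinal and timelike photon excitations; I would write down an operator $K$, built from the pairings above (schematically $K$ contains terms like $\tfrac{i}{p_\mu}\,a_\mu^{\ast}(p)c(p)$ on the ghost side and $\bar c^{\ast}(p)(l(p)/\|l\|^2)$ on the antighost side), such that
\[
\{Q,K\} = N.
\]
Restricting to an eigenspace of $N$ with eigenvalue $n>0$, any $Q$-closed state $|\psi\rangle$ there satisfies $|\psi\rangle = \tfrac{1}{n}\{Q,K\}|\psi\rangle = \tfrac{1}{n}\,QK|\psi\rangle$, hence $|\psi\rangle$ is $Q$-exact and vanishes in cohomology. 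In particular no physical class admits a representative with $N_c>0$ or $N_{\bar c}>0$.

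The main obstacle is writing down $K$ rigorously so that $\{Q,K\}$ is exactly the number operator on the full quartet (ghost, antighost, longitudinal and timelike photon modes), rather than only on the ghost-antighost pair, because the pairing $\{Q,a_\mu^{\ast}\}\propto p_\mu c^{\ast}$ only detects the longitudinal part of the photon and one must use the indefinite metric induced by the timelike polarization together with $l^{\ast}$ to quartet off the remaining non-transverse modes. Once $K$ is established, the conclusion that the physical Fock space is generated purely by the matter fields $\psi,\bar\psi$ and the transverse photon polarizations, and in particular is ghost- and antighost-free, follows immediately from the eigenspace argument above.
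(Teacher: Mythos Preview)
Your one-particle computations coincide exactly with the paper's: the paper disposes of $\bar c^{\ast}(p)|\psi\rangle$ by observing it is not $Q$-closed (since $Q\bar c^{\ast}|\psi\rangle=l^{\ast}|\psi\rangle\neq 0$), and disposes of $c^{\ast}(p)|\psi\rangle$ by writing it as $Q$ acting on a photon creation operator contracted with a vector $v^\mu$ with $p_\mu v^\mu\neq 0$. Where you diverge is in the extension to the full Fock space. The paper does \emph{not} build a contracting homotopy $K$ with $\{Q,K\}=N$; instead it runs a short induction: starting from the vacuum, any physical state built without ghosts or antighosts stays physical, and adjoining a single $c^{\ast}$ or $\bar c^{\ast}$ to such a state yields either a $Q$-exact or a non-closed vector. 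This is much lighter than the Kugo--Ojima quartet mechanism you propose, but it is also less watertight: it does not explicitly address linear combinations or states with several ghost/antighost excitations, which your homotopy argument handles uniformly via the eigenspace trick $|\psi\rangle=\tfrac{1}{n}QK|\psi\rangle$. Your approach is the standard rigorous treatment and delivers more (it also kills the longitudinal/timelike photon modes in one stroke), at the cost of the construction of $K$ that you correctly flag as the real work. For the purposes of this appendix, the paper's inductive sketch is all that is given.
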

\begin{proof}
We can prove it by induction. The vacuum state $|0>$ is obviously ghost and antighost free. Let $|\psi>$ be any physical states (therefore $Q|\psi>=0$) without ghosts nor antighost. Then, let 
$|\bar{c},\psi>=\bar{c}^*|\psi>$ be the state with one more antighost. We have:
\begin{equation}
 Q\bar{c}^*|\psi> = l^*|\psi> \neq 0.
\end{equation}
Hence any state with one antighost does not lies in the cohomology of $Q$ and thus is not a physical state. For the ghost, the same argument does not apply but we can notice
\begin{equation*}
 Qa_{\mu}^*|\psi> = -ip_{\mu}c^*|\psi>.
\end{equation*}
Hence, let $v^{\mu}$ be a vector such that $p_{\mu}v^{\mu}\neq0$. We get:
\begin{equation*}
 c^*|\psi> = Q\left(\frac{i}{p_{\mu}v^{\mu}}a_{\mu}^*|\psi>\right).
\end{equation*}
Hence $c^*|\psi>$ is $Q$-exact and therefore equivalent to zero.
\end{proof}
Finally, let us notice that a physical photon is characterized by a impulsion and a polarization. Hence its creation operator can be decomposed in
\begin{equation}
 a_{\mu}^*(p) = e_{\mu}a^*(p).
\end{equation}
With $a^*(p)$ the operator that creates a photon of impulsion $p$ and $e_{\mu}$ the polarization vector that carries the polarization of the created photon: $e_{\mu}e^{\mu}=1$. Using the fact 
that $\{Q,.\}$ is a graded derivation we get
\begin{equation}
 \{Q,a_{\mu}^*(p)\} = -ip_{\mu}c^*(p) = e_{\mu}\{Q,a^*(p)\} \Leftrightarrow \{Q,a^*(p)\} = -ip_{\mu}e^{\mu}.
\end{equation}
Now, let once again $|\psi>$ be any physical state. The state with one more photon is now $|e,\psi>:=e_{\mu}a^*(p)|\psi>$. It is a physical state if
\begin{eqnarray*}
 0 & = & Q|e,\psi> \\
   & = & e_{\mu}\{Q,a^*(p)\}|\psi> \\
   & = & -ie_{\mu}p_{\nu}e^{\nu}|\psi>.
\end{eqnarray*}
Hence, the state with one added photon is physical only if $p_{\mu}e^{\mu}=0$. This is the usual polarization condition coming from Maxwell's theory.

\chapter{From BV to BRST}

This appendix is a brief set of notes of a presentation given by Serguei Barannikov in December 2014. He presented how to see the BV and the BRST formalism as dual to each other. Here we will just 
stick to the ideas and not give many proofs.

\section{Non-degenerate polyvectors}

Let $M$ be a (super)manifold and $f:M\longrightarrow\mathbb{R}$ be a function that have a critical point, i.e. a point $x$ where $df(x)=0$. Then this point is said to be non-degenerate if and only if 
the determinant of the hessian matrix $H_{ij}(f)=\frac{\partial^2f(x)}{\partial x^i\partial x^j}$is non-zero. Physically, this means that $x$ is an isolated extremum of $f$. Now, if we see
$f$ as a function of $\Pi T^*M$ which does not depend of the odd coordinates, we trivially have $\{f,f\}=0$ since $\partial_{X^*_i}f=0$, with the brackets being the Schouten--Nijenhuis brackets. 

Now, let us assume that $M$ has a symplectic structure and take $\gamma$ a Poisson bivector of $M$. Then $\{\gamma,\gamma\}=0$, once again with the Schouten--Nijenhuis 
brackets. Then $\gamma$ induces a Poisson structure over $\mathcal{C}^{\infty}(M)$ by $\{f,g\}_P=\gamma(df\wedge dg)$. The important thing is that we can see $\gamma$ as a function of $\Pi T^*M$.
If we expand $\gamma$ around a critical point:
\begin{equation}
 \gamma = \gamma^{ij}X^*_iX^*_j + \gamma^{ij}_kX^kX^*_iX^*_j+\dots
\end{equation}
we have again a notion of non-degeneration: $\gamma$ is said to be non-degenerate if det$|\gamma^{ij}|\neq0$. Our goal is to unify this two cases.

\begin{defi}
 A solution $S = S_2+S_3+\dots$ ($S_2$ the quadratic part of $S$, $S_3$ its cubic part and so on) of the CME will is non-degenerate if
 \begin{equation*}
  \text{Ker}(Q^S_{lin})=\text{Im}(Q^S_{lin})
 \end{equation*}
with $Q^S_{lin}:=\{S_2,-\}$.
\end{defi}
This definition is justified by the following lemma
\begin{lemm}
 Functions and bivectors non-degenerate in the sense of the above definition are non-degenerate in the usual sense.
\end{lemm}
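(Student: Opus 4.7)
The plan is to unpack the definition in each of the two special cases and to identify $Q^S_{\mathrm{lin}}$ with a classical differential whose acyclicity is equivalent to the usual notion of non-degeneracy. Throughout I assume that the point of interest has been translated to the origin.

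First consider a function $f$ viewed as a $0$-polyvector. Then $S_2 = \frac{1}{2} H_{ij} x^i x^j$ with $H_{ij}$ the Hessian at the origin. Using the formula \eqref{Sch-Nij} for the Schouten--Nijenhuis bracket, and the fact that $S_2$ has no $x^*$ dependence, I would get
\begin{equation*}
  Q^f_{\mathrm{lin}} = \{S_2, -\} = H_{ij}\, x^j\, \frac{\partial}{\partial x^*_i},
\end{equation*}
a degree $-1$ operator on the complex of polyvectors. One recognises this immediately as the Koszul differential associated to the sequence of linear forms $\sigma_i := H_{ij} x^j$. A standard result of homological algebra states that the Koszul complex of such a sequence is acyclic (in positive polyvector degree) if and only if $(\sigma_1,\dots,\sigma_n)$ is a regular sequence, which happens precisely when the $\sigma_i$ are linearly independent, i.e.\ when $\det H \neq 0$. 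Under the convention that $\mathrm{Ker} = \mathrm{Im}$ refers to the vanishing of reduced cohomology, this yields the equivalence. For the converse direction it is more elegant to argue directly: if $v$ is a null vector of $H$, then $v^i x^*_i$ lies in $\mathrm{Ker}(Q^f_{\mathrm{lin}})$ but any putative preimage $w^{ij}(x)\, x^*_i x^*_j$ produces a 1-polyvector whose coefficients belong to the ideal $(x)$, so $v^i x^*_i$ is not in the image, and $S$ fails to be non-degenerate.

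The bivector case is dual. For $\gamma = \frac{1}{2}\gamma^{ij}\, x^*_i x^*_j$ one has, by the same computation,
\begin{equation*}
  Q^\gamma_{\mathrm{lin}} = \{S_2,-\} = -\,\gamma^{ij}\, x^*_j\, \frac{\partial}{\partial x^i},
\end{equation*}
now a degree $+1$ operator. Via the decalage isomorphism that identifies polyvectors with forms, this is (up to an overall sign) the pullback of the de Rham differential by the musical map $\gamma^{\flat}: T^*M \to TM$ induced by the bivector. When $\gamma^{ij}$ is non-degenerate, $\gamma^{\flat}$ is an isomorphism, so $Q^\gamma_{\mathrm{lin}}$ is conjugate to the de Rham differential on a contractible neighbourhood of the origin and its positive-degree cohomology vanishes by the Poincar\'e lemma (applied in the polynomial or formal category, as appropriate to the ambient hypotheses). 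For the converse, a covector $v_i$ in the kernel of $\gamma^{\flat}$ gives a linear function $v_i x^i$ which is killed by $Q^\gamma_{\mathrm{lin}}$ but is clearly not in its image, since the image of $Q^\gamma_{\mathrm{lin}}$ on $1$-polyvectors lands in the ideal generated by the differentials $dx^j$ contracted with $\gamma$, and cannot produce a non-vanishing linear function when $\gamma$ has a kernel.

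The main obstacle I expect is not the homological content, which is classical in both cases, but rather bookkeeping of conventions: the precise meaning of $\mathrm{Ker} = \mathrm{Im}$ (reduced versus unreduced, restricted to a fixed polyvector degree, or formulated in terms of the tangent cohomology at the critical locus), the sign conventions for $\{S_2,-\}$ arising from the choice of left versus right derivatives in \eqref{Sch-Nij}, and the category (smooth, polynomial, or formal germs) in which the Koszul/Poincar\'e results are invoked. Once these conventions are fixed to match Barannikov's setup, the two halves of the lemma reduce to the two well-known facts recalled above, one about regular sequences and one about the de Rham complex of a symplectic chart.
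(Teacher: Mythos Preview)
Your proposal is correct. The paper, however, takes a much more bare-hands route: it simply evaluates $Q^f_{\mathrm{lin}}$ on the linear generators, recording $Q^f_{\mathrm{lin}}(x^i)=0$ and $Q^f_{\mathrm{lin}}(x^*_j)=f_{ij}x^i$, and reads off that $\mathrm{Ker}=\mathrm{Im}$ forces the Hessian to have no zero eigenvalue; the bivector case is then declared identical with the roles of $x^i$ and $x^*_j$ exchanged. No Koszul resolution, no Poincar\'e lemma.

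Your approach---identifying $Q^f_{\mathrm{lin}}$ with the Koszul differential of the linear forms $H_{ij}x^j$, and $Q^\gamma_{\mathrm{lin}}$ with (a conjugate of) the de Rham differential---actually establishes the stronger fact that the cohomology vanishes in \emph{all} polyvector degrees, not only on the linear span of the generators. This is exactly what the subsequent Darboux--Morse argument requires, since there one must conclude that $S_n$, of arbitrary polynomial degree, is $Q^S_{\mathrm{lin}}$-exact. So your version proves the lemma in the strength in which it is later used, while the paper's computation on generators is really the germ of that argument. The paper's route buys brevity and sidesteps the sign and convention bookkeeping you rightly flag; yours buys completeness and the explicit link to the classical homological statements.
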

\begin{proof}
 For a function $f=f_{ij}x^ix^j+\dots$ we have
 \begin{eqnarray*}
  Q^f_{lin}(x^i) & = & 0 \\
  Q^f_{lin}(x^*_j) & = & f_{ij}x^i.
 \end{eqnarray*}
Then Ker$(Q^f_{lin})=$Im$(Q^f_{lin})\Leftrightarrow\forall j,f_{ij}x^i\neq0$. This is equivalent to say that the hessian of $f$ does not admit $0$ as an eigenvalue and therefore that $f$ is not 
degenerate in the usual sense.

For a polyvector, the proof is exactly the same with the roles of $x^i$ and $x^*_j$ exchanged since $\gamma=\gamma^{ij}x^*_ix^*_j+\dots$
\end{proof}

Therefore we have extended the usual notion of non-degeneracy to polyvectors that satisfy the classical BV equation. 

To summarize, if we have a solution of classical BV, it is said to be non-degenerate if its quadratic part has non cohomology. In physics, the gauge invariance makes a degeneration of the action, 
which is removed by gauge-fixing it with the Fadeev--Popov formula. Hence we see that if we work with a gauged-fixed action that obey the classical BV action (and it does by construction of the 
action) it shall has non cohomology. We will now see that this cohomological point of view allows to compute solutions of the CME, or even its quantum version.

\section{Darboux--Morse theorem}

The general ideas behind the Morse theory is to study the topology of a manifold by studying the differential functions on that manifold. There is many
powerful applications of this theory, but we will here be interested by the well-known Morse lemma:
\begin{lemm}
 Let $f$ be a function $\mathcal{C}^\infty$ on a manifold of dimension $n$. Let $m$ be a critical non-degenerate point of $f$. Then, there is a local coordinate 
 system $x_1, \dots, x_n$ having $m$ as its center and such that $f$ written in this coordinate system is
 \begin{equation}
  f(x)=f(m)-x_1^2-\cdots -x_k^2 +x_{k+1}^2+\cdots +x_n^2.
 \end{equation}
\end{lemm}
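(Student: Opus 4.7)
The plan is to prove this classical Morse lemma by combining Hadamard's lemma (applied twice) with a smooth version of Sylvester's inertia reduction, i.e.\ an inductive completion of squares.

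First I would reduce to the case $m=0$ and $f(m)=0$ by a translation in a chosen chart near $m$, which is harmless since the statement is purely local. The first key step is then Hadamard's lemma applied twice: since $f(0)=0$ and $df(0)=0$, in any star-shaped neighborhood of $0$ one writes
$$f(x) = \int_0^1 \frac{d}{dt} f(tx)\, dt = \sum_i x_i g_i(x), \qquad g_i(x) := \int_0^1 \partial_i f(tx)\, dt,$$
with $g_i$ smooth and $g_i(0) = \partial_i f(0)=0$. Applying the same trick to each $g_i$ yields
$$f(x) = \sum_{i,j} x_i x_j\, h_{ij}(x),$$
with smooth $h_{ij}$ which, after symmetrization, satisfy $h_{ij}(0) = \tfrac{1}{2}\partial_i\partial_j f(0)$. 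By hypothesis this matrix is non-singular.

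The main step is a smooth inductive diagonalization. After a preliminary linear change of coordinates I may assume $h_{11}(0)\ne 0$; by continuity $h_{11}$ keeps constant sign on a neighborhood of $0$, so $\sqrt{|h_{11}|}$ is smooth there. Completing the square gives
$$f(x) = \varepsilon_1 \Bigl(\sqrt{|h_{11}|}\,x_1 + \sum_{j>1}\frac{h_{1j}}{\sqrt{|h_{11}|}}\,x_j\Bigr)^{\!2} + \sum_{i,j\ge 2} \tilde h_{ij}(x)\, x_i x_j,$$
with $\varepsilon_1 = \mathrm{sgn}\, h_{11}(0)$ and with new smooth coefficients $\tilde h_{ij}$ obtained by subtracting the cross-terms. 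Defining $y_1$ to be the expression in parentheses and $y_j = x_j$ for $j\ge 2$ produces a smooth change of coordinates whose Jacobian at $0$ is upper-triangular with nonzero diagonal entry $\sqrt{|h_{11}(0)|}$, so the inverse function theorem gives a bona fide local diffeomorphism. Iterating the procedure on the variables $(y_2,\ldots,y_n)$, one reaches after finitely many steps
$$f(y) = \varepsilon_1 y_1^2 + \cdots + \varepsilon_n y_n^2, \qquad \varepsilon_i = \pm 1,$$
and a final permutation of coordinates groups the negative signs first, yielding the announced normal form.

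The main obstacle is the inductive step: one must verify that at each stage the residual symmetric matrix $\tilde h_{ij}(0)$ (for $i,j\ge 2$) is still non-degenerate, so that the diagonalization can be continued. This is the smooth analogue of the fact that a non-degenerate symmetric bilinear form restricted to a subspace complementary to a non-isotropic vector stays non-degenerate; concretely, the completion of squares corresponds exactly to a Gauss elimination on the Hessian, which preserves non-degeneracy of the remaining block. A secondary technical point is that each of the finitely many coordinate changes is only defined on some neighborhood of $0$ — one just takes the intersection, which remains a neighborhood since the number of steps is finite. The integer $k$ appearing in the statement is then identified, independently of the coordinates, as the index of the Hessian, which is the content of Sylvester's law of inertia.
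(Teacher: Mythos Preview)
Your proof is correct and is the standard textbook argument (Hadamard's lemma twice, then inductive completion of squares with the inverse function theorem). The paper itself does not actually prove this lemma: it only records the one-line hint ``expand $f$ around $m$ and use its non-degeneracy to exhibit a local coordinate system where the lemma holds,'' which is precisely the approach you have carried out in full detail.
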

The idea of the proof is to expand $f$ around $m$ and to use its non-degeneracy to exhibit a local coordinate system where the lemma holds. 

We have extended the notion of non-degeneracy to polyvectors obeying the CME, therefore we are expecting this Morse's lemma to have a generalization. Moreover, this generalization will take 
care of the symplectic structure of $\Pi T^*M$.
\begin{thm}[Darboux--Morse \cite{Va96}]
 Let $S$ be a solution of the CME having a non-degenerate critical point at $x=x^*=0$. Then it exists a symplectomorphism that reduces $S$ to its quadratic part.
\end{thm}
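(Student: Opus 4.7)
The plan is to combine Moser's homotopy trick with the acyclicity of $Q^S_{\text{lin}}$ that is forced by the non-degeneracy hypothesis. Expand $S = \sum_{k\geq 2} S_k$ as a formal power series in the coordinates $(x^i, x^*_i)$ centered at the critical point, where $S_k$ is the polynomial homogeneous of total degree $k$. The Schouten--Nijenhuis bracket lowers total polynomial degree by $2$, so separating $\{S,S\}=0$ by homogeneous degree yields $\{S_2,S_2\}=0$ at degree $2$: this says $Q^{S_2}_{\text{lin}} := \{S_2, -\}$ is a differential. By the very definition, $Q^{S_2}_{\text{lin}}$ coincides with $Q^S_{\text{lin}}$, and the non-degeneracy hypothesis asserts this differential is acyclic.

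I would then proceed by induction on polynomial degree. Suppose a finite chain of symplectomorphisms has reduced $S$ to $\tilde S = S_2 + \sum_{k \geq N} \tilde S_k$, still a solution of the CME (the case $N=3$ being the initial situation). Isolating the lowest non-trivial homogeneous component of $\{\tilde S, \tilde S\} = 0$, namely the degree-$N$ piece $2\{S_2, \tilde S_N\} = 0$, shows that $\tilde S_N$ is a $Q^{S_2}_{\text{lin}}$-cocycle. Acyclicity then provides a polynomial $H_N$ of degree $N$ with $\tilde S_N = \{S_2, H_N\}$.

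Next I would use $H_N$ to produce a Hamiltonian symplectomorphism $\varphi^{(N)} = \exp(-\{H_N, -\})$. This is a well-defined formal symplectomorphism because bracketing with a degree-$N$ element raises total degree by $N-2 \geq 1$, so the exponential series truncates at each finite order. A direct computation gives
\begin{equation*}
  (\varphi^{(N)})^*\tilde S = S_2 + \tilde S_N + \{-H_N, S_2\} + O(\text{degree} > N) = S_2 + O(\text{degree} > N),
\end{equation*}
so the degree-$N$ piece is eliminated while $S_2$ is untouched, and the resulting polyvector still solves the CME since symplectomorphisms preserve $\{-,-\}$. Iterating produces a sequence $\varphi^{(3)}, \varphi^{(4)}, \ldots$, each modifying only terms of degree $\geq N$; the infinite composition $\varphi = \cdots \varphi^{(5)}\varphi^{(4)}\varphi^{(3)}$ is thus well defined as a formal symplectomorphism (only finitely many factors contribute at each polynomial degree), and by construction $\varphi^* S = S_2$.

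The hard part of the argument is, as always with Morse/Darboux-type statements, the passage from a formal to a genuine (smooth or analytic) symplectomorphism. One route is to replace the step-by-step elimination by a global Moser homotopy $S_t = S_2 + tR$ with $R = S - S_2$, for which the vanishing of $\frac{d}{dt}\varphi_t^*S_t$ reduces to solving $\{H_t, S_t\} = -R$ with smooth dependence on $t$; this requires a parameter-dependent version of the acyclicity of $Q^{S_t}_{\text{lin}}$ in a neighborhood of the origin, obtained by treating the $t$-dependent part as a perturbation of $Q^{S_2}_{\text{lin}}$ and invoking a standard homological perturbation argument. For the purposes of BV theory, where one usually works in formal power series around a critical locus, the inductive argument above already suffices.
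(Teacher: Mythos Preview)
Your argument is correct and follows exactly the strategy of the paper's \emph{Hint of proof}: use the grading of the CME to show that the leading correction to $S_2$ is a $Q^{S_2}_{\text{lin}}$-cocycle, invoke non-degeneracy to write it as $\{S_2,X\}$, exponentiate $\{X,-\}$ to a symplectomorphism that kills that term, and iterate. Your bottom-up induction (eliminate $S_3$, then $S_4$, \dots) is the standard and correct ordering; the paper phrases the same mechanism as reducing $S_2+\cdots+S_n$ to $S'_2+\cdots+S'_{n-1}$, which is only literally consistent when $S$ is polynomial, so your version is the cleaner one.
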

\emph{Hint of proof.} Decompose $S$ with respect to the grading: $S=S_2+\dots+S_n$. Using the grading on the CME we then have $\{S_2,S_n\}=0$. Since $S$ is non-degenerate, 
$S_n$ is a coboundary for $S_2$: $S_n=\{S_2,X\}$ for some $X$. Studying the possible $X$ give a symplectomorphism such $S=S'_2+\dots+S'_{n-1}$. Then we can iterate the procedure.
\begin{flushright}
 $\blacksquare$
\end{flushright}
This theorem will turn to be essential in the next steps.

\section{Quantum BV}

For a quantum version of BV, we take $S$ to be a formal series of powers of $\hbar$:
\begin{equation}
 S = S^0+2i\hbar S^1+\dots
\end{equation}
with $S^0$ non-degenerate. Moreover, let us assume that $S$ is a solution of the quantum master equation (QME):
\begin{equation}
 \{S,S\}-i\hbar\Delta S=0,
\end{equation}
with $\Delta$ the BV laplacian which is a graded derivation for the Schouten--Nijenhuis bracket, as already mentioned \eqref{BV_prod}. Then, the grading induced by $\hbar$ imposes that $S_0$ is 
a solution of the classical master equation, and
\begin{equation} \label{equ_S1}
 \Delta S^0 = \{S^0,S^1\}.
\end{equation}
The fact that $S^0$ is non-degenerate allows to show that a solution $S^1$ of the above equation exists. Indeed, since $S^0$ is a solution of the CME 
and since $\Delta$ is a derivation for the Schouten--Nijenhuis bracket, we have $\{S^0,\Delta S^0\}=0$: $\Delta S^0$ is a cocycle of $S^0$. Since $S^0$ is 
non-degenerate, $\Delta S^0$ has to be a coboundary as well: it exists $S^1$ that satisfies \eqref{equ_S1}.

Actually, one can construct $S^1$ by taking a homotopy operator $H$ (that is: an operator such that $Q_{lin}H+HQ_{lin}=$Id). $H$ is 
the equivalent of a propagator. Thanks to the Darboux--Morse theorem, we can bring $S_0$ to its quadratic part so that $Q_{lin}^{S_0}=\{S_0,-\}$. This is needed since we want to refind the 
classical quadratic action as $\hbar\longrightarrow0$. Applying $Q$ to \eqref{equ_S1} give $Q_{lin}^{S_0}\Delta S^0 = 0$. Now, from this and the definition of $H$ we get
\begin{equation*}
 Q_{lin}^{S_0}H\Delta S^0 = \Delta S^0.
\end{equation*}
Therefore $S^1=H\Delta S^0$, up to a $Q_{lin}^{S_0}$-exact term. Since the physical observables are elements of the cohomology of $Q_{lin}^{S_0}$, we have built $S^1$ in term of $S^0$. The 
difficulty of an explicit computation will lie in the construction of $H$, that has to be performed by looking at the action of $Q_{lin}^{S_0}$ over the fields of the theory. With the same 
strategy, $S$ can be computed to an arbitrary order by iteration.

\section{equivariante cohomology}

Let us take $M$ a manifold, and $\mathfrak{g}$ a Lie algebra acting on $M$:
\begin{equation}
 e_{\alpha}\in\mathfrak{g}\longrightarrow v_{\alpha}= v_{\alpha}^i(x)\frac{\partial}{\partial x^i}\left|[v_{\alpha};v_{\beta}] = \Gamma_{\alpha\beta}^{\gamma}v_{\gamma}\right.
\end{equation}
for $[e_{\alpha};e_{\beta}] = \Gamma_{\alpha\beta}^{\gamma}e_{\gamma}$ (yes, we have changed our notations). Geometrically speaking, we are working on a manifold $\tilde{M}=M\times\Pi\mathfrak{g}$ 
with coordinates $(x^i,c^{\alpha})$. Now, let us define an odd vector field on $\tilde{M}$:
\begin{equation}
 Q = c^{\alpha}v_{\alpha}^i(x)\frac{\partial}{\partial x^i} + \frac{1}{2}\Gamma_{\alpha\beta}^{\gamma}c^{\alpha}c^{\beta}\frac{\partial}{\partial c^{\gamma}} = Q_1+Q_2.
\end{equation}
We have $Q^2=0$ since $\mathfrak{g}$ is a Lie algebra and since it admit a representation on $M$: $Q$ is a homological vector field. In the case of the BV formalism, we can take for $\mathfrak{g}$ 
a structure more general than a Lie algebra. From the manifold $\tilde{M}$ and the homological odd vector field $Q$, there are two natural structures that can be built: the space of differential 
forms $\Pi T^*\tilde{M}$ on $\tilde{M}$ and the space of polyvector fields $\Pi T\tilde{M}$ on $\tilde{M}$. The first will be called the BRST model, and the second the BV formalism. Hence, within 
this approach, BRST and BV formalisms are somehow dual.

\subsection{Polyvectors: BV formalism}

First, let $v$ be a vector field on $\tilde{M}$. In a given coordinate system:
\begin{equation}
 v(x,c) = v_1^i(x,c)\frac{\partial}{\partial x^i} + v_2^{\alpha}(x,c)\frac{\partial}{\partial c^{\alpha}}.
\end{equation}
We define its symbol to be
\begin{equation}
 S_v = v_1^i(x,c)X^*_i + v_1^i(x,c)c^*_{\alpha}
\end{equation}
with $(x^*_i,c^*_{\alpha})$ the coordinates on the fiber, with reverted parity. Let $S_Q$ be the symbol of $Q$:
\begin{equation*}
  S_Q = c^{\alpha}v_{\alpha}^i(x)x^*_i + \frac{1}{2}\Gamma_{\alpha\beta}^{\gamma}c^{\alpha}c^{\beta}c^*_{\alpha}.
\end{equation*}
It has already been shown in the chapter devoted to the BV formalism that $S_Q$ is a solution of the CME.
\begin{lemm}
 For any $f=f(x)$ invariant under the action of $\mathfrak{g}$ (i.e. $\{f,Q_1\}=0$) then $f+S_Q$ is still a solution of the CME
\end{lemm}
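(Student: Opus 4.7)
The plan is to mimic the proof of Proposition \ref{propo531}, which established the same result for $f = S_0$. Indeed, the statement that $S_0 + S_E + S_R$ solves the CME uses only the fact that $S_0$ is a $\mathfrak{g}$-invariant function on $X$, and here we are given a function $f$ on the enlarged manifold $\tilde M$ satisfying the analogous invariance condition $\{f, Q_1\} = 0$ (or equivalently $Q_1 f = 0$). So the strategy is to expand $\{f + S_Q, f+S_Q\}$ by bilinearity and reduce each piece to something already known.

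First I would write, using bilinearity and graded antisymmetry of the Schouten--Nijenhuis bracket,
\begin{equation*}
\{f+S_Q, f+S_Q\} = \{f,f\} + 2\{f, S_Q\} + \{S_Q, S_Q\}.
\end{equation*}
The term $\{S_Q, S_Q\}$ vanishes: this is exactly the statement that $S_Q$ solves the CME, which was the content of Proposition \ref{propo531} (the same computation applies since $S_Q$ is the symbol of the homological vector field $Q = Q_1 + Q_2$, and $Q^2 = 0$ implies $\{S_Q, S_Q\} = 0$ by the general symplectic correspondence already used in the paper).

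Next I would dispatch $\{f,f\}$. Because $f = f(x)$ depends on neither the ghosts $c^\alpha$ nor on any of the conjugate coordinates $x_i^*, c^*_\alpha$, every one of the four partial derivatives appearing in the formula \eqref{Sch-Nij2} for the bracket is either zero or is paired with a zero factor. Therefore $\{f,f\} = 0$.

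The remaining cross term $\{f, S_Q\}$ is the content of the lemma. Applying the explicit formula \eqref{Sch-Nij2}, the only surviving contribution is the one where $\partial f/\partial x^i$ pairs with $\partial S_Q/\partial x^*_i$. From the explicit expression $S_Q = c^\alpha v_\alpha^i(x) x^*_i + \tfrac{1}{2} \Gamma^\gamma_{\alpha\beta} c^\alpha c^\beta c^*_\gamma$, one reads off $\partial S_Q/\partial x^*_i = c^\alpha v_\alpha^i(x)$, so that
\begin{equation*}
\{f, S_Q\} \;=\; \frac{\partial f}{\partial x^i}\, c^\alpha v_\alpha^i(x) \;=\; c^\alpha v_\alpha(f) \;=\; Q_1 f,
\end{equation*}
which vanishes by the invariance hypothesis $\{f, Q_1\} = 0$. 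Putting the three pieces together gives $\{f + S_Q, f + S_Q\} = 0$, as required. The main (minor) obstacle is purely bookkeeping: one must verify carefully that the right/left derivative conventions of \eqref{der_gauche} and the parities of $f$, $c^\alpha$, $x^*_i$ all cooperate to produce the unadorned identity $\{f, S_Q\} = Q_1 f$ with no extra sign. Since $f$ is even and the two terms killed by its being a function on $X$ are precisely the ones that could have carried awkward signs, this verification is routine.
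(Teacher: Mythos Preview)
Your proof is correct and follows essentially the same approach as the paper: expand $\{f+S_Q,f+S_Q\}$ by bilinearity, use $\{f,f\}=0$ (since $f$ depends only on $x$) and $\{S_Q,S_Q\}=0$, and identify the cross term $\{f,S_Q\}$ with $c^\alpha v_\alpha^i\,\partial f/\partial x^i = Q_1 f$, which vanishes by the invariance hypothesis. The paper's proof is terser and does not spell out the sign bookkeeping, but the argument is the same.
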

\begin{proof}
 Since $f$ is a function of the $x$ only, $\{f,f\}=0$ then
 \begin{eqnarray*}
  \{f+S_Q,f+S_Q\} & = & 2\{f,S_Q\} \\
                  & = & 2c^{\alpha}v_{\alpha}^i(x)\frac{\partial f}{\partial x^i} \\
		  & = & \{f,Q_1\} = 0
 \end{eqnarray*}
where we used in the second line that $f$ is a function of $x$ only.
\end{proof}
This lemma generalized the proposition \ref{prop21}. Hence we see that with this approach we built once again the BV procedure.

The key point is that if $f$ is degenerate in the directions generated by the action of $\mathfrak{g}$ (which is the case if $f$ is the classical action of our physical theory), then 
$f(x)+c^{\alpha}v_{\alpha}^i(x)c^*_i$ is not degenerate at all. The $c_{\alpha}^*$ realize Tate's procedure.

\subsection{Differential forms: BRST model}


Now the coordinates are $(x^i,\bar{x}^i,c^{\alpha},\bar{c}^{\alpha})$. $Q$ acts via an operator $L_Q$. On the BRST side, we also have the de Rham differential:
\begin{equation}
 d^{\tilde{M}} = \bar{x}^i\frac{\partial}{\partial x^i} + \bar{c}^{\alpha}\frac{\partial}{\partial c^{\alpha}}
\end{equation}
and we define $I_Q$ the operator of contracting with $L_Q$. Then it can be shown that $L_Q$ has no cohomology and 
\begin{equation}
 \exp(I_Q)d^{\tilde{M}}\exp(-I_Q) = d^{\tilde{M}} + [d^{\tilde{M}};I_Q].
\end{equation}
Now, let us define the equivariant cohomology. We work in $\Omega_M\otimes W(\mathfrak{g})$, with $W(\mathfrak{g})$ the Weyl algebra of $\mathfrak{g}$.
The components of $W(\mathfrak{g})$ are $c^{\alpha}$ and $\bar{c}^{\alpha}$. We define the Weyl differential as
\begin{subequations}
 \begin{eqnarray}
        d_Wc^{\alpha} & = & \bar{c}^{\alpha} - \frac{1}{2}\Gamma_{\beta\gamma}^{\alpha}c^{\beta}c^{\gamma} = d^{\tilde{M}}(c^{\alpha})+L_{Q_2}c^{\alpha} \\
  d_W\bar{c}^{\alpha} & = & \Gamma_{\beta\gamma}^{\alpha}\bar{c}^{\beta}c^{\gamma} = L_{Q_2}\bar{c}^{\alpha}.
 \end{eqnarray}
\end{subequations}
It is then simple to check
\begin{lemm} 
 We can define the cohomology of $d_W$, indeed $d^2_W=0$.
\end{lemm}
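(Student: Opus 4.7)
The plan is to verify $d_W^2 = 0$ by the standard reduction to generators, since $d_W$ is an odd derivation of degree $+1$ on the supercommutative superalgebra $\Omega_M \otimes W(\mathfrak{g})$. Any such squared derivation is an even derivation, so it is determined by its values on the algebra generators. Because $d_W$ acts trivially on the $M$-generators (or, rather, in the same way as $d^{\tilde M}$, for which the statement is classical), the only nontrivial check is on the Weyl generators $c^\alpha$ (odd, degree $1$) and $\bar c^\alpha$ (even, degree $2$). So the whole content of the lemma reduces to showing $d_W^2 c^\alpha = 0$ and $d_W^2 \bar c^\alpha = 0$.

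First I would compute
\begin{equation*}
 d_W^2 c^\alpha = d_W\bar c^\alpha - \tfrac{1}{2}\Gamma^\alpha_{\beta\gamma}\bigl[(d_W c^\beta)c^\gamma - c^\beta(d_W c^\gamma)\bigr],
\end{equation*}
expanding each $d_W c^\bullet$ via its definition. The terms containing $\bar c$ collect into $\Gamma^\alpha_{\beta\gamma}\bar c^\beta c^\gamma - \tfrac12\Gamma^\alpha_{\beta\gamma}\bar c^\beta c^\gamma + \tfrac12\Gamma^\alpha_{\beta\gamma} c^\beta\bar c^\gamma$, and since $\bar c$ is even one has $c^\beta\bar c^\gamma = \bar c^\gamma c^\beta$, which after relabeling and using the antisymmetry of $\Gamma^\alpha_{\beta\gamma}$ in its lower indices makes these three contributions cancel. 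The remaining piece is cubic in the odd $c^\mu$: grouping it as $\tfrac{1}{4}[\Gamma^\alpha_{\beta\rho}\Gamma^\beta_{\mu\nu} - \Gamma^\alpha_{\mu\beta}\Gamma^\beta_{\nu\rho}]\,c^\mu c^\nu c^\rho$ and antisymmetrizing over $(\mu,\nu,\rho)$, this is exactly the Jacobi identity $\sum_{\text{cyc}} \Gamma^\alpha_{\beta\rho}\Gamma^\beta_{\mu\nu} = 0$, and so it vanishes.

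The computation for $d_W^2 \bar c^\alpha = \Gamma^\alpha_{\beta\gamma} d_W(\bar c^\beta c^\gamma)$ proceeds in the same spirit. Expanding and using $d_W\bar c^\beta = \Gamma^\beta_{\delta\epsilon}\bar c^\delta c^\epsilon$ together with $d_W c^\gamma = \bar c^\gamma - \tfrac12\Gamma^\gamma_{\delta\epsilon}c^\delta c^\epsilon$, the purely $\bar c\bar c$ term $\Gamma^\alpha_{\beta\gamma}\bar c^\beta \bar c^\gamma$ vanishes because $\bar c^\beta\bar c^\gamma$ is symmetric while $\Gamma^\alpha_{\beta\gamma}$ is antisymmetric. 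The remaining $\bar c\,c\,c$ terms, once rewritten in the common frame $\bar c^\mu c^\nu c^\rho$ and antisymmetrized in $(\nu,\rho)$, reassemble precisely into $\tfrac{1}{2}[\Gamma^\alpha_{\beta\rho}\Gamma^\beta_{\mu\nu} - \Gamma^\alpha_{\beta\nu}\Gamma^\beta_{\mu\rho} + \Gamma^\alpha_{\beta\mu}\Gamma^\beta_{\nu\rho}]\bar c^\mu c^\nu c^\rho$, which again is killed by Jacobi.

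The only real obstacle is the bookkeeping of signs: one must keep constant track of the parities $|c^\alpha|=1$, $|\bar c^\alpha|=2$, $|d_W|=1$, and apply the graded Leibniz rule $d_W(ab)=(d_Wa)b+(-1)^{|a|}a(d_Wb)$ consistently. Once the dust settles, the vanishing of $d_W^2$ on both generators is nothing more than a repackaging of the Jacobi identity for the structure constants $\Gamma^\gamma_{\alpha\beta}$, which is of course nothing new — it is exactly the same mechanism that makes $Q_{CE}$ square to zero in the Chevalley--Eilenberg setting discussed earlier in the text.
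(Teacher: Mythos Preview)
Your proof is correct and follows exactly the paper's strategy: reduce to checking $d_W^2=0$ on the Weyl generators $c^\alpha$ and $\bar c^\alpha$, kill the $\bar c\,c$ and $\bar c\,\bar c$ terms by parity-plus-antisymmetry of $\Gamma^\alpha_{\beta\gamma}$, and dispatch the cubic-in-$c$ pieces via the Jacobi identity. If anything, your bookkeeping of the parities (in particular identifying $\bar c^\alpha$ as even) is cleaner than the paper's own wording, which speaks of ``anticommutation between the $c$ and the $\bar c$'' where commutation is what is actually needed and used.
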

\begin{proof}
 Since the $\bar{c}$ commute with each other we have
 \begin{equation*}
  d^2_W\bar{c}^{\alpha} = \Gamma_{\beta\gamma}^{\alpha}\Gamma_{\delta\varepsilon}^{\beta}c^{\varepsilon}c^{\gamma}\bar{c}^{\delta} + \frac{1}{2}\Gamma_{\beta\gamma}^{\alpha}\Gamma_{\delta\varepsilon}^{\gamma}c^{\delta}c^{\varepsilon}\bar{c}^{\beta}.
 \end{equation*}
 Using Jacobi's relation we get $\Gamma_{\beta\gamma}^{\alpha}\Gamma_{\delta\varepsilon}^{\gamma}c^{\delta}c^{\varepsilon}\bar{c}^{\beta}=-2\Gamma_{\beta\gamma}^{\alpha}\Gamma_{\delta\varepsilon}^{\beta}c^{\varepsilon}c^{\gamma}\bar{c}^{\delta}$
 hence $d^2_W\bar{c}^{\alpha}=0$. Moreover
 \begin{equation*}
  d^2_Wc^{\alpha} = \Gamma_{\beta\gamma}^{\alpha}\bar{c}^{\beta}c^{\gamma} - \frac{1}{2}\Gamma_{\beta\gamma}^{\alpha}(\bar{c}^{\beta} - \frac{1}{2}\Gamma_{\delta\varepsilon}^{\beta}c^{\delta}c^{\varepsilon})c^{\delta} - \frac{1}{2}\Gamma_{\beta\gamma}^{\alpha}c^{\beta}(\bar{c}^{\gamma} - \frac{1}{2}\Gamma_{\delta\varepsilon}^{\gamma}c^{\delta}c^{\varepsilon}).
 \end{equation*} 
Then, using anticommutation between the $c$ and the $\bar{c}$ and between the $c$'s themselves we easily get that the three terms in $c\bar{c}$ combine to vanish and that the two terms in $ccc$ 
are just the opposite to each other. Therefore $d_Wc^{\alpha}=0$.
\end{proof}
Then we study $d^M + d_W = d^{\tilde{M}} + L_{Q_2}$ acting on the ``basic form'' i.e. the forms $\omega$ such that $L_{\alpha}\omega=0$ and $I_{\alpha}\omega=0$
$\forall\alpha$.

It might be not so clear why this construction gives the usual physicists' BRST formalism. However, it was shown in \cite{OuStvBa89} that those two approach coincide. In this article, the 
authors defined the BRST differential by its action over the fields. They show that this gives a differential algebra structure having the BRST differential as its exterior derivative. Then, it 
is shown that the action is a basic local functional, in the sense given above.

One could be worry that in this approach, we seems to have more fields than in the ``usual'' BRST formalism. Actually, the supernumerary fields will be constrained. Finally, the $\bar{c}$ are not 
the $\bar{c}$ of the usual BRST. Indeed, the later are anticommuting. Actually, the usual (anticommuting) $\bar{c}$ will be the unfixed $\bar{x}$.

\chapter{Feynman integrals}

\section{Two-points integrals}

\subsection{A new method for two-points integrals}

Let $G$ be a two-points graph of a scalar theory: res$(G)=2$. Then its evaluation $I_G$ depends on $p^2$ (the momentum flowing through the graph) and a set of parameters $\{\beta_i\}_{i=1,\dots,n}$. Now let us 
assume that the dependence of $I_G$ in $p^2$ can be written
\begin{equation}
 I_G = (p^2)^{\beta_0-D/2}\mathcal{N}(\{\beta_i\})
\end{equation}
for some $\beta_0$ defined by the above definition and let $\tilde{G}$ we the graph obtained by joining the two external edges of $G$. We defined the Feynman evaluation of $\tilde{G}$ to be
\begin{equation}
 I_{\tilde{G}}:=\int\frac{\d^Dp}{\pi^{D/2}}\frac{I_G(p^2)}{(p^2)^{\beta_0}}.
\end{equation}
Now, if we write the Feynman integral of a graph $H$ in the Schwinger representation
\begin{equation*}
 I_H = \int\left(\prod_{i=0}^l\frac{\d\alpha_i}{\Gamma(\alpha_i)}\alpha_i^{\beta_i-1}\right)U_H^{-D/2}\exp\left(-\frac{\mathcal{V}_H}{U_H}\right)
\end{equation*}
with $U_H$ and $\mathcal{V}_H$ its first and second Symanzik polynomials and $l$ the number of internal edges of $H$. The reduced Feynman integral is then defined as
\begin{equation}
 I_H^{(r)} := \int\left(\prod_{i=0}^l\frac{\d\alpha_i}{\Gamma(\alpha_i)}\alpha_i^{\beta_i-1}\right)\delta(1-\sum_{i\in\mathcal{J}}\alpha_i)U_H^{-D/2}\exp\left(-\frac{\mathcal{V}_H}{U_H}\right)
\end{equation}
with $\mathcal{J}$ any no-empty subset of the set labels of the internal edges of $H$, here $\{1,\dots,l\}$. This object is \`a priori not well-defined but the famous (some say folkloric) 
theorem of Cheng and Wu (\cite{Cheng}, page 259) states that the choice of $\mathcal{J}$ does not change the evaluation of the integral if the integral is scale invariant (that is, if the 
dimension of the numerator is the dimension of $p^{-D}$). The scale invariance will also makes the integral divergent. Indeed, the integral of $H$ can be seen as an integral over a 
hypersurface of the real projective space $\mathbb{RP}^l$ (with $l$ the number of internal edges of $H$). Then the integral over $\tilde{H}$ is an integral over $\mathbb{RP}^l$, but the scale 
invariance makes the integrand constant in one direction, thus making $I_{\tilde{H}}$ (if res$(H)=2$) divergent. Hence, those integrals typically appear only in a reduced form.

Now, for $\tilde{G}$, the integral is indeed scale invariant (it is why we have defined $\beta_0$ the way we did) and therefore $I_{\tilde{G}}^{(r)}$ is well-defined. We can now state a useful  
theorem.
\begin{thm} \label{th_lien_int_exp}
 With the above definitions we have
 \begin{equation} 
  \mathcal{N}(\{\beta_i\}) = \Gamma(D/2)I_{\tilde G}^{(r)}.
 \end{equation}
\end{thm}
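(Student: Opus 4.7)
The plan is to establish the identity by computing both sides in terms of Schwinger parameters and matching them. The key geometric input is the following identity for Symanzik polynomials: if $\tilde{G}$ is obtained from the two-point graph $G$ by adding an edge $e_0$ (with Schwinger parameter $\alpha_0$ and exponent $\beta_0$) joining the two vertices where the external legs were attached, then
\begin{equation*}
 U_{\tilde{G}}(\alpha_0,\alpha_1,\ldots,\alpha_l) = \alpha_0 U_G(\alpha_1,\ldots,\alpha_l) + \phi_G(\alpha_1,\ldots,\alpha_l),
\end{equation*}
where $\phi_G$ is defined by $\mathcal{V}_G = p^2 \phi_G$ (the reduced second Symanzik of $G$). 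This comes from splitting the spanning trees of $\tilde{G}$ into those that contain $e_0$ (which correspond, after removing $e_0$, to spanning $2$-forests of $G$ separating the two distinguished vertices, thus contributing $\alpha_0^0 \phi_G$) and those that do not (which are exactly the spanning trees of $G$, contributing $\alpha_0 U_G$). Also note that $\mathcal{V}_{\tilde{G}} = 0$ since $\tilde{G}$ is a vacuum graph.

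First I would compute $\mathcal{N}(\{\beta_i\})$ explicitly. Starting from the Schwinger representation of $I_G$, I insert a Cheng-Wu delta $\delta(1 - \sum_{i \in \mathcal{J}'}\alpha_i)$ for some $\mathcal{J}' \subseteq \{1,\ldots,l\}$, rescale $\alpha_i \mapsto x\alpha_i$ using the homogeneities $U_G(\lambda\alpha) = \lambda^L U_G(\alpha)$ and $\phi_G(\lambda\alpha) = \lambda^{L+1}\phi_G(\alpha)$ (with $L$ the loop number of $G$), and perform the $x$-integral to obtain a Gamma function times $(p^2)^{-\tilde\beta_0}$, where $\tilde\beta_0 := \sum_{i=1}^l \beta_i - DL/2$. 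Matching this to the assumed form $I_G = (p^2)^{\beta_0 - D/2}\mathcal{N}$ yields $\beta_0 = D/2 - \tilde\beta_0$ and
\begin{equation*}
 \mathcal{N}(\{\beta_i\}) = \Gamma(\tilde\beta_0) \int \prod_{i=1}^l \frac{d\alpha_i}{\Gamma(\beta_i)}\alpha_i^{\beta_i-1}\,\delta\Bigl(1 - \sum_{\mathcal{J}'}\alpha_i\Bigr)\, U_G^{\tilde\beta_0 - D/2}\, \phi_G^{-\tilde\beta_0}.
\end{equation*}

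Next I would compute $I_{\tilde{G}}^{(r)}$ directly from its Schwinger representation. Using the same subset $\mathcal{J}'$ (which does not contain $e_0$) as the Cheng-Wu set for $\tilde{G}$, the delta function involves only $\alpha_1,\ldots,\alpha_l$, so the integral over $\alpha_0$ decouples. Plugging in $U_{\tilde{G}} = \alpha_0 U_G + \phi_G$, the $\alpha_0$-integral becomes, after the substitution $u = \alpha_0 U_G/\phi_G$, a Beta function:
\begin{equation*}
 \int_0^{+\infty} \frac{d\alpha_0}{\Gamma(\beta_0)}\alpha_0^{\beta_0 - 1}(\alpha_0 U_G + \phi_G)^{-D/2} = \frac{\Gamma(\tilde\beta_0)}{\Gamma(D/2)}\, U_G^{\tilde\beta_0 - D/2}\, \phi_G^{-\tilde\beta_0},
\end{equation*}
using $B(\beta_0, D/2 - \beta_0) = \Gamma(\beta_0)\Gamma(\tilde\beta_0)/\Gamma(D/2)$ and the relation $\beta_0 = D/2 - \tilde\beta_0$. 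Substituting this back gives $\Gamma(D/2)\, I_{\tilde{G}}^{(r)}$ equal to exactly the expression found for $\mathcal{N}(\{\beta_i\})$, which is the claim.

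The main obstacle is the Symanzik identity $U_{\tilde{G}} = \alpha_0 U_G + \phi_G$: this is the only really graph-theoretic input, and although elementary, it must be checked carefully through the tree/$2$-forest dichotomy described above. Once that identity is in hand, the remainder of the argument is a matter of bookkeeping with the Cheng-Wu trick and one Beta-function integration, and the elegant cancellation of $\Gamma(\tilde\beta_0)$ between the two computations is what produces the single $\Gamma(D/2)$ factor in the final statement.
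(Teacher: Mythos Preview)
Your proof is correct, but it follows a genuinely different route from the paper's. The paper never touches the Symanzik polynomials of $\tilde G$: instead it stays in momentum space, applies the Schwinger trick only to the single propagator $1/(p^2)^{\beta_0}$, and then chooses the Cheng--Wu set $\mathcal{J}=\{0\}$ so that the delta simply fixes $\alpha_0=1$. This gives $\Gamma(\beta_0)\,I_{\tilde G}^{(r)}=\int\frac{d^Dp}{\pi^{D/2}}\,I_G(p^2)e^{-p^2}$, and plugging in the assumed form $I_G=(p^2)^{\beta_0-D/2}\mathcal{N}$ turns the $p$-integral into a single Gamma function, yielding the result in two lines.

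By contrast you work entirely in Schwinger parameters and take $\mathcal{J}'\subseteq\{1,\ldots,l\}$, which forces you to invoke the graph-theoretic identity $U_{\tilde G}=\alpha_0 U_G+\phi_G$ and to do the $\alpha_0$-integral as a Beta function. Your approach is longer but buys something: it produces an explicit parametric formula for $\mathcal{N}$ and in fact \emph{derives} the power-law $p^2$-dependence rather than assuming it; it also makes the role of the tree/$2$-forest dichotomy transparent. The paper's approach is quicker precisely because it sidesteps all of that by a clever choice of Cheng--Wu sector.
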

\begin{proof}
 We simply use the Schwinger trick on $I_{\tilde{G}}$:
 \begin{align*}
  I_{\tilde{G}}^{(r)} & = \int\frac{\d^Dp}{\pi^{D/2}}I_G(p^2)\delta(1-\sum_{i\in\mathcal{J}}\alpha_i)\frac{1}{\Gamma(\beta_0)}\int_0^{+\infty}\d\alpha_0\alpha_0^{\beta_0-1}e^{-\alpha_0p^2} \\
                      & = \frac{1}{\Gamma(\beta_0)}\int_0^{+\infty}\d\alpha_0\alpha_0^{\beta_0-1}\delta(1-\sum_{i\in\mathcal{J}}\alpha_i)\int\frac{\d^Dp}{\pi^{D/2}}I_G(p^2)e^{-\alpha_0p^2}.
 \end{align*}
 Choosing in the definition of $I_{\tilde{G}}^{(r)}$ $\mathcal{J}=\{0\}$ (since here the set of labels of the internal edges of $\tilde{G}$ is $\{0,1,\dots,l\}$) and invoking the theorem of 
 Cheng and Wu we get
 \begin{equation} \label{lien_int_exp}
  \frac{1}{\Gamma(\beta_0)}\int\frac{\d^Dp}{\pi^{D/2}}I_G(p^2)e^{-p^2} = I_{\tilde{G}}^{(r)}.
 \end{equation}
 Now, let us use the simple dependence of $I_G$ in $p^2$ to write
 \begin{equation*}
  \int\frac{\d^Dp}{\pi^{D/2}}I_G(p^2)e^{-p^2} = \frac{2}{\Gamma(D/2)}\mathcal{N}(\{\beta_i\})\int_0^{+\infty}\d pp^{D-1}(p^2)^{\beta_0-D/2}e^{-p^2}
 \end{equation*}
 where we have performed the angular integration since the integrand had no angular dependence. Making the change of variables $u=p^2\Rightarrow\d u/u=2\d p/p$ we get
 \begin{equation*}
  \int\frac{\d^Dp}{\pi^{D/2}}I_G(p^2)e^{-p^2} = \frac{1}{\Gamma(D/2)}\mathcal{N}(\{\beta_i\})\int_0^{+\infty}\d uu^{D/2-1}u^{\beta_0-D/2}e^{-u} = \frac{\Gamma(\beta_0)}{\Gamma(D/2)}\mathcal{N}(\{\beta_i\})
 \end{equation*}
 which, combined with \eqref{lien_int_exp} gives the formula of the theorem
\end{proof}
This formula is useful since the second Symanzik polynomial is vanishing (in a massless theory) for a vacuum graph, due to the absence of external momenta. Therefore, if we can prove that a 
given integral has a simple dependence in the external momenta without computing it we will rather compute the evaluation of the completed graph rather than the one of the initial graph.

Now, before we apply this technique to an integral with a numerator, it will be useful to check that it gives the right result on a known integral, namely the one loop Mellin transform 
computed in chapter (\ref{chap3}).

\subsection{Sanity check}

We will compute once again the integral
\begin{equation}
 I(p^2,\beta_1,\beta_2) = \int\frac{\d^Dp}{\pi^{D/2}}\frac{1}{(q^2)^{\beta_1}((p-q)^2)^{\beta_2}} = (p^2)^{D/2-\beta_1-\beta_2}\mathcal{N}(\beta_1,\beta_2).
\end{equation}
Therefore we define $\beta_0=D-\beta_1-\beta_2$. The above integral is the Feynman integral of the graph
\begin{equation*}
 G = 
 \begin{tikzpicture}[baseline={([yshift=-.5ex]current bounding box.center)}] 
  \draw (-0.5,0) -- (0,0);
  \draw (0.5,0) circle (0.5);
  \draw (1,0) -- (1.5,0);
  \draw (0,0) node {$\bullet$};
  \draw (1,0) node {$\bullet$};
 \end{tikzpicture}
\end{equation*}
hence we will compute the reduced Feynman integral of the graph
\begin{equation*}
 \tilde{G} = 
 \begin{tikzpicture}[baseline={([yshift=-.5ex]current bounding box.center)}] 
  \draw (0.5,0) circle (0.5);
  \draw (0,0) -- (1,0);
  \draw (0,0) node {$\bullet$};
  \draw (1,0) node {$\bullet$};
 \end{tikzpicture}.
\end{equation*}
$\tilde{G}$ has three spanning trees: each are composed of the two vertices (obviously) and of one edge. Hence its first Symanzik polynomial is
\begin{equation}
 U_{\tilde{G}} = \alpha_0\alpha_1 + \alpha_0\alpha_2 + \alpha_1\alpha_2.
\end{equation}
Choosing in the definition of $I_{\tilde{G}}^{(r)}$ $\mathcal{J}=\{0\}$ we get
\begin{equation*}
 I_{\tilde{G}}^{(r)} = \frac{1}{\Gamma(\beta_0)\Gamma(\beta_1)\Gamma(\beta_2)}\int_{(\mathbb{R}_+)^2}\d\alpha_1\d\alpha_2\frac{\alpha_1^{\beta_1-1}\alpha_2^{\beta_2-1}}{(\alpha_1+\alpha_2+\alpha_1\alpha_2)^{D/2}}.
\end{equation*}
Using here the Schwinger trick to write the denominator as an exponential end inverting the order of the integrations we have
\begin{equation*}
 I_{\tilde{G}}^{(r)} = \frac{1}{\Gamma(D/2)\Gamma(\beta_0)\Gamma(\beta_1)}\int_0^{+\infty}\d tt^{D/2-1}\int\d\alpha_1\alpha_1^{\beta_1-1}e^{-t\alpha_1}\int\d\alpha_2\alpha_2^{\beta_2-1}e^{-\alpha_2t(1+\alpha_1)}.
\end{equation*}
With a simple change of coordinate $u=\alpha_2t(1+\alpha_1)$ we recognize the integral of $\alpha_2$ to be simply an integral representation of the $\Gamma$ function. Therefore
\begin{equation*}
 I_{\tilde{G}}^{(r)} = \frac{1}{\Gamma(D/2)\Gamma(\beta_0)\Gamma(\beta_1)}\int_0^{+\infty}\d tt^{D/2-1}\int\d\alpha_1\frac{\alpha_1^{\beta_1-1}e^{-t\alpha_1}}{[t(1+\alpha_1)]^{\beta_2}}.
\end{equation*}
We can recognize the integral representation of the confluent hypergeometric function of the second kind in the integral over $\alpha_1$. This representation is
\begin{equation}
 U(a,b,z) = \frac{1}{\Gamma(a)}\int_0^{+\infty}\d te^{-zt}t^{a-1}(1+t)^{b-a-1}.
\end{equation}
This representation can be found in \cite{Gradshteyn}, page 1023. $U(a,b,z)$ can be defined with this formula. After the change of variables $v=\alpha_1t$ we arrive to
\begin{equation*}
 I_{\tilde{G}}^{(r)} = \frac{1}{\Gamma(D/2)\Gamma(\beta_0)}\int_0^{+\infty}\d tt^{D/2-\beta_2-1}U(\beta_1,1+\beta_1-\beta_2,t).
\end{equation*}
Finally, the integral over $t$ can be computed using another formula from \cite{Gradshteyn}, page 821:
\begin{equation}
 \int_0^{+\infty}\d t t^{b-1}U(a,c,t) = \frac{\Gamma(b)\Gamma(a-b)\Gamma(b-c+1)}{\Gamma(a)\Gamma(a-c+1)}.
\end{equation}
Hence we get the result
\begin{equation}
 I_{\tilde{G}}^{(r)} = \frac{\Gamma(\beta_1+\beta_2-D/2)\Gamma(D/2-\beta_1)\Gamma(D/2-\beta_2)}{\Gamma(D/2)\Gamma(\beta_0)\Gamma(\beta_1)\Gamma(\beta_2)}.
\end{equation}
The manifest symmetry between $\beta_1, \beta_2$ and $\beta_0$ can be restored if we notice $\beta_1+\beta_2-D/2=D/2-\beta_0$. Then we have our result
\begin{equation} \label{eval_graph_compl}
 I_{\tilde{G}}^{(r)} = \frac{\Gamma(D/2-\beta_0)\Gamma(D/2-\beta_1)\Gamma(D/2-\beta_2)}{\Gamma(D/2)\Gamma(\beta_0)\Gamma(\beta_1)\Gamma(\beta_2)}.
\end{equation}
Now, using $\beta_0=D-\beta_1-\beta_2$ and the formula of theorem \ref{th_lien_int_exp} we get
\begin{equation}
 I(p^2,\beta_1,\beta_2) = (p^2)^{D/2-\beta_1-\beta_2}\frac{\Gamma(\beta_1+\beta_2-D/2)\Gamma(D/2-\beta_1)\Gamma(/2-\beta_2)}{\Gamma(D-\beta_1-\beta_2)\Gamma(\beta_1)\Gamma(\beta_2)},
\end{equation}
in agreement with the result found in chapter 3. Now we fill confident enough to use this technique to compute an unknown integral with a numerator.

\subsection{Integral with numerator}

Let us now compute
\begin{equation}
 I_3(\beta_1,\beta_2,\beta_3,p^2) = \int\frac{\d^Dq}{\pi^{D/2}}\frac{(q^2+(p-q)^2+p^2)^{\beta_3}}{(q^2)^{\beta_1}((p-q)^2)^{\beta_2}}.
\end{equation}
A simple dimensional analysis shows that this integral shall be proportional to $(p^2)^{D/2+\beta_3-\beta_2-\beta_1}$. A more formal argument consists into splitting the numerator to write 
$I_3$ as a series of $I$, the integral re-computed in the previous section. Let us start by using the Schwinger trick on the numerator. After a exchange in the order of the integrations we 
obtain
\begin{equation*}
 I_3(\beta_1,\beta_2,\beta_3,p^2) = \frac{1}{\Gamma(-\beta_3)}\int_0^{+\infty}t^{-\beta_3-1}e^{-tp^2}\int\frac{\d^Dq}{\pi^{D/2}}\frac{e^{-t(q^2+(p-q)^2)}}{(q^2)^{\beta_1}((p-q)^2)^{\beta_2}}.
\end{equation*}
Now expanding the exponential in the integral over $q$ and switching the order of the series and the integral we get
\begin{equation*}
 I_3(\beta_1,\beta_2,\beta_3,p^2) = \frac{1}{\Gamma(-\beta_3)}\int_0^{+\infty}t^{-\beta_3-1}e^{-tp^2}\sum_{n=0}^{+\infty}\frac{(-t)^n}{n!}\int\frac{\d^Dq}{\pi^{D/2}}\frac{[q^2+(p-q)^2]^n}{(q^2)^{\beta_1}((p-q)^2)^{\beta_2}}.
\end{equation*}
Using the binomial expansion on every term of the series we recognize the integral computed in the previous subsection:
\begin{equation*}
 I_3(\beta_1,\beta_2,\beta_3,p^2) = \frac{1}{\Gamma(-\beta_3)}\int_0^{+\infty}t^{-\beta_3-1}e^{-tp^2}\sum_{n=0}^{+\infty}\frac{(-t)^n}{n!}\sum_{k=0}^n\binom{n}{k}I(\beta_1-k,\beta_2-(n-k),p^2).
\end{equation*}
We know that $I(\beta_1-k,\beta_2-(n-k),p^2)$ is proportional to $(p^2)^{D/2-(\beta_1-k)-(\beta_2-n+k)}=(p^2)^{D/2-\beta_1-\beta_2+n}$. Hence, without writing explicitly all the term of the 
series we have
\begin{equation*}
 I_3(\beta_1,\beta_2,\beta_3,p^2) = (p^2)^{D/2-\beta_1-\beta_2}\int_0^{+\infty}t^{-\beta_3-1}e^{-tp^2}\sum_{n=0}^{+\infty}\frac{(-t)^n(p^2)^n}{n!}S_n(\beta_1,\beta_2).
\end{equation*}
Finally, performing the change of integration $u=tp^2$ we get an integrand without any $p^2$ and, as expected
\begin{equation}
 I_3(\beta_1,\beta_2,\beta_3,p^2) = (p^2)^{D/2+\beta_3-\beta_1-\beta_2}\mathcal{N}_3(\beta_1,\beta_2,\beta_3).
\end{equation}
Hence, we can use the theorem \ref{th_lien_int_exp}. The remarkable feature here is that we do not have to compute the an other Feynman integral: the graph $\tilde{G}$ is the same than the one 
computed in the previous subsection. Taking $\beta_0=D+\beta_3-\beta_2-\beta_1$ in \eqref{eval_graph_compl} the theorem directly gives
\begin{equation}
 I_3(\beta_1,\beta_2,\beta_3,p^2) = (p^2)^{D/2+\beta_3-\beta_1-\beta_2}\frac{\Gamma(\beta_1+\beta_2-\beta_3-D/2)\Gamma(D/2-\beta_1)\Gamma(D/2-\beta_2)}{\Gamma(D+\beta_3-\beta_1-\beta_2)\Gamma(\beta_1)\Gamma(\beta_2)}.
\end{equation}
We see here that this method is very powerful. An interesting generalization of the theorem \ref{th_lien_int_exp} would be a similar formula for $3$-valent and $n$-valent graphs. We did not 
find the time to look for such a generalization and therefore we will attack the upcoming $3$-valent graph integral in a more ``brute force'' way.

 

%
%
%
%
%
%
%
%

\section{Three-point integral}

Our goal is to compute a three point scalar integral which is given in \cite{BoDa86}. We will roughtly follow their computations.

\subsection{Formula and method}

Thorought this task, repeated use will be made of some equalities. Let us list them and give some references
\begin{equation}
 \int_0^1\d\eta\frac{\eta^{\alpha-1}(1-\eta)^{\beta-1}}{[a\eta+b(1-\eta)]^{\gamma}} = \frac{\Gamma(\alpha)\Gamma(\beta)}{b^{\gamma}\Gamma(\alpha+\beta)}\hyper{\alpha}{\gamma}{\alpha+\beta}{1-\frac{a}{b}} \label{int2F1}
\end{equation}
with $~_2F_1$ the Gauss' hypergeometric function. This is a well-known formula: it is typically known by mathematica. It can be found in \cite{Gradshteyn}, page 317. We also use the classical 
reflexion formula for the hypergemetric function (which defines its analytical continuation)
\begin{align}
 \hyper{a}{b}{c}{1-z} & = \frac{\Gamma(c)\Gamma(c-a-b)}{\Gamma(c-a)\Gamma(c-b)}\hyper{a}{b}{a+b-c+1}{z} \label{prolong} \\
\llcorner & + \frac{\Gamma(c)\Gamma(a+b-c)}{\Gamma(a)\Gamma(b)}z^{c-a-b}\hyper{c-a}{c-b}{c-a-b+1}{z} \nonumber
\end{align}
which is also in \cite{Gradshteyn}, page 1008. Another classical identity which will prove itself very convenient is a Mellin--Barnes representation of the hypergeometric function, 
\cite{Gradshteyn}, page 1005:
\begin{equation}
 \hyper{a}{b}{c}{z} = \frac{\Gamma(c)}{\Gamma(a)\Gamma(b)}\frac{1}{2i\pi}\int_{-i\infty}^{+i\infty}\d s(-z)^s\frac{\Gamma(-s)\Gamma(a+s)\Gamma(b+s)}{\Gamma(c+s)}. \label{MB}
\end{equation}
Let us notice that the contour of integration is such that it separates the poles in $s=j$ from the poles in $s+a=-j$ and $s+b=-j$. In the following, we will close this contour to the right. Now, 
in \cite{BoDa86} a series is given without proof and seems to have been proven by the authors. We give a proof here and write this formula as a lemma
\begin{lemm} (Davydychev and Boos, \cite{BoDa86})
 For $F_4$ the fourth Appell's hypergeometric function defined by
\begin{equation*}
 \appell{a}{b}{c}{d}{x}{y} = \sum_{m=0}^{+\infty}\sum_{j=0}^{+\infty}\frac{(a)_{m+j}(b)_{m+j}}{(c)_m(d)_j}\frac{x^m}{m!}\frac{y^j}{j!}.
\end{equation*}
we have
\begin{equation} \label{seriesAppell}
 \sum_{j=0}^{+\infty}\frac{x^j}{j!}\frac{(\alpha)_j(\beta)_j}{(\gamma)_j}\hyper{-j}{1-\gamma-j}{\delta}{y} = \appell{\alpha}{\beta}{\gamma}{\delta}{x}{xy}
\end{equation}
\end{lemm}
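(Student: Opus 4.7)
The plan is to proceed by direct expansion and rearrangement of series, with the main work being a careful bookkeeping of Pochhammer symbols. The key observation is that the inner hypergeometric function $\hyper{-j}{1-\gamma-j}{\delta}{y}$ is \emph{terminating}: because one of the numerator parameters is the non-positive integer $-j$, its series in $y$ is a finite polynomial of degree $j$,
\begin{equation*}
\hyper{-j}{1-\gamma-j}{\delta}{y} = \sum_{m=0}^{j} \frac{(-j)_m\,(1-\gamma-j)_m}{(\delta)_m\,m!}\,y^m.
\end{equation*}
This turns the left-hand side of \eqref{seriesAppell} into an absolutely convergent double sum (for $|x|$, $|y|$ small enough), allowing me to interchange the order of summation.

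Next I would make the change of index $j = n + m$ with $n \geq 0$, so that the double sum becomes a free sum over $(m,n) \in \mathbb{N}^2$. The bulk of the proof is then the algebraic simplification of the two Pochhammer symbols $(-(n+m))_m$ and $(1-\gamma-(n+m))_m$. The first is elementary:
\begin{equation*}
(-(n+m))_m = (-1)^m \frac{(n+m)!}{n!}.
\end{equation*}
The second requires one line of reindexing within the product:
\begin{equation*}
(1-\gamma-n-m)_m = \prod_{k=0}^{m-1}\bigl(-(\gamma+n+m-1-k)\bigr) = (-1)^m (\gamma+n)_m = (-1)^m \frac{(\gamma)_{n+m}}{(\gamma)_n}.
\end{equation*}
Substituting these, the factor $(n+m)!$ cancels against $1/(n+m)!$, the factor $(\gamma)_{n+m}$ cancels against $1/(\gamma)_{n+m}$, and the two signs $(-1)^m$ multiply to $+1$, leaving
\begin{equation*}
\text{LHS} = \sum_{m=0}^{+\infty}\sum_{n=0}^{+\infty} \frac{(\alpha)_{n+m}(\beta)_{n+m}}{(\gamma)_n (\delta)_m}\,\frac{x^{n+m}}{n!}\,\frac{y^m}{m!}.
\end{equation*}

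The final step is recognition: relabeling $n \leftrightarrow m'$ and $m \leftrightarrow j'$ and noting that $x^{n+m} y^m = x^{m'}(xy)^{j'}$, this is exactly the series defining $\appell{\alpha}{\beta}{\gamma}{\delta}{x}{xy}$, which concludes the proof.

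The main obstacle is not conceptual but careful: getting the signs and the ratios of Pochhammer symbols right in the manipulation of $(1-\gamma-n-m)_m$, and ensuring that the reordering of the double sum is legitimate (which is why the termination of the $\hyper{-j}{\cdot}{\cdot}{y}$ series is essential — it is what allows me to treat everything as an absolutely convergent double series rather than as a purely formal identity). Once those are pinned down, the identity \eqref{seriesAppell} falls out immediately by matching the resulting double series term by term with the definition of $F_4$.
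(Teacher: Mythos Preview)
Your proof is correct and is essentially the same approach as the paper's: both arguments expand everything into a double series over $\mathbb{N}^2$ and match terms using the same Pochhammer identities (in the paper's notation, $(\gamma)_k=(\gamma)_{k-j}(\gamma+k-j)_j$ and $(-x)_n=(-1)^n(x-n+1)_n$). The only cosmetic difference is direction: the paper starts from the $F_4$ side and reorganizes toward the single sum, while you start from the left-hand side, expand the terminating ${}_2F_1$, and reindex via $j=n+m$ toward $F_4$; the substance is identical.
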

\begin{proof}
 First, we write
 \begin{equation*}
  \appell{\alpha}{\beta}{\gamma}{\delta}{x}{xy} = \sum_{m=0}^{+\infty}\sum_{j=0}^{+\infty}\frac{(\alpha)_{m+j}(\beta)_{m+j}}{(\gamma)_m(\delta)_j}\frac{x^{m+j}}{m!}\frac{y^j}{j!}.
 \end{equation*}
 Then, using $m! = (m+j)!/(m+1)_j$. Then we can relabel the sum over $j$ to get
 \begin{equation*}
  \appell{\alpha}{\beta}{\gamma}{\delta}{x}{xy} = \sum_{m=0}^{+\infty}\sum_{k=m}^{+\infty}\frac{(\alpha)_{k}(\beta)_{k}}{(\gamma)_{m}(\delta)_{k-m}}(m+1)_{k-m}\frac{x^k}{k!}\frac{y^{k-m}}{(k-m)!}.
 \end{equation*}
 We can inverse the order of the sums, relabel once again the second sum and separate the factors implying just a sum over $k$ to get
 \begin{equation*}
  \appell{\alpha}{\beta}{\gamma}{\delta}{x}{xy} = \sum_{k=0}^{+\infty}\frac{x^k}{k!}\frac{(\alpha)_{k}(\beta)_{k}}{(\gamma)_{k}}\sum_{j=0}^{k}\frac{(\gamma)_k}{(\gamma)_{k-j}(\delta)_j}(k-j+1)_j\frac{y^j}{j!}.
 \end{equation*}
 In this we can use $(\gamma)_k=(\gamma)_{k-j}(\gamma+k-j)_j$ (from the definition of the Pochhammer symbol in term of the Gamma function). Then we recognize that the last sum is the 
hypergeometric function is the formula \eqref{seriesAppell} since $(-k)_n\neq0\Leftrightarrow k\geq n$ and $(-x)_n=(-1)^n(x-n+1)_n$ and $(1-\gamma+k)_n = (-1)^n(\gamma+k-n)_n$.
\end{proof}
Now, in a minkowskian space-time with signature $(-+\dots+)$ in $D$ dimension we want to compute the integral
\begin{equation}
 J(\mu,\nu,\rho) = \int \frac{\d^Dr}{(r^2)^{\mu}((p-r)^2)^{\nu}((q-r)^2)^{\rho}}. \label{int3points}
\end{equation}
After Wick rotating this integral we arrive to a $D$-dimensional integral in euclidean space
\begin{equation}
 J(\mu,\nu,\rho) = i\int \frac{\d^D_Er}{(r^2)^{\mu}((p-r)^2)^{\nu}((q-r)^2)^{\rho}}.
\end{equation}
We will start by writing this as a definite one-dimensional integral. The integrand will be split using \eqref{prolong}. Each of the two terms will be written with the Mellin--Barnes representation 
\eqref{MB}. The order of the integrations will be changed and we will compute the scalar integrals using \eqref{int2F1}. Then we will use once again \eqref{prolong}. Hence we will have four 
Mellin--Barnes integral, which we will compute with the residues theorem. Finally, the six arising series will be written in a closed form using \eqref{seriesAppell}.

\subsection{First computations}

Using the $\alpha$-representation
\begin{equation*}
 \frac{1}{A^k} = \frac{1}{\Gamma(k)}\int_0^{+\infty}\d\alpha\alpha^{k-1}e^{-A\alpha}
\end{equation*}
three times and exchanging the order of the integrations, we can bring the integral over $r$ to a Gaussian integral. Then we perform the change of coordinates
\begin{align*}
 \begin{cases}
  & \alpha_1 = \lambda(1-\xi)(1-\eta) \\
  & \alpha_2 = \lambda\xi(1-\eta) \\
  & \alpha_3 = \lambda\eta.
 \end{cases}
\end{align*}
A useful remark is $\sum_i\alpha_i=\lambda$. Then the integral over $\lambda$ is esay: it is just a $\Gamma$ function. The integral over $\eta$ can be done thanks to \eqref{int2F1}. Finally,
as written in we end up with
\begin{align}
 J(\mu,\nu,\rho) & = i\pi^{D/2}\frac{\Gamma(\mu+\nu+\rho-D/2)\Gamma(D/2-\rho)}{\Gamma(\mu)\Gamma(\nu)\Gamma(D/2)}(p^2)^{D/2-\mu-\nu-\rho} \nonumber \\
\times & \int_0^1\d\xi\xi^{D/2-\mu-\rho-1}(1-\xi)^{D/2-\nu-\rho-1}\hyper{\rho}{\mu+\nu+\rho-D/2}{D/2}{1-z}
\end{align}
with 
\begin{equation}
 z = \frac{(1-\xi)q^2 + \xi k^2}{\xi(1-\xi)p^2}.
\end{equation}
Up to here, everything has been checked.

\subsection{Mellin--Barnes}

Now, we want to compute the last integral
\begin{equation}
 I(z) = \int_0^1\d\xi\xi^{D/2-\mu-\rho-1}(1-\xi)^{D/2-\nu-\rho-1}\hyper{\rho}{\mu+\nu+\rho-D/2}{D/2}{1-z}.
\end{equation}
Using \eqref{prolong} we split this equation into two integrals: 
\begin{align*}
  I(z) & = \int_0^1\d\xi\xi^{D/2-\mu-\rho-1}(1-\xi)^{D/2-\nu-\rho-1}\left[\frac{\Gamma(D/2)\Gamma(D-\mu-\nu-2\rho)}{\Gamma(D/2-\rho)\Gamma(D-\mu-\nu-\rho)}\hyper{\rho}{\mu+\nu+\rho-D/2}{2\rho+\mu+\nu-D+1}{z} \right. \\
\llcorner & \left.+ \frac{\Gamma(D/2)\Gamma(\mu+\nu+2\rho-D)}{\Gamma(\rho)\Gamma(\mu+\nu+\rho-D/2)}\hyper{D/2-\rho}{D-\mu-\nu-\rho}{D-\mu-\nu-2\rho+1}{z}\right] \\
       & = I_1(z) + I_2(z).
\end{align*}
Using the Mellin--Barnes representation \eqref{MB} and exchanging the order of the integrations we arrive to
\begin{align*}
 I_1(z) & = \frac{\Gamma(D/2)\Gamma(D-\mu-\nu-2\rho)\Gamma(2\rho+\mu+\nu-D+1)}{\Gamma(D/2-\rho)\Gamma(D-\mu-\nu-\rho)\Gamma(\rho)\Gamma(\mu+\nu+\rho-D/2)} \\
& \times \frac{1}{2i\pi}\int_{-i\infty}^{+i\infty}\d s\frac{\Gamma(-s)\Gamma(\rho+s)\Gamma(\mu+\nu+\rho-D/2+s)}{\Gamma(2\rho+\mu+\nu-D+1+s)}(-1)^s \\
\llcorner & \times \int_0^1\d\xi\xi^{D/2-\mu-\rho-1}(1-\xi)^{D/2-\nu-\rho-1}\left[\frac{(1-\xi)q^2 + \xi k^2}{\xi(1-\xi)p^2}\right]^s.
\end{align*}
The integration over $\xi$ can be computed using \eqref{int2F1}. We arrive to
\begin{align*}
 I_1(z) & = \frac{\Gamma(D/2)\Gamma(D-\mu-\nu-2\rho)\Gamma(2\rho+\mu+\nu-D+1)}{\Gamma(D/2-\rho)\Gamma(D-\mu-\nu-\rho)\Gamma(\rho)\Gamma(\mu+\nu+\rho-D/2)} \\ 
 \times \frac{1}{2i\pi}\int_{-i\infty}^{+i\infty}\d s & \left(-\frac{q^2}{p^2}\right)^s\frac{\Gamma(-s)\Gamma(\rho+s)\Gamma(\mu+\nu+\rho-D/2+s)\Gamma(D/2-\mu-\rho-s)\Gamma(D/2-\nu-\rho-s)}{\Gamma(2\rho+\mu+\nu-D+1+s)\Gamma(D-2\rho-\mu-\nu-2s)} \\
\llcorner & \times\hyper{D/2-\mu-\rho-s}{-s}{D-2\rho-\mu-\nu-2s}{1-\frac{k^2}{q^2}}.
\end{align*}
Finally, we split the remaining hypergeometric function with the formula \eqref{prolong}. Some simplifications occurs, but we are left with quite a long expression for $I_1$ which we will not 
write down.

For $I_2$, performing the same procedure we have
\begin{align*}
 I_2(z) & = \frac{\Gamma(D/2)\Gamma(D-\mu-\nu-2\rho+1)\Gamma(2\rho+\mu+\nu-D)}{\Gamma(D/2-\rho)\Gamma(D-\mu-\nu-\rho)\Gamma(\rho)\Gamma(\mu+\nu+\rho-D/2)}\left(\frac{q^2}{p^2}\right)^{D-2\rho-\mu-\nu} \\ 
 \times \frac{1}{2i\pi}\int_{-i\infty}^{+i\infty}\d s & \left(-\frac{q^2}{p^2}\right)^s\frac{\Gamma(-s)\Gamma(D/2-\rho+s)\Gamma(D-\mu-\nu-\rho+s)\Gamma(\mu+\rho-D/2-s)\Gamma(\nu+\rho-D/2-s)}{\Gamma(2\rho+\mu+\nu-D-2s)\Gamma(D-2\rho-\mu-\nu+1+s)} \\
\llcorner & \times\hyper{\nu+\rho-D/2-s}{2\rho+\mu+\nu-D-s}{2\rho+\mu+\nu-D-2s}{1-\frac{k^2}{q^2}}
\end{align*}
which, as $I_1$, we split using \eqref{prolong}.

All in all we arrive to
\begin{align}
 & J(\mu,\nu,\rho) = \frac{i\pi^{D/2}}{2i\pi}\frac{\Gamma(2\rho+\mu+\nu-D)\Gamma(D-2\rho-\mu-\nu)}{\Gamma(\mu)\Gamma(\nu)\Gamma(\rho)\Gamma(D-\rho-\mu-\nu)}(p^2)^{D/2-\mu-\nu-\rho}\Bigg((2\rho + \mu+\nu-D) \nonumber \\ 
\times & \left[\Gamma(D/2-\rho-\nu)\int_{-i\infty}^{+i\infty}\d s\left(-\frac{q^2}{p^2}\right)^s\frac{\Gamma(-s)\Gamma(\rho+s)\Gamma(\mu+\nu+\rho-D/2+s)\Gamma(D/2-\mu-\rho-s)}{\Gamma(2\rho+\mu+\nu-D+1+s)\Gamma(D-2\rho-\mu-\nu-s)}\right. \nonumber \\ 
\times & \hyper{D/2-\mu-\rho-s}{-s}{\rho+\nu-D/2+1}{\frac{k^2}{q^2}}  + \Gamma(\rho+\nu-D/2)\left(\frac{k^2}{q^2}\right)^{D/2-\rho-\nu} \nonumber \\
\times & \int_{-i\infty}^{+i\infty}\d s\left(-\frac{q^2}{p^2}\right)^s\frac{\Gamma(\rho+s)\Gamma(\mu+\nu+\rho-D/2+s)\Gamma(D/2-\nu-\rho-s)}{\Gamma(2\rho+\mu+\nu-D+1+s)} \nonumber \\ 
\times & \left.\hyper{D/2-\nu-\rho-s}{D-2\rho-\mu-\nu-s}{D/2-\rho-\nu+1}{\frac{k^2}{q^2}}\right] + (D-2\rho-\mu)\left(\frac{q^2}{p^2}\right)^{D-2\rho-\mu-\nu} \nonumber \\
\times & \left[\Gamma(D/2-\rho-\nu)\int_{-i\infty}^{+i\infty}\d s\left(-\frac{q^2}{p^2}\right)^s\frac{\Gamma(D/2-\rho+s)\Gamma(D-\mu-\nu-\rho+s)\Gamma(\nu+\rho-D/2-s)}{\Gamma(2\rho+\mu+\nu-D-2s)\Gamma(D-2\rho-\mu-\nu+1+s)} \right.\nonumber \\
\times & \hyper{\rho+\nu-D/2-s}{2\rho+\mu+\nu-D-s}{\nu+\rho-D/2+1}{\frac{k^2}{q^2}} + \Gamma(\rho+\nu-D/2)\left(\frac{k^2}{q^2}\right)^{D/2-\rho-\nu} \nonumber \\
\times & \int_{-i\infty}^{+i\infty}\d s\left(-\frac{q^2}{p^2}\right)^s\frac{\Gamma(-s)\Gamma(D/2-\rho+s)\Gamma(D-\mu-\nu-\rho+s)\Gamma(\mu+\rho-D/2-s)}{\Gamma(2\rho+\mu+\nu-D-2s)\Gamma(D-2\rho-\mu-\nu+1+s)} \nonumber \\
\times & \left.\hyper{-s}{\rho+\mu-D/2-s}{D/2-\rho-\nu+1}{\frac{k^2}{q^2}}\right]\Bigg).
\end{align}
In order to reduce the overall complexity of the computation we define $\mathcal{F}_1$, $\mathcal{F}_2, \mathcal{F}_3$ and $\mathcal{F}_4$ such that 
\begin{align}
 J(\mu,\nu,\rho) & = i\pi^{D/2}\frac{\Gamma(2\rho+\mu+\nu-D)\Gamma(D-2\rho-\mu-\nu)}{\Gamma(\mu)\Gamma(\nu)\Gamma(\rho)\Gamma(D-\rho-\mu-\nu)}(p^2)^{D/2-\mu-\nu-\rho}\Bigg((2\rho + \mu+\nu-D) \\ 
 \times & \left[\Gamma(D/2-\rho-\nu)\mathcal{F}_1 + \Gamma(\rho+\nu-D/2)\left(\frac{k^2}{q^2}\right)^{D/2-\rho-\nu}\mathcal{F}_2\right] + (D-2\rho-\mu)\left(\frac{q^2}{p^2}\right)^{D-2\rho-\mu-\nu} \\
 \times & \left[\Gamma(D/2-\rho-\nu)\mathcal{F}_3 + \Gamma(\rho+\nu-D/2)\left(\frac{k^2}{q^2}\right)^{D/2-\rho-\nu}\mathcal{F}_4\right]\Bigg). \label{def_F} \\
\end{align}

\subsection{Contour integrals and series}

We can compute the remaining integrals with the theorem of residues. Since they are Mellin--Barnes integrals, they separate the poles $s\in-\mathbb{N}$ from the poles $(a+s)\in-\mathbb{N}$ and 
$(b+s)\in-\mathbb{N}$. Hence, closing the contour integrals on the positive side, only the terms $\Gamma(-s)$ and $\Gamma(\dots-s)$ will contribute. Hence, we will have two contributions from 
$\mathcal{F}_1$ and $\mathcal{F}_4$ and one from $\mathcal{F}_2$ and $\mathcal{F}_3$. Each of these six contributions is a series and they can be computed with the relation \eqref{seriesAppell}.

From the theorem of the residues we get
\begin{align*}
 \mathcal{F}_1 & = \Bigg(\sum_{j=0}^{+\infty}\frac{1}{j!}\left(\frac{q^2}{p^2}\right)^j\frac{\Gamma(\rho+j)\Gamma(\rho+\mu+\nu-D/2+j)\Gamma(D/2-\mu-\rho-j)}{\Gamma(\mu+\nu+2\rho-D+j+1)\Gamma(D-\mu-\nu-2\rho-j)} \\
 \times & \hyper{D/2-\mu-\rho-j}{-j}{\nu+\rho-D/2+1}{\frac{k^2}{q^2}} \\
  + & \left(-\frac{q^2}{p^2}\right)^{D/2-\mu-\rho} \sum_{j=0}^{+\infty}\frac{1}{j!}\left(\frac{q^2}{p^2}\right)^j\frac{\Gamma(\mu+\rho-D/2-j)\Gamma(D/2-\mu+j)\Gamma(\nu+j)}{\Gamma(\nu+\rho-D/2+1+j)\Gamma(D/2-\nu-\rho-j)} \\
 \times & \hyper{-j}{\mu+\rho-D/2-j}{\nu+\rho-D/2+1}{\frac{k^2}{q^2}}\Bigg).
\end{align*}
Now, we will use four times the reflexion formula $\Gamma(z)\Gamma(1-z)=\pi/\sin(\pi z)$. Twice to get rid of the denominators that fordid us to us \eqref{seriesAppell} (since there is two Gamma functions) and twice 
to bring on the denominator a Gamma function that is to the numerators. Hence we arrive to
\begin{align*}
 \mathcal{F}_1 & = \Bigg(\sum_{j=0}^{+\infty}\frac{1}{j!}\left(\frac{q^2}{p^2}\right)^j\frac{\Gamma(\rho+j)\Gamma(\rho+\mu+\nu-D/2+j)}{\Gamma(1-D/2+\mu+\rho+j)}\frac{\sin(\pi(D-2\rho-\mu-\nu-j))}{\sin(\pi(D/2-\mu-\rho-j))} \\
 \times & \hyper{D/2-\mu-\rho-j}{-j}{\nu+\rho-D/2+1}{\frac{k^2}{q^2}} \\ 
 + & \left(-\frac{q^2}{p^2}\right)^{D/2-\mu-\rho} \sum_{j=0}^{+\infty}\frac{1}{j!}\left(\frac{q^2}{p^2}\right)^j\frac{\Gamma(D/2-\mu+j)\Gamma(\nu+j)}{\Gamma(1+D/2-\mu-\rho+j)}\frac{\sin(\pi(D/2-\rho-\nu-j))}{\sin(\pi(\mu+\rho-D/2-j))} \\
 \times & \hyper{-j}{\mu+\rho-D/2-j}{\nu+\rho-D/2+1}{\frac{k^2}{q^2}}\Bigg).
\end{align*}
Now, we can use $\sin(A+j\pi) = (-1)^j\sin(A)$ to take the sinuses out of the series. Finally, we easily find complete the Pochhammer symbols to be allow to use the formula \eqref{seriesAppell} 
and we get
\begin{align}
 \mathcal{F}_1 & = \frac{\sin(\pi(D-2\rho-\mu-\nu))}{\sin(\pi(D/2-\mu-\rho))}\frac{\Gamma(\rho)\Gamma(\mu+\nu+\rho-D/2)}{\Gamma(1-D/2+\mu+\rho)}\appell{\rho}{\mu+\nu+\rho-D/2}{1-D/2+\mu+\rho}{\rho+\nu-D/2+1}{\frac{q^2}{p^2}}{\frac{k^2}{p^2}} \nonumber \\
 + & \left(-\frac{q^2}{p^2}\right)^{D/2-\mu-\rho}\frac{\sin(\pi(D/2-\rho-\nu))}{\sin(\pi(\mu+\rho-D/2))}\frac{\Gamma(D/2-\mu)\Gamma(\nu)}{\Gamma(1+D/2-\mu-\rho)}\appell{D/2-\mu}{\nu}{1+D/2-\mu-\rho}{\rho+\nu-D/2+1}{\frac{q^2}{p^2}}{\frac{k^2}{p^2}}.
\end{align}
For $\mathcal{F}_2$ and $\mathcal{F}_3$ the situation is much simpler: we do not have to use the reflexion formula at all. We find 
\begin{equation}
 \mathcal{F}_2 = \left(-\frac{q^2}{p^2}\right)^{D/2-\nu-\rho}\frac{\Gamma(D/2-\nu)\Gamma(\mu)}{\Gamma(\rho+\mu-D/2+1}\appell{D/2-\nu}{\mu}{\rho+\mu-D/2+1}{D/2-\nu-\rho+1}{\frac{q^2}{p^2}}{\frac{k^2}{p^2}}
\end{equation}
and 
\begin{equation}
 \mathcal{F}_3 = \left(-\frac{q^2}{p^2}\right)^{\rho+\mu-D/2}\frac{\Gamma(\nu)\Gamma(D/2-\mu)}{\Gamma(D/2-\rho-\mu+1)}\appell{\nu}{D/2-\mu}{D/2-\rho-\mu+1}{\nu+\rho-D/2+1}{\frac{q^2}{p^2}}{\frac{k^2}{p^2}}.
\end{equation}
The computation of $\mathcal{F}_4$ is very similar to the one of $\mathcal{F}_1$, so we will not detail it. We end up with
\begin{align}
 \mathcal{F}_4 & = \frac{\sin(\pi(2\rho+\mu+\nu-D))}{\sin(\pi(\rho+\mu-D/2))}\frac{\Gamma(D/2-\rho)\Gamma(D-\rho-\mu-\nu)}{\Gamma(1-\rho-\mu+D/2)} \nonumber \\
 \times & \appell{D/2-\rho}{D-\mu-\nu-\rho}{1+D/2-\rho-\mu}{D/2-\rho-\nu+1}{\frac{q^2}{p^2}}{\frac{k^2}{p^2}} \nonumber \\
 + & \left(-\frac{q^2}{p^2}\right)^{\rho+\mu-D/2}\frac{\sin(\pi(\rho+\nu-D/2))}{\sin(\pi(D/2-\rho-\mu))}\frac{\Gamma(\mu)\Gamma(D/2-\nu)}{\Gamma(1-D/2+\rho+\mu)} \nonumber \\
 \times & \appell{\mu}{D/2-\nu}{1+\rho+\mu-D/2}{D/2-\rho-\nu+1}{\frac{q^2}{p^2}}{\frac{k^2}{p^2}}.
\end{align}

\subsection{Results}

At this stage, we have computed all the terms appearing in the integral $I(z)$, therefore we have a closed formula for $J(\mu,\nu,\rho)$. However, let us performe some simplifications. First, let 
us notice that the second term of $\mathcal{F}_1$ and $\mathcal{F}_3$ have the same arguments in their Appell functions and the same powers of the impulsions. The second term of $\mathcal{F}_4$ and $\mathcal{F}_2$ have the same 
similarities.

This suggests to look for simplifactions. So, we will carefully develop $J(\mu,\nu,\rho)$ starting form \eqref{def_F} and the above expressions for the $\mathcal{F}$s. We get rid of the sinuses 
by using the reflexion formula of the Gamma function. One term that will cause some trouble is from the second term of $\mathcal{F}_1$
\begin{equation*}
 (-1)^{D/2-\mu-\rho}\frac{\Gamma(2\rho+\mu+\nu-D+1)\Gamma(D-2\rho-\mu-\nu)\Gamma(\mu+\rho-D/2)\Gamma(D/2-\mu)\Gamma(\nu)}{\Gamma(1+\rho+\nu-D/2)},
\end{equation*}
which is similar to a term from $\mathcal{F}_3$
\begin{equation*}
 (-1)^{\rho+\nu-D/2}\frac{\Gamma(2\rho+\mu+\nu-D)\Gamma(1+D-2\rho-\mu-\nu)\Gamma(D/2-\mu-\rho)\Gamma(D/2-\mu)\Gamma(\nu)}{\Gamma(D/2-\rho-\mu+1)}.
\end{equation*}
Using again the reflexion formula for the Gamma function (and the fact that the sinus function is odd), we can rewrite the sum of these terms as
\begin{align}
 & \left[(-1)^{D/2-\mu-\rho}\frac{\sin(\pi(D/2-\rho-\nu))}{\sin(\pi(D-2\rho-\mu-\nu))} + (-1)^{D/2-\rho-\nu}\frac{\sin(\pi(D/2-\rho-\mu))}{\sin(\pi(D-2\rho-\mu-\nu))}\right] \nonumber \\
 & \times\Gamma(\rho+\mu-D/2)\Gamma(D/2-\rho-\nu)\Gamma(\nu)\Gamma(D/2-\mu). \label{problem1}
\end{align}
Similarly, we have a term from $\mathcal{F}_2$
\begin{equation*}
 (-1)^{D/2-\nu-\rho}\frac{\Gamma(2\rho+\mu+\nu-D+1)\Gamma(D-2\rho-\mu-\nu)\Gamma(D/2-\nu)\Gamma(\mu)\Gamma(\rho+\nu-D/2)}{\Gamma(\rho+\mu-D/2+1}
\end{equation*}
similar to another one from the second term of $\mathcal{F}_4$
\begin{equation*}
 (-1)^{\mu+\rho-D/2}\frac{\Gamma(2\rho+\mu+\nu-D)\Gamma(D-2\rho-\mu-\nu+1)\Gamma(D/2-\nu)\Gamma(\mu)\Gamma(D/2-\rho-\nu)}{\Gamma(1-\rho-\mu+D/2}.
\end{equation*}
With the same computation than the one performed above we write their sum as
\begin{align}
 & \left[(-1)^{D/2-\nu-\rho}\frac{\sin(\pi(D/2-\rho-\mu))}{\sin(\pi(D-2\rho-\mu-\nu))} + (-1)^{D/2-\rho+\mu}\frac{\sin(\pi(D/2-\rho-\nu))}{\sin(\pi(2\rho+\mu+\nu-D))}\right] \nonumber \\
 & \times\Gamma(\rho+\nu-D/2)\Gamma(D/2-\rho-\mu)\Gamma(\mu)\Gamma(D/2-\nu). \label{problem2}
\end{align}
Hence, all in all, we arrive to 
\begin{align}
 & J(\mu,\nu,\rho) = \frac{i\pi^{D/2}(p^2)^{D/2-\mu-\nu-rho}}{\Gamma(\mu)\Gamma(\nu)\Gamma(\rho)\Gamma(D-\rho-\mu-\nu)}\Bigg( \\
 & \Gamma(D/2-\rho-\nu)\Gamma(D/2-\mu-\rho)\Gamma(\rho)\Gamma(\mu+\nu-D/2)\appell{\rho}{\mu+\nu+\rho-D/2}{1-D/2+\mu+\rho}{\rho+\nu-D/2+1}{\frac{q^2}{p^2}}{\frac{k^2}{p^2}} \nonumber \\
 + & \left(-\frac{q^2}{p^2}\right)^{D/2-\mu-\rho}\left[(-1)^{D/2-\mu-\rho}\frac{\sin(\pi(D/2-\rho-\nu))}{\sin(\pi(D-2\rho-\mu-\nu))} + (-1)^{D/2-\rho-\nu}\frac{\sin(\pi(D/2-\rho-\mu))}{\sin(\pi(D-2\rho-\mu-\nu))}\right] \nonumber \\
 \times & \Gamma(\rho+\mu-D/2)\Gamma(D/2-\rho-\nu)\Gamma(\nu)\Gamma(D/2-\mu)\appell{D/2-\mu}{\nu}{1+D/2-\mu-\rho}{\rho+\nu-D/2+1}{\frac{q^2}{p^2}}{\frac{k^2}{p^2}} \nonumber \\
 + & \left(-\frac{k^2}{p^2}\right)^{D/2-\nu-\rho}\left[(-1)^{D/2-\nu-\rho}\frac{\sin(\pi(D/2-\rho-\mu))}{\sin(\pi(D-2\rho-\mu-\nu))} + (-1)^{D/2-\rho+\mu}\frac{\sin(\pi(D/2-\rho-\nu))}{\sin(\pi(2\rho+\mu+\nu-D))}\right] \nonumber \\
 \times & \Gamma(\rho+\nu-D/2)\Gamma(D/2-\mu-\rho)\Gamma(D/2-\nu)\Gamma(\mu)\appell{\mu}{D/2-\nu}{1+\rho+\mu-D/2}{D/2-\rho-\nu+1}{\frac{q^2}{p^2}}{\frac{k^2}{p^2}} \nonumber \\
 + & \left(\frac{q^2}{p^2}\right)^{D/2-\rho-\mu}\left(\frac{k^2}{p^2}\right)^{D/2-\rho-\nu}\appell{D/2-\rho}{D-\mu-\nu-\rho}{1+D/2-\rho-\mu}{D/2-\rho-\nu+1}{\frac{q^2}{p^2}}{\frac{k^2}{p^2}} \nonumber \\
 \times & \Gamma(\rho+\mu-D/2)\Gamma(D/2-\rho)\Gamma(D-\rho-\mu-\nu)\Gamma(\rho+\nu-D/2)\Bigg). \nonumber
\end{align}
This expression coincides with the expression of \cite{BoDa86} if the terms between brackets in \eqref{problem1} and \eqref{problem2} are just one. Let us notice that this is true if 
$(-1)^{D/2-\mu-\rho}=\sin(\pi(D/2-\mu-\rho))$ and $(-1)^{D/2-\rho-\nu}=\sin(\pi(D/2-\rho-\nu))$. If this is true, the denominators might become ill-defined, but are cancelled by the 
numerators.


\backmatter

\bibliographystyle{unsrturl}
\bibliography{thesis}

\begin{thebibliography}{10}

\bibitem{Di27}
Paul A.~M. Dirac.
\newblock The quantum theory of the emission and absorption of radiation.
\newblock {\em Proceedings of the royal society}, pages 243--265, 1927.

\bibitem{HePa29a}
W.~Heisenberg and W.~Pauli.
\newblock Zur quantenelektrodynamik der wellenfelder i.
\newblock {\em Zeits. Phys.}, pages 1--61, 1929.

\bibitem{HePa29b}
W.~Heisenberg and W.~Pauli.
\newblock Zur quantenelektrodynamik der wellenfelder ii.
\newblock {\em Zeits. Phys.}, pages 168--190, 1930.

\bibitem{Di32}
Paul A.~M. Dirac.
\newblock Relativistic quantum mechanics.
\newblock {\em Proc. Roy. Soc. London}, pages 453--464, 1932.

\bibitem{Ro32}
L.~Rosenfeld.
\newblock \"uber eine m\"ogliche fassung des diracschen programms zur
  quantenelektrodynamik und deren formalen zusammenhang mit der
  heisenberg-paulischen theorie.

\bibitem{Be47}
Hans Bethe.
\newblock The electromagnetic shift of energy level.
\newblock {\em Phys. Rev.}, pages 339--341, 1947.

\bibitem{Fe48}
Richard Feynman.
\newblock A relativistic cut-off in for quantum electrodynamnics.
\newblock {\em Phys. Rev.}, pages 1430--1438, 1948.

\bibitem{ScWe48}
J.~Schwinger and V.F. Weisskopf.
\newblock On the electromagnetic shift of energy levels.

\bibitem{Sc48}
J.~Schwinger.
\newblock Quantum electrodynamics i.
\newblock {\em Summer symposium}, 1948.

\bibitem{Sc49}
J.~Schwinger.
\newblock Quantum electrodynamics ii.
\newblock {\em Phys. Rev.}

\bibitem{Dy49}
F.J. Dyson.
\newblock The radiation theories of tomonaga, schwinger and feynman.
\newblock {\em Phys. Rev.}, pages 486--502, 1949.

\bibitem{Dy49b}
J.F. Dyson.
\newblock The s-matrix in quantum electrodynamics.
\newblock {\em Phys. Rev.}, pages 1736--1755, 1949.

\bibitem{Sa51a}
Abdus Salam.
\newblock Overlapping divergences and the s-matrix.
\newblock {\em Phys. Rev.}

\bibitem{Sa51b}
Abdus Salam.
\newblock Divergent integrals in renormalizable field theories.
\newblock {\em Phys. Rev.}

\bibitem{YaMi54}
C.N. Yang and R.L. Mills.
\newblock Conservation of isotopic spin and isotopic gauge invariance.
\newblock {\em Phys. Rev.}, pages 191--195, 1954.

\bibitem{HoVe72}
G.~'t~Hooft and M.~Veltman.
\newblock Regularization and renormalization of gauge fields.
\newblock {\em npb}, 44(1):189--213, 1972.

\bibitem{BoPa57}
N.~Bogoliubov and O.~Parasiuk.
\newblock \"uber die multiplikation der kausalfunktionen in der quantentheorie
  der felder.
\newblock {\em Acta Mathematica}, 97:227--266, 1957.

\bibitem{He66}
K.~Hepp.
\newblock Proof of the bogolyubov-parasiuk theorem on renormalization.
\newblock {\em Communications in Mathematical Physics}, 2:301--326, 1966.

\bibitem{Zi69}
W.~Zimmermann.
\newblock Convergence of bogoliubov's method of renormalization in momentum
  space.
\newblock {\em Communications in Mathematical Physics}, 15:208--234, 1969.

\bibitem{Kr98}
Dirk Kreimer.
\newblock On overlapping divergences.
\newblock {\em Communications in Mathematical Physics}, 204:669--689, 1998.
\newblock \href {http://arxiv.org/abs/9810022v4} {\path{arXiv:9810022v4}}.

\bibitem{Schweber}
Silvan~S. Schweber.
\newblock {\em QED and the men who made it}.
\newblock Princeton University Press, 1994.

\bibitem{Weinberg96a}
Steven Weinberg.
\newblock {\em The Quantum Theory of Fields}.
\newblock Cambridge University Press.

\bibitem{Sc51}
Julian Schwinger.
\newblock On the green functions of quantized fields i.
\newblock {\em Proc. of the Nat. Ac. of Sc. of the U.S.A.}

\bibitem{Fo08}
Lo\"ic Foissy.
\newblock Fa\`a di bruno subalgebras of the hopf algebra of planar trees from
  combinatorial dyson-schwinger equations.
\newblock {\em Advances in Mathematics}, 218:136--162, 2008.
\newblock \href {http://arxiv.org/abs/0707.1204} {\path{arXiv:0707.1204}}.

\bibitem{Fo10}
Loïc Foissy.
\newblock Classification of systems of dyson-schwinger equations of the hopf
  algebra of decorated rooted trees.
\newblock {\em Adv. Math.}, 224:2094--2150, 2010.
\newblock \href {http://arxiv.org/abs/1101.5231} {\path{arXiv:1101.5231}}.

\bibitem{Fo11}
Lo\"ic Foissy.
\newblock General dyson--schwinger equations and systems.
\newblock 2011.
\newblock \href {http://arxiv.org/abs/1112.2606} {\path{arXiv:1112.2606}}.

\bibitem{BaVi77}
I.A. Batalin and G.A. Vilkovisky.
\newblock Relativistic s-matrix of dynamical systems with boson and fermion
  constraints.
\newblock {\em Phys. Lett. B}, 69B:309--312.

\bibitem{BaVi81b}
I.A. Batalin and G.A. Vilkovisky.
\newblock Gauge algebra and quantization.
\newblock {\em Phys. Lett. B}, 102B:27--31.

\bibitem{Ty75}
I.V. Tyutin.
\newblock Gauge invariance in field theory and statistical physics in operator
  formalism.
\newblock {\em Lebedev Physics Institute preprint}, 39, 1975.
\newblock \href {http://arxiv.org/abs/0812.0580} {\path{arXiv:0812.0580}}.

\bibitem{BeRoSt76}
C.~Becchi, A.~Rouet, and R.~Stora.
\newblock Renomalization of gauge theories.
\newblock {\em Ann. Phys.}, 98:287--321, 1976.

\bibitem{DouglasAdams}
Douglas Adams.
\newblock {\em "I love deadlines. I like the whooshing sound they make as they
  fly by"}.

\bibitem{CoKr99}
Alain Connes and Dirk Kreimer.
\newblock {Renormalization in quantum field theory and the Riemann-Hilbert
  problem. I: The Hopf algebra structure of graphs and the main theorem}.
\newblock {\em Comm. Math. Phys.}, 210:249--273, 2000.
\newblock \href {http://arxiv.org/abs/hep-th/9912092}
  {\path{arXiv:hep-th/9912092}}, \href
  {http://dx.doi.org/10.1007/s002200050779} {\path{doi:10.1007/s002200050779}}.

\bibitem{CoKr00}
Alain Connes and Dirk Kreimer.
\newblock {Renormalization in quantum field theory and the Riemann-Hilbert
  problem. II: The beta-function, diffeomorphisms and the renormalization
  group}.
\newblock {\em Commun. Math. Phys.}, 216:215--241, 2001.
\newblock \href {http://arxiv.org/abs/hep-th/0003188}
  {\path{arXiv:hep-th/0003188}}, \href {http://dx.doi.org/10.1007/PL00005547}
  {\path{doi:10.1007/PL00005547}}.

\bibitem{Ma04}
Dominique Manchon.
\newblock Hopf algebras, from basics to applications to renormalization.
\newblock 2004.
\newblock \href {http://arxiv.org/abs/0408405v2} {\path{arXiv:0408405v2}}.

\bibitem{Majid}
Shahn Majid.
\newblock {\em Foundations of Quantum Group Theory}.
\newblock Cambridge University Press, 2000.

\bibitem{Godsil}
Chris Godsil and Gordon Royle.
\newblock {\em Algebraic Graph Theory}.
\newblock Springer, 2001.

\bibitem{Kr97}
Dirk Kreimer.
\newblock On the hopf algebra structure of perturbative quantum field theory.
\newblock {\em Adv. Theor. Math. Phys.}, 2:303--334, 1998.

\bibitem{Peskin}
Michael~E. Peskin and Daniel~V. Schroder.
\newblock {\em An Introduction to Quantum Field Theory}.
\newblock Perseus Book, 1995.

\bibitem{vSu06}
Walter~D. van Suijlekom.
\newblock Renormalization of gauge fields: A hopf algebra approach.
\newblock {\em Commun. Math. Phys.}
\newblock \href {http://arxiv.org/abs/0610137} {\path{arXiv:0610137}}.

\bibitem{vSu08}
Walter~D. van Suijlekom.
\newblock Renormalization of gauge fields using hopf algebras.
\newblock {\em Recent Dev. in QFT}, 2008.
\newblock \href {http://arxiv.org/abs/0801.3170} {\path{arXiv:0801.3170}}.

\bibitem{BeSc08}
{M}arc {B}ellon and {F}idel {S}chaposnik.
\newblock Renormalization group functions for the {W}ess-{Z}umino model: up to
  200 loops through {H}opf algebras.
\newblock {\em Nucl.\ Phys.\ B}, 800:517--526, 2008.
\newblock \href {http://arxiv.org/abs/0801.0727} {\path{arXiv:0801.0727}}.

\bibitem{Itzykson}
C.~Itzykson and J.B. Zuber.
\newblock {\em {Q}uantum {F}ield {T}heory}.
\newblock McGraw-Hill International Book Compagny, 1978.

\bibitem{BrKr99}
D.~J. Broadhurst and D.~Kreimer.
\newblock Exact solutions of {D}yson--{S}chwinger equations for iterated
  one-loop integrals and propagator-coupling duality.
\newblock {\em Nucl.\ Phys.}, B 600:403--422, 2001.
\newblock \href {http://arxiv.org/abs/hep-th/0012146}
  {\path{arXiv:hep-th/0012146}}.

\bibitem{Cl14}
Pierre~J. Clavier.
\newblock {A}nalytic results for {S}chwinger--{D}yson equations with a mass
  term.
\newblock {\em Lett.\ Math.\ Phys.}, 105:779--794, 2014.
\newblock \href {http://arxiv.org/abs/1409.3351} {\path{arXiv:1409.3351}},
  \href {http://dx.doi.org/10.1007/s11005-015-0762-1}
  {\path{doi:10.1007/s11005-015-0762-1}}.

\bibitem{BeLoSc07}
{M}arc {B}ellon, {G}ustavo {L}ozano, and {F}idel {S}chaposnik.
\newblock {H}igher loop renormalization of a supersymmetric field theory.
\newblock {\em {P}hysics {L}etters {B}}, 650:293--297, 2007.
\newblock Available from: \url{10.1016/j.physletb.2007.05.024
  http://hal.archives-ouvertes.fr/hal-00137647/en/}, \href
  {http://arxiv.org/abs/arXiv:hep-th/0703185}
  {\path{arXiv:arXiv:hep-th/0703185}}, \href
  {http://dx.doi.org/10.1016/j.physletb.2007.05.024}
  {\path{doi:10.1016/j.physletb.2007.05.024}}.

\bibitem{Se93}
N.~Seiberg.
\newblock {N}aturalness {V}ersus {S}upersymmetric {N}on-renormalization
  {T}heorems.
\newblock {\em {P}hys. {L}ett. {B}}, 318:469--475, 1993.
\newblock \href {http://arxiv.org/abs/9309335} {\path{arXiv:9309335}}, \href
  {http://dx.doi.org/10.1016/0370-2693(93)91541-T}
  {\path{doi:10.1016/0370-2693(93)91541-T}}.

\bibitem{De89}
Jean-Pierre Derendinger.
\newblock {L}ecture notes on globally supersymmetric theories in
  four-dimensions and two-dimensions.
\newblock 1989.

\bibitem{Piguet}
O.~Piguet and K.~Sibold.
\newblock {\em Renormalized Supersymmetry}.
\newblock Birkhauser Verlag AG, 1986.

\bibitem{WeZu74a}
Julius Wess and Bruno Zumino.
\newblock Supergauge transformations in four dimensions.
\newblock {\em Nucl.\ Phys.\ B}, 70:39--50, 1974.

\bibitem{WeZu74b}
Julius Wess and Bruno Zumino.
\newblock A lagrangian model invariant under supergauge transformations.
\newblock {\em Phys.\ Lett.}, 49B:52--55, 1974.

\bibitem{Weinberg96c}
Steven Weinberg.
\newblock {\em The Quantum Theory of Fields}, volume~3.
\newblock Cambridge University Press, 1996.

\bibitem{BeCl13}
Marc~P. Bellon and Pierre~J. Clavier.
\newblock Higher order corrections to the asymptotic perturbative solution of a
  {S}winger--{D}yson equation.
\newblock {\em Lett.\ Math.\ Phys.}, 104:1--22, 2014.
\newblock \href {http://arxiv.org/abs/1311.1160v2} {\path{arXiv:1311.1160v2}},
  \href {http://dx.doi.org/10.1007/s11005-014-0686-1}
  {\path{doi:10.1007/s11005-014-0686-1}}.

\bibitem{BeCl14}
Marc~P. Bellon and Pierre~J. Clavier.
\newblock A {S}chwinger--{D}yson {E}quation in the {B}orel {P}lane:
  singularities of the solution.
\newblock {\em Lett.\ Math.\ Phys.}, 105:795--825, 2014.
\newblock \href {http://arxiv.org/abs/1411.7190v2} {\path{arXiv:1411.7190v2}},
  \href {http://dx.doi.org/10.1007/s11005-015-0761-2}
  {\path{doi:10.1007/s11005-015-0761-2}}.

\bibitem{Go13}
Marcel Golz.
\newblock Evaluation techniques for feynman diagrams.
\newblock 2013.

\bibitem{BoWe10}
Christian Bogner and Stefan Weinzierl.
\newblock Feynman graph polynomials.
\newblock {\em International Journal of Modern Physics A}, 25:2585--2610, 2010.
\newblock \href {http://arxiv.org/abs/1002.3458v3} {\path{arXiv:1002.3458v3}}.

\bibitem{Cheng}
Hung Chen and Tai~Tsun Wu.
\newblock {\em Expanding Protons: Scattering at High Energy}.
\newblock The MIT Press, 1987.

\bibitem{abramowitz}
M.~Abramowitz and I.A. Stegun.
\newblock {\em Handbook of Mathematical Functions: With Formulas, Graphs, and
  Mathematical Tables}.
\newblock Applied mathematics series. Dover Publications, 1964.
\newblock Available from: \url{http://books.google.fr/books?id=MtU8uP7XMvoC}.

\bibitem{BeSc12}
Marc Bellon and Fidel~A. Schaposnik.
\newblock Higher loop corrections to a {S}chwinger--{D}yson equation.
\newblock {\em Lett.\ Math.\ Phys.}, 103:881--893, 2013.
\newblock \href {http://arxiv.org/abs/1205.0022} {\path{arXiv:1205.0022}},
  \href {http://dx.doi.org/10.1007/s11005-013-0621-x}
  {\path{doi:10.1007/s11005-013-0621-x}}.

\bibitem{Be10a}
{M}arc~P. {B}ellon.
\newblock An efficient method for the solution of {S}chwinger--{D}yson
  equations for propagators.
\newblock {\em Lett.\ Math.\ Phys.}, 94:77--86, 2010.
\newblock \href {http://arxiv.org/abs/1005.0196} {\path{arXiv:1005.0196}},
  \href {http://dx.doi.org/10.1007/s11005-010-0415-3}
  {\path{doi:10.1007/s11005-010-0415-3}}.

\bibitem{Bo11}
Olivier Bouillot.
\newblock {\em {Invariants Analytiques des Diff\'eomorphismes et
  MultiZ\^etas}}.
\newblock PhD thesis, Universit\'e Paris-Sud 11, 2011.
\newblock Available from: \url{http://tel.archives-ouvertes.fr/tel-00647909}.

\bibitem{Sa14}
David Sauzin.
\newblock Introduction to 1-summability and resurgence.
\newblock 2014.
\newblock \href {http://arxiv.org/abs/1405.0356v1} {\path{arXiv:1405.0356v1}}.

\bibitem{Ecalle81}
Jean Ecalle.
\newblock {\em Les fonctions r\'esurgentes, Vol.1}.
\newblock Pub. Math. Orsay, 1981.

\bibitem{Be10}
{M}arc~P. {B}ellon.
\newblock {A}pproximate {D}ifferential {E}quations for {R}enormalization
  {G}roup {F}unctions in {M}odels {F}ree of {V}ertex {D}ivergencies.
\newblock {\em Nucl.\ Phys.\ B}, 826 [{PM}]:522--531, 2010.
\newblock \href {http://arxiv.org/abs/0907.2296} {\path{arXiv:0907.2296}},
  \href {http://dx.doi.org/10.1016/j.nuclphysb.2009.11.002}
  {\path{doi:10.1016/j.nuclphysb.2009.11.002}}.

\bibitem{FrNiFe76}
Daniel~Z. Freedman, P.~van Nieuwenhuizen, and S.~Ferrara.
\newblock Progress toward a theory of supergravity.
\newblock {\em Phys. Rev. D}, 13:3214--3218, June 1976.
\newblock Available from:
  \url{http://link.aps.org/doi/10.1103/PhysRevD.13.3214}, \href
  {http://dx.doi.org/10.1103/PhysRevD.13.3214}
  {\path{doi:10.1103/PhysRevD.13.3214}}.

\bibitem{Nash}
Charles Nash and Siddartha Sen.
\newblock {\em Topology and geometry for physicists}.
\newblock Dover, 1983.

\bibitem{Henneaux}
Marc Henneaux and Claudio Teitelboim.
\newblock {\em Quantization of gauge systems}.
\newblock Princeton University Press, 1994.

\bibitem{Cannas}
Ana~Cannas da~Silva.
\newblock {\em Lectures on symplctic geometry}.
\newblock Springer, 2006.
\newblock Available from:
  \url{http://www.math.ist.utl.pt/~acannas/Books/lsg.pdf}.

\bibitem{Ka78}


\bibitem{Weinberg96b}
Steven Weinberg.
\newblock {\em The Quantum Theory of Fields}, volume~2.
\newblock Cambridge University Press, 1996.

\bibitem{Fi04}
Domenico Fiorenza.
\newblock An introduction to the batalin--vilkovisky formalism.
\newblock {\em Comptes Rendus des Rencontres Mathematiques de Glanon, 2003},
  2004.
\newblock \href {http://arxiv.org/abs/040257v2} {\path{arXiv:040257v2}}.

\bibitem{Gr78}
Vladimir~Naumovich Gribov.
\newblock Quantization of non-abelian gauge theories.
\newblock {\em Nuclear Physics B}, 139:1--19, 1978.

\bibitem{Es00}
Giampiero Esposito.
\newblock Dirac operator and spectral geometry.
\newblock {\em Cambridge Lecture Notes in Physics}, 12:1--209, 1998.
\newblock \href {http://arxiv.org/abs/9704016v3} {\path{arXiv:9704016v3}}.

\bibitem{EsPeZa04}
Giampiero Esposito, Diego~N. Pelliccia, and Francesco Zaccaria.
\newblock Gribov problem for gauge theories: a pedagogical introduction.
\newblock {\em International Journal of Geometric methods in modern Physics},
  01, 2004.
\newblock \href {http://arxiv.org/abs/0404240v2} {\path{arXiv:0404240v2}}.

\bibitem{BaVi81}
Olivier Babelon and Claude~M. Viallet.
\newblock The riemannian geometry of the configuration space of gauge theories.
\newblock {\em Communications in Mathematical Physics}, 81:515--525, 1981.

\bibitem{vBa92}
Pierre van Baal.
\newblock More (thoughts on) gribov copies.
\newblock {\em Nuclear Physics B}, 369:259--275, 1992.

\bibitem{SeFr82}
M.A. Semenov-Tyan-Shanskii and V.~A. Franke.
\newblock 1982.

\bibitem{DeZw91}
G.~Dell'Antonio and D.~Zwanziger.
\newblock Every gauge orbit passes inside the gribov horizon.
\newblock {\em Communications in Mathematical Physics}, 138:291, 1991.

\bibitem{Gw12}
Owen Gwilliam.
\newblock Factorization algebra and free field theories, 2012.
\newblock Available from: \url{https://math.berkeley.edu/~gwilliam/thesis.pdf}.

\bibitem{IgItSo07}
Yuji Igarashi, Katsumi Itoh, and Hidenori Sonoda.
\newblock Quantum master equation for qed in exact renormalization group.
\newblock {\em Prog. Theor. Phys.}, 118:121--134, 2007.
\newblock \href {http://arxiv.org/abs/0704.2349} {\path{arXiv:0704.2349}}.

\bibitem{BoDa86}
E.E. Boos and A.I. Davydychev.
\newblock A method for calculating vertex-type feynman integrals.
\newblock {\em Vest. Mosk. Univ.}, 28:8--12, 1986.

\bibitem{Ng13}
Nguyen~Viet Dang.
\newblock {\em Renormalization of quantum field theory on curved space-times, a
  causal approach}.
\newblock PhD thesis, 2013.
\newblock \href {http://arxiv.org/abs/1312.5674v2} {\path{arXiv:1312.5674v2}}.

\bibitem{BrFrRe13}
Rom\'eo Brunetti, Klaus Fredenhagen, and Katarzyna Rejzner.
\newblock Quantum gravity form the point of view of locally covariant quantum
  field theory.
\newblock 2013.
\newblock \href {http://arxiv.org/abs/1306.1058v4} {\path{arXiv:1306.1058v4}}.

\bibitem{KoSt86}
Bertram Kostant and Shlomo Sternberg.
\newblock Symplectic reduction, brs cohomology and infinite-dimensionnal
  clifford algebras.
\newblock {\em Annals of Physics}, 176:49--113, 1986.

\bibitem{MaWe74}
J.~Marsden and A.~Weinstein.
\newblock Reduction of symplectic manifolds with symmetry.
\newblock {\em Reports on Mathematical Physics}, 5:121--130, 1974.

\bibitem{Ta56}
John Tate.
\newblock Homology of noetherian rings and local rings.
\newblock {\em Illinois Journal of Mathematics}, 1:14--27, 1956.

\bibitem{Fi06}
Jos\'e Figueroa-O'Farrill.
\newblock {BRST} cohomology.
\newblock 2006.
\newblock Available from: \url{http://www.maths.ed.ac.uk/empg/Activities/BRST}.

\bibitem{FiKi91}
Jos\'e Figueroa-O'Farrill and Takashi Kimura.
\newblock Geometric brst quantization, i: Prequantization.
\newblock {\em Communications in Mathematical Physics}, 136:209--229, 1991.

\bibitem{Va96}
Arkady Vaintrob.
\newblock Darboux theorem and equivariant morse lemma.
\newblock {\em Journal of Geometry and Physics}, 18:59--75, 1996.

\bibitem{OuStvBa89}
St\'phane Ouvry, Raymond Stora, and Pierre van Baal.
\newblock On the algebraic characterization of witten's topological yang--mills
  theory.
\newblock {\em Physics Letters B}, 220, 1989.

\bibitem{Gradshteyn}
I.S. Gradshteyn and I.M. Ryzhik.
\newblock {\em Table of Integrals, Series and Products}.
\newblock Academic Press; 7 edition, 2007.

\end{thebibliography}

\addcontentsline{toc}{chapter}{Bibliography}

\end{document}